\def\maketitle{
\@author@finish
\title@column\titleblock@produce
\suppressfloats[t]}
\newcommand{\diff}{\ensuremath\mathrm{d}}
\newcommand{\Diff}[1]{\ensuremath\mathrm{d}#1~}
\newcommand{\norm}[1]{\lVert#1\rVert}
\newcommand{\Norm}[1]{\left\lVert#1\right\rVert}
\newcommand{\supp}{\operatorname{supp}}
\newcommand{\Id}{\ensuremath{\mathbb{I}}}
\newcommand*{\pipe}{\ensuremath{\, | \, }}
\newcommand*{\fatpipe}{\ensuremath{\, \| \, }}
\newcommand*{\Tr}{\operatorname{Tr}}
\newcommand*{\argmax}{\operatornamewithlimits{argmax}}
\newcommand*{\argmin}{\operatornamewithlimits{argmin}}
\newcommand{\bra}[1]{\langle #1|}
\newcommand{\ket}[1]{|#1\rangle}
\newcommand{\ketbra}[2]{\ket{#1}\!\bra{#2}}
\newcommand{\braket}[2]{\langle #1|#2\rangle}
\providecommand{\myvec}[1]{\ensuremath{\boldsymbol{#1}}}
\providecommand{\ll}{\ensuremath{\myvec{l}}}
\providecommand{\ppsi}{\ensuremath{\myvec{\psi}}}
\providecommand{\hatf}{\ensuremath{\hat{f}}}
\providecommand{\hatw}{\ensuremath{\hat{w}}}
\providecommand{\calA}{\ensuremath{\mathcal{A}}}
\providecommand{\calB}{\ensuremath{\mathcal{B}}}
\providecommand{\calE}{\ensuremath{\mathcal{E}}}
\providecommand{\calF}{\ensuremath{\mathcal{F}}}
\providecommand{\calI}{\ensuremath{\mathcal{I}}}
\providecommand{\calM}{\ensuremath{\mathcal{M}}}
\providecommand{\calN}{\ensuremath{\mathcal{N}}}
\providecommand{\calO}{\ensuremath{\mathcal{O}}}
\providecommand{\calQ}{\ensuremath{\mathcal{Q}}}
\providecommand{\calS}{\ensuremath{\mathcal{S}}}
\providecommand{\calT}{\ensuremath{\mathcal{T}}}
\providecommand{\calU}{\ensuremath{\mathcal{U}}}
\providecommand{\calX}{\ensuremath{\mathcal{X}}}
\providecommand{\bbE}{\ensuremath{\mathbb{E}}}
\providecommand{\bbI}{\ensuremath{\mathbb{I}}}
\providecommand{\bbN}{\ensuremath{\mathbb{N}}}
\providecommand{\bbP}{\ensuremath{\mathbb{P}}}
\providecommand{\bbR}{\ensuremath{\mathbb{R}}}
\providecommand{\sfS}{\ensuremath{\mathsf{S}}}
\newcommand{\ie}{\textit{i.e.}\ }
\newcommand\notesmaller[1][\notesmallerfrac]{%
  \fontsize{#1\dimexpr\f@size pt\relax}{#1\dimexpr\f@baselineskip pt\relax}%
  \selectfont\ignorespaces%
}
\def\notesmallerfrac{0.9}
\DeclarePairedDelimiterX\abs[1]{\lvert}{\rvert}{{#1}}
\newtheorem{theorem}{Theorem}
\newtheorem{lemma}[theorem]{Lemma}
\newtheorem{proposition}[theorem]{Proposition}
\newtheorem{definition}[theorem]{Definition}
\newtheorem{corollary}[theorem]{Corollary}
\newtheorem{remark}[theorem]{Remark}
\newtheorem{observation}[theorem]{Observation}
\newtheorem{conjecture}{Conjecture}
\newtheorem{openproblem}[conjecture]{Open Problem}
\newtheorem{stheorem}{Theorem}
\newtheorem{slemma}[stheorem]{Lemma}
\newtheorem{sproposition}[stheorem]{Proposition}
\newtheorem{scorollary}[stheorem]{Corollary}
\newtheoremstyle{red}{}{}{\color{red}}{}{\color{red}\bfseries}{.}{ }{}
\theoremstyle{red}
\let\oldcite\cite
\renewcommand{\cite}[1]{\mbox{\oldcite{#1}}}
\newcommand{\EV}{\operatornamewithlimits{\bbE}}
\newcommand{\dccqs}{Dahlem Center for Complex Quantum Systems, Freie Universit{\"a}t Berlin, 14195 Berlin, Germany}
\newcommand{\hzb}{Helmholtz-Zentrum Berlin f{\"u}r Materialien und Energie, 14109 Berlin, Germany}
\newcommand{\hhi}{Fraunhofer Heinrich Hertz Institute, 10587 Berlin, Germany}
\newcommand{\ucph}{Department of Mathematical Sciences, University of Copenhagen, 2100 K{\o}benhavn, Denmark}
\newcommand{\ens}{Ecole Normale Superieure de Lyon, 69342 Lyon Cedex 07, France}
\newcommand{\papertitle}{Quantum metrology in the finite-sample regime}
\begin{document}	

\title{\papertitle}
\date{\today}
	
\author{Johannes~Jakob~Meyer}
\affiliation{\dccqs}
    
\author{Sumeet~Khatri}
\affiliation{\dccqs}
    
\author{Daniel~Stilck~Fran\c{c}a}
\affiliation{\dccqs}
\affiliation{\ucph}
\affiliation{\ens}

\author{Jens~Eisert}
\affiliation{\dccqs}
\affiliation{\hzb}
\affiliation{\hhi}

\author{Philippe~Faist}
\affiliation{\dccqs}

\let\oldaddcontentsline\addcontentsline 
\renewcommand{\addcontentsline}[3]{\oldaddcontentsline{#1}{lot}{#3}} 

\begin{abstract}
    In quantum metrology, one of the major applications of quantum technologies, the ultimate precision of estimating an unknown parameter is often stated in terms of the Cramér-Rao bound. 
    Yet, the latter is no longer guaranteed to carry an operational meaning in the regime where few measurement samples are obtained, which we illustrate through a simple example.
    We instead propose to quantify the quality of a metrology protocol by the probability of obtaining an estimate with a given accuracy.
    This approach, which we refer to as probably approximately correct (PAC) metrology, ensures operational significance in the finite-sample regime. The accuracy guarantees hold for any value of the unknown parameter, unlike the Cramér-Rao bound which assumes it is approximately known.
    We establish a strong connection to multi-hypothesis testing with quantum states, which allows us to 
    derive an analogue of the Cramér-Rao bound which contains explicit corrections relevant to the finite-sample regime.    
    We further study the asymptotic behavior of the success probability of the estimation procedure for many copies of the state and apply our framework to the example task of phase estimation with an ensemble of spin-1/2 particles.    
    Overall, our operational approach allows the study of quantum metrology in the finite-sample regime and opens up a plethora of new avenues for research at the interface of quantum information theory and quantum metrology.
\end{abstract}

\maketitle

Metrology, the scientific study of measurements, has naturally evolved to encompass the realm of quantum theory. Quantum metrology seeks to realize practical advantages by harnessing quantum effects. The growing quantum technologies sector, especially, holds high expectations for achieving unparalleled sensitivity with quantum sensors.
Anticipated applications range from the calibration of atomic clocks over gravitational-wave detection to potential medical uses~\cite{giovannetti_advances_2011,Paris,RevModPhys.89.035002,RevModPhys.90.035005,BraunsteinOld}. 
It is crucial that the theory of quantum metrology accommodates the emerging technological capabilities of near-term quantum sensors, which necessitates an in-depth understanding of their performance in realistic settings, where the size of experiments might be limited.

A standard question in quantum metrology is to determine the value of an unknown parameter that has been encoded in a quantum state. For instance, suppose we wish to estimate the difference of time $t$ between two events.  
One might prepare an initial clock state $\rho_0$, \emph{e.g.}, an ensemble of spin-$\tfrac{1}{2}$ particles, in some standard state when the first event occurs, let the system evolve under its natural dynamics -- say, a magnetic field of fixed strength -- resulting in a state $\rho(t)$,
and perform a measurement on the system when the second event occurs. 
The accuracy to which $t$ is determined can be improved through suitable
choices of the initial state, the dynamics, and the final measurement.
A similar scheme can be employed to sense the value of an unknown parameter
in a Hamiltonian, such as the strength of an external field. In this case,
one lets the system evolve under the unknown Hamiltonian for a fixed amount
of time. In either case, the problem reduces to estimating the value of a
parameter $t$ among a parametrized set of states $t \mapsto \rho(t)$.

\begin{figure}
    \centering
    \includegraphics{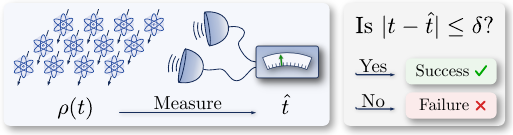}
    \caption{The setup we consider for quantum metrology in the finite-sample regime
    consists in applying a measurement onto the quantum state $\rho(t)$,
    and inferring an estimate $\hat{t}$ of the value of the unknown parameter $t$
    from the outcome of the measurement. The estimation process is successful if the
    estimate $\hat{t}$ and the true parameter value $t$ differ by at most some fixed
    error tolerance $\delta$. Our approach quantifies the probability that the
    estimation procedure is successful.
In contrast, the standard quantum Cram\'er-Rao bound quantifies the variance
of the outcome $\hat{t}$ for a given estimation procedure that reveals
the correct parameter in expectation;
its operational meaning is guaranteed only after collecting many outcomes.
}
    \label{fig:intro-super-simple-setup}
\end{figure}

A standard treatment of this problem proceeds as
follows~\cite{BraunsteinOld,giovannetti_advances_2011,Paris,meyer2021fisher}.
One assumes that $t$ is already known to be close to some value $t_0$.
The task is to refine one's knowledge of $t$ by accessing the expectation value of some observable.
A central result in quantum metrology quantifies the variance $\sigma^2$ of a quantum measurement whose expectation value is equal to $t$. 
The \emph{quantum Cram\'er-Rao bound} states that~\cite{cramer_book,rao1945,helstrom_minimum_1967,BraunsteinOld}
\begin{align}\label{eqn:qcrb}
    \sigma^2 \geq \frac{1}{\calF},
\end{align}
where $\calF$ is the \emph{quantum Fisher information}, a quantity that measures how distinguishable the states of the family $\rho(t)$ are around $t_0$. 
Furthermore, there exists a quantum measurement which achieves equality in Eq.~\eqref{eqn:qcrb}.
One thus frequently resorts to the quantum Fisher information
as a measure of sensitivity, including when quantifying
the advantages of using entangled states for quantum sensing~\cite{giovannetti_advances_2011}, the effect of noise on the sensitivity of probe states~\cite{DemkowiczDobrzanski2012NatComm_elusive,faist2022time-energy}, as well as the advantages of using quantum error correction in metrology~\cite{zhou_achieving_2018}.
The quantum Fisher information naturally generalizes the classical
Fisher information and enjoys the geometrical interpretation of being the metric tensor
associated with the fidelity of quantum states~\cite{BraunsteinOld,liu_quantum_2020,meyer2021fisher}. 

In this work, %
we consider the regime where few measurement samples are available.
This regime is increasingly expected to be relevant when considering
the limited capabilities of quantum sensors in the near term.
Specifically, we revisit some of the founding assumptions that lead to the
quantum Cram\'er-Rao bound which are difficult to justify in the few-sample regime.
First, the expectation value of an observable can only be
reliably estimated if sufficiently many samples are available. Thus, access to
few samples of the measurement that achieves equality in the quantum Cram\'er-Rao
bound might not provide meaningful information about the unknown parameter.
Second, few samples from a quantum measurement are unlikely to yield the degree of precision that is compatible with the assumption that the parameter is already approximately known.
Relaxing this second assumption furthermore enables us to consider general
families of states $\rho(t)$ without the smoothness properties required to apply the
quantum Cram\'er-Rao bound.

We establish a general finite-sample analysis of quantum metrology rooted
in fundamental principles of quantum information theory.
We consider a general one-parameter family of states $t \mapsto \rho(t)$, where $t$
is known to belong to some real interval $I \subseteq \mathbb{R}$ (Fig.~\ref{fig:intro-super-simple-setup}).
We then consider a quantum measurement whose outcome leads to an estimate
\smash{$\hat{t}$} of the value $t$.
In our model, the measurement is applied only once. Access to a finite number
$n$ of samples is modeled by explicitly considering the copies of the state in
parallel, $\rho^{\otimes n}(t)$.
The estimation procedure is successful if
\smash{$\hat{t}$} is within some fixed \emph{estimation error tolerance} $\delta$ of the true parameter value $t$.
We then ask, 
\emph{\enquote{What is the probability that our estimate of the underlying parameter is within a given estimation error tolerance around the true value?}} and
\emph{\enquote{What is the smallest estimation error tolerance such that this success probability is at least some given threshold?}}
For these questions to have a well-defined answer, we can either assume prior knowledge about the underlying parameter $t$ or we take the worst-case among all possible values of $t$. In our work, we explore both settings in depth.

We show that computing the optimal success probability over all possible measurements for a fixed estimation tolerance belongs to
a class of convex optimization problems known as \emph{semi-infinite programs}, which are 
essentially semi-definite programs with an infinite number of
semi-definite constraints. We explicitly show how the semi-infinite program reduces to a semi-definite program upon discretization.

We also establish close connections between metrology in the finite-sample regime and
\emph{multi-hypothesis testing of quantum 
states}~\cite{audenaert2007discriminating,li2016discriminating,audenaert2014upper,khatri2020principles}.
In quantum multi-hypothesis testing, one receives an unknown state from a fixed set
of quantum states, and seeks to identify which state was provided. %
The metrological task considered here can be intuitively understood as 
a continuous version of multi-hypothesis testing of quantum states, where we seek to identify the value of an unknown parameter $t$ in the family of states $\rho(t)$.
In contrast to the discrete multi-hypothesis task, it is impossible to determine the
value of $t$ exactly given the parameter's continuous nature. Instead, the
parameter $t$ should be determined up to some fixed precision,
quantified by $\delta$.
We make this intuitive connection rigorous by proving upper bounds on the success probability of the metrological task in terms of the success probability of a related multi-hypothesis testing task. 
More specifically, we show that determining the parameter $t$ to precision $\delta$ is at least as hard as distinguishing quantum states corresponding to parameters that are at least $2\delta$ apart.
Along the same vein, we express quantities of interest, such as the success probability of our estimation procedure, in terms of known single-shot entropy measures such as the conditional
min-entropy~\cite{PhDRenner2005,khatri2020principles}. We also connect our framework to known estimation lower bounds in terms of the the hypothesis testing relative entropy~\cite{walter_heisenberg_2014}.

We then connect the finite-sample regime to the many-sample regime as follows.
We study the behavior of the success probability of the estimation where a finite number $n$ of copies of the state are available, and consider the limit $n\to\infty$.
In this regime, we prove an upper bound on the 
rate with which the success probability approaches one
in terms of the Chernoff divergence of quantum states.
This result extends known distinguishability rates in multi-hypothesis
testing~\cite{li2016discriminating}.

Exploiting the connection to multi-hypothesis testing allows us to formulate an analogue of the quantum Cramér-Rao bound that is valid in the single-shot regime.  
The estimation error tolerance $\delta$ replaces the standard deviation $\sigma$ on the left-hand side of the inequality of Eq.~\eqref{eqn:qcrb}, and we obtain correction terms on the right-hand side that depend on the desired success probability and properties of the set of states $\rho(t)$.

We then turn to an alternative setting, in which the parameter to be estimated is
accessed through the use of a parameter-dependent quantum channel. This
setting offers richer estimation strategies than the parameter-dependent state
setting. 
For instance, an estimation strategy may interleave the application of the parametrized channel on a probe system with interactions with a memory system.
We extend a selection of our earlier results to such general strategies, like the formulation of the optimal success probability as a convex problem as well as the rigorous connection to multi-hypothesis testing of quantum channels. Such generalizations become possible by viewing such strategies in their entirety as quantum combs~\cite{CDP09,GW07} or general strategies of indefinite causal order~\cite{chiribella2013noncausal}.

We further consider the task of estimating the parameter $t$ in a family
of pure states $\ket{\psi(t)}$ for which the parameter $t$ corresponds to
time, and whose evolution is governed by a fixed
Hamiltonian. Furthermore, $t$ is to be estimated globally
over the entire period of the Hamiltonian~\cite{holevo2011probabilistic}.
We establish a closed-form expression of the worst-case estimation success probability, exploiting the group-covariant structure of this set of states with respect to time evolution.
We finally consider examples of this setting on an ensemble of spin-$\tfrac{1}{2}$ particles.
We numerically compute the optimal success probability, as well as the optimal estimation error tolerance, for a collection of states.
The GHZ state fails in the global estimation setting considered here, despite
the state being optimal for local estimation.
We compare the sensitivity of a selection of states, including
a standard ensemble of spins prepared in a superposition
of a ground and an excited state (\emph{i.e.}, a spin-coherent state) as well as
a uniform superposition over all distinct energy levels (the Holland-Burnett 
state~\cite{Holland1993PRL_interferometric}).
We also determine the state that achieves optimal success probability, for any $n$ and
for any fixed estimation error tolerance. 

To further motivate our approach,
consider the following example~\cite{Safranek2017PRA_discontinuities,Zhou2019arXiv_exact,faist2022time-energy} (see Section~\ref{ssec:cramer_rao_finite_sample} of the supplementary material for details).
Alice prepares a particle in the state 
$\rho_0=\ketbra{+}{+}$, where $\ket{\pm} = (\ket{0} \pm \ket{1})/\sqrt{2}$.
The particle evolves according to the Hamiltonian $H = (\omega/2)\, Z$ for some fixed $\omega$, causing it to rotate in the $X$-$Y$-plane of the Bloch sphere.
At time $t$, Alice sends the particle instantaneously over to Bob through a completely dephasing channel %
acting in the Pauli-$X$ basis, defined as
\smash{$\rho_0\mapsto \bra{+}\rho_0\ket{+}\,\ketbra{+}{+}
+ \bra{-}\rho_0\ket{-}\,\ketbra{-}{-}$}. As a consequence,
Bob thus receives the state \smash{$\rho(t)=\cos^2(\omega t/2)\ketbra{+}{+}+\sin^2(\omega t/2)\ketbra{-}{-}$}.
The quantum Fisher information that Bob has with respect to $t$ is~\cite{Safranek2017PRA_discontinuities,Zhou2019arXiv_exact}
\begin{align}
    \calF = \begin{cases}
        \omega^2 & \textup{if $t \notin (\pi/\omega)\mathbb{Z}$},\\
        0 & \textup{if $t \in (\pi/\omega)\mathbb{Z}$}.
    \end{cases}
\end{align}
That is, $\calF$ is constant equal to $\omega^2$
except in a discrete set of points where $\calF = 0$.
While the discontinuity at $t \in (\pi/\omega)\mathbb{Z}$ is concerning given
the operational nature of the quantum Fisher information, it can be
attributed to the vanishing first-order expansion of $\rho(t)$ at those points and
therefore to a failure of the first-order approximation of the
curve $\rho(t)$~\cite{Safranek2017PRA_discontinuities,Zhou2019arXiv_exact}.
Consider now a point $t\approx 0$ with $t>0$ arbitrarily small. The quantum
Cram\'er-Rao bound guarantees the existence of a measurement \smash{$\hat{T}$}
with expectation value \smash{$\langle \hat{T} \rangle = t$} and with variance
\smash{$\langle \Delta \hat{T} \rangle ^2 = 1/\omega^2$}. %
This operator is
\smash{$\hat{T} = t\Id  + \omega^{-1}\bigl( -\tan(\omega t/2)\ketbra++ + \cot(\omega t/2) \ketbra-- \bigr)$}
(compare Section~\ref{ssec:cramer_rao_finite_sample} of the supplementary material).
The eigenvalue of \smash{$\hat{T}$} associated with $\ket-$ diverges as $\sim 1/t$.
In fact, both eigenvalues
contribute significantly to the expectation value and variance of \smash{$\hat{T}$};
the effect of the divergent  eigenvalue associated with $\ket-$  is kept finite only thanks to the
corresponding outcome happening with vanishingly small probability $\sim t^2$.
That is, a measurement of \smash{$\hat{T}$} almost certainly yields the  outcome $\ket+$; 
the  outcome $\ket-$, necessary for an accurate estimation of the expectation value,
only occurs after an expected $\sim 1/t^2$ number of samples.
Therefore, a naive estimation of the expectation value of this observable
yields little useful information on $t$ if fewer
than $\sim 1/t^2$ samples are collected.
One of the main goals of this work is to develop a rigorous and precise analysis of the
accuracy limits of sensing a parameter in the regime where the number of samples is
insufficient to accurately estimate the expectation value of the sensing observable
given through the Cram\'er-Rao bound.

Our inherently operational, information-theoretic approach to the estimation
task guarantees an operational meaning to the estimation error achieved by a given
measurement, in contrast to the variance of an observable whose operational meaning is
ensured only in the asymptotic limit of many samples.  
Our approach is strongly inspired
by recent advancements in \emph{single-shot quantum information
theory}~\cite{PhDRenner2005,BeyondIID,Tomamichel_book,khatri2020principles},
whose aim is to quantify the resource requirements of information-theoretic tasks beyond
the traditional regime %
where many \emph{independent and identically distributed (i.i.d.)} copies of a quantum state are available.
The approach of quantifying the performance of a quantum metrology protocol through the probability of obtaining a sufficiently accurate estimate is also similar in spirit to the \emph{de-facto} standard approach to computational learning theory, namely \emph{probably approximately correct (PAC) learning}~\cite{valiant_theory_1984}. As such, we will also refer to our framework as \emph{probably approximately correct (PAC) metrology}.
Our approach can also be understood as constructing sets known as
\emph{confidence intervals} in the field of statistics, and characterizing the effect
of different choices of quantum measurements on their size. Our approach is thus
closely related to confidence region estimation of quantum states~\cite{BlumeKohout2012arXiv_Tomo,Christandl2012_Tomo,walter_heisenberg_2014,Faist2016PRL_practical,Wang2019PRL_polytopes}.

Our framework enables the study of estimation procedures that can interpolate between
\emph{local estimation}, as in the context of the quantum Cram\'er-Rao bound, and
\emph{global estimation}, where the possible values of the underlying parameters are not constrained to a very small neighborhood of a known value.
Intuitively, the local setting corresponds to the case where the possible range of values for the parameter in question is small compared to the right hand side of the quantum Cramér-Rao bound Eq.~\eqref{eqn:qcrb}.
The global setting requires states to remain distinguishable over the full range of values that the unknown parameter might take.
Probe states that are accurate in the local estimation setting are not necessarily accurate
for global estimation. For instance, the $n$-qubit GHZ state
$\ket{\textup{GHZ}} = (\ket{00\ldots0} + \ket{11\ldots1})/\sqrt{2}$ is optimal for local
estimation of a parameter $t$ of a non-interacting ensemble of spin-$\tfrac{1}{2}$ particles.
However, its very short period $2\pi/n$ prohibits us from distinguishing values of $t$ spaced by more than $2\pi/n$. In contrast, the state $\ket{+}^{\otimes n} =
[(\ket{0}+\ket{1})/\sqrt{2}]^{\otimes n}$ has a period of $2\pi$ and is capable of identifying
greater time intervals at the cost of a worse accuracy in the setting of local estimation.

While the general approach of using the quantum Fisher information can be
extended to the global estimation regime by 
considering Bayesian prior information about the underlying 
parameter~\cite{durkin2007localglobalinterferometry,Paris,liu2016valid,rubio_bayesian_2020,rubio_global_2021,boeyens_uninformed_2021}, this does similarly suffer from possible issues in the few-shot regime we outlined above.
Our alternative approach, on the other hand side, can interpolate between the local and the global setting, both in the presence and absence of prior information about the underlying parameter, and is applicable in the few-shot setting.

Our approach furthermore does not suffer from apparent inconsistencies that 
can arise when the family of quantum states $\rho(t)$ is not sufficiently
well-behaved (\emph{e.g.}, if the derivative vanishes),
in contrast to the singularities and divergences that the 
quantum Fisher information is prone to in such
cases~\cite{Safranek2017PRA_discontinuities,Zhou2019arXiv_exact}.

\begin{figure}
    \centering
    \includegraphics{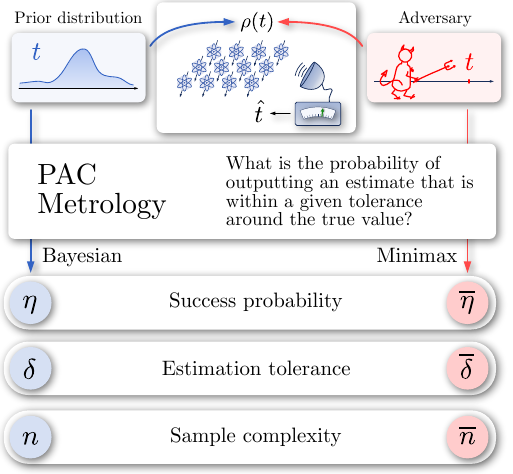}
    \caption{The fundamental quantities forming the  \emph{probably approximately correct (PAC) metrology} framework quantifying non-asymptotic quantum metrology. 
    The success probability, defined intuitively through the statement given in the above figure, forms the cornerstone of the framework. The tolerance is obtained from it by answering the question: \emph{\enquote{What is the smallest tolerance that still guarantees a given success probability?}} Similarly, the sample complexity is the answer to the question: \emph{\enquote{How many repetitions of an experiment do I need to perform to achieve a certain tolerance at a fixed success probability?}}
    We rigorously define them in Definition~\ref{def:success_probability_no_optimization} (success probability), Definition~\ref{def:metrological_tolerance} (tolerance) and Definition~\ref{def:sample_complexity} (sample complexity).
    The relation to the theory of PAC learning~\cite{valiant_theory_1984} lies in the spirit of how the quality of a protocol is quantified through the success probability -- beyond that, there is no overlap between the frameworks.}
    \label{fig:pac_metrology_quantities}
\end{figure}

Our framework %
is summarized in Fig.~\ref{fig:pac_metrology_quantities}.  We identify three key quantities of interest: the success probability at fixed estimation error tolerance, the best achievable tolerance at fixed success probability, and the sample complexity, which quantifies the minimum number of experimental repetitions needed to achieve a desired success probability and tolerance.
We discuss two scenarios of these measures, one assuming information in the form of a prior -- referred to as \emph{Bayesian} -- and one that captures guarantees that can be made agnostic to the underlying parameter -- referred to as \mbox{\emph{minimax}}.
The minimax setting enables a rigorous treatment of the lack of any prior knowledge
on the unknown parameter. %
In particular, attempts to capture this lack of knowledge in the Bayesian
setting, \emph{e.g.}\ by picking a uniform prior, fail to achieve the worst-case statements
that are enabled by the minimax setting. The minimax setting indeed leads to
guarantees that hold even when %
an adversary can %
choose a parameter value that a particular metrology strategy is least likely to work for. 

\begin{figure*}
    \centering
    \includegraphics{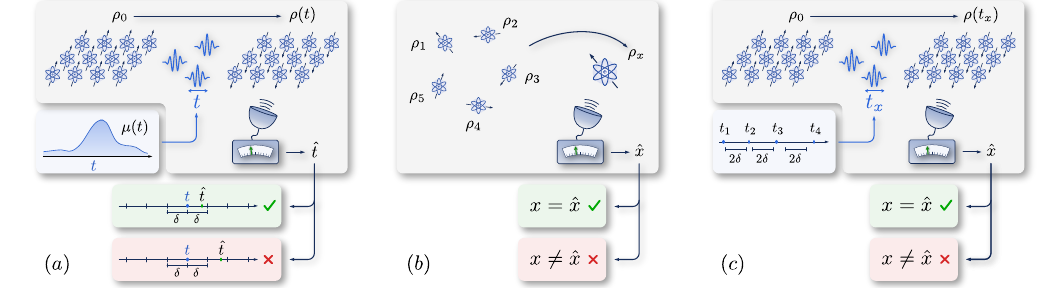}
    \caption{Metrological tasks consist of estimating some physical properties of a quantum system. A basic protocol involves preparing a probe state $\rho_0$, letting it interact with the quantum system of interest, and then measuring the output state. We describe the most general protocols in Section~\ref{sec:parametrized_quantum_channels}. 
    $(a)$ In quantum parameter estimation, the parameter $t\in\mathbb{R}$ indicates the physical property of interest, which may be governed by an underlying random process with probability density function $\mu$. In our PAC metrology framework, we identify the success of the estimation procedure by correctly identifying the parameter $t$ up to a given tolerance $\delta$. 
    $(b)$ In quantum hypothesis testing, a set of quantum states with labels $x \in \calX$ is given and the task is to design a measurement that maximizes the probability of correctly identifying the label $x$.
    $(c)$ The core contribution of our work is a rigorous connection between quantum hypothesis testing and quantum parameter estimation. Specifically, we show that the parameter estimation problem of $(a)$ is at least as hard as distinguishing states that are mutually at least $2\delta$ apart in the hypothesis testing setting of $(b)$ -- see Theorem~\ref{thm:succ_prob_upper_bound_mht_delta_window}.}
    \label{fig:front_figure}
\end{figure*}

Prior works focusing on metrology with finite repetitions~\cite{bahadur1967rates,bahadur1986distribution,bahadur1971limittheorems,spokoiny2012parameterfinitesamples,hayashi2002two,sugiyama2011tomography,sugiyama_thesis} in both the classical and quantum case usually take a \enquote{large deviation} perspective, such that the number of samples is understood to be finite but still large. A notion of success probability already appeared in Ref.~\cite{walter_lower_2014}. The authors of Refs.~\cite{hayashi2002two,Hayashi2} have quantified asymptotic properties related to the achievable precision under statistical assumptions on the estimators, something that can be tightened with our results as we explain in the supplementary material. References~\cite{sugiyama2011tomography,sugiyama2013tomography,sugiyama_thesis} have extended this approach to quantum state tomography. 
The problem of optimizing the metrological tolerance with a given guarantee on the success probability has been treated in Refs.~\cite{walter_heisenberg_2014,walter_lower_2014,sugiyama_precision-guaranteed_2015,yang2018stopwatch}, where some bounds have been given.
In a spirit similar to our work, connections between quantum metrology and hypothesis testing have been used to obtain precision bounds in the standard approach to quantum metrology~\cite{tsang2012ziv-zakai}. The reverse direction of using metrology bounds to quantify channel discrimination has also been explored~\cite{albarelli_probe_2022}.
Our analysis of the phase estimation problem has overlaps with work by Imai and Hayashi~\cite{imai_fourier_2009}, where the asymptotic distribution of phase estimates is analyzed.
Alternative methods for multi-parameter estimation are reviewed
in~\cite{GutaBeyondFisher}.
The authors of Ref.~\cite{Salmon} have studied the case of a fixed measurement with a focus on the admissibility of said measurement. The connection to previous work is explored in greater detail in Section~\ref{ssec:relation_to_prior_art} of the supplementary material.

After summarizing the main contributions of our work in Section~\ref{sec:main_results}, we outline the non-asymptotic framework for quantum metrology in Section~\ref{sec:quantifying_metrological_performance} and discuss optimizing over metrology protocols in Section~\ref{sec:optimal_metrology_protocols}. We describe the optimal post-processing in the practically relevant case of a fixed quantum measurement in Section~\ref{sec:opt_fixed_measurement}. We go on to describe the intimate connection to
hypothesis testing in Section~\ref{sec:metrology_as_hypothesis_testing} and show in Section~\ref{sec:asymptotics_succ_prob} how it can be used to understand the achievable asymptotic performance in general quantum metrology problems. Section~\ref{sec:optimal_tolerance} is dedicated to the optimal metrological tolerance for a fixed success probability and relates it to the hypothesis testing relative entropy. We showcase the various applications of our quantifiers in Section~\ref{sec:minimax_analysis_phase_estimation} where we perform a minimax analysis of the phase estimation problem. We discuss the generalization of our definitions and results to multivariate quantum metrology and their relation to learning from quantum systems in Section~\ref{sec:learning_from_quantum_systems}. Our work concludes with a detailed outline of future directions in Section~\ref{sec:future_directions} and a discussion of our results in Section~\ref{sec:discussion}.

\section{Overview of our main results}\label{sec:main_results}
We establish the following main results.

\vspace{.3\baselineskip}
\noindent\emph{$\triangleright$ A framework for quantum metrology in the finite-sample regime.}

\vspace{.3\baselineskip}
The framework of \emph{PAC metrology} established in Section~\ref{sec:quantifying_metrological_performance} and summarized in Fig.~\ref{fig:pac_metrology_quantities} constitutes an approach to quantum metrology that is both operational and valid in the single- and few-shot settings. We give rigorous definitions in Definition~\ref{def:success_probability_no_optimization} (success probability), Definition~\ref{def:metrological_tolerance} (tolerance) and Definition~\ref{def:sample_complexity} (sample complexity). 
We show that the optimization of the success probability over metrological protocols can be carried out as a convex optimization problem, concretely a semi-infinite program~\cite{SemiInfinite,Vandenberghe1998}, an infinite reading of a semi-definite program~\cite{BV96,BookBoyd2004ConvexOptimization}. We further establish properties of the proposed quantifiers, relate them to entropic quantities and detail their practical computation.

\vspace{.3\baselineskip}
\noindent\emph{$\triangleright$ A rigorous connection to hypothesis testing.}

\vspace{.3\baselineskip}
In Theorem~\ref{thm:succ_prob_upper_bound_mht_delta_window}, we establish rigorously that the task of estimating a parameter encoded in a state $\rho(t)$ is at least as hard as performing a quantum multi-hypothesis test between parametrized states $\{ \rho(t_i) \}_{i=1}^M$ associated to parameter values that are separated by at least twice the tolerance $\delta$, \textit{i.e.}\ $|t_i - t_j| > 2\delta$ for $i\neq j$ (see Fig.~\ref{fig:front_figure}).  
We also extend this upper bound to the case of parametrized quantum channels in Corollary~\ref{corr:succ_prob_upper_bound_mht_access_strategies}. We make use of this theorem to derive a simple relation of success probability to the fidelity of states in Corollary~\ref{corr:two_point_fidelity_bound} that forms the base of our further results. 

\vspace{.3\baselineskip}
\noindent\emph{$\triangleright$ Asymptotic rates of quantum metrology.}

\vspace{.3\baselineskip}
We exploit the hypothesis testing bound of Theorem~\ref{thm:succ_prob_upper_bound_mht_delta_window} to study the rate with which the success probability approaches one when using more and more copies of the same quantum state $\rho^{\otimes n}(t)$ while maintaining a fixed tolerance.
In particular, we provide upper and lower bounds on error rates in Theorems~\ref{thm:rate_upper_bound_delta} and~\ref{thm:rate_lower_bound_delta_window} and show the exact asymptotic rate for commuting problems in Corollary~\ref{corollary:asymptotic_rate_commuting}.

\vspace{.3\baselineskip}
\noindent\emph{$\triangleright$ A single-shot Cramér-Rao-like bound.}

\vspace{.3\baselineskip}
The metrological tolerance, which quantifies the smallest deviation of metrological estimates that still guarantees a given success probability, fulfills a role similar in spirit to the standard deviation in the asymptotic framework of quantum metrology. We exemplify this similarity by giving a bound in Theorem~\ref{thm:cramer_rao_like_bound} that resembles the Cramér-Rao bound but is valid in the single-shot setting. It establishes, among other insights, that the best achievable tolerance under many copies of the same state is $\delta = O(1 / \sqrt{n \min_t \calF(t)})$, similar in scaling to the quantum Cramér-Rao bound on the standard deviation.

\vspace{.3\baselineskip}
\noindent\emph{$\triangleright$ A finite-sample analysis of phase estimation.}

\vspace{.3\baselineskip}
We devote Section~\ref{sec:minimax_analysis_phase_estimation} to a minimax analysis of the phase estimation problem on an ensemble of spin-$\tfrac{1}{2}$ particles, \emph{i.e.}\ the estimation of a phase imprinted by a Hamiltonian evolution. For the general case of a covariant Hamiltonian evolution, we give the measurement achieving the optimal success probability and provide a formula in closed form in Theorem~\ref{thm:pgm_is_minimax_optimal}. This allows us to find the optimal probe state in the phase estimation scenario and to establish the optimal asymptotic rate of the error probability in Theorem~\ref{thm:opt_minimax_rate_entangled_phase_est}. We perform exhaustive numerics that showcase the differences of the single-shot analysis from the asymptotic framework, highlighting that the optimal probe states in this instance of global estimation are notably different from the optimal states for local estimation. We furthermore give evidence that in the setting of i.i.d.\ copies of the same state, the quantum Cramér-Rao bound gives a faithful estimate of the achievable minimax tolerance.

\section{A framework for quantum metrology in the finite-sample regime}
\label{sec:quantifying_metrological_performance}

We consider the task of estimating an unknown parameter that is encoded in a quantum state.
Let $t\mapsto \rho(t)$ be a one-parameter family of quantum states, where
the parameter $t$ belongs to some fixed real interval $t \in I \subseteq \mathbb{R}$.
For technical convenience, we assume that the interval $I$ is given as the domain of
the function $t \mapsto \rho(t)$ and henceforth omit explicit mention of $I$, all while assuming
that $t$ belongs to the domain of $t \mapsto \rho(t)$.

We first review the general abstract basics of Bayesian parameter estimation and the
alternative minimax parameter estimation setting.
These definitions work independently of the actual metrological problem and, as we show later, can be easily generalized to multivariate quantum metrology and metrology of quantum channels
We then consider the setting depicted in Fig.~\ref{fig:intro-super-simple-setup}, where a parameter is to be extracted from a parametrized quantum state through a quantum measurement. This allows us to rigorously establish our framework as outlined in Fig.~\ref{fig:pac_metrology_quantities}.

Through this development, we establish measures of performance that have a direct operational meaning in the non-asymptotic setting where only few experimental repetitions can be performed.
Our approach revolves around the question: \emph{\enquote{What is the probability \smash{$\eta$} of outputting an estimate that is within a given tolerance \smash{$\delta$} around the true value?}} 

We first see how we can answer this question in the general setting of Bayesian parameter estimation.
Suppose the %
value of the underlying parameter is distributed according to a
prior distribution $\mu(t)$, $t \sim \mu(t)$.
Given a value of $t$, we assume that the probability of our estimation procedure
producing the estimate $\tau$ is distributed according to $\nu(\tau \pipe t)$.
Then we can compute the \emph{Bayesian success probability} as 
\begin{align}\label{equ:intro_bayesian_success_probability}
    \eta = \int \diff \mu(t) \, \diff \nu(\tau \pipe t) \, w_{\delta}(t-\tau),
\end{align}
where $w_{\delta}(t-\tau) \coloneqq \chi[ |t-\tau| \leq \delta]$
represents a \emph{window} of size $\delta$ around the true value, with $\chi(\cdot)$ the indicator function that is equal to one when its argument is true and zero otherwise.

An alternative setting applies to the case where we have no prior information about
the parameter $t \in I$.  
Suppose that for a fixed value of $t$, the probability of
our estimation procedure producing the estimate $\tau$ is again distributed according
to $\nu(\tau \pipe t)$. 
In this case, the probability of success, in the worst case over $t$,
is determined as
\begin{align}
    \overline\eta
    &= \inf_{t\in I} \int \diff \nu(\tau \pipe t) \, w_{\delta}(t-\tau).
    \label{eq:intro_worstcase_success_probability}
\end{align}
While this setting is radically different from Bayesian estimation
on the conceptual level, we exploit a close relation between these settings at
the technical level in order to simplify our derivations. Specifically, the
quantity in Eq.~\eqref{eq:intro_worstcase_success_probability} can be expressed as
the Bayesian success probability of Eq.~\eqref{equ:intro_bayesian_success_probability} minimized
over all possible priors with support in $I$:
\begin{align}
    \overline\eta
    &= \inf_{\mu\colon \mu(I)=1} \int \diff \mu(t) \, \diff \nu(\tau \pipe t) \, w_{\delta}(t-\tau).
    \label{eq:intro_worstcase_success_probability_minimax}
\end{align}
Indeed, the minimum in Eq.~\eqref{eq:intro_worstcase_success_probability_minimax} is achieved by
a prior $\mu(t)$ that is concentrated at the time $t$ where the minimum in Eq.~\eqref{eq:intro_worstcase_success_probability} is achieved.
In other words, we might consider an adversary who gets to choose the prior $\mu(t)$ according
to which the parameter value is distributed. The worst thing that can happen is that an adversary chooses a very unfortunate prior. In this case, we can still guarantee the
success probability lower bounded by $\overline{\eta}$.
We refer to this setting as the \emph{minimax} setting, following standard terminology
in statistics. The name stems from the two optimizations that are involved when we
optimize Eq.~\eqref{eq:intro_worstcase_success_probability_minimax} over possible
estimation procedures: one optimization ranges over the estimation procedure
and the other one over $\mu(t)$.

The window function $w_\delta(t-\tau)$ in the definitions above
identifies the successful events as those where $t$ and $\tau$ differ by at most $\delta$.
More general window functions can be employed to quantify alternative definitions of
success. For instance, if the parameter represents an angle \mbox{$t \in [0, 2\pi]$}, a meaningful window function would identify $t$ and $\hat{t}$ as $\delta$-close under the topology of the unit circle.

We now turn to our specific setup in quantum metrology as depicted in Fig.~\ref{fig:intro-super-simple-setup}. 
We %
consider the %
setting where the parameter in question is encoded in a set of states $t \mapsto \rho(t)$.
In this case %
any prediction must be obtained from performing some sort of quantum measurement
on the given state and subsequently classically post-processing the outcome
of the measurement into a prediction.
We can combine both of these elements into a POVM $\tau \mapsto Q(\tau)$ such that
\begin{align}
    \nu(\tau \pipe t) = \Tr[ \rho(t) Q(\tau) ].
\end{align}

This leads us to the following formal definition.

\begin{definition}[Success probability]\label{def:success_probability_no_optimization}
For a given tolerance $\delta$, a set of states $\rho(t)$, possibly with prior $\mu(t)$, and a measurement $Q(\tau)$, the Bayesian success probability is given by
\begin{align}
\eta(\delta, \mu, \rho, Q)  
&\coloneqq \int \diff \mu(t) \, \diff \tau \, w_{\delta}( t - \tau) \Tr [\rho(t) Q(\tau)].  \label{eqn:Bayesian_succ_prob_no_opt}
\end{align}
The minimax success probability is given by
\begin{align}
\overline{\eta}(\delta, \rho, Q)  
&\coloneqq \inf_t \int \diff \tau \, w_{\delta}( t - \tau) \Tr [\rho(t) Q(\tau)].       
\end{align}
\end{definition}
We observe that the success probability can be more compactly written using the convolution notation
\begin{align}
    (w * \rho)(t) \coloneqq \int \diff \tau \, w(t-\tau)\rho(\tau)
\end{align}
as
\begin{align}
    \eta(\delta, \mu, \rho, Q)  &= \int \diff \mu(t) \, \Tr [ \rho(t)(w_{\delta} * Q)(t)  ]\\
    &= \int \diff t \, \Tr [ (w_{\delta} * [\mu \cdot \rho])(t) Q(t) ],\nonumber
\end{align}
and we will use this notation in the rest of this work.

The success probability of Definition~\ref{def:success_probability_no_optimization} quantifies the probability that our metrology protocol outputs a correct estimate. This is very reminiscent of the strategy used to quantify the performance of learning algorithms pioneered by Valiant~\cite{valiant_theory_1984}, which was coined as \emph{probably approximately correct (PAC) learning}. As this naming also conveys the essence of our approach to quantum metrology, we refer to it as \emph{probably approximately correct (PAC) metrology}.

As we show in Section~\ref{ssec:properties_of_quantifiers} of the supplementary material, both the Bayesian and minimax success probabilities have basic continuity properties in all their arguments and allow for intuitive majorization relations.

In addition to the success probability as a measure of metrological performance, it is equally fair and operationally relevant to reverse the question and ask: \emph{\enquote{What is the smallest tolerance \smash{$\delta$} that still guarantees a success probability of \smash{$\eta$}?}} 
We condense this reasoning into the following rigorous definition:
\begin{definition}[Estimation tolerance]
\label{def:metrological_tolerance}
For a given success probability $\eta$, a set of states $\rho(t)$, possibly with prior $\mu(t)$, and a measurement $Q(\tau)$, the Bayesian estimation error tolerance is given by
\begin{align}%
    \delta(\eta, \mu, \rho, Q) \coloneqq \inf \mathstrut\{ \delta' \geq 0 \pipe \eta(\delta', \mu, \rho, Q) \geq \eta \}.
\end{align}
The minimax estimation error tolerance is given by
\begin{align}
\overline\delta(\eta, \rho, Q) \coloneqq \inf \mathstrut\{ \delta' \geq 0 \pipe \overline\eta(\delta', \rho, Q) \geq \eta \}.
\end{align}
\end{definition}
Looking at the estimation error tolerance instead of the success probability allows
for a simpler comparison with the standard bounds encountered in quantum metrology,
as the tolerance has comparable meaning to the standard deviation of an estimator, 
which is the target of the quantum Cramér-Rao bound. 

Both the success probability and the tolerance introduced in Definitions~\ref{def:success_probability_no_optimization}
and~\ref{def:metrological_tolerance} are truly \emph{single-shot} quantities,
in that they consider a single outcome of the quantum measurement.
Usually, however, the desired performance can only be achieved by
performing multiple repetitions of the same experiment. 
We %
model multiple repetitions of an experiment
by having access to $n$ copies of the parametrized state, 
\textit{i.e.}, \smash{$t \mapsto \rho^{\otimes n}(t)$}.  
The measurement $Q(\tau)$ is then collectively measure the $n$ copies
of the state.

The multi-copy scenario leads us to a third operationally relevant question:
\emph{\enquote{How many repetitions \smash{$n$} of my experiment do I need to perform to obtain a desired tolerance \smash{$\delta$} with a fixed success probability \smash{$\eta$}?}}
We can cast this \emph{sample complexity} into the following definition:
\begin{definition}[Sample complexity]\label{def:sample_complexity}
For a given success probability $\eta$ and tolerance $\delta$, a set of states $\rho(t)$, possibly with prior $\mu(t)$, and a sequence of measurements $\{ Q^{(n)}(\tau)\}$ the Bayesian sample complexity is given by
\begin{align}%
\begin{split}
    &n(\eta, \delta, \mu, \rho, \{ Q^{(n)}\}) \\
    &\qquad\coloneqq \min \mathstrut\{ n' \in \bbN \pipe \eta(\delta, \mu, \rho^{\otimes n'}, Q^{(n')}) \geq \eta \}.
\end{split}
\end{align}
The minimax sample complexity is given by
\begin{align}
\overline{n}(\eta, \delta, \rho, \{  Q^{(n)}\}) \coloneqq \min \mathstrut\{ n' \pipe \overline\eta(\delta, \rho^{\otimes n'}, Q^{(n')}) \geq \eta \}.
\end{align}
\end{definition}
Results in the context of more general metrology tasks, like state tomography or Hamiltonian learning are therefore usually phrased in terms of the sample complexity. We explore this connection and the multivariate generalization of this framework in Section~\ref{sec:learning_from_quantum_systems}, where we show how our definitions and results generalize to these settings and how the languages can be compared.

Practical settings in quantum metrology often involve estimating an unknown parameter
present in the dynamics of a system, rather than directly encoded into the state itself. 
Such dynamics might involve interactions with an environment system, or another quantum system
whose properties we seek to estimate.  Formally, the task becomes that of estimating an unknown
parameter present in a quantum channel by applying the unknown channel on suitable inputs and
performing suitable subsequent measurements.
In simple cases, the channel parameter estimation problem can reduce to a state estimation problem:
One prepares a fixed initial state $\rho_0$ and sends it through the channel, resulting in a state
$\rho(t)$; the task is now to estimate the parameter $t$ encoded in the quantum state.
The channel estimation problem, however, provides a richer landscape of estimation strategies when more than one copy of the channel is available.
We discuss %
this setting in Section~\ref{sec:parametrized_quantum_channels}.

With the preceding definitions that form our few-shot framework of PAC metrology, we have established a set of quantities that capture the performance of finite-sample quantum metrology protocols. Here, we discuss the optimal values these quantities can take, when we optimize over all possible metrological prescriptions. This brings us to the following definitions of \emph{optimal} counterparts of the Definitions~\ref{def:success_probability_no_optimization},~\ref{def:metrological_tolerance} and~\ref{def:sample_complexity}.
\begin{definition}[Optimal Bayesian quantities]
The \emph{optimal Bayesian success probability} is obtained by optimizing the Bayesian success probability over all possible POVMs: %
\begin{align}\label{eqn:opt_success_probability_bayesian}
    \eta^{*}(\delta, \mu, \rho) &\coloneqq \sup_{Q(\tau)} \eta(\delta, \mu, \rho, Q).
\end{align}
We use it to define the \emph{optimal Bayesian tolerance}
\begin{align}
    \delta^{*}(\eta, \mu, \rho) \coloneqq \inf \mathstrut\{ \delta' \geq 0 \pipe \eta^{*}(\delta', \mu, \rho) \geq \eta \}
\end{align}
and \emph{optimal Bayesian sample complexity}
\begin{align}%
    n^{*}(\eta, \delta, \mu, \rho) \coloneqq \min \mathstrut\{ n' \in \bbN \pipe \eta^{*}(\delta, \mu, \rho^{\otimes n'}) \geq \eta \}.
\end{align}
\end{definition}
The optimal minimax quantities are defined analogously.
\begin{definition}[Optimal minimax quantities]
The \emph{optimal minimax success probability} is obtained by optimizing the minimax success probability over all possible POVMs: %
\begin{align}
    \overline{\eta}^{*}(\delta, \rho) &\coloneqq \sup_{Q(\tau)} \overline{\eta}(\delta, \rho, Q).
\end{align}
We use it to define the \emph{optimal minimax estimation tolerance}
\begin{align}
    \overline\delta^{*}(\eta, \rho) \coloneqq \inf \mathstrut\{ \delta' \geq 0 \pipe \overline\eta^{*}(\delta', \rho) \geq \eta \}
  \label{eq:optimal-minimax-estimation-tolerance-defn}
\end{align}
and \emph{optimal minimax sample complexity}
\begin{align}%
    \overline{n}^{*}(\eta, \delta, \rho) \coloneqq \min \mathstrut\{ n' \in \bbN \pipe \overline\eta^{*}(\delta, \rho^{\otimes n'}) \geq \eta \}.
\end{align}
\end{definition}
The optimal Bayesian and minimax success probabilities have desirable properties such as convexity in their arguments and data-processing under noise channels, a point we elaborate on in Section~\ref{ssec:properties_of_quantifiers} of the supplementary material. 

We do not have an \emph{a-priori} restriction on the domain of the parameter $t$ when calculating the success probability. However, it is often easier to compute it, both analytically and numerically, if we restrict it to a finite interval. While this is trivially giving a bound in the minimax case, the following lemma ensures that we can also use the restriction to a subinterval to compute bounds for the optimized Bayesian success probability:
\begin{lemma}[Subdivision trick]\label{lem:subdivision_trick}
For a given tolerance $\delta$ and a set of states $\rho(t)$ with prior distribution $\mu(t)$, define the restriction of a prior to an interval $I \subseteq \bbR$ as
\begin{align}
    \mu|_{I}(t) \coloneqq \chi[t \in I]\mu(t) / \mu(I).
\end{align}
Let us further denote with $I_t$ the interval of size $T$ centered at $t$. Then, we have that
\begin{align}
    \eta^{*}(\delta, \rho, \mu) &\leq \frac{1}{T} \int \diff t \, \mu(I_t) \,  \eta^{*}(\delta, \rho, \mu|_{I_t}) \\
    &\leq \max_t \eta^{*}(\delta, \rho, \mu|_{I_t}).
    \nonumber
\end{align}
\end{lemma}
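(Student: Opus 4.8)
The plan is to reduce everything to a single exact identity valid for an \emph{arbitrary fixed} measurement $Q$, and to optimize over $Q$ only at the very end. Concretely, I would first establish that for any fixed POVM $Q(\tau)$ the Bayesian success probability decomposes as a weighted average of the success probabilities of the restricted problems,
\begin{align}
    \eta(\delta,\mu,\rho,Q) = \frac{1}{T}\int \diff t \, \mu(I_t)\, \eta(\delta,\mu|_{I_t},\rho,Q),
    \label{eq:planeta}
\end{align}
and then deduce both inequalities from \eqref{eq:planeta}. The engine behind it is the partition-of-unity identity $\frac{1}{T}\int \diff t\,\chi[s\in I_t] = 1$, which holds for every $s$ because $s\in I_t \iff |s-t|\leq T/2 \iff t\in I_s$, so the admissible set of $t$ has length exactly $T$.

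To obtain \eqref{eq:planeta} I would insert this factor of $1$ into the definition $\eta(\delta,\mu,\rho,Q) = \int \diff\mu(s)\,\diff\tau\, w_\delta(s-\tau)\Tr[\rho(s)Q(\tau)]$, exchange the order of the $s$- and $t$-integrations (Tonelli applies since every integrand is non-negative), and use the defining relation $\chi[s\in I_t]\,\diff\mu(s) = \mu(I_t)\,\diff\mu|_{I_t}(s)$ to recognize the inner integral as $\eta(\delta,\mu|_{I_t},\rho,Q)$. Values of $t$ with $\mu(I_t)=0$ contribute nothing and can simply be dropped, so that $\mu|_{I_t}$ is well defined wherever it appears.

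From \eqref{eq:planeta} the first inequality follows by bounding each restricted term pointwise, $\eta(\delta,\mu|_{I_t},\rho,Q)\leq \eta^{*}(\delta,\mu|_{I_t},\rho)$, which produces a right-hand side independent of $Q$, and then taking the supremum over $Q$ on the left. For the second inequality I would verify that the weights $\tfrac{1}{T}\mu(I_t)$ form a probability density in $t$: non-negativity is immediate, and their integral is $\tfrac{1}{T}\int\diff t\,\mu(I_t) = \tfrac{1}{T}\int\diff\mu(s)\int\diff t\,\chi[s\in I_t] = 1$ by the same partition-of-unity identity, so a weighted average is at most the maximum.

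I expect the only genuinely lossy step, and hence the conceptual heart of the lemma, to be the exchange in which the supremum of an average is bounded by the average of the suprema: a single measurement $Q$ must serve all windows $I_t$ simultaneously on the left, whereas the right-hand side is free to use a different optimal measurement for each window. Everything else — the applicability of Tonelli/Fubini and the degenerate case $\mu(I_t)=0$ — is routine bookkeeping.
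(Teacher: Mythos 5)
Your proof is correct, and it rests on the same decomposition as the paper's — the partition of unity $\tfrac{1}{T}\int \diff t\,\chi[s\in I_t]=1$ and the resulting weights $\tfrac{1}{T}\mu(I_t)$, whose normalization you verify exactly as the paper does. The difference lies in how the single lossy step is justified. The paper argues operationally: an oracle samples an interval $I_{t'}$ uniformly among those containing the true $t$, the estimator performs a Bayesian update to $\mu|_{I_{t'}}$ and then plays optimally for that restricted prior, and ``additional information can only improve the estimate,'' which directly yields
\begin{align}
    \eta^{*}(\delta,\rho,\mu) \leq \int \diff\mu(t)\,\diff t'\,\tfrac{1}{T}\chi[t'\in I_t]\,\eta^{*}(\delta,\rho,\mu|_{I_{t'}}).
\end{align}
You instead prove the exact identity $\eta(\delta,\mu,\rho,Q)=\tfrac{1}{T}\int\diff t\,\mu(I_t)\,\eta(\delta,\mu|_{I_t},\rho,Q)$ for every fixed POVM $Q$, and obtain the inequality purely from bounding each restricted term by its own optimum and then exchanging the supremum over $Q$ with the average over $t$. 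This is the same mathematical content — the ``information can only help'' principle is precisely the statement that the supremum of an average is at most the average of the suprema — but your version makes that step explicit and self-contained, whereas the paper leaves it as an operational appeal. Your route buys a cleaner verification (and correctly flags where the slack enters: one $Q$ must serve all windows on the left, while each window gets its own optimal measurement on the right); the paper's buys a more transparent interpretation of what the bound means. Both are sound.
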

The proof is provided in Section~\ref{ssec:proof_subdivision_trick} of the supplementary material.
The above immediately implies corresponding subdivision lower bounds on the optimal tolerance and optimal sample complexity.

\section{Computing optimal measurements}\label{sec:optimal_metrology_protocols}

\subsection{Generally optimal measurements through convex optimization}
Of the three quantities, the success probability is the most amenable to optimization. It is linear in the chosen measurement, and we can show that the optimization over the measurement $Q(\tau)$ can be cast into the form of a convex problem~\cite{BookBoyd2004ConvexOptimization}. Specifically, it assumes the form of a \emph{semi-infinite problem}~\cite{SemiInfinite,Vandenberghe1998}, so a \emph{semi-definite problem} with infinitely many objective variables and finitely many constraints, or the other way around. For the theory of semi-definite programming to largely take over, the involved functions must be analytically defined, smooth, and convex~\cite{Vandenberghe1998}, which can be safely assumed here. In what follows, we refer to such infinite readings of semi-definite problems as convex problems. This does not only provide a path to more easily compute them, but is also a reliable tool to prove upper and lower bounds. 

\begin{proposition}[POVM optimization]\label{prop:sdp_formulation}
For a given tolerance $\delta$ and a set of states $\rho(t)$ with prior distribution $\mu(t)$, the optimal success probability $\eta^*(\delta,\mu,\rho)$ can be computed using the convex program
\begin{align}
    \max_{Q(t) \geq 0}\mathstrut \left\{\left. \int \diff t \, \Tr[ (w_{\delta} * [\mu \cdot \rho])(t) Q(t) ] \, \right| \, \int \diff t \, Q(t) = \bbI \right\}.
\end{align}
The optimal success probability is equally characterized by the convex program 
\begin{align}\label{eq:sdp_dual_formulation_Bayesian}
    \min_{X \geq 0}\mathstrut \big\{\Tr[X] \, \big| \, X \geq (w_{\delta} * [\mu \cdot \rho])(t)\text{\upshape{ for all }}t \big\}.
\end{align}
The minimax success probability can be computed in a similar way by additionally optimizing over all priors. 
\end{proposition}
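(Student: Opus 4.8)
The plan is to recognize that the first (primal) program is essentially the definition of $\eta^{*}$ rewritten, so that the real content lies in the dual characterization and in the absence of a duality gap. First I would abbreviate $A_\mu(t) \coloneqq (w_{\delta} * [\mu\cdot\rho])(t)$ and invoke the convolution identity established just above the statement to write $\eta(\delta,\mu,\rho,Q) = \int \diff t\,\Tr[A_\mu(t)Q(t)]$. Since a POVM is exactly a family $Q(t)\geq 0$ with $\int \diff t\, Q(t)=\bbI$, the supremum over $Q$ reproduces the first program verbatim, with a linear objective and with $A_\mu(t)\geq 0$ because $w_\delta$, $\mu$, and each $\rho(\tau)$ are nonnegative. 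Thus the primal optimal value equals $\eta^{*}$ by construction.

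Next I would obtain the dual by Lagrangian duality, dualizing only the normalization constraint $\int \diff t\, Q(t)=\bbI$ with a Hermitian multiplier $X$ while keeping $Q(t)\geq 0$ as an explicit conic constraint. The Lagrangian is
\begin{align}
L(Q,X) = \Tr[X] + \int \diff t\, \Tr\bigl[(A_\mu(t)-X)\,Q(t)\bigr].
\end{align}
Maximizing over $Q(t)\geq 0$ pointwise, the inner supremum is $+\infty$ unless $A_\mu(t)-X\leq 0$ for every $t$, in which case it is zero and is attained at $Q(t)=0$. Hence the dual function equals $\Tr[X]$ on the feasible set $\{X : X\geq A_\mu(t)\ \forall t\}$ and $+\infty$ otherwise, giving precisely the program in Eq.~\eqref{eq:sdp_dual_formulation_Bayesian}; note that $X\geq A_\mu(t)\geq 0$ forces $X\geq 0$, justifying the $X\geq 0$ in the statement.

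Weak duality is then immediate and I would record it directly: for any POVM $Q$ and any feasible $X$,
\begin{align}
\eta(\delta,\mu,\rho,Q)
&= \int \diff t\,\Tr[A_\mu(t)Q(t)] \nonumber\\
&\leq \int \diff t\,\Tr[X\,Q(t)] = \Tr[X],
\end{align}
using $A_\mu(t)\leq X$ with $Q(t)\geq 0$ and then $\int \diff t\, Q(t)=\bbI$. Taking the supremum over $Q$ and the infimum over $X$ shows the dual upper-bounds $\eta^{*}$.

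The hard part is strong duality in this semi-infinite setting, where the dual carries one operator inequality per value of $t$. Here I would verify a Slater-type condition: the primal is feasible (POVMs exist), its objective is linear and continuous, and the set of POVMs is convex and compact in the weak-$*$ topology, so the primal supremum is attained; on the dual side $X=c\,\bbI$ with $c>\sup_t \norm{A_\mu(t)}$ is strictly feasible, the supremum being finite since $\Tr[A_\mu(t)]=\int \diff\tau\, w_\delta(t-\tau)\mu(\tau)\leq 1$. Strong duality for the conic program then follows, with the semi-infinite constraint handled via continuity of $t\mapsto A_\mu(t)$ together with (effective) compactness of the index set, or equivalently via the discretization limit described after the statement of the result. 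This step—rigorously ruling out a duality gap for infinitely many constraints—is the main obstacle; the regularity assumptions (analytically defined, smooth, convex data) flagged in the surrounding text are exactly what make the standard semi-definite duality machinery applicable. Finally, for the minimax claim I would write $\overline\eta^{*}(\delta,\rho)=\sup_Q\inf_t \int \diff\tau\, w_\delta(t-\tau)\Tr[\rho(t)Q(\tau)]$ and apply Sion's minimax theorem—the payoff being bilinear in $Q$ and in the prior, over convex sets with the requisite compactness—to exchange the supremum over $Q$ with the infimum over priors, reducing to $\inf_\mu \eta^{*}(\delta,\mu,\rho)$ and then reusing the primal/dual pair for each fixed $\mu$.
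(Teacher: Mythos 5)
Your proposal is correct and reaches the same primal/dual pair as the paper, but the mechanics differ in two places worth noting. For the dual, the paper (Proposition~\ref{prop:bayesian_sdp} and Lemma~\ref{lem:G_max_dual} of the supplement) casts the primal into the standard conic form by packaging the POVM as a classical--quantum operator $\int \diff t\,\ketbra{t}{t}\otimes Q(t)$, writes the constraint map $\Phi$ explicitly, and reads the dual off from the adjoint $\Phi^{\dagger}$; your pointwise Lagrangian maximization over $Q(t)\geq 0$ is an equivalent and somewhat more elementary derivation of the same covering program $\min\{\Tr[X] : X\geq A_\mu(t)\ \forall t\}$. On strong duality you actually supply more than the paper does, which simply asserts that it "is straightforward to verify" and leans on the cited semi-infinite programming literature; your Slater point ($X=c\,\bbI$ strictly feasible, primal value bounded by one) is the right certificate, though as the paper's own discretization section (Section~\ref{ssec:discretization_bouds_sip}) makes explicit, attainment of the primal supremum requires restricting to a compact class of POVMs (e.g.\ Lipschitz), a caveat your weak-$*$ compactness remark glosses over at roughly the same level of informality as the paper. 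The genuine divergence is the minimax claim: you invoke Sion's theorem to exchange $\sup_Q$ with $\inf_\mu$ and then reuse the Bayesian duality prior-by-prior, whereas the paper (Proposition~\ref{prop:minimax_sdp}) never swaps the order -- it introduces $\eta$ as a primal variable with constraints $\Tr[\rho(t)(w\ast Q)(t)]\geq\eta$ for all $t$, and the prior $\mu(t)$ with $\int\diff t\,\mu(t)=1$ emerges automatically as the dual variable of that constraint family (the normalization coming from complementary slackness). The paper's route produces the joint program over $(X,\mu)$ in one shot, which is what the numerics and later proofs use; your route is conceptually cleaner but additionally requires weak-$*$ compactness of the set of priors (i.e.\ a compact parameter interval) for Sion's theorem to apply.
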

The second convex program in Eq.~\eqref{eq:sdp_dual_formulation_Bayesian} is derived through the notion of duality in semi-infinite programming.
The proof of the above proposition, along with a more detailed statement is provided in Proposition~\ref{prop:minimax_sdp} of the supplementary material, where we also explicitly give the convex formulation of the minimax success probability. 

From a practical standpoint, it is also important that the above semi-infinite program can actually be implemented numerically while having guarantees on the quality of the approximation.
\begin{proposition}
[Discretization]
There exists a discretization of the convex-program of Proposition~\ref{prop:sdp_formulation} that yields a semi-definite program that can be solved with standard tools. If $\rho(t)$ is Lipschitz with respect to the trace distance, the error of the discretization can be made arbitrarily small by choosing a suitable scale of the discretization.
\end{proposition}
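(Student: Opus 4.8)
The plan is to discretize the continuous parameter onto a finite grid and to control the resulting error through the Lipschitz constant of the effective operator $A(t) \coloneqq (w_{\delta} * [\mu\cdot\rho])(t)$ appearing in Proposition~\ref{prop:sdp_formulation}. Since $A(t)$ vanishes outside the $\delta$-neighborhood of $\supp\mu$, the dual constraint $X \geq A(t)$ is automatically satisfied there (because $X \geq 0$), so only a bounded range of $t$ is relevant; if $\supp\mu$ is unbounded one first truncates the prior to capture all but an $\epsilon$-fraction of its mass, incurring an additive error $\epsilon$. Partitioning this bounded range into cells of width at most $h$ with representatives $t_1,\dots,t_K$, I would replace the semi-infinite dual in Eq.~\eqref{eq:sdp_dual_formulation_Bayesian} by
\begin{align}
  W_h \coloneqq \min_{X \geq 0}\bigl\{ \Tr[X] \,\big|\, X \geq A(t_k)\ \text{for } k=1,\dots,K \bigr\},
\end{align}
which has finitely many linear matrix inequalities and is a standard semi-definite program; its finite dual is the discretized primal, an optimization over a $K$-outcome POVM $\{Q_k\}$ with $\sum_k Q_k = \bbI$, which is the measurement one actually implements.

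The core of the argument is the two-sided error bound $W_h \leq \eta^*(\delta,\mu,\rho) \leq W_h + d\,L_A\,h$, where $d$ is the Hilbert-space dimension and $L_A$ the trace-norm Lipschitz constant of $A$. The lower bound is immediate, since dropping constraints can only decrease the minimum. For the upper bound I would take an optimal $X_h$ of the discretized program and \emph{inflate} it: for any $t$ there is a grid point $t_k$ with $\abs{t-t_k}\leq h$, and from $A(t) \leq A(t_k) + \norm{A(t)-A(t_k)}_{\infty}\,\bbI \leq X_h + L_A h\,\bbI$ it follows that $X_h + L_A h\,\bbI$ is feasible for the full semi-infinite dual. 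Hence $\eta^* \leq \Tr[X_h + L_A h\,\bbI] = W_h + d\,L_A\,h$, using $\norm{B}_\infty \leq \norm{B}_1$. The gap therefore vanishes as $h \to 0$, and for any target accuracy it suffices to take $h \leq \epsilon/(d\,L_A)$.

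It remains to bound $L_A$, which is exactly where the hypothesis on $\rho$ enters. Writing $A(t) = \int_{-\delta}^{\delta} \mu(t-s)\,\rho(t-s)\,\diff s$, the $L$-Lipschitz continuity of $t\mapsto\rho(t)$ in trace distance — together with boundedness of the prior density — makes $t \mapsto \mu(t)\rho(t)$ Lipschitz, and integrating over the fixed window of width $2\delta$ yields $\norm{A(t)-A(t')}_1 \leq L_A\abs{t-t'}$ with $L_A$ proportional to $L$, $\delta$, and $\norm{\mu}_\infty$. Without the Lipschitz assumption $A$ may be discontinuous, e.g.\ for an atomic prior, and no uniform discretization error could be guaranteed.

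Finally, the minimax program is handled by the same scheme applied to its convex formulation, with the additional step of discretizing the infimum over $t$; the gap introduced there is again controlled by the Lipschitz constant of $t \mapsto \int \diff\tau\, w_\delta(t-\tau)\Tr[\rho(t)Q(\tau)]$, and this is precisely the place where Lipschitz continuity of $\rho$ itself is indispensable, since minimax priors may be atomic and the window alone provides no smoothing. I expect the main obstacle to be not the discretization estimate, which is a routine inflation argument, but the uniform bookkeeping of $L_A$ across both settings — in particular ensuring that the truncation of an unbounded parameter domain is uniform and that the single Lipschitz constant used to set the grid scale remains valid over the entire relevant range.
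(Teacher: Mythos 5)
Your argument for the Bayesian case is essentially sound, but it takes a genuinely different route from the paper. The paper discretizes the \emph{primal}: it restricts the optimization to POVMs that are Lipschitz in operator norm with an explicit constant $C_\delta$, adds the constraint $\norm{Q_i-Q_{i+1}}_{\infty}\leq C_\delta\Delta$ to the discrete SDP, and proves a two-way correspondence (cell-averaging a continuous POVM in one direction, linear interpolation of a discrete one in the other), arriving at $\abs{\eta_{\delta}-\eta_{\delta,\Delta}}\leq(C_\rho+C_\delta)\Delta$, with general continuous POVMs recovered only in the limit $C_\delta\to\infty$ after $\Delta\to 0$. You instead sample the constraints of the \emph{dual} $\min\{\Tr[X]\mid X\geq A(t)\}$ and repair feasibility by inflating $X_h\mapsto X_h+L_Ah\,\Id$. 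This is a clean one-sided argument that imposes no regularity on the POVM class at all, gives the explicit sandwich $W_h\leq\eta^*\leq W_h+d\,L_A h$, and hands you an implementable $K$-outcome measurement through the finite dual. What it costs is generality in the prior: your scheme needs $A$ to be Lipschitz, which fails for atomic priors ($A$ then has jumps at $t_j\pm\delta$ even when $\rho$ is perfectly smooth), whereas the paper's cell-averaged discretization handles arbitrary compactly supported measures.

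Two loose ends. First, your derivation of $L_A$ through Lipschitz continuity of $t\mapsto\mu(t)\rho(t)$ quietly requires the prior density itself to be Lipschitz, not merely bounded. This is easily repaired by shifting the window rather than the integrand: since $w_\delta$ is an indicator,
\begin{align}
  \norm{A(t)-A(t')}_1 \leq \int \diff\tau\,\abs{w_\delta(t-\tau)-w_\delta(t'-\tau)}\,\mu(\tau) \leq 2\norm{\mu}_{\infty}\,\abs{t-t'},
\end{align}
which incidentally shows that in the Bayesian dual the Lipschitz hypothesis on $\rho$ is not even needed when the prior is a bounded density. Second, the minimax case is only gestured at: there the prior is itself a dual variable, so constraint sampling alone does not suffice and you must additionally argue that restricting to discretely supported priors changes the optimum by little; this is where the Lipschitz continuity of $\rho$ genuinely re-enters, and it is the part of your sketch that would need the most work to match the paper's (primal-side) treatment.
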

We give the detailed statements of the above in Section~\ref{ssec:discretization_bouds_sip} of the supplementary material.
The optimal tolerance and sample complexity can in principle be computed by combining the above semi-definite program with binary search~\cite{oliveira_enhancement_2020}.

\subsection{%
Maximum-likelihood-inspired post-processing of the outcome of a fixed measurement
}
\label{sec:opt_fixed_measurement}

The general structure of the solutions to the convex problem of Proposition~\ref{prop:sdp_formulation} is unclear \emph{a priori}. 
Consequently, the optimal measurement for a given metrological problem might be either impractical or impossible to implement in an experiment. %
Here, we study the performance of strategies that naturally model strategies that deal with experimental restrictions on the possible measurements that can be performed.

As shown in Fig.~\ref{fig:post_processing_figure}, we consider an estimation strategy that begins by applying a fixed quantum measurement described by a
POVM $\{ M(\lambda) \}_\lambda$.  For instance, the POVM might represent a projective measurement
in a fixed basis. Subsequently, the procedure infers from the outcome $\lambda$
an estimate $\tau^*(\lambda)$ for the value
$t$ that is encoded in the measured state.
We focus on a particular post-processing strategy $\tau^*(\lambda)$ inspired by the maximum-likelihood estimation technique. 

\begin{figure}
    \centering
    \includegraphics{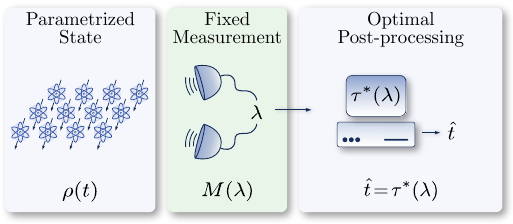}
    \caption{Metrology setting with fixed measurement as discussed in Section~\ref{sec:opt_fixed_measurement}. A parametrized state $\rho(t)$ is measured with a POVM $\{ M(\lambda) \}_{\lambda}$ fixed by, \emph{e.g.}, experimental feasibility restrictions. We show that the the optimal way of post-processing the measurement outcome $\lambda$ into predictions of the underlying parameter $t$ is given by a smoothed analogue of maximum a posterior estimation in the Bayesian case. A similar strategy also manages to provide a guarantee on the achievable success probability in the minimax case.}
    \label{fig:post_processing_figure}
\end{figure}

If we only fix the quantum-to-classical measurement, the question of the optimal post-processing of the measurement outcomes for such a fixed measurement of the quantum system arises. The authors of Ref.~\cite{Salmon} studied such a setting, but focusing on more fundamental point of view on when a quantum-to-classical measurement can be considered admissible in a statistical sense.%
A fixed measurement can be discrete, like a measurement in a specific basis, or continuous, like a pretty good measurement. We can model both cases in a unified way by assuming a POVM with continuous outcomes $\{ M(\lambda) \}_{\lambda}$. 
For a given set of states $\rho(t)$, we can define the \emph{likelihood function}
\begin{align}\label{eqn:def_likelihood}
    \Lambda(\lambda \pipe t) &\coloneqq \Tr[ \rho(t) M(\lambda)]
\end{align}
which captures the distribution over measurement outcomes for a fixed value of the underlying parameter $t$.
The \emph{joint distribution} of measurement outcomes and underlying parameters is then given by
\begin{align}
    (\lambda, t) \sim \mu(t) \Lambda(\lambda \pipe t).
\end{align}
If we denote the marginal distribution of the measurement outcomes $\lambda$ as
\begin{align}
    \nu(\lambda) \coloneqq \int \diff \mu(t) \, \Lambda(\lambda \pipe t),
\end{align}
then the joint distribution is related to the \emph{posterior distribution} of the underlying parameter given an observed measurement outcome $\lambda$ as
\begin{align}\label{eqn:def_posterior}
    P(t \pipe \lambda) \coloneqq \frac{ \mu(t) \Lambda(\lambda \pipe t)}{\int \diff \mu(t) \, \Lambda(\lambda \pipe t)}.
\end{align}
With this notation, we equivalently have that the joint distribution can be written as
\begin{align}
    (\lambda, t) \sim \nu(\lambda) P(t \pipe \lambda).
\end{align}

As the measurement $\{ M(\lambda)\}$ is fixed, the only thing left to optimize is the prediction we make when observing a certain measurement outcome $\lambda$. We denote this prediction as $\tau^{*}(\lambda)$ and refer to this function as a \emph{prediction strategy}. The POVM $Q(\tau)$ associated to the combination of fixed POVM $\{ M(\lambda) \}_{\lambda}$ together with a prediction strategy $\tau^{*}(\lambda)$ is given by collecting all POVM effects $M(\lambda)$ associated to a particular prediction $\tau$:
\begin{align}
    Q_{M, \tau^{*}}(\tau) = \int \diff \lambda \, \chi[\tau^{*}(\lambda) = \tau] M(\lambda).
\end{align}
With this POVM, it is straightforward to see that the Bayesian success probability takes the form
\begin{align}
    \eta(\delta, \mu, \rho, Q_{M, \tau^{*}}) &= \int \diff \mu(t) \, \diff \lambda \, w_{\delta}(t - \tau^{*}(\lambda)) \Lambda(\lambda \pipe t) \nonumber\\
    &= \int \diff \nu(\lambda) \, ( w_{\delta} *  P(\cdot \pipe \lambda))(\tau^{*}(\lambda))\label{eqn:succ_prob_pred_strategy}.
\end{align}
In the second line, we see that the posterior distribution of $t$ given the observed measurement outcomes is critical to the success probability. 

A look at Eq.~\eqref{eqn:succ_prob_pred_strategy} reveals that the optimal prediction strategy is to always predict the $\tau^{*}$ that maximizes the \emph{smoothed posterior probability} $w_{\delta} * P(\cdot \pipe \lambda)$. In accordance with the naming conventions of classical statistics, we refer to this prediction rule 
\begin{align}
    \tau^{*}_{\mathrm{SMAP}}(\lambda) &\coloneqq \argmax_{\tau} \mathstrut(w_{\delta} * P(\cdot \pipe \lambda))(\tau)
\end{align}
as the \emph{smoothed maximum a posteriori} (SMAP) estimate. This represents a smoothed version of maximum a posteriori estimation, the Bayesian generalization of maximum likelihood estimation.
The so achieved Bayesian success probability is consequently
\begin{align}
    \eta(\delta, \mu, \rho, Q_{M, \tau_{\mathrm{SMAP}}^{*}}) &= \int \diff \nu(\lambda) \, \max_{\tau} \mathstrut(w_{\delta} * P(\cdot \pipe \lambda))(\tau)\nonumber \\
&= \int \diff \nu(\lambda) \, \mathstrut \lVert w_{\delta} * P(\cdot \pipe \lambda) \rVert_{\infty}.\label{eqn:succ_prob_for_smap}
\end{align}
The relation to the function infinity norm allows us to derive some simple upper bounds on the success probability from Young's convolution inequality, as stated in Lemma~\ref{slem:young_conv_inequ} of the supplementary material. We will later make use of smoothed maximum a posteriori estimation to obtain lower bounds on the asymptotic error rate of quantum metrology protocols. 
To summarize:
\begin{theorem}[Optimal post-processing of a fixed measurement in the Bayesian setting]
    For a given tolerance $\delta$, state set $\rho(t)$ with prior $\mu(t)$ and a fixed POVM $\{ M(\lambda) \}_{\lambda}$, the optimal Bayesian success probability of an estimation
    strategy that measures $\{M(\lambda)\}$ and post-processes the result is achieved by the
    smoothed maximum a posteriori estimator:
    \begin{align}
        \sup_{\tau^*} \eta(\delta, \mu, \rho, Q_{M, \tau^{*}})
        = \eta(\delta, \mu, \rho, Q_{M, \tau_{\mathrm{SMAP}}^{*}}).
    \end{align}
\end{theorem}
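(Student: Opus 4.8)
The plan is to treat this as a pointwise maximization over the prediction function, exploiting that $\tau^*$ enters the success probability only through its value at each individual measurement outcome. I would start from the expression in Eq.~\eqref{eqn:succ_prob_pred_strategy}, which writes $\eta(\delta,\mu,\rho,Q_{M,\tau^*})$ as an integral over $\lambda$ against the nonnegative marginal measure $\nu$, with integrand $(w_\delta * P(\cdot\pipe\lambda))(\tau^*(\lambda))$. The key structural observation is that the outcomes $\lambda$ decouple: the integrand at a given $\lambda$ depends on the entire strategy $\tau^*$ only through the single number $\tau^*(\lambda)$, and the weight $\diff\nu(\lambda)\geq 0$ is fixed independently of $\tau^*$.

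From here the bound is immediate. For each $\lambda$ we have $(w_\delta * P(\cdot\pipe\lambda))(\tau^*(\lambda)) \leq \sup_\tau (w_\delta * P(\cdot\pipe\lambda))(\tau) = \lVert w_\delta * P(\cdot\pipe\lambda)\rVert_\infty$, and integrating against $\nu$ gives
\[
\sup_{\tau^*} \eta(\delta,\mu,\rho,Q_{M,\tau^*}) \leq \int \diff\nu(\lambda)\, \lVert w_\delta * P(\cdot\pipe\lambda)\rVert_\infty.
\]
To see that this upper bound is saturated, I would substitute the SMAP strategy: by its very definition $\tau^*_{\mathrm{SMAP}}(\lambda)$ attains the pointwise supremum for each $\lambda$, so plugging it into Eq.~\eqref{eqn:succ_prob_pred_strategy} reproduces exactly the right-hand side above, which is precisely Eq.~\eqref{eqn:succ_prob_for_smap}. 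Since no strategy can exceed the bound and SMAP meets it, the supremum is achieved by SMAP and the claimed equality follows.

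The optimization logic is elementary; the real work lies in making the SMAP estimator well-defined, and this is where I expect the only genuine obstacle. Two points need care. First, the pointwise supremum must be attained for $\argmax$ to be meaningful: here $(w_\delta * P(\cdot\pipe\lambda))(\tau)$ equals the posterior window mass $P([\tau-\delta,\tau+\delta]\pipe\lambda)$, and I would argue attainment from the upper semicontinuity of this sliding-window mass in $\tau$, if necessary restricting the parameter to a compact interval to guarantee a maximizer. Second, the map $\lambda \mapsto \tau^*_{\mathrm{SMAP}}(\lambda)$ must be measurable so that the induced POVM $Q_{M,\tau^*_{\mathrm{SMAP}}}$ is well-defined; this I would secure with a standard measurable-selection theorem applied to the $\argmax$ of the jointly measurable function $(\lambda,\tau)\mapsto (w_\delta * P(\cdot\pipe\lambda))(\tau)$. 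In degenerate cases where the supremum is not attained, the statement is to be read via a sequence of near-optimal prediction strategies approaching the bound.
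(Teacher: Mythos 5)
Your argument is correct and is essentially the paper's own proof: the paper likewise reads off from Eq.~\eqref{eqn:succ_prob_pred_strategy} that the outcomes decouple, bounds the integrand pointwise by $\lVert w_{\delta} * P(\cdot\pipe\lambda)\rVert_{\infty}$, and notes that SMAP attains this bound, yielding Eq.~\eqref{eqn:succ_prob_for_smap}. Your additional remarks on attainment of the supremum and measurable selection for the $\argmax$ address technicalities the paper leaves implicit, but the core optimization logic is identical.
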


Having established the optimal strategy in the Bayesian setting, we now turn to the minimax case. 
With the definitions introduced earlier, the minimax success probability associated to a prediction strategy $\tau^{*}$ is given by
\begin{align}
    \overline\eta(\delta, \rho, Q_{M, \tau^{*}}) &= \min_t \int \diff \lambda \, w_{\delta}(t - \tau^{*}(\lambda)) {\Lambda}(\lambda \pipe t).
\end{align}
The minimum over $t$ is \emph{outside} the integration over the different measurement outcomes $\lambda$, which means that, contrary to the Bayesian case, we cannot optimize the prediction for each $\lambda$ independently.
We can, however, still establish a lower bound on the minimax success probability for the optimal prediction strategy. To do so, we need to go via the minimax error probability:
\begin{align}
    1 &- \overline\eta(\delta, \rho, Q_{M, \tau^{*}}) \\
    &= 1 - \min_t \int \diff \lambda \, w_{\delta}(t - \tau^{*}(\lambda)) {\Lambda}(\lambda \pipe t) \nonumber \\
    &= \max_t \int \diff \lambda \, \Lambda(\lambda \pipe t) - \min_t \int \diff \lambda \, w_{\delta}(t - \tau^{*}(\lambda)) {\Lambda}(\lambda \pipe t) \nonumber\\
    &= \max_t \int \diff \lambda \, [ 1 - w_{\delta}(t-\tau^{*}(\lambda))] {\Lambda}(\lambda \pipe t)\nonumber.
\end{align}
Here, we have exploited that $\int \diff \lambda \, \Lambda(\lambda \pipe t) = 1$ for all $t$. We have thus reformulated the minimax error probability as a function of the complement of the window function. We can then exchange integration and maximization to obtain the upper bound on the error
\begin{align}
    1 - \overline\eta(\delta, \rho, Q_{M, \tau^{*}}) \leq\int \diff \lambda \,  \max_t \mathstrut [ 1 - w_{\delta}(t-\tau^{*}(\lambda))] {\Lambda}(\lambda \pipe t)
\end{align}
The above bound can now be optimized independently for all $\lambda$ which gives us a strategy analogous to the 
maximum a posterior estimation we have used in the Bayesian case. We will refer to as \emph{smoothed minimax complementary likelihood (SMCL)} estimate, 
\begin{align}
    \tau_{\mathrm{SMCL}}^{*}(\lambda) &\coloneqq \argmin_{\tau} \max_t \mathstrut [1-w_{\delta}(t - \tau)] {\Lambda}(\lambda \pipe t).
\end{align}
This strategy amounts to choosing $\tau$ such that the largest values of $\Lambda(\lambda \pipe t)$ are contained in the window centered around $\tau$.
Using this strategy then establishes the bound
\begin{align}
    1 &- \sup_{\tau^{*}} \overline\eta(\delta, \rho, Q_{M, \tau^{*}}) \\ &\leq 
    \int \diff \lambda \, \min_{\tau} \max_t \mathstrut [1-w(t-\tau)] \Lambda(\lambda \pipe t)\nonumber
\end{align}
on the minimax error probability that can be achieved using the optimal post-processing.

\section{Finite-sample quantum metrology as continuous hypothesis testing}\label{sec:metrology_as_hypothesis_testing}

One might imagine that, intuitively, determining a parameter encoded in a quantum system is intimately related to the task of distinguishing states for different values of the parameter. This reasoning has already been used in the standard approach to quantum metrology to obtain asymptotic bounds, for example through the 
\emph{quantum Ziv-Zakai bound}~\cite{tsang2012ziv-zakai}. These approaches, however, are still hampered by the limitations of the standard approach to quantum metrology. As we outline below, by choosing the success probability as a measure of metrological performance, the connection to the task of distinguishing quantum states is much more natural and fundamental.

The optimal probability of success for distinguishing a set of $N$ quantum states $\{ \rho_i \}_{i=1}^N$ with prior probabilities $\{ p_i \}_{i=1}^N$ -- a test between $N$ quantum hypotheses -- is given by\footnote{This is the definition of \emph{symmetric} hypothesis testing, where all states are treated equally and the average success probability is used as quantifier. In \emph{asymmetric} hypothesis testing, the states are treated independently.}
\begin{align}
    P_s^{*}(\{ p_i \rho_i \}_{i=1}^N) \coloneqq \sup_{\substack{\text{POVMs}\\\{ Q_i \}_{i=1}^N}  }  \sum_{i=1}^N p_i \Tr[ \rho_i Q_i ] .
\end{align}
Similarly, the optimal minimax success probability is defined as
\begin{align}
    \overline{P}_s^{*}(\{ \rho_i \}_{i=1}^N) \coloneqq \sup_{\substack{\text{POVMs}\\\{ Q_i \}_{i=1}^N}  } \min_{1 \leq i \leq N} \Tr[ \rho_i Q_i ].
\end{align}
Already, Yuen \emph{et al.}~\cite{yuen1975optimum} have studied an extension of this definition with an additional \emph{cost matrix} $C_{ij}$ and the associated \emph{optimal expected cost}
\begin{align}
    \inf_{\substack{\text{POVMs}\\\{ Q_i \}_{i=1}^N}  } \sum_{i=1}^N\sum_{j=1}^N  p_i  C_{ij} \Tr[ \rho_i Q_j]
\end{align}
and have shown that this constitutes a semi-infinite problem.
A look at Proposition~\ref{prop:sdp_formulation} shows that our notion of Bayesian success probability can be understood as a continuous version of this definition in which the complement of the window function, $1 - w_{\delta}$, takes the role of the cost function. 

Because of this close resemblance of metrological problems to generalized multi-hypothesis testing problems, it is unsurprising that every quantum multi-hypothesis testing problem can be written as a particular metrology problem with suitably chosen prior distribution and tolerance (see Section~\ref{ssec:mht_as_metrology} of the supplementary material). 
We, however, also establish a result in the other direction that bounds the optimal Bayesian and minimax success probability for a metrological problem through multi-hypothesis testing.
It is intuitively clear that being able to determine a parameter $t$ to a precision $\delta$ means that we must be able to sufficiently well distinguish between states that are at least $2\delta$ apart. We make this reasoning rigorous in the below theorem.
\begin{theorem}[Hypothesis testing bound]\label{thm:succ_prob_upper_bound_mht_delta_window}
For a given tolerance $\delta$, fix any set $\calS = \{(\lambda, s)\}$ of prior probabilities $\lambda \in [0,1]$ and shifts $s \in \bbR$ such that for all distinct $s, s' \in \calS$ we have that $|s-s'| > 2\delta$ and $\sum_{\lambda \in \calS} \lambda = 1$. Then, for a state set $\rho(t)$ with prior $\mu(t)$ we have the upper bound
\begin{align}
    \eta^{*}(\delta, \mu, \rho) &\leq \int \diff t \, P_s^{*}(\{ \lambda \, \mu(t+s) \rho(t+s)\}_{(\lambda, s) \in \calS}).
\end{align}
Optimizing over the prior probabilities $\lambda$ then yields the analogous upper bound in the minimax setting
\begin{align}
    \overline{\eta}^{*}(\delta, \rho) &\leq \inf_t \overline{P}_s^{*}(\{  \rho(t+s)\}_{s \in \calS}).
\end{align}
\end{theorem}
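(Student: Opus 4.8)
The plan is to convert any metrology POVM into a multi-hypothesis-testing POVM by \emph{coarse-graining} its outcomes into the windows of size $\delta$ sitting around the shifted parameter values $t+s$, $(\lambda,s)\in\calS$; the separation condition $|s-s'|>2\delta$ is precisely what guarantees that these windows are pairwise disjoint, so that the coarse-grained effects form a sub-POVM and no outcome is counted twice. Concretely, for a fixed POVM $Q(\tau)$ and a fixed $t$ I would collect, for each $s\in\calS$, all outcomes lying in the $\delta$-window of $t+s$ into the single effect
\begin{align}
    Q_s^{(t)}\coloneqq\int\diff\tau\,w_{\delta}(t+s-\tau)\,Q(\tau).
\end{align}
Since $w_{\delta}(t+s-\cdot)$ is the indicator of the interval $[t+s-\delta,\,t+s+\delta]$ of length $2\delta$, the strict separation $|s-s'|>2\delta$ forces these indicators to have disjoint support, whence $\sum_{(\lambda,s)\in\calS}w_{\delta}(t+s-\tau)\leq 1$ and therefore $\sum_{(\lambda,s)\in\calS}Q_s^{(t)}\leq\bbI$. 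Completing this sub-POVM to an admissible $|\calS|$-outcome POVM by absorbing the defect $\bbI-\sum_s Q_s^{(t)}\geq 0$ into any single outcome only increases each $\Tr[\rho(t+s)\,Q_s^{(t)}]$, so every success probability evaluated with $\{Q_s^{(t)}\}$ is a lower bound for one attainable by a genuine POVM, and is thus dominated by $P_s^*$ (respectively $\overline{P}_s^*$).

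For the Bayesian bound I would start from Definition~\ref{def:success_probability_no_optimization}, insert the resolution $\sum_{(\lambda,s)\in\calS}\lambda=1$, and in the $s$-th summand substitute $t\coloneqq t'-s$ (so that $t'=t+s$) to re-center the integration on the window of $t+s$. Exchanging the finite sum with the integral then rewrites the success probability as
\begin{align}
    \eta(\delta,\mu,\rho,Q)=\int\diff t\sum_{(\lambda,s)\in\calS}\lambda\,\mu(t+s)\,\Tr[\rho(t+s)\,Q_s^{(t)}].
\end{align}
For each fixed $t$ the summand is exactly the success probability of discriminating the weighted states $\{\lambda\,\mu(t+s)\,\rho(t+s)\}_{(\lambda,s)\in\calS}$ with the (completed) POVM $\{Q_s^{(t)}\}$, and hence is at most $P_s^*(\{\lambda\,\mu(t+s)\,\rho(t+s)\})$; integrating over $t$ and taking the supremum over $Q$ yields the claim.

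For the minimax bound I would instead exploit that an optimal minimax $Q$ succeeds uniformly in the parameter, i.e.\ $\int\diff\tau\,w_{\delta}(t-\tau)\Tr[\rho(t)Q(\tau)]\geq\overline{\eta}(\delta,\rho,Q)$ for every $t$. Fixing an arbitrary $t$ and specializing this inequality to the points $t+s$ yields $\Tr[\rho(t+s)\,Q_s^{(t)}]\geq\overline{\eta}(\delta,\rho,Q)$ simultaneously for all $s\in\calS$, so the completed POVM witnesses $\overline{P}_s^*(\{\rho(t+s)\}_{s\in\calS})\geq\overline{\eta}(\delta,\rho,Q)$; since this holds for every $t$, taking the infimum over $t$ and then the supremum over $Q$ delivers the stated inequality. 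This is also the precise sense in which ``optimizing over the prior probabilities $\lambda$'' produces the minimax statement, via the hypothesis-testing duality $\overline{P}_s^*(\{\rho_i\})=\inf_{\{p_i\}}P_s^*(\{p_i\rho_i\})$.

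The routine parts are the change of variables and the sub-POVM completion. I expect the only delicate points to be (i) the disjointness of the windows at their boundaries, which is controlled by the strict inequality $|s-s'|>2\delta$ so that any overlap is confined to a null set, and (ii) the measurability and integrability bookkeeping for continuous-outcome POVMs, including the justification for exchanging the finite sum with the $t$-integral. The main conceptual obstacle will be making the minimax direction airtight: ensuring that the uniform lower bound survives the completion step, and that the interchange of $\inf_t$ with $\sup_Q$ is invoked only in the harmless direction $\sup_Q\inf_t\leq\inf_t\sup_Q$.
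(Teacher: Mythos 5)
Your proposal is correct and follows essentially the same route as the paper's proof: you build the coarse-grained effects $Q_s^{(t)}=(w_\delta*Q)(t+s)$, use the strict separation $|s-s'|>2\delta$ to show $\sum_s Q_s^{(t)}\leq\bbI$, and then recognize the shifted integrand as a multi-hypothesis-testing success probability (the paper phrases this via a constant $K$ measuring window overlap, with $K=1$ here, and treats the sub-normalized collection directly as a feasible candidate rather than completing it). Your minimax step via the uniform guarantee $\Tr[\rho(t+s)Q_s^{(t)}]\geq\overline{\eta}(\delta,\rho,Q)$ for all $s$ is a slightly more direct packaging of the paper's exchange $\sum_\lambda\inf_t\leq\inf_t\sum_\lambda$ followed by optimization over the priors, but it is the same underlying idea.
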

We established that a metrological problem is at least as hard as determining whether an adversary has manipulated the clock used for the experiment by shifting its time by one of the values $s$ in the set $\calS$ with probability $\lambda$. The tolerance $\delta$ gives us a lower bound on the distance between the different shifts and therefore works in our favor. 
The proof of the theorem is presented in Section~\ref{ssec:upper_bound_state_discr} of the supplementary material. It also generalizes to parametrized quantum channels as discussed in Section~\ref{sec:parametrized_quantum_channels}.
\begin{figure}
    \centering
    \includegraphics{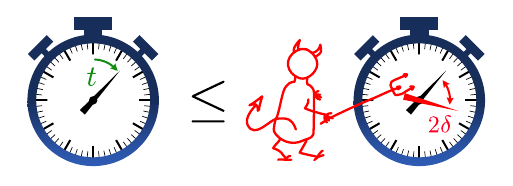}
    \caption{A visualization of Theorem~\ref{thm:succ_prob_upper_bound_mht_delta_window}. Telling the time with a certain tolerance $\delta$ is at least as hard as deducing if an adversary manipulated the clock, under the guarantee that the manipulation is not smaller than twice the tolerance $\delta$.}
    \label{fig:devil}
\end{figure} 

We note that application of Theorem~\ref{thm:succ_prob_upper_bound_mht_delta_window} is especially useful analytically when restricting to two shifts, because then the analytic expression for the success probability of binary hypothesis testing due to Helstrom~\cite{helstrom1969quantum} can be used. Looking at the minimax case with two shifts, we further obtain what can be considered as an analogue of Le Cam's two point method~\cite{yu_assouad_1997}:
\begin{corollary}[Two-point method]
\label{thm:success-probability-upper-bound-via-two-point-method}
    For a given tolerance $\delta$ and state set $\rho(t)$, we have the upper bound on the minimax success probability
    \begin{align}
        \overline{\eta}^{*}(\delta, \rho) &\leq \inf_{|t-t'| > 2\delta} \overline{P}_s^{*}(\rho(t), \rho(t')).
    \end{align}
\end{corollary}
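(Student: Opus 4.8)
The plan is to obtain the corollary as the two-shift specialization of the minimax bound in Theorem~\ref{thm:succ_prob_upper_bound_mht_delta_window}. That theorem holds for every admissible shift set $\calS$, so in particular it holds for every set consisting of exactly two shifts, and I can afterward take the infimum over all such choices to get the tightest resulting bound.

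Concretely, I would fix any pair $t, t'$ with $|t - t'| > 2\delta$ and build a valid two-element set $\calS$ whose shifts are symmetric about the midpoint $c = (t+t')/2$: take $s_1 = (t - t')/2$ and $s_2 = (t'-t)/2$, with arbitrary prior weights summing to one. These satisfy $|s_1 - s_2| = |t - t'| > 2\delta$, so $\calS$ is admissible. In the minimax statement the prior weights $\lambda$ are optimized away, so the right-hand side depends only on the shifts; this is why only the separation of the shifts, and not their weights, matters here.

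Next I would apply the minimax bound of Theorem~\ref{thm:succ_prob_upper_bound_mht_delta_window} to this $\calS$, giving
\begin{align}
    \overline{\eta}^{*}(\delta, \rho) \leq \inf_{t_0} \overline{P}_s^{*}(\rho(t_0 + s_1), \rho(t_0 + s_2)).
\end{align}
Since an infimum is bounded above by the value at any particular point, I would evaluate at $t_0 = c$, where $c + s_1 = t$ and $c + s_2 = t'$, to conclude $\overline{\eta}^{*}(\delta, \rho) \leq \overline{P}_s^{*}(\rho(t), \rho(t'))$ for this fixed pair. Because the pair $t, t'$ with $|t-t'| > 2\delta$ was arbitrary, taking the infimum over all such pairs on the right-hand side yields the claimed inequality.

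I expect no genuine obstacle here: the content is entirely inherited from Theorem~\ref{thm:succ_prob_upper_bound_mht_delta_window}, and the argument is a change-of-variables bookkeeping step translating \emph{two shifts about a common center, minimized over the center} into \emph{an arbitrary pair of parameters separated by more than $2\delta$}. The only point requiring a little care is the direction of the inequalities — that each admissible $\calS$ furnishes a valid upper bound, so passing to the infimum over choices of $\calS$ (equivalently, over pairs) still preserves the bound — together with the observation that the minimax binary-hypothesis-testing success probability $\overline{P}_s^{*}$ is prior-independent, which is precisely what lets the two-shift case reduce to the unweighted two-state distinguishing problem appearing in the statement.
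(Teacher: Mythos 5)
Your proposal is correct and follows essentially the same route as the paper: Corollary~\ref{thm:success-probability-upper-bound-via-two-point-method} is obtained there as the immediate specialization of the minimax bound in Theorem~\ref{thm:succ_prob_upper_bound_mht_delta_window} to a two-element shift set, exactly as you do. Your explicit change-of-variables bookkeeping (symmetric shifts about the midpoint, evaluating the infimum over the center at that midpoint, then taking the infimum over pairs) and your observation that the prior weights are optimized away in the minimax statement are both accurate and fill in the details the paper leaves implicit.
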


The above result, together with the Fuchs-van-de-Graaf inequalities allows us to further deduce a relation to the quantum fidelity 
\begin{align}\label{eqn:def_fidelity}
    F(\rho, \sigma) \coloneqq \Tr[ (\sigma^{1/2} \rho \sigma^{1/2})^{1/2}]
\end{align}
which we will use later to give a concise sample complexity bound, as shown in Section~\ref{ssec:upper_bound_from_mht} of the supplementary material.
\begin{corollary}\label{corr:two_point_fidelity_bound}
For a given tolerance $\delta$ and state set $\rho(t)$, we have the minimax error probability lower bound
\begin{align}
    1 - \overline{\eta}^{*}(\delta, \rho) &\geq \frac{1}{4} \sup_{|t-t'| > 2\delta} F(\rho(t), \rho(t'))^2.
\end{align}
\end{corollary}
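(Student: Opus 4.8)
The plan is to reduce everything to the binary case already isolated in Corollary~\ref{thm:success-probability-upper-bound-via-two-point-method} and then convert the resulting two-state minimax success probability into a fidelity bound using the Fuchs--van-de-Graaf inequalities. Throughout, write $T(t,t') \coloneqq \tfrac{1}{2}\lVert \rho(t)-\rho(t')\rVert_1$ for the trace distance and $F(t,t') \coloneqq F(\rho(t),\rho(t'))$.

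First I would establish the per-pair inequality $\overline{P}_s^{*}(\rho(t),\rho(t')) \le 1 - \tfrac14 F(t,t')^2$ for every pair with $|t-t'|>2\delta$. The key observation is that the minimax success probability of discriminating two states never exceeds the Bayesian success probability with the uniform prior: applying $\min(a,b)\le \tfrac{1}{2}(a+b)$ pointwise to the two conditional success probabilities $a=\Tr[\rho(t)Q_0]$, $b=\Tr[\rho(t')Q_1]$ and then taking the supremum over POVMs gives $\overline{P}_s^{*}(\rho(t),\rho(t')) \le P_s^{*}(\{\tfrac12\rho(t),\tfrac12\rho(t')\})$. By Helstrom's theorem~\cite{helstrom1969quantum} the right-hand side equals $\tfrac{1}{2}(1+T(t,t'))$, so that $1-\overline{P}_s^{*}(\rho(t),\rho(t')) \ge \tfrac{1}{2}(1-T(t,t'))$.

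Next I would invoke the Fuchs--van-de-Graaf inequality $T(t,t') \le \sqrt{1-F(t,t')^2}$ together with the elementary estimate $1-\sqrt{1-x}\ge x/2$ valid for $x\in[0,1]$ (which follows from $(1-x/2)^2 = 1-x+x^2/4 \ge 1-x$, hence $1-x/2 \ge \sqrt{1-x}$). Combining these gives $1 - T(t,t') \ge 1-\sqrt{1-F(t,t')^2} \ge \tfrac{1}{2}F(t,t')^2$, and therefore $1-\overline{P}_s^{*}(\rho(t),\rho(t')) \ge \tfrac14 F(t,t')^2$, equivalently $\overline{P}_s^{*}(\rho(t),\rho(t')) \le 1 - \tfrac14 F(t,t')^2$.

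Finally I would feed this into the two-point method. For every admissible pair with $|t-t'|>2\delta$, Corollary~\ref{thm:success-probability-upper-bound-via-two-point-method} gives $\overline{\eta}^{*}(\delta,\rho) \le \overline{P}_s^{*}(\rho(t),\rho(t')) \le 1-\tfrac14 F(t,t')^2$, so that $1-\overline{\eta}^{*}(\delta,\rho) \ge \tfrac14 F(t,t')^2$; taking the supremum over all such pairs on the right-hand side yields the claim. The only real subtlety is getting the directions right --- using the uniform-prior Bayesian quantity as an \emph{upper} bound on the minimax success probability, and selecting the correct side of the Fuchs--van-de-Graaf inequalities --- after which the argument is essentially a one-line chaining of established facts, so I do not anticipate any serious obstacle.
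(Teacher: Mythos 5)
Your proof is correct, and its overall architecture coincides with the paper's: reduce to a binary discrimination problem via the two-point method (Corollary~\ref{thm:success-probability-upper-bound-via-two-point-method}), pass from the minimax to the uniform-prior Bayesian quantity, and then lower-bound the resulting error by the squared fidelity. The only place you diverge is in how that last fidelity step is obtained. The paper proves a standalone lemma (Theorem~\ref{sthm:hyp_test_lower_bound}, a quantum Bretagnolle--Huber-type inequality) stating $P_e^{*}(\lambda\rho,(1-\lambda)\sigma)\geq\lambda(1-\lambda)F(\rho,\sigma)^2$ for an \emph{arbitrary} prior $\lambda$, by applying the data-processing inequality for the fidelity to the measure-and-prepare channel that maps $\lambda\rho\oplus(1-\lambda)\sigma$ to $P_s^{*}\oplus P_e^{*}$; specializing to $\lambda=1/2$ gives the factor $1/4$. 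You instead use Helstrom's explicit formula $P_s^{*}=\tfrac12(1+T)$ together with the Fuchs--van-de-Graaf bound $T\leq\sqrt{1-F^2}$ and the elementary estimate $1-\sqrt{1-x}\geq x/2$, all of which are correctly applied (note the paper's $F$ is the square-root fidelity, so the Fuchs--van-de-Graaf inequality holds in the form you use). Your route is more elementary and, amusingly, matches the sketch the main text advertises (``together with the Fuchs--van-de-Graaf inequalities'') more closely than the supplementary material's actual proof does; the paper's route buys the slightly more general prior-dependent constant $\lambda(1-\lambda)$, which is not needed for this corollary. Both arguments land on the same constant $1/4$, and the final step of taking the supremum over pairs with $|t-t'|>2\delta$ is handled identically.
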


We can use similar reasoning as in the derivation of Theorem~\ref{thm:succ_prob_upper_bound_mht_delta_window} to obtain a lower bound that makes use of \emph{asymmetric} hypothesis testing. The resulting bounds are presented in Section~\ref{ssec:upper_bound_succ_prob_asym_ht} of the supplementary material.

We explore other directions that yield metrology bounds of different forms in Section~\ref{ssec:further_succ_prob_bounds} of the supplementary material. Along the way, we present some bounds for quantum multi-hypothesis testing in Section~\ref{ssec:bounds_for_quantum_mht} of the supplementary material.

\section{Asymptotic behavior of the success probability in the \texorpdfstring{i.i.d.\@}{i.i.d.} regime}
\label{sec:asymptotics_succ_prob}
In this section, we consider our framework in the limit where many
independent and identically distributed (i.i.d.\@)
copies of the state are available.  We aim to understand this
asymptotic limit of our framework in order to connect our finite-sample approach
to standard approaches in metrology and information theory where many copies
of the state are assumed to be available.
We exploit
Theorem~\ref{thm:succ_prob_upper_bound_mht_delta_window} %
to compute an upper bound on the asymptotic rate with which both the Bayesian and minimax success probability approach one.
Formally, we define the asymptotic error rate as
\begin{align}\label{eqn:def_asymptotic_rate}
    R^{*}(\delta, \mu, \rho) &\coloneqq \lim_{n \to \infty}  -\frac{1}{n} \log \left( 1 -\eta^{*}(\delta, \mu, \rho^{\otimes n}) \right),
\end{align}
when the limit exists. The minimax rate $\overline{R}^{*}(\delta, \rho)$ is defined analogously.  

We relate the asymptotic rate of a metrological problem to the asymptotic rate of hypothesis testing~\cite{audenaert2008asymptotic,li2016discriminating}, which is given in terms of the \emph{Chernoff divergence}
\begin{align}
    C(\rho, \sigma) \coloneqq -\inf_{0 \leq s \leq 1} \log  \Tr[ \rho^s \sigma^{1-s} ].
\end{align}
By analogy to multi-hypothesis testing, where the asymptotic rate is given by the smallest Chernoff divergence between two states that are to be discriminated, we obtain the following upper bound:
\begin{theorem}[Upper bound on asymptotic rate]\label{thm:rate_upper_bound_delta}
For a given tolerance $\delta$ and a set of states $\rho(t)$, possibly with prior $\mu(t)$, the Bayesian and the minimax rate obey the upper bounds
\begin{align}
    \overline{R}^{*}(\delta, \rho) \leq R^{*}(\delta, \mu, \rho) \leq \inf_{|t - t'| > 2\delta} C(\rho(t), \rho(t')),
\end{align}
where the optimization is over time values that have non-vanishing support in the possible priors.
\end{theorem}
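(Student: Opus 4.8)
The plan is to prove the two inequalities in turn. The left inequality $\overline R^*\le R^*$ is a soft consequence of the minimax--Bayesian relationship, whereas the right inequality is the main content and rests on combining the hypothesis-testing reduction of Theorem~\ref{thm:succ_prob_upper_bound_mht_delta_window} with the converse part of the quantum Chernoff bound.

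For the left inequality I would first observe that for any state family $\sigma(t)$ and any prior $\mu$ one has $\overline\eta^*(\delta,\sigma)\le\eta^*(\delta,\mu,\sigma)$: for each fixed POVM $Q$ the worst-case-over-$t$ success probability is bounded above by its $\mu$-average, and taking $\sup_Q$ on both sides preserves the inequality. Applying this with $\sigma=\rho^{\otimes n}$ gives $1-\overline\eta^*(\delta,\rho^{\otimes n})\ge 1-\eta^*(\delta,\mu,\rho^{\otimes n})$; since $x\mapsto-\tfrac1n\log x$ is decreasing, this yields $-\tfrac1n\log(1-\overline\eta^*(\delta,\rho^{\otimes n}))\le-\tfrac1n\log(1-\eta^*(\delta,\mu,\rho^{\otimes n}))$, and letting $n\to\infty$ gives $\overline R^*(\delta,\rho)\le R^*(\delta,\mu,\rho)$.

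For the right inequality, fix any admissible pair $t_0,t_0'$ with $|t_0-t_0'|>2\delta$ lying in the support of $\mu$, and set $r=t_0'-t_0$. I would apply Theorem~\ref{thm:succ_prob_upper_bound_mht_delta_window} to the $n$-copy family $t\mapsto\rho(t)^{\otimes n}$ with the two-element shift set $\calS=\{(\tfrac12,0),(\tfrac12,r)\}$, which is admissible since $|0-r|>2\delta$. Because the Lebesgue integral is translation invariant, the total weight $\int\diff t\,[\tfrac12\mu(t)+\tfrac12\mu(t+r)]$ equals one, so the theorem rearranges into the error-probability lower bound
\[
    1-\eta^*(\delta,\mu,\rho^{\otimes n})\ \ge\ \int \diff t\,P_{\mathrm{err}}(t,n),
\]
where $P_{\mathrm{err}}(t,n)$ denotes the optimal error of the binary test discriminating $\rho(t)^{\otimes n}$ from $\rho(t+r)^{\otimes n}$ with the sub-normalized, $n$-independent weights $\tfrac12\mu(t)$ and $\tfrac12\mu(t+r)$.

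To extract the rate I would restrict the integral to a short interval $J$ around $t_0$ and invoke the quantum Chernoff theorem~\cite{audenaert2008asymptotic}: for each fixed $t$ with $\mu(t),\mu(t+r)>0$ the binary error obeys $\lim_n-\tfrac1n\log P_{\mathrm{err}}(t,n)=C(\rho(t),\rho(t+r))$, and its converse half yields $P_{\mathrm{err}}(t,n)\ge e^{-n(C(\rho(t),\rho(t+r))+\epsilon)}$ for all large $n$. Using continuity of $t\mapsto C(\rho(t),\rho(t+r))$ and positivity of $\mu$ on $J$, this integrates to $\int_J\diff t\,P_{\mathrm{err}}(t,n)\ge |J|\,e^{-n(C(\rho(t_0),\rho(t_0'))+2\epsilon)}$ for large $n$, giving $R^*(\delta,\mu,\rho)\le C(\rho(t_0),\rho(t_0'))+2\epsilon$; letting $\epsilon\to0$ and infimizing over admissible pairs delivers the claim, while the left inequality transfers the same bound to $\overline R^*$. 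The main obstacle I anticipate is upgrading the pointwise Chernoff converse to a bound that is uniform over $J$, so that a single threshold controls all $t\in J$ simultaneously; this should follow from continuity of $t\mapsto\rho(t)$ together with the non-vanishing-support hypothesis guaranteeing $\mu(t),\mu(t+r)>0$ throughout $J$. Secondary technical points are confirming that the $n$-independent sub-normalized weights leave the Chernoff exponent unchanged (altering only prefactors) and handling existence of the limit defining $R^*$ by passing to $\limsup$ where necessary.
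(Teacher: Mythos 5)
Your proposal is correct and follows essentially the same route as the paper's proof: the two-shift application of Theorem~\ref{thm:succ_prob_upper_bound_mht_delta_window} with equal weights $\tfrac12$, the rearrangement into an integrated binary error probability, the Nussbaum--Szko{\l}a converse with the $n$-independent prior factors absorbed into the prefactor, and the extraction of the exponent by lower-bounding the integral on a set where the Chernoff divergence is near its infimum (the paper packages this last step as a Laplace-principle lemma phrased via the essential infimum with respect to $\min\{\mu(t),\mu(t+s)\}$, which is the sublevel-set version of your interval-$J$ argument). The uniformity caveat you flag is handled in the paper by the same $\epsilon$-slack device you describe, so there is no substantive difference.
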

We emphasize that for the above statement and the ones are to follow in this section, we implicitly assumed that the support of $\mu$ encompasses the whole admissible set of values for the parameter $t$ that is used to compute the minimax success probability. Restrictions to smaller admissible sets of parameters are possible and the results carry over straightforwardly.
The above theorem establishes that a metrological problem is asymptotically at most as hard as the hardest binary hypothesis testing problem of two states whose associated time values are at least $2\delta$ apart. The proof of the theorem uses Theorem~\ref{thm:succ_prob_upper_bound_mht_delta_window} together with Laplace's principle and is presented in Section~\ref{ssec:upper_bound_asymptotic_rate_mht} of the supplementary material.

In quantum multi-hypothesis testing, the smallest pairwise rate is actually achievable~\cite{li2016discriminating} and, because of the results below on the commuting case, we believe this to also be true in the case of quantum metrology, at least under suitable regularity assumptions. While we were not yet able to prove this general statement, we succeeded in establishing a lower bound that guarantees an asymptotic rate equal to the best pairwise hypothesis testing rate for a fixed measurement sequence. Formally, we consider a fixed sequence of measurements $\{ M^{(n)} \}$ for $n \in \bbN$ with outcomes $\lambda$ (compare to Section~\ref{sec:opt_fixed_measurement}), and denote the channel that maps states to their output distributions over $\lambda$ as
\begin{align}
    \calM^{(n)}[\rho] = \int \diff \lambda \, |\lambda \rangle\!\langle \lambda | \, \Tr[ \rho M^{(n)}(\lambda)].
\end{align}
This sequence achieves the following rate for binary state discrimination:
\begin{align}
\begin{split}
    &R(\rho, \sigma, \{ M^{(n)} \}_{n \in \bbN})  \coloneqq \\
    &\quad \lim_{n\to \infty} - \frac{1}{n} \log \left(1 - P_s^{*}(\calM^{(n)}[\rho^{\otimes n}], \calM^{(n)}[\sigma^{\otimes n}])\right).
    \end{split}
\end{align}
With this notation in place, we have the following theorem, which adds the additional assumption that the set of states is continuous:
\begin{theorem}[Asymptotic rate with fixed measurement scheme]\label{thm:rate_lower_bound_delta_window}
For a given tolerance $\delta > 0$, a continuous set of states $\rho(t)$, possibly with prior $\mu(t)$ of full support, and any fixed measurement sequence $\{ M^{(n)}\}$ for $n\in\bbN$, we have 
\begin{align}
    R^{*}(\delta, \mu, \rho) \geq \overline{R}^{*}(\delta, \rho) \geq \inf_{|t-t'|>2\delta} R(\rho(t), \rho(t'), \{ M^{(n)} \}).
\end{align}
\end{theorem}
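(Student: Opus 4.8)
The plan is to prove the two inequalities separately. The left inequality $R^{*}(\delta,\mu,\rho)\ge\overline{R}^{*}(\delta,\rho)$ is immediate: for any fixed POVM the Bayesian success probability is an average of the pointwise success probabilities and hence dominates their infimum, so $\eta^{*}(\delta,\mu,\rho^{\otimes n})\ge\overline{\eta}^{*}(\delta,\rho^{\otimes n})$ for every $n$; applying $-\tfrac1n\log(1-\cdot)$ and letting $n\to\infty$ flips this into the claimed ordering of rates. The substance is therefore the right inequality, an \emph{achievability} statement: I must exhibit one post-processing of the fixed measurement whose minimax error decays at rate at least $R_{\min}:=\inf_{|t-t'|>2\delta}R(\rho(t),\rho(t'),\{M^{(n)}\})$, uniformly over the true parameter.

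The key device is to convert the estimation problem, once the fixed measurement has produced the classical likelihood $\Lambda^{(n)}(\lambda\pipe t)=\Tr[\rho^{\otimes n}(t)M^{(n)}(\lambda)]=\calM^{(n)}[\rho^{\otimes n}(t)](\lambda)$, into a sweep of \emph{binary} likelihood-ratio tests between hypotheses that are exactly $2\delta$ apart. For each candidate midpoint $s$ let $\phi_{s}(\lambda)$ be the optimal (threshold-one likelihood-ratio) test deciding between $\calM^{(n)}[\rho^{\otimes n}(s-\delta)]$ and $\calM^{(n)}[\rho^{\otimes n}(s+\delta)]$, writing $\phi_{s}=R$ when the data favor the right hypothesis $s+\delta$. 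I then define the estimator $\hat\tau(\lambda):=\inf\{s:\phi_{s}(\lambda)=L\}$, the smallest midpoint past which the data stop preferring larger parameter values. The factor of two is then built in: if $\hat\tau(\lambda)>t+\delta$ for the true value $t$, then by the definition of the infimum $\phi_{s}=R$ for all $s\le t+\delta$, in particular $\phi_{t+\delta}=R$, i.e. the data favor $t+2\delta$ over $t$. Hence the upper failure event is contained in $\{\Lambda^{(n)}(\lambda\pipe t+2\delta)\ge\Lambda^{(n)}(\lambda\pipe t)\}$, whose probability under the true state $\rho^{\otimes n}(t)$ is at most $2(1-P_{s}^{*}(\calM^{(n)}[\rho^{\otimes n}(t)],\calM^{(n)}[\rho^{\otimes n}(t+2\delta)]))$. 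Crucially, the true parameter $t$ is itself an endpoint of this $2\delta$-separated test, so no discretization of the truth is required and the pathology that $\rho^{\otimes n}(t)$ and $\rho^{\otimes n}(s)$ become perfectly distinguishable for fixed $t\neq s$ never enters.

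By the definition of $R(\rho(t),\rho(t+2\delta),\{M^{(n)}\})$ this one-sided error decays at that rate, and the continuity assumption on $t\mapsto\rho(t)$ makes $R(\rho(t),\rho(t'),\{M^{(n)}\})$ continuous in its endpoints, so that the boundary value at separation exactly $2\delta$ is bounded below by $R_{\min}=\inf_{|t-t'|>2\delta}R(\cdots)$. Treating the symmetric lower failure event $\{\hat\tau<t-\delta\}$ the same way, via the test of $t$ against $t-2\delta$, a union bound yields $\sup_{t}[\,1-\overline{\eta}(\delta,\rho^{\otimes n},Q_{M^{(n)},\hat\tau})\,]\le 4\,e^{-n(R_{\min}-o(1))}$; since $\overline{\eta}^{*}\ge\overline{\eta}(\cdots,Q_{M^{(n)},\hat\tau})$ this gives $\overline{R}^{*}(\delta,\rho)\ge R_{\min}$. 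This mirrors the companion upper bound of Theorem~\ref{thm:rate_upper_bound_delta}, and together the two bounds pin down the exact rate whenever the fixed measurement attains the Chernoff exponent.

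The main obstacle I anticipate is making the estimator and its error control \emph{uniform and two-sided} in full generality. The clean containment above settles the upper tail for any family, but the lower tail reduces transparently to a single $t$-versus-$(t-2\delta)$ test only when the tests behave monotonically in $s$ (a single-crossing / monotone-likelihood-ratio structure); without it one must either symmetrize the construction or select $\hat\tau$ by a minimum-distance rule over a net of the $2\delta$-window tests in the spirit of multi-hypothesis testing achievability~\cite{li2016discriminating}, arguing that every surviving confusion still pairs hypotheses more than $2\delta$ apart. The remaining points are routine but must be discharged with care: measurability and well-definedness of the continuum family $\{\phi_{s}\}$ and of $\hat\tau$ (guaranteed by continuity of $t\mapsto\rho(t)$), the harmless boundary cases where $t\pm2\delta$ leaves the domain of $\rho$ (such $t$ can fail only on one side and are never the minimax worst case), and absorbing the constant together with any sub-exponential net factors into the $o(n)$ correction so that the exponent is unaffected.
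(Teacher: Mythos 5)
Your left inequality and your upper-tail argument are fine, but the estimator $\hat\tau(\lambda)=\inf\{s:\phi_s(\lambda)=L\}$ does not control the lower tail, and this is a genuine gap rather than a technicality. The event $\{\hat\tau<t-\delta\}$ is a union over all $s<t-\delta$ of the events $\{\phi_s=L\}$, and for such $s$ the test $\phi_s$ compares $\Lambda^{(n)}(\cdot\pipe s-\delta)$ against $\Lambda^{(n)}(\cdot\pipe s+\delta)$ --- \emph{neither} of which is the true distribution $\Lambda^{(n)}(\cdot\pipe t)$. The probability under the truth that data favor $s-\delta$ over $s+\delta$ is therefore not an error probability of any hypothesis test anchored at $t$, and it need not be small: for a family in which $\rho(t)$ is close to $\rho(s-\delta)$ for some distant $s$ (near-periodic families such as the paper's phase-estimation example are the canonical offenders), one has $\Pr_t[\phi_s=L]\approx 1$ and $\hat\tau$ collapses to a far-away value with high probability. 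You name the needed hypothesis (monotone likelihood ratio / single crossing in $s$) yourself, but it is not available for general continuous quantum families, and the symmetrized variant $\hat\tau^{-}=\sup\{s:\phi_s=R\}$ only controls the opposite tail of a \emph{different} estimator; without monotonicity the two do not combine into one estimator with both tails controlled.

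The fallback you sketch --- a minimum-distance rule over a net of pairwise tests, arguing that every surviving confusion pairs hypotheses more than $2\delta$ apart --- is precisely what the paper's proof actually carries out, and it is the bulk of the work, not a routine appendix. The paper discretizes the interval into $N$ cells, uses the smoothed maximum a posteriori estimator, and proves the neighborhood-exclusion bound $P(i\pipe\lambda)\le\sum_{j\notin U(i)}\min\{P(i\pipe\lambda),P(j\pipe\lambda)\}$ for every index contributing to the error, which yields an $N^2$ multiplicative factor times the worst pairwise error over cells that cannot share a window (Propositions in Section~\ref{ssec:bin_hyp_testing}). Discretizing the truth then forces a further step you do not anticipate: the cell-averaged state $\rho^{\Delta,(n)}$ is a mixture of i.i.d.\ states rather than an i.i.d.\ state, and relating it back to $[\rho^{\Delta}]^{\otimes n}$ requires the $D_{\max}$ perturbation argument with the $\gamma$-regularization by the maximally mixed state, followed by the limits $\Delta\to 0$, $\gamma\to 0$. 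Your idea of keeping the true $t$ as an endpoint of a single $2\delta$-separated test is attractive exactly because it would avoid all of this, but as argued above it only delivers one tail. As written, the proposal identifies the correct reduction target but leaves the decisive step --- uniform two-sided control over the continuum of competing hypotheses --- unproven.
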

In the above theorem, the possible values of $t$ are constrained to the domain of $\rho(t)$ and priors that have full support on said domain. Priors with restricted support fulfill the theorem with $t$ and $t'$ contained in the restricted domain.
The proof uses a discretization argument together with the smoothed maximum a posteriori estimation technique introduced in Section~\ref{sec:opt_fixed_measurement} and is given in Section~\ref{ssec:bin_hyp_testing} of the supplementary material. In the case where an optimal measurement sequence is known, however, this lower bound already achieves the rate. This is particularly true in the commuting (\textit{i.e.}, classical) case:
\begin{corollary}[Asymptotic rate for commuting states]\label{corollary:asymptotic_rate_commuting}
    For a given tolerance $\delta > 0$, a continuous set of states $\rho(t)$ such that $[\rho(t), \rho(t')] = 0$ for all $t, t'$, possibly with prior $\mu(t)$, we have that
\begin{align}
    R^{*}(\delta, \mu, \rho) = \overline{R}^{*}(\delta, \rho) = \inf_{|t-t'|>2\delta} C(\rho(t), \rho(t')),
\end{align}
where the optimization is over time values that have non-vanishing support in the possible priors.
\end{corollary}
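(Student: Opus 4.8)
The plan is to obtain the corollary by sandwiching $R^{*}$ and $\overline{R}^{*}$ between the general upper bound already established and a matching lower bound that becomes tight precisely in the commuting case. The upper bound requires no further work: Theorem~\ref{thm:rate_upper_bound_delta} gives, with no commutativity assumption,
\begin{align}
    \overline{R}^{*}(\delta, \rho) \leq R^{*}(\delta, \mu, \rho) \leq \inf_{|t-t'|>2\delta} C(\rho(t), \rho(t')).
\end{align}
It therefore suffices to prove the reverse inequality $\overline{R}^{*}(\delta, \rho) \geq \inf_{|t-t'|>2\delta} C(\rho(t), \rho(t'))$, since this forces equality throughout the chain.

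For the lower bound I would invoke Theorem~\ref{thm:rate_lower_bound_delta_window} with a carefully chosen \emph{fixed} measurement sequence. Because the family $\{\rho(t)\}$ is mutually commuting, all its members are simultaneously diagonalizable in a single common eigenbasis $\{\ket{e_k}\}_k$, so that $\rho(t) = \sum_k p_k(t)\,\ketbra{e_k}{e_k}$ with each $(p_k(t))_k$ a probability distribution. I would fix once and for all the ``classicalizing'' measurement sequence $M^{(n)} = M^{\otimes n}$, where $M(k) = \ketbra{e_k}{e_k}$. Since every $\rho(t)$ is already diagonal in this basis, the induced channel maps $\rho(t)^{\otimes n}$ to the classical product distribution $p(t)^{\otimes n}$ with no loss of information, and this single sequence works simultaneously for every pair $(t,t')$.

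The key step is to identify the per-pair rate of this scheme with the Chernoff divergence, $R(\rho(t), \rho(t'), \{M^{(n)}\}) = C(\rho(t), \rho(t'))$. This rests on two facts. First, for commuting states the quantum Chernoff divergence collapses to its classical counterpart, $\Tr[\rho(t)^{s}\rho(t')^{1-s}] = \sum_k p_k(t)^{s}p_k(t')^{1-s}$, so $C(\rho(t),\rho(t'))$ equals the classical Chernoff divergence of $p(t)$ and $p(t')$. Second, after the measurement the discrimination is exactly classical binary hypothesis testing between the product distributions $p(t)^{\otimes n}$ and $p(t')^{\otimes n}$, and the classical Chernoff theorem states that $1 - P_s^{*}(p(t)^{\otimes n}, p(t')^{\otimes n})$ decays with precisely this exponent. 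Substituting into Theorem~\ref{thm:rate_lower_bound_delta_window} then gives
\begin{align}
    \overline{R}^{*}(\delta, \rho) \geq \inf_{|t-t'|>2\delta} R(\rho(t), \rho(t'), \{M^{(n)}\}) = \inf_{|t-t'|>2\delta} C(\rho(t), \rho(t')),
\end{align}
which closes the sandwich; in the Bayesian case the infimum is understood over parameter values in the support of $\mu$, exactly as in the two theorems invoked.

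I expect the main obstacle to be bookkeeping rather than a genuine difficulty. One must verify that a single common eigenbasis can be selected for the whole continuous family (handling eigenvalue degeneracies, and any infinite-dimensional subtleties if present), and that $M^{\otimes n}$ meets the regularity hypotheses of Theorem~\ref{thm:rate_lower_bound_delta_window}; the continuity of $t\mapsto\rho(t)$ and the full-support prior are inherited directly from the corollary's assumptions. The appeal to the classical Chernoff theorem is standard, but one should confirm that its achievability is uniform enough to survive the discretization argument underlying Theorem~\ref{thm:rate_lower_bound_delta_window}, which is exactly where the continuity assumption on $\rho(t)$ does its work.
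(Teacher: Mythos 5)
Your proposal is correct and follows essentially the same route as the paper: the corollary is obtained by sandwiching the rate between the Chernoff upper bound of Theorem~\ref{thm:rate_upper_bound_delta} and the fixed-measurement lower bound of Theorem~\ref{thm:rate_lower_bound_delta_window}, instantiated with the projective measurement in the common eigenbasis, for which the induced classical binary test achieves the classical Chernoff exponent that coincides with $C(\rho(t),\rho(t'))$ in the commuting case. Your flagged caveats (simultaneous diagonalizability of the continuous family and the full-support/restricted-support bookkeeping for the prior) are exactly the points the paper's hypotheses are designed to handle.
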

We believe that the strong connection we established here to hypothesis testing between multiple quantum states serves as a motivation to further explore the connection between quantum metrology and quantum information tasks. This should be a fruitful endeavor that allows researchers both with a background in quantum metrology as well as in quantum information processing to make an impact on quantum metrology.

\section{Relation to entropic quantities}\label{sec:entropic_quantities}
In the single-shot approach to quantum information theory~\cite{Tomamichel_book,khatri2020principles}, many quantum information processing tasks can be quantified through generalized notions of \emph{entropy}. In this section, we explore how our definitions relate to some of these concepts.

We start by giving an alternative definition of the optimal Bayesian success probability.
To do so, we define the \emph{conditional min-entropy} of a bipartite and positive semi-definite classical-quantum operator~\cite{koenig09operationalminmax}
\begin{align}
    H_{\min}(T \pipe S)_X \coloneqq -\inf_{\sigma_S} D_{\max}(X_{TS}\pipe \mathbb{I}_T \otimes \sigma_S),
\end{align}
where the \emph{max-relative entropy}~\cite{datta2009minmaxrelent} of two positive semi-definite operators $X$ and $Y$ is given by
\begin{align}
    D_{\max}(X \fatpipe Y) \coloneqq \log\mathstrut\lVert{Y^{-1/2}X  Y^{-1/2}}\rVert_{\infty}.
\end{align}
We then have the following corollary of the dual formulation of the optimal success probability in Proposition~\ref{prop:sdp_formulation}.
\begin{corollary}[Relation to conditional min-entropy]\label{corollary:relation_to_cond_min_ent}
For a given tolerance $\delta$ and a set of states $\rho(t)$ with prior distribution $\mu(t)$, define the classical-quantum operator
\begin{align}
    X_{TS} \coloneqq\int\diff t \, \ketbra{t}{t}_{T}\otimes (w_{\delta}\ast[\mu\cdot\rho])(t)_S,
\end{align}
where $T$ is the time register and $S$ the system register. Then
\begin{align}
    -\log\eta^{*}(\delta,\mu,\rho) &= H_{\min}(T\pipe S)_X.
\end{align}
\end{corollary}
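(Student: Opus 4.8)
The plan is to recognize both sides of the claimed identity as the optimal value of one and the same convex program, using the dual formulation of $\eta^{*}$ already provided by Proposition~\ref{prop:sdp_formulation}. Write $A(t)\coloneqq(w_{\delta}\ast[\mu\cdot\rho])(t)_S$ for the positive semi-definite operator attached to each $t$, so that $X_{TS}=\int\diff t\,\ketbra{t}{t}_T\otimes A(t)$. In this notation the dual program of Eq.~\eqref{eq:sdp_dual_formulation_Bayesian} reads
\begin{align}
    \eta^{*}(\delta,\mu,\rho)=\min_{Z\geq 0}\bigl\{\Tr[Z]\bigm|Z\geq A(t)\ \text{for all }t\bigr\}.
\end{align}

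First I would unfold the definition of the conditional min-entropy. Since $X_{TS}$ is block-diagonal in the classical register $T$, the operator inequality $X_{TS}\leq\lambda\,(\bbI_T\otimes\sigma_S)$ is equivalent to the family of pointwise inequalities $A(t)\leq\lambda\,\sigma_S$ for (almost) every $t$. Hence, for any density operator $\sigma_S$,
\begin{align}
    D_{\max}(X_{TS}\fatpipe\bbI_T\otimes\sigma_S)=\log\sup_t\bigl\lVert\sigma_S^{-1/2}A(t)\,\sigma_S^{-1/2}\bigr\rVert_{\infty},
\end{align}
and therefore, using monotonicity of the logarithm,
\begin{align}
    H_{\min}(T\pipe S)_X=-\log\Bigl(\inf_{\sigma_S}\sup_t\bigl\lVert\sigma_S^{-1/2}A(t)\,\sigma_S^{-1/2}\bigr\rVert_{\infty}\Bigr),
\end{align}
the infimum being over density operators $\sigma_S$.

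The core step is then to identify this inf-sup with the dual program value. In one direction, given a dual-feasible $Z$ one sets $\sigma_S=Z/\Tr[Z]$; from $A(t)\leq Z=\Tr[Z]\,\sigma_S$ one obtains $\lVert\sigma_S^{-1/2}A(t)\sigma_S^{-1/2}\rVert_{\infty}\leq\Tr[Z]$ for every $t$, which shows $\inf_{\sigma_S}\sup_t(\cdots)\leq\eta^{*}$. Conversely, given $\sigma_S$ with $\lambda\coloneqq\sup_t\lVert\sigma_S^{-1/2}A(t)\sigma_S^{-1/2}\rVert_{\infty}$, the operator $Z\coloneqq\lambda\,\sigma_S$ satisfies $Z\geq A(t)$ for all $t$ and has $\Tr[Z]=\lambda$, so it is dual-feasible and $\eta^{*}\leq\lambda$; taking the infimum over $\sigma_S$ gives the reverse inequality. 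Combining the two yields $\eta^{*}=\inf_{\sigma_S}\sup_t\lVert\sigma_S^{-1/2}A(t)\sigma_S^{-1/2}\rVert_{\infty}$, and hence $-\log\eta^{*}=H_{\min}(T\pipe S)_X$.

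The main obstacle I anticipate is the rigorous handling of the continuous register $T$: making sense of $\bbI_T$ and of $X_{TS}$ as a genuine continuously-indexed object, and justifying that the single inequality $X_{TS}\leq\lambda\,\bbI_T\otimes\sigma_S$ is equivalent to the pointwise conditions $A(t)\leq\lambda\,\sigma_S$ up to null sets. A secondary technical point is the treatment of rank-deficient $\sigma_S$: one should either restrict the infimum to full-rank states and argue by continuity that this leaves the value unchanged, or interpret $\sigma_S^{-1/2}$ as a generalized inverse together with the convention $D_{\max}(X\fatpipe Y)=+\infty$ whenever $\supp(X)\not\subseteq\supp(Y)$, noting that $A(t)\leq Z$ forces $\supp(A(t))\subseteq\supp(Z)$ so that the $\sigma_S$ constructed from a feasible $Z$ never violates the support condition. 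With these measure-theoretic and support caveats in place, the argument is a direct matching of two convex programs.
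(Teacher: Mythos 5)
Your proof is correct and follows essentially the same route as the paper: the paper likewise uses the block-diagonality of $X_{TS}$ to reduce $D_{\max}(X_{TS}\fatpipe\bbI_T\otimes\sigma_S)$ to $\sup_t D_{\max}(A(t)\fatpipe\sigma)$, and then identifies $\eta^{*}$ with $\inf_{\sigma}\sup_t\lVert\sigma^{-1/2}A(t)\sigma^{-1/2}\rVert_{\infty}$ via the change of variable $Y=x\sigma$ in the dual program (Lemma~\ref{lem:G_max_dual} and Corollary~\ref{cor:Bayesian_succ_max_ent_radius}), which is precisely your two-sided matching of feasible points. Your explicit treatment of the support/rank-deficiency caveat is if anything slightly more careful than the paper's, which simply restricts to full-rank $\sigma$.
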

Another angle on the same fact is given by defining the \emph{max-relative entropy radius}~\cite{audenaert2014upper} of a set of positive semi-definite operators $\calX$ as 
\begin{align}
    r_{\max}(\calX) \coloneqq \inf_{\sigma\vphantom{\calX}} \max_{x \in \calX} D_{\max}( x \fatpipe \sigma).
\end{align}
In this case, the optimal success probability can be understood as the max relative entropy radius of the set of states \enquote{smoothed} using the window function
\begin{align}
    \log \eta^{*}(\delta,\mu,\rho) &= r_{\max}( \{ (w_{\delta} * [\mu \cdot \rho])(t) \} ).
\end{align}
Both relations can be thought of continuous analogues of the known connection between the success probability of multi-hypothesis testing 
and the conditional min-entropy of classical-quantum states~\cite{koenig09operationalminmax,audenaert2014upper}. In addition to the various known operational meanings of the conditional min-entropy~\cite{PhDRenner2005,koenig09operationalminmax,dupuis2014decoupling,duan2016zeroerror,fang2020channelsimulation}, the relation of Corollary~\ref{corollary:relation_to_cond_min_ent} endows it with yet another operational meaning, this time in the context of quantum metrology. Notably, the operator $X_{TS}$ of Corollary~\ref{corollary:relation_to_cond_min_ent} is not normalized, unlike in the case of multi-hypothesis testing, due to the window function $w_{\delta}$. Another instance in which the conditional min-entropy of a non-normalized operator has an operational meaning is in the context of channel simulation~\cite{duan2016zeroerror,fang2020channelsimulation}. We provide the proofs of both relations in Section~\ref{ssec:entropy_measures} of the supplementary material.

Finally, we exploit the connection between quantum metrology and hypothesis testing
derived in Ref.~\cite{walter_heisenberg_2014} to give an alternative lower bound on the optimal
minimax success probability $\overline\eta^*$.  
The lower bound is expressed in terms of the
hypothesis testing relative entropy~\cite{Hiai1991CMP_proper,Dupuis2013_DH,datta2013smoothmax,Wang2012PRL_oneshot,Tomamichel2013_hierarchy,khatri2020principles} 
\begin{align}
    D_{\mathrm{h}}^{\eta}(\rho \, \| \,  \sigma) \coloneqq - \log \beta_{\mathrm{h}}^{\eta}(\rho \, \| \,  \sigma) ,
\end{align}
where the asymmetric hypothesis testing error $\beta_{\mathrm{h}}^{\eta}$ is defined as
\begin{align}\label{eqn:def_asym_hyp_test_error}
    \beta_{\mathrm{h}}^{\eta}(\rho \fatpipe \sigma) \coloneqq \inf_{0 \leq M \leq \bbI} \{ \Tr[ M \sigma ] \pipe \Tr[M \rho] \geq \eta \}.  
\end{align}
We have the following proposition, reminiscent of a comparable result in Ref.~\cite{walter_heisenberg_2014}.
\begin{proposition}[Lower bound on estimation tolerance in terms of the hypothesis testing relative entropy]
\label{prop:lower_bound_on_delta_via_hypothesis_testing}
For a set of states $\rho(t)$ and any $0 \leq \eta \leq \overline{\eta}^{*}(\delta, \rho)$, the minimax tolerance satisfies 
\begin{align}
    \overline{\delta}(\eta, \rho) \geq \frac{1}{2}\int \diff t \, \exp\left( -D_{\mathrm{h}}^{\eta}(\rho(t) \, \| \,  \sigma)\right) 
\end{align}
for any state $\sigma$.
\end{proposition}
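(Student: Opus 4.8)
The plan is to read off the bound directly from the definition of the optimal minimax tolerance, $\overline\delta(\eta,\rho)=\inf\{\delta'\geq0 : \overline\eta^*(\delta',\rho)\geq\eta\}$. Since $\inf S\geq c$ holds exactly when every element of $S$ is at least $c$, it suffices to prove that every tolerance $\delta'$ with $\overline\eta^*(\delta',\rho)\geq\eta$ already satisfies $\delta'\geq\tfrac12\int\diff t\,\beta_{\mathrm h}^\eta(\rho(t)\fatpipe\sigma)$; taking the infimum over such $\delta'$ then gives the claim, using $\exp(-D_{\mathrm h}^\eta(\rho(t)\fatpipe\sigma))=\beta_{\mathrm h}^\eta(\rho(t)\fatpipe\sigma)$ by definition. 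The feasibility hypothesis $\eta\leq\overline\eta^*(\delta,\rho)$ is needed only to ensure this set is nonempty, so that $\overline\delta(\eta,\rho)$ is finite and the statement is non-vacuous; otherwise it holds trivially. The overall structure mirrors the hypothesis-testing argument of Ref.~\cite{walter_heisenberg_2014}.

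Fix such a $\delta'$ and let $Q(\tau)$ be a POVM with $\overline\eta(\delta',\rho,Q)\geq\eta$. The key construction is to convert $Q$ into a family of binary tests indexed by $t$: define the acceptance operator $M_t\coloneqq(w_{\delta'}*Q)(t)=\int\diff\tau\,w_{\delta'}(t-\tau)\,Q(\tau)$. Because $Q(\tau)\geq0$ and $\int\diff\tau\,Q(\tau)=\bbI$, we have $0\leq M_t\leq\bbI$, so $M_t$ is an admissible two-outcome measurement. The minimax success condition states exactly that $\Tr[\rho(t)M_t]\geq\overline\eta(\delta',\rho,Q)\geq\eta$ for every $t$, so $M_t$ is feasible for the program defining $\beta_{\mathrm h}^\eta(\rho(t)\fatpipe\sigma)$, which yields the pointwise bound $\Tr[M_t\sigma]\geq\beta_{\mathrm h}^\eta(\rho(t)\fatpipe\sigma)$.

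It remains to integrate this over $t$. Using Fubini on the nonnegative integrand together with $\int\diff t\,w_{\delta'}(t-\tau)=2\delta'$, the left-hand side collapses:
\begin{align}
\int\diff t\,\Tr[M_t\sigma]
&=\int\diff\tau\,\Tr[\sigma Q(\tau)]\int\diff t\,w_{\delta'}(t-\tau)\nonumber\\
&=2\delta'\,\Tr\Big[\sigma\int\diff\tau\,Q(\tau)\Big]=2\delta',
\end{align}
where the last equality uses normalization of $Q$ and $\Tr[\sigma]=1$. Combined with the pointwise bound this gives $2\delta'\geq\int\diff t\,\beta_{\mathrm h}^\eta(\rho(t)\fatpipe\sigma)$, as required. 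Note that this automatically certifies finiteness of the right-hand integral whenever $\overline\delta(\eta,\rho)<\infty$, so no separate integrability assumption is needed.

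The one genuine subtlety I expect is that the supremum defining $\overline\eta^*(\delta',\rho)$ may fail to be attained over the noncompact set of continuous-outcome POVMs, so a $Q$ with $\overline\eta(\delta',\rho,Q)$ exactly equal to $\eta$ need not exist. I would handle this by taking a sequence $Q_k$ with $\overline\eta(\delta',\rho,Q_k)\geq\eta-\epsilon_k$ and $\epsilon_k\downarrow0$, which runs through the argument above to give $2\delta'\geq\int\diff t\,\beta_{\mathrm h}^{\eta-\epsilon_k}(\rho(t)\fatpipe\sigma)$. Since $\beta_{\mathrm h}^\eta$ is convex and nondecreasing in $\eta$ (being the value function of the linear program with $\eta$ entering a single constraint), it is left-continuous on $(0,1)$, so $\beta_{\mathrm h}^{\eta-\epsilon_k}\uparrow\beta_{\mathrm h}^\eta$ pointwise in $t$; the monotone convergence theorem then upgrades the inequality to $2\delta'\geq\int\diff t\,\beta_{\mathrm h}^\eta(\rho(t)\fatpipe\sigma)$. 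The boundary cases $\eta\in\{0,1\}$ are immediate, completing the plan for any fixed state $\sigma$.
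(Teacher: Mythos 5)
Your proof is correct, and it reaches the bound by a genuinely different route than the paper. The paper's argument is dual-side throughout: it takes the optimal dual variables $(\bar\mu(t),\bar X(t))$ of the hypothesis-testing program for each $t$, assembles them into a feasible point of the dual of the minimax-success-probability program, and then invokes weak duality on $\overline\eta^{*}$ to extract $2\delta\geq\int\diff t\,\beta_{\mathrm h}^{\eta}(\rho(t)\fatpipe\sigma)$. You instead stay entirely on the primal side: the smeared effect $M_t=(w_{\delta'}*Q)(t)$ is itself a feasible binary test achieving $\Tr[M_t\rho(t)]\geq\eta$, so it upper-bounds nothing and lower-bounds $\Tr[M_t\sigma]$ by $\beta_{\mathrm h}^{\eta}(\rho(t)\fatpipe\sigma)$ pointwise, and the Fubini computation $\int\diff t\,\Tr[M_t\sigma]\leq 2\delta'$ closes the argument. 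Your version buys some economy: it needs neither the dual SDP formulation of $\beta_{\mathrm h}^{\eta}$ nor the dual of the minimax program, nor attainment of the dual optima that the paper implicitly assumes. Conversely, because you work with a near-optimal primal POVM rather than exact dual optimizers, you must handle non-attainment of the supremum over POVMs, which you do correctly via $\eta'\uparrow\eta$ and left-continuity of $\beta_{\mathrm h}^{\eta}$ together with monotone convergence (the paper sidesteps this entirely by never invoking an optimal $Q$). Two cosmetic remarks: your displayed Fubini identity is an equality only when the window integrates to $2\delta'$ over the full parameter domain for every $\tau$ in the support of $Q$; on a bounded domain it is an inequality $\leq 2\delta'$, which still points the right way. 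And left-continuity of $\beta_{\mathrm h}^{\eta}$ at the endpoint $\eta=1$ deserves a word (compactness of $\{0\leq M\leq\bbI\}$ gives it in finite dimensions), though that case is degenerate for the proposition anyway.
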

The proof exploits the SDP dual formulation of the hypothesis testing relative entropy and is given next to the more general statement of Theorem~\ref{sthm:window_width_lower_bound_integral_t_fixed_eta} of the supplementary material.
We provide a proof that is independent of Ref.~\cite{walter_heisenberg_2014} for completeness.

\section{
A Cramér-Rao-like bound in the finite-sample regime
}
\label{sec:optimal_tolerance}

In this section, we want to further our understanding of the metrological tolerance defined in Definition~\ref{def:metrological_tolerance} and its ultimate limits when optimizing over metrological protocols. Basic exploration of the definition of (optimal) tolerance was performed in Ref.~\cite{yang2018stopwatch}, where the authors proposed a dimension-dependent lower bound that allowed them to conclude that the tolerance can only decrease asymptotically as $O(1/n)$, which corresponds to Heisenberg scaling. Ref.~\cite{sugiyama_precision-guaranteed_2015} derived an upper bound in the case of i.i.d.\ repetitions with a fixed measurement. 

In the asymptotic approach to quantum metrology, the most important tool is the \emph{quantum Cramér-Rao bound}, which gives a lower bound on the standard deviation of any locally unbiased estimate, relating it to the inverse square root of the quantum Fisher information. This inherently geometric quantity measures how quickly the quantum states $\rho(t)$ change when the parameter $t$ is altered. As the first contribution of this section, we derive a lower bound that fulfills a similar role for the minimax tolerance.

We derive the bound from Corollary~\ref{corr:two_point_fidelity_bound}, which we rephrase in terms of the sandwiched Rényi relative entropies, which are defined in terms of a constant $\alpha \in (0,1) \cup (1, \infty)$ as~\cite{khatri2020principles}
\begin{align}\label{eqn:def_sandwiched_renyi_relative_entropy}
    \tilde{D}_{\alpha}(\rho \fatpipe \sigma) \coloneqq \frac{1}{\alpha - 1} \log \Tr\left[\left( \sigma^{\frac{1-\alpha}{2\alpha}} \rho \sigma^{\frac{1-\alpha}{2\alpha}} \right)^{\alpha}
    \right].
\end{align}
In our case, the value $\alpha = \frac{1}{2}$ is crucial, as it connects to the fidelity of quantum states defined in Eq.~\eqref{eqn:def_fidelity}:
\begin{align}\label{eqn:sandwiched_renyi_12_fidelity}
    \tilde{D}_{\frac{1}{2}}(\rho \fatpipe \sigma) &= -\frac{1}{2} \log \Tr[( \sigma^{{1}/{2}} \rho \sigma^{{1}/{2}})^{{1}/{2}}] \\
    &= -\frac{1}{2} \log F(\rho, \sigma)\nonumber.
\end{align}
With this notation at hand, Corollary~\ref{corr:two_point_fidelity_bound} can be rephrased as 
\begin{align}\label{eqn:corollary_10_as_log_fidelity}
\log \left( \frac{1}{4(1 - \overline{\eta}^{*}(\delta, \rho))}\right) \leq 4 \inf_{|t-t'|>2\delta} \tilde{D}_{\frac{1}{2}}(\rho(t), \rho(t')).
\end{align}

Before we come to the formal statement, let us outline how this bound comes about in the case of i.i.d.\ copies. We can make the choice $t' = t + 2\delta$ in the above bound. For sufficiently small $\delta$ we can perform a Taylor expansion. We denote the Taylor expansion as
\begin{align}\label{eqn:expansion_log_fidelity}
\begin{split}
    &\tilde{D}_{\frac{1}{2}}(\rho(t)\fatpipe \rho(t+\tau)) \\
    &\qquad= \frac{1}{2} f_2(t) \tau^2 + \frac{1}{3!} f_3(t) \tau^3+ \frac{1}{4!} f_4(t) \tau^4 + \dots,
\end{split}
\end{align}
where we envision $\tau = 2\delta$. The values $f_k(t)$ are the Taylor coefficients of the function $\tau \mapsto \tilde{D}_{\frac{1}{2}}(\rho(t)\fatpipe \rho(t+\tau))$ at $\tau = 0$, an explicit formula is given in Eq.~\eqref{eqn:def_f_k_taylor_coeffs}.
Because of the relationship between the sandwiched Rényi relative entropy and the fidelity, we have that the coefficient $f_2(t)$ is a constant multiple of the quantum Fisher information $f_2(t) = \frac{1}{8} \calF(t)$, see Section~\ref{ssec:tolerance_lower_bound_symmetric_ht} of the supplementary material.

Let us now drop the explicit time dependence. If we could ignore the higher order terms in the Taylor expansion, we could simply choose $2 \delta = \tau \propto 1/\sqrt{f_2}$ to render the error probability lower bound constant, which would give us the desired scaling of the bound. We will, however, need to work a bit harder to get something analytically meaningful.
Let us choose $\tau = \gamma/\sqrt{f_2}$ in the above expansion. Then, as long as $2 \delta = \tau$ is within the radius of convergence of the Taylor series, we have
\begin{align}\begin{split}
    &\tilde{D}_{\frac{1}{2}}(\rho(t)\fatpipe \rho(t+\tau)) \\
    &\qquad= \frac{1}{2} \gamma^2 + \frac{1}{3!} \frac{f_3}{f_2^{3/2}} \gamma^3+ \frac{1}{4!} \frac{f_4}{f_2^{2}} \gamma^4 + \dots\,
    .
\end{split}
\end{align}
We immediately observe that the validity of the second order approximation depends on the ratios ${f_p}/{f_2^{p/2}}$.
To get some intuition for these ratios, it is rather instructive to look at the case of i.i.d.\ 
copies, \textit{i.e.}, $\rho(t) \to \rho^{\otimes n}(t)$. In this case, the additivity of the sandwiched Rényi relative entropy implies that
\begin{align}
    \tilde{D}_{\frac{1}{2}}(\rho^{\otimes n}(t)\fatpipe \rho^{\otimes n}(t+\tau)) = n \tilde{D}_{\frac{1}{2}}(\rho(t), \rho(t+\tau)).
\end{align}
This in turn means that all coefficients $f_p$ grow linearly in $n$, \ie $f_p \to n f_p$ which implies that
\begin{align}
    \frac{f_p}{f_2^{p/2}} \to \frac{1}{n^{\frac{p-2}{2}}}\frac{f_p}{f_2^{p/2}},
\end{align}
meaning that higher order ratios decay more quickly, justifying the second order approximation in the limit of large $n$. Note that this is the same regime in which the asymptotic Cramér-Rao bound is attainable as well.

With the intuition we gained, we can now make sense of the following theorem.
\begin{theorem}[Non-asymptotic Cramér-Rao-like bound]\label{thm:cramer_rao_like_bound}
For a given smooth set of states $\rho(t)$, we define
\begin{align}\label{eqn:def_f_k_taylor_coeffs}
    f_k(t) \coloneqq -\frac{1}{2} \left.\frac{\partial^k}{\partial \tau^k} \log F(\rho(t), \rho(t + \tau))\right|_{\mathrlap{\tau=0}}
\end{align}
and the coefficient
\begin{align}
    q \coloneqq  \sup_{t} \sup_{3 \leq p \in \bbN} \left| \frac{f_p(t)}{f_2^{p/2}(t)} \right|^{\mathrlap{\frac{1}{p-2}}}.
\end{align}
We then have for any desired minimax success probability $\overline{\eta}>3/4$ that
\begin{align}
    \overline{\delta}^{*}(\overline{\eta}, \rho) \geq \frac{\Gamma}{\sqrt{\inf_t \calF(t)}},
\end{align}
where
\begin{align}
      \frac{q}{6\sqrt{2}} \log \frac{1}{4(1-\overline{\eta})} > \sqrt{\log \frac{1}{4(1-\overline{\eta})}} - \Gamma > 0.
\end{align}
The bound holds as long as $\Gamma/{\sqrt{\inf_t \calF(t)}}$ does not exceed the smallest convergence radius of the Taylor expansion of the sandwiched Rényi relative entropy.
\end{theorem}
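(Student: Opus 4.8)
The plan is to convert the fidelity lower bound on the minimax error probability of Corollary~\ref{corr:two_point_fidelity_bound} into a lower bound on the optimal minimax tolerance, using the coefficient $q$ to tame the Taylor tail. Abbreviate $L \coloneqq \log\frac{1}{4(1-\overline\eta)}$, which is strictly positive precisely because $\overline\eta>3/4$, and set $\delta_0 \coloneqq \Gamma/\sqrt{\inf_t\calF(t)}$. The log-fidelity form of the bound, Eq.~\eqref{eqn:corollary_10_as_log_fidelity}, shows that $\overline\eta^{*}(\delta,\rho)\geq\overline\eta$ can only hold if $\inf_{|t-t'|>2\delta}\tilde{D}_{1/2}(\rho(t)\fatpipe\rho(t'))\geq L/4$. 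Reading this contrapositively, and using Definition~\ref{def:metrological_tolerance} together with the monotonicity of $\overline\eta^{*}(\delta,\rho)$ in $\delta$ (wider windows can only help), to prove $\overline\delta^{*}(\overline\eta,\rho)\geq\delta_0$ it suffices to show that for every $\delta<\delta_0$ there is an admissible pair of parameters whose states are close enough that $\tilde{D}_{1/2}<L/4$; this certifies $\overline\eta^{*}(\delta,\rho)<\overline\eta$.

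First I would place that pair at the least favorable point of the family. Let $t^{*}$ attain (or approach) $\inf_t\calF(t)$ and take the pair $(t^{*},t^{*}+2\delta)$, which by continuity of $F$ is approached from within the admissible set $\{|t-t'|>2\delta\}$. Because $\tilde{D}_{1/2}(\rho(t^{*})\fatpipe\rho(t^{*}+\tau))$ is monotone in $\tau$, it is enough to control it at $\tau=2\delta_0$. Here I would insert the Taylor expansion of Eq.~\eqref{eqn:expansion_log_fidelity}, whose coefficients $f_k(t^{*})$ are defined in Eq.~\eqref{eqn:def_f_k_taylor_coeffs} and whose quadratic coefficient is $f_2=\tfrac18\calF$, and adopt the rescaling $\tau=\gamma/\sqrt{f_2(t^{*})}$. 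This is exactly the substitution that normalizes the quadratic term to $\tfrac12\gamma^2$ and makes $\delta_0=\Gamma/\sqrt{\inf_t\calF(t)}$ correspond to $\gamma=\Gamma/\sqrt2$, so that
\begin{align}
    \tilde{D}_{1/2}\bigl(\rho(t^{*})\fatpipe\rho(t^{*}+\tau)\bigr)
    = \tfrac12\gamma^2 + \sum_{k\geq 3}\frac{1}{k!}\,\frac{f_k(t^{*})}{f_2(t^{*})^{k/2}}\,\gamma^{k}.
\end{align}

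The heart of the argument is the tail estimate. By the definition of $q$ one has $|f_k/f_2^{k/2}|\leq q^{\,k-2}$ uniformly in $t$ and in $k\geq 3$, so the remainder is dominated by $q^{-2}\sum_{k\geq 3}(q\gamma)^{k}/k!$, a shifted exponential series that I would bound by its leading cubic contribution $\tfrac16 q\gamma^{3}$ times a factor tending to one as long as $q\gamma$ stays small, i.e.\ as long as $\tau=2\delta_0$ remains within the radius of convergence of the expansion; this is exactly the proviso stated in the theorem. Imposing $\tilde{D}_{1/2}\leq L/4$ then reduces to an inequality of the shape $\tfrac12\gamma^2+\tfrac16 q\gamma^3\lesssim L/4$, which in terms of $\Gamma=\sqrt2\,\gamma$ reads $\Gamma^2+\tfrac{q}{3\sqrt2}\Gamma^3\lesssim L$. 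The largest $\Gamma$ satisfying it, namely the root of the corresponding equation, lies strictly below $\sqrt L$ (forcing $\sqrt L-\Gamma>0$), while expanding around $\Gamma=\sqrt L$ shows that the deficit $\sqrt L-\Gamma$ is $\tfrac{q}{6\sqrt2}L$ at leading order with a strictly smaller true value, giving the two-sided window $\tfrac{q}{6\sqrt2}L>\sqrt L-\Gamma>0$ recorded in the statement. For such $\Gamma$ the certification $\tilde{D}_{1/2}<L/4$ at $\delta_0$ holds, and hence so does the claimed tolerance bound.

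The step I expect to be most delicate is the tail estimate together with its consistency with the convergence radius: bounding the entire series $\sum_{k\geq 3}$ by a clean multiple of the cubic term requires $q\gamma$ to be controlled, and one must verify that the window of $\Gamma$ left admissible by the final inequality is compatible with the Taylor expansion actually converging at $\tau=2\delta_0$, which is the content of the radius-of-convergence hypothesis. A secondary point needing care is uniformity in $t$: since $q$ carries a supremum over $t$ while $\inf_t\calF$ sits in the denominator of $\delta_0$, one has to check that evaluating at the minimizer $t^{*}$ of $\calF$ is compatible with the uniform tail bound, and to treat by a limiting argument the case where $\inf_t\calF$ is only approached rather than attained.
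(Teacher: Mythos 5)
Your overall route---starting from Corollary~\ref{corr:two_point_fidelity_bound} in the log-fidelity form of Eq.~\eqref{eqn:corollary_10_as_log_fidelity}, rescaling $\tau=\gamma/\sqrt{f_2(t)}$ so that $f_2=\tfrac18\calF$ turns $\delta$ into $\Gamma/\sqrt{\calF}$, controlling the tail uniformly via $|f_k/f_2^{k/2}|\leq q^{k-2}$, and evaluating at the minimizer of $\calF$ with the $t$-uniform $q$---is exactly the paper's. The genuine gap is in the one step that actually produces the stated window for $\Gamma$. You truncate the tail at the cubic term and bound the remainder by ``a factor tending to one as long as $q\gamma$ stays small''; that only yields $\sqrt{L}-\Gamma\approx\frac{q}{6\sqrt{2}}L$ at leading order, whereas the theorem asserts the exact two-sided inequality $\frac{q}{6\sqrt{2}}L>\sqrt{L}-\Gamma>0$ non-asymptotically, for all admissible $q$ and all $\overline{\eta}>3/4$, not merely in a small-$q\gamma$ limit. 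The paper closes this by \emph{not} truncating: the definition of $q$ gives the closed-form majorant $\tilde{D}_{1/2}\leq\frac{1}{q^2}\bigl(e^{q\gamma}-1-q\gamma\bigr)$, the equation $\frac{1}{q^2}(e^{q\gamma}-1-q\gamma)=a$ with $a=\frac14\log\frac{1}{4(1-\overline{\eta})}$ is inverted exactly in terms of the Lambert-function branch $W_{-1}$, and the two-sided bounds $-1-\sqrt{2u}-u<W_{-1}(-e^{-u-1})<-1-\sqrt{2u}-\frac23 u$ of Ref.~\cite{chatzigeorgiou2013bounds} then give $\sqrt{2a}-\frac13 aq<\gamma_{=}<\sqrt{2a}$, which is precisely the claimed window after substituting $\Gamma=\sqrt{2}\gamma_{=}$. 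Without this (or an equivalent exact inversion of the full series), your argument establishes only a weaker, leading-order version of the statement.

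A secondary, repairable point: the monotonicity of $\tau\mapsto\tilde{D}_{1/2}(\rho(t^{*})\fatpipe\rho(t^{*}+\tau))$ that you invoke is neither justified nor needed. The necessary condition $\frac14\log\frac{1}{4(1-\overline{\eta})}\leq\tilde{D}_{1/2}(\rho(t)\fatpipe\rho(t+2\delta))\leq\frac{1}{q^2}(e^{q\gamma}-1-q\gamma)$ only requires monotonicity in $\gamma$ of the explicit upper-bounding function, which is immediate; the contrapositive should be routed through that majorant rather than through $\tilde{D}_{1/2}$ itself.
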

We note that the coefficient $q$ is a measure of closeness to a Gaussian shape of the fidelity curve $\tau \mapsto F(\rho(t), \rho(t+\tau))$ and again emphasize that $f_2(t)$ relates to the quantum Fisher information as $f_2(t) = \frac{1}{8}\calF(t)$.
The proof is presented in Section~\ref{ssec:tolerance_lower_bound_symmetric_ht} of the supplementary material.
As we discussed above, in the case of i.i.d.\ copies $\rho(t) \to \rho^{\otimes n}(t)$, we have $q = O(1/\sqrt{n})$ giving a bound that is asymptotically constant. 

The above gives a lower bound for the optimal minimax tolerance that, asymptotically, has the expected scaling both in the quantum Fisher information and the desired logarithmic dependence on the success probability. It furthermore is valid in the single-shot setting, where the finite-size corrections depend on the higher-order derivatives of the sandwiched Rényi relative entropy. We find it conceivable that the ratio of third to second derivative could give the factual second-order asymptotics, but to conclude that it would be necessary to find a matching upper bound (\ie a protocol) with similar performance guarantees. 

It is also interesting to gather some intuition about the workings of the above bound. First of all, the right-hand-side involves the quantum Fisher information, which is a local quantity that captures how much states change \emph{infinitesimally}. 
This quantity yet puts a bound on $\overline\delta$, which quantifies the estimation accuracy globally over the range of possible parameter values. This is because we effectively reduce to a setting where the quantum Fisher information captures the dominant contributions to the sandwiched Rényi relative entropy even at non-infinitesimal distances -- the case when it is dominated by the second order expansion. The coefficient $q$ measures how close we are to this setting. 

The proof of the above theorem hinges on the convergence of the Taylor expansion of the sandwiched Rényi relative entropy. As the convergence radius is the distance to the closest pole of the function, we can conclude that the smallest radius of convergence is 
\begin{align}
    r \coloneqq \min \{ |\tau| \pipe F(\rho(t), \rho(t+\tau)) = 0 \text{ for some }t \}.
\end{align}
This shows that in the case of i.i.d.\ copies, for example, the radius of convergence is independent of the number of copies. We thus do not expect the radius of convergence to be an issue in practically relevant scenarios.

We further observe that the expansion we use is very reminiscent of the Edgeworth/Gram-Charlier series expansion method in statistics~\cite{wallace_asymptotic_1958}, where the ratio \smash{$f_p/f_2^{p/2}$} can be understood as the normalized $p$-th cumulant. It is an important direction of research to further our understanding of the higher derivatives of the sandwiched Rényi relative entropy. 

\section{Bounds on the optimal sample complexity}\label{sec:sample_complexity}

The sample complexity defined in Definition~\ref{def:sample_complexity} captures the number of copies of a quantum system needed to achieve a target tolerance with a guaranteed success probability. It is a quantity for which bounds follow in a relatively straightforward manner from our previous results. This is because any bound that relates the tolerance and the success probability to each other can be used to establish a sample complexity bound. In this way, the sample complexity is -- in a way -- mathematically secondary to success probability and tolerance.

We can use Corollary~\ref{corr:two_point_fidelity_bound} in the form of Eq.~\eqref{eqn:corollary_10_as_log_fidelity} to obtain the following concise minimax sample complexity lower bound that involves the sandwiched Rényi relative entropy of order $1/2$ introduced in Eq.~\eqref{eqn:sandwiched_renyi_12_fidelity}.
\begin{corollary}[Two-point sample complexity bound]\label{corr:two_point_fidelity_sample_complexity_bound}
For a given tolerance $\delta$ and state set $\rho(t)$, we have the following lower bound on the minimax sample complexity
\begin{align}
\overline{n}^{*}(\eta, \delta, \rho) \geq \frac{1}{4 \displaystyle\inf_{|t-t'| > 2\delta}\Tilde{D}_{\frac{1}{2}}(\rho(t)\fatpipe \rho(t'))}\log \frac{1}{4(1-\eta)}.
\end{align}
\end{corollary}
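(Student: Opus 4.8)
The plan is to lift the single-shot fidelity bound of Corollary~\ref{corr:two_point_fidelity_bound} to $n$ copies, exploit the additivity of the sandwiched Rényi relative entropy, and then read off the sample complexity directly from its definition.

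First, I would apply the logarithmic form Eq.~\eqref{eqn:corollary_10_as_log_fidelity} of Corollary~\ref{corr:two_point_fidelity_bound} not to $\rho(t)$ but to the $n$-copy family $t \mapsto \rho^{\otimes n}(t)$, which yields
\begin{align}
    \log \left( \frac{1}{4(1 - \overline{\eta}^{*}(\delta, \rho^{\otimes n}))}\right) \leq 4 \inf_{|t-t'|>2\delta} \tilde{D}_{\frac{1}{2}}(\rho^{\otimes n}(t) \fatpipe \rho^{\otimes n}(t')).
\end{align}
The crucial simplification comes from the additivity of the sandwiched Rényi relative entropy under tensor powers, namely $\tilde{D}_{\frac{1}{2}}(\rho^{\otimes n}(t) \fatpipe \rho^{\otimes n}(t')) = n\, \tilde{D}_{\frac{1}{2}}(\rho(t) \fatpipe \rho(t'))$, so that a factor of $n$ pulls out of the infimum on the right-hand side.

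Next, I would substitute $n = \overline{n}^{*}(\eta, \delta, \rho)$. By Definition~\ref{def:sample_complexity} this is the smallest integer for which $\overline{\eta}^{*}(\delta, \rho^{\otimes n}) \geq \eta$, so in particular $\overline{\eta}^{*}(\delta, \rho^{\otimes n}) \geq \eta$ holds at this value. Since $x \mapsto \log\tfrac{1}{4(1-x)}$ is increasing on $[0,1)$, the left-hand side above is then at least $\log\tfrac{1}{4(1-\eta)}$. Chaining the two inequalities gives
\begin{align}
    \log \frac{1}{4(1-\eta)} \leq 4\, \overline{n}^{*}(\eta, \delta, \rho) \inf_{|t-t'|>2\delta} \tilde{D}_{\frac{1}{2}}(\rho(t) \fatpipe \rho(t')),
\end{align}
and dividing through by the positive coefficient of $\overline{n}^{*}(\eta, \delta, \rho)$ produces exactly the claimed lower bound.

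Since the argument is essentially a rearrangement, I do not anticipate a genuine obstacle. The only nontrivial ingredient is the additivity of $\tilde{D}_{\frac{1}{2}}$ -- equivalently the multiplicativity of the fidelity, $F(\rho^{\otimes n}, \sigma^{\otimes n}) = F(\rho, \sigma)^n$ -- which is a standard property. It is worth noting that the bound is informative only when $\inf_{|t-t'|>2\delta} \tilde{D}_{\frac{1}{2}}(\rho(t) \fatpipe \rho(t')) > 0$; when this infimum vanishes the right-hand side diverges, correctly reflecting that no finite number of copies can reach the target success probability at tolerance $\delta$.
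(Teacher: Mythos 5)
Your proof is correct and follows exactly the route the paper intends: apply Eq.~\eqref{eqn:corollary_10_as_log_fidelity} to the family $t\mapsto\rho^{\otimes n}(t)$, use additivity of $\tilde{D}_{\frac{1}{2}}$ to extract the factor of $n$, and rearrange at $n=\overline{n}^{*}(\eta,\delta,\rho)$ using monotonicity of $x\mapsto\log\tfrac{1}{4(1-x)}$. The paper gives no separate proof beyond citing Corollary~\ref{corr:two_point_fidelity_bound} in precisely this form, so nothing is missing.
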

The above result concerns the sample complexity in the setting of parametrized states. When talking about sample complexities, we usually present them in Big-$O$ notation, where the relevant limits are $\delta \to 0$ and $\eta \to 1$. In the limit $\delta \to 0$, the above bound is dominated by close values $t' = t + 2\delta$, motivating a Taylor expansion. As we have discussed in detail in Section~\ref{sec:optimal_tolerance}, we have 
\begin{align}
    \tilde{D}_{\frac{1}{2}}(\rho(t) \fatpipe \rho(t+2\delta)) = \frac{1}{4}\calF(t) \delta^2 + O(\delta^3),
\end{align}
where $\calF(t)$ is the quantum Fisher information at time $t$.
This means Corollary~\ref{corr:two_point_fidelity_sample_complexity_bound} immediately implies an i.i.d.\ sample complexity lower bound of
\begin{align}
    \overline{n}^{*}(\eta, \delta, \rho) \geq O\left( \frac{1}{\min_t\calF(t)} \frac{1}{\delta^2} \log \frac{1}{1-\eta} \right).
\end{align}

The above result applies to i.i.d.\ states. It is, however, equally important to have a sample complexity bound that does not rely on the i.i.d.\ structure of the underlying state. This can, for example, happen when quantum metrology with quantum channels is performed. In that case, we deal with a parametrized family of states $\rho^{(n)}(t)$ for $n \in \bbN$.
To get results about this case from Corollary~\ref{corr:two_point_fidelity_bound}, we have to work harder. Luckily, we already performed the heavy lifting in Section~\ref{sec:optimal_tolerance} and can build on our non-asymptotic Cramér-Rao like bound of Theorem~\ref{thm:cramer_rao_like_bound}. Said result hinges on two quantities, first the smallest quantum Fisher information $\inf_t \calF(t)$ and second the coefficient $q$ that quantifies the validity of a quadratic approximation to the sandwiched Rényi relative entropy.

Many results in the existing literature on quantum metrology concern themselves with the asymptotic scaling of the quantum Fisher information, which is usually $\calF = O(n)$ in the standard quantum limit and $\calF = O(n^2)$ in the Heisenberg limit~\cite{kurdzialek2023using}. We can use results of this type in the form of the following Corollary:
\begin{corollary}[Sample complexity scaling bound]\label{corr:sample_complexity_scaling_bound}
In the setting of Theorem~\ref{thm:cramer_rao_like_bound}, where $\rho(t)$ is replaced with a parametrized family of states $\{ \rho^{(n)}(t)\}_{n\in \bbN}$, assume that we have the asymptotic scalings in $n$:
\begin{align}
    \inf_t \calF^{(n)}(t) &= O(n^{\alpha})\\
    q^{(n)} &= o(1),
\end{align}
for $\alpha > 0$. Then we have that
\begin{align}
    \overline{n}(\eta, \delta) \geq O\left( \left[ \frac{\log\frac{1}{1-\eta}}{\delta^{2}} \right]^{\frac{1}{\alpha}}\right).
\end{align}
\end{corollary}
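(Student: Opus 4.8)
The plan is to derive the claim by applying the non-asymptotic Cramér--Rao-like bound of Theorem~\ref{thm:cramer_rao_like_bound} to the family $\rho^{(n)}(t)$ at the value of $n$ that realizes the optimal sample complexity, and then inverting the resulting relation between tolerance and $n$. First I would record the purely definitional link between the optimal tolerance and the optimal sample complexity. Writing $n^{*} \coloneqq \overline{n}^{*}(\eta,\delta,\rho)$, by definition $\overline{\eta}^{*}(\delta, \rho^{(n^{*})}) \geq \eta$; hence $\delta$ lies in the set $\{ \delta' \geq 0 \pipe \overline{\eta}^{*}(\delta', \rho^{(n^{*})}) \geq \eta \}$ whose infimum defines $\overline{\delta}^{*}(\eta, \rho^{(n^{*})})$, so that $\overline{\delta}^{*}(\eta, \rho^{(n^{*})}) \leq \delta$ with no further monotonicity input needed. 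Thus any lower bound on $\overline{\delta}^{*}(\eta, \rho^{(n^{*})})$ is a lower bound on $\delta$ that can be solved for $n^{*}$.

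Next I would invoke Theorem~\ref{thm:cramer_rao_like_bound} for the family $\rho^{(n^{*})}$, giving
\begin{align}
    \delta \;\geq\; \overline{\delta}^{*}(\eta, \rho^{(n^{*})}) \;\geq\; \frac{\Gamma_{n^{*}}}{\sqrt{\inf_t \calF^{(n^{*})}(t)}},
\end{align}
where $\Gamma_{n^{*}}$ is any admissible constant from the theorem. Setting $L \coloneqq \log\frac{1}{4(1-\eta)}$, the admissible window for $\Gamma_{n}$ is $\bigl(\sqrt{L}-\tfrac{q^{(n)}}{6\sqrt{2}}L,\,\sqrt{L}\bigr)$, and the hypothesis $q^{(n)}=o(1)$ forces its left endpoint to tend to $\sqrt{L}$. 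Consequently $\Gamma_{n^{*}}$ may be chosen bounded below by a positive, $\eta$-dependent constant, e.g. $\Gamma_{n^{*}} \geq \tfrac{1}{2}\sqrt{L}$ once $q^{(n^{*})} \leq 3\sqrt{2}/\sqrt{L}$. Combining this with the upper bound $\inf_t \calF^{(n^{*})}(t) \leq C\,(n^{*})^{\alpha}$ encoded in $\inf_t \calF^{(n)}(t)=O(n^{\alpha})$ yields
\begin{align}
    \delta \;\geq\; \frac{\tfrac{1}{2}\sqrt{L}}{\sqrt{C\,(n^{*})^{\alpha}}}
    \quad\Longrightarrow\quad
    (n^{*})^{\alpha} \;\geq\; \frac{L}{4C\delta^{2}}
    \quad\Longrightarrow\quad
    n^{*} \;\geq\; \left(\frac{1}{4C}\right)^{1/\alpha}\!\left(\frac{L}{\delta^{2}}\right)^{1/\alpha}.
\end{align}
Since $L = \log\frac{1}{1-\eta} - \log 4 = \Theta\bigl(\log\frac{1}{1-\eta}\bigr)$ in the relevant limit $\eta\to 1$, this is exactly the asserted lower bound $\overline{n}(\eta,\delta) \geq O\bigl([\log\frac{1}{1-\eta}/\delta^{2}]^{1/\alpha}\bigr)$, with the $\eta,\delta$-independent factors absorbed into the big-$O$.

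Two points require care, neither of them deep. First, I must check that the Taylor-radius hypothesis of Theorem~\ref{thm:cramer_rao_like_bound} is met at $n^{*}$: the quantity it bounds, $\Gamma_{n^{*}}/\sqrt{\inf_t \calF^{(n^{*})}}\leq\delta$, is small in the regime $\delta\to 0,\ n\to\infty$, and (as observed in Section~\ref{sec:optimal_tolerance}) the convergence radius $r$ does not shrink with $n$ for i.i.d.-type families, so the hypothesis holds for all large $n$. Second, the uniform-in-$n$ lower bound on $\Gamma_{n^{*}}$ is precisely where $q^{(n)}=o(1)$ enters and is what keeps the constant from degrading as $n$ grows; this is the main obstacle, and it is exactly what the two scaling assumptions are designed to control. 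The remaining work is the elementary algebra of solving $\delta \gtrsim n^{-\alpha/2}$ for $n$ while tracking that the constants depend only on $C,\alpha$. I expect the only genuinely delicate bookkeeping to be making the \enquote{for large $n$} qualifiers simultaneous, i.e. exhibiting a single threshold $n_0$ beyond which both $q^{(n)}$ is small enough to secure $\Gamma_{n}\geq\tfrac{1}{2}\sqrt{L}$ and the convergence-radius condition holds, so that the stated asymptotic lower bound is legitimate.
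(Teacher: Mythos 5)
Your proposal is correct and follows essentially the same route as the paper's own proof: invoke Theorem~\ref{thm:cramer_rao_like_bound} for the family at the optimizing $n$, use $q^{(n)}=o(1)$ to keep the correction term from swallowing the leading $\sqrt{\log\frac{1}{4(1-\eta)}}/\sqrt{\inf_t\calF^{(n)}(t)}$ term, and solve $\delta\gtrsim n^{-\alpha/2}$ for $n$. Your explicit record of the definitional step $\overline{\delta}^{*}(\eta,\rho^{(n^{*})})\leq\delta$ and the concrete choice $\Gamma_{n^{*}}\geq\tfrac{1}{2}\sqrt{L}$ are slightly more careful than the paper's $O/o$ bookkeeping, but the argument is the same.
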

The proof of the above Corollary is exhibited in Section~\ref{ssec:optimal_sample_complexity} of the supplementary material.
As was shown in Section~\ref{sec:optimal_tolerance}, in the case of i.i.d.\ copies, we have $\alpha = 1$ and $q = O(n^{-1/2}) = o(1)$, reproducing the scaling of Corollary~\ref{corr:two_point_fidelity_sample_complexity_bound} up to a worse dependence on the inverse error probability. 

\begin{figure*}
    \centering
    \includegraphics{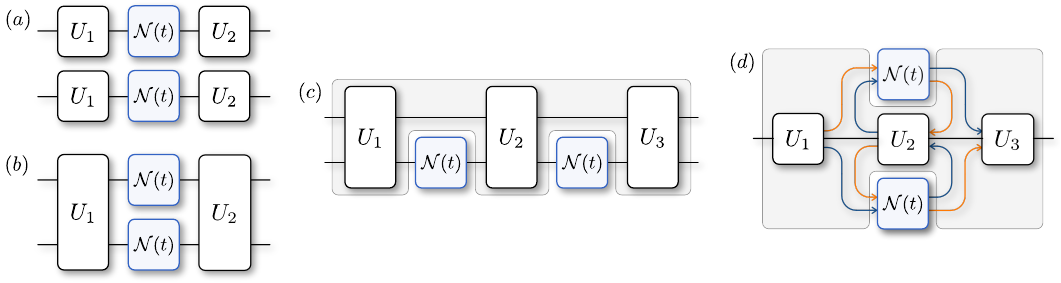}
    \caption{The different types of protocols that can be realized when accessing multiple (in this case, two) copies of the same parametrized channel $\calN(t)$. We distinguish: 
    $(a)$ the \emph{i.i.d.\ case} when the same single-shot protocol is repeated multiple times. 
    $(b)$ the \emph{parallel} case where the probe state can be entangled by a suitable unitary transformation. 
    $(c)$ the \emph{adaptive} case where an auxiliary system can be used as a memory to coherently adjust processing based on the outcomes of the first use of the channel. The gray shaded area represents an instance of a \emph{quantum comb}. 
    $(d)$ the case of \emph{indefinite causal order}, where a superposition of orders of invocations of the quantum channel can be used to boost the sensitivity. The blue and orange arrows represent the two orders of invocation. The solid black line in the middle indicates an auxiliary system. The gray shaded area represents a general strategy with indefinite causal order.}
    \label{fig:types_of_protocols_shadows_wide}
\end{figure*}

\section{
Metrology of quantum channels in the finite-sample regime
}
\label{sec:parametrized_quantum_channels}

So far in this work, we have considered the task of estimating a parameter $t$ encoded in a set of states $\rho(t)$ and we have presented results on the optimal success probability, tolerance and sample complexity for this task. In practice, as alluded to in Fig.~\ref{fig:front_figure}, parametrized quantum states arise from the interaction of some probe system with some physical system whose properties in the form of the parameter $t$ we wish to determine. 
In this case, the primary object of interest is not a set of parametrized states $t\mapsto\rho(t)$, but a set of parametrized \emph{quantum channels} $t\mapsto \mathcal{N}(t)$, which represent the evolution of the probe system and its properties. 
The goal is then to estimate the parameter $t$, given access to the quantum channel $\mathcal{N}(t)$, with the same goals as before: optimize the estimation success probability, the tolerance, and the sample complexity, \textit{i.e.}\ the number of times the channel is accessed. %

\emph{Access models.}
As quantum channels represent evolutions of quantum systems, the possible ways of interacting with multiple copies of them to extract the underlying parameter are much richer than in the case of a parametrized quantum state where the only way of interacting is to choose a suitable POVM. As we show in Fig.~\ref{fig:types_of_protocols_shadows_wide}, a variety of \emph{access models} for multiple copies of quantum channels can be distinguished. 
In the simplest case, the same single-shot protocol is repeated multiple times and the individual outcomes are processed classically, effectively reducing to the case of the parametrized quantum state -- we therefore refer to it as the \emph{i.i.d.\ case}. In the \emph{parallel} case, we use $n$ copies of the channel in parallel, but possibly with entangled inputs and measurements. If \emph{adaptive} processing is allowed, then every use of the channel can be followed by a round of adaptive quantum processing before the next channel use. Such causally-ordered strategies are modeled by \emph{quantum combs}~\cite{GW07,CDP09}. Finally, the most general conceivable access to $n$ copies of a quantum channel allows for \emph{indefinite causal order} of the channel uses, \emph{e.g.}\ through the use of a \emph{quantum switch}~\cite{chiribella2013quantum}. Such strategies can lead to an asymptotic quadratic advantage over the Heisenberg limit for infinite-dimensional systems~\cite{zhao_quantum_2020,kurdzialek2023using}, and there exists a strict performance hierarchy in the finite-dimensional case~\cite{liu2023metrologyhierarchy}.

If we fix a particular way of interacting with $n$ copies of a parametrized quantum channel in a particular access model and subsequently measuring a POVM that predicts the underlying parameter, we will refer to this as a \emph{strategy}. Luckily, we can give a formal description of access models and strategies via the Choi representation formalism. In this framework, every \emph{strategy} within an access model for $n$ copies of the channel, $\sfS_n$, is a function $\tau \mapsto P_n(\tau)$, $P_n \in \mathsf{S}_n$, that maps possible predictions to positive semi-definite operators.
With this, we can extend Definition~\ref{def:success_probability_no_optimization} of the success probability as follows. 

\begin{definition}[Success probability (channels)]\label{def:success_probability_strategies}
For a given tolerance $\delta$, a set of channels $\calN(t)$ of which we can access $n$ copies, possibly with prior $\mu(t)$, and a strategy $P_n(\tau)$, the Bayesian success probability is given by
\begin{align}
\begin{split}
&\eta(\delta, \mu, \calN, P_n)   \\
&\qquad \coloneqq \int \diff \mu(t) \, \diff \tau \, w_{\delta}( t - \tau) \Tr [C[\calN(t)]^{\otimes n} P_n(\tau)],
\end{split}
\end{align}
where $C[\calN(t)]$ is the Choi representation of the quantum channel $\calN(t)$.
The minimax success probability is given by
\begin{align}
\begin{split}
&\overline{\eta}(\delta, \calN, P_n)  \\
&\qquad\coloneqq \inf_t \int \diff \tau \, w_{\delta}( t - \tau) \Tr [C[\calN(t)]^{\otimes n} P_n(\tau)].  
\end{split}     
\end{align}
\end{definition}
The Bayesian and minimax tolerance sand sample complexities in the channel case are then defined similar to Definitions~\ref{def:metrological_tolerance} and~\ref{def:sample_complexity} from the above defined Bayesian and minimax success probabilities.

The natural next step is now to define the \emph{optimal} (Bayesian) success probability relative to an access model as
\begin{align}
    \eta^{*}(\delta, \mu, \calN, \sfS_n) \coloneqq \sup_{P_n \in \sfS_n} \eta(\delta, \mu, \calN, P_n).
\end{align}
The optimal Bayesian tolerance and sample complexity, as well as the corresponding minimax quantities are defined analogously. 

In Section~\ref{sec:optimal_metrology_protocols}, we have shown that the optimal Bayesian and minimax success probabilities can be computed by solving a convex optimization problem without duality gap. As evidenced in Fig.~\ref{fig:types_of_protocols_shadows_wide}, protocols that involve a parametrized quantum channel have more moving parts that can and need to be optimized. Let us, for example, take the simplest case of only one use of the parametrized quantum channel. 
In this case we have to optimize over both the probe state $\rho_0$ that is fed into the quantum channel and the measurement $Q(\tau)$. Naively, the objective is then a nonlinear function of the arguments of the optimization $\rho_0$ and $Q(\tau)$ and we would not expect that this can be cast as a semi-infinite program. However, if we change our perspective and combine the preparation of the probe state and the measurement into a single object represented by a parametrized quantum comb, we can exploit the convexity of the set of quantum combs to again cast the computation of the optimal success probability as a semi-definite program. This reasoning then immediately means that also the case of adaptively interacting with $n$ copies of the channel can be efficiently optimized for. The same is true for strategies involving indefinite causal order as we summarize in the following proposition.

\begin{proposition}[Joint optimization]
For a given tolerance $\delta$, a set of channels $\calN(t)$ of which we can access $n$ copies, possibly with prior $\mu(t)$, the optimal success probability $\eta^*(\delta,\mu,\calN, \sfS_n)$ and optimal minimax success probability $\eta^{*}(\delta, \calN, \sfS_n)$ can be computed using a semi-definite program without duality gap for both adaptive $\sfS_n = \sfS_n^{\mathrm{ada}}$ and indefinitely causally ordered $\sfS_n = \sfS_n^{\mathrm{ico}}$ access.
\end{proposition}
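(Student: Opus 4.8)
The plan is to mirror the argument behind Proposition~\ref{prop:sdp_formulation} for parametrized states, replacing the POVM $Q(\tau)$ by a strategy $P_n(\tau)$ and the feasible set of POVMs by the feasible set of \emph{testers} within the chosen access model. The essential observation is that the objective in Definition~\ref{def:success_probability_strategies} is \emph{linear} in $P_n(\tau)$ once the channels $\calN(t)$, the prior $\mu(t)$, and the window $w_{\delta}$ are fixed. Rewriting it with the convolution notation as $\int \diff t \, \Tr[(w_{\delta} * [\mu \cdot C[\calN]^{\otimes n}])(t) \, P_n(t)]$ exhibits it as a linear functional of the tester. Hence everything hinges on showing that the strategy sets $\sfS_n^{\mathrm{ada}}$ and $\sfS_n^{\mathrm{ico}}$ are convex and semi-definite representable.

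For the structural facts I would invoke the established formalisms. In the adaptive case, the quantum comb / tester description~\cite{CDP09,GW07} says that a deterministic comb is a positive semi-definite Choi operator obeying a recursive hierarchy of linear partial-trace constraints, and a tester is a collection $\{ P_n(\tau) \geq 0 \}$ whose integral $\int \diff\tau \, P_n(\tau)$ equals such a deterministic comb. These are positivity together with finitely many linear equality constraints, so $\sfS_n^{\mathrm{ada}}$ is an SDP-representable convex set. In the indefinitely-causally-ordered case, the process-matrix description~\cite{chiribella2013noncausal} characterizes valid processes as the positive operators lying in a fixed linear subspace (the image of a causality projector), and the associated process-POVMs as collections $\{ P_n(\tau) \geq 0 \}$ integrating to such a valid process; again positivity plus linear constraints. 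In both access models the normalization condition that the tester integrate to a deterministic comb, respectively a valid process, plays exactly the role that $\int \diff t \, Q(t) = \bbI$ plays in the state case.

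Combining these gives, for the Bayesian quantity, a semi-infinite program: a linear objective over the SDP-representable tester set, indexed by the continuous prediction variable $\tau$, in direct analogy with Proposition~\ref{prop:sdp_formulation}. Discretizing $\tau$ as in the discretization result turns it into a finite SDP, with error controlled by the trace-norm Lipschitz continuity of $t \mapsto C[\calN(t)]^{\otimes n}$. For the minimax quantity I would introduce an epigraph variable and write $\overline{\eta}^{*}(\delta, \calN, \sfS_n) = \sup\{ \eta \pipe P_n \in \sfS_n, \ \int \diff\tau \, w_{\delta}(t-\tau)\Tr[C[\calN(t)]^{\otimes n} P_n(\tau)] \geq \eta \ \forall t \}$, which is again a semi-infinite SDP, now carrying one scalar constraint per value of $t$; discretizing $t$ yields a finite SDP. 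This is precisely the channel analogue of the minimax formulation of Proposition~\ref{prop:minimax_sdp}.

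The remaining and most delicate point, and the one I expect to be the main obstacle, is the absence of a duality gap. After discretization each program is a finite SDP, for which strong duality holds once a Slater-type strict-feasibility condition is met, so the task reduces to exhibiting a strictly feasible strategy. A natural candidate is the \enquote{blind} tester that discards all channel outputs and returns a fixed probability density over $\tau$, whose Choi/process operators can be taken full-rank (e.g.\ proportional to the maximally mixed operator on the relevant space). The difficulty is that the comb causality constraints and the ICO causality projector cut the positive cone down to lower-dimensional affine slices, so one must verify that the interior of the cone still meets these slices and that the blind strategy genuinely lies in the relative interior of $\sfS_n$ for both access models; establishing this uniformly is where the real work sits. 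Once strict feasibility is secured, the standard SDP strong-duality theorem closes the zero-gap claim, and passing from the discretized problems back to the semi-infinite statements is handled by the same Lipschitz/compactness control already invoked in the discretization argument.
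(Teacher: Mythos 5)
Your core approach --- linearity of the objective in the tester plus SDP-representability of the strategy sets --- is exactly the paper's, and it is the right one. The paper's proofs (Propositions~\ref{prop-succ_prob_combs_Bayesian}, \ref{prop-succ_prob_combs}, \ref{prop-SDP_Bayesian_non_causal} and \ref{prop-SDP_minimax_non_causal} of the supplementary material) differ from yours mainly in execution: rather than discretizing the prediction variable and appealing to finite-dimensional SDP duality, they keep the semi-infinite program, cast it into the standard primal form of Eq.~\eqref{eqn:SDP_primal_dual_standard}, compute the adjoint of the constraint map explicitly, and read off the dual; complementary slackness is then used to normalize the dual prior in the minimax case. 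For the ICO access model the dual derivation additionally uses a structural fact your sketch does not mention --- that the affine hull of tensor products of channel Choi operators equals the set of Hermiticity-preserving non-signaling superoperators (Ref.~\cite{gutoski2009sharedentanglement}, Theorem~14) --- which is what allows the infinitely many normalization constraints $\Tr[W_{A_1^nB_1^n}Y_{A_1^nB_1^n}]=1$ to be replaced by finitely many basis constraints and gives the dual variable a clean characterization as a non-signaling superoperator.

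The one part of the claim your proposal does not actually establish is the absence of a duality gap: you correctly identify the Slater verification as the remaining work and then stop. This is not a fatal gap --- the blind tester you propose does the job, since the comb hierarchy \eqref{eqn:comb_constr_1}--\eqref{eqn:comb_constr_3} and the ICO normalization are affine conditions satisfied by operators proportional to the identity, so a full-rank tester integrating to such an operator is strictly feasible and lies in the relative interior of $\sfS_n$ --- but as written the proposal proves only that the quantities are computable by a convex program, not that the program has zero gap, and the zero-gap claim is part of the statement. For what it is worth, the paper is also terse here, asserting in each case that strong duality is ``straightforward to verify''; completing the Slater argument along the lines you indicate would supply the detail the paper omits.
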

The detailed statements of the convex programs and their duals are given in Section~\ref{ssec:optimization} of the supplementary material. 

\emph{Connection to hypothesis testing.}
The core contribution of this work is the rigorous connection of PAC metrology with quantum hypothesis testing given in Section~\ref{sec:metrology_as_hypothesis_testing}. There, we gave an upper bound on the success probability through the success probability of corresponding multi-hypothesis testing problems between quantum states. As we show below, these results also carry over to the case of parametrized quantum channels. In this case, the reduction is to multi-hypothesis testing between channels under different access models. Formally, we define the optimal Bayesian success probability of testing $n$ copies of the quantum channels $\{ \calN_i \}_{i=1}^N$ with prior probabilities $\{ p_i \}_{i=1}^N$ under the access model $\sfS_n$ as
\begin{align}
    P_s^{*}(\{ p_i \calN_i \}_{i=1}^N, \sfS_n) \coloneqq
    \sup_{ \{P_i \}_{i=1}^N \subset \sfS_n } \sum_{i=1}^N p_i \Tr [ C[\calN_i]^{\otimes n} P_i ]
\end{align}
under the condition that the set $\{ P_i \}_{i=1}^N$ corresponds to a valid combination of processing and POVM. The minimax statement is given analogously by choosing the prior probabilities adversarially for the chosen strategy. The fact that our results from the state case carry over to this more general case comes with little surprise when we realize that upon fixing the strategy $P(\tau)$ to the optimal strategy and executing it right until before the final measurement is performed, we obtain a parametrized set of states to which we can then apply Theorem~\ref{thm:succ_prob_upper_bound_mht_delta_window}, leading to the following result.
\begin{corollary}[Hypothesis testing bound (channels)]\label{corr:succ_prob_upper_bound_mht_access_strategies}
For a given tolerance $\delta$, fix any set $\calS = \{(\lambda, s)\}$ of prior probabilities $\lambda \in [0,1]$ and shifts $s \in \bbR$ such that for all distinct $s, s' \in \calS$ we have that $|s-s'| > 2\delta$ and $\sum_{\lambda \in \calS} \lambda = 1$. Then, for a set of channels $\calN(t)$ of which we can access $n$ copies with prior $\mu(t)$ and a fixed access model $S_n$ we have the upper bound
\begin{align}
\begin{split}
    &\eta^{*}(\delta, \mu, \calN, \sfS_n) 
    \\ &\qquad\leq \int \diff t \, P_s^{*}(\{ \lambda \, \mu(t+s) \calN(t+s)\}_{(\lambda, s) \in \calS}, \sfS_n).
\end{split}
\end{align}
Optimizing over the prior probabilities $\lambda$ then yields the analogous upper bound in the minimax setting
\begin{align}
    \overline{\eta}^{*}(\delta, \calN, \sfS_n) &\leq \inf_t \overline{P}_s^{*}(\{  \calN(t+s)\}_{s \in \calS}, \sfS_n).
\end{align}
\end{corollary}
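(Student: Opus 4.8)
The plan is to reduce the channel estimation problem to the state estimation problem already resolved in Theorem~\ref{thm:succ_prob_upper_bound_mht_delta_window}, by freezing every component of a strategy except the terminal measurement. Concretely, fix any strategy $P_n \in \sfS_n$ and split it into the quantum processing that the access model prescribes up to, but not including, the final POVM, followed by that POVM $Q(\tau)$. In the Choi/comb formalism this lets me write the success amplitude as $\Tr[C[\calN(t)]^{\otimes n} P_n(\tau)] = \Tr[\tilde{\rho}(t)\, Q(\tau)]$, where $\tilde{\rho}(t)$ is the state entering the final measurement after the (parameter-independent) pre-measurement processing has been applied to $C[\calN(t)]^{\otimes n}$. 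Since this processing is a fixed completely positive map and $C[\calN(t)]$ is the only $t$-dependent ingredient, the assignment $t \mapsto \tilde{\rho}(t)$ is a legitimate one-parameter family of states on the output register, sharing the domain and prior $\mu(t)$ of the original problem.

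With this identification, I would observe that the Bayesian success probability of the channel strategy coincides with that of a measurement on the induced states, $\eta(\delta,\mu,\calN,P_n) = \eta(\delta,\mu,\tilde{\rho},Q) \le \eta^{*}(\delta,\mu,\tilde{\rho})$. Applying the state-case bound of Theorem~\ref{thm:succ_prob_upper_bound_mht_delta_window} to the family $\tilde{\rho}$ and the same shift set $\calS$ then yields
\begin{align}
  \eta(\delta,\mu,\calN,P_n)
  \le \int \diff t \, P_s^{*}(\{ \lambda\, \mu(t+s)\, \tilde{\rho}(t+s)\}_{(\lambda,s)\in\calS}).
\end{align}
The next step is to upgrade the state multi-hypothesis test on the right to a channel multi-hypothesis test under the same access model. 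The crucial point is that the pre-measurement processing defining $\tilde{\rho}$ is itself an admissible element of $\sfS_n$; hence any POVM discriminating the states $\{\tilde{\rho}(t+s)\}$ is realized by the strategy ``this fixed processing followed by that POVM,'' which lies in $\sfS_n$. Consequently the optimal state-discrimination probability is upper bounded by the optimal channel-discrimination probability over all of $\sfS_n$,
\begin{align}
  P_s^{*}(\{ \lambda\, \mu(t+s)\, \tilde{\rho}(t+s)\})
  \le P_s^{*}(\{ \lambda\, \mu(t+s)\, \calN(t+s)\}, \sfS_n),
\end{align}
for every $t$. Since the right-hand side no longer depends on $P_n$, I may take the supremum over $P_n \in \sfS_n$ on the left, which produces $\eta^{*}(\delta,\mu,\calN,\sfS_n)$ and establishes the Bayesian bound. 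The minimax bound follows along identical lines, replacing $\eta$, $\eta^{*}$, $P_s^{*}$ by $\overline{\eta}$, $\overline{\eta}^{*}$, $\overline{P}_s^{*}$ and invoking the minimax half of Theorem~\ref{thm:succ_prob_upper_bound_mht_delta_window}, with the outer infimum over $t$ carried through unchanged.

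The main obstacle I anticipate is making the ``freeze everything but the final measurement'' decomposition rigorous uniformly across the access models, especially for indefinite causal order, where there is no a priori temporal slot for a ``last'' operation. Here I would rely on the comb/process-matrix representation underlying the definition of $\sfS_n$: every strategy in $\sfS_n^{\mathrm{ada}}$ or $\sfS_n^{\mathrm{ico}}$ can be presented with its classical readout produced by a terminal POVM acting on a well-defined output register, so that contracting all channel slots while leaving that register open yields the induced state family $\tilde{\rho}(t)$. Verifying that the induced processing is genuinely an element of the same access model $\sfS_n$ --- so that the second step of the reduction is valid --- is the delicate structural check; once it is granted, the two displayed inequalities and the supremum over strategies give the result immediately.
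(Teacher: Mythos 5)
Your proposal is correct and follows essentially the same route as the paper: the paper's argument is precisely to fix the strategy, execute it up to just before the final measurement to obtain an induced one-parameter state family, apply Theorem~\ref{thm:succ_prob_upper_bound_mht_delta_window}, and then observe that the resulting state multi-hypothesis test is dominated by the channel multi-hypothesis test under the same access model. Your version is, if anything, slightly more careful than the paper's --- working with an arbitrary strategy and taking the supremum at the end rather than invoking ``the optimal strategy,'' and explicitly flagging the structural check that the pre-measurement processing remains an admissible element of $\sfS_n$ (including for indefinite causal order), a point the paper leaves implicit.
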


Because of the richer structure embodied by different access models for multiple copies of the same parametrized channel $\calN(t)$, we can define multiple types of asymptotic rates to generalize the analysis carried out in Section~\ref{sec:asymptotics_succ_prob}. Of particular interest to us are the rates corresponding to i.i.d.\ strategies ${R}^{*}_{\mathrm{iid}}$ (item $(a)$ in Fig.~\ref{fig:types_of_protocols_shadows_wide}), because there we can make use of our results on the asymptotics of the state case, and the rates corresponding to parallel strategies ${R}^{*}_{\mathrm{par}}$ (item $(b)$ in Fig.~\ref{fig:types_of_protocols_shadows_wide}).

\section{Phase estimation of a pure state Hamiltonian evolution}
\label{sec:covariant_setting}

In this section, we analyze one of the most prototypical scenarios of quantum metrology, namely phase estimation with pure states. We analyze the minimax success probability, as it represents the most stringent achievable guarantees. It is further important to emphasize that our analysis takes the perspective of \emph{global} estimation, contrary to the \emph{local} estimation routinely seen in the literature.

As a first step, we establish a general result on the minimax success probability in the $U(1)$-\emph{group-covariant} setting, which applies beyond phase estimation. We consider a set of states $|\psi(t)\rangle$ generated by unitary evolution of a pure initial probe state $|\psi \rangle$ under a Hamiltonian $H$, reflecting  the evolution of a closed quantum system
\begin{align}
    |\psi(t)\rangle &= e^{-i t H} |\psi\rangle = U(t)\ket{\psi}.
\end{align}
To ensure that $t$ can be understood as a \enquote{phase}, $H$ is assumed to be such that all differences between eigenvalues are integer-valued, in which case the recurrence time of the Hamiltonian is guaranteed to be $2\pi$. 
Let now $H$ decompose as $H = \sum_{\lambda} \lambda \Pi_{\lambda}$, where $\lambda$ are the different eigenvalues and $\Pi_{\lambda}$ are the projectors onto the possibly degenerate eigenspaces. Then, we can expand
\begin{align}
    \ket{\psi} = \sum_{\lambda} \psi_{\lambda} \ket{\psi_{\lambda}},
\end{align}
where we defined the normalized projections of $\ket{\psi}$ onto the eigenspaces of $H$ such that $\Pi_{\lambda}\ket{\psi} = \psi_{\lambda}\ket{\psi_{\lambda}}$. 

For such a covariant set of states $\psi(t) = |\psi(t) \rangle\!\langle \psi(t)|$, the following theorem establishes that the \emph{pretty good measurement}~\cite{Bel75,belavkin1975PGM,Hol78,belavkin1988design,HW94,hughston1993PGM} is minimax optimal, and we obtain a closed form solution for the optimal minimax success probability. 
This result is well in line with known results on the optimality of the pretty good measurement in other covariant state discrimination and parameter estimation tasks~\cite{holevo_covariant_1979,ban1997optimum,hayashi1998asymptotic,chiribella_extremal_2004,holevo2011probabilistic,chiribella2004covariant,chiribella_extremal_2006}. 
\begin{theorem}[Minimax optimal measurement]\label{thm:pgm_is_minimax_optimal}
For a state set $\psi(t)$ given by a pure initial probe state $\rho_0 = |\psi\rangle\!\langle \psi |$ evolving under a Hamiltonian with integer eigenvalue differences for time $t \in [0, 2\pi]$, the pretty good measurement
\begin{align}
    Q_{\mathrm{PGM}}(t) &= R^{-1/2} \psi(t) R^{-1/2}, \text{ where } R = \int \diff t \, \psi(t),
\end{align}
achieves the optimal minimax success probability, equal to
\begin{align}
    \overline{\eta}^*(\delta, \psi) &= \sum_{\smash{\lambda, \lambda'}} |\psi_{\lambda}| |\psi_{\lambda'}| \hatw_{\delta}({\lambda - \lambda'}),
\end{align}
where \smash{$\hat{w}_{\delta}({\omega})  = \sin(\delta \omega) / (\pi \omega)$} is the Fourier transform of the rectangular window $w_{\delta}$ at frequency $\omega$.
\end{theorem}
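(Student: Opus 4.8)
The plan is to exploit the $U(1)$ group-covariance to collapse the minimax problem onto the uniform prior, and then to certify optimality of the pretty good measurement (PGM) by exhibiting a matching feasible point of the dual program of Proposition~\ref{prop:sdp_formulation}. \emph{Reduction to the uniform prior.} For any POVM the worst case over $t$ is no larger than the average over $t$, so $\overline{\eta}^{*}(\delta,\psi)\le\eta^{*}(\delta,\mu_{\mathrm u},\psi)$ with $\mu_{\mathrm u}=1/(2\pi)$ on $[0,2\pi]$. Conversely, since $\psi(t)=U(t)\psi_0U(t)^\dagger$ is covariant and $R=\int\diff t\,\psi(t)$ commutes with $H$, the measurement $Q_{\mathrm{PGM}}(t)=R^{-1/2}\psi(t)R^{-1/2}$ is itself covariant, so $\Tr[\psi(t)Q_{\mathrm{PGM}}(\tau)]$ depends only on $\tau-t$ and the success probability is independent of $t$; its worst-case value then coincides with its uniform-prior average. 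It therefore suffices to show the PGM is Bayes-optimal for $\mu_{\mathrm u}$ and to evaluate its value, after which the two bounds pinch and yield minimax optimality.

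\emph{Evaluation.} Because all eigenvalue gaps are integers, $\int_0^{2\pi}\diff t\,e^{-i(\lambda-\lambda')t}$ vanishes unless $\lambda=\lambda'$, whence $R=2\pi\sum_\lambda|\psi_\lambda|^2\dyad{\psi_\lambda}$; writing $\psi_\lambda=|\psi_\lambda|e^{i\phi_\lambda}$ one finds $R^{-1/2}\psi_0R^{-1/2}=\tfrac{1}{2\pi}\dyad{\chi}$ with $\ket{\chi}=\sum_\lambda e^{i\phi_\lambda}\ket{\psi_\lambda}$. Substituting into the uniform-prior success probability and using $\int\diff\tau\,w_\delta(\tau)e^{-i\omega\tau}=2\pi\hat{w}_\delta(\omega)$, the phases $e^{i\phi_\lambda}$ cancel and one obtains $\sum_{\lambda\lambda'}|\psi_\lambda||\psi_{\lambda'}|\hat{w}_\delta(\lambda-\lambda')$, the claimed value.

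\emph{Dual certificate.} I would certify optimality through the dual program in Eq.~\eqref{eq:sdp_dual_formulation_Bayesian}, taking the candidate $X=\int\diff t\,A(t)Q_{\mathrm{PGM}}(t)$ with $A(t)=(w_\delta*[\mu_{\mathrm u}\cdot\psi])(t)=U(t)A(0)U(t)^\dagger$; complementary slackness with the PGM then holds by construction and $\Tr[X]$ equals the primal value. Since $R^{-1/2}$ commutes with $U(t)$, integrating $U(t)\,A(0)R^{-1/2}\psi_0R^{-1/2}\,U(t)^\dagger$ over a full period dephases it onto the eigenbasis, giving the diagonal $X=\sum_\lambda y_\lambda\dyad{\psi_\lambda}$ with $y_\lambda=|\psi_\lambda|\sum_{\lambda'}|\psi_{\lambda'}|\hat{w}_\delta(\lambda-\lambda')$, which is Hermitian with the correct trace. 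As $X$ is $U(t)$-invariant, dual feasibility $X\ge A(t)$ for all $t$ reduces to the single inequality $X\ge A(0)$, and after conjugating by $\mathrm{diag}(e^{-i\phi_\lambda})$ this is exactly $L\succeq0$ for the weighted graph Laplacian $L=\mathrm{diag}(m\mathbf{1})-m$ of $m_{\lambda\lambda'}=|\psi_\lambda||\psi_{\lambda'}|\hat{w}_\delta(\lambda-\lambda')$.

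\emph{Main obstacle.} The crux is the positivity $L\succeq0$, i.e.\ $\sum_{\lambda\lambda'}|\psi_\lambda||\psi_{\lambda'}|\hat{w}_\delta(\lambda-\lambda')\,|x_\lambda-x_{\lambda'}|^2\ge0$ for all $x$. This is immediate when the weights are nonnegative, but $\hat{w}_\delta$ is a sinc that flips sign once $\delta|\lambda-\lambda'|>\pi$, so the nonnegative-weight argument fails in general (indeed the summand is not pointwise controlled). I would instead write $\hat{w}_\delta(\omega)=\tfrac{1}{2\pi}\int_{-\delta}^{\delta}\diff s\,e^{-i\omega s}$ and read the quadratic form as a Fourier-positivity statement — positive-definiteness of the periodic window as a function on the covariance group — which is precisely the property underlying PGM optimality in covariant problems. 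Verifying this sign condition on the set of eigenvalue differences, where the integer gaps, the $2\pi$ recurrence, and the admissible range of $\delta$ all enter, is the technically delicate step on which the whole argument rests.
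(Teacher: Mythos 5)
Your proposal follows the same route as the paper's own proof: reduce to the uniform prior by covariance, evaluate the PGM value by Parseval, and certify optimality through the dual program using the diagonal certificate $X=\sum_\lambda|\psi_\lambda|\bigl(\sum_{\lambda'}|\psi_{\lambda'}|\hat{w}_\delta(\lambda-\lambda')\bigr)\dyad{\psi_\lambda}$ — which is exactly the dual variable the paper extracts from complementary slackness at zero frequency. Your evaluation of the PGM value and your reduction of dual feasibility to the single inequality $X\ge (w_\delta*[\mu_{\mathrm u}\cdot\psi])(0)$, i.e.\ to positive semidefiniteness of the weighted Laplacian $L$ with weights $m_{\lambda\lambda'}=|\psi_\lambda||\psi_{\lambda'}|\hat{w}_\delta(\lambda-\lambda')$, are both correct. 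But you then explicitly leave the positivity $L\succeq 0$ open, so the proposal is an incomplete proof: the one step on which everything rests is not carried out.

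That said, the obstacle you isolate is genuine, and the paper does not in fact overcome it either. The paper closes the feasibility check by splitting the inequality frequency by frequency and verifying that each $2\times 2$ block $\bigl(\begin{smallmatrix}|a|&a\\a^{*}&|a|\end{smallmatrix}\bigr)$ with $a=\psi_\lambda^{*}\psi_{\lambda-\omega}$ is positive semidefinite; since each block enters the sum multiplied by $\hat{w}_\delta(\omega)$, this is precisely the ``nonnegative edge weights imply $L\succeq0$'' argument, valid only if $\hat{w}_\delta(\lambda-\lambda')\ge0$ for every eigenvalue gap. The sinc changes sign once $\delta\,|\lambda-\lambda'|>\pi$, and there the certificate really is infeasible: for the GHZ probe $(\ket{0}+\ket{n})/\sqrt{2}$ the claimed value is $\delta/\pi+\sin(n\delta)/(\pi n)$, which for $\sin(n\delta)<0$ falls below the random-guessing value $\delta/\pi$ attained by the trivial POVM $Q(\tau)=\mathbb{I}/2\pi$, so the PGM cannot be minimax optimal in that regime and the stated formula cannot equal $\overline{\eta}^{*}$. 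The theorem (and both proofs) therefore require the implicit restriction that $\hat{w}_\delta$ be nonnegative on the set of eigenvalue differences, e.g.\ $\delta\cdot\max_{\lambda,\lambda'}|\lambda-\lambda'|\le\pi$. Under that restriction your Laplacian argument closes immediately (nonnegative weights), and your instinct to treat the sign condition as the crux — rather than asserting the per-frequency reduction as the paper does — is the right one; but as submitted the proof is not complete.
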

The above theorem establishes a direct relation between the amplitudes of the probe state $\ket{\psi}$ and 
the minimax success probability.
It especially shows that only the spectrum of the Hamiltonian and the \emph{absolute values} of the amplitudes matter. The result holds for any window function. The proof exploits strong duality and complementary slackness to establish a formula for the optimal dual variable and is presented in Section~\ref{ssec:pure_covariant_evolution} of the supplementary material.

One important consequence of Theorem~\ref{thm:pgm_is_minimax_optimal} is that it greatly simplifies the search for a minimax optimal probe state. If we arrange the absolute values of the amplitudes in a vector $\ppsi \coloneqq ( |\psi_{\lambda}| )_{\lambda}$ and construct the matrix associated to the Fourier transform, $W_{\lambda ,  \lambda'} \coloneqq \hat{w}_{\delta}(\lambda - \lambda')$, then the optimal minimax success probability is given by the quadratic form
\begin{align}
    \overline{\eta}^*(\delta, \psi) &= \langle \ppsi, W \ppsi \rangle.
\end{align}
The optimal probe state is hence obtained by solving the following optimization problem:
\begin{align}\label{eqn:opt_probe_state}
    \ppsi^{*} \coloneqq \argmax_{\ppsi} \{ \langle \ppsi, W \ppsi \rangle \pipe \psi_{\lambda} \geq 0 \text{ for all } \lambda, \lVert \ppsi \rVert_2 = 1 \}.
\end{align}
Because of the positivity constraint on the entries of the vector, this is in general an NP-hard optimization problem~\cite{murty1987np-complete}.

\section{Phase estimation with an ensemble of \texorpdfstring{spin-$\tfrac{1}{2}$}{spin-1/2} particles}
\label{sec:minimax_analysis_phase_estimation}   
We now turn our attention to the special phase estimation on a spin chain~\cite{demkowicz-dobrzanski_optimal_2011}. 
This is a covariant problem in the above sense with the single-spin Hamiltonian given by $H = \operatorname{diag}(0,1)$. If we have $n$ spins separately evolving under this Hamiltonian, the effective Hamiltonian is given by summing up the local terms on the individual copies
\begin{align}
    H_n = \sum_{i=1}^n H_i, \ \ H_i = \bbI^{\otimes i-1} \otimes H \otimes \bbI^{\otimes n- i}.
\end{align}
As we have seen in Theorem~\ref{thm:pgm_is_minimax_optimal}, only the spectrum of the Hamiltonian matters. For the Hamiltonian $H_n$, it is given by $\operatorname{spec}(H_n) = \{ 0, 1, \dots, n\}$, which grows linearly in $n$. We can therefore treat the equivalent problem of a Hamiltonian with spectral decomposition $H = \sum_{k = 0}^n k |k \rangle\!\langle k|$, where the eigenstates $\ket{k}$ are understood to be any eigenstate of the Hamiltonian $H_n$ with energy $k$, \textit{e.g.},   $\ket{3}$ could be $\ket{001101}$ for $n=6$. 

Applying Theorem~\ref{thm:pgm_is_minimax_optimal} allows us to compute the optimal minimax success probability (Fig.~\ref{fig:error_probability_phase_estimation_plot}) and the optimal minimax tolerance (Fig.~\ref{fig:tolerance_phase_estimation_publication_loglog}) for different kinds of probe states. We note that our analysis of the asymptotics of the minimax success probability has significant overlap with prior work by Imai and Hayashi in Ref.~\cite{imai_fourier_2009}. They discuss the asymptotic distribution of phase estimates and discuss the asymptotic rate.

\begin{figure}
    \centering
    \includegraphics{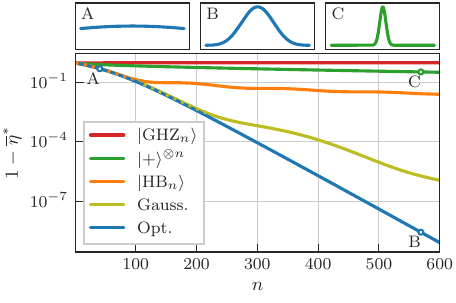}
    \caption{Optimal error probability of quantum metrology that can be guaranteed for any prior distribution for different probes in the phase estimation scenario for $\overline\delta = 0.04$. We compare a generalized GHZ state (Eq.~\eqref{eqn:def_ghz_n}, red), a tensor power of plus states (green), the Holland-Burnett state (Eq.~\eqref{eqn:def_HB_n}, orange), the Gaussian state (Eq.~\eqref{eqn:def_gauss_probe}, yellow) and the optimal state (Eq.~\eqref{eqn:opt_probe_state}, blue). The generalized GHZ state never performs well because it cannot resolve the time globally. For $n \ll 1/\delta$, the Holland-Burnett state performs almost optimally but has comparable asymptotic performance to the tensor power probe. The Gaussian probe performs almost optimally in a larger regime than the Holland-Burnett and has an intermediary asymptotic. The overset plots show the energy profiles of the optimal probe for $n = 41$ (A), the optimal probe for $n = 561$ (B) and the tensor power probe for $n=561$ (C). The asymptotic rate in the i.i.d. case is consistent with $R \approx \delta^2$, whereas the optimal entangled rate is consistent with $R \approx \delta$. We present additional numerical results for different values of $\overline\delta$ in Section~\ref{ssec:additional_numerics} of the supplementary material.}
    \label{fig:error_probability_phase_estimation_plot}
\end{figure}

\begin{figure}
    \centering
    \includegraphics{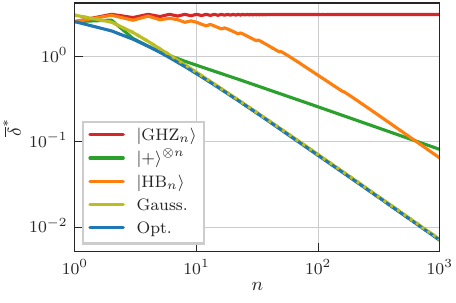}
    \caption{Optimal tolerance of quantum metrology that can be guaranteed for any prior distribution for different probes in the phase estimation scenario for fixed success probability $\overline\eta = 0.99$. We compare a generalized GHZ state (Eq.~\eqref{eqn:def_ghz_n}, red), an tensor power of plus states (green), the Holland-Burnett state (Eq.~\eqref{eqn:def_HB_n}, orange), the Gaussian state (Eq.~\eqref{eqn:def_gauss_probe}, yellow) and the optimal state (Eq.~\eqref{eqn:opt_probe_state}, blue). 
    The generalized GHZ state never performs well because it cannot resolve the time globally. The tensor power of plus states performs adequately for small $n$, but only achieves standard quantum limited scaling $O(1/\sqrt{n})$ asymptotically.
    For small $n$, the Holland-Burnett state does not perform satisfactorily, but achieves the optimal Heisenberg scaling $O(1/n)$ in the asymptotic limit. The Gaussian probe performs almost optimally except for very small $n$. We present additional numerical results for different values of $\eta$ in Section~\ref{ssec:additional_numerics} of the supplementary material.}
    \label{fig:tolerance_phase_estimation_publication_loglog}
\end{figure}

Our first and most obvious candidate for a probe state is a generalized \emph{Greenberger-Horne-Zeilinger (GHZ)} state
\begin{align}\label{eqn:def_ghz_n}
    \ket{\mathrm{GHZ}_n} := \frac{1}{\sqrt{2}}( \ket{0} + \ket{n} ),
\end{align}
which is optimal in the standard approach to quantum metrology~\cite{PhysRevA.46.R6797,PhysRevA.50.67,huelga_improvement_1997}. 
However, in the minimax setting, it fails spectacularly -- with a minimax success probability amounting to random guessing and a similarly high tolerance (see Figs.~\ref{fig:error_probability_phase_estimation_plot} and~\ref{fig:tolerance_phase_estimation_publication_loglog}). 
This is an immediate consequence of the fact that the standard approach to quantum metrology is concerned with \emph{local} estimation. In our case, however, the probe needs to be able to perform well in a task of \emph{global} estimation, \textit{i.e.}, the probe state should allow us to discern values in the whole interval $[0, 2\pi]$. The recurrence time of $\sim 2\pi/n$ of the generalized GHZ state, means it can very well resolve small differences in values, but not larger ones. A clock can serve as a good metaphor for this phenomenon: If we want to tell the time, we need to make use of the hour, minute and second hand. In this picture, the generalized GHZ state corresponds to a clock with only a second hand -- which is very suitable if you want to time a short sprint but useless when telling the time of the day.

The analogy of a clock inspires the use of another state, namely the \emph{Holland-Burnett (HB)} state
\begin{align}\label{eqn:def_HB_n}
    \ket{\mathrm{HB}_n} := \frac{1}{\sqrt{n+1}} \sum_{k=0}^n \ket{k},
\end{align}
which consists of an equal superposition of all energy eigenstates. Metaphorically, this state uses all the available hands of the clock equally. 
The Holland-Burnett state indeed has a much more desirable performance. As shown in Fig.~\ref{fig:error_probability_phase_estimation_plot}, this probe state achieves almost optimal success probability in the regime where $n \ll 1/\delta$. This behavior can be explained by expanding the Fourier transform of the window function, $\hat{w}_{\delta}(\omega) = \frac{\delta}{\pi}\operatorname{sinc}(\delta \omega)$, around $\omega = 0$, because the largest frequency scales as $O(n)$:
\begin{align}
    \hat{w}_{\delta}(\omega) = \frac{\delta}{\pi} \left[ 1 - O(\delta^2 \omega^2)\right].
\end{align}
The zeroth order contribution to this term is given by
\begin{align}
    \overline{\eta}^{*}(\delta, \psi) &= \frac{\delta}{\pi} \sum_{\lambda, \lambda'} |\psi_{\lambda}| |\psi_{\lambda'}| + O(\delta^2 \omega^2)\\
    &= \frac{\delta}{\pi} \lVert \ppsi \rVert_1^2 + O(\delta^2 \omega^2).\nonumber
\end{align}
In this limit, a probe that maximizes the one-norm of the amplitude vector is clearly optimal, which corresponds to the Holland-Burnett state. This state was also identified in recent work as a suitable probe state for (multi-)phase estimation~\cite{chesi_protocol_2023}.
In the case of the optimal tolerance, see Fig.~\ref{fig:tolerance_phase_estimation_publication_loglog}, we observe the inverse of this behavior: For small $n$, the performance is not satisfactory. Asymptotically, however, the Holland-Burnett state achieves the same Heisenberg scaling $\overline{\delta}^{*} = O(1/n)$ as the optimal probe. As we show in additional numerics presented in Section~\ref{ssec:additional_numerics} of the supplementary material, the critical value of $n$ at which the Holland-Burnett state starts to enter the Heisenberg-scaling regime increases with increasing success probability.

Next, we have analyzed the performance of a separable probe state. We chose i.i.d.\ copies of the optimal single-spin probe state, the $\ket{+} = (\ket{0} + \ket{1})/\sqrt{2}$ state. We observe in Fig.~\ref{fig:error_probability_phase_estimation_plot} that the success probability achieved with this state reaches towards unity much more slowly than the optimal probe state, with an asymptotic rate quadratically smaller. When looking at the optimal tolerance in Fig.~\ref{fig:tolerance_phase_estimation_publication_loglog}, we observe the expected asymptotic scaling of the standard quantum limit $\overline{\delta}^{*} = O(1/\sqrt{n})$.  

We additionally compare the aforementioned probe states with a \emph{Gaussian probe} whose amplitudes have a Gaussian shape: 
\begin{align}\label{eqn:def_gauss_probe}
\psi_{\lambda} \propto \exp\left( -\frac{1}{2}\frac{2\delta}{n+1}\left( \lambda - \frac{n}{2}  \right)^2\right).
\end{align}
The optimality of the choice of the standard deviation, $\sqrt{(n+1)/2\delta}$, is discussed in Section~\ref{ssec:minimax_analysis_of_phase_estimation} of the supplementary material. Regarding the success probability, we observe in Fig.~\ref{fig:error_probability_phase_estimation_plot} that the Gaussian probe performs close to optimally in a larger regime than the Holland-Burnett state and achieves better asymptotics, but also does not match the optimal probe. In the case of the tolerance, we do, however, observe in Fig.~\ref{fig:tolerance_phase_estimation_publication_loglog} that it nearly reproduces the optimal probe state. As we show in additional numerics presented in Section~\ref{ssec:additional_numerics} of the supplementary material, a gap in tolerance opens between the optimal and the Gaussian probe when the target success probability is increased, but the Gaussian probe preserves Heisenberg scaling and a good performance. 

Last but not least, we study the optimal probe state. Normally, we would need to solve the optimization problem of Eq.~\eqref{eqn:opt_probe_state}. However, in the special case we encounter here, in which $W$ is defined through the Fourier transform of a rectangular window function and the eigenvalue spectrum has no gaps, we can build on prior work studying a similar problem in the context of classical signal processing~\cite{slepian_prolate_1978}, where $W$ is referred to as the \emph{prolate matrix}. In Ref.~\cite{slepian_prolate_1978}, Slepian establishes that the largest eigenvector of the matrix $W$ is given by the so-called \emph{discrete prolate spheroidal sequence (DPSS)} of zeroth order. While he studies the problem without the positivity constraint on the eigenvector, we can build on a different result of Slepian to show that the largest eigenvector is always non-negative. In the case of phase estimation, we can therefore compute the optimal probe by finding the eigenvector associated to the largest eigenvalue of $W$. 

In our numerical investigations, we observe in Fig.~\ref{fig:error_probability_phase_estimation_plot} that the success probability tends towards unity with an asymptotic rate quadratically greater than what is possible with the separable probe and also outperforms the Gaussian probe significantly. It further achieves a clear Heisenberg scaling $\overline\delta^{*} = O(1/n)$ for the tolerance as evident in Fig.~\ref{fig:tolerance_phase_estimation_publication_loglog}. To get a feeling for the amplitude distributions of the different probes, we plot in Fig.~\ref{fig:error_probability_phase_estimation_plot} the amplitude distribution over the eigenvalues of the optimal probe states for $n = 41$ and $n = 561$. These plots clearly show that the optimal probe state for small $n$ has a flat spectrum, whereas asymptotically a moderately concentrated shape is optimal. We compare this with the corresponding shape of the tensor power probe at $n = 561$, which is much more concentrated, explaining its inferior performance.

Our numerical results make it quite clear that the setting of optimizing the success probability for a fixed tolerance and of optimizing the tolerance for a fixed success probability are qualitatively different. We especially see that probes that perform well in one setting do not necessarily perform well in the other. 

\begin{figure}
    \centering
    \includegraphics{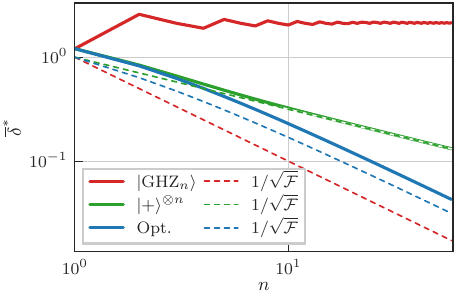}
    \caption{Optimal tolerance of quantum metrology that can be guaranteed for any prior distribution for different probes in the phase estimation scenario compared to the quantum Cramér-Rao bound embodied by the inverse square root of the quantum Fisher information. The success probability is fixed to $\overline\eta = \operatorname{erf}(1/\sqrt{2}) \approx 0.6827$, representing the probability that the value of a normally distributed random variable is within one standard deviation of the mean.
    We compare a generalized GHZ state (Eq.~\eqref{eqn:def_ghz_n}, red), an tensor power of plus states (green), and the optimal state (Eq.~\eqref{eqn:opt_probe_state}, blue). The quantum Cramér-Rao bound is shown in dashed lines.
    We observe that the quantum Cramér-Rao bound faithfully predicts the achievable precision in the case of i.i.d.\ copies for the chosen success probability, underpinning the interpretation that the quantum Cramér-Rao quantifies the i.i.d.\ case with a Gaussian shape for the fidelity curve. Both for the GHZ state and the optimal state, the quantum Cramér-Rao bound is way too optimistic, showcasing a decoupling of the quantum Cramér-Rao bound from the achievable performance in the few-shot and entangled regime.}
    \label{fig:tolerance_phase_estimation_publication_loglog_with_qfi}
\end{figure}

Finally, we also want to shine a light on the relation of the minimax tolerance with the quantum Cramér-Rao bound. To this end, in Fig.~\ref{fig:tolerance_phase_estimation_publication_loglog_with_qfi}, we plot the achievable tolerance for a subset of the states presented above together with the quantum Cramér-Rao bound for a fixed success probability of $\overline\eta = \operatorname{erf}(1/\sqrt{2}) \approx 0.6827$, representing the probability that the value of a normally distributed random variable is within one standard deviation of its mean. We observe that the quantum Cramér-Rao bound only faithfully predicts the achievable precision in the case of i.i.d.\ copies, but is overly optimistic otherwise. This underscores the interpretation that the quantum Cramér-Rao bound necessitates a degree of regularity of the underlying problem to be tight, and its connection to maximum-likelihood estimation which is optimal for Gaussian distributions. The sometimes overly optimistic estimate obtained from the quantum Cramér-Rao bound is especially evident for the GHZ state, which achieves the maximum quantum Fisher information, but the actual achievable tolerance is very bad. Nevertheless, these numerical experiments also suggest a positive result about the quantum Cramér-Rao bound. We observe that there exist settings where it gives a good measure of metrological precision, even in the global and non-asymptotic regime -- this is indicated by the fact that the agreement with the achievable precision is already very good at $n\approx 10$ repetitions. These results highlight that there are regimes where the quantum Cramér-Rao bound faithfully predicts the achievable precision, but that it is too optimistic in the case of few shots and entangled strategies.

\emph{Analytical results.}
The numerical observations presented in Fig.~\ref{fig:error_probability_phase_estimation_plot} motivate an analytical study of the asymptotics of the minimax error probability, extending the results of Section~\ref{sec:metrology_as_hypothesis_testing}. There, we established results on the asymptotic rate for i.i.d.\ copies of the same state, which corresponds to the case of the tensor power probe discussed above. We observe that when entangled probe states are allowed -- corresponding to the parallel setting -- the asymptotic rate is much improved, as the rates we observe numerically are consistent with
\begin{align}
    \overline{R}^{*}_{\mathrm{iid}}(\overline\delta) &\approx \overline\delta^2, \\
    \overline{R}^{*}_{\mathrm{par}}(\overline\delta) &\approx \overline\delta.
\end{align}
In the following, we make this observation rigorous.

As we already argued above, the optimal probe for phase estimation on a spin chain is given by the \emph{discrete prolate spheroidal sequence (DPSS)} of zeroth order~\cite{slepian_prolate_1978}. We can combine two results by Slepian to obtain the following result on the optimal rate in the parallel case: 
\begin{theorem}[Optimal minimax rate]\label{thm:opt_minimax_rate_entangled_phase_est}
    For a given minimax tolerance $0 < \overline\delta < \pi/2$, the parallel minimax error rate is given by
    \begin{align*}
        \overline{R}^{*}_{\mathrm{par}}(\overline\delta) &= \log \left( \frac{1 + \sin \frac{\overline\delta}{2}}{1 - \sin \frac{\overline\delta}{2}} \right) \\
        &= \overline\delta + O(\overline\delta^3).
    \end{align*}
\end{theorem}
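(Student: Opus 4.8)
The plan is to reduce the computation of $\overline{R}^{*}_{\mathrm{par}}(\overline\delta)$ to the asymptotics of the largest eigenvalue of the prolate matrix, and then to invoke Slepian's asymptotic eigenvalue formula. First I would recall from Theorem~\ref{thm:pgm_is_minimax_optimal} that in this covariant phase-estimation problem the optimal minimax success probability is the quadratic form $\overline{\eta}^{*}(\overline\delta,\psi)=\langle\ppsi, W\ppsi\rangle$ maximized over entrywise non-negative unit vectors $\ppsi$, where $W_{\lambda\lambda'}=\hat{w}_{\overline\delta}(\lambda-\lambda')=\sin(\overline\delta(\lambda-\lambda'))/(\pi(\lambda-\lambda'))$. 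In the parallel setting with $n$ spins the effective energy spectrum is $\{0,1,\dots,n\}$, so $W$ is the $(n+1)\times(n+1)$ Toeplitz (prolate) matrix; matching conventions with Slepian's normalization $\sin(2\pi W_{\mathrm S}(m-m'))/(\pi(m-m'))$ fixes the half-bandwidth $W_{\mathrm S}=\overline\delta/(2\pi)$, whence $\sin(\pi W_{\mathrm S})=\sin(\overline\delta/2)$.

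The second step is to remove the positivity constraint. Since the optimal probe is the zeroth-order discrete prolate spheroidal sequence and, by the further result of Slepian invoked above, this top eigenvector is entrywise non-negative, the constraint $\psi_\lambda\ge 0$ in Eq.~\eqref{eqn:opt_probe_state} is inactive, and the optimization over non-negative unit vectors coincides with the unconstrained Rayleigh quotient. Hence $\overline{\eta}^{*}_{\mathrm{par}}(\overline\delta)=\lambda_0(n)$, the largest eigenvalue of the $(n+1)$-dimensional prolate matrix, and the error probability is exactly $1-\lambda_0(n)$. Substituting into the rate definition of Eq.~\eqref{eqn:def_asymptotic_rate} gives
\[
    \overline{R}^{*}_{\mathrm{par}}(\overline\delta)=\lim_{n\to\infty}-\tfrac{1}{n}\log\bigl(1-\lambda_0(n)\bigr).
\]

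The heart of the argument is Slepian's asymptotic analysis of the discrete prolate matrix, which shows that for fixed bandwidth the spectral gap from unity closes geometrically,
\[
    1-\lambda_0(n)=P(n)\left(\frac{1-\sin(\overline\delta/2)}{1+\sin(\overline\delta/2)}\right)^{\!n}\bigl(1+o(1)\bigr),
\]
with a subexponential (polynomial-in-$n$) prefactor $P(n)$. Taking $-\tfrac1n\log$ removes both $P(n)$ and the $(1+o(1))$ factor, leaving $\overline{R}^{*}_{\mathrm{par}}(\overline\delta)=\log\frac{1+\sin(\overline\delta/2)}{1-\sin(\overline\delta/2)}$. The small-$\overline\delta$ expansion then follows from $\log\frac{1+s}{1-s}=2\operatorname{artanh}(s)$ with $s=\sin(\overline\delta/2)=\overline\delta/2-\overline\delta^{3}/48+\cdots$, giving $\overline{R}^{*}_{\mathrm{par}}(\overline\delta)=\overline\delta+\overline\delta^{3}/24+O(\overline\delta^{5})=\overline\delta+O(\overline\delta^{3})$.

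The main obstacle is the careful extraction of the exponential rate from Slepian's formula: one must match his normalization conventions for the bandwidth and matrix dimension (so that his index $N$ equals $n+1$, contributing only an $O(\log n)$ correction that vanishes under $-\tfrac1n\log$), confirm that the prefactor $P(n)$ is genuinely subexponential so that it does not affect the rate, and verify that the limit in Eq.~\eqref{eqn:def_asymptotic_rate} exists. A secondary point requiring care is the non-negativity of the zeroth DPSS, which is what licenses dropping the positivity constraint; since $W$ has entries of both signs, this is not a Perron--Frobenius statement and relies genuinely on Slepian's structural result. The restriction $\overline\delta<\pi/2$ ensures one is in the regime of his asymptotics where $\lambda_0$ is a simple, isolated eigenvalue tending to one.
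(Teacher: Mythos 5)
Your proposal is correct and follows essentially the same route as the paper: reduce the parallel minimax error to $1-\lambda_0(n)$ for the $(n+1)$-dimensional prolate matrix via Theorem~\ref{thm:pgm_is_minimax_optimal}, invoke Slepian's geometric-decay asymptotics for the top eigenvalue to extract the rate $\log\frac{1+\sin(\overline\delta/2)}{1-\sin(\overline\delta/2)}$, and justify dropping the positivity constraint via non-negativity of the zeroth-order DPSS. One small correction: the paper's positivity argument \emph{is} ultimately a Perron--Frobenius statement, just not applied to $W$ itself --- Slepian's structural result is that the zeroth DPSS is also the top eigenvector of a commuting tridiagonal matrix whose entries are non-negative precisely when $\overline\delta\leq\pi/2$, and Perron--Frobenius applied to that tridiagonal matrix yields the non-negativity you invoke.
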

The proof combines two results from Ref.~\cite{slepian_prolate_1978} and exploits the Perron-Frobenius theorem to establish positivity of the DPSS of zeroth order and is given in Section~\ref{ssec:minimax_analysis_of_phase_estimation} of the supplementary material.
While the DPSS has no closed-form, there exist efficient approximations involving the modified Bessel function of the first kind of zeroth order $I_0$~\cite{walden_accurate_1989}. In our case, the optimal probe can thus be approximated by choosing 
\begin{align}
    \psi_{\lambda} \propto I_0\left(\frac{\overline\delta n}{2} \sqrt{1 - \left(\frac{2 \lambda + 1}{n + 1} - 1\right)^2}\right)
\end{align}
and subsequently normalizing. 

We can use the result on the rate for i.i.d.\ probes of Theorem~\ref{thm:rate_upper_bound_delta} to calculate
\begin{align}
    \overline{R}^{*}_{\mathrm{iid}}(\overline\delta) \leq - \log \cos^2(\overline\delta) \approx \overline\delta^2
\end{align}
for small $\delta$, see Theorem~\ref{sthm:upper_bound_iid_rate_phase_estimation} of the supplementary material. Combined with the above Theorem~\ref{thm:opt_minimax_rate_entangled_phase_est} this implies that entangled strategies have a quadratic advantage in the asymptotic minimax rate, which can be understood as the rate analogue of the dichotomy between the standard quantum and Heisenberg limits.%

We further perform a theoretical analysis of the optimal standard deviation for the Gaussian probe. We can use tail bound estimates for the Gaussian distribution to prove that the optimal choice of standard deviation (see Eq.~\eqref{eqn:def_gauss_probe}) achieves half the optimal rate.
\begin{theorem}[Minimax rate for Gaussian probes]\label{thm:opt_minimax_rate_gaussian_probe_phase_est}
For a given minimax tolerance $\overline\delta > 0$, the Gaussian probe achieves the minimax error rate of
\begin{align}
    \overline{R}_{\mathrm{Gauss}}(\overline\delta) = \frac{\overline\delta}{2}.
\end{align}
\end{theorem}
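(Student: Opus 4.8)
The plan is to pass to the Fourier (phase) domain, where the closed-form minimax success probability of Theorem~\ref{thm:pgm_is_minimax_optimal} turns the out-of-window error into a Gaussian tail integral whose exponential decay rate can be read off directly. Throughout I write $n$ for the number of spins, so that the spectrum is $\{0,1,\dots,n\}$ and the rate is taken as $\overline{R}_{\mathrm{Gauss}}(\overline\delta) = \lim_{n\to\infty} -\tfrac1n\log\big(1-\overline{\eta}^*(\overline\delta,\psi)\big)$ with $\psi$ the $n$-spin Gaussian probe.

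First I would rewrite the success probability using the integral representation $\hat{w}_{\overline\delta}(\omega) = \frac{1}{2\pi}\int_{-\overline\delta}^{\overline\delta} e^{-i\omega\phi}\,\diff\phi$. Substituting this into the closed form $\overline{\eta}^*(\overline\delta,\psi) = \sum_{\lambda,\lambda'}|\psi_\lambda||\psi_{\lambda'}|\,\hat{w}_{\overline\delta}(\lambda-\lambda')$ and collecting the double sum into a single trigonometric polynomial, one obtains
\begin{equation}
\overline{\eta}^*(\overline\delta,\psi) = \frac{1}{2\pi}\int_{-\overline\delta}^{\overline\delta}|A(\phi)|^2\,\diff\phi, \qquad A(\phi) \coloneqq \sum_{\lambda} a_\lambda\, e^{-i\lambda\phi}, \quad a_\lambda \coloneqq |\psi_\lambda| \ge 0 ,
\end{equation}
while Parseval's relation (the $\lambda$ are integers) gives $\frac{1}{2\pi}\int_{-\pi}^{\pi}|A(\phi)|^2\,\diff\phi = \sum_\lambda a_\lambda^2 = 1$. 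Hence the error probability is exactly the fraction of spectral energy of $A$ lying outside the window, $1 - \overline{\eta}^*(\overline\delta,\psi) = \frac{1}{2\pi}\int_{\overline\delta \le |\phi| \le \pi}|A(\phi)|^2\,\diff\phi$. This reformulation holds for any probe.

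Next I would insert the Gaussian probe $a_\lambda \propto \exp\!\big(-\tfrac{\overline\delta}{n+1}(\lambda-\tfrac{n}{2})^2\big)$ and identify $|A(\phi)|^2$ as an approximate Gaussian in $\phi$. Viewing $A$ as a sampled, truncated Gaussian and applying Poisson summation (equivalently, recognizing $A$ as a theta function), I expect
\begin{equation}
|A(\phi)|^2 = C_n\, \exp\!\Big(-\tfrac{(n+1)\phi^2}{2\overline\delta}\Big)\,\bigl(1+o(1)\bigr) \quad\text{on } [-\pi,\pi],
\end{equation}
a centered Gaussian of variance $\sigma_n^2 = \overline\delta/(n+1)$. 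The corrections are twofold: the aliased copies produced by Poisson summation sit near $\phi=\pm\pi$ and are suppressed by $\exp(-(n+1)\pi^2/2\overline\delta)$, negligible for $\overline\delta<\pi$; and truncating the spectrum to $\{0,\dots,n\}$ perturbs $A$ near the window edge $|\phi|=\overline\delta$ only by a relative factor $O(n^{-1/2})$, since the endpoint amplitude $\exp(-\Theta(n\overline\delta/4))$ is of the same exponential order as $A(\overline\delta)$ but carries a vanishing prefactor.

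Finally, the out-of-window energy is a Gaussian tail, which I would bound two-sidedly by the Mills-ratio estimates $\tfrac{z}{z^2+1}\varphi(z)\le 1-\Phi(z)\le \tfrac{1}{z}\varphi(z)$ with $z=\overline\delta/\sigma_n=\sqrt{(n+1)\overline\delta}$. Both bounds share the leading exponent $z^2/2=(n+1)\overline\delta/2$, so $1-\overline{\eta}^*(\overline\delta,\psi)=\exp\!\big(-\tfrac{(n+1)\overline\delta}{2}(1+o(1))\big)$, whence $\overline{R}_{\mathrm{Gauss}}(\overline\delta)=\overline\delta/2$. The main obstacle is the middle step: making the Gaussian approximation of $|A(\phi)|^2$ rigorous with errors that are subexponential in $n$ uniformly on the relevant range, so that neither the discretization (aliasing) nor the spectral truncation shifts the exponent $(n+1)\overline\delta/2$. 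Concretely this demands matching upper and lower bounds on the energy concentrated at the window edge $|\phi|=\overline\delta$, which is precisely where the saddle of the tail integral sits, and where the truncation correction is most delicate.
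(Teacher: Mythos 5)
Your reduction is the same one the paper uses: via Theorem~\ref{thm:pgm_is_minimax_optimal} the error is $\frac{1}{2\pi}\int_{\overline\delta\le|\phi|\le\pi}|A(\phi)|^2\,\diff\phi$ with $A(\phi)=\sum_\lambda|\psi_\lambda|e^{-i\lambda\phi}$ (the paper writes this as $f(\tau)=|p(\tau)|^2$), and both arguments land on the Gaussian-tail exponent $(n+1)\overline\delta/2$ coming from the balanced choice $\sigma_n^2=\overline\delta/(n+1)$. The difference is in how the step you flag as the obstacle is handled. You propose to show that $|A(\phi)|^2$ is \emph{pointwise} a Gaussian up to subexponential relative error uniformly near the saddle $|\phi|=\overline\delta$, which is delicate precisely because the spectral-truncation perturbation is of the same exponential order there. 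The paper avoids pointwise control altogether: it starts from the exact wrapped normal $f_\sigma$, writes $1-\overline\eta = 2\int_{\overline\delta}^{\pi}f_\sigma + 4\pi\sum_{\omega>n}\hat{w}(\omega)\hat{f}_\sigma(\omega)$, and bounds the two terms \emph{additively} --- the first by the standard $\operatorname{Erfc}$ tail bound with exponent $\overline\delta^2/2\sigma^2$, the second by a Gaussian tail in frequency with exponent $\sigma^2(n+1)^2/2$ --- then chooses $\sigma=\sqrt{\overline\delta/(n+1)}$ to equate the exponents. Since each term is controlled by Parseval and a one-dimensional tail estimate, no uniform approximation of $|A(\phi)|^2$ near the window edge is ever needed; this is the cleanest way to close your gap and you should adopt it. Two further remarks: (i) the paper must also verify that the truncated, renormalized probe is a legitimate state whose renormalization constant stays bounded, which your sketch omits; (ii) note that this argument (and the paper's) only yields the upper bound $1-\overline\eta\le e^{-(n+1)\overline\delta/2}\cdot\mathrm{poly}(n)$, i.e.\ $\overline{R}_{\mathrm{Gauss}}\ge\overline\delta/2$ --- your proposed two-sided Mills-ratio bound is what would be needed for the stated equality, and it is exactly the half that requires the pointwise lower bound on $|A(\phi)|^2$ that you identify as difficult.
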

The proof is presented in Section~\ref{ssec:minimax_analysis_of_phase_estimation} of the supplementary material.

Building on the previous result, we can also give a guarantee on the asymptotic tolerance achieved by the Gaussian probe.
\begin{observation}[Asymptotic tolerance of Gaussian probe]\label{obs:asymtptotic_tolerance_gauss}
For a given minimax success probability $\overline{\eta}$, the Gaussian probe achieves a minimax tolerance of 
\begin{align}
    \overline{\delta}_{\mathrm{Gauss}} = \frac{\alpha}{n+1},
\end{align}
where 
\begin{align}
\alpha \approx 2 \log \left( \frac{2}{\pi (1-\overline{\eta})} \right),
\end{align}
up to logarithmic factors.
\end{observation}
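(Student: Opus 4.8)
The plan is to reuse the Gaussian tail estimate that already underlies Theorem~\ref{thm:opt_minimax_rate_gaussian_probe_phase_est} and to invert the resulting relation between error probability and tolerance. First I would rewrite the closed-form minimax success probability of Theorem~\ref{thm:pgm_is_minimax_optimal} as a window concentration of a Fourier series. Since $\hat{w}_\delta(\lambda-\lambda') = \frac{1}{2\pi}\int_{-\delta}^{\delta}e^{i(\lambda-\lambda')t}\,\diff t$, the success probability factorizes as
\begin{align}
    \overline{\eta}^*(\delta,\psi) = \frac{1}{2\pi}\int_{-\delta}^{\delta}\Big|\textstyle\sum_\lambda |\psi_\lambda|\,e^{i\lambda t}\Big|^2\,\diff t ,
\end{align}
so that, using $\sum_\lambda|\psi_\lambda|^2 = 1$ and Parseval over $[-\pi,\pi]$, the error probability $1-\overline{\eta}^*$ is exactly the fraction of the spectral weight of $f(t) = \sum_\lambda |\psi_\lambda| e^{i\lambda t}$ lying outside the window $[-\delta,\delta]$.

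For the Gaussian probe of Eq.~\eqref{eqn:def_gauss_probe}, whose amplitudes sample a continuous Gaussian of standard deviation $\sqrt{(n+1)/(2\delta)}$ in $\lambda$, the function $|f(t)|^2$ is to leading order itself a Gaussian in $t$ of variance $\delta/(n+1)$. The error probability then reduces to a one-dimensional Gaussian tail,
\begin{align}
    1 - \overline{\eta}^*_{\mathrm{Gauss}}(\delta,n) \approx 2\,Q\!\left(\sqrt{(n+1)\,\delta}\right),\qquad Q(z) \coloneqq \frac{1}{\sqrt{2\pi}}\int_z^\infty e^{-u^2/2}\,\diff u ,
\end{align}
which reproduces the rate $\overline{R}_{\mathrm{Gauss}} = \overline\delta/2$ of Theorem~\ref{thm:opt_minimax_rate_gaussian_probe_phase_est}. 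Using the elementary estimate $Q(z)\le \frac{1}{\sqrt{2\pi}\,z}e^{-z^2/2}\le \frac{1}{\pi}e^{-z^2/2}$, valid for $z\ge\sqrt{\pi/2}$, I obtain the clean bound $1-\overline{\eta}^*_{\mathrm{Gauss}}(\delta,n)\le \frac{2}{\pi}e^{-(n+1)\delta/2}$.

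By Definition~\ref{def:metrological_tolerance}, the minimax tolerance is the smallest $\delta$ for which $\overline{\eta}^*_{\mathrm{Gauss}}(\delta,n)\ge\overline\eta$. Requiring the tail bound to be at most $1-\overline\eta$ and solving $\frac{2}{\pi}e^{-(n+1)\delta/2} = 1-\overline\eta$ yields
\begin{align}
    \overline{\delta}_{\mathrm{Gauss}} = \frac{2}{n+1}\log\!\left(\frac{2}{\pi(1-\overline{\eta})}\right) = \frac{\alpha}{n+1},
\end{align}
with $\alpha = 2\log\frac{2}{\pi(1-\overline{\eta})}$, exactly as claimed. In particular the factor $\tfrac{2}{\pi}$ inside the logarithm is precisely the prefactor arising from the Gaussian-tail bound, and the Heisenberg $1/(n+1)$ scaling follows from the $\sqrt{(n+1)\delta}$ argument of $Q$.

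The main obstacle is making the continuum Gaussian approximation rigorous, i.e.\ controlling the discrepancy between the discrete, finite-range sum $f(t)=\sum_{\lambda=0}^n|\psi_\lambda|e^{i\lambda t}$ and the idealized continuous Gaussian transform used above. This calls for Poisson summation to handle the unit spacing in $\lambda$ (the aliased copies being suppressed because $f$ is peaked near $t=0$ on the scale $\sqrt{\delta/(n+1)}\ll 2\pi$), together with an estimate of the edge effects from truncating the Gaussian to $\lambda\in\{0,\dots,n\}$ and of the normalization constant. These corrections, along with the slack incurred by replacing $\frac{1}{\sqrt{2\pi}\,z}$ with $\frac{1}{\pi}$, are slowly varying in $1-\overline{\eta}$ and perturb $\alpha$ only at the level of a $\log\log\frac{1}{1-\overline{\eta}}$ term, which is the precise sense in which the stated expression for $\alpha$ holds \emph{up to logarithmic factors}.
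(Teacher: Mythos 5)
Your proposal follows essentially the same route as the paper: the paper's argument simply takes the bound already established in the proof of Theorem~\ref{thm:opt_minimax_rate_gaussian_probe_phase_est}, namely Eq.~\eqref{eqn:error_rate_upper_bound_gauss}, substitutes $\delta=\alpha/(n+1)$, and rearranges to get $\alpha + O(\log\alpha)+O(\log n) \leq 2\log\tfrac{2}{\pi(1-\overline{\eta})}$. Your real-space tail computation $2Q(\sqrt{(n+1)\delta})\le\frac{2}{\pi}e^{-(n+1)\delta/2}$ reproduces exactly the exponent and the $2/\pi$ prefactor that drive that bound, and the inversion step is identical.

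One quantitative claim in your write-up is off, however. The ``clean bound'' $1-\overline{\eta}\le\frac{2}{\pi}e^{-(n+1)\delta/2}$ is not what the full analysis delivers: the frequency-truncation tail that you defer to the final paragraph is not subleading relative to the real-space tail. With the balanced choice $\sigma=\sqrt{\delta/(n+1)}$ it contributes a term of the same exponential order but with prefactor $\sim\sqrt{(n+1)/\delta}=(n+1)/\sqrt{\alpha}$, which is visible in Eq.~\eqref{eqn:error_rate_upper_bound_gauss}. Consequently the correction to $\alpha$ is $O(\log n)+O(\log\alpha)$, not merely a $\log\log\frac{1}{1-\overline{\eta}}$ perturbation as you assert; the ``up to logarithmic factors'' in the observation is there precisely to absorb the $O(\log n)$ term, and your accounting of the error budget should be amended accordingly. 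Your suggestion of Poisson summation to control aliasing is a reasonable alternative to the paper's direct comparison of the wrapped and unwrapped Gaussians, but it does not remove this $O(\log n)$ contribution, which comes from the hard truncation of the spectrum to $\{0,\dots,n\}$ rather than from the unit spacing.
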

The argument is likewise presented in Section~\ref{ssec:minimax_analysis_of_phase_estimation} of the supplementary material.

\section{Extensions and connections to other fields}\label{sec:learning_from_quantum_systems}
This section is dedicated to exploring the various connections and possible generalizations of our definitions and results to other areas of quantum metrology and quantum information theory.

\subsection{Multi-parameter quantum metrology}\label{sec:multi_parameter_metrology}
Measuring multiple parameters at the same time
~\cite{liu_quantum_2020,albarelli_evaluating_2019} 
creates additional challenges, such as having to reconcile measurements~\cite{belliardo_incompatibility_2021,lu_incorporating_2021} and probe states~\cite{albarelli_probe_2022} that are optimal for each parameter but might be incompatible.
It furthermore offers a framework to study networks of quantum sensors~\cite{proctor_multi-parameter_2018}
and the optimal estimation of functions of multiple parameters~\cite{qian_heisenberg-scaling_2019}.

We can capture arbitrary instances of multivariate quantum metrology by replacing the parameter space $\bbR$ with an arbitrary set $\calX$.
We equip the set $\calX$ with a positive real-valued function $d(x,y)$ that
quantifies the estimation error associated with an estimate $y$ when the true value of the parameter is $x$.  A natural choice
for $d(x,y)$ might be a suitable distance measure. 
In this case, the success probability for a given set of states $x \mapsto \rho(x)$, prior distribution $\mu(x)$ and POVM $y \mapsto Q(y)$, is
\begin{align}
\eta(\delta, \mu, \rho, Q) = \int_{\calX} \diff \mu(x) \int_{\calX} \diff y \,
    w_{\delta}(d(x, y))
    \Tr[ \rho(x) Q(y)].
\end{align}
The definitions for the tolerance and sample complexity, the corresponding minimax quantities and the optimal quantities follow analogously as in the previous sections.
Some of our proofs extend naturally to general parameter sets $\calX$.
In Section~\ref{ssec:beyond_univariate_metrology} of the supplementary material, we %
give a generalization of Theorem~\ref{thm:succ_prob_upper_bound_mht_delta_window} to the multivariate case %
and use it to derive a multi-parameter analog of Corollary~\ref{thm:success-probability-upper-bound-via-two-point-method}.

\subsection{Confidence region tomography and shadow tomography}

A particularly well-studied variant of multivariate quantum metrology concerns the task of \emph{state tomography}, in which the parameter space $\calX$ is taken to be the quantum state space itself. 
A suitable distance measure, such as the infidelity or the trace distance, typically quantifies the estimation error.

This extension of our framework connects with a series of works in quantum tomography on establishing
confidence regions in state space given measurement data~\cite{Christandl2012_Tomo,BlumeKohout2012arXiv_Tomo,Faist2016PRL_practical,Wang2019PRL_polytopes}.
Confidence region estimators process the measurement data to output a subset of the state space (the confidence region)
in which the true state lies with high probability.  The region can furthermore be specified as the set of all states that are at least $\delta$-close to some reference state in a suitable distance measure. 

Another setting that connects to the multi-parameter estimation version of our framework is \emph{shadow tomography}~\cite{aaronson_shadow_2018,huang_predicting_2020}.
Shadow tomography aims at predicting the expectation values of a set of $M$ observables
$\calO = \{ O_i \}_{i=1}^M$ when evaluated on a given quantum state $\rho$.
If we define a distance measure
\begin{align}
    d_{\calO}(\rho, \sigma) = d_{\calO}(\rho - \sigma) \coloneqq \sup_{O \in \calO} | \Tr[ O (\rho - \sigma )]|,
\end{align}
then shadow tomography with precision $\delta$ is equivalent to finding an approximation of the quantum state
$\hat{\rho}$ that fulfills $d_{\calO}(\rho, \hat\rho) \leq \delta$.
Defining a measurement scheme then corresponds to a POVM with effects labeled by quantum states,
$Q(\hat\rho)$, and the minimax success probability is hence
\begin{align}
    \overline\eta(\delta, Q, \rho) = \min_{\rho} \int \diff \hat\rho \, \Tr[ \rho \, Q(\hat\rho)] w_{\delta}(d_{\calO}(\rho - \sigma)),
\end{align}
where both the minimization and integration are over all quantum states.
A probabilistic procedure introduced in Ref.~\cite{huang_predicting_2020} achieves the sample complexity
\begin{align}
    \overline{n}^{*}(\eta, \delta, \rho) \leq O\left( \frac{1}{\delta^2} \log \left(\frac{M}{1-\eta}\right) \max_{1 \leq i\leq M} \lVert O_i \rVert_{\mathrm{shadow}}^2\right),
\end{align}
where the \emph{shadow norm} $\lVert \cdot \rVert_{\mathrm{shadow}}$
captures properties of the particular randomized protocol used. 
We thus see that contemporary techniques like shadow tomography are captured by our PAC metrology framework.

In the same spirit, other tomography tasks, \textit{e.g.}\ the tomography of quantum channels and non-Markovian processes in the form of quantum combs, can be considered in the PAC metrology framework. Exploring the ultimate limitations of these tasks would constitute an intriguing direction of future work.

\subsection{Cryptography and adversarial parameter estimation}

There is %
an emerging subfield of quantum metrology concerned with its intersection with cryptography~\cite{shettell2022cryptographic,shettell2022private,faist2022time-energy}.
As an example, we might seek a metrological protocol where the precision with which an
eavesdropper might estimate a parameter should be as low as possible.
This regime corresponds to a regime of small success probability, $\eta \to 0$.

We %
make use of the fact that the success probability and tolerance have a functional
relationship that corresponds to an inversion of the success probability seen
as a function of the tolerance. %
This allows us to analyze the limit of small success probability for smooth, \emph{i.e.}, arbitrarily often differentiable, measurements $Q(\tau)$.
\begin{proposition}\label{prop:tolerance_for_small_eta}
For a given set of states $\rho(t)$, possibly with prior $\mu(t)$, a smooth measurement $Q(\tau)$ and a small success probability $\eta$, the Bayesian tolerance is given by
\begin{align}
    \delta(\eta, \mu, \rho, Q) &= \frac{\eta}{2} \left( \int \diff \mu(t) \, \Tr[ \rho(t) Q(t) ] \right)^{-1} + O(\eta^2),
\end{align}
whereas the minimax tolerance is given by
\begin{align}
    \overline\delta(\eta, \rho, Q) &= \frac{\eta}{2} \left( \inf_t \mathstrut \Tr[ \rho(t) Q(t) ] \right)^{-1} + O(\eta^2).
\end{align}
Both statements hold conditioned on the inverted quantity to be nonzero.
\end{proposition}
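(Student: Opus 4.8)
The plan is to exploit the convolution form of the success probability established earlier, namely $\eta(\delta,\mu,\rho,Q) = \int \diff\mu(t)\,\Tr[\rho(t)\,(w_\delta * Q)(t)]$, together with the elementary identity $(w_\delta * Q)(t) = \int_{t-\delta}^{t+\delta} Q(\tau)\,\diff\tau$. Since this integration window has vanishing width at $\delta=0$, one reads off immediately that $\eta(0,\mu,\rho,Q)=0$, and by the fundamental theorem of calculus $\partial_\delta (w_\delta * Q)(t) = Q(t+\delta)+Q(t-\delta)$, which at $\delta=0$ equals $2\,Q(t)$. Smoothness of $Q$ lets me differentiate under the integral sign, so that $\eta$ is differentiable in $\delta$ with $\left.\partial_\delta \eta\right|_{\delta=0} = 2\int\diff\mu(t)\,\Tr[\rho(t)Q(t)] \eqqcolon 2A$, giving the first-order expansion $\eta(\delta,\mu,\rho,Q) = 2A\,\delta + O(\delta^2)$.

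For the Bayesian statement I would then simply invert this relation. The tolerance $\delta(\eta,\mu,\rho,Q)$ is by definition the functional inverse of $\delta\mapsto \eta(\delta)$ evaluated at the threshold $\eta$; since $Q$ is smooth, this map is smooth with nonvanishing derivative $2A$ at the origin whenever $A\neq 0$, so the inverse function theorem furnishes a smooth local inverse whose derivative at $\eta=0$ is $1/(2A)$. Its Taylor expansion is exactly $\delta(\eta,\mu,\rho,Q) = \tfrac{\eta}{2A} + O(\eta^2) = \tfrac{\eta}{2}\bigl(\int\diff\mu(t)\,\Tr[\rho(t)Q(t)]\bigr)^{-1} + O(\eta^2)$, as claimed.

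For the minimax case I would write $\overline\eta(\delta,\rho,Q) = \inf_t g(t,\delta)$ with $g(t,\delta) \coloneqq \Tr[\rho(t)\int_{t-\delta}^{t+\delta} Q(\tau)\,\diff\tau]$. The same pointwise computation gives $g(t,\delta) = 2\delta\,\Tr[\rho(t)Q(t)] + O(\delta^2)$. Setting $B \coloneqq \inf_t \Tr[\rho(t)Q(t)]$, I would establish $\overline\eta(\delta) = 2B\,\delta + O(\delta^2)$ by sandwiching: the upper bound $\overline\eta(\delta)\le 2B\delta + O(\delta^2)$ follows from evaluating $g$ at a (near-)minimizer of $t\mapsto\Tr[\rho(t)Q(t)]$, while the lower bound $\overline\eta(\delta)\ge 2B\delta + O(\delta^2)$ follows by taking the infimum over $t$ of the pointwise expansion. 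Inverting this relation exactly as in the Bayesian case yields $\overline\delta(\eta,\rho,Q) = \tfrac{\eta}{2}\bigl(\inf_t \Tr[\rho(t)Q(t)]\bigr)^{-1} + O(\eta^2)$.

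The main obstacle will be the minimax case: because $\overline\eta$ is a pointwise infimum of a family of smooth functions it need not itself be differentiable, so the inverse function theorem cannot be invoked directly, and one must rely on the two-sided bound above instead. Making that bound rigorous requires the $O(\delta^2)$ remainder in $g(t,\delta)$ to be \emph{uniform} in $t$ before the infimum is taken — this is precisely where the smoothness hypothesis on $Q$ (bounded higher derivatives, together with compactness of the relevant parameter range) does real work, and I would make the uniformity explicit through a Taylor remainder estimate. A secondary point to keep track of is the standing assumption $A\neq 0$ (respectively $B\neq 0$), which the proposition flags as the requirement that the inverted quantity be nonzero.
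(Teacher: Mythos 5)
Your proposal is correct and follows essentially the same route as the paper: the paper first Taylor-expands the success probability to obtain $\eta(\delta)=2\delta\int\diff\mu(t)\,\Tr[\rho(t)Q(t)]+O(\delta^3)$ (and the analogous minimax expansion) in a separate proposition, and then inverts via the derivative of the inverse function, exactly as you do. If anything you are more careful than the paper on the minimax side, where you correctly flag that the infimum of smooth functions need not be differentiable and that the Taylor remainder must be uniform in $t$ before the sandwich argument closes — a point the paper's proof passes over silently.
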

The proof is shown in Section~\ref{ssec:opt_window_size} of the supplementary material.

Let us now assume that Alice performs a quantum metrology protocol and obtains a quantum state $\rho_A(t)$ which it communicates to Bob via a quantum channel $\calN$, such that he receives the state $\rho_B(t) = \calN[\rho_A(t)]$. 
At the same time, an eavesdropper Eve tries to obtain as much information as possible about the transmitted state. The state Eve can obtain in the worst case is modeled by the \emph{complementary channel} $\calN_{\mathrm{c}}$~\cite{Wilde17_book}, which
describes the information the environment can obtain when viewing $\calN$
as part of a larger, unitary evolution.
As such, we assume that Eve holds the state $\rho_E(t) = \calN_{\mathrm{c}}[\rho_A(t)]$.
The above proposition tells us that, %
if we want to limit the precision with which Eve can estimate the parameter $t$ from the state $\rho_E(t)$, we need to choose the initial state of Alice such that the quantity
\begin{align}
\begin{split}
    &\sup_{Q(t)} \int \diff \mu(t) \, \Tr[ \rho_E(t) Q(t) ] \\
    &\qquad =\sup_{Q(t)} \int \diff \mu(t) \, \Tr[ \rho_A(t) \calN^{\dagger}_{\mathrm{c}} [ Q(t) ] ]    
\end{split}
\end{align}
is as small as possible.

\subsection{Optimization over constrained sets of measurements}

In the formulation of the success probability as a convex problem of Proposition~\ref{prop:sdp_formulation}, we optimize over all possible quantum measurements of the system. We already obtained the optimal post-processing for the practically important case of a fixed measurement. In this section, we discuss another possible way of including practical constraints that might limit the possible measurements by only optimizing over POVMs from a set $\calQ$ that represents the set of measurements that can be implemented on the system.

In this case, we obtain a restricted optimal success probability $\eta^{*}_{\calQ}(\delta, \mu, \rho)$ quantified as
\begin{align}
&\qquad\eta^{*}_{\calQ}(\delta, \mu, \rho)  \coloneqq \max\mathstrut \\
&\left\{\left. \int \diff t \, \Tr[ (w_{\delta} * [\mu \cdot \rho])(t) Q(t) ] \, \right|\right.\left.\, \int \diff t \, Q(t) = \bbI, Q \in \calQ \right\}.\nonumber
\end{align}
A particular case of interest appears when $\calQ$ is a convex set, in particular, if it is specified by semidefinite constraints.
In this case, the optimization above is a convex optimization problem.
Such a situation occurs, for instance, if we assume that Alice prepares a state $\rho(t)$ and sends it to a
noisy channel $\mathcal{E}$ to Bob, who attempts to estimate $t$ using any possible POVM on his system.
Bob's optimal POVM $Q(t)$ can be mapped to the POVM $ \mathcal{E}^\dagger(Q(t))$ on Alice's system through the adjoint map $\mathcal{E}^\dagger$ of $\mathcal{E}$.  
Consequently, Bob's optimization over any POVM can be equivalently expressed as Alice optimizing over all POVMs in the image of $\calE^{\dagger}$. The thus defined set $\calQ = \{ \calE^{\dagger}[Q] \pipe \int \diff t \, Q(t) = \bbI, Q(t) \geq 0\}$ is a convex set and Bob's optimal success probability can as such be computed efficiently.

The minimax variant, as well as the corresponding tolerance and sample complexity are then defined analogously
to their non-restricted counterparts.

\subsection{Estimation of properties beyond parameters}
We now consider the setting where we seek to estimate some property $x$, \emph{e.g.}, the expectation value
of an observable, of a general unknown state $\rho$. 
Crucially, multiple states might share the same property value $x$, hindering the use of our analysis which assumed that the unknown state belongs to a set of states that is fully specified by one parameter.

This task can be treated by the introduction of so-called \enquote{nuisance} parameters.
Nuisance parameters are additional parameters that ensure that each quantum state is associated with a distinct
set of parameter values. The consequence of introducing nuisance parameters is to reduce the property estimation problem to a multi-parameter estimation problem. In our case, we can perform a similar strategy that proceeds like the example of shadow tomography discussed above.
By taking the space of quantum states as the parameter space, we introduce the maximal
possible number of nuisance parameters, and by introducing a distance function
\begin{align}
    d_x(\rho, \sigma ) \coloneqq \lvert x(\rho) - x(\sigma) \rvert ,
\end{align}
we obtain a way to only pick out the relevant parameter $x$.
These settings are of extremely high importance for practical applications, \emph{e.g.}\ for near-term applications on NISQ devices~\cite{cerezo2021variational}. There exists a large variety of existing techniques to treat this particular application~\cite{tsang2020quantum}, but these tools pertain to the local estimation setting. It is therefore an intriguing direction of future research to see how these analytical approaches can be generalized to the framework of PAC metrology.

\section{Future directions}\label{sec:future_directions}

In the standard approach to quantum metrology built on top of the quantum Cramér-Rao bound, most fundamental questions have already been answered. In our single-shot PAC metrology framework, on the contrary, a broad collection of open question -- both fundamental and practical -- is still looking for answers. 
In this section, we highlight some questions of particular interest.

\paragraph*{Optimal measurements in the finite-sample regime.}

A significant open question that remains open is
the development of measurement schemes or protocols with guarantees on either the success probability or the estimation tolerance.
While the formulation of the success probability as a convex problem provides an efficient way to compute the optimal POVM associated with a discretized version of the estimation problem, a closed form of the optimal measurement remains elusive beyond the covariant case treated in Section~\ref{sec:covariant_setting}. 
Furthermore, solutions to the convex optimization problem are not likely to provide additional insight on measurement schemes that are perhaps sub-optimal but far more convenient to implement than the optimal measurement.

Natural candidates for measurement schemes are adaptive protocols~\cite{berry_optimal_2000} as well as the pretty good measurement that constitutes the optimal measurement in the pure and covariant setting of Theorem~\ref{thm:pgm_is_minimax_optimal}. Promising candidates might furthermore be constructed using the class of measurements studied in Section~\ref{sec:opt_fixed_measurement}.
\begin{openproblem}[Measurement schemes with performance guarantees]
    Develop measurement schemes with provable performance guarantees that are either practical to implement or that achieve close-to-optimal estimation error tolerance or success probability in the finite-sample regime.
\end{openproblem}
Such measurement schemes would significantly aid in deriving upper bounds on the optimal estimation tolerance and lower bounds on the success probability in various settings.  So far, such lower bounds have been elusive because of a lack of such schemes. For instance, an open question would be whether a measurement scheme is capable of achieving the rate given in Theorem~\ref{thm:rate_upper_bound_delta}.

\paragraph*{A finite-sample analogue of the quantum Cram\'er-Rao bound.}
The quantum Cramér-Rao bound of Eq.~\eqref{eqn:qcrb} is the fundamental cornerstone of the
standard approach to quantum metrology. It relates an operational quantity (the standard deviation of the optimal unbiased estimate of a parameter) to a geometric property of the underlying set of quantum states (the quantum Fisher information). We call the quantum Fisher information a geometric property because it quantifies the distance between quantum states whose parameters are close to each other when \enquote{distance} is measured through the fidelity of quantum states. States that are close in terms of the fidelity are difficult to distinguish, and we have already learned in Theorem~\ref{thm:succ_prob_upper_bound_mht_delta_window} that this is a prerequisite for successful parameter estimation. 

In our finite-sample approach, the estimation tolerance fulfills a role comparable to that of the standard deviation. We expect that the smallest achievable estimation tolerance should -- similar to the standard deviation -- be constrained by a quantity that captures geometric properties of the underlying set of quantum states.
As our framework pertains to cases where the estimation is not necessarily local, we expect that the we need quantities that go beyond the quantum Fisher information in the sense that they capture the geometry of the given state set at non-infinitesimal length scales.
Moreover, there is the additional factor of the desired success probability $\eta$ that factors into any relation between the optimal estimation tolerance and the structure of the given state set.

We managed to derive Theorem~\ref{thm:cramer_rao_like_bound} by quantifying how well the fidelity is approximated by a Gaussian as
\begin{align}
    F(\rho(t), \rho(t + \tau)) \approx \exp\left( - \frac{1}{8} \calF(t) \tau^2\right).
\end{align}
In that sense, we quantified how close we are to a case where the quantum Fisher information dictates not only the behavior in an infinitesimally small neighborhood but also for larger values of the perturbation $\tau$.
It is of immense interest if we can obtain more general statements of the same kind that resemble the Cramér-Rao bound in the following sense.
\begin{openproblem}[Finite-sample analogue of the Cramér-Rao bound]
Find improved lower bounds on the optimal Bayesian and minimax estimation tolerance
that put fundamental limits on the achievable estimation tolerance based on geometric properties of the underlying state set, \emph{i.e.},  bounds of the form
\begin{align}
    \overline{\delta}^{*}(\eta, \rho) \geq \frac{f(\eta)}{\sqrt{\calI}},
\end{align}
where $f(\eta)$ quantifies the dependence on the success probability and $\calI$ is a measure that captures the geometric structure of the given state set in the finite-sample setting. Note that $\calI$ can in general also depend on the desired success probability.
\end{openproblem}

An important improvement of our quantum Cramér-Rao-like bound of Theorem~\ref{thm:cramer_rao_like_bound} would be a bound that accurately predicts the achievable estimation tolerance in the i.i.d.\ limit -- \emph{i.e.}\ given a state $\rho^{\otimes n}(t)$ where $n\to\infty$ -- in the sense that there exists a matching upper bound. 
It would not be surprising if a different geometric quantity than the quantum Fisher information would appear in such a bound. This is because the converse on hypothesis testing involving the fidelity of quantum states we use to derive Theorem~\ref{thm:cramer_rao_like_bound} is known to not be tight in certain settings and that a proper converse should rely on the Chernoff divergence which is in turn associated to a different geometrical quantity, namely the Wigner-Yanase-Dyson information.

As we have argued above, Theorem~\ref{thm:cramer_rao_like_bound} builds on the insight that there are settings (\emph{e.g.},  the i.i.d.\ setting) where the quantum Fisher information carries information about the set of states beyond infinitesimal perturbations.
As such, we expect that the development of quantities that better capture the distinguishability of states in non-infinitesimal neighborhoods
could lead to a more general Cramér-Rao-like bound valid in the single-shot setting.

\paragraph*{Advantage of entangled measurement strategies.}
A further open question is to delineate the boundary of performance between different classes of protocols using entangled and non-entangled probe states, adaptive and non-adaptive processing and coherent and incoherent measurements. 
In particular, it is unclear whether optimal measurements in the finite-sample
regime require the use of large amounts of coherence.
When estimating a single parameter in the many-sample regime,
the Heisenberg scaling can be achieved only with the use of entangled probe states
and without the use of coherent measurements
~\cite{giovannetti_quantum_2006} -- does a similar statement also hold in the non-asymptotic case?
\begin{openproblem}[Understanding relevant resources]
What advantages can be gained from resources like entanglement of the probe state, coherence of the measurement and adaptivity? Can we quantify the gaps in tolerance and success probability between these different allowed resources?
\end{openproblem}
Naturally, finding general purpose strategies, \emph{e.g.}\ specific adaptive protocols, that give a competitive baseline success probability would significantly simplify addressing the above problem.

In our investigations of the phase estimation example, we observe an exact quadratic relation between the asymptotic rates for i.i.d.\ states and entangled states for small $\delta$. This echoes the quadratic relation between the standard quantum and the Heisenberg limits in the standard approach to quantum metrology. Understanding the generality of this phenomenon -- especially for channels where Heisenberg-limited scaling is impossible~\cite{demkowicz-dobrzanski2012elusive} -- would further deepen our understanding %
of the asymptotic rate for entangled inputs. 
\begin{openproblem}[Standard quantum and Heisenberg limit for rates]
For a given set of channels $\calN(t)$ with prior $\mu(t)$, do we in general have that
\begin{align}
\lim_{\delta \to 0} \frac{[R_{\mathrm{par}}^{*}(\delta, \mu, \calN)]^2}{R_{\mathrm{iid}}^{*}(\delta, \mu, \calN)} &= 1,
\end{align}
where the symbols $R_{\mathrm{par}}^{*}$ and $R_{\mathrm{iid}}^{*}$ have been defined in Section~\ref{sec:parametrized_quantum_channels}. 
\end{openproblem}

\paragraph*{The effect of noise on estimation performance.}
While noise can be implicitly treated in our formalism by including it in the construction of the parametrized set of states $\rho(t)$ or channels $\calN(t)$,
a much deeper understanding of the influence of noise is desirable, especially with an eye towards practical applications. It would be instructive to see how much the influence of noise destroys advantages of entanglement in this setting~\cite{PhysRevLett.79.3865,demkowicz-dobrzanski2012elusive,smirne_ultimate_2016}. As we have outlined in Section~\ref{sec:learning_from_quantum_systems}, the minimax success probability also pertains to communication tasks, where the study of noisy channels is of utmost importance. Can we therefore relate the decrease in success probability (or the increase in tolerance) to properties of the %
noise channel? 
A deeper understanding of noise could then be used to study the application of quantum error correction to quantum metrology~\cite{zhou_achieving_2018,faist2022time-energy} in this context. 

\paragraph*{Incompatibility in finite-sample multi-parameter quantum metrology.}
As discussed in Section~\ref{sec:multi_parameter_metrology}, our definitions extend in a natural way to multi-parameter quantum metrology. In this setting, challenges arise that are not present in the single-parameter setting, such as the incompatibility of measurements and probes that are optimal for different parameters~\cite{belliardo_incompatibility_2021,albarelli_probe_2022}. 
These phenomena particular to multi-parameter quantum metrology have been extensively studied in the asymptotic context. We expect that a quantification of incompatibility and other multi-parameter phenomena through our finite-sample framework could deepen our understanding of these effects.

\paragraph*{Quantifying generalized communication tasks.}
The minimax setting of quantum metrology has a further interesting interpretation, as it directly quantifies how well a sender Alice can communicate a scalar parameter $t \in \bbR$ to a receiver Bob when she encodes the parameter in a set of states $\rho(t)$. It is important to note here, that the worst-case nature of the minimax setting is critical in quantifying the performance, as Alice wants to be able to communicate any value $t$ with similar guarantees. In this sense, the minimax setting of quantum metrology quantifies a generalization of the commonly encountered task of communicating one of multiple discrete symbols~\cite{khatri2020principles}. In practical scenarios, Alice will usually send the state through a quantum channel $\calN$ that degrades the message, giving an additional impetus to study quantum metrology with the noise model $\rho(t) \mapsto \calN[\rho(t)]$~\cite{faist2022time-energy}.

\paragraph*{Further open questions of information-theoretic nature.}
In Section~\ref{sec:asymptotics_succ_prob}, we have given an upper bound on the asymptotic rate of quantum metrology, which is tight for commuting states. It is an open question to determine under which general conditions on the set of states this bound is tight. 
\begin{openproblem}[Asymptotic rate (states)]\label{op:asymptotic_rate_states}
For a given tolerance $\delta > 0$, a set of states $\rho(t)$, possibly with prior $\mu(t)$, when do we have that
\begin{align}
    R^{*}(\delta, \mu, \rho) = \overline{R}^{*}(\delta, \rho) = \inf_{|t-t'|>2\delta} C(\rho(t), \rho(t'))?
\end{align}
\end{openproblem}
Naturally, it is also interesting to quantify the second-order asymptotics
of the success probability, which will, however,
first require progress on the second-order asymptotics of the symmetric hypothesis testing error probability for quantum state discrimination.
Another quite natural extension is the case of \emph{mixed asymptotics}: 
We fix a desired scaling of the tolerance, decreasing slower than the optimal scaling; how fast can the success probability still attain unity?

As already outlined in Section~\ref{sec:parametrized_quantum_channels}, the setting of parametrized quantum channels offers much richer structure in its asymptotics, owing to the possibility of different strategies (separable use, parallel use, adaptive use and indefinite causal order). 
Given that Corollary~\ref{corr:succ_prob_upper_bound_mht_access_strategies} explicitly relates the success probability of quantum metrology under different access modes with the success probability of a corresponding hypothesis testing problem with similar access modes,
we expect that the asymptotics of quantum metrology with different kinds of strategies should relate to the asymptotics of the corresponding multi-hypothesis testing tasks.

\section{Discussion}\label{sec:discussion}

Our work extends the foundations of quantum metrology to the regime where few measurement samples can be obtained. 
To study this regime, we present a truly single-shot framework for quantum metrology that removes two important assumptions that are used to derive the quantum Cramér-Rao bound.
First, our approach quantifies estimation accuracy directly as the probability
that the true estimate lies close to the parameter value instead of quantifying
the variance of an unbiased estimator.
This definition guarantees operational significance even in regimes where the variance is only a poor indicator of single-shot performance.
Second, we remove the assumption of a local estimation setting. We have developed two ways of doing so: the Bayesian approach which allows us to reconnect to the local setting by choosing suitably narrow prior distributions and the minimax setting that truly quantifies the absence of knowledge about the underlying parameter.

In the setting of our framework (Fig.~\ref{fig:intro-super-simple-setup}), a parameter $t$ is encoded in a set of quantum states $t \mapsto \rho(t)$. We quantify the probability that a measurement embodied by a POVM $\tau \mapsto Q(\tau)$ produces an estimate $\hat{t}$ of $t$ that is within a tolerance $\delta$ of the true value. \enquote{Probability} is evaluated with respect to the random nature of the measurement outcomes,
either in the worst case over possible parameter values (minimax setting)
or in the case where the parameter value is sampled from a prior probability
distribution (Bayesian setting).
Our framework therefore captures any standard setting in quantum
metrology in which an unknown parameter in the quantum state, possibly imprinted
via a parameter-dependent dynamics, is to be estimated by the application of
a quantum measurement.

The optimal success probability optimized over all possible POVMs $\tau \mapsto Q(\tau)$ can be obtained from a semi-infinite program, an extension of semidefinite programming that enables the inclusion of a continuous POVM as a variable.
The rich structure offered by such a convex optimization
enables numerical computations as well as some of our proofs
including the derivation of
both upper and lower bounds on the success probability.
The solution to the convex optimization problem might be difficult to obtain, which is why there is a need for
good \enquote{general purpose} measurement strategies, \emph{i.e.}\ POVMs $Q(\tau)$ that give guarantees on an achievable success probability without the need to perform the convex optimization. 
A possible candidate for such a general purpose strategy could be the pretty good measurement.

Another practically relevant case arises when
the quantum measurement is 
fixed, for instance by experimental constraints, and only the post-processing of measurement outcomes into predictions of the parameters 
may be optimized. 
In this setting, we showed that a strategy generalizing maximum
a-posteriori estimation is optimal for the Bayesian setting and
gives bounds in the minimax setting.

Our quantum metrology setting
naturally extends
quantum
multi-hypothesis testing.
Instead of 
selecting one of finitely many alternatives, we need to discriminate states from the set \smash{$\{ \rho(t) \}_t$}.
This continuous generalization of hypothesis testing
requires %
a tolerance $\delta$ in the precision to which the parameter is to be estimated.
This difference seems to significantly complicate
extensions of existing error bounds for multi-hypothesis testing to our
continuous hypothesis testing setting: Such analyses typically analyze a protocol by
assuming a unique correct output
rather than a range of acceptable outputs.
We make this connection rigorous by
giving a general upper bound
on the success probability of the estimation procedure
in our metrology setup
by showing that
estimating a parameter to precision $\delta$
implies the ability to
successfully distinguish between states
with parameter values separated by at least
$2\delta$ in a hypothesis testing setting.
We exploit this result to give bounds on the success probability in terms of the fidelity of quantum states,
a more tractable and familiar quantity.

This connection provides an important application of quantum hypothesis
testing. The main use of quantum hypothesis testing so far
is the study the asymptotic behavior of entropy measures, which is then applied for 
instance to the study of quantum communication scenarios beyond the i.i.d.\ regime.
The application of quantum hypothesis testing to quantum metrology, an inherently
physical setting, highlights a need to extend existing bounds and protocols in
quantum hypothesis testing to more general settings relevant to metrology,
which include an error tolerance on the unknown parameter.
This connection is an exciting opportunity for quantum
information theory to inform the development of estimation procedures and the derivation
of fundamental bounds in quantum metrology.

The fact that quantum metrology can be seen as a generalization 
of multi-hypothesis testing also opens upon a wealth of open questions of a 
distinctly information-theoretic flavor.
We %
analyzed the asymptotic behavior of the success probability of quantum metrology
and have shown a bound on the asymptotic error rate in terms of the Chernoff divergence,
a bound which can be achieved in the case of commuting states.
As such, we established that the quantifier of the asymptotic properties of hypothesis testing also applies to quantum metrology.
Our metrology setting also opens new kinds of questions regarding the asymptotic behavior
of quantum information-theoretic quantities that appear in hypothesis testing tasks.
Theorem~\ref{thm:rate_upper_bound_delta} bounds the rate at which the success probability $\eta$ approaches
one as the number of i.i.d.\ copies $n$ goes to infinity, supposing that the estimation tolerance $\delta$ is kept constant.
The asymptotic behavior of $\delta$, keeping $\eta$ fixed, is in turn bounded
by Theorem~\ref{thm:cramer_rao_like_bound}. 
A particularly relevant regime to study is a ``mixed asymptotics'' regime where $\delta\to 0$ and $\eta\to 1$ simultaneously as $n\to\infty$. For instance, what is the behavior of $\eta$ 
if the tolerance decays as \smash{$\delta \sim 1/\sqrt{n}$}?

We take much inspiration from recent developments in single-shot
quantum information theory, where 
many results %
are expressed through entropic quantities that have explicit single-shot interpretations.
Popular examples are min-, max- and Rényi-relative entropies.
Similarly,
our single-shot framework for quantum metrology can be %
connected to entropic quantities.
In particular, we give an alternative definition of the optimal success probability of quantum metrology
as a generalized conditional min-entropy and bound the optimal tolerance through the hypothesis testing relative entropy.
One one hand, these relations further strengthen the connection between
quantum metrology and quantum information theory, demonstrating that concepts
deeply rooted in quantum information theory can offer alternative alternative
approaches to proving accuracy bounds in quantum metrology
(e.g.\@ \cite{walter_heisenberg_2014}).
On the other hand, these relations further demonstrate the broad usefulness
and applicability of the toolbox of single-shot entropy measures~\cite{PhDRenner2005,khatri2020principles,EntropyZoo}.
We further anticipate opportunities to exploit new relations between
quantum metrology and entropy measures to derive a deeper understanding
of the fundamental accuracy bounds through the lens of single-shot
quantum information theory.  

Next to the success probability, the estimation tolerance is the second important pillar of our single-shot framework for quantum metrology. 
It corresponds to the natural question of
quantifying the estimation accuracy that can be obtained
with a fixed success probability.
It is important because it fulfills a similar role as the standard deviation that is used as a measure of estimation precision in the standard approach to quantum metrology.
The importance of providing useful lower bounds on the optimal estimation
tolerance is further underscored by its use to manifestly express
advantages that can be obtained with entangled quantum states as opposed
to classical estimation strategies, such as Heisenberg
scaling~\cite{giovannetti_advances_2011}.
We derived a general lower bound on the estimation tolerance that
resembles the quantum Cramér-Rao bound. This bound involves the standard quantum
Fisher information and contains
corrections pertaining to the explicit single-shot nature of the estimation tolerance.
Obtaining improved bounds on the optimal tolerance -- especially bounds that are achievable in reasonable limits -- is a pressing question in the area of single-shot quantum metrology. 

Metrological problems are in many cases given by parametrized \emph{channels} $t \mapsto \calN(t)$ instead of parametrized quantum states. This setting offers 
a much richer structure of both the available ways of interacting with the quantum channel itself (e.g., adaptive versus non-adaptive protocols) as well as the underlying complications of the metrological protocols. 
We have adapted some of our central results, such as the formulation of the success probability as a convex program, as well as our central theorem about the connection to hypothesis testing, to this setting. 
There is a plethora of open questions of operational relevance, especially concerning the power of particular types of protocols with more restricted access to the parametrized quantum channel, which %
are detailed in the preceding section. 

An important class of parametrized channels are unitary evolutions under a given Hamiltonian. 
We study this setting by establishing a closed-form expression for the optimal minimax success probability associated to a covariant state set, \emph{i.e.}\ when a pure state evolves under a given Hamiltonian and the parameter range is identical to the recurrence time of said Hamiltonian. It is an interesting question of further research if metrology protocols that use adaptive processing with a memory system can achieve a higher minimax success probability.

We apply this result to the traditional problem of phase estimation using an ensemble of spin-$\tfrac{1}{2}$ particles. 
This is one of the most basic, but still technologically important, applications of quantum metrology.
The closed-form expression we derive for the optimal minimax success probability allows us to characterize the optimal probe state through a foundational result of Slepian~\cite{slepian_prolate_1978} and to compute the optimal asymptotic rate achievable via entangled probe states. 
It also facilitates numerical experiments for up to $n = 1000$ particles that we use to compare different kinds of probe states in terms of the achievable minimax success probability and minimax estimation tolerance.
We %
observe that %
reasonable guesses for good probe states, like the Holland-Burnett state or a Gaussian profile do only coincide with the optimal state in certain limits.
The GHZ state is a poor probe state this global
estimation task because its period is much shorter than the whole parameter range.
Another outcome of our numerical experiments was that there seem to be settings where the quantum Cramér-Rao bound faithfully predicts the optimal minimax estimation tolerance -- in our case in the limit of many i.i.d.\ copies of the optimal probe state when the success probability is fixed at the probability that a normally distributed random variable is within one standard deviation of the mean. 

The observation that the quantum Cramér-Rao bound can faithfully predict finite-sample performance is perhaps surprising, given that we could expect the finite-sample estimation tolerance to deviate significantly from the optimal estimator variance, whose operational significance only appears in the many-sample regime.
Indeed, there is much fine print to this observation -- the phase estimation setting is highly regular and we analyzed it in the pure state setting where many information measures collapse into the quantum Fisher information.
It nevertheless proves that the quantum Fisher information, which serves as a proxy for
the single-shot distinguishability of neighboring quantum states
$\rho(t)$ and $\rho(t+\diff t)$, is likely to still be relevant in some finite-sample estimation scenarios.
Theorem~\ref{thm:cramer_rao_like_bound} provides a specific connection between the
optimal estimation error tolerance and the quantum Fisher information which involves
the presence of additional error terms. Quantifying the magnitude of these error
terms in various settings is likely to provide clarity onto the regimes where the
optimal estimation tolerance is well approximated by the inverse square root of the quantum
Fisher information.

Our work reinforces the value of an operational approach to fundamental questions
in quantum metrology, especially in the finite-sample regime which is increasingly relevant
for current and near-term quantum technologies.
The foundations of our approach are formalized in our framework of
probably approximately correct (PAC) metrology.
Furthermore, the newly reinvigorated connection between quantum metrology and quantum information
theory offers exciting opportunities for progress in metrology using advanced techniques developed in quantum information theory, for instance, the use of matrix analysis and convex optimization to characterize quantum information entropy measures.
Contrary to quantum metrology in the standard many-sample regime,
where many fundamental questions are already answered,
our operational approach to quantum metrology finite-sample regime
offers a plethora of intriguing questions and research directions 
with the potential of uncovering new practical estimation procedures with increased accuracy
in quantum sensors and quantum clocks.

\section*{Acknowledgements}

The authors would like to thank Sergii Strelchuk, Wilfred Salmon, Francesco Albarelli, Jasminder Sidhu and Maximilian Reichert for valuable feedback on earlier versions of this manuscript. 

This work has been supported by the DFG (CRC 183, FOR 2724), by the BMBF (Hybrid), the BMWK (PlanQK, EniQmA), the Munich Quantum Valley (K-8), QuantERA (HQCC) and the Einstein Foundation (Einstein Research Unit on Quantum Devices). This work has also been funded by the DFG under Germany's Excellence Strategy – The Berlin Mathematics Research Center MATH+ (EXC-2046/1, project ID: 390685689).

\section*{Author contributions}

J.~J.~M.\ led the project, derived a significant majority of the technical results and wrote the manuscript draft.
All authors contributed substantially to the technical results and to writing the paper.

\bibliography{main}

\let\addcontentsline\oldaddcontentsline %

\clearpage
\setcounter{section}{0}

\def\tocname{}

\title{Supplementary Material:\texorpdfstring{\\}{} \papertitle}
\maketitle

\onecolumngrid
\vspace{-1.5cm}
\tableofcontents

\section{Relation to prior art}\label{ssec:relation_to_prior_art}
Because of its technological importance, quantum metrology is a very important subfield of quantum information theory. Most works in the literature focus on the asymptotic theory of local estimation centered around the quantum Fisher information -- a good overview can be found in Ref.~\cite{sidhu2019geometric}.   
In this work, we establish an inherently \emph{single-shot} characterization of quantum metrology via an operational characterization of the success probability of parameter estimation. From this characterization, we establish a coherent framework to quantify the quality of quantum metrology protocols by analyzing the success probability, the tolerance and the sample complexity. We develop a deep understanding of these quantities through analytical bounds and numerical experiments.

Due to the prominent role of quantum metrology in the field of quantum information science, it comes as no surprise that parts of our definitions and some analytical results concerning those have already appeared in the literature. In the following, we exemplify how our work goes well beyond the prior art by giving a detailed account of the similarities and differences between our results and the results already found in the literature.

In Ref.~\cite{hayashi2002two}, the author also defines a notion of success probability with a given tolerance similar to Definition~\ref{def:success_probability_no_optimization}. They study the properties of \emph{consistent} protocols that ensure that the success probability, asymptotically, approaches unity for all nonzero tolerances. For those protocols, a large-deviation analysis where the number of samples is finite but still assumed large is performed. This allows the author to bound the local curvature of the asymptotic rate defined in Eq.~\eqref{eqn:def_asymptotic_rate} of the main text. In our work, we go beyond this result by establishing bounds that hold in the single-shot setting and that we then use to establish bounds on the asymptotic rate that hold for fixed tolerance $\delta$ and that at the same time give tighter constraint on its local curvature (see Section~\ref{ssec:wigner_yanase_dyson_information}).

Refs.~\cite{sugiyama2011tomography,sugiyama2013tomography,sugiyama_thesis} contain what is essentially the multi-parameter generalization of Ref.~\cite{hayashi2002two} to the multi-parameter setting, with quantum state tomography as the most prominent application. It is a common feature of all these works that they assume the sequence of quantum measurements associated with the different numbers of possible copies of the state to be fixed. This allows the authors to make good use of tools from classical estimation theory. Especially in Ref.~\cite{sugiyama_precision-guaranteed_2015}, the author analyzes the scaling of the tolerance for a fixed success probability when a measurement is fixed and a maximum-likelihood estimate is performed. This can be seen as studying a particular prediction rule as introduced in Section~\ref{sec:opt_fixed_measurement}. We wish to emphasize that all our results except for the lower bound on the asymptotic rate and the developments of Section~\ref{sec:opt_fixed_measurement} pertain to the case where measurements are not fixed a priori but the optimal measurement can be chosen.

The setting of fixed measurements (compare Section~\ref{sec:opt_fixed_measurement}) was also studied in Ref.~\cite{Salmon}. There, the authors fixed on the question if there exists a measurement that is \emph{admissible} in a estimation-theoretical sense, \textit{i.e.}, if there is a measurement basis that works equally well for all possible values of the underlying parameter.

The optimal achievable tolerance was also studied in Ref.~\cite{walter_lower_2014,walter_heisenberg_2014}. The authors there studied parametrized quantum states as classical-quantum states, similar to what we use in Corollary~\ref{corollary:relation_to_cond_min_ent} to relate the success probability of quantum metrology to the conditional min-entropy. 

The authors of Ref.~\cite{yang2018stopwatch} also put forth a definition similar to the minimax tolerance of Definition~\ref{def:metrological_tolerance}, referring to it as \enquote{inaccuracy}. They furthermore obtain a lower bound on the minimax tolerance by making a particular multi-hypothesis testing reduction. Their lower bound, however, is limited by its dependence on the dimension of the underlying quantum systems. 

Parameter estimation of quantum channels using the Fisher information has been considered in Refs.~\cite{ji2008parameterestimationchannel,HS11,hayashi2011comparison,DCS17,yuan2017fidelityfisherquantumchannel,yuan2017parameterestimationdynamics,katariya2021geometric,katariya2021RLD,liu2023metrologyhierarchy}. We note that while in some of these works the number of channel uses is finite, the estimation strategies are still evaluated using the quantum Fisher information. In this work, on the other hand, we evaluate strategies for quantum channel parameter estimation using our inherently single-shot success probability.

Phase estimation, as studied in Section~\ref{sec:minimax_analysis_phase_estimation} of the main text, is the generally most-studied application of quantum metrology. It is an instance of a covariant estimation problem whose study dates back to the foundational works of Holevo and Helstrom, with important contributions from Belavkin and Maslov~\cite{helstrom1969quantum,Helstrom76_book,belavkin1988design,holevo2011probabilistic}. Already in these works, it has been shown that covariant measurements are optimal for covariant estimation problems, but not necessarily \emph{which} covariant measurement. The authors of Ref.~\cite{ban1997optimum} have shown that the pretty good measurement is optimal both for symmetric multi-hypothesis testing as well as covariant estimation with a maximum likelihood approach. In Section~\ref{sec:minimax_analysis_phase_estimation}, we extend this result by showing that the pretty good measurement also achieves the optimal minimax tolerance in the case of covariant estimation with pure states.

We note that our study of phase estimation goes well beyond these results, as it also covers the single-shot regime and as we also give results on the optimal minimax tolerance of phase estimation.
Our study of phase estimation has novel results valid in the single-shot regime and gives explicit bounds on the optimal asymptotic tolerance. Our results on the asymptotic rate (Theorem~\ref{thm:opt_minimax_rate_entangled_phase_est}) do however overlap significantly with prior work of Ref.~\cite{imai_fourier_2009}. The authors establish the distributions of measurement outcomes that can be realized asymptotically in phase estimation and build on the continuous version of the work of Slepian and Pollak~\cite{slepian1961prolate} to claim the asymptotic rate of Theorem~\ref{thm:opt_minimax_rate_entangled_phase_est}. The authors of Ref.~\cite{imai_fourier_2009} did not prove the required positivity of the DPSS of zeroth order necessary for such a claim, as such we filled a small gap in their proof. 
Furthermore, Refs.~\cite{durkin2007localglobalinterferometry,demkowicz-dobrzanski_optimal_2011,rubio2018nonasymptotic,chesi_protocol_2023} have also went beyond the asymptotic regime and considered the phase estimation problem from both the local and global perspectives.

Another approach to non-asymptotic metrology was studied in Refs.~\cite{rubio2018nonasymptotic,rubio2019limited}. The authors explicitly evaluate the mean-squared error for phase estimation problems with a small but increasing finite number of independent repetitions of a single-repetition estimation scheme, and they examine its convergence to the Cram\'{e}r-Rao bound as the number of repetitions increases. Similarly, in Ref.~\cite{liu2016valid}, the authors consider a finite number of independent repetitions of a single-repetition estimation scheme, but they also develop a Cram\'{e}r-Rao-like bound on the mean-squared error that applies to biased estimators. Notably, in these works, the estimation performance is evaluated using the mean-squared error~\cite{rubio2018nonasymptotic,rubio2019limited} and the quantum Fisher information~\cite{liu2016valid}, both of which are asymptotic quantities, while in this work our estimation procedure is evaluated using the inherently single-shot $\delta$-accurate estimation success probability presented in Definition~\ref{def:success_probability_no_optimization}.

It is worth mentioning that Ref.~\cite{tsang2012ziv-zakai} also uses a reduction of the parameter estimation problem to hypothesis testing, but it does so in a different way than in our work. In particular, in Ref.~\cite{tsang2012ziv-zakai}, the author defines a quantum version of the Ziv-Zakai bound from classical estimation theory, which provides a bound on the mean-squared error, in terms of binary hypothesis tests. The quantum Ziv-Zakai bound is then formulated in terms of the optimal binary symmetric hypothesis testing error probability. While the quantum Ziv-Zakai bound can improve upon the quantum Cramér-Rao bound in the regime of finite samples~\cite{rubio2018nonasymptotic}, it is still hampered by the fact that hypothesis testing is used to bound an inherently asymptotic quantity. Starting from the inherently operational single-shot definition of the success probability allows us to develop a way stronger connection to hypothesis testing, as evident in Section~\ref{sec:metrology_as_hypothesis_testing}.

Further work on continuous hypothesis testing includes Refs.~\cite{chase2009singleshot,tsang2012continuous}. The ``continuous'' in these works actually refers to measuring the (unknown) system at different points in time and then deciding which among two possible states the system was in initially, or about deciding among two possibilities for the dynamics of the system. However, notably, in Ref.~\cite{chase2009singleshot} (Section~IV), the authors already allude to a continuous version of quantum hypothesis testing as we view it.

\section{The Cram\'er-Rao bound in the presence of finite samples}\label{ssec:cramer_rao_finite_sample}
Here, we review how the operational relevance of the Cram\'er-Rao bound might be compromised
in the regime where limited data is available.  Consider a one-parameter family of
states $\rho(t)$. Suppose we know that the true value $t$ of the parameter is close
to some value $t_0$.  We seek an observable $T$ that reveals the true value $t$
in the neighborhood of $t_0$ in its expectation value, 
\ie
$\langle T \rangle_{\rho(t_0+\diff t)} = t_0 + \diff t + O(\diff t^2)$. The observable $T$ serves as
estimator for the true parameter.  A formulation of the
Cram\'er-Rao bound states that the minimal
variance achieved by such an estimator is the inverse of the quantum Fisher
information,
\begin{align}
    \min_{T} \langle (T - t_0)^2 \rangle_{\rho(t_0)}
    &= \frac1{\calF(\rho(t_0))}\ ,
\end{align}
where the quantum Fisher information $\calF(\rho(t))$ is defined as
$\calF(\rho(t)) := \Tr[\rho(t) L^2]$, where the \emph{SLD operator} $L$ is a solution to the equation
$(\rho(t) L + L\rho(t))/2 = \partial_t\rho(t)$.

The estimation strategy considered by the Cram\'er-Rao bound requires the estimation
of the expectation value of the corresponding observable $T$.  In practice, this
requires repeated measurements of $T$ and averaging the corresponding
individual outcomes.
However, in the presence of
limited data, an accurate estimation of the expectation value might require a large
amount of data.  In the following, we review such a situation.  In such a setting,
the fundamental accuracy limits might be fundamentally different than the one predicted
by the Cram\'er-Rao bound.

Consider the following situation detailed in
Refs.~\cite{Safranek2017PRA_discontinuities,Zhou2019arXiv_exact}.
Let $\omega>0$, then the one-parameter family of qubit states $\rho(t)$ and the
derivative $\partial_t\rho(t)$ are given as
\begin{align}
    \rho(t) &= \begin{pmatrix}
      \cos^2(\omega t/2) & 0
      \\
      0 & \sin^2(\omega t/2)
    \end{pmatrix} ,
    &
    \partial_t\rho(t) &=
    -\frac{\omega}{2} \sin(\omega t)\,\begin{pmatrix} 1 & 0 \\ 0 & -1 \end{pmatrix} .
\end{align}
This evolution actually induces a discontinuity in the quantum Fisher information:
As shown in Ref.~\cite{Safranek2017PRA_discontinuities}, the quantum Fisher information
$\calF(\rho(t))$ is
\begin{align}
    \calF(\rho(t)) = \begin{cases}
    0 \quad & \text{if $\omega t = m\pi$, $m\in\mathbb{Z}$,}\\
    \omega^2 & \text{otherwise.}
    \end{cases}
\end{align}
Now consider the regime where $t_0\approx 0$, $t_0\neq 0$.  According to the
Cram\'er-Rao bound, it is possible to estimate the value of $t$ close to $t_0$ with
optimal sensitivity $1/\omega^2$ with a suitable observable.  As computed e.g.\@ in
Ref.~\cite{faist2022time-energy} [Appendix~H.2], the optimal observable in question is
\begin{align}
    T = t_0\Id + \frac1\omega \begin{pmatrix}
        -\tan(\omega t_0/2) & 0\\
        0 & \frac1{\tan(\omega t_0/2)}
    \end{pmatrix}.
\end{align}
We can check indeed that
\begin{align}
    \Tr[T\rho(t_0+\diff t)] = t_0 + \diff t + O(\diff t^2)\ .
\end{align}
It is, however, instructive to write out the expectation value as a sum of two terms,
one associated with each outcome of a measurement of $T$.  With $t=t_0+\diff t$, the contributions
to $\Tr[T\rho(t)]$ are, with $T_j = \bra{j}{T}\ket{j} - t_0$,
\begin{align}
  T_0 \bra{0}{\rho(t)}\ket{0}
  &=\cos^2(\omega t/2)\,\frac1\omega\,
        \bigl(-\tan(\omega t_0/2)\bigr)
        \\
        \nonumber
  &= -\frac1\omega \biggl(\cos^2(\omega t_0/2)-\frac\omega2 \diff t\sin(\omega t_0)  + O(\diff t^2)\biggr)
        \,\tan(\omega t_0/2)
        \\
             \nonumber
  &= -\frac1{2\omega}\,\sin(\omega t_0) + \diff t\,\sin^2(\omega t_0/2) + O(\diff t^2)\ ,
  \\
  T_1 \bra1{\rho(t)}\ket1
  &= \sin^2(\omega t/2)\,\frac1\omega\,\frac1{\tan(\omega t_0/2)}
  \\
       \nonumber
  &= \frac1\omega \biggl( \sin^2(\omega t_0/2) + \frac\omega2 \diff t \sin(\omega t_0) + O(\diff t^2) \biggr)
      \frac1{\tan(\omega t_0/2)}
      \\
           \nonumber
  &= \frac1{2\omega} \sin(\omega t_0) + \diff t\,\cos^2(\omega t_0/2) + O(\diff t^2)\ .
\end{align}
Indeed, $\Tr[T\rho(t)] = t_0 + T_0 \bra0{\rho(t)}\ket0
+ T_1 \bra1{\rho(t)}\ket1 = t_0 + \diff t + O(\diff t^2)$.
However, as $t_0\to 0$, $t_0\neq 0$, we see that the value $T_1$ diverges
while the corresponding
probability $\bra1{\rho(t_0)}\ket1$ vanishes.  It turns out that that
large term times a tiny
term conspire to provide just exactly the required difference in
the expectation value so
that we have $\Tr[T\rho(t_0+\diff t)] \approx t_0 + \diff t$.
In the regime where limited data is available, the outcome $\ket1$ is never
observed, because it occurs too rarely.  As a consequence, only the first
outcome is observed and the reported estimate for the parameter, computed from the
averages of the samples of the outcomes of $T$, is
$t_0-[\tan(\omega t_0/2)]/\omega$.  This
value does not depend on $\diff t$; therefore, the estimation procedure
fails to accurately
reveal the value of $\diff t$ to the desired accuracy $1/\omega^2$.
The approach presented in this work aims to tackle settings such as the one above,
where the estimation procedure must rely on a \emph{limited number of samples}.

\section{Optimization through 
convex programming
}\label{ssec:optimization}
In this section, we consider the optimized success probabilities and show that we can cast them 
into the form of convex \emph{semi-infinite problems} (SIPs)~\cite{SemiInfinite,reemsten_ruckmann_SIP_book,charnes1980SIP}. 
These represent continuous analogues of semi-definite programs, which have to be discretized to be solved on a computer. We elaborate on this point in Section~\ref{ssec:numerical_implementation}. 

Throughout this section, we make use of the following standard forms of primal and dual semi-definite programs~\cite{khatri2020principles}: 
\begin{equation}\label{eqn:SDP_primal_dual_standard}
    \begin{array}{l l}
        \text{maximize} & \Tr[AX] \\[1ex] \text{subject to} & X\geq 0, \\[1ex] & \Phi[X]\leq B.
    \end{array}
    \qquad
    \begin{array}{l l}
        \text{minimize} & \Tr[BY] \\[1ex] \text{subject to} & Y\geq 0,\\[1ex] & \Phi^{\dagger}[Y]\geq A,
    \end{array}
\end{equation}
where $A$ and $B$ are Hermitian operators and $\Phi$ is a Hermiticity-preserving linear map. The convex semi-infinite programs that we consider in this work have the form of these semi-definite programs, with either given tuple $(A,B,\Phi)\equiv(A(t),B(t),\Phi(t))$, the optimization variables $X\equiv X(t)$ and $Y\equiv  Y(t)$, or both, being parameterized by $t\in\mathbb{R}$. As in the theory of semi-definite programs~\cite{BV96}, the notions of (weak and strong) duality and Slater's conditions carry over to such convex SIPs~\cite{charnes1980SIP,shapiro2009SIP}, and we make use these concept throughout in what follows.

\subsection{Measurement optimization}
It is quite straightforward that the optimal measurement in the Bayesian case can be determined using a 
convex program.

\begin{sproposition}[Bayesian measurement optimization]\label{prop:bayesian_sdp}
For a given set of states $\rho(t)$ with prior distribution $\mu(t)$ and a fixed window function $w(\tau)$, the optimal success probability $\eta^*(w,\mu,\rho)$ defined in Eq.~\eqref{eqn:opt_success_probability_bayesian} can be computed using the 
convex problem
\begin{align}
    \begin{array}{l l}
      \textnormal{maximize} & \int \diff t \, \Tr[ (w * [\mu \cdot \rho])(t) Q(t) ] \\ [1ex]
      \textnormal{subject to} & Q(t) \geq 0,\\[1ex]
       & \int \diff t \, Q(t) = \bbI.
    \end{array}
\end{align}
There is no duality gap and the associated dual program is
\begin{align}
    \begin{array}{l l}
      \textnormal{minimize} & \Tr [X] \\ [1ex]
      \textnormal{subject to} & X \geq 0,\\[1ex]
       & X \geq (w * [\mu \cdot \rho])(t)~~\forall~t.
    \end{array}
\end{align}
\end{sproposition}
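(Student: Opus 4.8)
The plan is to recognise the primal program as a verbatim transcription of the definition of $\eta^*$, to read off the dual by Lagrangian duality, and then to close the duality gap by verifying a Slater condition for the resulting semi-infinite program.

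First I would note that a POVM with continuous outcomes $\tau\mapsto Q(\tau)$ is, by definition, an operator-valued function satisfying $Q(t)\geq 0$ for all $t$ together with the normalisation $\int\diff t\,Q(t)=\bbI$, so the primal feasible set coincides exactly with the set of admissible measurements. Writing $A(t)\coloneqq(w*[\mu\cdot\rho])(t)$ and invoking the compact convolution form of the success probability established in the main text, the objective $\int\diff t\,\Tr[A(t)Q(t)]$ equals $\eta(\delta,\mu,\rho,Q)$. Maximising over $Q$ therefore reproduces $\eta^*(w,\mu,\rho)$, which settles the primal claim.

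For the dual, I would introduce a single Hermitian multiplier $X$ for the matrix equality $\int\diff t\,Q(t)=\bbI$ and form the Lagrangian
\begin{align*}
  L(Q,X)=\Tr[X]+\int\diff t\,\Tr\big[(A(t)-X)Q(t)\big].
\end{align*}
Maximising over $Q(t)\geq 0$ gives $+\infty$ unless $A(t)-X\leq 0$ for every $t$, in which case the integral term is nonpositive and vanishes at $Q=0$, so the dual objective is $\Tr[X]$ subject to $X\geq A(t)$ for all $t$. Weak duality is in fact immediate without the Lagrangian: for any primal-feasible $Q$ and dual-feasible $X$ one has $\Tr[(X-A(t))Q(t)]\geq 0$, whence $\int\diff t\,\Tr[A(t)Q(t)]\leq\Tr[X\int\diff t\,Q(t)]=\Tr[X]$. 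I would also observe that $A(t)$ is a nonnegative mixture of states (an integral of $w\geq 0$ times $\mu\geq 0$ times $\rho(\tau)\geq 0$), so $A(t)\geq 0$ and the stated constraint $X\geq 0$ is implied by $X\geq A(t)$; it is retained only to match the standard form of Eq.~\eqref{eqn:SDP_primal_dual_standard}.

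The main obstacle is upgrading weak duality to the \emph{absence} of a duality gap, since this is a genuinely infinite-dimensional (semi-infinite) conic program rather than a finite semi-definite program. The key step is to verify a Slater condition, and here a uniform norm bound on $A(t)$ comes to the rescue: since the window satisfies $0\leq w\leq 1$, $\int\diff\tau\,\mu(\tau)=1$ and $\lVert\rho(\tau)\rVert\leq 1$, one has $\lVert A(t)\rVert\leq\int\diff\tau\,w(t-\tau)\mu(\tau)\leq 1$ uniformly in $t$. Consequently $X=2\bbI$ is strictly dual-feasible, as $2\bbI>A(t)$ for all $t$, with finite objective value. Invoking the strong-duality theorem for convex semi-infinite programs under such a Slater condition~\cite{charnes1980SIP,shapiro2009SIP} then yields a vanishing duality gap together with attainment of the primal optimum, completing the proof. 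A fully self-contained alternative I would keep in reserve is to pass through the discretisation of Section~\ref{ssec:numerical_implementation}: apply finite-dimensional SDP strong duality on a grid and take the continuum limit, using the uniform bound on $A(t)$ to control the convergence of both primal and dual values.
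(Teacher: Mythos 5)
Your proof is correct, and it reaches the same primal--dual pair as the paper, but by a noticeably more direct route. The paper (Lemma~\ref{lem:G_max_dual}) embeds the problem into the standard form of Eq.~\eqref{eqn:SDP_primal_dual_standard} by building block operators $X=\int\diff t\,\ketbra{t}{t}\otimes Q(t)$, a two-block map $\Phi$ encoding the equality constraint as a pair of inequalities, computing the adjoint $\Phi^\dagger$, and then simplifying the resulting dual by a change of variables; strong duality is merely asserted as ``straightforward to verify.'' You instead derive the dual in one step from the Lagrangian with a single Hermitian multiplier for $\int\diff t\,Q(t)=\bbI$, give weak duality by the one-line positivity argument $\Tr[(X-A(t))Q(t)]\geq 0$, and --- this is the genuine added value --- close the gap by exhibiting the explicit strictly feasible dual point $X=2\bbI$, justified by the uniform bound $\lVert A(t)\rVert_\infty\leq 1$ that follows from $0\leq w\leq 1$, $\int\diff\tau\,\mu(\tau)=1$ and $\lVert\rho(\tau)\rVert_\infty\leq 1$. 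That Slater verification is exactly the step the paper leaves implicit, so your argument is the more self-contained of the two on the point the proposition actually emphasizes (``no duality gap''). Two small technicalities you could tighten: maximizing the Lagrangian over $Q(t)\geq 0$ forces $X\geq A(t)$ only for almost every $t$ (continuity of $t\mapsto A(t)$ upgrades this to all $t$), and strict dual feasibility gives zero gap plus attainment of the \emph{maximization} problem's optimum, which is consistent with what you claim but worth stating with care in the semi-infinite setting; your discretization fallback via Section~\ref{ssec:discretization_bouds_sip} covers the same ground. You do not reproduce the paper's further reformulation of the dual as $\inf_{\sigma}\sup_t\lambda_{\max}(\sigma^{-1/2}A(t)\sigma^{-1/2})$, but that is not part of the statement you were asked to prove.
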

Notably, the definition of the success probability arising from the dual coincides with a continuous version of the \enquote{least upper bound} for state discrimination~\cite{audenaert2014upper} which was also shown to be optimal in this case~\cite{yuen1975optimum}. 

Before proving Proposition~\ref{prop:bayesian_sdp}, let us make the following definitions. For a function $A:\mathbb{R}\to\mathrm{P}(\mathcal{H})$, $t\mapsto A(t)$, where $\mathrm{P}(\mathcal{H})$ is the set of positive semi-definite operators acting on a Hilbert space $\mathcal{H}$, and for a POVM $\{Q(t):t\in\mathbb{R}\}$, we let
\begin{align}
    G(A,Q)&\coloneqq\int\diff t \Tr[Q(t)A(t)],\\
    G_{\max}(A)&\coloneqq\sup_{\substack{Q(t)\geq 0\\\int\Diff{t}\!Q(t)=\mathbb{I}}} G(A,Q). \label{eqn:def_G_max}
\end{align}
From this, we see that the Bayesian success probability is given by
\begin{equation}\label{eq-Bayesian_success_Gmax}
    \eta^{\ast}(w,\mu,\rho)=G_{\max}(w\ast(\mu\cdot\rho)).
\end{equation}
The statement of Proposition~\ref{prop:bayesian_sdp} then follows from the following lemma about $G_{\max}$.

\begin{lemma}\label{lem:G_max_dual}
    For a function $A:\mathbb{R}\to\mathrm{P}(\mathcal{H})$, $t\mapsto A(t)$, the function $G_{\max}(A)$ can be computed via 
    a convex problem
    such that its dual formulation results in
    \begin{equation}\label{eq:G_max}
        G_{\max}(A)=\inf\{\Tr[Y]:Y\geq 0,\,Y\geq A(t)~~\forall~t\in\mathbb{R}\}=\inf_{\substack{\sigma\geq 0\\\Tr[\sigma]=1}}\sup_{t\in\mathbb{R}}\lambda_{\max}(\sigma^{-\frac{1}{2}}A(t)\sigma^{-\frac{1}{2}}),
    \end{equation}
    where $\lambda_{\max}$ denotes the largest eigenvalue.
\end{lemma}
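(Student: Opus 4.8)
The plan is to read $G_{\max}(A)$ as a convex semi-infinite program in the POVM-valued variable $Q$ and to obtain the first equality in Eq.~\eqref{eq:G_max} as its Lagrange dual. First I would introduce a Hermitian multiplier $Y$ for the normalization constraint $\int \diff t\, Q(t) = \mathbb{I}$ appearing in Eq.~\eqref{eqn:def_G_max} and form the Lagrangian
\begin{align}
  L(Q, Y) = \Tr[Y] + \int \diff t\, \Tr\bigl[ Q(t)\, (A(t) - Y) \bigr].
\end{align}
Maximizing over $Q(t) \geq 0$ pointwise shows that the supremum is $+\infty$ unless $A(t) - Y \leq 0$ for every $t$, in which case it equals $\Tr[Y]$ (attained at $Q \equiv 0$). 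Hence the dual objective is $\Tr[Y]$ subject to $Y \geq A(t)$ for all $t$; since each $A(t) \geq 0$, this automatically forces $Y \geq 0$, matching the constraints stated in the lemma. Weak duality is then immediate: for any primal-feasible $Q$ and any dual-feasible $Y$,
\begin{align}
  \int \diff t\, \Tr[Q(t) A(t)] \leq \int \diff t\, \Tr[Q(t) Y] = \Tr\Bigl[ Y \int \diff t\, Q(t) \Bigr] = \Tr[Y],
\end{align}
so that $G_{\max}(A) \leq \inf\{ \Tr[Y] : Y \geq A(t)\ \forall t \}$.

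The main obstacle is the reverse inequality, i.e.\ the absence of a duality gap, since the dual carries a continuum of operator inequalities indexed by $t \in \mathbb{R}$. I would handle this by a finite-reduction argument. For a finite set $F = \{t_1, \dots, t_k\}$ the relaxed dual $d_F \coloneqq \inf\{ \Tr[Y] : Y \geq A(t_i),\ i = 1, \dots, k \}$ is an ordinary finite-dimensional SDP whose primal is the discrete state-discrimination problem $\max\{ \sum_i \Tr[Q_i A(t_i)] : Q_i \geq 0,\ \sum_i Q_i = \mathbb{I} \}$. Slater's condition holds for the relaxed dual because $Y = c\,\mathbb{I}$ is strictly feasible once $c > \sup_t \lVert A(t) \rVert_\infty$, a bound that is finite for the relevant $A = w*[\mu\cdot\rho]$ of Eq.~\eqref{eq-Bayesian_success_Gmax}; finite SDP strong duality then gives $d_F = p_F \leq G_{\max}(A)$, where the last inequality holds because a discrete POVM on $F$ is an admissible (atomic) measure-valued $Q$ in Eq.~\eqref{eqn:def_G_max}. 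It then remains to show $\sup_F d_F = \inf\{ \Tr[Y] : Y \geq A(t)\ \forall t \}$. Exploiting that $A(t) \to 0$ as $|t| \to \infty$ (so the binding constraints live on a compact set, as $\mu$ is normalizable) together with continuity of $t \mapsto A(t)$ (which holds, e.g., when $\mu$ is absolutely continuous), I would use uniform continuity to pick, for each $\varepsilon > 0$, a finite $\varepsilon$-net $F$: any $Y$ feasible for $F$ satisfies $A(t) \leq A(t_i) + \varepsilon\,\mathbb{I} \leq Y + \varepsilon\,\mathbb{I}$ at the nearest net point, hence becomes feasible for all $t$ after adding $\varepsilon\,\mathbb{I}$, changing the trace by at most $\varepsilon \dim \mathcal{H}$. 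Letting $\varepsilon \to 0$ yields $\sup_F d_F = d^*$ and thus strong duality. Alternatively, one may invoke the general SIP duality and constraint-qualification theorems of Refs.~\cite{charnes1980SIP,shapiro2009SIP} directly, the strict feasibility just exhibited serving as the Slater condition.

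Finally I would derive the second equality in Eq.~\eqref{eq:G_max} by a change of variables. Writing a nonzero dual-feasible $Y$ as $Y = \Tr[Y]\,\sigma$ with $\sigma \geq 0$ and $\Tr[\sigma] = 1$, the constraint $Y \geq A(t)$ is equivalent to $\Tr[Y]\,\mathbb{I} \geq \sigma^{-1/2} A(t) \sigma^{-1/2}$, i.e.\ to $\Tr[Y] \geq \lambda_{\max}(\sigma^{-1/2} A(t) \sigma^{-1/2})$ for every $t$, with the convention that the right-hand side is $+\infty$ whenever $\mathrm{supp}\, A(t) \not\subseteq \mathrm{supp}\,\sigma$. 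For fixed $\sigma$ the smallest admissible value of $\Tr[Y]$ is therefore $\sup_t \lambda_{\max}(\sigma^{-1/2} A(t) \sigma^{-1/2})$, and infimizing over $\sigma$ gives the claimed expression. I expect the support and invertibility bookkeeping in this last step to be routine but worth stating carefully, whereas the genuine work lies in the no-gap argument of the second paragraph.
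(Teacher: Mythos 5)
Your proposal is correct and arrives at both equalities of Eq.~\eqref{eq:G_max} by the same overall route as the paper: dualize the semi-infinite program over POVMs to get $\inf\{\Tr[Y]:Y\geq A(t)\ \forall t\}$, then reparametrize $Y=\Tr[Y]\,\sigma$ to obtain the $\lambda_{\max}$ form. The differences are twofold. First, you derive the dual by a direct Lagrangian computation (multiplier $Y$ for $\int\diff t\,Q(t)=\mathbb{I}$, pointwise maximization over $Q(t)\geq 0$), whereas the paper embeds the problem into the standard primal--dual SDP template via the block operators $X=\int\diff t\,\ketbra{t}{t}\otimes Q(t)$ and a two-sided inequality encoding of the normalization constraint; these are equivalent and yours is arguably cleaner. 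Second, and more substantively, the paper simply asserts that strong duality holds for the resulting semi-infinite program, while you supply an actual no-gap argument: finite reduction to discrete state-discrimination SDPs with the Slater point $Y=c\,\mathbb{I}$, followed by an $\varepsilon$-net/uniform-continuity passage to the continuum. This buys a self-contained justification of the first equality, at the price of hypotheses (continuity of $t\mapsto A(t)$ and decay at infinity) that are not part of the lemma as stated but do hold for the intended $A=w\ast[\mu\cdot\rho]$; you correctly flag this, and you should also note explicitly that realizing the atomic POVM supported on $F$ inside $G_{\max}$ requires either admitting atomic measures or approximating them by narrow bumps, which again uses continuity of $A$. The support bookkeeping in your last step matches the paper's restriction to full-rank $\sigma$.
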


\begin{remark}
    Note that, because $\lambda_{\max}(X)=\norm{X}_{\infty}$ for all $X\in\mathrm{P}(\mathcal{H})$, we equivalently have
    \begin{equation}
        G_{\max}(A)=\inf_{\substack{\sigma\geq 0\\\Tr[\sigma]=1}}\sup_{t\in\mathbb{R}}\norm{\sigma^{-\frac{1}{2}}A(t)\sigma^{-\frac{1}{2}}}_{\infty}.
    \end{equation}
\end{remark}

\begin{proof}
    By comparing the definition of $G_{\max}(A)$ in \eqref{eq:G_max} with the primal 
    convex problem
    in \eqref{eqn:SDP_primal_dual_standard}, we immediately see that $G_{\max}(A)$ is characterized by 
    a convex problem
    based on the following identifications:
    \begin{align}
        X&\coloneqq\int\Diff{t}\ketbra{t}{t}\otimes Q(t),\\
        A&\coloneqq\int\Diff{t}\ketbra{t}{t}\otimes A(t),\\
        \Phi[X]&\coloneqq \begin{pmatrix} \int\Diff{t}Q(t) & \\ & -\int\Diff{t}Q(t)   \end{pmatrix},\\
        B&\coloneqq \begin{pmatrix} \mathbb{I} & \\ & -\mathbb{I}   \end{pmatrix} .
    \end{align}
    Here, $\{\ket{t}\}_{t\in\mathbb{R}}$ refers to the (continuous) orthonormal basis of position-operator eigenstates, satisfying $\braket{t}{t'}=\delta(t-t')$ for all $t,t'\in\mathbb{R}$, where $\delta(t-t')$ is the Dirac delta function evaluated on $t-t'$.

    In order to obtain the dual 
    convex problem,
    we simply determine the adjoint of the map $\Phi$, as defined by the relation \begin{align}\Tr[Y\Phi[X]]=\Tr[\Phi^{\dagger}[Y]X].
    \end{align}
    Now, because $B$ is block-diagonal, it suffices to let the dual variable be of the form 
    \begin{align}
    Y=\begin{pmatrix} Y_1 & \\ & Y_2 \end{pmatrix},
    \end{align}
    such that $Y_1\geq 0$ and $Y_2\geq 0$. We then find that
    \begin{align}
        \Tr[Y\Phi[X]]&=\Tr\!\left[\begin{pmatrix} Y_1 & \\ & Y_2 \end{pmatrix}\begin{pmatrix} \int\Diff{t}Q(t) & \\ & -\int\Diff{t}Q(t) \end{pmatrix}\right]\\
        \nonumber 
        &=\Tr\!\left[(Y_1-Y_2)\left(\int\Diff{t}Q(t)\right)\right]\\
        &=\Tr\!\left[\left(\int\Diff{t}\ketbra{t}{t}\otimes(Y_1-Y_2)\right)\left(\int\Diff{t}\ketbra{t}{t}\otimes Q(t)\right)\right],
        \nonumber 
    \end{align}
    which means that we can identify $\Phi^{\dagger}[Y]$ as
    \begin{equation}
        \Phi^{\dagger}[Y]=\int\Diff{t}\ketbra{t}{t}\otimes (Y_1-Y_2).
    \end{equation}
    The dual %
    convex problem
    is thus
    \begin{equation}
        \begin{array}{l l} 
            \textnormal{minimize} & \Tr[Y_1-Y_2] \\[1ex] \textnormal{subject to} & Y_1\geq 0,\, Y_2\geq 0, \\[1ex] & \int\Diff{t}\ketbra{t}{t}\otimes (Y_1-Y_2)\geq \int\Diff{t}\ketbra{t}{t}\otimes A(t).
        \end{array}
    \end{equation}
    Now, the final constraint implies that $Y_1-Y_2\geq A(t)$ for all $t$. Furthermore, because only $Y_1-Y_2$ appears in the objective function and in the constraints, and because $A(t)\geq 0$ for all $t$, by a change of variable the dual optimization above simplifies to the 
    optimization problem
    \begin{equation}\label{eq:G_max_SDP_dual_pf1}
        \begin{array}{l l}
            \textnormal{minimize} & \Tr[Y] \\[1ex] \textnormal{subject to} & Y\geq 0, \\[1ex] & Y\geq A(t)~~\forall~t.
        \end{array}
    \end{equation}
    Finally, because strong duality holds, we have that the primal and dual 
    convex problems
    have the same optimal value, which concludes the proof of the first equality in \eqref{eq:G_max}.
    
    To prove the second equality in \eqref{eq:G_max}, we make another change of variable. For the 
    convex problem
    in \eqref{eq:G_max_SDP_dual_pf1}, we let $Y\equiv x\sigma$, such that $x\geq 0$, $\sigma\geq 0$ and $\Tr[\sigma]=1$. Then, $\Tr[Y]=x$, and the 
    convex problem
    in \eqref{eq:G_max_SDP_dual_pf1} becomes
    \begin{equation}\label{eq:G_max_SDP_dual_pf2}
        \begin{array}{l l}
            \textnormal{minimize} & x \\[1ex] \textnormal{subject to} & x \geq 0,\\[1ex] & x\sigma\geq A(t)~~\forall~t, \\[1ex] & \sigma\geq 0,\,\Tr[\sigma]=1.
        \end{array}
    \end{equation}  
    Next, observe that we can restrict the optimization to density operators $\sigma$ that have full rank, such that the inequality $x\sigma\geq A(t)$ is equivalent to $x\mathbb{I}\geq \sigma^{-\frac{1}{2}}A(t)\sigma^{-\frac{1}{2}}$. Furthermore, because $\sigma^{-\frac{1}{2}}A(t)\sigma^{-\frac{1}{2}}$ is positive semi-definite for all $t$, optimizing with respect to $x\geq 0$ is equivalent to optimizing with respect to all $x\in\mathbb{R}$. Therefore, because
    \begin{equation}
        \inf\{x:x\in\mathbb{R},\,x\mathbb{I}\geq H\}=\lambda_{\max}(H),
    \end{equation}
    where $H$ is an arbitrary Hermitian operator and $\lambda_{\max}(H)$ is the largest eigenvalue of $H$, we find that the 
    convex problem
    in \eqref{eq:G_max_SDP_dual_pf2} is equivalent to
    \begin{equation}
        \begin{array}{l l}
            \textnormal{minimize} & \sup_{t\in\mathbb{R}} \lambda_{\max}(\sigma^{-\frac{1}{2}}A(t)\sigma^{-\frac{1}{2}}) \\[1ex] \textnormal{subject to} & \sigma\geq 0,\,\Tr[\sigma]=1.
        \end{array}
    \end{equation}  
    This concludes the proof of the second equality in \eqref{eq:G_max}.
\end{proof}

We can also determine the optimal measurement in the minimax setting using a convex program.

\begin{sproposition}[Minimax measurement optimization]\label{prop:minimax_sdp}
For a given set of states $\rho(t)$ and a fixed window function $w(\tau)$, the optimal minimax success probability $\overline{\eta}^*(w,\rho)$ defined in Definition~\ref{def:success_probability_no_optimization}
can be computed using the following 
convex program:
\begin{align}\label{eq-minimax_succ_primal_supp}
    \begin{array}{l l}
      \textnormal{maximize} & \eta \\ [1ex]
      \textnormal{subject to} & Q(t) \geq 0, \, \eta\in[0,1],\\[1ex]
      & \int \diff t \, Q(t) = \bbI ,\\[1ex]
      & \Tr[ \rho(t) (w * Q)(t) ] \geq \eta~~\forall~t.
    \end{array}
\end{align}
There is no duality gap and the associated dual program is 
\begin{equation}\label{eq:minimax_sdp}
    \begin{array}{l l}
      \textnormal{minimize} & \Tr[X] \\ [1ex]
      \textnormal{subject to} & X \geq 0, \, \mu(t) \geq 0,\\
      [1ex]
      & \int \diff t \, \mu(t) = 1 ,\\[1ex]
      & X \geq \mu(t)(w\ast\rho)(t)~~\forall~t.
    \end{array}
\end{equation}
\end{sproposition}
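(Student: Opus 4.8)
The plan is to prove the statement in three stages: identify the primal program with the optimal minimax success probability, derive the dual by Lagrangian duality, and certify the absence of a duality gap by reducing to the Bayesian program already handled in Proposition~\ref{prop:bayesian_sdp}. First I would address the primal. By Definition~\ref{def:success_probability_no_optimization} and the definition of the optimal minimax quantity, $\overline{\eta}^*(w,\rho)=\sup_{Q}\inf_t \Tr[\rho(t)(w*Q)(t)]$, where the supremum ranges over POVMs with $Q(t)\geq 0$ and $\int dt\,Q(t)=\mathbb{I}$. Introducing an epigraph variable $\eta$ and rewriting the inner infimum as the largest $\eta$ satisfying $\Tr[\rho(t)(w*Q)(t)]\geq\eta$ for all $t$ reproduces the primal program~\eqref{eq-minimax_succ_primal_supp} exactly; the box constraint $\eta\in[0,1]$ is inactive at the optimum (for $0<\overline{\eta}^*<1$) and merely restricts attention to the operationally meaningful range. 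This is a semi-infinite program: the decision variables are the operator-valued function $Q(\cdot)$ and the scalar $\eta$, while the constraints indexed by $t\in\mathbb{R}$ form an infinite family.

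Next I would derive the dual. Attaching a nonnegative measure $\mu(t)\geq 0$ to the continuum of inequality constraints and a Hermitian operator $X$ to the equality constraint $\int dt\,Q(t)=\mathbb{I}$, the Lagrangian reads
\begin{equation}
L = \eta + \int dt\, \mu(t)\bigl(\Tr[\rho(t)(w*Q)(t)] - \eta\bigr) + \Tr\Bigl[X\bigl(\mathbb{I} - \textstyle\int dt\, Q(t)\bigr)\Bigr].
\end{equation}
Maximizing over the free variable $\eta$ forces the normalization $\int dt\,\mu(t)=1$. Expanding the convolution and collecting the coefficient of $Q(\tau)$ gives $\int dt\,\mu(t)w(t-\tau)\rho(t)=(w*[\mu\cdot\rho])(\tau)$, using the symmetry of the window $w(t-\tau)=w(\tau-t)$; the supremum over $Q(\tau)\geq 0$ of the remaining term $\int d\tau\,\Tr[((w*[\mu\cdot\rho])(\tau)-X)Q(\tau)]$ is finite precisely when $X\geq(w*[\mu\cdot\rho])(\tau)$ for all $\tau$, in which case it vanishes, leaving the objective $\Tr[X]$. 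This yields the dual program~\eqref{eq:minimax_sdp}; note that $X\geq 0$ is then automatic, since $X\geq(w*[\mu\cdot\rho])(t_0)\geq 0$ for any single $t_0$.

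For strong duality I would avoid verifying Slater's condition directly for the semi-infinite primal and instead swap the order of optimization. The objective $\int d\mu(t)\int d\tau\,w(t-\tau)\Tr[\rho(t)Q(\tau)]$ is bilinear in the prior $\mu$ and the POVM $Q$, the set of priors on $I$ is convex, and the set of POVMs is convex and weak-$*$ compact, so Sion's minimax theorem gives $\overline{\eta}^*(w,\rho)=\sup_Q\inf_\mu\eta(w,\mu,\rho,Q)=\inf_\mu\sup_Q\eta(w,\mu,\rho,Q)=\inf_\mu\eta^*(w,\mu,\rho)$. By Proposition~\ref{prop:bayesian_sdp}, each Bayesian value equals $\inf\{\Tr[X]:X\geq 0,\,X\geq(w*[\mu\cdot\rho])(t)\ \forall t\}$ with no gap, and merging the two infima over $\mu$ and over $X$ reproduces the dual program and shows it attains the primal value.

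The main obstacle I anticipate is the rigorous justification of the minimax swap in this continuous, infinite-dimensional setting: one must equip the space of operator-valued POVMs on $I$ with a topology in which it is compact and the bilinear objective is suitably upper/lower semicontinuous, and one must treat priors as probability measures on $I$ with the associated weak topology. Should these regularity conditions prove delicate, the fallback is a direct appeal to strong duality for semi-infinite programs via Slater's condition, exhibiting the strictly feasible point $Q(t)\propto\mathbb{I}$ (a normalized flat POVM) together with boundedness of the feasible set, exactly as established for the Bayesian program in Proposition~\ref{prop:bayesian_sdp}.
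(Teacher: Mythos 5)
Your proposal is correct, and while the identification of the primal and the mechanical derivation of the dual match the paper's computation in substance (the paper casts the primal into the standard block form of Eq.~\eqref{eqn:SDP_primal_dual_standard}, computes the adjoint $\Phi^{\dagger}$ of the constraint map, and then simplifies by a change of variables and complementary slackness to force $\int\diff t\,\mu(t)=1$ --- your Lagrangian with multipliers $\mu(t)$ and $X$ is the same calculation in a different notation), your treatment of the duality gap takes a genuinely different route. The paper simply asserts that strong duality is straightforward to verify for the semi-infinite program; you instead observe that the dual~\eqref{eq:minimax_sdp} is exactly the Bayesian dual of Proposition~\ref{prop:bayesian_sdp} with an additional infimum over priors, so that no gap is equivalent to the minimax swap $\sup_Q\inf_\mu=\inf_\mu\sup_Q$, which you obtain from Sion's theorem using bilinearity of $\eta(w,\mu,\rho,Q)$ in $(\mu,Q)$ and convexity/compactness of the two feasible sets. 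This is an attractive argument because it makes the structural relation between the minimax and Bayesian programs (already hinted at in the paper after Proposition~\ref{prop:sdp_formulation}) do the work, and it localizes all the functional-analytic delicacy in one well-understood place. The price is exactly the regularity issue you flag: one must fix a topology on continuous-outcome POVMs in which the set is compact and the payoff is semicontinuous, and take priors as probability measures with the weak topology; your fallback via Slater's condition with the flat POVM $Q(t)\propto\mathbb{I}$ is the standard repair and is no less rigorous than what the paper itself provides. One cosmetic remark: your dual derivation yields $\int\diff t\,\mu(t)=1$ directly from stationarity in the free variable $\eta$, whereas the paper first obtains $\int\diff t\,\mu(t)\geq 1$ and then invokes complementary slackness; your version is slightly cleaner and reaches the same program.
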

\begin{proof}
    Comparing the primal problem in \eqref{eq-minimax_succ_primal_supp} with the primal problem in the left-hand side of \eqref{eqn:SDP_primal_dual_standard}, we can make the following identifications:
    \begin{align}
        X&\equiv\begin{pmatrix} \eta & \\ & \int\Diff{t}\ketbra{t}{t}\otimes Q(t)  \end{pmatrix},\\
        A&\equiv\begin{pmatrix} 1 & \\ & 0 \end{pmatrix},\\
        \Phi[X]&\equiv \begin{pmatrix} \int\Diff{t}\ketbra{t}{t}\left(\eta-\Tr[\rho(t)(w\ast Q)(t)]\right) & & \\ & \int\Diff{t}Q(t) & \\ & & -\int\Diff{t}Q(t)\end{pmatrix},\\
        B&\equiv \begin{pmatrix} 0 & & \\ & \mathbb{I} & \\ & & -\mathbb{I} \end{pmatrix}.
    \end{align}
    This establishes that the optimal minimax success probability is characterized by 
    a convex problem.

    Now, for the dual, because the operator $B$ defined above is block-diagonal, it suffices to let the dual variable be of the form
    \begin{equation}
        Y=\begin{pmatrix} \int\Diff{t}\mu(t)\ketbra{t}{t} & & \\ & Y_1 & \\ & & Y_2   \end{pmatrix},
    \end{equation}
    where $\mu(t)\geq 0$ for all $t$, and $Y_1,Y_2\geq 0$. Then, the adjoint of the map $\Phi$ defined above is given by the relation 
    \begin{align}
    \Tr[Y\Phi[X]]=\Tr[\Phi^{\dagger}[Y]X]. 
    \end{align}
    In particular,
    \begin{align}
        \Tr[Y\Phi[X]]&=\int\Diff{t}\mu(t)\left(\eta-\Tr[\rho(t)(w\ast Q)(t)]\right)+\int\Diff{t}\Tr[(Y_1-Y_2)Q(t)]\\
        \nonumber 
        &=\eta\int\Diff{t}\mu(t)-\int\Diff{t}\mu(t)\Tr[\rho(t)(w\ast Q)(t)]+\int\Diff{t}\Tr[(Y_1-Y_2)Q(t)].
    \end{align}
    Now,
    \begin{align}
        &\int\Diff{t}\mu(t)\Tr[\rho(t)(w\ast Q)(t)]\\
        \nonumber 
        &\quad =\int\Diff{t}\Diff{t'}\Tr[\mu(t)\rho(t)w(t-t')Q(t)]\\
        \nonumber 
        &\quad =\int\Diff{t'}\Tr\!\left[\left(\int\Diff{t}\mu(t)\rho(t)w(t-t')\right) Q(t')\right]\\
        \nonumber 
        &\quad =\int\Diff{t'}\Tr[(w\ast(\mu\cdot\rho))(t')Q(t')],
        \nonumber 
    \end{align}
    where to obtain the last line we have used the symmetry of the window function, \emph{i.e.}, $w(t-t')=w(t'-t)$. Therefore,
    \begin{equation}
        \Tr[Y\Phi[X]]=\eta\int\Diff{t}\mu(t)+\int\Diff{t}\Tr[(Y_1-Y_2-(w\ast(\mu\cdot\rho))(t))Q(t)],
    \end{equation}
    which implies that the adjoint of $\Phi$ is given by
    \begin{equation}
        \Phi^{\dagger}[Y]=\begin{pmatrix} \int\Diff{t}\mu(t) & \\ & \int\Diff{t}\ketbra{t}{t}\otimes (Y_1-Y_2-(w\ast(\mu\cdot\rho))(t) \end{pmatrix}.
    \end{equation}
    The dual %
    convex problem
    is therefore
    \begin{equation}
        \begin{array}{l l}
            \textnormal{minimize} & \Tr[Y_1-Y_2] \\[1ex]
            \textnormal{subject to} & Y_1\geq 0,\, Y_2\geq 0,\, \mu(t)\geq 0~~\forall~t,\\[1ex]
            & \int\Diff{t}\mu(t)\geq 1,\\[1ex]
            & \int\Diff{t}\ketbra{t}{t}\otimes(Y_1-Y_2-(w\ast(\mu\cdot\rho))(t))\geq 0.
        \end{array}
    \end{equation}
    The final constraint is equivalent to $Y_1-Y_2\geq (w\ast(\mu\cdot\rho))(t)$ for all $t$. Furthermore, because only $Y_1-Y_2$ appears in the objective function and in the constraints, and because $(w\ast(\mu\cdot\rho))(t)\geq 0$ for all $t$, the 
    convex problem
    above simplifies to the following:
    \begin{equation}
        \begin{array}{l l}
            \textnormal{minimize} & \Tr[Y] \\[1ex]
            \textnormal{subject to} & Y\geq 0,\,\mu(t)\geq 0~~\forall~t,\\[1ex]
            & \int\Diff{t}\mu(t)\geq 1,\\[1ex]
            & Y\geq (w\ast(\mu\cdot\rho))(t)~~\forall~t.
        \end{array}
    \end{equation}

    Finally, let us apply the complementary slackness condition~\cite{khatri2020principles} $\Phi^{\dagger}[Y]X=AX$. Based on the definitions above, this implies that $\eta\int\Diff{t}\mu(t)=\eta$, \textit{i.e.}, $\int\Diff{t}\mu(t)=1$. Therefore, we obtain
    \begin{equation}
        \begin{array}{l l}
            \text{minimize} & \Tr[Y] \\[1ex]
            \text{subject to} & Y\geq 0,\, \mu(t)\geq 0~~\forall~t,\\[1ex]
            & \int\diff{t}~\mu(t)= 1, \\[1ex]
            & Y\geq (w\ast(\mu\cdot\rho))(t)~~\forall~t,
        \end{array}
    \end{equation}
    as claimed. It is straightforward to verify strong duality, so that the primal and dual programs have the same optimal value.
\end{proof}

\subsection{Probe optimization}
In certain applications, especially considering real experiments where capabilities can be limited or pre-existing experiments should be used, the optimization of a probe state for \emph{fixed} measurement $Q(t)$ and encoding channel $\calN(t)$ 
needs to be considered. Note that fixing a measurement corresponds to fixing both the quantum part and the classical post-processing. In this case, the optimization takes a particularly simple form. Here, $\lVert
.\rVert_{\infty}$ denotes the infinity or spectral norm.

\begin{sproposition}[Probe state optimization]\label{prop:probe_state_opt}
For a given set of encoding channels $\calN(t)$ with prior probabilities $\mu(t)$, a fixed measurement $Q(t)$ and a fixed window function $w(\tau)$, the optimal success probability optimized over all probe states is given by 
\begin{align}\label{eqn:probe_opt_bayesian}
    \eta^{*}(w, \mu, \calN, Q) = \left\lVert \int \diff \mu(t) \, \calN^{\dagger}(t)[(w*Q)(t)] \right\rVert_{\infty}
\end{align}
and is achieved for the pure eigenstate of the operator $\int \diff \mu(t) \, \calN^{\dagger}(t)[(w*Q)(t)]$ corresponding to the largest eigenvalue. Similarly,
\begin{align}\label{eqn:probe_opt_minimax}
    \overline{\eta}^{*}(w, \calN, Q) = \min_t \left\lVert \calN^{\dagger}(t)[(w*Q)(t)] \right\rVert_{\infty},
\end{align} 
which is achieved for the pure eigenstate of the operator $\min_t \calN^{\dagger}(t)[(w*Q)(t)]$ corresponding to the largest eigenvalue.
\end{sproposition}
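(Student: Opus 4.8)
The plan is to eliminate the probe state from the objective by transferring the (fixed) measurement and window through the encoding channel, thereby collapsing the optimization to a single eigenvalue problem. I would begin from the Bayesian success probability of Definition~\ref{def:success_probability_strategies} for one channel use with probe $\rho_0$, namely $\int \diff\mu(t)\,\diff\tau\,w(t-\tau)\,\Tr[\calN(t)[\rho_0]\,Q(\tau)]$, rewrite the $\tau$-integral with the convolution notation, and then use the defining property of the adjoint channel, $\Tr[\calN(t)[\rho_0]\,X]=\Tr[\rho_0\,\calN^{\dagger}(t)[X]]$, to pull the state out of the integral:
\begin{align}
    \eta(w,\mu,\calN,\rho_0,Q) = \Tr[\rho_0\, A_{\mu}], \qquad A_{\mu} \coloneqq \int \diff\mu(t)\,\calN^{\dagger}(t)[(w*Q)(t)].
\end{align}
The optimization over probe states is thus reduced to maximizing a linear functional of $\rho_0$.

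The next step is to record that $A_{\mu}$ is positive semi-definite. Indeed, the window $w=w_{\delta}$ is nonnegative, so $(w*Q)(t)=\int\diff\tau\,w(t-\tau)Q(\tau)\geq 0$ is a nonnegative operator-weighted integral of POVM effects; the adjoint $\calN^{\dagger}(t)$ of a channel is a positive map and hence preserves positivity; and $\mu(t)\geq 0$. Consequently $\max_{\rho_0}\Tr[\rho_0 A_{\mu}]=\lambda_{\max}(A_{\mu})=\lVert A_{\mu}\rVert_{\infty}$, the supremum over density operators of a linear functional being the largest eigenvalue of its generating operator, attained by the projector onto a corresponding top eigenvector. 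This proves Eq.~\eqref{eqn:probe_opt_bayesian} together with its achievability.

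For the minimax statement I would run the identical reduction at fixed $\rho_0$ to get $\overline{\eta}(w,\calN,\rho_0,Q)=\inf_t\Tr[\rho_0\,A(t)]$ with $A(t)\coloneqq\calN^{\dagger}(t)[(w*Q)(t)]\geq 0$, so that $\overline{\eta}^{*}(w,\calN,Q)=\sup_{\rho_0}\inf_t\Tr[\rho_0 A(t)]$. One direction is immediate from the max--min inequality,
\begin{align}
    \sup_{\rho_0}\inf_t\Tr[\rho_0 A(t)] \leq \inf_t\sup_{\rho_0}\Tr[\rho_0 A(t)] = \inf_t\lVert A(t)\rVert_{\infty} = \min_t\lVert A(t)\rVert_{\infty}.
\end{align}
For achievability I would let $t^{\star}$ attain $\min_t\lVert A(t)\rVert_{\infty}$ and propose the top eigenprojector of $A(t^{\star})$ as the optimal probe, noting that it evaluates to $\lVert A(t^{\star})\rVert_{\infty}$ at $t^{\star}$.

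The main obstacle is closing this gap: achievability requires that the proposed probe $\rho_0=\ketbra{v}{v}$, with $v$ a top eigenvector of $A(t^{\star})$, remain worst-case exactly at $t^{\star}$, i.e.\ that $\langle v|A(t)|v\rangle\geq\lVert A(t^{\star})\rVert_{\infty}$ for all $t$. Writing $\inf_t\Tr[\rho_0 A(t)]=\inf_{p}\Tr[\rho_0\int\diff p(t)\,A(t)]$ as an infimum over priors $p$ and invoking Sion's minimax theorem on the compact convex sets of density operators and priors gives $\overline{\eta}^{*}=\inf_{p}\lambda_{\max}(\int\diff p(t)\,A(t))$, and the claimed formula is recovered precisely when the worst-case prior degenerates to a point mass at $t^{\star}$. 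I therefore expect the crux of the argument to be justifying this localization of the worst case: it should hold in the regular and group-covariant instances relevant here, where the spectral structure of $A(t)$ is controlled, but in full generality it appears to require an additional regularity hypothesis rather than following from the reduction alone.
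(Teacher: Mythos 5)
Your Bayesian argument is exactly the paper's proof: pull $\rho_0$ through the adjoint channel to write the objective as $\Tr[\rho_0 A_{\mu}]$ with $A_{\mu}=\int\diff\mu(t)\,\calN^{\dagger}(t)[(w*Q)(t)]$, observe that $A_{\mu}$ is positive semi-definite, and maximize the linear functional to obtain $\lVert A_{\mu}\rVert_{\infty}$, attained on a top eigenvector. That part is complete and matches the paper line for line.

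For the minimax part, the gap you flag is genuine, and in fact the equality in Eq.~\eqref{eqn:probe_opt_minimax} fails in general. Your Sion argument correctly gives $\overline{\eta}^{*}=\inf_{p}\lambda_{\max}\bigl(\int\diff p(t)\,A(t)\bigr)$ with $A(t)=\calN^{\dagger}(t)[(w*Q)(t)]$, and this infimum over priors can be strictly smaller than $\min_t\lVert A(t)\rVert_{\infty}$: if two parameter values give $A(t_1)=\dyad{0}$ and $A(t_2)=\dyad{1}$, then $\min_t\lVert A(t)\rVert_{\infty}=1$ while $\sup_{\rho_0}\inf_t\Tr[\rho_0 A(t)]=1/2$, attained by the maximally mixed probe. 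So only your ``$\leq$'' direction (the max--min inequality) holds unconditionally; the localization of the worst-case prior to a point mass that you identify as the missing step cannot be supplied without extra structure, such as the covariant setting of Section~\ref{sec:covariant_setting} where $\lVert A(t)\rVert_{\infty}$ is $t$-independent and a single probe is simultaneously worst-case optimal for all $t$. Note that the paper's own proof dismisses the minimax case in one sentence (``taking the infimum over priors will yield the minimum''), which in effect exchanges $\sup_{\rho_0}\inf_t$ with $\inf_t\sup_{\rho_0}$ and does not close this gap either; your more careful treatment shows that the right-hand side of Eq.~\eqref{eqn:probe_opt_minimax} is in general only an upper bound on the optimal minimax success probability, so you should not expect to complete the achievability step as stated.
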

\begin{proof}
Writing out the objective of the optimization yields
\begin{align}
    \eta(w, \mu, \calN(\cdot)[\rho_0], Q) &= \int \diff \mu(t) \, \Tr [\calN(t)[\rho_0] (w*Q)(t) ] \\
    \nonumber
    &= \Tr \left[ \rho_0 \int \diff \mu(t) \, \calN^{\dagger}(t)[(w*Q)(t)] \right]
\end{align}
which is clearly maximized over quantum states $\rho_0$ for the largest eigenstate of $\int \diff \mu(t) \, \calN^{\dagger}(t)[(w*Q)(t)]$, as the latter is a positive semi-definite operator by construction. This proves the statement for the Bayesian case. The minimax case follows straightforwardly by noting that taking the infimum over priors will yield the minimum.
\end{proof}

\begin{figure}
    \centering
    \includegraphics[scale=1]{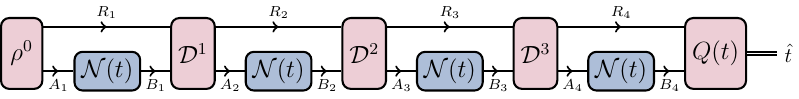}
    \caption{Depiction of an adaptive procedure for estimating the parameter $t$ encoded in a quantum channel $\mathcal{N}(t)$. The number of uses of the channel here is $n=4$.}\label{fig:adaptive_channel_uses}
\end{figure}

From the above proposition, we learn that the optimal probe states can always be assumed to be pure states, mixed states can only be admissible if the operators in Eqs.~\eqref{eqn:probe_opt_bayesian} and \eqref{eqn:probe_opt_minimax} have a degenerate subspace corresponding to the largest eigenvalue.

\subsection{Optimization with respect to strategies with definite causal order}

Typically, metrological problems are defined by a parametrized physical process modeled as a quantum channel $\calN(t)$ and an optimal \emph{combination} of probe state and measurement needs to be found to best extract the parameter $t$. This necessitates a joint optimization over both variables. Applying such an optimization naively, \textit{i.e.}, by optimizing over both variables in the expression
\begin{align}
    \eta^{*}(w, \mu, \calN) = \int \diff \mu(t) \, \Tr[ \calN(t)[\rho^0](w*Q)(t) ],
\end{align}
does not yield a semi-definite program, as it is quadratic in the variables $\rho^0$ and $Q(t)$. Another possible alternative would be to exploit the result of Proposition~\ref{prop:probe_state_opt} which gives the optimal probe for any measurement, and optimize over the measurement, \textit{i.e.},
\begin{align}
    \eta^{*}(w, \mu, \calN)=\sup \left.\left\{ \left\lVert \int \diff \mu(t) \, \calN^{\dagger}(t)[(w*Q)(t)] \right\rVert_{\infty} \, \right| \, Q(t) \geq 0, \int \diff t \, Q(t) = \bbI \right\}.
\end{align}
This, however, corresponds to \emph{maximizing} a convex function, and hence is not a convex optimization problem.

The above arguments might suggest that performing a joint optimization is impossible; however, we can circumvent these obstacles by a change of perspective. This is because the repeated use of a quantum channel, possibly in an adaptive way as shown in Fig.~\ref{fig:adaptive_channel_uses}, can be described by a \textit{quantum comb}~\cite{CDP09}, also known as a \textit{quantum strategy}~\cite{GW07}. In the following, we exploit the fact that the set of quantum combs is convex and formulate a 
convex problem
for the joint optimization of probe state and measurement.

To see how this works, before describing the general case, let us consider the example described above, with an input state $\rho^0$ and a measurement $t\mapsto Q(t)$, both of which we wish to optimize jointly. This scenario corresponds to the adaptive strategy depicted in Fig.~\ref{fig:adaptive_channel_uses} with $n=1$. The probability of the outcome $t'$ of the measurement, when the channel is $\mathcal{N}(t)$, is given by
\begin{equation}\label{eq-supp_meas_prob_1comb}
    \Tr[Q_{RB}(t')\mathcal{N}_{A\to B}(t)[\rho_{RA}^0]],
\end{equation}
where $A$ and $B$ are the input and output systems, respectively, of the channel, and $R$ is a memory system of arbitrary dimension. Let us write the output state $\mathcal{N}_{A\to B}(t)[\rho_{RA}^0]$ in terms of the Choi representation $C_{AB}^{\mathcal{N}(t)}$ of $\mathcal{N}(t)$ as follows~\cite{khatri2020principles}:
\begin{equation}
    \mathcal{N}_{A\to B}(t)[\rho_{RA}^0]=\Tr_A[(\rho_{RA}^{T_A}\otimes\mathbb{I}_B)(\mathbb{I}_R\otimes C_{AB}^{\mathcal{N}(t)})].
\end{equation}
Therefore, the probability in \eqref{eq-supp_meas_prob_1comb} can be written as $\Tr[P_{AB}^{T}(t')C_{AB}^{\mathcal{N}(t)}]\equiv P_{AB}(t')\star C_{AB}^{\mathcal{N}(t)}$, where the ``$\star$'' refers to the link product~\cite{CDP09} and
\begin{equation}
    P_{AB}(t')\coloneqq\Tr_R[(\rho_{RA}^0\otimes\mathbb{I}_B)(\mathbb{I}_B\otimes Q_{RA}^{T_A}(t')].
\end{equation}
Now, because $t\mapsto Q(t)$ is a POVM, we find that
\begin{equation}
    \int\diff t\, P_{AB}(t) = \Tr_R[\rho_{RA}^0]\otimes\mathbb{I}_B.
\end{equation}
In other words, for every state-measurement pair $(\rho_{RA}^0,t\mapsto Q(t))$, we can construct a positive semi-definite operator $t\mapsto P_{AB}(t)$ such that $\int\diff t\, P_{AB}(t)=\sigma_A\otimes\mathbb{I}_B$ for some quantum state $\sigma_A$. The converse is also true~\cite{GW07}~(Theorem~6), which implies that the optimal success probability, optimized with respect to both input probe state and measurement, can be obtained as the solution to the following convex problem (in the Bayesian setting):
\begin{equation}
    \begin{array}{l l}
        \text{maximize} & \int\diff t\, \mu(t)\Tr[(w\ast P_{AB})(t) C_{AB}^{\mathcal{N}(t)}] \\[1ex]
        \text{subject to} & P_{AB}(t)\geq 0~~\forall~t,\\[1ex]
        & \int\diff t\, P_{AB}(t)=\sigma_A\otimes\mathbb{I}_B,\\[1ex]
        & \sigma_A\geq 0,\,\Tr[\sigma_A]=1,
    \end{array}
\end{equation}
where $t\mapsto\mu(t)$ is the prior probability density function. In the minimax setting, the optimal success probability can be obtained as the solution to the following convex problem:
\begin{equation}
    \begin{array}{l l}
       \text{maximize}  &  \eta \\[1ex]
       \text{subject to}  &  \Tr[(w\ast P_{AB})(t) C_{AB}^{\mathcal{N}(t)}]\geq \eta~~\forall~t,\\[1ex]
       & \eta\in[0,1],\\[1ex]
       & P_{AB}(t)\geq 0~~\forall~t,\\[1ex]
       & \int\diff t\, P_{AB}(t)=\sigma_A\otimes\mathbb{I}_B,\\[1ex]
       & \sigma_A\geq 0,\,\Tr[\sigma_A]=1.
    \end{array}
\end{equation}
We provide a formal proof of these results, in the general context of multiple adaptive uses of the channel $t\mapsto\mathcal{N}(t)$, in Propositions~\ref{prop-succ_prob_combs_Bayesian} and \ref{prop-succ_prob_combs} below.

\begin{figure}
    \centering
    \includegraphics[scale=1]{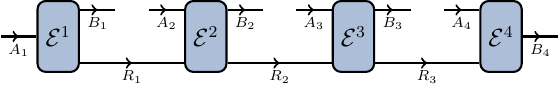}
    \caption{A general quantum comb with $n=4$ elements. The input and output systems are $A_j$ and $B_j$, respectively, and the memory systems are $R_j$.}\label{fig:quantum_comb}
\end{figure}

A general quantum comb of length $n=4$ is shown in Fig.~\ref{fig:quantum_comb}. The comb is simply a concatenation of quantum channels $\mathcal{E}^j$, with input systems $A_j$, output systems $B_j$, and memory systems $R_j$. We refer to the comb using the notation $\mathcal{E}^{[n]}$. It can be shown that the Choi representation $ C_{A_1^nB_1^n}^{\mathcal{E}^{[n]}}$ of the comb satisfies the following constraints:
\begin{align}
	\Tr_{B_n}\!\left[ C^{\mathcal{E}^{[n]}}_{A_1^nB_1^n}\right]&= C^{\mathcal{E}^{[n-1]}}_{A_1^{n-1}B_1^{n-1}}\otimes\mathbb{I}_{A_n},\label{eqn:comb_constr_1}\\
	\Tr_{B_k}\!\left[ C^{\mathcal{E}^{[k]}}_{A_1^kB_1^k}\right]&= C^{\mathcal{E}^{[k-1]}}_{A_1^{k-1}B_1^{k-1}}\otimes\mathbb{I}_{A_k}~~\forall~k\in\{2,3,\dotsc,r-1\},\label{eqn:comb_constr_2}\\
	\Tr_{B_1}\!\left[ C^{\mathcal{E}^{[1]}}_{A_1B_1}\right]&=\mathbb{I}_{A_1}.\label{eqn:comb_constr_3}
\end{align}
These constraints tell us that by iteratively tracing out the output systems we obtain Choi representations of the same comb but with one fewer round each time. Conversely, every set $\{C^{(k)}_{A_1^kB_1^k}\}_{k=1}^{n}$ of positive semi-definite operators satisfying the constraints in \eqref{eqn:comb_constr_1}--\eqref{eqn:comb_constr_3} gives us Choi representations corresponding to a quantum comb with length $n$; see
Ref.~\cite{GW07}~(Theorem~6). Note that these operators do not give us the Choi representations of the channels $\mathcal{E}^j$ themselves, only the Choi representations of the combs obtained by concatenating the channels in the manner shown in Fig.~\ref{fig:quantum_comb}.

Following Ref.~\cite{GW07}, for every $n\in\{1,2,\dotsc\}$, we define the set $\mathsf{S}_n(A_1^n,B_1^n)$ as 
\begin{multline}\label{eq-strategy_combs}
    \mathsf{S}_n(A_1^n,B_1^n)\coloneqq\left\{C_{A_1^nB_1^n}^{(n)}: C_{A_1^nB_1^n}^{(n)}\geq 0,\Tr_{B_k}[C_{A_1^kB_1^k}^{(k)}]=C_{A_1^{k-1}B_1^{k-1}}^{(k-1)}\otimes\mathbb{I}_{A_k},\,C_{A_1^{k-1}B_1^{k-1}}^{(k-1)}\geq 0~~\forall~k\in\{2,3,\dotsc,n\},\right.\\
    \left.\Tr_{B_1}[C_{A_1B_1}^{(1)}]=\mathbb{I}_{A_1},\,C_{A_1B_1}^{(1)}\geq 0 \right\}.
\end{multline}
In other words, $\mathsf{S}_n(A_1^n,B_1^n)$ is the set of all Choi representations of length-$n$ quantum combs with input systems $A_1,A_2,\dotsc,A_n$ and output systems $B_1,B_2,\dotsc,B_n$. Similarly, for the set of combs consisting of quantum state preparation at the beginning, known as \textit{co-strategies} (see the red comb in Fig.~\ref{fig:adaptive_channel_uses}), we let
\begin{multline}\label{eq-co_strategy_combs}
    \overline{\mathsf{S}}_n(A_1^{n-1},B_1^n)\coloneqq\left\{C_{A_1^{n-1}B_1^n}^{(n)}:C_{A_1^{n-1}B_1^n}^{(n)}\geq 0,\, \Tr_{B_k}[C_{A_1^{k-1}B_1^k}^{(k)}]=C_{A_1^{k-2}B_1^{k-1}}^{(k-1)}\otimes\mathbb{I}_{A_{k-1}},\right.\\\left.C_{A_1^{k-2}B_1^{k-1}}^{(k-1)}\geq 0~~\forall~k\in\{3,4,\dotsc,n\},\, \Tr_{B_2}[C_{A_1B_1^2}^{(2)}]=C_{B_1}^{(1)}\otimes\mathbb{I}_{A_1},\, C_{B_1}^{(1)}\geq 0,\, \Tr_{B_1}[C_{B_1}^{(1)}]=1\right\}.
\end{multline}

Now, returning to the parameter estimation problem, note that in Fig.~\ref{fig:adaptive_channel_uses}, we have a concatenation of two combs: One corresponding to the strategy itself (in red), and the other corresponding to the channels $\mathcal{N}(t)$ containing the parameter to be estimated. The Choi representation of the latter is simply a tensor product
\begin{equation}\label{eq-channel_n_uses_Choi}
     C_{A_1^nB_1^{n}}^{\mathcal{N}(t)^{[n]}}=\bigotimes_{j=1}^n C_{A_jB_{j}}^{\mathcal{N}(t)},
\end{equation}
because the channel uses are independent of each other. Using this, we can obtain the optimal success probability as the following primal-dual pair of convex problems, concretely semi-infinite problems, in both the Bayesian and minimax settings.
(We refer to Ref.~\cite{Chiribella12} for a similar result.)

\begin{sproposition}[Bayesian optimization of adaptive causal strategies]\label{prop-succ_prob_combs_Bayesian}
    Let $ C_{A_1^nB_1^n}^{\mathcal{N}(t)^{[n]}}$ be the Choi representation of the comb $\mathcal{N}(t)^{[n]}$ corresponding to $n$ uses of the paramterized quantum channel $t\mapsto \mathcal{N}(t)$, as shown in Fig.~\ref{fig:adaptive_channel_uses}. Also, let $t\mapsto\mu(t)$ be a prior probability density function. Then, the optimal Bayesian strategy for estimating the parameter $t$ can be determined using the %
    convex problem
    \begin{equation}\label{eqn:succ_prob_comb_Bayesian_primal}
        \begin{array}{l l}
            \textnormal{maximize} & \displaystyle\int\diff t\,\mu(t) \Tr[P_{A_1^nB_1^n}(t)(w\ast  C_{A_1^nB_1^n}^{\mathcal{N}^{[n]}})(t)] \\[0.2cm]
            \textnormal{subject to} & P_{A_1^{n}B_1^n}(t)\geq 0~~\forall~t,\\[0.2cm]
            & \int\diff t\,P_{A_1^{n}B_1^n}(t)=C_{A_1^nB_1^{n-1}}^{(n)}\otimes \mathbb{I}_{B_{n}},\\[0.2cm]
            & C_{A_1^nB_1^{n-1}}^{(n)}\in\overline{\mathsf{S}}_n(B_1^{n-1},A_1^n),
        \end{array}
    \end{equation}
    where the variable $C_{A_1^nB_1^{n-1}}^{(n)}$ represents a length-$n$ co-strategy quantum comb, excluding the measurement, (see the red comb in Fig.~\ref{fig:adaptive_channel_uses}). The variables $P_{A_1^{n}B_1^n}(t)$ correspond to the measurement.
   
    The 
    convex program dual to the one above is
    \begin{equation}
        \begin{array}{l l} \textnormal{minimize} & \lambda \\[0.2cm] \textnormal{subject to} & \lambda\geq 0, \\[0.1cm] & \lambda Y_{A_1^nB_1^n}^{(n)}\geq\mu(t)\left(w\ast C_{A_1^nB_1^n}^{\mathcal{N}(t)^{[n]}}\right)(t)~~\forall~t,\\[0.25cm]
        & Y_{A_1^nB_1^n}^{(n)}\in\mathsf{S}_n(A_1^n,B_1^n),\end{array}
    \end{equation}
    where the optimization is with respect to $\lambda\geq 0$ and length-$n$ quantum combs represented by the operator $Y_{A_1^nB_1^n}^{(n)}$. Furthermore, strong duality holds, so that the primal and dual problems have the same optimal value.
\end{sproposition}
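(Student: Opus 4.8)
The plan is to reduce the joint optimization over probe preparation, adaptive processing, and final measurement — which is not convex when written directly in terms of an input state and a POVM — to a \emph{linear} optimization over a single convex set, and then to derive the dual program by the same Lagrangian/adjoint-map computation used in the proof of Proposition~\ref{prop:bayesian_sdp}. The conceptual engine is the structure theorem for quantum strategies and co-strategies of Gutoski and Watrous~\cite{GW07}~(Theorem~6): the entirety of an $n$-round adaptive estimation procedure — probe state, intermediate channels, memory, and terminal POVM — is represented by a family of positive semidefinite operators $t\mapsto P_{A_1^nB_1^n}(t)$, and such a family arises from a valid procedure if and only if $\int\diff t\,P_{A_1^nB_1^n}(t)=C^{(n)}_{A_1^nB_1^{n-1}}\otimes\mathbb{I}_{B_n}$ for some co-strategy comb $C^{(n)}\in\overline{\mathsf{S}}_n(B_1^{n-1},A_1^n)$. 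This is precisely the normalization constraint appearing in the primal program~\eqref{eqn:succ_prob_comb_Bayesian_primal}, and since $\overline{\mathsf{S}}_n$ is cut out by the affine partial-trace identities~\eqref{eqn:comb_constr_1}--\eqref{eqn:comb_constr_3} together with positivity, the feasible set is convex. The crucial gain is that the outcome probabilities enter this representation linearly: by the generalized Born rule for combs, the probability of returning estimate $\tau$ when the true channel is $\mathcal{N}(t)$ is the link product $\Tr[P_{A_1^nB_1^n}(\tau)\,C^{\mathcal{N}(t)^{[n]}}_{A_1^nB_1^n}]$, with $C^{\mathcal{N}(t)^{[n]}}$ the tensor-product comb of~\eqref{eq-channel_n_uses_Choi}.

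First I would substitute this expression into the Bayesian success probability of Definition~\ref{def:success_probability_strategies}, carry out the $\tau$-integral against the window, and use the symmetry $w_\delta(t-\tau)=w_\delta(\tau-t)$ to recast the objective into the convolution form displayed in~\eqref{eqn:succ_prob_comb_Bayesian_primal}; this is the same manipulation already performed in the proof of Proposition~\ref{prop:minimax_sdp}. With the primal in hand, I would cast it into the standard conic form~\eqref{eqn:SDP_primal_dual_standard} by bundling the measurement operators $P(t)$ together with the sub-combs $C^{(k)}$ of the co-strategy into a single block-diagonal primal variable $X$, encoding the objective operator $\mu(t)(w_\delta\ast C^{\mathcal{N}^{[n]}})(t)$ into $A$, and encoding the full constraint hierarchy (the outcome-sum identity together with the partial-trace recursions defining $\overline{\mathsf{S}}_n$) into a Hermiticity-preserving map $\Phi$ with target $B$. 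Computing the adjoint $\Phi^\dagger$ exactly as in Lemma~\ref{lem:G_max_dual} then produces the dual multipliers: a scalar $\lambda$ conjugate to the trace-normalization of the innermost comb, and operators that reassemble, via the Gutoski--Watrous duality between co-strategies and strategies, into a single strategy comb $Y^{(n)}_{A_1^nB_1^n}\in\mathsf{S}_n(A_1^n,B_1^n)$. Eliminating auxiliary multipliers as in the earlier proofs yields the claimed dual, whose semi-infinite family of constraints reads $\lambda Y^{(n)}\geq\mu(t)(w_\delta\ast C^{\mathcal{N}(t)^{[n]}})(t)$ for all $t$.

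The final step is to establish strong duality, so that the two programs share an optimal value. Here I would verify a Slater-type condition for semi-infinite programs~\cite{charnes1980SIP,shapiro2009SIP}: a strictly feasible primal point is furnished by taking the maximally mixed probe, depolarizing intermediate processing, and a suitably smeared informationally complete terminal measurement, which makes every $P(t)$ strictly positive; compactness of the comb sets $\overline{\mathsf{S}}_n$ and $\mathsf{S}_n$ then rules out a duality gap and guarantees attainment. I expect the main obstacle to lie precisely in this step together with the bookkeeping of the comb hierarchy in the dual: one must check that the adjoint of the nested partial-trace constraints genuinely reproduces the defining relations of $\mathsf{S}_n$ rather than a weaker cone, and that the exchange of integral and supremum needed for strong duality is justified in the infinite-dimensional, continuous-outcome setting. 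The link-product transpose conventions and the placement of the identity factors $\mathbb{I}_{A_k}$ versus $\mathbb{I}_{B_k}$ are routine but error-prone, and I would control them by appealing to the established comb calculus of Refs.~\cite{CDP09,GW07,Chiribella12} rather than re-deriving it from scratch.
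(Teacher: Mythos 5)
Your proposal is correct and follows essentially the same route as the paper's proof: linearize via the Gutoski--Watrous comb representation, cast the primal into the standard conic form of Eq.~\eqref{eqn:SDP_primal_dual_standard} with a block-diagonal variable collecting $P(t)$ and the sub-combs $C^{(k)}$, compute $\Phi^{\dagger}$ to read off the dual, and reassemble the multipliers into a strategy comb. The one step you flag as delicate --- checking that the dual's nested partial-trace \emph{inequalities} can be tightened to the \emph{equalities} defining $\mathsf{S}_n(A_1^n,B_1^n)$ --- is indeed where the paper does the real work, via an explicit construction that adds positive operators of the form $Q^{(k)}\otimes\mathbb{I}_{B_k}/d_{B_k}$ at each level without changing the objective value $\lambda$.
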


\begin{proof}
    Starting with the primal problem in \eqref{eqn:succ_prob_comb_Bayesian_primal}, we can cast it into the standard form in \eqref{eqn:SDP_primal_dual_standard} as 
    \begin{align}
        X&=\left(\int\diff t\,\ketbra{t}{t}\otimes P_{A_1^nB_1^n}(t)\right)\oplus\left(\sum_{k=1}^n\ketbra{k}{k}\otimes C_{A_1^kB_1^{k-1}}^{(k)}\right),\\
        A&=\left(\int\diff t\,\ketbra{t}{t}\otimes \mu(t)(w\ast C_{A_1^nB_1^n}^{\mathcal{N}^{[n]}})(t)\right)\oplus\left(\sum_{k=1}^n\ketbra{k}{k}\otimes 0\right),\\
        \nonumber
        \Phi[X]&=\ketbra{0,0}{0,0}\otimes\left(\int\diff t\,P_{A_1^nB_1^n}(t)-C_{A_1^nB_1^{n-1}}^{(n)}\otimes\mathbb{I}_{B_n}\right)\\
        \nonumber
        &\quad +\ketbra{0,1}{0,1}\otimes\left(-\int\diff t\,P_{A_1^nB_1^n}(t)+C_{A_1^nB_1^{n-1}}^{(n)}\otimes\mathbb{I}_{B_n}\right)\\
        \nonumber
        &\quad +\ketbra{1,0}{1,0}\otimes\Tr_{A_1}[C_{A_1}^{(1)}]+\ketbra{1,1}{1,1}\otimes(-\Tr_{A_1}[C_{A_1}^{(1)}])\\
        \nonumber
        &\quad +\sum_{k=2}^n\ketbra{k,0}{k,0}\otimes\left(\Tr_{A_k}[C_{A_1^kB_1^{k-1}}^{(k)}]-C_{A_1^{k-1}B_1^{k-2}}^{(k-1)}\otimes\mathbb{I}_{B_{k-1}}\right)\\
        \nonumber
        &\quad +\sum_{k=2}^n\ketbra{k,1}{k,1}\otimes\left(-\Tr_{A_k}[C_{A_1^kB_1^{k-1}}^{(k)}]+C_{A_1^{k-1}B_1^{k-2}}^{(k-1)}\otimes\mathbb{I}_{B_{k-1}}\right),\nonumber\\
        B&=\ketbra{0,0}{0,0}\otimes 0+\ketbra{0,1}{0,1}\otimes 0 +\ketbra{1,0}{1,0}\otimes 1 +\ketbra{1,1}{1,1}\otimes (-1)\\
        &\quad + \sum_{k=2}^n\ketbra{k,0}{k,0}\otimes 0 + \sum_{k=2}^n\ketbra{k,1}{k,1}\otimes 0.
        \nonumber
    \end{align}
    Now, without loss of generality, we can let the dual variable $Y$ have the block-diagonal form
    \begin{multline}
        Y=\ketbra{0,0}{0,0}\otimes Y_{A_1^nB_1^n}^{(1)}+\ketbra{0,1}{0,1}\otimes Y_{A_1^nB_1^n}^{(2)}+\ketbra{1,0}{1,0}\otimes\alpha +\ketbra{1,1}{1,1}\otimes\beta\\
        +\sum_{k=2}^n\ketbra{k,0}{k,0}\otimes\widetilde{C}_{A_1^{k-1}B_1^{k-1}}^{(k,0)}+\sum_{k=2}^n\ketbra{k,1}{k,1}\otimes \widetilde{C}_{A_1^{k-1}B_1^{k-1}}^{(k,1)}
    \end{multline}
    From this, we obtain
    \begin{align}
        \Tr[Y\Phi[X]]&=\Tr\!\left[\left(\int\diff t\,P_{A_1^nB_1^n}(t)\right)\left(Y_{A_1^nB_1^n}^{(1)}-Y_{A_1^nB_1^n}^{(2)}\right)\right]\\
        \nonumber
        &\quad +\Tr\!\left[\left(C_{A_1^nB_1^{n-1}}^{(n)}\otimes\mathbb{I}_{B_n}\right)\left(-Y_{A_1^nB_1^n}^{(1)}+Y_{A_1^nB_1^n}^{(2)}\right)\right]\\
        \nonumber
        &\quad +\Tr_{A_1}[C_{A_1}^{(1)}](\alpha-\beta)\\
        \nonumber
        &\quad +\sum_{k=2}^{n}\Tr\!\left[\Tr_{A_k}\!\left[C_{A_1^kB_1^{k-1}}^{(k)}\right]\left(\widetilde{C}_{A_1^{k-1}B_1^{k-1}}^{(k,0)}-\widetilde{C}_{A_1^{k-1}B_1^{k-1}}^{(k,1)}\right)\right]\\
        \nonumber
        &\quad+\sum_{k=2}^n\Tr\!\left[\left(C_{A_1^{k-1}B_1^{k-2}}^{(k-1)}\otimes\mathbb{I}_{B_{k-1}}\right)\left(-\widetilde{C}_{A_1^{k-1}B_1^{k-1}}^{(k,0)}+\widetilde{C}_{A_1^{k-1}B_1^{k-1}}^{(k,1)}\right)\right]\\
        \nonumber
        &=\Tr\!\left[\left(\int\diff t\,\ketbra{t}{t}\otimes P_{A_1^nB_1^n}(t)\right)\left(\int\diff t\,\ketbra{t}{t}\otimes\left(Y_{A_1^nB_1^n}^{(1)}-Y_{A_1^nB_1^n}^{(2)}\right)\right)\right]\\
        \nonumber
        &\quad +\Tr\!\left[C_{A_1}^{(1)}\left( (\alpha-\beta)\mathbb{I}_{A_1}+\Tr_{B_1}\!\left[-\widetilde{C}_{A_1B_1}^{(2,0)}+\widetilde{C}_{A_1B_1}^{(2,1)}\right]  \right)\right]\\
        \nonumber
        &\quad +\sum_{k=2}^{n-1}\Tr\!\left[C_{A_1^kB_1^{k-1}}^{(k)}\left(\left(\widetilde{C}_{A_1^{k-1}B_1^{k-1}}^{(k,0)}-\widetilde{C}_{A_1^{k-1}B_1^{k-1}}^{(k,1)}\right)\otimes\mathbb{I}_{A_k}+\Tr_{B_k}\!\left[-\widetilde{C}_{A_1^kB_1^k}^{(k+1,0)}+\widetilde{C}_{A_1^kB_1^k}^{(k+1,1)}\right]\right)\right]\\
        \nonumber
        &\quad +\Tr\!\left[C_{A_1^{n-1}B_1^{n-1}}^{(n,0)}\left(\left(\widetilde{C}_{A_1^{n-1}B_1^{n-1}}^{(n,0)}-\widetilde{C}_{A_1^{n-1}B_1^{n-1}}^{(n,1)}\right)\otimes\mathbb{I}_{A_n}+\Tr_{B_n}\!\left[-Y_{A_1^nB_1^n}^{(1)}+Y_{A_1^nB_1^n}^{(2)}\right]\right)\right].
        \nonumber
    \end{align}
    This implies that
    \begin{align}
        \Phi^{\dagger}[Y]&=\int\diff t\,\ketbra{t}{t}\otimes\left(Y_{A_1^nB_1^n}^{(1)}-Y_{A_1^nB_1^n}^{(2)}\right)\\
        \nonumber
        &\quad\oplus\left(\left(\widetilde{C}_{A_1^{n-1}B_1^{n-1}}^{(n,0)}-\widetilde{C}_{A_1^{n-1}B_1^{n-1}}^{(n,1)}\right)\otimes\mathbb{I}_{A_n}+\Tr_{B_n}\!\left[-Y_{A_1^nB_1^n}^{(1)}+Y_{A_1^nB_1^n}^{(2)}\right]\right)\\
        \nonumber
        &\quad\oplus\sum_{k=2}^{n-1}\ketbra{k}{k}\otimes\left(\left(\widetilde{C}_{A_1^{k-1}B_1^{k-1}}^{(k,0)}-\widetilde{C}_{A_1^{k-1}B_1^{k-1}}^{(k,1)}\right)\otimes\mathbb{I}_{A_k}+\Tr_{B_k}\!\left[-\widetilde{C}_{A_1^kB_1^k}^{(k+1,0)}+\widetilde{C}_{A_1^kB_1^k}^{(k+1,1)}\right]\right)\\
        \nonumber
        &\oplus\left((\alpha-\beta)\mathbb{I}_{A_1}+\Tr_{B_1}\!\left[-\widetilde{C}_{A_1B_1}^{(2,0)}+\widetilde{C}_{A_1B_1}^{(2,1)}\right]\right).
        \nonumber
    \end{align}
    The dual problem is therefore
    \begin{equation}
        \begin{array}{l l}
            \text{minimize} & \alpha-\beta \\[1ex]
            \text{subject to} & \alpha\geq 0,\,\beta\geq 0,\,Y_{A_1^nB_1^n}^{(1)}\geq 0,Y_{A_1^nB_1^n}^{(2)}\geq 0,\,\widetilde{C}_{A_1^{k-1}B_1^{k-1}}^{(k,0)}\geq 0,\,\widetilde{C}_{A_1^{k-1}B_1^{k-1}}^{(k,1)}\geq 0,\,k\in\{1,2,\dotsc,n\},\\[1ex]
            & \Tr_{B_1}\!\left[\widetilde{C}_{A_1B_1}^{(2,0)}-\widetilde{C}_{A_1B_1}^{(2,1)}\right]\leq(\alpha-\beta)\mathbb{I}_{A_1},\\[1ex]
            & \Tr_{B_k}\!\left[\widetilde{C}_{A_1^kB_1^k}^{(k+1,0)}-\widetilde{C}_{A_1^kB_1^k}^{(k+1,1)}\right]\leq\left(\widetilde{C}_{A_1^{k-1}B_1^{k-1}}^{(k,0)}-\widetilde{C}_{A_1^{k-1}B_1^{k-1}}^{(k,1)}\right)\otimes\mathbb{I}_{A_k},\,k\in\{2,3,\dotsc,n-1\},\\[1ex]
            & \Tr_{B_n}\!\left[Y_{A_1^nB_1^n}^{(1)}-Y_{A_1^nB_1^n}^{(2)}\right]\leq\left(\widetilde{C}_{A_1^{n-1}B_1^{n-1}}^{(n,0)}-\widetilde{C}_{A_1^{n-1}B_1^{n-1}}^{(n,1)}\right)\otimes\mathbb{I}_{A_n},\\[1ex]
            & Y_{A_1^nB_1^n}^{(1)}-Y_{A_1^nB_1^n}^{(2)}\geq \mu(t)\left(w\ast C_{A_1^nB_1^n}^{\mathcal{N}^{[n]}}\right)(t)~~\forall~t.
        \end{array}
    \end{equation}
    It is straightforward to verify that strong duality holds, which means that the primal and dual problems have the same optimal value.
    
    Let us now make several simplifications to the dual optimization problem. We start with the following change of variables:
    \begin{align}
        \lambda&\equiv \alpha-\beta,\\
        Y_{A_1^kB_1^k}^{(k)}&\equiv \widetilde{C}_{A_1^kB_1^k}^{(k+1,0)}-\widetilde{C}_{A_1^kB_1^k}^{(k+1,1)},\quad k\in\{1,2,\dotsc,n-1\},\\
        Y_{A_1^nB_1^n}^{(n)}&\equiv Y_{A_1^nB_1^n}^{(1)}-Y_{A_1^nB_1^n}^{(2)}.
    \end{align}
    Then, because the operator $\mu(t)(w\ast C_{A_1^nB_1^n}^{\mathcal{N}^{[n]}})(t)$ is positive semi-definite for all $t$, we have that $Y_{A_1^nB_1^n}^{(n)}$ is positive semi-definite. Due to the second-last constraint in the above convex problem, this implies that $Y_{A_1^{n-1}B_1^{n-1}}^{(n-1)}\geq 0$, which in turn, from the third constraint in the convex problem above, implies that $Y_{A_1^kB_1^k}^{(k)}\geq 0$ for all $k\in\{1,2,\dotsc,n-2\}$, such that finally $\lambda\geq 0$ is also implied. Therefore, the convex problem above simplifies to
    \begin{equation}
        \begin{array}{l l}
            \text{minimize} & \lambda \\[1ex]
            \text{subject to} & \lambda\geq 0,\, Y_{A_1^kB_1^k}^{(k)}\geq 0~~\forall~k\in\{1,2,\dotsc,n\},\\[1ex]
            & \Tr_{B_1}[Y_{A_1B_1}^{(1)}]\leq\lambda\mathbb{I}_{A_1},\\[1ex]
            & \Tr_{B_k}[Y_{A_1^kB_1^k}^{(k)}]\leq Y_{A_1^{k-1}B_1^{k-1}}^{(k-1)}\otimes\mathbb{I}_{A_k}~~\forall~k\in\{2,3,\dotsc,n\},\\[1ex]
            & Y_{A_1^nB_1^n}^{(n)}\geq \mu(t)(w\ast C_{A_1^nB_1^n}^{\mathcal{N}^{[n]}})(t)~~\forall~t.
        \end{array}
    \end{equation}
    Let us now argue that the inequality constraints $\Tr_{B_1}[Y_{A_1B_1}^{(1)}]\leq\lambda\mathbb{I}_{A_1}$ and $\Tr_{B_k}[Y_{A_1^kB_1^k}^{(k)}]\leq Y_{A_1^{k-1}B_1^{k-1}}^{(k-1)}\otimes\mathbb{I}_{A_k}$, $k\in\{2,3,\dotsc,n\}$, for every feasible set of variables, can be made into equality constraints, without changing the value $\lambda$ of the objective function. First, by adding an appropriate positive multiple of the identity to $Y_{A_1B_1}^{(1)}$, we can obtain an operator $\widetilde{Y}_{A_1B_1}^{(1)}\geq 0$ such that $\widetilde{Y}_{A_1B_1}^{(1)}\geq Y_{A_1B_1}^{(1)}$ and $\Tr_{B_1}[\widetilde{Y}_{A_1B_1}^{(1)}]=\lambda\mathbb{I}_{A_1}$. Now, because $\widetilde{P}\geq P\Rightarrow \widetilde{P}\otimes\mathbb{I}\geq P\otimes\mathbb{I}$ for all $P\geq 0$, we have that 
    \begin{equation}
    \widetilde{Y}_{A_1B_1}^{(1)}\otimes\mathbb{I}_{A_2}\geq Y_{A_1B_1}^{(1)}\otimes\mathbb{I}_{A_1}\geq\Tr_{B_2}[Y_{A_1^2B_1^2}^{(2)}]. 
    \end{equation}
    This implies that there exists a $Q_{A_1^2B_1}^{(2)}\geq 0$ such that $\widetilde{Y}_{A_1B_1}^{(1)}\otimes\mathbb{I}_{A_2}=\Tr_{B_2}[Y_{A_1^2B_1^2}^{(2)}]+Q_{A_1^2B_1}^{(2)}$. Letting 
    \begin{equation}
    R_{A_1^2B_1^2}^{(2)}\coloneqq Q_{A_1^2B_1}^{(2)}\otimes\frac{\mathbb{I}_{B_2}}{d_{B_2}}
    \end{equation}
    and $\widetilde{Y}_{A_1^2B_1^2}^{(2)}\coloneqq Y_{A_1^2B_1^2}^{(2)}+R_{A_1^2B_1^2}^{(2)}$, we have that $\widetilde{Y}_{A_1^2B_1^2}^{(2)}\geq Y_{A_1^2B_1^2}^{(2)}$ and $\Tr_{B_2}[\widetilde{Y}_{A_1^2B_1^2}^{(2)}]=\widetilde{Y}_{A_1B_1}^{(1)}\otimes\mathbb{I}_{A_1}$. We can proceed analogously for all $k\in\{3,4,\dotsc,n\}$, defining new variables $\widetilde{Y}_{A_1^kB_1^k}^{(k)}$ such that $\widetilde{Y}_{A_1^kB_1^k}^{(k)}\geq Y_{A_1^kB_1^k}^{(k)}$ and $\Tr_{B_k}[\widetilde{Y}_{A_1^kB_1^k}^{(k)}]=\widetilde{Y}_{A_1^{k-1}B_1^{k-1}}^{(k-1)}\otimes\mathbb{I}_{A_k}$. In particular, for $k=n$, we obtain the constraint 
    \begin{equation}
    \widetilde{Y}_{A_1^nB_1^n}^{(n)}\geq Y_{A_1^nB_1^n}^{(n)}\geq (w\ast C_{A_1^nB_1^n}^{\mathcal{N}^{[n]}})(t)
    \end{equation}
    for all $t$. With this change of variables, the value $\lambda$ of the objective function does not change. Therefore, we have shown that the convex problem above is equivalent to
    \begin{equation}
        \begin{array}{l l}
            \text{maximize} & \lambda \\[1ex]
            \text{subject to} & \lambda\geq 0,\,Y_{A_1^kB_1^k}^{(k)}\geq 0~~\forall~k\in\{1,2,\dotsc,n\},\\[1ex]
            & \Tr_{B_1}[Y_{A_1B_1}^{(1)}]=\lambda\mathbb{I}_{A_1},\\[1ex]
            & \Tr_{B_k}[Y_{A_1^kB_1^k}^{(k)}]=Y_{A_1^{k-1}B_1^{k-1}}^{(k-1)}\otimes\mathbb{I}_{A_k}~~\forall~k\in\{2,3,\dotsc,n\},\\[1ex]
            & Y_{A_1^nB_1^n}^{(n)}\geq \mu(t)(w\ast C_{A_1^nB_1^n}^{\mathcal{N}^{[n]}})(t)~~\forall~t.
        \end{array}
    \end{equation}
    Finally, let us make one more change of variables. Let $\widetilde{Y}_{A_1^kB_1^k}^{(k)}=\frac{1}{\lambda}Y_{A_1^kB_1^k}^{(k)}$ for all $k\in\{1,2,\dotsc,n\}$. Then, we find that $\Tr_{B_1}[\widetilde{Y}_{A_1B_1}^{(1)}]=\mathbb{I}_{A_1}$, $\Tr_{B_k}[\widetilde{Y}_{A_1^kB_1^k}^{(k)}]=\widetilde{Y}_{A_1^kB_1^k}^{(k-1)}\otimes\mathbb{I}_{A_k}$ for all $k\in\{2,3,\dotsc,n\}$, and $\lambda Y_{A_1^nB_1^n}^{(n)}\geq \mu(t)(w\ast C_{A_1^nB_1^n}^{\mathcal{N}^{[n]}})(t)$ for all $t$. To conclude, we have that $Y_{A_1^nB_1^n}^{(n)}\in\mathsf{S}_n(A_1^n,B_1^n)$, based on the definition in \eqref{eq-strategy_combs}, which gives us the desired dual problem.
\end{proof}

\begin{sproposition}[Minimax optimization of adaptive causal strategies]\label{prop-succ_prob_combs}
    Let $ C_{A_1^nB_1^n}^{\mathcal{N}(t)^{[n]}}$ be the Choi representation of the comb $\mathcal{N}(t)^{[n]}$ corresponding to $n$ uses of the paramterized quantum channel $t\mapsto \mathcal{N}(t)$, as shown in Fig.~\ref{fig:adaptive_channel_uses}. Then, the optimal minimax strategy for estimating the parameter $t$ can be determined using the %
    convex problem
    \begin{equation}\label{eqn:succ_prob_comb_primal}
        \begin{array}{l l}
            \textnormal{maximize} & \eta \\[0.2cm]
            \textnormal{subject to} & \Tr[ C_{A_1^nB_1^n}^{\mathcal{N}(t)^{[n]}}(w\ast P_{A_1^nB_1^n})(t)]\geq\eta~~\forall~t,\\[0.3cm]
            &\eta\geq 0,\\[0.2cm]
            & P_{A_1^{n}B_1^n}(t)\geq 0~~\forall~t,\\[0.2cm]
            & \int \textnormal{d}t~P_{A_1^{n}B_1^n}(t)=C_{A_1^nB_1^{n-1}}^{(n)}\otimes \mathbb{I}_{B_{n}},\\[0.2cm]
            & C_{A_1^nB_1^{n-1}}^{(n)}\in\overline{\mathsf{S}}_n(B_1^{n-1},A_1^n),
        \end{array}
    \end{equation} 
    where the variable $C_{A_1^nB_1^{n-1}}^{(n)}$ represents a length-$n$ quantum comb, excluding the measurement, with a quantum state preparation at the beginning (see the red comb in Fig.~\ref{fig:adaptive_channel_uses}). The variables $P_{A_1^{n}B_1^n}(t)$ correspond to the measurement.
   
    The 
    convex program dual to the one above is
    \begin{equation}
        \begin{array}{l l} \textnormal{minimize} & \lambda \\[0.2cm] \textnormal{subject to} & \lambda\geq 0, \\[0.1cm] & \lambda Y_{A_1^nB_1^n}^{(n)}\geq\mu(t)(w\ast C_{A_1^nB_1^n}^{\mathcal{N}(t)^{[n]}})(t)~~\forall~t,\\[0.25cm]
        & Y_{A_1^nB_1^n}^{(n)}\in\mathsf{S}_n(A_1^n,B_1^n),\\[0.2cm]
        & \mu(t)\geq 0~~\forall~t,\,\int \textnormal{d}t~\mu(t)=1,
        \end{array}
    \end{equation}
    where the optimization is with respect to $\lambda\geq 0$, probability density functions $\mu(t)$, and length-$n$ quantum combs represented by the operator $Y_{A_1^nB_1^n}^{(n)}$. Furthermore, strong duality holds, so that the primal and dual programs have the same optimal value.
\end{sproposition}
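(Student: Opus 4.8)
The plan is to mirror the derivation of the Bayesian case in Proposition~\ref{prop-succ_prob_combs_Bayesian}, layering on top of it the minimax bookkeeping that was introduced for parametrized states in Proposition~\ref{prop:minimax_sdp}. Concretely, I would cast the primal program~\eqref{eqn:succ_prob_comb_primal} into the standard semi-infinite form of Eq.~\eqref{eqn:SDP_primal_dual_standard}, compute the adjoint $\Phi^{\dagger}$ of the associated Hermiticity-preserving map, read off the dual, and then simplify it to the claimed form. The only genuinely new feature relative to the Bayesian comb proof is the scalar objective $\eta$ together with the continuum of constraints $\Tr[C^{\mathcal{N}(t)^{[n]}}(w\ast P)(t)]\geq\eta$; this is handled by exactly the same device as in Proposition~\ref{prop:minimax_sdp}, where the slack variable contributes one diagonal block per value of $t$.

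First I would assemble the primal variable as a direct sum of the scalar $\eta$, the measurement operators $t\mapsto P_{A_1^nB_1^n}(t)$, and the co-strategy comb operators $\{C^{(k)}\}_{k=1}^n$ exactly as in the Bayesian proof. The map $\Phi$ then has three groups of blocks: the family indexed by $t$ enforcing $\eta-\Tr[C^{\mathcal{N}(t)^{[n]}}(w\ast P)(t)]\leq 0$; the normalization block $\int\diff t\,P(t)=C^{(n)}\otimes\mathbb{I}_{B_n}$; and the nested comb-constraint blocks of Eqs.~\eqref{eqn:comb_constr_1}--\eqref{eqn:comb_constr_3}, transcribed verbatim from the proof of Proposition~\ref{prop-succ_prob_combs_Bayesian}. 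The latter two groups have already been dualized there, so the substantive computation is to dualize the first group: its dual variable is a nonnegative function $t\mapsto\mu(t)$, and the pairing $\Tr[Y\Phi[X]]$ produces the objective term $\eta\int\diff t\,\mu(t)$ alongside the measurement-dependent contribution $-\int\diff t\,\mu(t)\,\Tr[C^{\mathcal{N}(t)^{[n]}}(w\ast P)(t)]$, where the symmetry $w(t-t')=w(t'-t)$ of the window lets me transfer the convolution onto $C^{\mathcal{N}(t)^{[n]}}$ as in Proposition~\ref{prop:minimax_sdp}.

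Having identified $\Phi^{\dagger}$, the dual reads off as a minimization over $\lambda$, the hierarchy of comb duals $\{Y^{(k)}\}$, and the function $\mu$. The comb duals are then collapsed using the two changes of variables and the ``inequalities become equalities'' argument already carried out in Proposition~\ref{prop-succ_prob_combs_Bayesian}: adding suitable positive multiples of the identity and padding with maximally mixed factors on the fresh output systems turns the inequality comb constraints into the equality constraints defining $\mathsf{S}_n(A_1^n,B_1^n)$ in Eq.~\eqref{eq-strategy_combs}, after which $Y^{(n)}_{A_1^nB_1^n}$ is a genuine length-$n$ strategy comb. Finally, the complementary slackness condition $\Phi^{\dagger}[Y]X=AX$ forces $\eta\int\diff t\,\mu(t)=\eta$, \ie $\int\diff t\,\mu(t)=1$, promoting $\mu$ to a prior probability density and yielding the stated dual; this is precisely the step that, in the minimax setting, upgrades the adversary's weighting to a normalized distribution, exactly as in Proposition~\ref{prop:minimax_sdp}.

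The main obstacle is organizational rather than conceptual: one must carry the nested comb constraints~\eqref{eqn:comb_constr_1}--\eqref{eqn:comb_constr_3} and the semi-infinite minimax constraint through a single adjoint computation without losing track of the tensor-factor structure, and then verify that the collapse of the dual comb hierarchy is compatible with the additional $\mu$-normalization coming from the minimax block. Since each of these mechanisms has already been executed separately---the comb collapse in Proposition~\ref{prop-succ_prob_combs_Bayesian} and the minimax normalization in Proposition~\ref{prop:minimax_sdp}---the remaining work is to check that they do not interfere, which they do not, because $\mu$ couples only to the final constraint $\lambda Y^{(n)}\geq\mu(t)(w\ast C^{\mathcal{N}(t)^{[n]}})(t)$. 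Strong duality then follows from the Slater-type conditions for convex semi-infinite programs invoked at the beginning of this section, so that the primal and dual optimal values coincide.
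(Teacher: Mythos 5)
Your proposal is correct and follows essentially the same route as the paper's proof: cast the primal into the standard semi-infinite form, compute $\Phi^{\dagger}$ with a block-diagonal dual variable whose $t$-indexed block is $\mu(t)$, transfer the convolution via the symmetry of $w$, collapse the dual comb hierarchy by the same change of variables as in the Bayesian case, and invoke complementary slackness to normalize $\mu$. The paper itself presents the argument as a direct adaptation of Proposition~\ref{prop-succ_prob_combs_Bayesian} combined with the minimax bookkeeping of Proposition~\ref{prop:minimax_sdp}, exactly as you describe.
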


The %
convex problems
in the above proposition are the continuous analogues of the 
semi-definite problems
for multiple channel discrimination \cite{CDP08b,CDP09,IM21}. Notably, as with Proposition~\ref{prop:minimax_sdp}, the optimal minimax success probability can be obtained via optimization of the Bayesian success probability with respect to all possible priors.

\begin{proof}
    The proof is analogous to the proof of Proposition~\ref{prop-succ_prob_combs_Bayesian}, so we omit some of the details. First, upon inspection of the primal problem in \eqref{eqn:succ_prob_comb_primal}, we find that it is of the standard form of the primal problem in \eqref{eqn:SDP_primal_dual_standard}, with
    \begin{align}
        X&=\left(\int\diff t\,\ketbra{t}{t}\otimes P_{A_1^nB_1^n}(t)\right)\oplus\left(\sum_{k=1}^n\ketbra{k}{k}\otimes C_{A_1^kB_1^{k-1}}^{(k)}\right)\oplus\eta,\\
        \nonumber
        A&=\left(\int\diff t\,\ketbra{t}{t}\otimes 0\right)\oplus\left(\sum_{k=1}^n\ketbra{k}{k}\otimes 0\right)\oplus 0, \\
        \nonumber
        \Phi[X]&=\left(\diff t\,\ketbra{t}{t}\left(\eta-\Tr\!\left[P_{A_1^nB_1^n}(t)(w\ast C_{A_!^nB_1^n}^{\mathcal{N}^{[n]}})(t)\right]\right)\right)\\
        \nonumber
        &\quad\oplus \ketbra{0,0}{0,0}\otimes\left(\int\diff t\,P_{A_1^nB_1^n}(t)-C_{A_1^nB_1^{n-1}}\otimes\mathbb{I}_{B_n}\right)\\
        \nonumber
        &\quad +\ketbra{1,0}{1,0}\otimes\Tr_{A_1}[C_{A_1}^{(1)}]+\ketbra{1,1}{1,1}\otimes (-\Tr_{A_1}[C_{A_1}^{(1)}])\\
        &\quad +\sum_{k=2}^{n}\ketbra{k,0}{k,0}\otimes\left(\Tr_{A_k}[C_{A_1^kB_1^{k-1}}^{(k)}]-C_{A_1^{k-1}B_1^{k-2}}^{(k-1)}\otimes\mathbb{I}_{B_{k-1}}\right)\\
        \nonumber
        &\quad +\sum_{k=2}^n\ketbra{k,1}{k,1}\otimes\left(-\Tr_{A_k}[C_{A_1^kB_1^{k-1}}^{(k)}]+C_{A_1^{k-1}B_1^{k-2}}\otimes\mathbb{I}_{B_{k-1}}\right),\\
        B&=0\oplus \ketbra{0,0}{0,0}\otimes 0+\ketbra{0,1}{0,1}\otimes 0+\ketbra{1,0}{1,0}\otimes 1+\ketbra{1,1}{1,1}\otimes (-1)\\
        \nonumber
        &\quad +\sum_{k=2}^n\ketbra{k,0}{k,0}\otimes 0+\sum_{k=2}^n\ketbra{k,1}{k,1}\otimes 0.
        \nonumber
    \end{align}
    Now, without loss of generality, we can let the dual variable $Y$ have the following block-diagonal form:
    \begin{multline}
        Y=\left(\int\diff t\,\mu(t)\ketbra{t}{t}\right)\oplus \ketbra{0,0}{0,0}\otimes Y_{A_1^nB_1^n}^{(1)}+\ketbra{0,1}{0,1}\otimes Y_{A_1^nB_1^n}^{(2)}+\ketbra{1,0}{1,0}\otimes\alpha+\ketbra{1,1}{1,1}\otimes\beta\\+\sum_{k=2}^n\ketbra{k,0}{k,0}\otimes\widetilde{C}_{A_1^{k-1}B_1^{k-1}}^{(k,0)}+\sum_{k=2}^n\ketbra{k,1}{k,1}\otimes\widetilde{C}_{A_1^{k-1}B_1^{k-1}}^{(k,1)}.
    \end{multline}
    This implies that
    \begin{align}
        \Tr[Y\Phi[X]]&=\int\diff t\,\mu(t)(\eta-\Tr[P_{A_1^nB_1^n}(t)(w\ast C_{A_1^nB_1^n}^{\mathcal{N}^{[n]}})(t)])\\
        &\quad +\Tr\!\left[\left(\int\diff t\,P_{A_1^nB_1^n}(t)\right)\left(Y_{A_1^nB_1^n}^{(1)}-Y_{A_1^nB_1^n}^{(2)}\right)\right]\\
        \nonumber
        &\quad +\Tr\!\left[\left(C_{A_1^nB_1^{n-1}}^{(n)}\otimes\mathbb{I}_{B_n}\right)\left(-Y_{A_1^nB_1^n}^{(n)}+Y_{A_1^nB_1^n}^{(2)}\right)\right]\\
        \nonumber
        &\quad +\Tr_{A_1}[C_{A_1}^{(1)}](\alpha-\beta)\\
        \nonumber
        &\quad +\sum_{k=2}^n\Tr\!\left[\Tr_{A_k}\!\left[C_{A_1^kB_1^{k-1}}^{(k)}\right]\left(\widetilde{C}_{A_1^{k-1}B_1^{k-1}}^{(k,0)}-\widetilde{C}_{A_1^{k-1}B_1^{k-1}}^{(k,1)}\right)\right]\\
        \nonumber
        &\quad +\sum_{k=2}^n\Tr\!\left[\left(C_{A_1^{k-1}B_1^{k-2}}^{(k-1)}\otimes\mathbb{I}_{B_{k-1}}\right)\left(-\widetilde{C}_{A_1^{k-1}B_1^{k-1}}^{(k,0)}+\widetilde{C}_{A_1^{k-1}B_1^{k-1}}^{(k,1)}\right)\right]\\
        \nonumber
        &=\Tr\!\left[\left(\int\diff t\,\ketbra{t}{t}\otimes P_{A_1^nB_1^n}(t)\right)\left(\int\diff t\,\ketbra{t}{t}\otimes (Y_{A_1^nB_1^n}^{(1)}-Y_{A_1^nB_1^n}^{(2)}-\mu(t)(w\ast C_{A_1^nB_1^n}^{\mathcal{N}^{[n]}})(t)\right)\right]\\
        \nonumber
        &\quad +\Tr\!\left[C_{A_1}^{(1)}\left((\alpha-\beta)\mathbb{I}_{A_1}+\Tr_{B_1}[-\widetilde{C}_{A_1B_1}^{(2,0)}+\widetilde{C}_{A_1B_1}^{(2,1)}]\right)\right]\\
        \nonumber
        &\quad +\sum_{k=2}^{n-1}\Tr\!\left[C_{A_1^kB_1^{k-1}}^{(k)}\left((\widetilde{C}_{A_1^{k-1}B_1^{k-1}}^{(k,0)}-\widetilde{C}_{A_1^{k-1}B_1^{k-1}}^{(k,1)})\otimes\mathbb{I}_{A_k}+\Tr_{B_k}[-\widetilde{C}_{A_1^kB_1^k}^{(k+1,0)}+\widetilde{C}_{A_1^kB_1^k}^{(k+1,1)}]\right)\right]\\
        \nonumber
        &\quad +\Tr\!\left[C_{A_1^nB_1^{n-1}}^{(n)}\left((\widetilde{C}_{A_1^{n-1}B_1^{n-1}}^{(n,0)}-\widetilde{C}_{A_1^{n-1}B_1^{n-1}}^{(n,1)})\otimes\mathbb{I}_{A_n}+\Tr_{B_n}[-Y_{A_1^nB_1^n}^{(1)}+Y_{A_1^nB_1^n}^{(2)}]\right)\right],
        \nonumber
    \end{align}
    which in turn implies that
    \begin{align}
        \Phi^{\dagger}[Y]&=\left(\int\diff t\,\ketbra{t}{t}\otimes (Y_{A_1^nB_1^n}^{(1)}-Y_{A_1^nB_1^n}^{(2)}-\mu(t)(w\ast C_{A_1^nB_1^n}^{\mathcal{N}^{[n]}})(t)\right)\\
        \nonumber
        &\quad\oplus \left((\widetilde{C}_{A_1^{n-1}B_1^{n-1}}^{(n,0)}-\widetilde{C}_{A_1^{n-1}B_1^{n-1}}^{(n,1)})\otimes\mathbb{I}_{A_n}+\Tr_{B_n}[-Y_{A_1^nB_1^n}^{(1)}+Y_{A_1^nB_1^n}^{(2)}]\right)\\
        \nonumber
        &\quad\oplus \sum_{k=2}^{n-1}\ketbra{k}{k}\otimes \left((\widetilde{C}_{A_1^{k-1}B_1^{k-1}}^{(k,0)}-\widetilde{C}_{A_1^{k-1}B_1^{k-1}}^{(k,1)})\otimes\mathbb{I}_{A_k}+\Tr_{B_k}[-\widetilde{C}_{A_1^kB_1^k}^{(k+1,0)}+\widetilde{C}_{A_1^kB_1^k}^{(k+1,1)}]\right)\\
        \nonumber
        &\quad\oplus \left((\alpha-\beta)\mathbb{I}_{A_1}+\Tr_{B_1}[-\widetilde{C}_{A_1B_1}^{(2,0)}+\widetilde{C}_{A_1B_1}^{(2,1)}]\right)\\
        &\quad\oplus \left(\int\diff t\,\mu(t)\right).
        \nonumber
    \end{align}
    The inequality $\Phi^{\dagger}[Y]\geq A$, therefore, implies that the dual problem is given by
    \begin{equation}
        \begin{array}{l l}
            \text{minimize} & \alpha-\beta \\[1ex]
            \text{subject to} & \alpha\geq 0,\,\beta\geq 0,\,Y_{A_1^nB_1^n}^{(1)}\geq 0,\,Y_{A_1^nB_1^n}^{(2)}\geq 0,\,\widetilde{C}_{A_1^{k-1}B_1^{k-1}}^{(k,0)}\geq 0,\,\widetilde{C}_{A_1^{k-1}B_1^{k-1}}^{(k,1)}\geq 0~~\forall~k\in\{1,2,\dotsc,n\},\\[1ex]
            & Y_{A_1^nB_1^n}^{(1)}-Y_{A_1^nB_1^n}^{(2)}-\mu(t)(w\ast C_{A_1^nB_1^n}^{\mathcal{N}^{[n]}})(t)~~\forall~t,\\[1ex]
            & (\widetilde{C}_{A_1^{n-1}B_1^{n-1}}^{(n,0)}-\widetilde{C}_{A_1^{n-1}B_1^{n-1}}^{(n,1)})\otimes\mathbb{I}_{A_n}+\Tr_{B_n}[-Y_{A_1^nB_1^n}^{(1)}+Y_{A_1^nB_1^n}^{(2)}]\geq 0,\\[1ex]
            & (\widetilde{C}_{A_1^{k-1}B_1^{k-1}}^{(k,0)}-\widetilde{C}_{A_1^{k-1}B_1^{k-1}}^{(k,1)})\otimes\mathbb{I}_{A_k}+\Tr_{B_k}[-\widetilde{C}_{A_1^kB_1^k}^{(k+1,0)}+\widetilde{C}_{A_1^kB_1^k}^{(k+1,1)}]~~\forall~k\in\{2,3,\dotsc,n-1\},\\[1ex]
            & (\alpha-\beta)\mathbb{I}_{A_1}+\Tr_{B_1}[-\widetilde{C}_{A_1B_1}^{(2,0)}+\widetilde{C}_{A_1B_1}^{(2,1)}]\geq 0,\\[1ex]
            & \int\diff t\,\mu(t)\geq 1,\,\mu(t)\geq 0~~\forall~t.
        \end{array}
    \end{equation}
    Strong duality is straightforward to show, which implies that the optimal solution to this dual problem is equal to the optimal solution of the primal problem.

    Next, by the complementary slackness condition $\Phi^{\dagger}[Y]X=AX$, we find that $\int\diff t\,\mu(t)=1$. We can further simplify the dual problem above via change of variables, in exactly the same way as we did in the proof of Proposition~\ref{prop-succ_prob_combs_Bayesian}. Doing so gives us the desired dual problem in the statement of the proposition.
\end{proof}

\subsection{Optimization with respect to strategies with indefinite causal order}

    In the previous section, we considered sequential/adaptive quantum metrology protocols in which every use of the parameterized channel $t\mapsto\mathcal{N}(t)$ is causally ordered. Let us now consider a more general class of protocols, based on \textit{non-causal} ordering of the channel uses. Following Refs.~\cite{oreshkov2012noncausal,bavaresco2022unitarydiscnoncausal}, we define a general, non-causal strategy for $n$ uses of the channel $\mathcal{N}_{A\to B}(t)$ by operators $t\mapsto P_{A_1^nB_1^n}(t)$ such that $P_{A_1^nB_1^n}(t)\geq 0$ for all $t$, and $W_{A_1^nB_1^n}\coloneqq\int\diff t\,P_{A_1^nB_1^n}(t)$ satisfies $\Tr[W_{A_1^nB_1^n}(C_{A_1B_1}^{(1)}\otimes C_{A_2B_2}^{(2)}\otimes\dotsb\otimes C_{A_nB_n}^{(n)})]=1$ for all Choi representations $C_{A_kB_k}^{(k)}$ of quantum channels (\emph{i.e.}, Hermitian operators $C_{A_kB_k}^{(k)}$ satisfying $C_{A_kB_k}^{(k)}\geq 0$ and $\Tr_{B_k}[C_{A_kB_k}^{(k)}]=\mathbb{I}_{A_k}$). We let
    \begin{equation}
        \mathsf{C}_n^{\textsf{prod}}(A_1^n,B_1^n)\coloneqq\left\{\bigotimes_{k=1}^n C_{A_kB_k}^{(k)}: C_{A_kB_k}^{(k)}\geq 0,\,\Tr_{B_k}[C_{A_kB_k}^{(k)}]=\mathbb{I}_{A_k},\,k\in\{1,2,\dotsc,n\}\right\}
    \end{equation}
    be the set of all tensor $n$-fold tensor products of Choi representations of quantum channel. Then, we define
    \begin{equation}\label{eq-non_causal_strategies_set}
        \mathsf{S}_n^{\textsf{ico}}(A_1^n,B_1^n)\coloneqq\left\{ W_{A_1^nB_2^n}: W_{A_1^nB_1^n}\geq 0,\, \Tr[W_{A_1^nB_1^n}Y_{A_1^nB_1^n}]=1~~\forall~Y_{A_1^nB_1^n}\in\mathsf{C}_n^{\textsf{prod}}(A_1^n,B_1^n)\right\}
    \end{equation}
    to be the set of all operators representing $n$-partite non-causal strategies. An explicit form for this set for arbitrary $n\in\{2,3,\dotsc\}$ can be found in Ref.~\cite{araujo2015noncausal}. As an example, for $n=2$,
    \begin{multline}
        \mathsf{S}_{2}^{\textsf{ico}}(A_1^2,B_1^2)=\left\{W_{A_1^2B_1^2}:W_{A_1^2B_1^2}\geq 0,\,\Tr[W_{A_1^2B_1^2}]=d_{B_1}d_{B_2},\, W_{A_1^2B_1^2}=(\mathcal{R}_{B_1}+\mathcal{R}_{B_2}-\mathcal{R}_{B_1B_2})(W_{A_1^2B_1^2}),\right.\\\left.\mathcal{R}_{A_1B_1}(W_{A_1^2B_1^2})=\mathcal{R}_{A_1B_1B_2}(W_{A_1^2B_1^2}),\,\mathcal{R}_{A_2B_2}(W_{A_1^2B_1^2})=\mathcal{R}_{A_2B_1B_2}(W_{A_1^2B_1^2})\right\},
    \end{multline}
    where $\mathcal{R}_A(X_{RA})\coloneqq\Tr_A[X_{RA}]\otimes\frac{\mathbb{I}_A}{d_A}$ is the completely depolarizing channel acting on a system $A$, which discards the state of the system $A$ and replaces it with the maximally-mixed state.
    
    \begin{lemma}\label{lem-non_causal_strategies_set}
        For every $n\in\{2,3,\dotsc\}$, it holds that
        \begin{multline}
            \mathsf{S}_{n}^{\mathsf{ico}}(A_1^n,B_1^n)=\left\{W_{A_1^nB_1^n}:W_{A_1^nB_1^n}\geq 0,\, \Tr[W_{A_1^nB_1^n}(C_{A_1B_1}^{(1)}\otimes C_{A_2B_2}^{(2)}\otimes\dotsb\otimes C_{A_nB_n}^{(n)})]=1,\right.\\\left. C_{A_kB_k}^{(k)}~\textnormal{Hermitian},\,\Tr_{B_k}[C_{A_kB_k}^{(k)}]=\mathbb{I}_{A_k},\,k\in\{1,2,\dotsc,n\} \right\}.
        \end{multline}
    \end{lemma}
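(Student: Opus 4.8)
The plan is to prove the two inclusions separately, with essentially all the content residing in one of them. Write $S$ for the set $\mathsf{S}_n^{\mathsf{ico}}(A_1^n,B_1^n)$ as defined in Eq.~\eqref{eq-non_causal_strategies_set}, whose normalization constraint ranges only over $Y\in\mathsf{C}_n^{\mathsf{prod}}(A_1^n,B_1^n)$, i.e.\ tensor products of \emph{bona fide} channel Choi operators (positive semi-definite \emph{and} trace-preserving), and write $T$ for the set on the right-hand side of the lemma, whose normalization constraint ranges over tensor products of merely Hermitian operators satisfying $\Tr_{B_k}[C_{A_kB_k}^{(k)}]=\mathbb{I}_{A_k}$. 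Since each factor in the definition of $T$ ranges over a \emph{larger} class than the corresponding factor in $S$ (positivity is dropped), the constraints defining $T$ are stronger, so $T\subseteq S$ is immediate. The whole content of the lemma is thus the reverse inclusion $S\subseteq T$: once the normalization identity holds against all products of channels, it must persist against all products of Hermitian trace-preserving operators.

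The structural fact I would establish first is that the channel Choi operators affinely span the real affine space of Hermitian trace-preserving operators. Concretely, for each leg $k$ let
\begin{equation}
    \Omega_{A_kB_k} \coloneqq \mathbb{I}_{A_k}\otimes\frac{\mathbb{I}_{B_k}}{d_{B_k}}
\end{equation}
be the Choi operator of the completely depolarizing channel, which is strictly positive and satisfies $\Tr_{B_k}[\Omega_{A_kB_k}]=\mathbb{I}_{A_k}$. Given any Hermitian $C_{A_kB_k}$ with $\Tr_{B_k}[C_{A_kB_k}]=\mathbb{I}_{A_k}$, the operator $(1-\varepsilon)\Omega_{A_kB_k}+\varepsilon C_{A_kB_k}$ is trace-preserving for every $\varepsilon$, and positive semi-definite for $\varepsilon>0$ small enough, since $\Omega_{A_kB_k}$ has smallest eigenvalue $1/d_{B_k}>0$ while $C_{A_kB_k}$ is a fixed bounded operator. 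It is therefore a genuine channel Choi operator, and rearranging exhibits
\begin{equation}
    C_{A_kB_k} = \tfrac{1}{\varepsilon}\bigl[(1-\varepsilon)\Omega_{A_kB_k}+\varepsilon C_{A_kB_k}\bigr] - \tfrac{1-\varepsilon}{\varepsilon}\,\Omega_{A_kB_k}
\end{equation}
as an affine combination of exactly two channel Choi operators, the coefficients $\tfrac{1}{\varepsilon}$ and $-\tfrac{1-\varepsilon}{\varepsilon}$ summing to one.

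With this in hand, the reverse inclusion follows by multilinearity. Taking $W\in S$ and arbitrary Hermitian trace-preserving $C_{A_kB_k}^{(k)}$, I would expand each $C_{A_kB_k}^{(k)}=\sum_i\alpha_i^{(k)}D_i^{(k)}$ in channel Choi operators $D_i^{(k)}$ with $\sum_i\alpha_i^{(k)}=1$. Because $Y\mapsto\Tr[WY]$ is linear in each tensor factor,
\begin{equation}
    \Tr\!\Bigl[W\,\bigotimes_{k=1}^n C_{A_kB_k}^{(k)}\Bigr] = \sum_{i_1,\dots,i_n}\Bigl(\prod_{k=1}^n\alpha_{i_k}^{(k)}\Bigr)\,\Tr\!\Bigl[W\,\bigotimes_{k=1}^n D_{i_k}^{(k)}\Bigr],
\end{equation}
where each $\bigotimes_k D_{i_k}^{(k)}$ lies in $\mathsf{C}_n^{\mathsf{prod}}(A_1^n,B_1^n)$, so every trace on the right equals $1$ by $W\in S$. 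The prefactor then factorizes as $\prod_k\bigl(\sum_i\alpha_i^{(k)}\bigr)=1$, yielding $\Tr[W\bigotimes_k C^{(k)}]=1$ and hence $W\in T$.

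The step I expect to carry the weight is the affine-density claim, as it is the only place positivity is genuinely traded for an interior-point property of the depolarizing channel; everything downstream is a formal multilinear expansion into finitely many terms. The one point to check carefully is that the threshold on $\varepsilon$ can be chosen valid for the given fixed $C^{(k)}$ (it can, each being a single bounded operator), after which the affine expansions plug directly into the multilinear identity and close the argument.
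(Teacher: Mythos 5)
Your proof is correct and follows essentially the same route as the paper: the easy inclusion is the one where the test class is enlarged, and the reverse inclusion is obtained by writing each Hermitian trace-preserving $C^{(k)}$ as an affine combination of two genuine channel Choi operators built from the depolarizing Choi $\mathbb{I}_{A_k}\otimes\mathbb{I}_{B_k}/d_{B_k}$, then expanding multilinearly so that the coefficients telescope to $\prod_k 1=1$. Indeed, your decomposition with parameter $\varepsilon$ is the paper's decomposition $C^{(k)}=(\alpha_k+1)C^{(k,0)}-\alpha_k C^{(k,1)}$ under the substitution $\varepsilon=1/(\alpha_k+1)$.
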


    \begin{proof}
        We follow the arguments presented in Ref.~\cite{araujo2015noncausal}. The inclusion ``$\supseteq$'' is clear. For the inclusion ``$\subseteq$'', assume that $W_{A_1^nB_1^n}\in\mathsf{S}_n^{\mathsf{ico}}(A_1^n,B_1^n)$. We now show that $\Tr[W_{A_1^nB_1^n}(C_{A_1B_1}^{(1)}\otimes\dotsb\otimes C_{A_nB_n}^{(n)})]=1$ for every collection $C_{A_1B_1}^{(1)},\dotsc,C_{A_nB_n}^{(n)}$ of Hermitian operators satisfying $\Tr_{B_1}[C_{A_1B_1}^{(1)}]=\mathbb{I}_{A_1},\dotsc,\Tr_{B_n}[C_{A_nB_n}^{(n)}]=\mathbb{I}_{A_n}$. To that end, note that for every such $C_{A_kB_k}^{(k)}$, there exists $\alpha_k\geq 0$ such that $C_{A_kB_k}^{(k)}+\alpha_k\frac{1}{d_{B_k}}\mathbb{I}_{A_kB_k}\geq 0$. Furthermore, we can decompose $C_{A_kB_k}^{(k)}$ as
        \begin{align}
            C_{A_kB_k}^{(k)}&=(\alpha_k+1)C_{A_kB_k}^{(k,0)}-\alpha_k C_{A_kB_k}^{(k,1)},\\
            C_{A_kB_k}^{(k,0)}&=\frac{1}{\alpha_k+1}\left(C_{A_kB_k}^{(k)}+\alpha_k\frac{1}{d_{B_k}}\mathbb{I}_{A_kB_k}\right),\\
            C_{A_kB_k}^{(k,1)}&=\frac{1}{d_{B_k}}\mathbb{I}_{A_kB_k}.
        \end{align}
        Observe that $C_{A_kB_k}^{(k,0)}\geq 0$, $C_{A_kB_k}^{(k,1)}\geq 0$, and $\Tr_{B_k}[C_{A_kB_k}^{(k,0)}]=\Tr_{B_k}[C_{A_kB_k}^{(k,1)}]=\mathbb{I}_{A_k}$. Then, we have
        \begin{align}
            \Tr[W_{A_1^nB_1^n}(C_{A_1B_1}^{(1)}\otimes\dotsb\otimes C_{A_nB_n}^{(n)})]&=\sum_{x\in\{0,1\}^n}\left(\prod_{k=1}^n (\alpha_k+1)^{x_k}(-\alpha_k)^{1-x_k}\right)\underbrace{\Tr\!\left[W_{A_1^nB_1^n}\bigotimes_{k=1}^n C_{A_kB_k}^{(k,x)}\right]}_{=1~~\forall~x\in\{0,1\}^n}\\
            \nonumber
            &=\sum_{x\in\{0,1\}^n}\left(\prod_{k=1}^n (\alpha_k+1)^{x_k}(-\alpha_k)^{1-x_k}\right)\\
            \nonumber
            &=\prod_{k=1}^n\left(-\alpha_k+\alpha_k+1\right)\\
            \nonumber
            &=1,
        \end{align}
        which implies the desired result, because the operator $W_{A_1^nB_1^n}\in\mathsf{S}_n^{\mathsf{ico}}(A_1^n,B_1^n)$ was arbitrary.
    \end{proof}
    Let
    \begin{equation}
        \widetilde{\mathsf{C}}_n^{\textsf{prod}}(A_1^n,B_1^n)\coloneqq\left\{\bigotimes_{k=1}^n C_{A_kB_k}^{(k)}:C_{A_kB_k}^{(k)}\text{ Hermitian},\,\Tr_{B_k}[C_{A_kB_k}^{(k)}]=\mathbb{I}_{A_k},\,k\in\{1,2,\dotsc,n\}\right\}.
    \end{equation}
    Lemma~\ref{lem-non_causal_strategies_set} tells us that
    \begin{align}
        \mathsf{S}_n^{\textsf{ico}}(A_1^n,B_1^n)&=\left\{W_{A_1^nB_1^n}: W_{A_1^nB_1^n}\geq 0,\,\Tr[W_{A_1^nB_1^n}Y_{A_1^nB_1^n}]=1~~\forall~Y_{A_1^nB_1^n}\in\widetilde{\mathsf{C}}_n^{\textsf{prod}}(A_1^n,B_1^n)\right\}
        \nonumber\\
        &=\left\{W_{A_1^nB_1^n}: W_{A_1^nB_1^n}\geq 0,\, \Tr[W_{A_1^nB_1^n}Y_{A_1^nB_1^n}]=1~~\forall~Y_{A_1^nB_1^n}\in\text{aff}(\widetilde{\mathsf{C}}_n^{\textsf{prod}}(A_1^n,B_1^n))\right\},\label{eq-non_causal_strategies_set_alt2}
    \end{align}
    where $\textnormal{aff}(\mathsf{S})\coloneqq\{\sum_i x_i s_i:s_i\in\mathsf{S},\,x_i\in\mathbb{R},\,\sum_i x_i=1\}$ denotes the affine hull of a set $\mathsf{S}$.

    \begin{proposition}[Bayesian optimization of non-causal strategies]\label{prop-SDP_Bayesian_non_causal}
        Let $ C_{A_1^nB_1^n}^{\mathcal{N}(t)^{[n]}}$ be the Choi representation of the comb $\mathcal{N}(t)^{[n]}$ corresponding to $n$ uses of the parameterized quantum channel $t\mapsto\mathcal{N}(t)$. Also, let $t\mapsto\mu(t)$ be a prior probability density function. Then, the optimal non-causal Bayesian strategy for estimating the parameter $t$ can be determined using the convex problem
        \begin{equation}
            \begin{array}{l l}
                \textnormal{maximize} & \int\diff t\,\mu(t)\Tr[P_{A_1^nB_1^n}(t)(w\ast C_{A_1^nB_1^n}^{\mathcal{N}^{[n]}})(t)]\\[1ex]
                \textnormal{subject to} & P_{A_1^nB_1^n}(t)\geq 0~~\forall~t,\\[1ex]
                & \int\diff t\,P_{A_1^nB_1^n}(t)\in\mathsf{S}_n^{\mathsf{ico}}(A_1^n,B_1^n).
            \end{array}
        \end{equation} 
        The dual problem, which has the same optimal value as the primal problem above, is
        \begin{equation}
            \begin{array}{l l}
                \textnormal{minimize} & \lambda \\[1ex]
                \textnormal{subject to} & \lambda Y_{A_1^nB_1^n}\geq \mu(t)(w\ast C_{A_1^nB_1^n}^{\mathcal{N}^{[n]}})(t)~~\forall~t,\\[1ex]
                & Y_{A_1^nB_1^n}\in\widetilde{\mathsf{S}}_n^{\mathsf{NS}}(A_1^n,B_1^n),
            \end{array}
        \end{equation}
        where $\widetilde{\mathsf{S}}_n^{\mathsf{NS}}(A_1^n,B_1^n)$ is the set of all Choi representations of $n$-partite Hermiticity-preserving non-signaling superoperators with input systems $A_1,\dotsc,A_n$ and output systems $B_1,\dotsc,B_n$, defined as~\cite{gutoski2009sharedentanglement,CE16}
        \begin{equation}
            \widetilde{\mathsf{S}}_n^{\mathsf{NS}}(A_1^n,B_1^n)\coloneqq\left\{Y_{A_1^nB_1^n}:Y_{A_1^nB_1^n}\textnormal{ Hermitian},\,\Tr_{B_K}[Y_{A_1^nB_1^n}]=\mathbb{I}_{A_K}\otimes Y_{A_{\overline{K}}B_{\overline{K}}}',\,Y_{A_{\overline{K}}B_{\overline{K}}}'\geq 0,\,K\subseteq\{1,2,\dotsc,n\}\right\}.
        \end{equation}
        Here, $B_K$ denotes the $B$ systems labeled by the subset $K$, and $\overline{K}$ denotes the complement of $K$.
    \end{proposition}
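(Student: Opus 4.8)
The plan is to mirror the strategy used in the proofs of Propositions~\ref{prop-succ_prob_combs_Bayesian} and~\ref{prop-succ_prob_combs}: first argue that the primal problem faithfully captures the optimal non-causal Bayesian success probability, then cast it into the standard primal form of Eq.~\eqref{eqn:SDP_primal_dual_standard}, compute the adjoint of the associated linear map, read off the dual, and finally simplify it and verify strong duality. The first step proceeds exactly as in the causal case: a non-causal strategy together with a final measurement is described by a family of positive semi-definite operators $t\mapsto P_{A_1^nB_1^n}(t)$ whose outcome probabilities, when the channel comb is $\mathcal{N}(t)^{[n]}$, are given by the link product $\Tr[P_{A_1^nB_1^n}(t)\,C_{A_1^nB_1^n}^{\mathcal{N}(t)^{[n]}}]$. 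The normalization of such a collection of measurement operators is precisely that $\int\diff t\,P_{A_1^nB_1^n}(t)$ be a valid ICO strategy, that is, an element of $\mathsf{S}_n^{\mathsf{ico}}(A_1^n,B_1^n)$ as defined in Eq.~\eqref{eq-non_causal_strategies_set}; folding the window function $w$ into the objective then yields the stated primal.

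The core of the argument is the dualization. Using the reformulation of Eq.~\eqref{eq-non_causal_strategies_set_alt2}, the membership $W\coloneqq\int\diff t\,P_{A_1^nB_1^n}(t)\in\mathsf{S}_n^{\mathsf{ico}}(A_1^n,B_1^n)$ is equivalent to $W\geq 0$ together with the affine conditions $\Tr[W\,Y]=1$ for every $Y$ in the affine hull $\mathrm{aff}(\widetilde{\mathsf{C}}_n^{\mathsf{prod}}(A_1^n,B_1^n))$. I would encode these linear conditions, along with the positivity of each $P_{A_1^nB_1^n}(t)$, as the constraint $\Phi[X]\leq B$ in the standard form, with $X=\int\diff t\,\ketbra{t}{t}\otimes P_{A_1^nB_1^n}(t)$ and $A=\int\diff t\,\ketbra{t}{t}\otimes\mu(t)(w\ast C_{A_1^nB_1^n}^{\mathcal{N}^{[n]}})(t)$. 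Computing $\Phi^\dagger$ through the defining relation $\Tr[Y\Phi[X]]=\Tr[\Phi^\dagger[Y]X]$ and grouping the scalar multiplier of the normalization with the Hermitian multiplier of the affine constraints produces a single operator which I would denote $\lambda Y_{A_1^nB_1^n}$; the pointwise constraint then reads $\lambda Y_{A_1^nB_1^n}\geq\mu(t)(w\ast C_{A_1^nB_1^n}^{\mathcal{N}^{[n]}})(t)$ for all $t$, matching the stated dual objective.

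The main obstacle is to show that the feasible set of the dual variable is exactly $\widetilde{\mathsf{S}}_n^{\mathsf{NS}}(A_1^n,B_1^n)$, the Choi operators of Hermiticity-preserving non-signaling superoperators. This amounts to establishing that the polar of the cone generated by $\mathsf{S}_n^{\mathsf{ico}}$ coincides, up to the positive scaling $\lambda$, with the non-signaling set. Concretely, a Hermitian operator $Y$ satisfies $\Tr[W\,Y]=\mathrm{const}$ for all $W\in\mathsf{S}_n^{\mathsf{ico}}$ if and only if $Y$ lies in the affine span of objects that pair to a constant with every product channel; by Lemma~\ref{lem-non_causal_strategies_set} this pairing condition is precisely the statement that no signaling is possible between the $n$ parties, which is in turn encoded by the marginal conditions $\Tr_{B_K}[Y]=\mathbb{I}_{A_K}\otimes Y'_{A_{\overline K}B_{\overline K}}$ for all $K\subseteq\{1,\dots,n\}$. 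I would verify this equivalence by dualizing the linear projection that defines $\mathsf{S}_n^{\mathsf{ico}}$ and checking that its ``transpose'' is the projection onto the non-signaling subspace, invoking the known duality between process matrices and non-signaling channels; this is the technically delicate part of the argument.

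Finally, strong duality follows from Slater's condition: a strictly feasible primal point is obtained from any interior ICO strategy, for instance one built from depolarizing channels, combined with a full-rank measurement, so that $P_{A_1^nB_1^n}(t)>0$ for all $t$. This guarantees a vanishing duality gap and hence equality of the primal and dual optimal values, completing the proof.
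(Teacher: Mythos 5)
Your overall route coincides with the paper's: rewrite membership in $\mathsf{S}_n^{\mathsf{ico}}(A_1^n,B_1^n)$ via \eqref{eq-non_causal_strategies_set_alt2} as positivity plus the affine conditions $\Tr[WY]=1$ for $Y$ ranging over $\mathrm{aff}(\widetilde{\mathsf{C}}_n^{\mathsf{prod}}(A_1^n,B_1^n))$, cast the problem into the standard form \eqref{eqn:SDP_primal_dual_standard}, compute $\Phi^\dagger$, regroup the dual multipliers into $\lambda Y_{A_1^nB_1^n}$, and close with strong duality. The one place where your plan drifts is the step you yourself flag as the ``technically delicate part.'' The dual variable is \emph{not} obtained by characterizing the polar of the cone generated by $\mathsf{S}_n^{\mathsf{ico}}$; rather, after choosing a finite basis $\{Y^j\}_j$ of $\mathrm{aff}(\widetilde{\mathsf{C}}_n^{\mathsf{prod}})$ to index the affine constraints, the dual multipliers $\alpha_j,\beta_j$ combine as $\sum_j(\alpha_j-\beta_j)Y^j=\lambda Y$ with $Y\in\mathrm{aff}(\widetilde{\mathsf{C}}_n^{\mathsf{prod}})$ automatically --- there is no polarity computation to perform. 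The only external fact needed is the identification $\mathrm{aff}(\widetilde{\mathsf{C}}_n^{\mathsf{prod}}(A_1^n,B_1^n))=\widetilde{\mathsf{S}}_n^{\mathsf{NS}}(A_1^n,B_1^n)$, which the paper imports directly from Ref.~\cite{gutoski2009sharedentanglement} (Theorem~14) rather than re-deriving; note also that Lemma~\ref{lem-non_causal_strategies_set} serves to pass from product \emph{channels} to product \emph{Hermitian} operators with unit marginals, not to encode a no-signaling condition as your sketch suggests. If you replace the polar-cone argument by this basis-indexing observation plus the citation, your proof matches the paper's; your Slater-condition argument for strong duality is a reasonable way to make explicit what the paper leaves as ``straightforward to verify.''
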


    \begin{proof}
        Using \eqref{eq-non_causal_strategies_set_alt2}, we can write the primal problem as
        \begin{equation}
            \begin{array}{l l}
                \textnormal{maximize} & \int\diff t\,\mu(t)\Tr[P_{A_1^nB_1^n}(t)(w\ast C_{A_1^nB_1^n}^{\mathcal{N}^{[n]}})(t)]\\[1ex]
                \textnormal{subject to} & P_{A_1^nB_1^n}(t)\geq 0~~\forall~t,\\[1ex]
                & W_{A_1^nB_1^n}=\int\diff t\,P_{A_1^nB_1^n}(t),\\[1ex]
                & \Tr[W_{A_1^nB_1^n}Y_{A_1^nB_1^n}]=1~~\forall~Y_{A_1^nB_1^n}\in\text{aff}(\widetilde{\mathsf{C}}_n^{\textsf{prod}}(A_1^n,B_1^n)).
            \end{array}
        \end{equation}
        Now, let us pick a basis $\{Y_{A_1^nB_1^n}^j\}_j$ for the affine space $\text{aff}(\widetilde{\mathsf{C}}_n^{\textsf{prod}}(A_1^n,B_1^n))$. With this, the infinite number of constraints in the final line of the above convex problem can be made into a finite number of constraints, such that the primal problem can be formulated as
        \begin{equation}
            \begin{array}{l l}
                \textnormal{maximize} & \int\diff t\,\mu(t)\Tr[P_{A_1^nB_1^n}(t)(w\ast C_{A_1^nB_1^n}^{\mathcal{N}^{[n]}})(t)]\\[1ex]
                \textnormal{subject to} & P_{A_1^nB_1^n}(t)\geq 0~~\forall~t,\\[1ex]
                & W_{A_1^nB_1^n}=\int\diff t\,P_{A_1^nB_1^n}(t),\\[1ex]
                & \Tr[W_{A_1^nB_1^n}Y_{A_1^nB_1^n}^j]=1~~\forall~j.
            \end{array}
        \end{equation}
        We can now cast this into the standard form \eqref{eqn:SDP_primal_dual_standard} of a primal problem. Specifically, we have
        \begin{align}
            X&=\int\diff t\,\ketbra{t}{t}\otimes P_{A_1^nB_1^n}(t),\\
            A&=\int\diff t\,\ketbra{t}{t}\otimes \mu(t)(w\ast C_{A_1^nB_1^n}^{\mathcal{N}^{[n]}})(t),\\
            \Phi[X]&=\sum_j\Tr[W_{A_1^nB_1^n}Y_{A_1^nB_1^n}^j]\ketbra{j}{j}\oplus \sum_j (-\Tr[W_{A_1^nB_1^n}Y_{A_1^nB_1^n}^j])\ketbra{j}{j},\\
            B&=\sum_j\ketbra{j}{j}\oplus \sum_{j}(-1)\ketbra{j}{j}.
        \end{align}
        Now, for the dual variable, we can take it to be of the form $Y=\sum_j\alpha_j\ketbra{j}{j}\oplus\sum_j\beta_j\ketbra{j}{j}$, where $\alpha_j\geq 0$ and $\beta_j\geq 0$ for all $j$. From this, it is straightforward to show that
        \begin{equation}
            \Tr[Y\Phi[X]]=\Tr\!\left[\left(\int\diff t\,\ketbra{t}{t}\otimes P_{A_1^nB_1^n}(t)\right)\left(\int\diff t\,\ketbra{t}{t}\otimes\sum_j(\alpha_j-\beta_j)Y_{A_1^nB_1^n}^j\right)\right],
        \end{equation}
        so that
        \begin{equation}
            \Phi^{\dagger}[Y]=\int\diff t\,\ketbra{t}{t}\otimes \sum_j(\alpha_j-\beta_j)Y_{A_1^nB_1^n}^j.
        \end{equation}
        The dual problem is therefore
        \begin{equation}
            \begin{array}{l l}
                \text{minimize} & \sum_j(\alpha_j-\beta_j)\\[1ex]
                \text{subject to} & \alpha_j\geq 0,\,\beta_j\geq 0~~\forall j,\\[1ex]
                & \sum_j (\alpha_j-\beta_j)Y_{A_1^nB_1^n}^j\geq \mu(t)(w\ast C_{A_1^nB_1^n}^{\mathcal{N}^{[n]}})(t)~~\forall~t.
            \end{array}
        \end{equation}
        Strong duality is straightforward to verify, which means that this dual problem has the same optimal value as the primal problem. Now, let $\lambda\equiv\sum_j(\alpha_j-\beta_j)$. Then, $\lambda\in\mathbb{R}$ and
        \begin{equation}
            \sum_j(\alpha_j-\beta_j)Y_{A_1^nB_1^n}^j=\lambda\sum_j\frac{\alpha_j-\beta_j}{\lambda}Y_{A_1^nB_1^n}^j\equiv \lambda Y_{A_1^nB_1^n},
        \end{equation}
        where $Y_{A_1^nB_1^n}\in\text{aff}(\widetilde{\mathsf{C}}_n^{\textsf{prod}}(A_1^n,B_1^n))$. We thus conclude that the dual problem is given by
        \begin{equation}
            \begin{array}{l l}
                \textnormal{minimize} & \lambda \\[1ex]
                \textnormal{subject to} & \lambda Y_{A_1^nB_1^n}\geq \mu(t)(w\ast C_{A_1^nB_1^n}^{\mathcal{N}^{[n]}})(t)~~\forall~t,\\[1ex]
                & Y_{A_1^nB_1^n}\in\textnormal{aff}(\widetilde{\mathsf{C}}_n^{\textsf{prod}}(A_1^n,B_1^n)).
            \end{array}
        \end{equation}
        Finally, we use Ref.~\cite{gutoski2009sharedentanglement}~(Theorem~14), which implies that $\text{aff}(\widetilde{\mathsf{C}}_n^{\textsf{prod}}(A_1^n,B_1^n))=\widetilde{\mathsf{S}}_n^{\textsf{NS}}(A_1^n,B_1^n)$, completing the proof.
    \end{proof}

    \begin{proposition}[Minimax optimization of non-causal strategies]\label{prop-SDP_minimax_non_causal}
        Let $ C_{A_1^nB_1^n}^{\mathcal{N}(t)^{[n]}}$ be the Choi representation of the comb $\mathcal{N}(t)^{[n]}$ corresponding to $n$ uses of the parameterized quantum channel $t\mapsto\mathcal{N}(t)$. Then, the optimal non-causal minimax strategy for estimating the parameter $t$ can be determined using the convex problem
        \begin{equation}
            \begin{array}{l l}
                \textnormal{maximize} & \eta \\[1ex]
                \textnormal{subject to} & \Tr[P_{A_1^nB_1^n}(t)(w\ast C_{A_1^nB_1^n}^{\mathcal{N}^{[n]}})(t)]\geq\eta~~\forall~t,\\[1ex]
                & \eta\geq 0,\\[1ex]
                & P_{A_1^nB_1^n}(t)\geq 0~~\forall~t,\\[1ex]
                & \int\diff t\,P_{A_1^nB_1^n}(t)\in\mathsf{S}_n^{\mathsf{ico}}(A_1^n,B_1^n).
            \end{array}
        \end{equation}
        The dual problem, which has the same optimal value as the primal problem above, is
        \begin{equation}
            \begin{array}{l l}
                \text{minimize} & \lambda \\[1ex]
                \text{subject to} & \lambda Y_{A_1^nB_1^n}\geq\mu(t)(w\ast C_{A_1^nB_1^n}^{\mathcal{N}^{[n]}})(t)~~\forall~t,\\[1ex]
                & Y_{A_1^nB_1^n}\in\textnormal{aff}(\widetilde{\mathsf{C}}_n^{\mathsf{prod}}(A_1^n,B_1^n)),\\[1ex]
                & \mu(t)\geq 0,\,\int\diff t\,\mu(t)=1.
            \end{array}
        \end{equation}
    \end{proposition}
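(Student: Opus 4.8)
The plan is to mirror the two proofs that immediately precede it: Proposition~\ref{prop-SDP_Bayesian_non_causal}, which already treats the non-causal strategy set $\mathsf{S}_n^{\mathsf{ico}}$ via its affine-hull reformulation, and Proposition~\ref{prop-succ_prob_combs}, which shows how a minimax primal—with its auxiliary scalar variable $\eta$ and the family of constraints $\Tr[P_{A_1^nB_1^n}(t)(w\ast C^{\mathcal{N}^{[n]}}_{A_1^nB_1^n})(t)]\geq\eta$ indexed by $t$—generates a prior density $\mu(t)$ as the associated dual variable. Since both ingredients have been established separately, the combination is largely bookkeeping: the non-causal apparatus supplies the correct normalization cone, while the minimax apparatus supplies the adversarial prior.

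First I would replace the membership constraint $\int\diff t\,P_{A_1^nB_1^n}(t)\in\mathsf{S}_n^{\mathsf{ico}}(A_1^n,B_1^n)$ by the finite collection of linear constraints $\Tr[W_{A_1^nB_1^n}Y_{A_1^nB_1^n}^j]=1$, where $\{Y_{A_1^nB_1^n}^j\}_j$ is a basis of $\text{aff}(\widetilde{\mathsf{C}}_n^{\textsf{prod}})$ and $W_{A_1^nB_1^n}=\int\diff t\,P_{A_1^nB_1^n}(t)$, exactly as in the proof of Proposition~\ref{prop-SDP_Bayesian_non_causal}. I would then cast the primal into the standard form \eqref{eqn:SDP_primal_dual_standard}: the variable $X$ collects the measurement block $\int\diff t\,\ketbra{t}{t}\otimes P_{A_1^nB_1^n}(t)$ together with the scalar $\eta$, the operator $A$ is chosen so that $\Tr[AX]=\eta$, and the map $\Phi$ encodes both the inequalities $\eta-\Tr[P(t)(w\ast C)(t)]\leq 0$ and the equalities $\Tr[WY^j]=1$.

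Computing the adjoint $\Phi^{\dagger}$ as in the earlier proofs produces a prior $\mu(t)\geq 0$ dual to the $\eta$-constraints and scalars $\alpha_j,\beta_j\geq 0$ dual to the equality constraints, with dual objective $\lambda=\sum_j(\alpha_j-\beta_j)$. The two mechanical steps that finish the argument are then: apply the complementary slackness condition $\Phi^{\dagger}[Y]X=AX$ to force $\int\diff t\,\mu(t)=1$—this is precisely the normalization of the adversarial prior in the minimax setting—and collapse $\sum_j(\alpha_j-\beta_j)Y_{A_1^nB_1^n}^j=\lambda\,Y_{A_1^nB_1^n}$ with $Y_{A_1^nB_1^n}\in\text{aff}(\widetilde{\mathsf{C}}_n^{\textsf{prod}})$, which recovers the dual stated in the proposition. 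Strong duality follows by verifying Slater's condition, which is immediate since the feasible region has nonempty interior; should one prefer the non-signaling description of the dual cone, the identity $\text{aff}(\widetilde{\mathsf{C}}_n^{\textsf{prod}})=\widetilde{\mathsf{S}}_n^{\textsf{NS}}$ from Ref.~\cite{gutoski2009sharedentanglement} may be invoked as in Proposition~\ref{prop-SDP_Bayesian_non_causal}.

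I expect the principal obstacle to be purely organizational rather than conceptual: one must keep the block indices consistent while superimposing two auxiliary structures—the scalar $\eta$ from the minimax formulation and the affine-hull basis $\{Y^j\}$ from the non-causal formulation—so that the adjoint computation and the complementary-slackness step produce the prior $\mu(t)$ and the multiplier $\lambda$ in the correct positions. No new estimate beyond those already appearing in Propositions~\ref{prop-SDP_Bayesian_non_causal} and~\ref{prop-succ_prob_combs} is needed, so once the block structure is set up correctly the remainder is routine.
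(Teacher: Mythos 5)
Your proposal follows essentially the same route as the paper's own proof: replace the membership constraint by the finite family $\Tr[W_{A_1^nB_1^n}Y_{A_1^nB_1^n}^j]=1$ over a basis of the affine hull, cast the primal with the auxiliary scalar $\eta$ into the standard form of Eq.~\eqref{eqn:SDP_primal_dual_standard}, read off $\mu(t)$ and $\alpha_j,\beta_j$ from the adjoint, and then use complementary slackness to normalize $\mu$ and the substitution $\lambda Y_{A_1^nB_1^n}=\sum_j(\alpha_j-\beta_j)Y_{A_1^nB_1^n}^j$ to reach the stated dual. The argument is correct and matches the paper's proof step for step.
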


    \begin{proof}
        We proceed similarly to the proof of Proposition~\ref{prop-SDP_Bayesian_non_causal}. By picking a basis $\{Y_{A_1^nB_1^n}^j\}_j$ for $\text{aff}(\widetilde{\mathsf{C}}_n^{\mathsf{prod}}(A_1^n,B_1^n))$, we can write the primal problem as
        \begin{equation}
            \begin{array}{l l}
                \textnormal{maximize} & \eta \\[1ex]
                \textnormal{subject to} & \Tr[P_{A_1^nB_1^n}(t)(w\ast C_{A_1^nB_1^n}^{\mathcal{N}^{[n]}})(t)]\geq\eta~~\forall~t,\\[1ex]
                & \eta\geq 0,\\[1ex]
                & P_{A_1^nB_1^n}(t)\geq 0~~\forall~t,\\[1ex]
                & W_{A_1^nB_1^n}=\int\diff t\,P_{A_1^nB_1^n}(t),\\[1ex]
                & \Tr[W_{A_1^nB_1^n}Y_{A_1^nB_1^n}^j]=1~~\forall~j.
            \end{array}
        \end{equation}
        We can then cast this into the standard form \eqref{eqn:SDP_primal_dual_standard} as 
        \begin{align}
            X&=\eta\oplus\left(\int\diff t\,\ketbra{t}{t}\otimes P_{A_1^nB_1^n}(t)\right),\\
            A&=1\oplus\left(\int\diff t\,\ketbra{t}{t}\otimes 0  \right),\\
            \Phi[X]&=\left(\int\diff t\,\ketbra{t}{t}\otimes(\eta-\Tr[P_{A_1^nB_1^n}(t)(w\ast C_{A_1^nB_1^n}^{\mathcal{N}^{[n]}})(t)]\right)\nonumber\\
            &\qquad\oplus\left(\sum_j \Tr[W_{A_1^nB_1^n}Y_{A_1^nB_1^n}^j]\ketbra{j}{j}\right)\nonumber\\
            &\qquad\oplus\left(\sum_j (-\Tr[W_{A_1^nB_1^n}Y_{A_1^nB_1^n}^j])\ketbra{j}{j}\right),\\
            B&=\left(\int\diff t\,\ketbra{t}{t}\otimes 0\right)\oplus\left(\sum_j\ketbra{j}{j}\right)\oplus\left((-1)\ketbra{j}{j}\right).
        \end{align}
        Then, letting the dual variable $Y$ be
        \begin{equation}
            Y=\left(\int\diff t\,\ketbra{t}{t}\mu(t)\right)\oplus\left(\sum_j\ketbra{j}{j}\alpha_j\right)\oplus\left(\sum_j\ketbra{j}{j}\beta_j\right),
        \end{equation}
        with $\mu(t)\geq 0$ for all $t$ and $\alpha_j\geq 0$, $\beta_j\geq 0$ for all $j$, we obtain
        \begin{align}
            \Tr[Y\Phi[X]]&=\int\diff t\mu(t)\left(\eta-\Tr[P_{A_1^nB_1^n}(t)(w\ast C_{A_1^nB_1^n}^{\mathcal{N}^{[n]}})(t)]\right)\nonumber\\
            &\quad +\sum_j\alpha_j\Tr[W_{A_1^nB_1^n}Y_{A_1^nB_1^n}^j]-\sum_j\beta_j\Tr[W_{A_1^nB_1^n}Y_{A_1^nB_1^n}^j]\\
            &=\eta\int\diff t\,\mu(t)\nonumber\\
            &\quad +\Tr\!\left[\left(\int\diff t\,\ketbra{t}{t}\otimes P_{A_1^nB_1^n}(t)\right)\left(\int\diff t\,\ketbra{t}{t}\otimes\left(\sum_j(\alpha_j-\beta_j)Y_{A_1^nB_1^n}^j-\mu(t)(w\ast C_{A_1^nB_1^n}^{\mathcal{N}^{[n]}})(t)\right)\right)\right].
        \end{align}
        This implies that
        \begin{equation}
            \Phi^{\dagger}[Y]=\left(\int\diff t\,\mu(t)\right)\oplus\left(\int\diff t\,\ketbra{t}{t}\otimes\left(\sum_j(\alpha_j-\beta_j)Y_{A_1^nB_1^n}^j-\mu(t)(w\ast C_{A_1^nB_1^n}^{\mathcal{N}^{[n]}})(t)\right)\right).
        \end{equation}
        The dual problem is therefore
        \begin{equation}
            \begin{array}{l l}
                \text{minimize} & \sum_{j}\alpha_j-\beta_j\\[1ex]
                \text{subject to} & \alpha_j\geq 0,\,\beta_j\geq 0~~\forall~j,\,\mu(t)\geq 0~~\forall~t,\\[1ex]
                & \int\diff t\,\mu(t)\geq 1,\,\sum_j (\alpha_j-\beta_j)Y_{A_1^nB_1^n}^j\geq\mu(t)(w\ast C_{A_1^nB_1^n}^{\mathcal{N}^{[n]}})(t)~~\forall~t.
            \end{array}
        \end{equation}  
        Strong duality is straightforward to show, which means that this dual problem has the same optimal value as the primal problem.

        Now, the complementary slackness condition $\Phi^{\dagger}[Y]X=AX$ implies that $\int\diff t\,\mu(t)=1$. Furthermore, letting $\lambda\equiv\sum_j \alpha_j-\beta_j$, we find that
        \begin{equation}
            \sum_j (\alpha_j-\beta_j)Y_{A_1^nB_1^n}^j=\lambda\sum_j\frac{\alpha_j-\beta_j}{\lambda}Y_{A_1^nB_1^n}^j\equiv \lambda Y_{A_1^nB_1^n},\quad Y_{A_1^nB_1^n}\in\text{aff}(\widetilde{\mathsf{C}}_n^{\mathsf{prod}}(A_1^n,B_1^n)).
        \end{equation}
        With these simplifications, along with Ref.~\cite{gutoski2009sharedentanglement}~(Theorem~14), which states that $\text{aff}(\widetilde{\mathsf{C}}_n^{\mathsf{prod}}(A_1^n,B_1^n))=\widetilde{\mathsf{S}}_n^{\mathsf{NS}}(A_1^n,B_1^n)$, we obtain the desired dual problem.
    \end{proof}

    The convex programs presented in Propositions~\ref{prop-SDP_Bayesian_non_causal} and~\ref{prop-SDP_minimax_non_causal} are continuous analogues of the SDPs in Refs.~\cite{bavaresco2021channeldiscnoncausal,bavaresco2022unitarydiscnoncausal} for channel discrimination using strategies without causal ordering of the channel uses.

\subsection{Numerical implementation}\label{ssec:numerical_implementation}
The optimization programs presented in the preceding sections typically involve continuous objects, both in the inputs to the problem as well as in the optimization variables. As optimization over these quantities is not readily available in numerical solvers, we have to discretize the problems for actual implementation.
This means, we need to choose a number of points in the discretization which we denote with $k$. We then replace any function $f(t)$ with a vector of discrete values \smash{$(f_{\ell})_{\ell=1}^k$} corresponding to points in time \smash{$(t_{\ell})_{\ell=1}^k$}. 

Typically, $t$ varies only in a finite interval $[0,T]$. In this case, the most straightforward way to discretize is to fix a number of discretization steps $k$ and choose time points that are evenly spaced in time, \textit{i.e.}, $t_{\ell} = \ell T/k$. We then replace the time-varying quantity $f(t)$ with the average over the preceding interval in the discretization as
\begin{align}
    f_{\ell} = \frac{k}{T}\int_{t_{\ell-1}}^{t_{\ell}} \diff t \, f(t),
\end{align}
where we use the convention that $t_0 = 0$. This corresponds to replacing $f(t)$ with a piece-wise constant function
\begin{align}
    f(t) \to \sum_{\ell=1}^k f_{\ell} \, \chi[t_{\ell-1} < t \leq t_{\ell}],
\end{align}
where $\chi$ denotes an indicator function which takes the value one if the argument is true and zero otherwise.
As an example, we give the discretized version of the %
convex program 
of Proposition~\ref{prop:bayesian_sdp}. The objective of said program is given by
\begin{align}
    \int \diff t \, \diff \tau \, w(t - \tau) \Tr [\mu(t) \rho(t) Q(\tau) ].
\end{align}
We now replace all functions by their piece-wise approximations, to obtain
\begin{align}
    &\sum_{\ell=1}^{k}\sum_{m=1}^{k}\sum_{p=1}^{k} w_p \Tr [ \mu_{\ell} \rho_{\ell} Q_m  ] \int \diff t \, \diff \tau \, \chi[t_{\ell-1} < t \leq t_{\ell}]\chi[t_{p-1} < t-\tau \leq t_{p}]\chi[t_{m-1} < \tau \leq t_{m}]\\
    \nonumber
     =\mathstrut&\sum_{\ell=1}^{k}\sum_{m=1}^{k}\sum_{p=1}^{k} w_{\ell - m}\chi[1 \leq \ell - m \leq k] \Tr [ \mu_{\ell} \rho_{\ell} Q_m  ] \left(\frac{T}{k}\right)^2\\
      \nonumber
    =\mathstrut&\sum_{\ell=1}^{k}\sum_{m=\max(1,\ell - k)}^{\min(k,\ell - 1)}w_{\ell - m} \Tr [ \mu_{\ell} \rho_{\ell} Q_m  ] \left(\frac{T}{k}\right)^2.
\end{align}
The discretized version of the 
convex problem
would thus be
\begin{align}
    \eta^{*}(w, \mu, \rho) &\approx \eta^{*}( (w_{\ell}), (\mu_{\ell}), (\rho_{\ell}))\\
     \nonumber
    &= \begin{array}[t]{rc}
      \text{maximize} & \left(\frac{T}{k}\right)^2 \sum_{\ell=1}^{k}\sum_{m=\max(1,\ell - k)}^{\min(k,\ell - 1)}w_{\ell - m} \Tr [ \mu_{\ell} \rho_{\ell} Q_m  ] \\ [1ex]
      \text{such that} & Q_{\ell} \geq 0 \text{ for all } \ell \\[1ex]
      \text{such that} & \sum_{\ell} Q_{\ell} = \bbI.
    \end{array}
\end{align}
In this case, the sum over $m$ denotes a discrete convolution. Under the assumption of periodic boundary conditions, this convolution would be changed to wrap around the interval which would yield a simpler expression.

The other possibility, namely the case in which $t$ can take arbitrary values on the real line, makes sense only if we have access to a prior $\mu(t)$. This case can be treated in a similar manner, but the discretization is not uniform anymore. Instead, one can choose regions $R_{\ell} \subset \bbR$ of equal measure $\mu(R_{\ell}) = 1/k$ for all $\ell$. In this way, we are replacing the measure over values of $t$ with a uniform measure over the choice of region $R_{\ell}$. For each such region we replace the function value with its average over the region, \textit{i.e.}, a piece-wise approximation
\begin{align}
    f(t) \approx \sum_{\ell=1}^{k} f_{\ell} \chi[ t \in R_{\ell}], \qquad
    f_{\ell} = \frac{1}{|R_{\ell}|} \int_{R_{\ell}} \diff t \, f(t)
\end{align}
such that
\begin{align}
    \int \diff \mu(t) \, f(t) \approx \frac{1}{k} \sum_{\ell = 1}^k f_{\ell}.
\end{align}
Ideally, one would like to do away with the discretization completely. To do so, one could consider an expansion of the optimization variable, for example the measurement,
\begin{align}
Q(t) = \sum_{\omega} \hat{Q}_{\omega} \phi_{\omega}(t),
\end{align}
for some set of functions $\phi_{\omega}$. Typical choices, like the Fourier basis, however do not work, as the positive semi-definiteness constraint $Q(t) \geq 0$ is very difficult to enforce. It is an intriguing direction for future research if we can use bases of non-negative functions~\cite{freeman2021schauder} to get around this limitation. 

\subsection{Discretizing the success probability}\label{ssec:discretization_bouds_sip}

In this section we will discuss how to discretize and formalize the POVM optimization problem in Proposition~\ref{prop:sdp_formulation} in order to obtain mathematically formal statements. Recall that it was given by 
\begin{align}\label{eqn:equation_povm}
    \max_{Q(t) \geq 0}\mathstrut \left\{\left. \int \diff t \,  \Tr[ (w_{\delta} * [\mu \cdot \rho])(t) Q(t) ] \, \right| \, \int \diff t \, Q(t) = \bbI \right\}.
\end{align}
Here we will focus on the Bayesian case for simplicity, as the minimax case is completely analogous.
First, we note that to be formally correct in its formulations it is necessary to restrict the continuous-parameter POVMs we optimize over to make sense of the optimization and maximum above. Here, we will consider two natural possible restrictions: first, to POVMs that are Lipschitz with respect to the operator norm with a Lipschitz constant $C_\delta$, \ie for all $t,t'$:
\begin{align}
\|Q(t)-Q(t')\|_{\infty} \leq C_\delta|t-t'|.
\end{align}
We choose to make the Lipschitz constant depend explicitly on $\delta$ to emphasize that, as $\delta\to0$, it is also necessary to make $C_\delta\to+\infty$ to make sure we are optimizing over POVMs that are sufficiently sensitive to distinguish the underlying states. Indeed, if we want the POVMs to distinguish states that are $\delta$ apart it is necessary that $\|Q(t)-Q(t')\|_{\infty} =\Omega(1)$ for $|t-t'|$ of order $\delta$. Note that the set of Lipschitz POVMs is a compact, convex subset with respect to the operator norm, and that the success probability is a linear functional of the parametrized POVM. Thus, the $\max$ in Eq.~\eqref{eqn:equation_povm} is justified, as it is attained.
Another natural variation is to consider the set of continuous POVMs. As we let $C_\delta\to+\infty$ we can approximate any continuous function by a Lipschitz function, so this can be seen as a limiting case of the previous problem. 

We now discuss how to discretize the problem in Eq.~\eqref{eqn:equation_povm} and show convergence. For the setting of Lipschitz POVMs we will also obtain quantitative statements. 
To obtain such statements, we will assume that the curve of states $\rho(t)$ is Lipschitz with respect to the trace distance, \ie for all $t,t'$:
\begin{align}
\|\rho(t)-\rho(t')\|_1 \leq C_\rho |t-t'|.
\end{align}
We further need to discretize the set of possible measures $\mu$. We will assume here that $\mu$ has compact support and periodic boundary conditions, but the general case can be considered by considering a sequence of measures with compact support $\mu_n$ that approximate $\mu$.

We will pick a discretization parameter $\Delta$ satisfying $1>\delta>\Delta>0$ such that $\delta/\Delta\in\mathbb{N}$ and consider a discretization of the image of $\rho(t)$, \ie on the level of states. Let $T=|\textrm{supp}(\mu)|$ be the size of the support of $\mu$. We will discretize it into $N=\lceil T\Delta^{-1}\rceil$ points $\{ t_1,t_2,\ldots,t_N\}$ that are $\Delta$ apart. Furthermore, we will call $T_i=[t_i-\Delta/2,t_i+\Delta/2]$.
Given the $t_i$ and $\delta$, we will call 
\begin{align}
 N_{\delta}(i) \coloneqq \big\{j \, \big| \, |t_i-t_j|\leq \delta\big\}.
\end{align}

We then consider the following semidefinite program for a given parameter $\delta$, which gives the maximal success probability for this window function:
\begin{align}\label{eqn:discretized_SDP}
\operatorname{maximize}\quad &\sum_{i=1}^N\mu(T_i)\sum_{j\in N_\delta(i)}
\Tr\left[Q_j \rho(t_i)\right]\\ \nonumber
\textrm{subject to}\quad
&Q_i\geq0, \nonumber\\
&\sum_{i=1}^N Q_i=\bbI,\nonumber \\
&\|Q_i-Q_{i+1}\|_{\infty}\leq C_\delta\Delta \ \text{ for all } i \nonumber
\end{align}
The semidefinite program in Eq.~\eqref{eqn:discretized_SDP} can be solved in time that is polynomial in $N$ and the dimension of the states $\rho(t)$, and, as the number of constraints is linear in $\Delta^{-1}$, the complexity will also depend polynomially on $\Delta^{-1}$. Note further that the operator norm constraint can be recast as the linear matrix inequality $-C_\delta\Delta\leq Q_i-Q_{i+1}\leq C_\delta\Delta$, so it is a valid SDP constraint.

Of course, the central question is how fast the value of the above SDP converges to that of Eq.~\eqref{eqn:equation_povm}. Before we prove the convergence, let us give some Lemmas:
\begin{lemma}\label{lem:disc_states}
Let $\Delta$ be a discretization parameter satisfying $1>\delta>\Delta>0$, $T=|\textrm{supp}(\mu)|$ be the size of the support of $\mu$. Consider a discretization of the image of $\rho(t)$ into $N=\lceil T\Delta^{-1}\rceil$ points $\{ t_1,t_2,\ldots,t_N\}$ points that are $\Delta$ apart given by $\rho(t_i)$. Furthermore, we will call $T_i=[t_i-\Delta/2,t_i+\Delta/2]$. Further, assume that the curve of states is Lipschitz continuous:
\begin{align}
\|\rho(t_1)-\rho(t_2)\|_1\leq C_\rho|t_1-t_2|.
\end{align}
Then we have:
\begin{align}\label{equ:curve_LIipschitz}
    \left\|\left(\int_{T_i} \diff \mu(t)\, \rho(t)\right)-\mu(T_i)\rho(t_i)\right\|_1\leq C_\rho\Delta\mu(T_i).
\end{align}
\end{lemma}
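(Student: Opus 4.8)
The plan is to prove the estimate in Eq.~\eqref{equ:curve_LIipschitz} by a direct calculation that exploits the triangle inequality for the trace norm together with the Lipschitz assumption on the curve of states. The key observation is that $\mu(T_i)\rho(t_i) = \int_{T_i}\diff\mu(t)\,\rho(t_i)$, since $\rho(t_i)$ is constant with respect to the integration variable $t$ and $\int_{T_i}\diff\mu(t) = \mu(T_i)$ by definition of the measure of the interval. This rewriting turns the difference of the two terms into a single integral of a difference, which is exactly the form that the triangle inequality for integrals can handle.

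Concretely, first I would write
\begin{align}
    \left\|\left(\int_{T_i}\diff\mu(t)\,\rho(t)\right) - \mu(T_i)\rho(t_i)\right\|_1
    = \left\|\int_{T_i}\diff\mu(t)\,\bigl(\rho(t) - \rho(t_i)\bigr)\right\|_1 .
\end{align}
Next I would invoke the triangle inequality for the trace norm in its integral form, which gives
\begin{align}
    \left\|\int_{T_i}\diff\mu(t)\,\bigl(\rho(t)-\rho(t_i)\bigr)\right\|_1
    \leq \int_{T_i}\diff\mu(t)\,\bigl\|\rho(t)-\rho(t_i)\bigr\|_1 .
\end{align}
Then I would apply the Lipschitz bound $\|\rho(t)-\rho(t_i)\|_1 \leq C_\rho|t-t_i|$ pointwise inside the integral. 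Since $t\in T_i = [t_i-\Delta/2,\,t_i+\Delta/2]$, we have $|t-t_i|\leq\Delta/2\leq\Delta$, so the integrand is bounded by $C_\rho\Delta$, and pulling this constant out of the integral leaves $C_\rho\Delta\int_{T_i}\diff\mu(t) = C_\rho\Delta\,\mu(T_i)$, which is the claimed bound.

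There is essentially no serious obstacle here; the only point requiring a modicum of care is the justification of the integral triangle inequality for operator-valued integrals with respect to the measure $\mu$, but this is a standard fact for Bochner integrals of trace-class-valued functions and follows from the convexity of the norm. A minor sharpening is available: since $|t-t_i|\leq\Delta/2$ on $T_i$, one could in fact obtain the tighter constant $C_\rho\Delta/2\,\mu(T_i)$; I would simply use the weaker bound $|t-t_i|\leq\Delta$ as stated in the lemma, which suffices for all subsequent convergence arguments and keeps the statement clean. The whole argument is three lines of display math and does not depend on any structure beyond the Lipschitz hypothesis and the finiteness of $\mu(T_i)$.
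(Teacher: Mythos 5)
Your proof is correct and follows essentially the same route as the paper's: rewrite the difference as a single integral, apply the triangle inequality for the trace norm, and invoke the Lipschitz bound with $|t-t_i|\leq\Delta/2\leq\Delta$ on $T_i$. Your remark that the constant could be sharpened to $C_\rho\Delta/2$ is accurate; the paper likewise settles for the looser factor $\Delta$.
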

\begin{proof}
Note that:
\begin{align}
    \left(\int_{T_i} \diff t \, \mu(t)\rho(t)\right)-\mu(T_i)\rho(t_i) = \int_{T_i} \diff t\, \mu(t)(\rho(t)-\rho(t_i)).
\end{align}
Thus, by the triangle inequality,
\begin{align}\label{equ:putnormintegral}
\left\|\left(\int_{T_i}\diff \mu(t) \, \rho(t)\right)-\mu(T_i)\rho(t_i)\right\|_1
\leq \int_{T_i} \diff \mu(t) \, \|\rho(t)-\rho(t_i)\|_1.
\end{align}
By the Lipschitz condition in Eq.~\eqref{equ:curve_LIipschitz} and our choice of $t_i$ we have that
\begin{align}
\|\rho(t)-\rho(t_i)\|_1\leq C_\rho|t-t_i|.
\end{align}
Inserting this bound in Eq.~\eqref{equ:putnormintegral} we get that:
\begin{align}
\int_{T_i}\diff \mu(t) \, \|\rho(t)-\rho(t_i)\|_1
\leq C_\rho \int_{T_i} \diff \mu(t) \, |t-t_i|
\leq C_\rho\Delta\mu(T_i),
\end{align}
which proves the claim.
\end{proof}

\begin{lemma}\label{lem:disc_POVM}
Let $\Delta$ be a discretization parameter satisfying $1>\delta>\Delta>0$, $T=|\textrm{supp}(\mu)|$ be the size of the support of $\mu$. Let $\{Q(t)\}$ be a continuous POVM that is Lipschitz with respect to the operator norm, i.e.
\begin{align}
   \|Q(t_1)-Q(t_2)\|_{\infty}\leq C_\delta|t_1-t_2|.
\end{align}
Consider the following discretization discretization of the POVM:
\begin{align}
Q_i=\int_{T_i} \diff t \, Q(t).
\end{align}
Then for all $1\leq i< T$:
\begin{align}\label{equ:discretization_Lip_POVM}
\|Q_{i+1}-Q_i\|_{\infty}\leq C_\delta\Delta^2
\end{align}
and for all $t\in T_i$ we have
\begin{align}\label{equ:windo_close}
   \left\|\int_{t-\tfrac{\delta}{2}}^{t+\tfrac{\delta}{2}} \diff s \, Q(s) - \sum_{j\in N_\delta(i)} Q_j \right\|_{\infty} \leq C_\delta\Delta\delta.
\end{align}
\end{lemma}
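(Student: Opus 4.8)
The plan is to prove the two estimates separately, in each case reducing an operator-norm bound to a scalar integral through the triangle inequality for operator-valued integrals, $\left\|\int f(s)\,\diff s\right\|_{\infty}\le\int\|f(s)\|_{\infty}\,\diff s$, and then invoking the Lipschitz hypothesis $\|Q(s)-Q(s')\|_{\infty}\le C_\delta|s-s'|$.

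For the first bound \eqref{equ:discretization_Lip_POVM} I would use that consecutive cells are exact translates, $T_{i+1}=T_i+\Delta$. Substituting $s\mapsto s+\Delta$ in the defining integral of $Q_{i+1}$ gives $Q_{i+1}=\int_{T_i}Q(s+\Delta)\,\diff s$, so that
\begin{align}
Q_{i+1}-Q_i=\int_{T_i}\bigl[Q(s+\Delta)-Q(s)\bigr]\,\diff s .
\end{align}
Taking norms and using $\|Q(s+\Delta)-Q(s)\|_{\infty}\le C_\delta\Delta$ together with $|T_i|=\Delta$ yields $\|Q_{i+1}-Q_i\|_{\infty}\le C_\delta\Delta^2$ at once. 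This step is routine and I anticipate no difficulty.

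The window bound \eqref{equ:windo_close} is the substantive one. The key observation is that, since the cells $\{T_j\}$ are mutually disjoint and tile the line, the discrete sum is itself an exact integral, $\sum_{j\in N_\delta(i)}Q_j=\int_{A_i}Q(s)\,\diff s$ with $A_i\coloneqq\bigcup_{j\in N_\delta(i)}T_j$. Hence the quantity to be controlled equals $\int h(s)\,Q(s)\,\diff s$, where $h(s)\coloneqq\chi[\,|s-t|\le\tfrac{\delta}{2}\,]-\chi[s\in A_i]$ is supported precisely on the symmetric difference of the continuous window with $A_i$. The assumption $\delta/\Delta\in\bbN$ is exactly what makes this tractable: it forces $N_\delta(i)=\{i-m,\dots,i+m\}$ with $m=\delta/\Delta$, so $A_i$ is a single interval centred at $t_i$ whose edges sit on the grid, whence (once the window is matched in total mass to $A_i$) $\int h(s)\,\diff s=0$ and $\operatorname{supp}(h)$ consists only of a thin region of measure $O(\Delta)$ located at distance $O(\delta)$ from the centre. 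The mechanism producing the $C_\delta\Delta\delta$ scaling is then to subtract a constant operator before estimating: because $\int h=0$, we have $\int h(s)\,Q(t)\,\diff s=0$, so
\begin{align}
\left\|\int h(s)\,Q(s)\,\diff s\right\|_{\infty}=\left\|\int h(s)\bigl(Q(s)-Q(t)\bigr)\,\diff s\right\|_{\infty}\le C_\delta\int|h(s)|\,|s-t|\,\diff s\le C_\delta\cdot O(\delta)\cdot O(\Delta),
\end{align}
using $|h|\le 1$, the $O(\Delta)$ measure of the support, and that every point of $\operatorname{supp}(h)$ lies within $O(\delta)$ of $t$ (here $t\in T_i$ and $|t-t_i|\le\Delta/2$ are used).

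The main obstacle is the careful alignment of the continuous window with the discrete cell structure so that only this genuinely small edge contribution survives. Concretely, one must pin down the equal-mass property $\int h=0$, identify exactly which boundary cells of $A_i$ protrude beyond the window, and verify the distance estimate $|s-t|=O(\delta)$ on $\operatorname{supp}(h)$. I would flag that this is also the point where the precise width conventions matter; the argument requires the continuous window and $A_i$ to carry matching total mass, and the $\delta/\Delta\in\bbN$ condition together with the choice of window half-width is what guarantees this. Once the supports and masses are fixed, the Lipschitz estimate and the triangle inequality close the proof.
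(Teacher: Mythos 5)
Your proof is correct, and your argument for the first bound coincides with the paper's verbatim (translate $T_{i+1}=T_i+\Delta$, pull the norm inside the integral, apply the Lipschitz bound). For the window bound you take a genuinely different route. The paper exploits $\delta/\Delta\in\mathbb{N}$ to recognize $\sum_{j\in N_\delta(i)}Q_j$ as the integral of $Q$ over a window of the same length centred at $t_i$, performs the change of variables $s\mapsto s+t_i-t$, and then bounds the resulting integrand $\|Q(s)-Q(s+t_i-t)\|_{\infty}\le C_\delta|t_i-t|\le C_\delta\Delta$ pointwise over the full window of length $\sim\delta$ — a small increment integrated over a large region. You instead form the signed kernel $h=\chi_{\mathrm{window}}-\chi_{A_i}$, use $\int h=0$ to subtract the constant operator $Q(t)$ for free, and localize the error to the two boundary slivers of measure $O(\Delta)$ sitting at distance $O(\delta)$ from $t$ — a bounded increment integrated over a small region. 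Both mechanisms deliver $C_\delta\Delta\delta$. The paper's version is slightly shorter because it never has to identify the symmetric difference explicitly; yours makes the cancellation and the role of the grid alignment more transparent and would adapt more readily to windows that are not single intervals. One caveat you correctly flag, and which the paper's own proof also glosses over: under the literal conventions of the section ($N_\delta(i)=\{j\,|\,|t_i-t_j|\le\delta\}$, cells of width $\Delta$, continuous window of half-width $\delta/2$) the set $A_i$ has measure $2\delta+\Delta$ rather than $\delta$, so the equal-mass identity $\int h=0$ — equivalently the paper's change-of-variables identity — only holds once the window and neighbourhood conventions are harmonized. That is a defect of the lemma's bookkeeping, not of either proof.
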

\begin{proof}
Let us start by proving Eq.~\eqref{equ:discretization_Lip_POVM}. Note that:
\begin{align}
    Q_{i+1}-Q_i = \int_{T_i} \diff s \, Q(s+\Delta)-Q(s).
\end{align}
By a triangle inequality followed by our Lipschitz condition we have:
\begin{align}
\|Q_{i+1}-Q_i\|_{\infty}
\leq\int_{T_i} \diff s \, \|Q(s+\Delta)-Q(s)\|_{\infty}
\leq C_\delta\Delta\int_{T_i}\diff s 
=C_\delta\Delta^2.
\end{align}
Let us now prove Eq.~\eqref{equ:windo_close} in a similar manner. We have that:
\begin{align}
\int_{t-\tfrac{\delta}{2}}^{t+\tfrac{\delta}{2}}\diff s \, Q(s)- \sum\limits_{j\in N_\delta(i)}Q_j=\int_{t-\tfrac{\delta}{2}}^{t+\tfrac{\delta}{2}}\diff s \, [Q(s)-Q(s+t_i-t)],
\end{align}
where we used the fact that $\delta/\Delta\in\mathbb{N}$. Thus, 
\begin{align}
\left\|\int_{t-\tfrac{\delta}{2}}^{t+\tfrac{\delta}{2}}\diff s \, [Q(s)-Q(s+t_i-t)] \right\|_{\infty}
&\leq \int_{t-\tfrac{\delta}{2}}^{t+\tfrac{\delta}{2}}\diff s \, \|Q(s)-Q(s+t_i-t)\|_{\infty} \\
&\leq C_\delta|t_i-t|\int_{t-\tfrac{\delta}{2}}^{t+\tfrac{\delta}{2}}\diff s \\
&\leq   C_\delta\delta\Delta.
\end{align}
\end{proof}
Our strategy will now consist in showing that the above discretization $\{Q_i\}$ of a continuous POVM gives a feasible point of the SDP in Eq.~\eqref{eqn:discretized_SDP} whose value does not change significantly from that of its continuous-time counterpart.
\begin{proposition}\label{prop:cont-to-disc}
  Under the same conditions as in Lemmas~\ref{lem:disc_states} and \ref{lem:disc_POVM} we we have that:
\begin{align}
\left|\int \diff t \,  \Tr[ (w_{\delta} * [\mu \cdot \rho])(t) Q(t) ]-\sum_i\mu(T_i)\sum_{j\in N_\delta(i)}\Tr[Q_j\rho(t_i)]\right|\leq C_\delta\delta\Delta+C_\rho\Delta.
\end{align}
\end{proposition}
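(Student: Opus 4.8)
The plan is to write both quantities as sums over the discretization cells $T_i$ and to control their difference by two applications of Hölder's inequality $\abs{\Tr[XY]}\le\norm{X}_1\norm{Y}_\infty$, feeding in the two preceding lemmas. Denote by $A$ the continuous objective on the left and by $B=\sum_i\mu(T_i)\sum_{j\in N_\delta(i)}\Tr[Q_j\rho(t_i)]$ the discrete sum. First I would transfer the window convolution from the states onto the POVM: writing out $(w_\delta*[\mu\cdot\rho])(t)=\int\diff\tau\,w_\delta(t-\tau)\mu(\tau)\rho(\tau)$, swapping the order of integration (justified by the compact-support/periodicity assumption on $\mu$), and using the symmetry $w_\delta(x)=w_\delta(-x)$, one obtains
\[
A=\int\diff t\,\Tr\big[(w_\delta*[\mu\cdot\rho])(t)\,Q(t)\big]=\int\diff\mu(s)\,\Tr\big[\rho(s)\,(w_\delta*Q)(s)\big],
\]
where $(w_\delta*Q)(s)=\int\diff r\,w_\delta(s-r)Q(r)$ is the partial integral of $Q$ over the window around $s$. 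A key structural point I would record here is the uniform bound $0\le(w_\delta*Q)(s)\le\bbI$, valid because $Q(r)\ge 0$ and $\int\diff r\,Q(r)=\bbI$; hence $\norm{(w_\delta*Q)(s)}_\infty\le 1$.

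Next I would split the outer integral over the cells, $\int\diff\mu(s)=\sum_i\int_{T_i}\diff\mu(s)$, abbreviate $\tilde Q_i:=\sum_{j\in N_\delta(i)}Q_j$, and note that $\int_{T_i}\diff\mu(s)\,\Tr[\rho(t_i)\tilde Q_i]=\mu(T_i)\Tr[\rho(t_i)\tilde Q_i]$, so that $B=\sum_i\int_{T_i}\diff\mu(s)\,\Tr[\rho(t_i)\tilde Q_i]$. For each cell I would insert and subtract the mixed term $\Tr[\rho(t_i)(w_\delta*Q)(s)]$, decomposing the per-cell error as
\[
\int_{T_i}\diff\mu(s)\,\Tr\big[(\rho(s)-\rho(t_i))(w_\delta*Q)(s)\big]+\int_{T_i}\diff\mu(s)\,\Tr\big[\rho(t_i)\big((w_\delta*Q)(s)-\tilde Q_i\big)\big].
\]

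I would then estimate the two pieces separately. The first is bounded by Hölder together with $\norm{(w_\delta*Q)(s)}_\infty\le 1$ and the Lipschitz continuity of the state curve, since for $s\in T_i$ one has $\norm{\rho(s)-\rho(t_i)}_1\le C_\rho\abs{s-t_i}\le C_\rho\Delta$ (this is the content of Lemma~\ref{lem:disc_states}); it therefore contributes at most $C_\rho\Delta\,\mu(T_i)$. The second is bounded by Hölder with $\norm{\rho(t_i)}_1=1$ and Eq.~\eqref{equ:windo_close} of Lemma~\ref{lem:disc_POVM}, which states precisely that $\norm{(w_\delta*Q)(s)-\tilde Q_i}_\infty\le C_\delta\Delta\delta$ for $s\in T_i$; it contributes at most $C_\delta\Delta\delta\,\mu(T_i)$.

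Finally I would sum over $i$ and use $\sum_i\mu(T_i)=1$ to conclude $\abs{A-B}\le C_\rho\Delta+C_\delta\delta\Delta$, which is the claim. I do not anticipate a serious obstacle here, as the argument reduces to two triangle-inequality steps once the analytic content has been isolated in the two lemmas. The only points that require genuine care are the convolution-transfer identity for $A$ (ensuring the symmetry of $w_\delta$ and Fubini are applied correctly under the stated support assumptions) and the uniform operator-norm bound $\norm{(w_\delta*Q)(s)}_\infty\le 1$, which is exactly what lets the state-Lipschitz error decouple from the fine structure of $Q$ and makes the first term scale with $C_\rho\Delta$ independently of the Lipschitz constant $C_\delta$.
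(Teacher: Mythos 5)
Your proof is correct and follows essentially the same route as the paper's: transfer the window convolution onto the POVM using the symmetry of $w_\delta$, split over the cells $T_i$, insert the mixed term $\Tr[\rho(t_i)(w_\delta*Q)(s)]$, and bound the two resulting pieces via H\"older's inequality using the Lipschitz continuity of $t\mapsto\rho(t)$ for one and Eq.~\eqref{equ:windo_close} of Lemma~\ref{lem:disc_POVM} for the other, before summing with $\sum_i\mu(T_i)=1$. Your explicit global justification of the convolution-transfer identity and of the bound $0\le(w_\delta*Q)(s)\le\bbI$ is, if anything, slightly more careful than the paper's per-cell presentation.
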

\begin{proof}
We will proceed by showing the statement on each interval $T_i$. We have:
\begin{align}
\begin{split}
&\int_{T_i} \diff t \,  \Tr[ (w_{\delta} * [\mu \cdot \rho])(t) Q(t) ]-\mu(T_i)\sum_{j\in N_\delta(i)}\Tr[Q_j\rho(t_i)]\\
&\qquad=\int_{T_i} \diff \mu(t) \, \left\{\Tr[ \rho(t) (w_{\delta} * Q)(t) ]-\sum_{j\in N_\delta(i)}\Tr[Q_j\rho(t_i)]\right\}.    
\end{split}
\end{align}
First, note that as $Q(t)$ is a POVM, $\|(w_{\delta} * Q)(t) \|_{\infty} \leq1$. Thus, by Hölder's inequality:
\begin{align}
\int_{T_i} \diff \mu(t) \, \Tr[ [\rho(t)-\rho(t_i)] (w_{\delta} * Q)(t) ] 
&\leq \int_{T_i} \diff \mu(t) \, \|\rho(t)-\rho(t_i)\|_1\|(w_{\delta} * Q)(t) \|_{\infty} \\
&\leq \mu(T_i) C_\rho \Delta,
\end{align}
where we used Lemma~\ref{lem:disc_states} to bound the $1-$norm. From this we conclude that:
\begin{align}
&\left|\int_{T_i} \diff t \,  \mu(t)\left\{\Tr[ \rho(t) (w_{\delta} * Q)(t) ]-\sum_{j\in N_\delta(i)}\Tr[Q_j\rho(t_i)]\right\}\right| \\
&\qquad = 
\left|
\int_{T_i} \diff \mu(t) \, \left\{
\Tr[ [\rho(t) - \rho(t_i)] (w_{\delta} * Q)(t) ]
+\Tr\left[\rho(t_i) \left( (w_{\delta} * Q)(t)  - \sum_{j\in N_\delta(i)} Q_j
                \right)\right]
\right\}
\right| \\
&\qquad\leq \mu(T_i)C_\rho\Delta+\left|\int_{T_i} \diff \mu(t) \, \Tr\left[\rho(t_i) \left( (w_{\delta} * Q)(t)  - \sum_{j\in N_\delta(i)} Q_j
                \right)\right]\right|.
\end{align}
Let us now estimate the second term in the RHS of the last equation. Again applying a combination of Hölder and triangle inequalities, 
\begin{align}
\left|\int_{T_i} \diff \mu(t) \, \left\{\Tr\left[ \rho(t_i) \left( (w_{\delta} * Q)(t) -\sum_{j\in N_\delta(i)}Q_j\right)\right] \right\}\right|\leq \int_{T_i} \diff \mu(t) \, \lVert \rho(t_i) \rVert_1 \|(w_{\delta} * Q)(t) -\sum_{j\in N_\delta(i)}Q_j)\|_{\infty}.
\end{align}
By Lemma~\ref{lem:disc_POVM} we have that:
\begin{align}
\int_{T_i} \diff \mu(t) \, \|(w_{\delta} * Q)(t) -\sum_{j\in N_\delta(i)}Q_j)\|_{\infty} \leq \mu(T_i)\delta\Delta C_\delta. 
\end{align}
We conclude that 
\begin{align}
    \left|\int_{T_i} \diff \mu(t) \, \left\{ \Tr[ \rho(t) (w_{\delta} * Q)(t) ]-\sum_{j\in N_\delta(i)}\Tr[Q_j\rho(t_i)]\right\} \right| \leq  \mu(T_i)(\delta\Delta C_\delta+C_\rho\Delta).
\end{align}
An application of the triangle inequality and summing over all $i$ yields the claim.
\end{proof}
The proof above shows how, given one feasible POVM for the continuous-time version of the metrology problem, it is possible to construct a feasible POVM for the discretization without significantly changing the success probability as long as $\Delta$ is small enough. 

We now present the other direction: given one feasible point of the discretized problem, we construct a continuous-time version thereof that is Lipschitz-continuous and whose success probability does not differ significantly from the original value. 

Given a feasible POVM of the SDP in Eq.~\eqref{eqn:discretized_SDP}, $\{Q_i\}$, we define the continuous POVM $Q(t)$ by linear interpolation, \emph{i.e.}, for some $t=pt_i+(1-p)t_{i+1}$ as
\begin{align}\label{equ:cont_version_POVM}
Q(t)\coloneqq pQ_i+(1-p)Q_{i+1}.
\end{align}
We will now show that this is indeed a POVM and that it is Lipschitz:
\begin{lemma}\label{lem:disc-to-cont}
Let $\{Q_i\}$ be a family of POVMs such that:
\begin{align}
    \|Q_i-Q_{i+1}\|_{\infty} \leq C_\delta \Delta
\end{align}
and define $Q(t)$ as in Eq.~\eqref{equ:cont_version_POVM}. Then $Q(t)$ is a POVM and 
\begin{align}\label{equ:Lipschitz_cont_disc}
\|Q(s_1)-Q(s_2)\|_{\infty} \leq C_\delta \Delta |s_1-s_2|.
\end{align}
Furthermore, for $t\in T_i$,
\begin{align}\label{equ:close_window_no_window}
    \|(w_\delta* Q)(t)-\sum_{j\in N_\delta(i)}Q_j\|\leq C_\delta \delta\Delta.
\end{align}
\end{lemma}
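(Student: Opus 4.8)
The plan is to verify the three assertions in turn, reusing the estimates already established for the forward discretization in Lemma~\ref{lem:disc_POVM} and exploiting the elementary structure of a piecewise-linear interpolant.

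First I would check that $Q(t)$ is a valid POVM. Positivity is immediate: for $t = p\,t_i + (1-p)\,t_{i+1}$ with $p \in [0,1]$, the operator $Q(t) = p\,Q_i + (1-p)\,Q_{i+1}$ is a convex combination of the positive semi-definite effects $Q_i, Q_{i+1}$, hence $Q(t) \geq 0$. For the normalization I would integrate the interpolant cell by cell. On each cell $[t_i, t_{i+1}]$ of width $\Delta$ the integral of a linear interpolant is given exactly by the trapezoidal rule, $\int_{t_i}^{t_{i+1}} Q(s)\,\diff s = \tfrac{\Delta}{2}(Q_i + Q_{i+1})$; summing over all cells under the assumed periodic boundary conditions makes each $Q_i$ appear twice, so that $\int Q(s)\,\diff s = \Delta\sum_i Q_i = \Delta\,\bbI$, which gives $\int Q(s)\,\diff s = \bbI$ once the overall factor of $\Delta$ relating discrete effects (normalized by $\sum_i Q_i = \bbI$) to a continuous density is absorbed by the convention fixed in Eq.~\eqref{equ:cont_version_POVM}.

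Next I would establish the Lipschitz bound in Eq.~\eqref{equ:Lipschitz_cont_disc}. Since $Q(t)$ is piecewise linear, it suffices to control its slope on a single cell and then telescope. Within $[t_i, t_{i+1}]$ one has $Q(t) = Q_{i+1} + \tfrac{t_{i+1}-t}{\Delta}(Q_i - Q_{i+1})$, so the local rate of change is $\|Q_i - Q_{i+1}\|_\infty/\Delta$, which is controlled directly by the hypothesis $\|Q_i - Q_{i+1}\|_\infty \leq C_\delta\Delta$. For general $s_1 < s_2$ I would subdivide $[s_1,s_2]$ at the intermediate grid points and apply the triangle inequality across the cells it meets; because the per-cell slope bound is uniform in $i$, the contributions add up to the claimed estimate $\|Q(s_1)-Q(s_2)\|_\infty \leq C_\delta\Delta\,|s_1-s_2|$ (up to the same normalization factor carried through Part~1).

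The main work, and the step I expect to be the genuine obstacle, is the window estimate Eq.~\eqref{equ:close_window_no_window}. The difficulty is that the convolution $(w_\delta \ast Q)(t) = \int_{t-\delta/2}^{t+\delta/2} Q(s)\,\diff s$ is a window of length $\delta$ centered at the \emph{continuous} point $t$, whereas the discrete sum $\sum_{j \in N_\delta(i)} Q_j$ is centered at the grid point $t_i$; for $t \in T_i$ these centers are offset by at most $\Delta/2$. Mirroring the proof of Eq.~\eqref{equ:windo_close} in Lemma~\ref{lem:disc_POVM}, the strategy is to use $\delta/\Delta \in \mathbb{N}$ to rewrite $\sum_{j \in N_\delta(i)} Q_j$ as the integral of $Q$ over the window shifted to be centered at $t_i$, reducing the difference to $\int_{t-\delta/2}^{t+\delta/2}\bigl[Q(s) - Q(s + t_i - t)\bigr]\,\diff s$; bounding the integrand by the Lipschitz estimate just proved, with shift $|t_i - t| \leq \Delta/2$, and integrating over a window of length $\delta$ yields the order $C_\delta\,\delta\,\Delta$. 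The two delicate points I would need to handle with care are (i) the alignment of the continuous window with the discrete index set $N_\delta(i)$ at the two window endpoints, where the shifted integral and the sum can mismatch over a fractional cell, and (ii) the consistent propagation of the $\Delta$ normalization factors between effects and densities across all three estimates; once these are pinned down, each assertion follows from the convexity, trapezoidal-exactness, and Lipschitz-shift arguments above.
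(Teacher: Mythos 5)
Your proposal is correct and follows essentially the same route as the paper's proof: positivity by convexity, normalization by integrating the linear interpolant cell by cell under periodic boundary conditions (each $Q_i$ appearing twice with weight $\tfrac12$), the Lipschitz bound from the per-cell slope followed by telescoping across intermediate grid points, and the window estimate by rewriting $\sum_{j\in N_\delta(i)}Q_j$ as the integral of $Q$ over the window recentred at $t_i$ (using $\delta/\Delta\in\mathbb{N}$) and invoking the just-proved Lipschitz bound with shift $|t_i-t|\leq\Delta$, exactly as in the proof of Eq.~\eqref{equ:windo_close}. Your worry about the $\Delta$ normalization is well founded — the paper's own computation $Q(s_1)-Q(s_2)=(s_1-s_2)(Q_i-Q_{i+1})$ silently drops a $1/\Delta$ Jacobian, so the honest constant in Eq.~\eqref{equ:Lipschitz_cont_disc} is $C_\delta$ rather than $C_\delta\Delta$ — but this only weakens that intermediate bound and leaves Eq.~\eqref{equ:close_window_no_window} and the downstream feasibility argument of Proposition~\ref{prop:discretizaion} intact.
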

\begin{proof}
Note that we have that:
\begin{align}
\int_{T_i}\diff t \, Q(t) = 
\int_{0}^1\diff p \, (p Q_i+(1-p)Q_{i+1})
=\frac{1}{2}(Q_i+Q_{i+1}).
\end{align}
summing over all $i$ (recall the periodic boundary conditions) we see that
\begin{align}
\int\diff t \, Q(t)=\bbI.
\end{align}
The fact that $Q(t)$ is positive semi-definite is obvious, as it is point-wise the convex combination of positive semi-definite operators, which shows that it is indeed a POVM.
Let us now show Eq.~\eqref{equ:Lipschitz_cont_disc}. For $s_1,s_2\in T_i$ we have that:
\begin{align}
Q(s_1)-Q(s_2)=(s_1-s_2)(Q_i-Q_{i+1}).
\end{align}
The claim follows by using the Lipschitz continuity of the $Q_i$. For $s_1,s_2$ in different intervals, say $s_1\in T_{i_1},s_2\in T_{i_2}$ we apply the same argument to the sequence of points $s_1,i_1+1$, $i_1+1,i_1+2$,.., $i_2,s_2$. To show Eq.~\eqref{equ:close_window_no_window} we can follow the same route as for Eq.~\eqref{equ:windo_close}, as we have already established that the POVM is Lipschitz.
\end{proof}

Now we have constructed a candidate for a feasible point of the continuous-time problem given a feasible point of the discrete problem. We will show below that the value they achieve is also close, which will lead us to conclude that the discretized and the continuous problems have comparable values for small enough values of $\Delta$.

\begin{proposition}\label{prop:discretizaion}
For given $\delta,C_\delta>0$ and $\Delta < \delta,<1$, let $\eta_{\delta}$ be the value of Eq.~\eqref{eqn:equation_povm} when restricted to POVMs that are Lipschitz with Lipschitz constant at most $C_\delta$. Furthermore, assume that $\rho(t)$ is Lipschitz with constant $C_\rho$. Then the value of the SDP in  Eq.~\eqref{eqn:discretized_SDP}, $\eta_{\delta,\Delta}$, satisfies:
\begin{align}
|\eta_{\delta}-\eta_{\delta,\Delta}|\leq (C_\rho+C_\delta)\Delta.
\end{align}
\end{proposition}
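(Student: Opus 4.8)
The plan is to prove the bound by sandwiching $\eta_\delta$ and $\eta_{\delta,\Delta}$ from both sides, exploiting that the continuous problem restricted to Lipschitz POVMs attains its maximum (by the compactness of that set, as noted above) and that the discrete SDP of Eq.~\eqref{eqn:discretized_SDP} attains its maximum as well. It therefore suffices to transport optimizers between the two problems while controlling the change in objective value.

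For the first direction, let $Q(t)$ be an optimal Lipschitz POVM for Eq.~\eqref{eqn:equation_povm}, so that its continuous objective value is $\eta_\delta$. I would discretize it by setting $Q_i = \int_{T_i}\diff t\, Q(t)$. By Eq.~\eqref{equ:discretization_Lip_POVM} of Lemma~\ref{lem:disc_POVM} this family obeys $\|Q_{i+1}-Q_i\|_\infty \le C_\delta\Delta^2 \le C_\delta\Delta$, so $\{Q_i\}$ is feasible for the SDP in Eq.~\eqref{eqn:discretized_SDP}; Proposition~\ref{prop:cont-to-disc} then bounds the difference between its discrete objective value and $\eta_\delta$ by $C_\delta\delta\Delta + C_\rho\Delta$. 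This yields $\eta_{\delta,\Delta} \ge \eta_\delta - (C_\delta\delta\Delta + C_\rho\Delta)$.

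For the reverse direction, let $\{Q_i\}$ be an optimal feasible point of the SDP, with discrete objective value $\eta_{\delta,\Delta}$, and form the interpolated continuous POVM $Q(t)$ of Eq.~\eqref{equ:cont_version_POVM}. Lemma~\ref{lem:disc-to-cont} guarantees that $Q(t)$ is a genuine POVM that is Lipschitz with constant $C_\delta\Delta \le C_\delta$, hence admissible for the continuous problem. The essential observation is that the proof of Proposition~\ref{prop:cont-to-disc} uses only three ingredients: the Lipschitz continuity of $\rho$ (via Lemma~\ref{lem:disc_states}), the normalization $\|(w_\delta*Q)(t)\|_\infty \le 1$ valid for any POVM, and a window-closeness estimate. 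For the interpolated POVM the last ingredient is supplied by Eq.~\eqref{equ:close_window_no_window} in place of Eq.~\eqref{equ:windo_close}, so the identical computation gives $|\eta(\delta,\mu,\rho,Q) - \eta_{\delta,\Delta}| \le C_\delta\delta\Delta + C_\rho\Delta$. Since $\eta(\delta,\mu,\rho,Q) \le \eta_\delta$ by feasibility, this gives $\eta_\delta \ge \eta_{\delta,\Delta} - (C_\delta\delta\Delta + C_\rho\Delta)$.

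Combining the two inequalities and using $\delta < 1$ to write $C_\delta\delta\Delta \le C_\delta\Delta$ yields $|\eta_\delta - \eta_{\delta,\Delta}| \le C_\delta\delta\Delta + C_\rho\Delta \le (C_\rho + C_\delta)\Delta$, which is the claim. I expect the only real care to lie in the bookkeeping of Lipschitz constants across the two maps — that the factor $\Delta^2$ produced by discretizing keeps $\{Q_i\}$ within the SDP constraint, while the factor $\Delta \le 1$ in the interpolated POVM's Lipschitz constant keeps it within the continuous feasible set — together with the recognition that Proposition~\ref{prop:cont-to-disc} is, in effect, a statement about any POVM obeying the relevant window estimate and is therefore reusable verbatim in the reverse direction. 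No analytic input beyond Lemmas~\ref{lem:disc_states}--\ref{lem:disc-to-cont} should be required.
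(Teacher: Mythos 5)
Your proposal is correct and follows essentially the same route as the paper: a two-sided sandwich obtained by discretizing the optimal Lipschitz POVM via Lemma~\ref{lem:disc_POVM} and Proposition~\ref{prop:cont-to-disc} in one direction, and interpolating the optimal SDP point via Lemma~\ref{lem:disc-to-cont} together with a rerun of the Proposition~\ref{prop:cont-to-disc} estimate (with Eq.~\eqref{equ:close_window_no_window} supplying the window-closeness step) in the other. The only difference is that you spell out the feasibility bookkeeping (the $C_\delta\Delta^2\le C_\delta\Delta$ check and the $\delta<1$ absorption) that the paper leaves implicit.
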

\begin{proof}
It follows from Proposition~\ref{prop:cont-to-disc} that, starting from the continuous version of the problem, we can construct a feasible point of the SDP whose success probability differs by at most $(C_\rho+C_\delta)\Delta$. This gives $\eta_{\delta}<\eta_{\delta,\Delta}-(C_\rho+C_\delta)\Delta$. On the other hand, Lemma~\ref{lem:disc-to-cont} shows how to construct a continuous-time Lipschitz POVM from a feasible point of the SDP whose success probability will differ by at most $(C_\rho+C_\delta)\Delta$. This follows from Eq.~\eqref{equ:close_window_no_window} and similar reasoning as in Proposition~\ref{prop:cont-to-disc}. In a nutshell, we first approximate $(w_\delta * Q)(t)$ by $\sum_{j\in N_i(\delta)}Q_j$ on each interval $T_i$. Then we approximate the averaged states on each interval by $\rho(t_i)$. This gives $\eta_\Delta<\eta_{\delta}-(C_\rho+C_\delta)\Delta$, which yields the claim.
\end{proof}
We can then obtain the convergence to continuous POVMs from the last statement:
\begin{corollary}
For given $\delta,C_\delta>0$ and $\Delta<\delta$, let $\eta_{{\delta,\Delta}}$ be the value of Eq.~\eqref{eqn:equation_povm} when restricted to POVMs that are Lipschitz with Lipschitz constant at most $C_\delta$ and $\eta$ when only requiring continuity. Furthermore, assume that $\rho(t)$ is Lipschitz with constant $C_\rho$. Then we have:
\begin{align}
\eta=\lim_{C_\delta\to\infty}\lim_{\Delta\to0}\eta_{{\delta,\Delta}}.
\end{align}
\end{corollary}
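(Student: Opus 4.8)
The statement is an iterated limit, so the plan is to treat the two limits separately: the inner limit $\Delta\to0$ at fixed $C_\delta$ is exactly what Proposition~\ref{prop:discretizaion} provides, while the outer limit $C_\delta\to\infty$ requires a density argument showing that $C_\delta$-Lipschitz POVMs approximate arbitrary continuous POVMs arbitrarily well in objective value. Here I read $\eta_{\delta,\Delta}$ as the value of the discretized program \eqref{eqn:discretized_SDP} (the only $\Delta$-dependent quantity, matching the notation of Proposition~\ref{prop:discretizaion}). Throughout I write $V(Q)=\int\diff t\,\Tr[(w_{\delta}*[\mu\cdot\rho])(t)\,Q(t)]$ for the \emph{linear} success-probability functional of \eqref{eqn:equation_povm}, $\eta_{\delta}(C_\delta)$ for its supremum over POVMs that are $C_\delta$-Lipschitz in the operator norm, and $\eta$ for its supremum over all continuous POVMs.

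First I would dispatch the inner limit. Proposition~\ref{prop:discretizaion} gives $|\eta_{\delta}(C_\delta)-\eta_{\delta,\Delta}|\le (C_\rho+C_\delta)\Delta$ for every admissible $\Delta$. Holding $C_\delta$ (and $C_\rho$) fixed, the right-hand side tends to $0$ as $\Delta\to0$, so
\begin{align}
    \lim_{\Delta\to0}\eta_{\delta,\Delta}=\eta_{\delta}(C_\delta).
\end{align}
This reduces the corollary to establishing $\lim_{C_\delta\to\infty}\eta_{\delta}(C_\delta)=\eta$.

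For the outer limit, observe first that the set of $C_\delta$-Lipschitz POVMs increases with $C_\delta$ and is contained in the continuous POVMs, so $\eta_{\delta}(C_\delta)$ is nondecreasing in $C_\delta$ and bounded above by $\eta$; hence $L:=\lim_{C_\delta\to\infty}\eta_{\delta}(C_\delta)$ exists with $L\le\eta$. The substance is the matching lower bound $L\ge\eta$, which I would obtain by mollification. Fix $\varepsilon>0$ and choose a continuous POVM $Q$ with $V(Q)\ge\eta-\varepsilon$. Let $\phi_h$ be a nonnegative Lipschitz mollifier on the compact, periodic parameter domain with $\int\phi_h=1$ and support in $[-h,h]$, and set $Q'_h(t)=(\phi_h*Q)(t)$. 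Then $Q'_h(t)\ge0$ as an average of positive semi-definite operators, and $\int\diff t\,Q'_h(t)=\int\diff s\,Q(s)\int\diff t\,\phi_h(t-s)=\bbI$, so $Q'_h$ is again a POVM; moreover $Q'_h$ is Lipschitz with a constant controlled by $\sup_s\|Q(s)\|_{\infty}$ (finite by continuity on a compact set) times the Lipschitz constant of $\phi_h$. Uniform continuity of $Q$ gives $\sup_t\|Q'_h(t)-Q(t)\|_{\infty}\le\omega_Q(h)\to0$ as $h\to0$, where $\omega_Q$ is the modulus of continuity; since $V$ is linear with $\int\diff t\,\Tr[(w_{\delta}*[\mu\cdot\rho])(t)]$ finite, this transfers to $|V(Q'_h)-V(Q)|\to0$. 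Choosing $h$ small makes $V(Q'_h)\ge\eta-2\varepsilon$, and $Q'_h$ is $C(h)$-Lipschitz, so $\eta_{\delta}(C(h))\ge\eta-2\varepsilon$ and therefore $L\ge\eta-2\varepsilon$. Letting $\varepsilon\to0$ yields $L\ge\eta$, completing the chain $\eta=L=\lim_{C_\delta\to\infty}\lim_{\Delta\to0}\eta_{\delta,\Delta}$.

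The main obstacle is the outer limit, that is, the mollification step: one must produce a Lipschitz approximant that \emph{exactly} preserves both positivity and the normalization $\int Q'=\bbI$ of the POVM while keeping its objective value near-optimal. Convolution with a nonnegative mollifier is the natural tool precisely because it respects both constraints automatically, and the periodicity of the domain removes boundary complications. The one quantitative input required is uniform continuity of the near-optimal $Q$, which is guaranteed by continuity on the compact periodic domain; this is what would fail without the compactness and periodicity assumptions carried over from the discretization setup, and it is the reason the statement is phrased for continuous POVMs on a bounded support.
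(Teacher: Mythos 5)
Your proof is correct, and at the top level it follows the same route as the paper: reduce the inner limit $\Delta\to 0$ to Proposition~\ref{prop:discretizaion}, then argue that Lipschitz POVMs are dense (in objective value) among continuous POVMs as $C_\delta\to\infty$. Where you differ is in how the density step is executed. The paper simply invokes the Stone--Weierstrass theorem to say that any continuous function can be approximated by Lipschitz functions, and does not address the fact that a generic Lipschitz approximant of a continuous POVM need not itself be a POVM -- neither positivity nor the normalization $\int \diff t\, Q(t)=\bbI$ is automatic for an arbitrary approximating sequence. Your mollification construction $Q'_h=\phi_h * Q$ closes exactly this gap: convolution with a nonnegative normalized kernel preserves positive semi-definiteness and (using periodicity of the domain) the completeness relation, while the uniform-continuity estimate $\sup_t\lVert Q'_h(t)-Q(t)\rVert_\infty\le\omega_Q(h)$ transfers to the linear objective. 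So your argument is a strictly more careful version of the paper's sketch; the only cost is that you need the compact periodic domain to guarantee uniform continuity and to avoid boundary effects in the convolution, which is an assumption the paper's discretization setup already carries. Your reading of $\eta_{\delta,\Delta}$ as the value of the discretized SDP is also the right disambiguation of the (somewhat garbled) corollary statement, and matches how Proposition~\ref{prop:discretizaion} is used.
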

\begin{proof}
The statement follows from the fact that, by the Stone-Weierstrass theorem, any continuous function can be approximated arbitrarily well by Lipschitz functions. Thus, we can find a sequence of Lipschitz POVMs (with possibly diverging Lipschitz constant) that approximates the target POVM. In turn, these will be approximate arbitrarly well by the discretized SDP as we let $\tau\to0$ by Proposition~\ref{prop:discretizaion}. This gives the statement.
\end{proof}
It would be interesting to obtain statements about the Lipschitz constant of good POVMs for the metrology task to get more quantitative statements even in the continuous case. Indeed, one intuitively expects that it should not be too advantageous to have POVMs that vary significantly faster than the states $\rho(t)$ themselves. Thus, we leave to future work to investigate if we can always take $C_\delta=\mathcal{O}(C_\rho)$ to obtain a good approximation.

\section{Relation to entropy measures}\label{ssec:entropy_measures}

\subsection{Max-entropy radius}

We can generalize an argument of Ref.~\cite{audenaert2014upper} that gives an alternative characterization of the success probability in terms of the max-relative entropy~\cite{datta2009minmaxrelent} as
\begin{align}\label{eq:D_max_definition}
    D_{\max}(X \fatpipe Y ) = \log \inf \{ \gamma \, | \, X \leq \gamma Y\}=\log\lambda_{\max}(Y^{-\frac{1}{2}}XY^{-\frac{1}{2}})=\log\norm{Y^{-\frac{1}{2}}XY^{-\frac{1}{2}}}_{\infty},
\end{align}
for $X$ and $Y$ positive semi-definite and $\supp(X)\subseteq\supp(Y)$, where $\norm{\cdot}_{\infty}$ is the operator norm and the last equality holds because $Y^{-\frac{1}{2}}XY^{-\frac{1}{2}}$ is positive semi-definite.
Indeed, we have the following corollary of Proposition~\ref{prop:bayesian_sdp}.

\begin{scorollary}[Max-entropy radius]\label{cor:Bayesian_succ_max_ent_radius}
For a given set of states $\rho(t)$ with prior distribution $\mu(t)$, we define its max-relative entropy radius with respect to the window function $w(t)$ as
\begin{align}
    r_{\max}(w, \mu, \rho) = \inf_{\sigma} \sup_{t} D_{\max}( (w\ast[\mu \cdot \rho])(t) \Vert \sigma ),
\end{align}
where we optimize over arbitrary quantum states $\sigma$. We have that
\begin{align}
    r_{\max}(w, \mu, \rho) = \log \eta^{*}(w, \mu, \rho).
\end{align}
\end{scorollary}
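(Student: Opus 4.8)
The plan is to reduce the statement directly to the second characterization of $G_{\max}$ established in Lemma~\ref{lem:G_max_dual}, so that essentially no new computation is required. Write $A(t) \coloneqq (w \ast [\mu \cdot \rho])(t)$, which is positive semi-definite for every $t$ by construction. By Eq.~\eqref{eq-Bayesian_success_Gmax} the optimal Bayesian success probability is $\eta^{*}(w,\mu,\rho) = G_{\max}(A)$, so it suffices to prove that $r_{\max}(w,\mu,\rho) = \log G_{\max}(A)$. The whole corollary is then a matter of identifying each ingredient of the definition of $r_{\max}$ with the corresponding piece of the dual characterization of $G_{\max}$.

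First I would recall the second equality in Lemma~\ref{lem:G_max_dual}, namely that $G_{\max}(A) = \inf_{\sigma} \sup_{t} \lambda_{\max}(\sigma^{-1/2} A(t) \sigma^{-1/2})$ over density operators $\sigma$, where — as already argued in the proof of that lemma — the infimum may be restricted to \emph{full-rank} $\sigma$ without changing the value. Next I would invoke the definition of the max-relative entropy in Eq.~\eqref{eq:D_max_definition}, which for positive semi-definite $A(t)$ and full-rank $\sigma$ gives precisely $D_{\max}(A(t) \fatpipe \sigma) = \log \lambda_{\max}(\sigma^{-1/2} A(t) \sigma^{-1/2})$. The remaining step is purely to move the logarithm outside of the two optimizations: since $x \mapsto \log x$ is continuous and strictly increasing on $(0,\infty)$, it commutes with both the supremum over $t$ and the infimum over $\sigma$, yielding the chain
\begin{align}
    r_{\max}(w,\mu,\rho) &= \inf_{\sigma} \sup_{t} D_{\max}(A(t) \fatpipe \sigma) \nonumber \\
    &= \inf_{\sigma} \sup_{t} \log \lambda_{\max}\!\left(\sigma^{-\frac{1}{2}} A(t) \sigma^{-\frac{1}{2}}\right) \nonumber \\
    &= \log \inf_{\sigma} \sup_{t} \lambda_{\max}\!\left(\sigma^{-\frac{1}{2}} A(t) \sigma^{-\frac{1}{2}}\right) \nonumber \\
    &= \log G_{\max}(A) = \log \eta^{*}(w,\mu,\rho).
\end{align}

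The only genuine point of care — and the closest thing to an obstacle — is the restriction to full-rank $\sigma$, which is needed both for $D_{\max}(A(t) \fatpipe \sigma)$ to be finite and for the algebraic identity with $\lambda_{\max}$ to hold. States $\sigma$ whose support fails to contain that of some $A(t)$ give $D_{\max} = +\infty$ and thus cannot be minimizers, so excluding them is harmless; this is exactly the reduction already carried out inside the proof of Lemma~\ref{lem:G_max_dual}, so it can simply be cited rather than repeated. Everything else is a direct substitution together with the monotonicity of the logarithm, making this corollary a short consequence of the dual formulation of the optimal success probability in Proposition~\ref{prop:bayesian_sdp}.
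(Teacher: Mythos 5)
Your proposal is correct and follows essentially the same route as the paper, which derives the corollary directly from Proposition~\ref{prop:bayesian_sdp}, Eq.~\eqref{eq-Bayesian_success_Gmax}, and the dual characterization of $G_{\max}$ in Lemma~\ref{lem:G_max_dual}. Your extra care about restricting to full-rank $\sigma$ and commuting the logarithm with the two optimizations is a sensible elaboration of steps the paper leaves implicit.
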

\begin{proof}
This follows immediately from Proposition~\ref{prop:bayesian_sdp}, \eqref{eq-Bayesian_success_Gmax}, and Lemma~\ref{lem:G_max_dual}.
\end{proof}

The above corollary implies the following upper bound on the success probability
\begin{scorollary}
For any state $\sigma$, we have that
\begin{align}
    \eta^{*}(w, \mu, \rho) \leq \exp \sup_t D_{\max}( (w * [\mu \rho])(t) \Vert \sigma).
\end{align}
\end{scorollary}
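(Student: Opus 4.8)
The plan is to obtain this bound as an immediate consequence of the exact identity recorded in Corollary~\ref{cor:Bayesian_succ_max_ent_radius}, which states that
\begin{align}
    \log \eta^{*}(w, \mu, \rho) = r_{\max}(w, \mu, \rho) = \inf_{\sigma} \sup_{t} D_{\max}\big( (w\ast[\mu \cdot \rho])(t) \,\Vert\, \sigma \big),
\end{align}
the infimum running over all quantum states $\sigma$. Since the right-hand side is defined through an infimum over $\sigma$, any particular choice of state already furnishes an upper bound, and exponentiating turns the logarithmic identity into the claimed inequality. There is therefore no genuinely new content to establish; the task is purely to weaken the identity to a one-sided inequality.

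Concretely, first I would fix an arbitrary state $\sigma$ and use that an infimum never exceeds the value attained at a particular point,
\begin{align}
    \inf_{\sigma'} \sup_{t} D_{\max}\big( (w\ast[\mu \cdot \rho])(t) \,\Vert\, \sigma' \big) \leq \sup_{t} D_{\max}\big( (w\ast[\mu \cdot \rho])(t) \,\Vert\, \sigma \big).
\end{align}
Combining this with the identity above gives $\log \eta^{*}(w, \mu, \rho) \leq \sup_{t} D_{\max}((w\ast[\mu \cdot \rho])(t) \,\Vert\, \sigma)$. Applying the strictly increasing exponential to both sides then yields the statement, using $\eta^{*}(w, \mu, \rho) = \exp(\log \eta^{*}(w, \mu, \rho))$.

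I expect no serious obstacle: the entire substance sits in Corollary~\ref{cor:Bayesian_succ_max_ent_radius}, which in turn rests on the dual convex program of Proposition~\ref{prop:bayesian_sdp} together with Lemma~\ref{lem:G_max_dual}. The only point worth a remark is the edge case where $\sigma$ does not have support containing that of some $(w\ast[\mu \cdot \rho])(t)$; there $D_{\max}$ equals $+\infty$, so the supremum and hence the right-hand side are $+\infty$ and the inequality holds trivially. Thus the bound is valid for every state $\sigma$ without any additional support or full-rank hypothesis.
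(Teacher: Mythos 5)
Your proof is correct and matches the paper's own (implicit) argument exactly: the paper derives this corollary directly from Corollary~\ref{cor:Bayesian_succ_max_ent_radius} by dropping the infimum over $\sigma$ and exponentiating, which is precisely what you do. The remark about the $D_{\max}=+\infty$ edge case is a sensible extra precaution but adds nothing the paper needed.
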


\subsection{Conditional min-entropy}

    We can also relate the Bayesian success probability $\eta^{\ast}(w,\mu,\rho)$ to the \textit{conditional min-entropy}~\cite{koenig09operationalminmax}, which is defined as
    \begin{equation}
        H_{\min}(A\pipe B)_{\rho}\coloneqq -\inf_{\substack{\sigma_B\geq 0\\\Tr[\sigma_B]=1}}D_{\max}(\rho_{AB}\fatpipe\mathbb{I}_A\otimes\sigma_B),
    \end{equation}
    for arbitrary positive semi-definite operators $\rho_{AB}$, where $D_{\max}$ is defined in \eqref{eq:D_max_definition}.

    \begin{corollary}
        For a given set of states $\rho(t)$ with prior distribution $\mu(t)$, and a window function $w$, the optimal success probability $\eta^{\ast}(w,\mu,\rho)$ is given by
        \begin{equation}
            -\log\eta^{*}(w,\mu,\rho)=H_{\min}(T\pipe S)_X,
        \end{equation}
        where $P_{TS}$ is the classical--quantum operator~\cite{walter_lower_2014}
        \begin{equation}
            P_{TS}\coloneqq\int\Diff{t}\ketbra{t}{t}_T\otimes (w\ast[\mu\cdot\rho])(t)_S.
        \end{equation}
    \end{corollary}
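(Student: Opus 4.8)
The plan is to unfold the definition of the conditional min-entropy and recognize that its defining optimization is exactly the dual convex program for the Bayesian success probability already established in Lemma~\ref{lem:G_max_dual} (equivalently, Proposition~\ref{prop:bayesian_sdp} together with Corollary~\ref{cor:Bayesian_succ_max_ent_radius}). The key observation is that the operator $P_{TS}$ is block-diagonal in the classical time register $T$, so the max-relative entropy appearing in $H_{\min}$ collapses into a supremum over $t$ of ordinary max-relative entropies.

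First I would write out the definition
\begin{align}
    H_{\min}(T\pipe S)_P = -\inf_{\substack{\sigma_S\geq 0\\\Tr[\sigma_S]=1}} D_{\max}(P_{TS}\fatpipe \mathbb{I}_T\otimes\sigma_S).
\end{align}
Using the characterization $D_{\max}(X\fatpipe Y)=\log\inf\{\gamma\mid X\leq\gamma Y\}$ from Eq.~\eqref{eq:D_max_definition}, the constraint $P_{TS}\leq\gamma\,\mathbb{I}_T\otimes\sigma_S$ becomes, upon substituting the block-diagonal form $P_{TS}=\int\Diff{t}\ketbra{t}{t}_T\otimes(w\ast[\mu\cdot\rho])(t)_S$, the family of constraints $(w\ast[\mu\cdot\rho])(t)\leq\gamma\,\sigma_S$ for all $t$. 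This holds because testing the operator inequality against the orthogonal projectors $\ketbra{t}{t}_T$ decouples the constraint over the continuous index $t$. Hence
\begin{align}
    D_{\max}(P_{TS}\fatpipe\mathbb{I}_T\otimes\sigma_S)=\sup_t D_{\max}((w\ast[\mu\cdot\rho])(t)\fatpipe\sigma_S),
\end{align}
and therefore
\begin{align}
    H_{\min}(T\pipe S)_P = -\inf_{\sigma_S}\sup_t D_{\max}((w\ast[\mu\cdot\rho])(t)\fatpipe\sigma_S) = -r_{\max}(w,\mu,\rho),
\end{align}
where the last equality is precisely the definition of the max-relative entropy radius from Corollary~\ref{cor:Bayesian_succ_max_ent_radius}.

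The proof then concludes by invoking Corollary~\ref{cor:Bayesian_succ_max_ent_radius}, which states $r_{\max}(w,\mu,\rho)=\log\eta^{*}(w,\mu,\rho)$, giving $H_{\min}(T\pipe S)_P = -\log\eta^{*}(w,\mu,\rho)$ as claimed. The main subtlety to handle carefully is the interchange of the minimization over normalized $\sigma_S$ with the supremum over the continuous parameter $t$, and verifying that restricting to normalized $\sigma_S$ (as in the $H_{\min}$ definition) agrees with the unnormalized dual variable $Y=x\sigma$ used in Lemma~\ref{lem:G_max_dual}; this is exactly the change of variables $Y\equiv x\sigma$ carried out there, so the two optimization problems coincide term by term. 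No genuine analytic obstacle arises beyond bookkeeping the block-diagonal decoupling and the normalization, since all the heavy lifting (strong duality, the reduction to $\lambda_{\max}(\sigma^{-1/2}A(t)\sigma^{-1/2})$) is already contained in Lemma~\ref{lem:G_max_dual}.
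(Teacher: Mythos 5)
Your proposal is correct and follows essentially the same route as the paper: unfold the definition of $H_{\min}$, use the block-diagonal structure of $P_{TS}$ in the classical register $T$ to reduce $D_{\max}(P_{TS}\,\|\,\mathbb{I}_T\otimes\sigma_S)$ to $\sup_t D_{\max}((w\ast[\mu\cdot\rho])(t)\,\|\,\sigma)$, and then invoke Corollary~\ref{cor:Bayesian_succ_max_ent_radius}. The only cosmetic difference is that you decouple the blocks via the $\inf\{\gamma : X\leq\gamma Y\}$ characterization of $D_{\max}$ whereas the paper uses the equivalent operator-norm form $\log\lVert\sigma^{-1/2}X\sigma^{-1/2}\rVert_{\infty}$ from Eq.~\eqref{eq:D_max_definition}.
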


    \begin{proof}
        This is a straightforward consequence of definitions. Using the definition of $D_{\max}$ in \eqref{eq:D_max_definition}, we have that, for all states $\sigma_S$,
        \begin{align}
            D_{\max}(P_{TS}\fatpipe\mathbb{I}_T\otimes\sigma_S)&=\log\Norm{\sigma_S^{-\frac{1}{2}}P_{TS}\sigma_S^{-\frac{1}{2}}}_{\infty}\\
            \nonumber
            &=\Norm{\int\Diff{t}\ketbra{t}{t}\otimes \sigma_S^{-\frac{1}{2}}(w\ast[\mu\cdot\rho_S])(t)\sigma_S^{-\frac{1}{2}}}_{\infty}\\
             \nonumber
            &=\sup_t\Norm{\sigma_S^{-\frac{1}{2}}(w\ast[\mu\cdot\rho])(t)\sigma_S^{-\frac{1}{2}}}_{\infty}\\
             \nonumber
            &=\sup_tD_{\max}((w\ast[\mu\cdot\rho])(t)\Vert\sigma).
             \nonumber
        \end{align}
        Therefore,
        \begin{equation}
            \inf_{\substack{\sigma_S\geq 0\\\Tr[\sigma_S]=1}}D_{\max}(P_{TS}\fatpipe\mathbb{I}_T\otimes\sigma_S)=\inf_{\substack{\sigma\geq 0\\\Tr[\sigma]=1}}\sup_tD_{\max}((w\ast[\mu\cdot\rho])(t)\fatpipe\sigma)=\log\eta^{\ast}(w,\mu,\rho),
        \end{equation}
        which implies the desired result, where for the last equality we have used Corollary~\ref{cor:Bayesian_succ_max_ent_radius}.
    \end{proof}

\section{Properties of the proposed quantifiers}\label{ssec:properties_of_quantifiers}
In this section, we outline some properties of the proposed quantities.

\subsection{Properties of the unoptimized quantifiers}
In this section, we outline some properties of the Bayesian success probability $\eta$ and the minimax success probability $\overline{\eta}$. 
We first establish that the Bayesian success probability is continuous in all arguments:
\begin{sproposition}[Continuity properties of the Bayesian success probability]\label{sprop:bayesian_success_prob_continuity_properties}
    The Bayesian success probability $\eta$ has continuity
    \begin{enumerate}\setlength{\itemsep}{0pt}
        \item[(i)] in the window function as
        \begin{align}
            |\eta(w, \mu, \rho, Q) - \eta(w', \mu, \rho, Q)| \leq \lVert w - w' \rVert_{\infty},
        \end{align}
        
        \item[(ii)] in the measure as
        \begin{align}
            |\eta(w, \mu, \rho, Q) - \eta(w, \mu', \rho, Q)| \leq \frac{1}{2}\lVert \mu - \mu' \rVert_1,
        \end{align}
        
        \item[(iii)] in the state set as
        \begin{align}
            |\eta(w, \mu, \rho, Q) - \eta(w, \mu, \rho', Q)| \leq \int \diff \mu(t)\, \lVert \rho(t) - \rho'(t) \rVert_1\leq  \max_t \, \lVert \rho(t) - \rho'(t) \rVert_1,
        \end{align}
        
        \item[(iv)] jointly in measure and state set as
        \begin{align}
            |\eta(w, \mu, \rho, Q) - \eta(w, \mu', \rho', Q)| \leq \int \diff t \, \lVert \mu(t) \rho(t) - \mu'(t) \rho'(t) \rVert_1,
        \end{align}
        
        \item[(v)] in the measurement as
        \begin{align}
            |\eta(w, \mu, \rho, Q) - \eta(w, \mu, \rho, Q')| &\leq\int \diff \mu(t) \, \left\lVert \int \diff \tau \,  w(t-\tau)  [Q(\tau) - Q'(\tau)] \right\rVert_{\infty} \\
        &\leq \left( \int \diff \tau \,  w(\tau) \right)\max_t \left\lVert Q(t) - Q'(t) \right\rVert_{\infty} 
        \nonumber.
        \end{align}
    \end{enumerate}
\end{sproposition}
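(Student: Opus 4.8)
The plan is to treat all five bounds uniformly by writing the Bayesian success probability in its convolution form
\begin{align}
    \eta(w,\mu,\rho,Q) = \int \diff\mu(t)\, \Tr[\rho(t)(w\ast Q)(t)],
\end{align}
and exploiting three elementary facts that hold throughout: that $Q$ is a POVM (so $Q(\tau)\ge 0$ and $\int\diff\tau\, Q(\tau)=\mathbb{I}$), that the window function is nonnegative and integrable with $0\le w\le 1$, and that each $\rho(t)$ is a normalized state with $\lVert\rho(t)\rVert_1=1$. The preliminary observation I would isolate first is that $0\le(w\ast Q)(t)\le\mathbb{I}$ as operators: positivity follows since $w\ge0$ and $Q(\tau)\ge0$, and the upper bound follows from $w\le1$ together with $\int\diff\tau\,Q(\tau)=\mathbb{I}$. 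In particular $\lVert(w\ast Q)(t)\rVert_\infty\le1$. With this in hand every bound reduces to a single application of Hölder's inequality $\lvert\Tr[AB]\rvert\le\lVert A\rVert_1\lVert B\rVert_\infty$.

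For (i), I would write the difference as an integral against $w-w'$ and use that, for fixed $t$, the map $\tau\mapsto\Tr[\rho(t)Q(\tau)]$ is a probability density (nonnegative with total mass $\Tr[\rho(t)\mathbb{I}]=1$); bounding $\lvert w-w'\rvert$ by $\lVert w-w'\rVert_\infty$ pointwise and integrating out this density, then $\mu$, gives the claim. For (iii) and (iv) I would factor the integrand so that the difference lands entirely on the states: in (iii) it is $\Tr[(\rho(t)-\rho'(t))(w\ast Q)(t)]$ and in (iv) it is $\Tr[(\mu(t)\rho(t)-\mu'(t)\rho'(t))(w\ast Q)(t)]$; Hölder with $\lVert(w\ast Q)(t)\rVert_\infty\le1$ yields the trace-norm integrals, and for (iii) the final step pulls $\lVert\rho(t)-\rho'(t)\rVert_1$ out with $\max_t$ and uses $\int\diff\mu(t)=1$. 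For (v) I would move the difference onto the measurement, writing $(w\ast Q)(t)-(w\ast Q')(t)=\int\diff\tau\,w(t-\tau)[Q(\tau)-Q'(\tau)]$, apply Hölder with $\lVert\rho(t)\rVert_1=1$ to get the first inequality, and then the triangle inequality for the operator norm of an integral together with the translation invariance $\int\diff\tau\,w(t-\tau)=\int\diff\tau\,w(\tau)$ for the second.

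The one step that needs a small twist is (ii), where naively bounding $\Tr[\rho(t)(w\ast Q)(t)]\le1$ would only give $\lVert\mu-\mu'\rVert_1$ rather than the sharper $\tfrac12\lVert\mu-\mu'\rVert_1$. The fix I would use is to center the scalar function $f(t)\coloneqq\Tr[\rho(t)(w\ast Q)(t)]\in[0,1]$: since $\mu$ and $\mu'$ are both normalized, $\int\diff t\,(\mu(t)-\mu'(t))=0$, so I may subtract any constant and write the difference as $\int\diff t\,(\mu(t)-\mu'(t))(f(t)-\tfrac12)$; because $\lvert f(t)-\tfrac12\rvert\le\tfrac12$, Hölder on $\mathbb{R}$ gives the factor $\tfrac12$. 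This centering trick is the only genuinely non-mechanical ingredient; everything else is a direct consequence of the operator bound on $w\ast Q$ and Hölder's inequality, so I expect the main care to be in stating the preliminary operator inequality $0\le(w\ast Q)(t)\le\mathbb{I}$ cleanly, as it is what every part relies on.
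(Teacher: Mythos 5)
Your proposal is correct and follows essentially the same route as the paper's proof: every part reduces to the operator bound $0\le (w\ast Q)(t)\le \mathbb{I}$ plus H\"older's inequality, exactly as in the supplementary material. The only difference is in part (ii), where the paper simply cites a ``change of measure'' inequality $\EV_{\mu}[\phi]\leq\EV_{\mu'}[\phi]+\tfrac12\lVert\mu-\mu'\rVert_1$ for $0\le\phi\le1$ from the literature, whereas you prove it inline via the centering trick $f\mapsto f-\tfrac12$ together with $\int\diff t\,(\mu-\mu')=0$ --- which is precisely the standard proof of that cited inequality, so the content is the same.
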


\begin{proof}
    The properties are shown as follows.
\begin{enumerate}
    \item[(i)] follows because
    \begin{align}
        |\eta(w, \mu, \rho, Q) - \eta(w', \mu, \rho, Q)| &= 
        \left| \int \diff \mu(t) \, \diff \tau \,  [w(\tau) - w'(\tau)] \Tr [ Q(t-\tau) \rho(t) ] \right|\\
        \nonumber
        &=\int \diff \tau \,  [w(\tau) - w'(\tau)] \int \diff \mu(t) \, \Tr [ Q(t-\tau) \rho(t) ] \\
        \nonumber
        &\leq \int \diff \tau \,  |w(\tau) - w'(\tau)| \int \diff \mu(t) \, \Tr [ Q(t-\tau) \rho(t)] \\
        \nonumber
        &\leq \left(\max_t |w(t) - w'(t)|\right) \int \diff \tau \, \diff \mu(t) \, \Tr [ Q(t-\tau) \rho(t)]\\
        \nonumber
        &= \lVert w - w'\Vert_{\infty}.
    \end{align}
    
    \item[(ii)] follows from the change of measure inequality~\cite{ohnishi2020novel},
    \begin{align}
        \EV_{\mu(t)}[\phi(t)] \leq \EV_{\mu'(t)}[\phi(t)] + \frac{1}{2}\lVert \mu - \mu'\rVert_1,
    \end{align}
    which is valid if $0 \leq \phi(t) \leq 1$ when we note that 
    \begin{align}
    \eta(w, \mu, \rho, Q) &= \EV_{\mu} \, \{ \Tr [ \rho(t) (w * Q)(t) ] \}
    \end{align}
    is exactly of that form. The statement follows by symmetrizing through exchange of $\mu$ and $\mu'$.

    \item[(iii)] directly follows from item (iv).
    
    \item[(iv)] follows from the short rearrangement
    \begin{align}
        |\eta(w, \mu, \rho, Q) - \eta(w, \mu', \rho', Q)| &= \left|\int \diff t \, \diff \tau \,  w(t-\tau) \Tr [ Q(\tau) [\mu(t)\rho(t) - \mu'(t)\rho'(t)] ]\right| \\
        \nonumber
        &=\left| \int \diff t  \, \Tr \left[ \left( \int \diff \tau \,  w(t-\tau)  Q(\tau)\right) [\mu(t) \rho(t) - \mu'(t)\rho'(t)] \right]\right|   \nonumber\\
        \nonumber
        &\leq \left|\int \diff t \, \Tr \left[ \mu(t)\rho(t) - \mu'(t) \rho'(t) \right] \right|  \nonumber\\
        \nonumber
        &\leq \int \diff t \, \lVert \mu(t) \rho(t) - \mu'(t)\rho'(t)\rVert_1 , 
          \nonumber
    \end{align}
    where we have used the matrix Hölder inequality and exploited the fact that $\int \diff \tau \, w(t-\tau) Q(\tau) 
    \leq \,bI$ because $Q(\tau)$ is a POVM.
    
    \item[(v)] follows from
    \begin{align}
        |\eta(w, \mu, \rho, Q) - \eta(w, \mu, \rho, Q')| &= \left|\int \diff \mu(t) \, \diff \tau \,  w(t-\tau) \Tr [ [Q(\tau) - Q'(\tau)] \rho(t) ]\right| \\
         \nonumber
        &=\left| \int \diff \mu(t) \, \Tr \left[ \rho(t) \left( \int \diff \tau \,  w(t-\tau)  [Q(\tau) - Q'(\tau)] \right) \right]\right| \\
         \nonumber
        &\leq\int \diff \mu(t) \, \left\lVert \int \diff \tau \,  w(t-\tau)  [Q(\tau) - Q'(\tau)] \right\rVert_{\infty} \\
         \nonumber
        &\leq \max_t \left\lVert \int \diff \tau \,  w(t-\tau)  [Q(\tau) - Q'(\tau)] \right\rVert_{\infty}  \\
         \nonumber
        &\leq \left( \int \diff \tau \,  w(\tau) \right)\max_t \left\lVert Q(t) - Q'(t) \right\rVert_{\infty} ,
    \end{align}
    where we have used the matrix H{\"o}lder inequality.
\end{enumerate}
\end{proof}

We can exploit the linearity and positivity of the success probability in its arguments to establish majorization-type statements 
as follows.

\begin{sproposition}[Majorization properties of the Bayesian success probability]\label{sprop:bayesian_success_prob_properties_majorization}
    The Bayesian success probability $\eta$ has the following majorization properties:
    \begin{enumerate}\setlength{\itemsep}{0pt}
        \item[(i)] Let $w_{-}(t)$ and $w_{+}(t)$ be two functions such that $w_{-}(t) \leq w(t) \leq w_{+}(t)$ for all $t$. Then
        \begin{align}
             \eta(w_{-}, \mu, \rho, Q) \leq \eta(w, \mu, \rho, Q) \leq \eta(w_{+}, \mu, \rho, Q).
        \end{align}
        
        \item[(ii)] Let $g_{-}(t)X_{-}(t)$ and $g_{+}(t)X_{+}(t)$ be sets of operators such that $g_{-}(t)X_{-}(t) \leq \mu(t)\rho(t) \leq g_{+}(t)X_{+}(t)$ for all $t$. Then
        \begin{align}
            \eta(w, g_{-}, X_{-}, Q) \leq \eta(w, \mu, \rho, Q) \leq \eta(w, g_{+}, X_{+}, Q).
        \end{align}
        This directly implies similar statements when only measure or states are changed.
        
        \item[(iii)] Let $X_{-}(t)$ and $X_{+}(t)$ be sets of operators such that $X_{-}(t) \leq Q(t) \leq X_{+}(t)$ for all $t$. Then
        \begin{align}
            \eta(w, \mu, \rho, X_{-}) \leq \eta(w, \mu, \rho, Q) \leq \eta(w, \mu, \rho, X_{+}).
        \end{align}
    \end{enumerate}
\end{sproposition}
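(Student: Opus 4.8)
The plan is to exploit the fact that $\eta(w,\mu,\rho,Q)$ is separately linear (indeed multilinear) in each of its arguments, together with two elementary positivity facts: for positive semi-definite operators $A,B\geq 0$ one has $\Tr[AB]\geq 0$, and both the window function $w$ and the prior $\mu$ are non-negative. Each of the three claimed chains of inequalities then reduces to writing the relevant difference of success probabilities and checking that the resulting integrand is pointwise non-negative. I would treat the upper and lower bounds symmetrically, proving only one of them in each case.

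For item (i), I would write
\begin{align}
\eta(w,\mu,\rho,Q)-\eta(w_{-},\mu,\rho,Q)
= \int \diff\mu(t)\,\diff\tau\,[w(t-\tau)-w_{-}(t-\tau)]\,\Tr[\rho(t)Q(\tau)].
\end{align}
Since $w-w_{-}\geq 0$ by hypothesis, $\mu\geq 0$, and $\Tr[\rho(t)Q(\tau)]\geq 0$ because $\rho(t)$ and $Q(\tau)$ are positive semi-definite, the integrand is non-negative and the difference is $\geq 0$; the bound $\eta(w,\mu,\rho,Q)\leq\eta(w_{+},\mu,\rho,Q)$ follows identically using $w_{+}-w\geq 0$. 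Item (iii) is handled the same way, now grouping the operators as
\begin{align}
\eta(w,\mu,\rho,Q)-\eta(w,\mu,\rho,X_{-})
= \int \diff\mu(t)\,\diff\tau\,w(t-\tau)\,\Tr[\rho(t)(Q(\tau)-X_{-}(\tau))],
\end{align}
where $Q(\tau)-X_{-}(\tau)\geq 0$ and $\rho(t)\geq 0$ give $\Tr[\rho(t)(Q(\tau)-X_{-}(\tau))]\geq 0$, while $w,\mu\geq 0$ ensure the whole integrand is non-negative.

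Item (ii) is the one that benefits from the convolution rewriting already introduced in the main text. First I would record that the smoothed measurement
\begin{align}
(w*Q)(t)=\int\diff\tau\,w(t-\tau)Q(\tau)
\end{align}
is itself a positive semi-definite operator for every $t$, being an integral of the positive semi-definite effects $Q(\tau)$ against the non-negative weight $w(t-\tau)$. Using $\eta(w,\mu,\rho,Q)=\int\diff t\,\Tr[\mu(t)\rho(t)(w*Q)(t)]$ I would then write
\begin{align}
\eta(w,\mu,\rho,Q)-\eta(w,g_{-},X_{-},Q)
= \int\diff t\,\Tr\!\big[(\mu(t)\rho(t)-g_{-}(t)X_{-}(t))\,(w*Q)(t)\big].
\end{align}
The hypothesis $g_{-}(t)X_{-}(t)\leq\mu(t)\rho(t)$ makes the first factor positive semi-definite, and since $(w*Q)(t)\geq 0$ the trace is non-negative for every $t$, giving the lower bound; the upper bound follows from $\mu(t)\rho(t)\leq g_{+}(t)X_{+}(t)$. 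The statements for changing only the measure or only the states are the special cases $X_{\pm}=\rho$ or $g_{\pm}=\mu$.

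There is no substantial obstacle here: the content is entirely the bookkeeping of positivity. The only points requiring a line of care are the two auxiliary positivity claims---that $\Tr[AB]\geq 0$ for $A,B\geq 0$ and that $(w*Q)(t)\geq 0$---together with the routine justification (Fubini/Tonelli) for the interchange of the order of integration used implicitly in the convolution form, which is legitimate because all integrands are non-negative.
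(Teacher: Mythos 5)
Your proof is correct and is exactly the argument the paper intends: the paper's own proof is a one-line remark that the properties "directly follow from the linearity of $\eta$ in its arguments and the fact that the arguments which are not bounded above and below are non-negative," which is precisely the positivity bookkeeping you carry out explicitly. Spelling out $\Tr[AB]\geq 0$ for $A,B\geq 0$ and the positivity of $(w*Q)(t)$ is a harmless (and arguably welcome) elaboration of the same route.
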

\begin{proof}
The properties directly follow from the linearity of $\eta$ in its arguments and the fact that the arguments which are not bounded above and below are non-negative.
\end{proof}

The minimax success probability also has comparable properties, some of which are inherited from the Bayesian success probability:
\begin{sproposition}[Continuity properties of the minimax success probability]\label{sprop:cont_prop_minimax}
    The minimax success probability $\overline{\eta}$ has continuity
    \begin{enumerate}
        \item[(i)] in the window function
        \begin{align}
            |\overline{\eta}(w, \rho, Q) - \overline{\eta}(w', \rho, Q)| \leq \lVert w - w' \rVert_{\infty},
        \end{align}
        
        \item[(ii)] in the state set
        \begin{align}
            |\overline{\eta}(w,  \rho, Q) - \overline{\eta}(w,  \rho', Q)| \leq \max_t \, \lVert \rho(t) - \rho'(t) \rVert_1,
        \end{align}
        
        \item[(ii)] in the measurement
        \begin{align}
            |\overline{\eta}(w,  \rho, Q) - \overline{\eta}(w,  \rho, Q')| \leq \left(\int \diff \tau \,  w(\tau)\right) \max_t \, \left\lVert Q(t)-  Q'(t)\right\rVert_{\infty}.
        \end{align}
\end{enumerate}
\end{sproposition}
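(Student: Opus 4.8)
The plan is to establish the three minimax continuity bounds by reducing each to the corresponding Bayesian continuity property from Proposition~\ref{sprop:bayesian_success_prob_continuity_properties}, exploiting the minimax formulation in terms of an infimum over priors. The key structural fact, stated in Eq.~\eqref{eq:intro_worstcase_success_probability_minimax} of the main text, is that the minimax success probability can be written as
\begin{align}
    \overline{\eta}(w, \rho, Q) = \inf_{\mu\colon \mu(I)=1} \eta(w, \mu, \rho, Q),
\end{align}
that is, as a pointwise infimum over priors of the Bayesian success probability. The whole proof will rest on the elementary lemma that if two families of real-valued functions $\{f_\mu\}$ and $\{g_\mu\}$ satisfy $|f_\mu - g_\mu| \leq c$ uniformly in $\mu$ for some constant $c$ independent of $\mu$, then their infima over $\mu$ also satisfy $|\inf_\mu f_\mu - \inf_\mu g_\mu| \leq c$. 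This is the standard fact that taking infima is $1$-Lipschitz with respect to the uniform norm.

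First I would state and prove this auxiliary lemma, or simply invoke it as a well-known property of infima. Then each of the three items follows almost immediately. For item~(i), I note that item~(i) of Proposition~\ref{sprop:bayesian_success_prob_continuity_properties} gives $|\eta(w, \mu, \rho, Q) - \eta(w', \mu, \rho, Q)| \leq \lVert w - w'\rVert_\infty$ with a bound independent of the prior $\mu$; applying the infimum lemma over all priors with $\mu(I)=1$ yields $|\overline{\eta}(w,\rho,Q) - \overline{\eta}(w',\rho,Q)| \leq \lVert w - w'\rVert_\infty$. For item~(iii), the measurement-continuity item~(v) of Proposition~\ref{sprop:bayesian_success_prob_continuity_properties} supplies the bound $\bigl(\int \diff\tau\, w(\tau)\bigr)\max_t \lVert Q(t) - Q'(t)\rVert_\infty$, which is again uniform in $\mu$, so the same infimum argument closes the case.

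The slightly more delicate case is item~(ii), continuity in the state set. Here I would use item~(iii) of Proposition~\ref{sprop:bayesian_success_prob_continuity_properties}, which reads $|\eta(w,\mu,\rho,Q) - \eta(w,\mu,\rho',Q)| \leq \int \diff\mu(t)\,\lVert \rho(t)-\rho'(t)\rVert_1 \leq \max_t \lVert\rho(t)-\rho'(t)\rVert_1$. The first upper bound depends on $\mu$, but the second, coarser bound $\max_t \lVert\rho(t)-\rho'(t)\rVert_1$ does not, and it is exactly the quantity appearing in the claimed minimax statement. Applying the infimum lemma to this $\mu$-independent bound gives the result directly. The only subtlety to verify is that the supremum/maximum over $t$ is attained or at least that the bound is finite under the paper's standing assumptions; I would simply carry it through as an upper bound, which is all that is needed.

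I do not anticipate a genuine obstacle here, since the minimax quantity is a minimization of a continuous Bayesian quantity and continuity is preserved under such minimization with uniform Lipschitz constants. The one point requiring a moment of care is ensuring that the bounds chosen in each item are genuinely independent of the prior before taking the infimum over priors. For item~(ii) this forces the use of the weaker $\max_t$ bound rather than the prior-weighted integral bound; this explains why the minimax statement in~(ii) carries the $\max_t$ form rather than an integral against $\mu$, and why no measure-continuity analogue (item~(ii) of the Bayesian proposition) appears in the minimax list, since in the minimax setting the prior has been optimized away entirely. I would close the proof by explicitly pointing out that the infimum-Lipschitz property is what transfers each Bayesian estimate to its minimax counterpart.
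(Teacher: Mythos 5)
Your proof is correct, but it takes a different route from the paper's. The paper proves each of the three items by a direct computation on the defining expression $\overline{\eta}(w,\rho,Q)=\min_t\int\diff\tau\,w(t-\tau)\Tr[Q(\tau)\rho(t)]$: it inserts $w=w'+(w-w')$ (resp.\ $\rho=\rho'+(\rho-\rho')$, $Q=Q'+(Q-Q')$), applies the elementary inequality $\min_t\{f(t)+g(t)\}\leq\min_t f(t)+\max_t g(t)$, and then bounds the remainder with the matrix H\"older inequality and the POVM normalization, symmetrizing at the end. You instead invoke the representation $\overline{\eta}=\inf_\mu\eta(w,\mu,\rho,Q)$ from Eq.~\eqref{eq:intro_worstcase_success_probability_minimax} and the $1$-Lipschitz property of pointwise infima, which transfers each $\mu$-uniform bound from Proposition~\ref{sprop:bayesian_success_prob_continuity_properties} directly to the minimax quantity. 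Your approach is more modular and avoids repeating the H\"older manipulations three times; it also makes transparent why item~(ii) must carry the coarser $\max_t$ bound and why no prior-continuity analogue appears in the minimax list. The paper's direct argument is self-contained and does not lean on the infimum-over-priors identity, which is only sketched in the introduction (the infimum over $t$ need not be attained, though the identity still holds by approximating with concentrated priors); in fact the paper's key inequality $\min_t(f+g)\leq\min_t f+\max_t g$ is exactly your infimum-Lipschitz lemma specialized to point-mass priors, so the two proofs are close cousins. The one thing you should make explicit if writing this up is a one-line justification of the identity $\overline{\eta}=\inf_\mu\eta$ (delta-concentrated priors suffice), since your entire argument rests on it.
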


\begin{proof}
    The properties are shown as follows.
\begin{enumerate}
    \item[(i)] Follows from the calculation
    \begin{align}
        \overline{\eta}(w, \rho, Q) &= \min_t \, \int \diff \tau \,  w(t-\tau) \Tr [ Q(\tau) \rho(t) ] \\
        &= \min_t \, \int \diff \tau \,  [w(t-\tau) + w'(t-\tau) - w'(t-\tau)] \Tr [ Q(\tau) \rho(t) ]   \nonumber\\
        &\leq \min_t \, \int \diff \tau \,  w'(t-\tau) \Tr [ Q(\tau) \rho(t) ] + \max_t \, \int \diff \tau \,  [w(t-\tau)-w'(t-\tau)]  \Tr [ Q(\tau) \rho(t)  ]   \nonumber\\
        &\leq \overline{\eta}(w', \rho, Q) + \max_t \, \{w(t-\tau)-w'(t-\tau)\}  \nonumber \\
        &\leq \overline{\eta}(w', \rho, Q) + \max_t \, |w(t-\tau)-w'(t-\tau)|   \nonumber\\
        &= \overline{\eta}(w', \rho, Q) + \lVert w - w'\rVert_{\infty},  \nonumber
    \end{align}
    where we have exploited the inequality $\min_t f(t) + g(t) \leq \min_t f(t) + \max_t g(t)$ as well as the fact that 
    \begin{align}\int \diff \tau \, g(t-\tau) Q(\tau) \leq \max_t g(t) \bbI ,
    \end{align}
    because $Q(\tau)$ is a POVM and the matrix Hölder inequality. The statement is 
    then obtained by symmetrizing, via exchange of $w$ and $w'$.
    
    \item[(ii)] follows from the short rearrangement
    \begin{align}
        \overline{\eta}(w, \rho, Q) &= \min_t \, \int \diff \tau \,  w(t-\tau) \Tr [ Q(\tau) \rho(t) ] \\
        &= \min_t \, \int \diff \tau \,  w(t-\tau) \Tr [ Q(\tau) [\rho(t) - \rho'(t) + \rho'(t)] ]   \nonumber\\
        &\leq \min_t \, \int \diff \tau \,  w(t-\tau) \Tr [ Q(\tau) \rho'(t) ] + \max_t \, \int \diff \tau \,  w(t-\tau) \Tr [ Q(\tau) [\rho(t) - \rho'(t)] ]   \nonumber\\
        &= \overline{\eta}(w, \rho', Q) + \max_t \, \Tr \left[ \left( \int \diff \tau \,  w(t-\tau)  Q(\tau)\right) [\rho(t) - \rho'(t)] \right]   \nonumber\\
        &\leq \overline{\eta}(w, \rho', Q) + \max_t \, \Tr \left[ \rho(t) - \rho'(t)\right]   \nonumber\\
        &\leq \overline{\eta}(w, \rho', Q) +\max_t \,  \lVert \rho(t) - \rho'(t)\rVert_1,   \nonumber
    \end{align}
    where we have exploited the inequality $\min_t f(t) + g(t) \leq \min_t f(t) + \max_t g(t)$ as well as 
    \begin{align}\int \diff \tau \, w(t-\tau) Q(\tau) \leq \bbI,
    \end{align}
    because $Q(\tau)$ is a POVM and the matrix Hölder inequality. The statement is then obtained by symmetrizing via exchange of $\rho$ and $\rho'$.
    
    \item[(iii)] follows similarly as
    \begin{align}
        \overline{\eta}(w, \rho, Q) &= \min_t \, \int \diff \tau \,  w(t-\tau) \Tr [ Q(\tau) \rho(t) ] \\
        &= \min_t \, \int \diff \tau \,  w(t-\tau) \Tr [ [Q(\tau) - Q'(\tau) + Q'(\tau)] \rho(t) ] 
        \nonumber
        \\
        &\leq \min_t \, \int \diff \tau \,  w(t-\tau) \Tr [ Q'(\tau) \rho(t) ] + \max_t \, \int \diff \tau \,  w(t-\tau) \Tr [ [Q(\tau)-Q'(\tau)] \rho(t)  ]   \nonumber\\
        &= \overline{\eta}(w, \rho, Q') + \max_t \, \Tr \left[\left[\left( \int \diff \tau \,  w(t-\tau)  Q(\tau)\right)-\left( \int \diff \tau \,  w(t-\tau)  Q'(\tau)\right)\right] \rho(t) \right]   \nonumber\\
        &\leq \overline{\eta}(w, \rho, Q') + \max_t \, \left\lVert  \int \diff \tau \,  w(t-\tau) \left[ Q(\tau)-  Q'(\tau)\right]\right\rVert_{\infty}   \nonumber\\
        &\leq \overline{\eta}(w, \rho, Q') + \left(\int \diff \tau \,  w(\tau)\right) \max_t \, \left\lVert Q(\tau)-  Q'(\tau)\right\rVert_{\infty}
          \nonumber
    \end{align}
    where we have exploited the inequality $\min_t f(t) + g(t) \leq \min_t f(t) + \max_t g(t)$ as well as the fact that $\int \diff \tau \, w(t-\tau) Q(\tau) \leq \bbI$ because $Q(\tau)$ is a POVM and the matrix Hölder inequality. 
    The statement is then obtained by symmetrizing via exchange of $\rho$ and $\rho'$.

     \item[(ii)] directly follows from the same argument as (i) when we note that
    \begin{align}
        \Tr [ Q(\tau) \rho_{-}(t)] \leq \Tr [ Q(\tau) \rho(t) ]\leq \Tr [ Q(\tau) \rho_{+}(t) ],
    \end{align}
    and hence the same ordering holds for the minimum.
    
     \item[(iii)] directly follows from the same argument as (i) when we note that
    \begin{align}
        \Tr [ Q_{-}(\tau) \rho(t)] \leq \Tr [ Q(\tau) \rho(t) ]\leq \Tr [ Q_{+}(\tau) \rho(t) ].
    \end{align}
    and hence the same ordering holds for the minimum.
\end{enumerate}
\end{proof}

Note that some of these continuity bounds are necessarily loose as the other parameters could be chosen in a particularly pathological way.

The minimax success probability has majorization properties similar to the ones of the Bayesian success probability:
\begin{sproposition}[Majorization properties of the minimax success probability]\label{Sprop:minimax_success_prob_properties_majorization}
    The minimax success probability $\overline{\eta}$ has the following majorization properties:
    \begin{enumerate}\setlength{\itemsep}{0pt}
        \item[(i)] Let $w_{-}(t)$ and $w_{+}(t)$ be two functions such that $w_{-}(t) \leq w(t) \leq w_{+}(t)$ for all $t$. Then
        \begin{align}
             \overline{\eta}(w_{-}, \rho, Q) \leq \overline{\eta}(w, \rho, Q) \leq \overline{\eta}(w_{+}, \rho, Q).
        \end{align}
        
        \item[(ii)] Let $X_{-}(t)$ and $X_{+}(t)$ be sets of operators such that $X_{-}(t) \leq \rho(t) \leq X_{+}(t)$ for all $t$. Then
        \begin{align}
            \overline{\eta}(w, X_{-}, Q) \leq \overline{\eta}(w, \rho, Q) \leq \overline{\eta}(w, X_{+}, Q).
        \end{align}
        
        \item[(iii)] Let $X_{-}(t)$ and $X_{+}(t)$ be sets of operators such that $X_{-}(t) \leq Q(t) \leq X_{+}(t)$ for all $t$. Then
        \begin{align}
            \overline{\eta}(w, \rho, X_{-}) \leq \overline{\eta}(w, \rho, Q) \leq \overline{\eta}(w, \rho, X_{+}).
        \end{align}
    \end{enumerate}
Additionally, it has 
\begin{enumerate}\setlength{\itemsep}{0pt}   
        \item[(iv)] monotonicity in the state set:
        \begin{align}
            \{ \rho'(t) \}_t \subseteq \{ \rho(t)\}_t \ \Rightarrow \ \overline{\eta}(w, \rho, Q) \leq \overline{\eta}(w, \rho', Q).
        \end{align}
    \end{enumerate}
\end{sproposition}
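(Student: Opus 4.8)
The final statement is Proposition~\ref{Sprop:minimax_success_prob_properties_majorization}, which collects four majorization-type properties of the minimax success probability $\overline{\eta}$: three ordering properties (i)–(iii) when the window function, states, or measurement operators are bounded between operator-ordered families, and one monotonicity property (iv) under restriction of the state set. I will address all four.

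\textbf{Overall strategy.} The plan is to reduce every item to the single structural fact that $\overline{\eta}(w,\rho,Q) = \min_t \int \diff\tau\, w(t-\tau)\Tr[Q(\tau)\rho(t)]$ is, for fixed choices of the remaining arguments, a \emph{monotone} functional of the varying argument, together with the elementary observation that taking a pointwise-in-$t$ minimum preserves inequalities. Concretely, if $A(t)\le B(t)$ for all $t$ (in the pointwise scalar sense after pairing against nonnegative operators), then $\min_t A(t) \le \min_t B(t)$. So for each item I will first establish the integrand inequality pointwise in $t$, then take the minimum over $t$.

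\textbf{Items (i)–(iii).} For item (i), since $Q(\tau)\ge 0$ and $\rho(t)\ge 0$ we have $\Tr[Q(\tau)\rho(t)]\ge 0$, so from $w_{-}(t-\tau)\le w(t-\tau)\le w_{+}(t-\tau)$ I integrate against the nonnegative measure $\Tr[Q(\tau)\rho(t)]\,\diff\tau$ to get, for every fixed $t$,
\begin{align}
\int \diff\tau\, w_{-}(t-\tau)\Tr[Q(\tau)\rho(t)] &\le \int \diff\tau\, w(t-\tau)\Tr[Q(\tau)\rho(t)] \nonumber\\
&\le \int \diff\tau\, w_{+}(t-\tau)\Tr[Q(\tau)\rho(t)],
\end{align}
and taking $\min_t$ yields the claim. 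For item (ii), the hypothesis $X_{-}(t)\le\rho(t)\le X_{+}(t)$ combined with $(w*Q)(t)\ge 0$ (as $w\ge 0$ and $Q\ge 0$) gives $\Tr[(w*Q)(t)X_{-}(t)]\le\Tr[(w*Q)(t)\rho(t)]\le\Tr[(w*Q)(t)X_{+}(t)]$ pointwise, using that $\Tr[P(A-B)]\ge 0$ whenever $P\ge 0$ and $A\ge B$; again take $\min_t$. Item (iii) is identical with the roles of $\rho$ and $Q$ swapped: from $X_{-}(\tau)\le Q(\tau)\le X_{+}(\tau)$ and $\rho(t)\ge 0$, pairing against the nonnegative operator $\rho(t)$ under the nonnegative-weight convolution preserves the ordering pointwise in $t$. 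These are the direct analogues of Proposition~\ref{sprop:bayesian_success_prob_properties_majorization} and the argument there (linearity plus positivity) carries over verbatim, the only new ingredient being the monotonicity of $\min_t$.

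\textbf{Item (iv).} The monotonicity under restriction of the state set is the one genuinely different claim. If $\{\rho'(t)\}_t\subseteq\{\rho(t)\}_t$, then the domain over which the minimum in $\overline{\eta}(w,\rho,Q)$ is taken contains the domain for $\overline{\eta}(w,\rho',Q)$; since every value attained by the smaller family is also attained by the larger one, the minimum over the larger family is $\le$ the minimum over the smaller, giving $\overline{\eta}(w,\rho,Q)\le\overline{\eta}(w,\rho',Q)$. I expect this to be the subtle step: the statement of set containment must be interpreted carefully because $\min_t$ is really an infimum over the parameter domain, and the set-theoretic containment of \emph{ranges} must be translated into a containment of the feasible objective values. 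The clean way is to note that for each $t$ in the domain of $\rho'$ there is some $t'$ in the domain of $\rho$ with $\rho(t')=\rho'(t)$, hence $\int w(t-\tau)\Tr[Q(\tau)\rho'(t)]\diff\tau$ equals the value of the $\rho$-integrand at the matching argument, so the infimum defining $\overline{\eta}(w,\rho,Q)$ ranges over a superset of the values and can only be smaller. The main obstacle is thus not computational but definitional: making precise that a worst-case over a larger collection of states is no larger than a worst-case over a subcollection, which I will phrase through the infimum-over-a-superset inequality.
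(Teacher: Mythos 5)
Your treatment of items (i)--(iii) is correct and is essentially the paper's own argument: for each fixed $t$ the quantity $\int \diff\tau\, w(t-\tau)\Tr[Q(\tau)\rho(t)]$ is monotone in whichever argument is being varied, because the remaining arguments are nonnegative (positive semi-definite operators paired under nonnegative weights), and taking $\min_t$ preserves pointwise-in-$t$ inequalities.

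Item (iv) is where your write-up has a genuine gap. You match each $t$ in the domain of $\rho'$ to some $t'$ with $\rho(t')=\rho'(t)$ and then claim that $\int\diff\tau\,w(t-\tau)\Tr[Q(\tau)\rho'(t)]$ \emph{equals the value of the $\rho$-integrand at the matching argument}. That is false unless $t'=t$: the $\rho$-integrand at $t'$ is $\int\diff\tau\,w(t'-\tau)\Tr[Q(\tau)\rho(t')]$, whose window is centered at $t'$ rather than at $t$, so the two values generally differ. Worse, under the pure range-containment reading that your matching argument is built for, the statement itself fails: take $\rho'(t)=\rho(t+5)$ with domain all of $\bbR$, so that $\{\rho'(t)\}_t=\{\rho(t)\}_t$, yet a measurement $Q$ that estimates $\rho$ well reports values near $t+5$ when handed $\rho'(t)$, driving $\overline{\eta}(w,\rho',Q)$ toward zero while $\overline{\eta}(w,\rho,Q)$ can stay large. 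The claim is only true when the containment is read as containment of graphs, i.e.\ $\rho'$ is the restriction of $\rho$ to a subdomain with the same parametrization; then $t'=t$, your equality is trivial, and the infimum over the larger domain is at most the infimum over the smaller one. This is also how the paper argues, phrased in the dual language of priors: every prior supported on the domain of $\rho'$ is a valid prior over the domain of $\rho$, so the infimum over priors that defines $\overline{\eta}(w,\rho,Q)$ ranges over a superset and can only be smaller. Your proof of (iv) should either adopt that graph-containment reading explicitly or be flagged as failing for general re-parametrizations.
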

\begin{proof}
The properties directly follow from the fact that $\overline{\eta}$ is the result of a minimization of a linear function and the fact that the arguments which are not bounded above and below are non-negative. Item (iv) follows from the simple observation that the set of all measures over $\{ \rho(t)\}_t$ includes all measures over $\{ \rho'(t) \}_t$ and hence the optimization is bound to yield a higher value.
\end{proof}

\subsection{Properties of the optimized quantities}
As outcomes of a convex optimization, the optimized quantities fulfill convexity in the remaining parameters: 
\begin{sproposition}[Convexity]
The optimized Bayesian success probabilities $\eta^{*}(w, \mu, \rho)$, $\eta^{*}(w, \mu, \calN, Q)$ and $\eta^{*}(w, \mu, \calN)$ are convex in all arguments and especially fulfill a triangle inequality in the window function.
\end{sproposition}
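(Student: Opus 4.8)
The plan is to reduce the entire statement to one elementary principle, namely that a pointwise supremum of a family of affine functions is convex, combined with the observation that the unoptimized success probability is separately linear in each of its arguments. The optimal quantities are by construction suprema of $\eta$ over a fixed optimization domain, so once the linearity is in place the convexity is automatic.

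First I would record the linearity structure of the success probability. Directly from Definition~\ref{def:success_probability_no_optimization},
\begin{align}
\eta(w, \mu, \rho, Q) = \int \diff t \, \diff \tau \, w(t - \tau)\, \mu(t)\, \Tr[\rho(t) Q(\tau)],
\end{align}
which is linear in $w$, linear in the measure $\mu$ (for fixed $\rho$), linear in the state family $\rho$ (for fixed $\mu$) — equivalently, linear in the combined object $\mu \cdot \rho$ — and linear in the POVM $Q$. The crucial structural point I would stress is that none of the optimization domains (the set of POVMs $Q$, the set of probe states $\rho_0$, or the convex comb/strategy sets $\mathsf{S}_n$) depends on the arguments $w$, $\mu$, $\rho$ over which convexity is claimed.

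Next I would invoke the supremum principle. Since $\eta^*(w,\mu,\rho)=\sup_Q \eta(w,\mu,\rho,Q)$ is a pointwise supremum, over a fixed set, of maps affine in $(w,\mu\cdot\rho)$, it is convex in each of these arguments: for $w = p w_1 + (1-p) w_2$, linearity gives, for every $Q$,
\begin{align}
\eta(w,\mu,\rho,Q) &= p\,\eta(w_1,\mu,\rho,Q) + (1-p)\,\eta(w_2,\mu,\rho,Q) \\
&\le p\,\eta^*(w_1,\mu,\rho) + (1-p)\,\eta^*(w_2,\mu,\rho),
\end{align}
and taking the supremum over $Q$ on the left yields convexity; the same computation with $\mu$ or $\rho$ varied (using affinity of $\mu\cdot\rho$ in each) handles the remaining arguments. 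I would then repeat the argument verbatim with the supremum taken over probe states — using the linear form of Proposition~\ref{prop:probe_state_opt} — and over the convex strategy sets $\mathsf{S}_n$ — using the joint-optimization results of Proposition~\ref{prop-succ_prob_combs_Bayesian} and its companions — to get convexity of $\eta^*(w,\mu,\calN,Q)$ and $\eta^*(w,\mu,\calN)$ in $w$ and $\mu$.

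Finally I would derive the triangle inequality in the window from the same linearity: because $\eta$ is additive and positively homogeneous in $w$,
\begin{align}
\eta^*(w_1 + w_2, \mu, \rho) &= \sup_Q \big[ \eta(w_1, \mu, \rho, Q) + \eta(w_2, \mu, \rho, Q) \big] \\
&\le \eta^*(w_1, \mu, \rho) + \eta^*(w_2, \mu, \rho),
\end{align}
using $\sup(f+g)\le \sup f + \sup g$; together with homogeneity this exhibits $w\mapsto \eta^*(w,\mu,\rho)$ as a sublinear functional. The one point I expect to require genuine care — and would treat explicitly rather than gloss over — is the channel setting: for the single-use probe optimization the Choi operator enters $\eta$ linearly, so convexity in $\calN$ is valid, but for $n>1$ copies the objective contains the tensor power $C[\calN(t)]^{\otimes n}$, which is \emph{not} linear in $\calN$; there I would phrase convexity for the effective parametrized state $t\mapsto C[\calN(t)]^{\otimes n}$ (to which the state-case argument applies directly) and verify only that the comb/strategy domains are convex and independent of the varied argument, which holds by their definition.
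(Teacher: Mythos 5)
Your proposal is correct and follows essentially the same route as the paper's own (much terser) proof: linearity of the unoptimized success probability in each argument combined with $\sup_Q [f(Q)+g(Q)] \leq \sup_Q f(Q) + \sup_Q g(Q)$. Your explicit caveat about the multi-copy channel case, where $C[\calN(t)]^{\otimes n}$ is not linear in $\calN$, is a valid refinement that the paper's proof does not address.
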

\begin{proof}
The convexity of the Bayesian quantities is a direct consequence of linearity and the fact that 
\begin{align}
    \sup_Q f(Q) + g(Q) \leq \sup_Q f(Q) + \sup_Q g(Q).
\end{align}
The triangle inequality in the window function follows similarly.
\end{proof}

The optimized probabilities also fulfill a type of data processing inequality:
\begin{sproposition}[Data processing]
The optimized Bayesian success probabilities obey the following data-processing inequalities
\begin{align}
\eta^{*}(w, \mu, \calN[\rho(\cdot)]) &\leq \eta^{*}(w, \mu, \rho(\cdot)) ,\\
\eta^{*}(w, \mu, \calN(\cdot) \circ \calA, Q) &\leq \eta^{*}(w, \mu, \calN(\cdot), Q) ,\\
\eta^{*}(w, \mu, \calB \circ \calN(\cdot) \circ \calA) &\leq \eta^{*}(w, \mu, \calN(\cdot)),
\end{align}
which directly imply similar statements for the optimized minimax success probabilities
\begin{align}
\overline{\eta}^{*}(w, \calN[\rho(\cdot)]) &\leq \overline{\eta}^{*}(w,  \rho(\cdot)),\\
\overline{\eta}^{*}(w, \calN(\cdot) \circ \calA, Q) &\leq \overline{\eta}^{*}(w,  \calN(\cdot), Q), \\
\overline{\eta}^{*}(w, \calB \circ \calN(\cdot) \circ \calA) &\leq \overline{\eta}^{*}(w, \calN(\cdot)).
\end{align}
\end{sproposition}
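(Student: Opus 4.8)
The plan is to prove the three Bayesian data-processing inequalities and then observe that the minimax statements follow immediately. The unifying principle is that data processing for the success probability reduces to the fact that post-composing with a fixed channel can always be absorbed into the measurement, so that any feasible strategy for the processed problem corresponds to a feasible strategy for the original problem with at least as good an objective value. Concretely, I would work from the primal form of the optimal Bayesian success probability established in Proposition~\ref{prop:bayesian_sdp}, namely $\eta^{*}(w,\mu,\rho)=\sup_{Q}\int\diff t\,\Tr[(w\ast[\mu\cdot\rho])(t)\,Q(t)]$ subject to $Q(t)\geq 0$ and $\int\diff t\,Q(t)=\bbI$.

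First I would treat the state data-processing inequality $\eta^{*}(w,\mu,\calN[\rho(\cdot)])\leq\eta^{*}(w,\mu,\rho(\cdot))$. Fix any optimal or near-optimal POVM $Q(t)$ for the processed problem with states $\calN[\rho(t)]$. Using the adjoint map $\calN^{\dagger}$ and the identity $\Tr[\calN[\rho(t)]\,Q(\tau)]=\Tr[\rho(t)\,\calN^{\dagger}[Q(\tau)]]$, I would define $\tilde Q(\tau)\coloneqq\calN^{\dagger}[Q(\tau)]$. Since $\calN$ is a channel, $\calN^{\dagger}$ is completely positive and unital, so $\tilde Q(\tau)\geq 0$ and $\int\diff\tau\,\tilde Q(\tau)=\calN^{\dagger}[\bbI]=\bbI$; hence $\tilde Q$ is a valid POVM. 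Linearity of $\calN^{\dagger}$ commutes with the convolution and the prior weighting, so the objective is exactly preserved, giving $\eta(w,\mu,\calN[\rho],Q)=\eta(w,\mu,\rho,\tilde Q)\leq\eta^{*}(w,\mu,\rho)$. Taking the supremum over $Q$ yields the claim. The second inequality $\eta^{*}(w,\mu,\calN(\cdot)\circ\calA,Q)\leq\eta^{*}(w,\mu,\calN(\cdot),Q)$, where the measurement is fixed, follows by the same absorption but applied to the pre-processing channel $\calA$ acting on the probe state: optimizing the probe over the image of $\calA$ is a restriction of optimizing over all probes, so Proposition~\ref{prop:probe_state_opt} (probe optimization) gives the monotonicity directly, since $\sup_{\rho_0}\Tr[\calA[\rho_0]\,X]\leq\sup_{\sigma}\Tr[\sigma\,X]$ for the positive semi-definite effective operator $X$. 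The third inequality $\eta^{*}(w,\mu,\calB\circ\calN(\cdot)\circ\calA)\leq\eta^{*}(w,\mu,\calN(\cdot))$ combines both moves: $\calA$ is absorbed into the probe optimization and $\calB$ into the measurement optimization via $\calB^{\dagger}$, and since the full optimization over probe and measurement is already performed on the right-hand side, each absorption can only shrink the feasible set.

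For the minimax versions, I would invoke the observation made repeatedly in the excerpt (e.g.\ in Proposition~\ref{prop:minimax_sdp} and the discussion following Theorem~\ref{thm:succ_prob_upper_bound_mht_delta_window}) that the optimal minimax success probability equals the optimal Bayesian success probability minimized over all priors, $\overline{\eta}^{*}(w,\rho)=\inf_{\mu}\eta^{*}(w,\mu,\rho)$. Since each Bayesian inequality holds for every fixed prior $\mu$ with the channel-composed state set on the smaller side, taking the infimum over $\mu$ on both sides preserves the inequality; the infimum is order-preserving. This immediately transfers all three data-processing statements to the minimax quantities without any further work.

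The main obstacle is a bookkeeping one rather than a conceptual one: I must verify that the adjoint-absorption argument interacts correctly with the convolution by the window function and the prior weighting, i.e.\ that $\calN^{\dagger}$ (acting on the system) genuinely commutes past $(w_{\delta}\ast\,\cdot\,)$ (acting on the time/outcome label) and past multiplication by $\mu(t)$. This is true because the channel acts on the Hilbert space while the convolution and prior act on the classical time variable, so they operate on independent tensor factors and trivially commute; nonetheless I would state this factorization explicitly to make the interchange rigorous. A secondary subtlety is confirming unitality of $\calN^{\dagger}$ for the measurement-transfer step, which holds precisely because $\calN$ is trace-preserving; I would note that if one only assumed $\calN$ to be trace-non-increasing the normalization constraint $\int\tilde Q=\bbI$ could fail, but for genuine channels this is not an issue.
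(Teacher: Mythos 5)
Your proposal is correct and follows essentially the same route as the paper's proof: absorb the channel into the measurement via the adjoint, use that $\calN^{\dagger}$ is completely positive and unital so that $\calN^{\dagger}[\calQ]\subseteq\calQ$, and conclude by the fact that a supremum over a subset cannot exceed the supremum over the full set, with the pre-processing $\calA$ handled as a restriction of the probe-state optimization. The only cosmetic difference is in the minimax transfer, where you invoke the characterization $\overline{\eta}^{*}=\inf_{\mu}\eta^{*}(\mu)$ while the paper simply notes that the Bayesian argument holds independently of the prior; both justifications are valid given the strong duality established in Proposition~\ref{prop:minimax_sdp}.
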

\begin{proof}
Denote with $\calQ$ the set of all POVMs and with $\calN[\calQ]$ its image under a quantum channel $\calN$. We note that for all quantum channels, we have that $\calN^{\dagger}[\calQ] \subseteq \calQ$ due to the CPTP property of $\mathcal{N}$, which implies that $\mathcal{N}^{\dagger}$ is completely positive and unital. Then, we
find
\begin{align}
    \eta^{*}(w, \mu, \calN[\rho(\cdot)]) &= \sup_{Q(t) \in \calQ} \int \diff \mu(t) \, \diff \tau \, w(t - \tau)\Tr [\calN[\rho(t)] Q(\tau) ] \\
     \nonumber
    &= \sup_{Q(t) \in \calQ} \int \diff \mu(t) \, \diff \tau \, w(t - \tau)\Tr [\rho(t) \calN^{\dagger}[Q(\tau)] ] \\
     \nonumber
    &= \sup_{Q'(t) \in \calN^{\dagger}[\calQ]} \int \diff \mu(t) \, \diff \tau \, w(t - \tau)\Tr[\rho(t) Q'(\tau)]\\
     \nonumber
    &\leq \sup_{Q'(t) \in \calQ} \int \diff \mu(t) \, \diff \tau \, w(t - \tau)\Tr [\rho(t) Q'(\tau)]\\
     \nonumber
    &= \eta^{*}(w, \mu, \rho(\cdot)),
\end{align}
where we have used the fact that a value optimized over a subset can never exceed the value of the fully optimized case. The minimax result is implied as the Bayesian statement holds independently of the prior. Similar arguments give rise to the statements for optimization over input states and joint optimization of probe and measurement.
\end{proof}

\subsection{Subdivision trick}\label{ssec:proof_subdivision_trick}
We now prove the subdivision trick of the main text.

\begin{proof}[Proof of Lemma~\ref{lem:subdivision_trick}]
The proof follows a similar idea as the proof outlined for the reduction of multi-hypothesis testing to the binary case in Ref.~\cite{audenaert2014upper}. We consider the case in which, additionally to the (unknown) state $\rho(t)$ with $t$ sampled according to $\mu(t)$, an oracle supplies us the information that $t$ lies in a certain interval $I_{t'}$ where the oracle samples $t'$ uniformly from $I_t$, \textit{i.e.}, it uniformly randomly samples one of the intervals of size $T$ containing $t$. With this additional information available, we can restrict our attention to the interval $I_{t'}$ and perform a Bayesian update of our prior which means we now deal with $\mu|_{I_{t'}}$. We can then perform the optimal strategy for this prior. As the additional information can only improve our estimate, we obtain
\begin{align}
    \eta^{*}(\delta, \rho, \mu) &\leq \int \diff \mu(t) \, \int \diff t' \, \bbP[ t' \pipe t ] \eta^{*}(\delta, \rho, \mu|_{I_{t'}}) \\
    &= \int \diff \mu(t) \, \int \diff t' \, \frac{1}{T} \chi[ t' \in I_t ]  \eta^{*}(\delta, \rho, \mu|_{I_{t'}}) \\
    &= \int \diff \mu(t) \, \int \diff t' \, \frac{1}{T} \chi[ t \in I_{t'} ]  \eta^{*}(\delta, \rho, \mu|_{I_{t'}})\\
    &= \frac{1}{T}  \int \diff t' \, \mu(I_{t'})  \eta^{*}(\delta, \rho, \mu|_{I_{t'}}).
\end{align}
Renaming $t'$ to $t$ yields the first statement of the lemma. The second statement follows by bounding $\eta^{*}(\delta, \rho, \mu|_{I_{t'}})$ by its maximum over $t'$ and recognizing that 
\begin{align}
    \frac{1}{T} \int \diff t' \, \mu(I_{t'}) &=
    \frac{1}{T} \int \diff t' \, \int_{I_{t'}} \diff t \, \mu(t) \\
    &= \frac{1}{T} \int \diff t' \, \int \diff t \, \mu(t) \, \chi[|t-t'| \leq T/2] \\
    &= \frac{1}{T} \int \diff t \, \mu(t) \, T \\
    &= 1.
\end{align}
\end{proof}

\section{Optimal post-processing with fixed measurement}\label{ssec:maximum_weight_estimation}

In this section, we give additional content relative to Section~\ref{sec:opt_fixed_measurement} of the main text.
There, it was established in Eq.~\eqref{eqn:succ_prob_for_smap} that the success probability for the smoothed maximum a posteriori estimate relates to the function infinite norm of the smoothed posterior probability:
\begin{align}
    \eta(\delta, \mu, \rho, Q_{M, \tau_{\mathrm{SMAP}}^{*}}) &= \int \diff \nu(\lambda) \, \max_{\tau} \mathstrut(w_{\delta} * P(\cdot \pipe \lambda))(\tau) \\
&= \int \diff \nu(\lambda) \, \mathstrut \lVert w_{\delta} * P(\cdot \pipe \lambda) \rVert_{\infty}.\nonumber
\end{align}
The connection to the infinity norm allows us to derive some simple upper bounds by applying Young's convolution inequality~\cite{beckner_inequalities_1975}:
\begin{slemma}\label{slem:young_conv_inequ}
The success probability of the smoothed maximum a posteriori estimate obeys the upper bound
\begin{align}
    \eta(w, \mu, \rho, Q_{M, \tau_{\mathrm{SMAP}}^{*}})&\leq \lVert w \rVert_{p} \int \diff \nu(\lambda) \, \lVert P(\cdot \pipe \lambda) \rVert_{q}
\end{align}
for all $1/p + 1/q = 1$.
\end{slemma}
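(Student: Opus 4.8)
The plan is to apply Young's convolution inequality directly to the closed-form expression for the success probability of the smoothed maximum a posteriori estimator. Recall from Eq.~\eqref{eqn:succ_prob_for_smap} that this success probability has already been written as
\begin{align}
    \eta(w, \mu, \rho, Q_{M, \tau_{\mathrm{SMAP}}^{*}}) &= \int \diff \nu(\lambda) \, \lVert w * P(\cdot \pipe \lambda) \rVert_{\infty},
\end{align}
so the entire task reduces to bounding the infinity norm of the convolution $w * P(\cdot \pipe \lambda)$ for each fixed measurement outcome $\lambda$ and then integrating against $\nu$.

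First I would invoke Young's convolution inequality in its endpoint form: for conjugate exponents $1/p + 1/q = 1$ one has $\lVert f * g \rVert_{\infty} \leq \lVert f \rVert_p \lVert g \rVert_q$. Concretely, this endpoint is nothing but Hölder's inequality applied pointwise together with translation invariance of the $L^p$-norm, since $(w * P(\cdot \pipe \lambda))(\tau) = \int \diff s \, w(\tau - s) \, P(s \pipe \lambda)$ is bounded in absolute value by $\lVert w(\tau - \cdot) \rVert_p \, \lVert P(\cdot \pipe \lambda) \rVert_q = \lVert w \rVert_p \, \lVert P(\cdot \pipe \lambda) \rVert_q$ uniformly in $\tau$. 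Taking the supremum over $\tau$ then yields $\lVert w * P(\cdot \pipe \lambda) \rVert_{\infty} \leq \lVert w \rVert_p \, \lVert P(\cdot \pipe \lambda) \rVert_q$.

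The remaining step is to substitute this bound under the integral and pull out the factor $\lVert w \rVert_p$, which is independent of $\lambda$:
\begin{align}
    \eta(w, \mu, \rho, Q_{M, \tau_{\mathrm{SMAP}}^{*}}) &\leq \int \diff \nu(\lambda) \, \lVert w \rVert_p \, \lVert P(\cdot \pipe \lambda) \rVert_q = \lVert w \rVert_p \int \diff \nu(\lambda) \, \lVert P(\cdot \pipe \lambda) \rVert_q,
\end{align}
which is the claimed inequality, valid for every conjugate pair $1/p + 1/q = 1$.

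There is essentially no hard step here: the statement is a one-line corollary once Eq.~\eqref{eqn:succ_prob_for_smap} is in hand. The only point demanding mild care is the exponent bookkeeping, namely recognizing that the desired estimate is precisely the $r = \infty$ endpoint of the general Young inequality $\lVert f * g \rVert_r \leq \lVert f \rVert_p \lVert g \rVert_q$ with $1/p + 1/q = 1 + 1/r$; setting $1/r = 0$ recovers the conjugacy condition $1/p + 1/q = 1$ appearing in the statement. One should also note that $P(\cdot \pipe \lambda)$ is a probability density in $t$, so $\lVert P(\cdot \pipe \lambda) \rVert_q$ is finite for $q \geq 1$ whenever the posterior is sufficiently regular, ensuring the right-hand side is well defined.
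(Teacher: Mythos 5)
Your proof is correct and follows exactly the route the paper intends: the paper states the lemma as an immediate consequence of Young's convolution inequality applied to the representation $\eta(w, \mu, \rho, Q_{M, \tau_{\mathrm{SMAP}}^{*}}) = \int \diff \nu(\lambda) \, \lVert w * P(\cdot \pipe \lambda) \rVert_{\infty}$ from Eq.~\eqref{eqn:succ_prob_for_smap}, without writing out further details. Your filling in of the $r=\infty$ endpoint (H\"older plus translation invariance) and the subsequent integration over $\nu$ is exactly what is needed.
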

The above inequality immediately trivializes when choosing $p=\infty$, $q=1$ but yields a non-trivial upper bound otherwise. This can be useful, when for example an upper bound on the likelihood is known.

A particularly interesting property of the smoothed maximum a posteriori estimate, of which we will use a discrete analogue later to relate metrology to binary hypothesis testing, is the following bound on contributions to the error for the $\delta$ window, which makes the dependence on the window size explicit:
\begin{slemma}
For any interval $I$ of cardinality $|I| \leq 2\delta$, define its $\delta$-complement as
\begin{align}
    \Bar{I}^{\delta} \coloneqq \{ t \pipe \text{ there exists } t' \in I \text{ such that } |t-t'| > 2\delta \}.
\end{align}
Then, for any interval $I$ outside the smoothed maximum a posteriori interval, \textit{i.e.}, any interval contributing to the error, we have that
\begin{align}
    \int_I \diff t \, P(t \pipe \lambda) \leq \int_{\Bar{I}^{\delta}} \diff t \, P(t \pipe \lambda)
\end{align}
for all $\lambda$.
\end{slemma}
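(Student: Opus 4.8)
The plan is to translate everything into statements about the posterior mass carried by intervals and then exploit the defining optimality property of the smoothed maximum a posteriori estimate. Write $m(J) := \int_J P(t\pipe\lambda)\,\diff t$ for a measurable set $J$, and set $\tau^* := \tau^*_{\mathrm{SMAP}}(\lambda)$. Since $(w_\delta * P(\cdot\pipe\lambda))(\tau) = m([\tau-\delta,\tau+\delta])$, the estimate $\tau^*$ is by definition the centre of the length-$2\delta$ window carrying the most posterior mass; writing $W := [\tau^*-\delta,\tau^*+\delta]$ for the SMAP window, this is the inequality $m(W) \ge m([\tau-\delta,\tau+\delta])$ for every $\tau$. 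Writing $I = [a,b]$ with $b - a \le 2\delta$, a direct computation gives $\overline{I}^{\delta} = (-\infty, b-2\delta) \cup (a+2\delta, \infty)$; denote these two components by $L$ and $R$. Finally, ``$I$ contributes to the error'' means $I$ is disjoint from $W$, and since $I$ is connected it lies entirely in one of the two rays forming the complement of $W$. By the reflection $t\mapsto -t$ I may assume $I$ lies to the right, i.e.\ $\tau^*+\delta \le a$, in which case it suffices to prove $m(I) \le m(L)$.

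First I would introduce the leftmost length-$2\delta$ window containing $I$, namely $V := [b-2\delta, b]$, which is well-defined and indeed contains $I$ precisely because $b-2\delta \le a$, using the hypothesis $|I|\le 2\delta$. Applying the optimality of $\tau^*$ to the window $V$ (centred at $b-\delta$) gives $m(V) \le m(W)$, and the disjoint decomposition $V = [b-2\delta,a]\,\sqcup\,I$ yields $m(V) = m([b-2\delta,a]) + m(I)$.

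The tempting shortcut $m(I) \le m(V) \le m(W) \le m(\overline{I}^{\delta})$ fails, and this is the main obstacle: the SMAP window $W$ need not be contained in $\overline{I}^{\delta}$, since its right portion can poke into the ``safe zone'' $[b-2\delta, a+2\delta] = \mathbb{R}\setminus\overline{I}^{\delta}$. The resolution is to split $W$ at the boundary point $b-2\delta$ of $L$. Because $W \subseteq (-\infty, a]$, the part of $W$ lying to the right of $b-2\delta$ is contained in $[b-2\delta, a]$, so $m(W) = m(W\cap L) + m(W\cap[b-2\delta,a]) \le m(L) + m([b-2\delta,a])$. Combining this with the two identities for $m(V)$ produces a telescoping cancellation of the shared safe-zone mass $m([b-2\delta,a])$:
$$ m(I) = m(V) - m([b-2\delta,a]) \le m(W) - m([b-2\delta,a]) \le m(L) \le m(\overline{I}^{\delta}). $$

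The case where $I$ lies to the left of $W$ is identical after the reflection $t\mapsto -t$, using instead the rightmost window $V = [a, a+2\delta]$ and the component $R$ in place of $L$. Beyond the cancellation step above, I expect the only remaining work to be routine bookkeeping at the shared endpoints (which carry zero measure and so do not affect any $m(\cdot)$), together with an explicit check that the hypothesis $|I|\le 2\delta$ is exactly what guarantees both $I \subseteq V$ and $b-2\delta \le a$; notably the argument never invokes normalisation of the posterior, so it applies verbatim to any non-negative weight $P(\cdot\pipe\lambda)$.
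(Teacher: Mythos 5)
Your proof is correct and follows essentially the same route as the paper's: both arguments compare the SMAP window $W$ against the extremal length-$2\delta$ window $V=[b-2\delta,b]$ containing $I$ (the paper phrases this as an infimum over all admissible $J\supseteq I$ whose optimizer is exactly your $V$), and both conclude by cancelling the mass shared with the safe zone so that the surplus of $W$ over $V$ is forced into $\bar I^{\delta}$. Your version merely makes the paper's rather terse cancellation step explicit, and your closing remark that normalisation of $P(\cdot\pipe\lambda)$ is never used is accurate.
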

\begin{proof}
Let us denote the smoothed maximum a posteriori interval as $I^{*} = [\tau^{*}_{\mathrm{SMAP}}(\lambda) - \delta,\tau^{*}_{\mathrm{SMAP}}(\lambda) + \delta]$. Then, by definition of the smoothed maximum a posteriori estimate, we have that
\begin{align}
    \int_J \diff t \, P(t \pipe \lambda) \leq \int_{I^{*}} \diff t \, P(t \pipe \lambda)
\end{align}
for all compact intervals $J$ of cardinality $|J| = 2\delta$. We can exploit this and optimize over all intervals $J$ that contain the target interval $I$ to obtain
\begin{align}
    \int_I \diff t \, P(t \pipe \lambda)\leq \inf_{J \colon |J|=2\delta, I \subseteq J} \int_{I^{*} \backslash J} \diff t \, P(t \pipe \lambda),
\end{align}
where we implicitly made use of the assumption that $I$ lies outside of $I^{*}$, \textit{i.e.}, that $I \cap I^{*} = \emptyset$. By construction, $\Bar{I}^{\delta}$ is the complement of the union of all possible $J$ of cardinality $2\delta$ that contain $I$, which is exactly what we achieve as well on the right hand side by choosing the smallest interval $I^{*} \backslash J$, making the interval achieving the optimization a subinterval of $\Bar{I}^{\delta}$. Extending the integration to all of $\Bar{I}^{\delta}$ yields the statement of the lemma.
\end{proof}

The further study of upper and lower bounds for the smoothed maximum a posteriori strategy would be a promising direction for future research, especially to relate to concepts of classical statistics.

\section{Relation to hypothesis testing}\label{ssec:bin_hyp_testing}

\subsection{Multi-hypothesis testing as a special case of metrology}\label{ssec:mht_as_metrology}
The notions introduced above can be considered as a continuous generalization of the discrete multi-hypothesis testing problem for quantum states and quantum channels, respectively. In the multi-hypothesis testing problem for quantum states, one is given a set of states $\{ \rho_i \}_{i=1}^m$ -- in the Bayesian setting with associated prior probabilities $\{ p_i \}_{i=1}^m$ -- and is tasked to find a measurement given by POVM effects $\{ Q_i \}_{i=1}^m$ that maximizes the success probability~\cite{khatri2020principles,audenaert2014upper}
\begin{align}
    P_s(\{ p_i \rho_i\}_{i=1}^m) &= \sup_{\{ Q_i \}} \sum_{i=1}^m p_i \Tr[\rho_i Q_i].
\end{align}
In a similar way, one can define the associated minimax multi-hypothesis testing problem where we desired to find a measurement 
\begin{align}
    \overline{P}_s(\{ \rho_i\}_{i=1}^m) &= \sup_{\{ Q_i \}} \min_i \Tr[\rho_i Q_i]
\end{align}
with optimal worst-case performance.
The \emph{hypothesis testing} has in the binary case already been solved by Helstrom and Holevo \cite{helstrom1969quantum,Hol72}. In particular, the optimal success probability has been determined by them in seminal work to be
\begin{align}
    P_s(p \rho_1, (1-p)\rho_2) = \frac{1}{2} + \frac{1}{2}\lVert p \rho_1 - (1-p) \rho_2 \rVert_1,
\end{align}
where 
$\lVert\cdot\rVert_1$ denotes the trace or nuclear norm. In general, it is known that~\cite{koenig09operationalminmax}
\begin{equation}
    P_s(\{p_i\rho_i\}_{i=1}^m)=2^{-H_{\min}(X|B)_{\rho}},
\end{equation}
where $H_{\min}(X|B)=-\inf_{\substack{\sigma_B\geq 0,\Tr[\sigma_B]=1}} D_{\max}(\rho_{XB}\Vert\mathbb{I}_X\otimes\sigma_B)$ is the conditional min-entropy and $\rho_{XB}=\sum_{i=1}^mp_i\ketbra{i}{i}\otimes\rho_i$. It has been shown in Ref.~\cite{li2016discriminating} that the asymptotic rates for the Bayesian and minimax multi-hypothesis testing problem coincide and are given by the minimal pairwise Chernoff divergence
\begin{align}
    \overline{R}(\{ \rho_i\}_{i=1}^m) = \lim_{n\to\infty}-\frac{1}{n}\log (1-\overline{P}_s(\{  \rho_i\}_{i=1}^m)) =  -\log \min_{i\neq j}\min_{0\leq s \leq 1}\Tr [ \rho_i^{s} \rho_j^{1-s}].
\end{align}
It now becomes clear that our notion of success for quantum metrology encompasses the quantum multi-hypothesis testing problem when we consider the following metrological problem that embeds a multi-hypothesis testing problem. Consider the following prior over states,
\begin{align}\label{eqn:prior_for_embedding_discrete_problem}
    \mu(t) = \sum_{i=1}^m p_i \delta(t - i),
\end{align}
together with any parametrized state $\rho(t)$ such that $\rho(i) = \rho_i$ and a window function
\begin{align}\label{eqn:window_for_embedding_discrete_problem}
    w_{1/3}(t) = \begin{cases} 1 & \text{if } |t|\leq 1/3 \\ 0 & \text{else.}\end{cases}
\end{align}
It is obvious that
\begin{align}
    \eta^{*}(w_{1/3}, \mu, \rho) &= P_s(\{ p_i \rho_i \}_{i=1}^m).
\end{align}
Later in this manuscript, we make use of this property to derive upper bounds on the metrological success probability from binary state discrimination. 

In a similar way, we can consider the problem of multi-hypothesis testing for quantum channels. In this case, a discrete set of quantum channels \smash{$\{ \calN_i \}_{i=1}^m$}, possibly again with prior probabilities \smash{$\{p_i\}_{i=1}^m$}, is given and the optimal experimental prescription for distinguishing between these quantum channels is to be found. In the single-copy case, we need to find an input state $\rho_0$ and a measurement $\{Q_i\}_{i=1}^m$ that maximizes the success probability
\begin{align}
    P_{s}(\{ p_i \calN_i \}_{i=1}^m) = \sup_{\rho_0,\{ Q_i \}} \sum_{i=1}^m p_i \Tr [\calN_i[\rho_0] Q_i ].
\end{align}
Using again the prior $\mu(t)$ of Eq.~\eqref{eqn:prior_for_embedding_discrete_problem} together with any parametrized channel $\calN(t)$ such that $\calN(i) = \calN_i$ and the window function $w_{1/3}$ of Eq.~\eqref{eqn:window_for_embedding_discrete_problem}, we see that the success probability is given by the corresponding metrological success probability:
\begin{align}
    \eta^*(w_{1/3}, \mu, \calN) = P_{s}(\{ p_i \calN_i \}_{i=1}^m).
\end{align}
As we outlined in the preceding section, in the setting where multiple copies are available, there are different possible ways of using the quantum channel in question, corresponding to the i.i.d.\ case where the same input state is used repetitively, the separable case where only separable states are used as inputs, the parallel case where an entangled state is prepared and fed through the quantum channel and the adaptive case where a quantum comb is used. The success probabilities we defined in these cases naturally generalize the same notions available in the multi-hypothesis testing problem for channels.

\subsection{Upper bound on success probability from multi-hypothesis testing}\label{ssec:upper_bound_state_discr}
In this section, we prove Theorem~\ref{thm:succ_prob_upper_bound_mht_delta_window} of the main text and discuss its extensions to arbitrary window functions and quantum channels.

\begin{stheorem}\label{sthm:succ_prob_upper_bound_mht}
For a given window function $w$, fix any set $\calS = \{(\lambda, s)\}$ of prior probabilities $\lambda \geq 0$ and shifts $s \in \bbR$ such that $\sum_{\lambda \in \calS} \lambda = 1$. Then, for a state set $\rho(t)$, possibly with prior $\mu(t)$, we have the upper bounds
\begin{align}
    \eta^{*}(w, \mu, \rho) &\leq K \int \diff t \, P^{*}_s(\{ \lambda \, \mu(t+s) \rho(t+s)\}_{(\lambda, s) \in \calS}), \\
    \overline{\eta}^{*}(w, \rho) &\leq K  \inf_t \overline{P}^{*}_s(\{  \rho(t+s)\}_{s\in \calS}),
\end{align}
where we introduced the constant
\begin{align}\label{eqn:def_K_upper_bound}
    K \coloneqq \sup_t  \left\{\sum_{s \in \calS} w(t + s) \right\},
\end{align}
which measures the overlap of the windows for the different shifts.
\end{stheorem}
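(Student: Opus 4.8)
The plan is to prove the Bayesian bound by a reduction that converts any metrology POVM into a feasible strategy for each of the pointwise multi-hypothesis tests, and then to obtain the minimax bound by running the same construction at a single base point. Throughout I use the convolution form $\eta(w,\mu,\rho,Q)=\int\diff t\,\mu(t)\,\Tr[\rho(t)\,(w*Q)(t)]$ established in the main text, where $(w*Q)(t)=\int\diff\tau\,w(t-\tau)Q(\tau)$.

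First I would fix an arbitrary metrology POVM $Q(\tau)$ and, for every base point $t$, define the candidate hypothesis-testing effects
\[
\hat{Q}^{(t)}_s \coloneqq \tfrac1K\,(w*Q)(t+s) = \tfrac1K\int\diff\tau\,w(t+s-\tau)\,Q(\tau),\qquad s\in\calS .
\]
These are manifestly positive semidefinite, and the crucial normalization check is
\[
\sum_{s\in\calS}\hat{Q}^{(t)}_s = \tfrac1K\int\diff\tau\,\Bigl(\textstyle\sum_{s\in\calS} w(t+s-\tau)\Bigr)Q(\tau) \;\le\; \tfrac1K\,K\int\diff\tau\,Q(\tau)=\bbI ,
\]
where the inequality is precisely the definition $K=\sup_u\sum_{s\in\calS}w(u+s)$ applied with $u=t-\tau$. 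Hence $\{\hat{Q}^{(t)}_s\}_s$ is a sub-normalized POVM. Completing it to a genuine POVM $\{Q^{(t)}_s\}$ by assigning the deficiency $\bbI-\sum_s\hat{Q}^{(t)}_s\ge 0$ to a single label only raises each trace, $\Tr[\rho(t+s)Q^{(t)}_s]\ge\Tr[\rho(t+s)\hat{Q}^{(t)}_s]$, because the states and weights are non-negative. This completion step, and the appearance of $K$ as the window-overlap constant, is the one place that needs care and is really the technical heart of the argument.

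Next, since $\{Q^{(t)}_s\}$ is feasible for the test between $\{\lambda\,\mu(t+s)\,\rho(t+s)\}$, the definition of $P_s^*$ gives
\[
K\,P_s^*\bigl(\{\lambda\,\mu(t+s)\,\rho(t+s)\}_{(\lambda,s)\in\calS}\bigr)\;\ge\;\sum_{s\in\calS}\lambda\,\mu(t+s)\,\Tr\bigl[\rho(t+s)\,(w*Q)(t+s)\bigr].
\]
I would then integrate over $t$ and, for each fixed $s$, apply the change of variables $u=t+s$, which removes all dependence on $s$ from the integrand; using $\sum_{s}\lambda=1$ this collapses the right-hand side to exactly $\int\diff t\,\mu(t)\,\Tr[\rho(t)(w*Q)(t)]=\eta(w,\mu,\rho,Q)$. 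Taking the supremum over $Q$, against a $Q$-independent right-hand side, yields the Bayesian inequality.

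Finally, for the minimax bound I would rerun the construction at a single base point $t_0$. For any $t_0$ and $Q$, each $\Tr[\rho(t_0+s)(w*Q)(t_0+s)]\ge\inf_t\Tr[\rho(t)(w*Q)(t)]=\overline{\eta}(w,\rho,Q)$, so the completed POVM $\{Q^{(t_0)}_s\}$ certifies
\[
\overline{P}_s^*\bigl(\{\rho(t_0+s)\}_{s\in\calS}\bigr)\;\ge\;\min_{s\in\calS}\Tr[\rho(t_0+s)Q^{(t_0)}_s]\;\ge\;\tfrac1K\,\overline{\eta}(w,\rho,Q).
\]
Taking $\inf_{t_0}$ and then $\sup_Q$ gives $\overline{\eta}^*(w,\rho)\le K\inf_t\overline{P}_s^*(\{\rho(t+s)\})$, which is the clean analogue of the prior-optimization remark in the main text. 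I would close by noting that Theorem~\ref{thm:succ_prob_upper_bound_mht_delta_window} is the special case $w=w_\delta$ with $|s-s'|>2\delta$: non-overlapping rectangular windows make at most one term of $\sum_s w_\delta(u+s)$ nonzero (and equal to one), forcing $K=1$. The only genuine obstacle is the bookkeeping in the completion/monotonicity step; everything else is a change of variables and the defining inequalities of $P_s^*$ and $\overline{P}_s^*$.
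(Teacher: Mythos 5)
Your proposal is correct and follows essentially the same route as the paper's proof: rescale the windowed POVM by $1/K$ to obtain the effects $\tfrac1K(w*Q)(t+s)$, verify sub-normalization from the definition of $K$, use these as feasible candidates in the multi-hypothesis test, and undo the shifts by a change of variables (with the prior optimization handling the minimax case). Your explicit completion of the sub-normalized family to a genuine POVM is a small tidying-up of a step the paper leaves implicit, but it does not change the argument.
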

\begin{proof}
First, we recall the definition of the optimal multi-hypothesis testing success probability for a set of operators $\{ A_i\}$:
\begin{align}
    P^{*}_s(\{ A_i\}) &\coloneqq \sup_{\substack{0 \leq Q_i \leq \bbI \\ \sum_i Q_i = \bbI}} \sum_i \Tr[ A_i Q_i ].
\end{align}
We exploit that we can shift the time axis of the integration that computes the success probability arbitrarily, to observe that
\begin{align}
    \eta(w, \mu, \rho, Q)  = \sum_{(\lambda,s)\in\calS} \int \diff t \, \lambda \Tr[ \mu(t+s) \rho(t+s) (w * Q)(t+s)].
\end{align}
Using the definition of $K$ given in the theorem statement, we see that defining the operators
\begin{align}
    Q_s(t) \coloneqq \frac{1}{K} (w * Q)(t+s)
\end{align}
yields a valid sub-normalized POVM for all $t$ as
\begin{align}
    \sum_{s \in \calS} Q_s(t) &= \frac{1}{K}\sum_{s \in \calS} (w * Q)(t + s) \\
     \nonumber
    &\leq \frac{1}{K} \left(\left[ \sum_{s \in \calS} w(\cdot + s) \right] * Q\right)(t) \\
     \nonumber
    &\leq \frac{1}{K} ( K * Q )(t) \\
     \nonumber
    &= (1 * Q)(t) \\
     \nonumber
&= \bbI. \nonumber
\end{align}
This means that the operators $\{ Q_s \}_{s \in \calS}$ can serve as a candidate POVM in the optimization that computes $P_s( \{ \lambda \, \mu(t+s) \rho(t+s) \}_{(\lambda,s)\in \calS})$,and hence
\begin{align}
    \eta(w, \mu, \rho, Q)\leq K \int \diff t \, P^{*}_s(\{ \lambda \, \mu(t+s) \rho(t+s)\}_{(\lambda, s) \in \calS}\})
\end{align}
which implies the first statement of the theorem as the upper bound is independent of the chosen POVM $Q(t)$.

The minimax statement is derived in a similar fashion, observing that we can also apply the time shifting trick to obtain
\begin{align}
    \overline\eta(w, \rho, Q)  &=  \sum_{(\lambda,s)\in\calS} \lambda \inf_t  \Tr[ \rho(t+s) (w * Q)(t+s)] \\
    &\leq \inf_t \sum_{(\lambda,s)\in\calS} \lambda \Tr[ \rho(t+s) (w * Q)(t+s)].\nonumber
\end{align}
Here, we again make the argument that the $\{Q_s\}_{s \in \calS}$ form a candidate POVM and then optimize over all possible $\lambda$ to obtain the theorem statement.
\end{proof}
Note that if $K$ in the above theorem is larger than the inverse success probability, then the bound becomes vacuous. This means, as the success probability asymptotically approaches 1, any bound that should work asymptotically must have $K = 1$. Let us now prove the Theorem from the main text:
\begin{proof}[Proof of Theorem~\ref{thm:succ_prob_upper_bound_mht_delta_window}]
    In the case of a rectangular window with tolerance $\delta$, we have that as long as $|s - s'| >2\delta$ for any two shifts in $\calS$, that the rectangular windows do
     not overlap. Therefore, under the assumptions of Theorem~\ref{thm:succ_prob_upper_bound_mht_delta_window}, we have that $K = 1$ and the statement therefore directly follows from Theorem~\ref{sthm:succ_prob_upper_bound_mht}.
\end{proof}

Next, we present a corollary of Theorem~\ref{sthm:succ_prob_upper_bound_mht} that extends the statement to metrology protocols defined with respect to channels. In this setting, the success probability is defined as a joint optimization over the input state and the POVM, possibly using an ancillary system:
\begin{align}
    P^{*}_s(\{ p_i, \calN_i\}) &\coloneqq \sup \left\{ \left.\sum_i p_i \Tr[ (\bbI \otimes \calN_i)[\rho_0] Q_i ] \, \right| \, \rho_0 \geq 0, \Tr[\rho_0] = 1, 0 \leq Q_i \leq \bbI , \sum_i Q_i = \bbI\right\}.
\end{align}
The minimax success probability is defined analogously by optimizing the minimum over $i$. With these notions in place, we obtain the following statement:
\begin{scorollary}[Upper bound on success probability]\label{scorr:succ_prob_upper_bound_delta_window_multi_hypothesis_channel}
For a given window function $w$, fix any set $\calS = \{(\lambda, s)\}$ of prior probabilities $\lambda \geq 0$ and shifts $s \in \bbR$ such that $\sum_{\lambda \in \calS} \lambda = 1$. Then, for a channel set $\calN(t)$, possibly with prior $\mu(t)$, we have the upper bounds
\begin{align}
    \eta^{*}(\delta, \mu, \calN) &\leq K \int \diff t \, P^{*}_s(\{ \lambda \, \mu(t+s) \rho(t+s)\}_{(\lambda, s) \in \calS}), \\
    \overline{\eta}^{*}(\delta, \calN) &\leq K  \inf_t \overline{P}^{*}_s(\{  \rho(t+s)\}_{s\in \calS}),
\end{align}
where the constant $K$ is defined in Eq.~\eqref{eqn:def_K_upper_bound}.
\end{scorollary}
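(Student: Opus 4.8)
The plan is to reduce the channel statement to the already-proved state statement of Theorem~\ref{sthm:succ_prob_upper_bound_mht} by \emph{freezing} the estimation strategy immediately before its final measurement. In the single-copy setting made explicit just above the corollary, a channel metrology protocol consists of a probe state $\rho_0$ (possibly entangled with an ancilla) and a POVM $Q(\tau)$, and the success probability is
\begin{align}
    \eta(w, \mu, \calN, \rho_0, Q) = \int \diff \mu(t) \, \diff \tau \, w(t - \tau) \Tr[(\bbI \otimes \calN(t))[\rho_0] Q(\tau)].
\end{align}
The first step is to observe that, for a fixed probe $\rho_0$, the operators $\sigma_{\rho_0}(t) \coloneqq (\bbI \otimes \calN(t))[\rho_0]$ form a one-parameter family of states, so that $\eta(w, \mu, \calN, \rho_0, Q)$ is nothing but the \emph{state} success probability $\eta(w, \mu, \sigma_{\rho_0}, Q)$. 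Consequently $\eta^{*}(w, \mu, \calN) = \sup_{\rho_0} \eta^{*}(w, \mu, \sigma_{\rho_0})$, and each inner term is controlled by Theorem~\ref{sthm:succ_prob_upper_bound_mht}.

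Second, I would apply Theorem~\ref{sthm:succ_prob_upper_bound_mht} to each frozen family $\sigma_{\rho_0}$, yielding
\begin{align}
    \eta^{*}(w, \mu, \sigma_{\rho_0}) \leq K \int \diff t \, P^{*}_s(\{ \lambda \, \mu(t+s) \, \sigma_{\rho_0}(t+s)\}_{(\lambda, s) \in \calS}).
\end{align}
The crucial comparison is then that, for every fixed $t$, the state discrimination problem on $\{\lambda \mu(t+s) \sigma_{\rho_0}(t+s)\}$ is exactly the channel discrimination problem on $\{\lambda \mu(t+s), \calN(t+s)\}$ \emph{with the probe pinned to $\rho_0$}; since the channel quantity $P^{*}_s$ additionally optimizes over the probe, one has $P^{*}_s(\{\lambda \mu(t+s) \sigma_{\rho_0}(t+s)\}) \leq P^{*}_s(\{\lambda \mu(t+s), \calN(t+s)\})$. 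The right-hand side is independent of $\rho_0$, so integrating over $t$ and taking the supremum over $\rho_0$ on the left yields the Bayesian bound. The minimax bound follows the same template with the minimax half of Theorem~\ref{sthm:succ_prob_upper_bound_mht}: freezing the probe gives $\overline{\eta}^{*}(w, \calN) = \sup_{\rho_0} \overline{\eta}^{*}(w, \sigma_{\rho_0})$, the theorem supplies $\overline{\eta}^{*}(w, \sigma_{\rho_0}) \leq K \inf_t \overline{P}^{*}_s(\{\sigma_{\rho_0}(t+s)\}_{s\in\calS})$, and the probe-pinning comparison $\overline{P}^{*}_s(\{\sigma_{\rho_0}(t+s)\}) \leq \overline{P}^{*}_s(\{\calN(t+s)\})$ holds for each $t$, hence for the infimum and for the remaining supremum over $\rho_0$.

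For the general $n$-copy case under a fixed access model $\sfS_n$ (adaptive, parallel, or indefinite causal order), the same argument applies verbatim once ``freezing the probe'' is replaced by ``freezing the entire comb up to the final measurement'': executing every preparation and intermediate processing step of the strategy on the $n$ copies of $\calN(t)$ produces a parametrized state on the pre-measurement register, to which Theorem~\ref{sthm:succ_prob_upper_bound_mht} applies directly. The point requiring the most care, and the main obstacle, is verifying that the frozen comb is itself an admissible strategy for the \emph{channel} discrimination problem in the \emph{same} access model, so that the resulting state-discrimination value is genuinely dominated by the access-model channel quantity $P^{*}_s(\cdot, \sfS_n)$; this relies on the convex comb description of access models from Section~\ref{ssec:optimization} and on the fact that discarding the final POVM and re-attaching an optimal discrimination measurement keeps the strategy inside $\sfS_n$. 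The interchange of the probe (or comb) supremum with the $t$-integral in the Bayesian case and with the $t$-infimum in the minimax case is then immediate, since the channel-discrimination bound no longer depends on the frozen object.
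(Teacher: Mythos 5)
Your proposal is correct and follows essentially the same route as the paper: the paper's proof likewise applies Theorem~\ref{sthm:succ_prob_upper_bound_mht} to the state family $\rho(t) = (\bbI \otimes \calN(t))[\rho_0]$ induced by the (optimal) probe and then observes that the channel discrimination quantity $P^{*}_s$ dominates the probe-pinned state discrimination quantity because it additionally optimizes over $\rho_0$. Your version is somewhat more carefully spelled out (working with a uniform bound over all probes rather than assuming an optimal probe is attained, and sketching the comb-freezing extension to general access models, which the paper only mentions in passing), but the underlying argument is the same.
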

\begin{proof}
Theorem~\ref{sthm:succ_prob_upper_bound_mht} is valid for any set of states, which means it also applies when $\rho(t) = (\bbI \otimes \calN(t))[\rho_0]$ for the optimal probe state $\rho_0$. The fact that the optimal success probability for discriminating quantum channels is obtained by optimizing over $\rho_0$ and $Q(t)$ implies the corollary.
\end{proof}
Similar statements are readily obtained for adaptive discrimination of multiple channel copies and other variants of the channel metrology task.

\subsection{Upper bound via binary hypothesis testing}\label{ssec:upper_bound_from_mht}
In this section, we will derive some bounds using tools from symmetric hypothesis testing. Our first step is to derive a quantum analog of a Bretagnolle-Huber inequality (see \textit{e.g.},   Ref.~\cite{lumbreras_multi-armed_2022})
\begin{stheorem}[Binary hypothesis testing lower bound]\label{sthm:hyp_test_lower_bound}
Let $\rho$ and $\sigma$ be two quantum states and $0 \leq \lambda \leq 1$ a prior probability. Then, the optimal binary hypothesis testing error can be bounded from below via the fidelity as
\begin{align}
    P_e^{*}(\lambda \rho, (1-\lambda) \sigma) \geq \lambda(1-\lambda) F(\rho, \sigma)^2 = \lambda(1-\lambda) \exp\left( - \tilde{D}_{1/2}(\rho \, \lVert \, \sigma)\right),
\end{align}
where $\tilde{D}_{\alpha}$ denotes the sandwiched Rényi-relative entropy.
\end{stheorem}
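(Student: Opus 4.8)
The plan is to reduce the statement to a weighted Fuchs--van de Graaf inequality and then close the gap with a single elementary scalar estimate. First I would invoke the Helstrom formula for symmetric binary discrimination, which (using $\Tr[\lambda\rho] + \Tr[(1-\lambda)\sigma] = 1$) gives
\begin{align}
    P_e^{*}(\lambda\rho, (1-\lambda)\sigma) = \frac{1}{2}\left(1 - \lVert \lambda\rho - (1-\lambda)\sigma\rVert_1\right).
\end{align}
In view of this, it suffices to prove the weighted distinguishability bound
\begin{align}\label{eq:weighted_fvdg_plan}
    \lVert \lambda\rho - (1-\lambda)\sigma\rVert_1 \leq \sqrt{1 - 4\lambda(1-\lambda)F(\rho,\sigma)^2}.
\end{align}
Indeed, substituting \eqref{eq:weighted_fvdg_plan} into the Helstrom expression and applying the scalar inequality $\sqrt{1-x} \leq 1 - x/2$, valid for $x \in [0,1]$, with $x = 4\lambda(1-\lambda)F(\rho,\sigma)^2 \in [0,1]$ (since $4\lambda(1-\lambda)\leq 1$ and $F\leq 1$), immediately yields $P_e^{*} \geq \tfrac{1}{2}\cdot\tfrac{x}{2} = \lambda(1-\lambda)F(\rho,\sigma)^2$.

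The core of the argument is thus \eqref{eq:weighted_fvdg_plan}, which I would establish by purification. By Uhlmann's theorem there exist purifications $\ket{\psi}$ and $\ket{\phi}$ of $\rho$ and $\sigma$ on an enlarged space with $|\braket{\psi}{\phi}| = F(\rho,\sigma)$. For these \emph{pure} states the Hermitian operator $M = \lambda\ketbra{\psi}{\psi} - (1-\lambda)\ketbra{\phi}{\phi}$ is supported on the at most two-dimensional span of $\ket{\psi},\ket{\phi}$, so its trace norm is fixed by its trace and the trace of its square. A short computation gives $\Tr[M] = 2\lambda - 1$ and $\Tr[M^2] = \lambda^2 + (1-\lambda)^2 - 2\lambda(1-\lambda)F(\rho,\sigma)^2$, whence the two relevant eigenvalues have product $\det M = -\lambda(1-\lambda)(1 - F(\rho,\sigma)^2) \leq 0$. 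Opposite-sign eigenvalues give $\lVert M\rVert_1 = \sqrt{\Tr[M]^2 - 4\det M} = \sqrt{1 - 4\lambda(1-\lambda)F(\rho,\sigma)^2}$, where I used $(2\lambda-1)^2 = 1 - 4\lambda(1-\lambda)$. Finally, tracing out the purifying system is a CPTP map, which contracts the trace norm of the Hermitian operator $M$, so $\lVert\lambda\rho - (1-\lambda)\sigma\rVert_1 = \lVert\Tr_E[M]\rVert_1 \leq \lVert M\rVert_1$, which is exactly \eqref{eq:weighted_fvdg_plan}.

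The asserted identity with the sandwiched Rényi relative entropy then follows from the definition in Eq.~\eqref{eqn:def_sandwiched_renyi_relative_entropy}: evaluating at $\alpha = 1/2$ gives $\tilde{D}_{1/2}(\rho\fatpipe\sigma) = -2\log F(\rho,\sigma)$, so that $\exp(-\tilde{D}_{1/2}(\rho\fatpipe\sigma)) = F(\rho,\sigma)^2$, consistent with Eq.~\eqref{eqn:sandwiched_renyi_12_fidelity}. I expect the only genuine obstacle to be the verification of \eqref{eq:weighted_fvdg_plan}: the pure-state case must be computed carefully (the reduction to a two-by-two determinant and the sign $\det M \leq 0$), and one must correctly invoke Uhlmann's theorem to match the purification overlap to the fidelity and combine it with contractivity of the trace norm under partial trace; the remaining two steps are one-line scalar estimates. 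As a sanity check I would confirm tightness in the degenerate cases, namely $\rho=\sigma$ (both sides equal $|2\lambda-1|$) and orthogonal states at $\lambda=\tfrac12$ (both sides equal $1$), which also pins down the constant $\lambda(1-\lambda)$.
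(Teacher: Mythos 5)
Your proof is correct, but it takes a genuinely different route from the paper's. The paper argues via the data-processing inequality for the fidelity applied to the two-outcome classical channel induced by the optimal Helstrom measurement: writing $A=\lambda\rho$ and $B=(1-\lambda)\sigma$, it maps $A\oplus B$ and $B\oplus A$ to $P_s^{*}\oplus P_e^{*}$ and $P_e^{*}\oplus P_s^{*}$, obtains $2\sqrt{\lambda(1-\lambda)}\,F(\rho,\sigma)\leq 2\sqrt{P_s^{*}P_e^{*}}$, and then drops $P_s^{*}\leq 1$. You instead combine the Helstrom formula with a weighted Fuchs--van de Graaf inequality, proved by Uhlmann purification together with a rank-two spectral computation, and then relax with $\sqrt{1-x}\leq 1-x/2$. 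It is worth noting that the two arguments secretly establish the same sharp intermediate bound $P_e^{*}\geq\tfrac{1}{2}\bigl(1-\sqrt{1-4\lambda(1-\lambda)F(\rho,\sigma)^2}\bigr)$ (substitute $P_s^{*}=1-P_e^{*}$ in the paper's inequality and solve the quadratic), and both then give away the same slack to reach the constant $\lambda(1-\lambda)F(\rho,\sigma)^2$. Your route is more self-contained -- it needs only Uhlmann's theorem and contractivity of the trace norm under partial trace, not the data-processing inequality for the fidelity of subnormalized positive operators -- at the price of the explicit two-dimensional determinant computation; the paper's route is shorter and transfers directly to the multi-hypothesis setting, where the same channel construction is reused in Theorem~\ref{sthm:multi_hyp_test_lower_bound_continuous}. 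All your intermediate computations check out, including the degenerate cases. One small remark on the last step: your evaluation $\tilde{D}_{1/2}(\rho\fatpipe\sigma)=-2\log F(\rho,\sigma)$ is the correct consequence of the definition in Eq.~\eqref{eqn:def_sandwiched_renyi_relative_entropy} and matches the theorem statement, but it is \emph{not} consistent with the prefactor displayed in Eq.~\eqref{eqn:sandwiched_renyi_12_fidelity}, which reads $-\tfrac{1}{2}\log F$; the discrepancy is in that displayed equation, not in your computation.
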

\begin{proof}
We employ a strategy similar to the proof of Lemma 17 of Ref.~\cite{cheng_simple_2022} (compare also the proof of an analogue classical result in Ref.~\cite{lumbreras_multi-armed_2022}). To this end, we denote with $A = \lambda \rho$ and $B = (1-\lambda) \sigma$ and write the optimal hypothesis testing success and error probabilities as
\begin{align}
    P_s^{*}(A,B) &= \Tr[ A \Pi_A ] + \Tr[ B \Pi_B] ,\\
    P_e^{*}(A,B) &= \Tr[ A \Pi _B ] + \Tr[ B \Pi_A ], 
\end{align}
where $\Pi_A$ and $\Pi_B = \bbI - \Pi_A$ are the optimal POVM effects.
Then, we define a CPTP map
\begin{align}
    \Lambda( X \oplus Y ) = \Tr[ X \Pi_A \oplus Y \Pi_B ] \oplus \Tr[ X \Pi_B \oplus Y \Pi_A]
\end{align}
such that
\begin{align}
    \Lambda( A \oplus B ) &= P_s^{*}(A, B) \oplus P_e^{*}(A, B), \\
    \Lambda( B \oplus A ) &= P_e^{*}(A, B) \oplus P_s^{*}(A, B).
\end{align}
The data-processing property of the fidelity implies that
\begin{align}
    F( {A}  \oplus {B},  {B} \oplus {A}) = 2 \sqrt{\lambda (1-\lambda)} F(\rho, \sigma) 
    \leq F( \Lambda[{A}  \oplus {B} ], \Lambda[{B} \oplus {A}] ) 
    = 2 \sqrt{P_s^{*}(A,B) P_e^{*}(A,B)}.
\end{align}
Using $P_s^{*}(A,B) \leq 1$ and the definition of the sandwiched Rényi relative entropy then yields the statement of the Theorem.

\end{proof}
We can use the above theorem to deduce the following lower bound for metrology:
\begin{stheorem}[Two-point error probability lower bound]
For a given tolerance $\delta$ and a set of states $\rho(t)$, we have the lower bound 
\begin{align}
    1 - \overline{\eta}(\overline{\delta}, \rho) \geq \frac{1}{4}\exp\left(-\inf_{|t-t'| > 2\delta} D_{1/2}(\rho(t) \,\|\, \rho(t')) \right)
\end{align}
on the minimax success probability.
\end{stheorem}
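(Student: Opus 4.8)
The plan is to chain the two-point reduction of Corollary~\ref{thm:success-probability-upper-bound-via-two-point-method} together with the fidelity-based binary discrimination bound of Theorem~\ref{sthm:hyp_test_lower_bound}, after first passing from the minimax discrimination error to the symmetric (equal-prior) Bayesian one. I read the left-hand side as the optimal minimax error probability $1-\overline{\eta}^{*}(\delta,\rho)$ and $D_{1/2}$ as the sandwiched R\'enyi relative entropy $\tilde{D}_{1/2}$, since $\overline{\eta}$ without a measurement argument does not parse and the two-point corollary is stated for the optimized quantity.

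First I would invoke Corollary~\ref{thm:success-probability-upper-bound-via-two-point-method}, which gives $\overline{\eta}^{*}(\delta,\rho)\leq\inf_{|t-t'|>2\delta}\overline{P}_s^{*}(\rho(t),\rho(t'))$. Taking complements turns this into a lower bound on the error, namely $1-\overline{\eta}^{*}(\delta,\rho)\geq\sup_{|t-t'|>2\delta}\bigl(1-\overline{P}_s^{*}(\rho(t),\rho(t'))\bigr)$, so it suffices to lower bound the minimax binary discrimination error $\overline{P}_e^{*}(\rho,\sigma)\coloneqq 1-\overline{P}_s^{*}(\rho,\sigma)$ for a single pair and then optimize over the pair.

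Next I would reduce the minimax error to the equal-prior Bayesian error. For any two-outcome POVM $\{Q_1,Q_2\}$ one has $\min(\Tr[\rho Q_1],\Tr[\sigma Q_2])\leq\tfrac{1}{2}(\Tr[\rho Q_1]+\Tr[\sigma Q_2])$; taking the supremum over POVMs on both sides gives $\overline{P}_s^{*}(\rho,\sigma)\leq P_s^{*}(\tfrac{1}{2}\rho,\tfrac{1}{2}\sigma)$, and hence $\overline{P}_e^{*}(\rho,\sigma)\geq P_e^{*}(\tfrac{1}{2}\rho,\tfrac{1}{2}\sigma)$, where I use that the Bayesian success and error probabilities sum to one for binary discrimination. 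Then Theorem~\ref{sthm:hyp_test_lower_bound} applied with prior $\lambda=\tfrac{1}{2}$ yields $P_e^{*}(\tfrac{1}{2}\rho,\tfrac{1}{2}\sigma)\geq\tfrac{1}{4}\exp(-\tilde{D}_{1/2}(\rho\fatpipe\sigma))$, the constant $\tfrac{1}{4}$ being $\lambda(1-\lambda)$ evaluated at $\lambda=\tfrac{1}{2}$.

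Finally I would combine the three steps and rewrite the supremum using $\sup_x\exp(-g(x))=\exp(-\inf_x g(x))$, obtaining $1-\overline{\eta}^{*}(\delta,\rho)\geq\tfrac{1}{4}\sup_{|t-t'|>2\delta}\exp(-\tilde{D}_{1/2}(\rho(t)\fatpipe\rho(t')))=\tfrac{1}{4}\exp\bigl(-\inf_{|t-t'|>2\delta}\tilde{D}_{1/2}(\rho(t)\fatpipe\rho(t'))\bigr)$, which is the claim. No step is genuinely hard; the only point requiring care is the minimax-to-Bayesian reduction, where one must check that the ``min $\leq$ average'' inequality survives the supremum over POVMs and that the complement relation $\overline{P}_e^{*}=1-\overline{P}_s^{*}$ is applied consistently with the sign conventions, so that the single remaining inequality feeds correctly into the complemented two-point bound.
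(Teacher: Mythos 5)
Your proof is correct and follows essentially the same route as the paper: the paper likewise starts from the single-shift/two-point consequence of Theorem~\ref{thm:succ_prob_upper_bound_mht_delta_window}, lower-bounds the minimax discrimination error by the equal-prior Bayesian error, and then applies the fidelity bound of Theorem~\ref{sthm:hyp_test_lower_bound} with $\lambda=\tfrac{1}{2}$ before rewriting $F^2$ via $\tilde{D}_{1/2}$. Your reading of the left-hand side as the optimized quantity and of $D_{1/2}$ as the sandwiched R\'enyi relative entropy matches the paper's intent.
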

\begin{proof}
We start from the upper bound on the minimax success probability derived in Theorem~\ref{thm:succ_prob_upper_bound_mht_delta_window}. Applied to a single time shift, it especially implies that
\begin{align}
    \overline\eta(\delta, \rho) \leq \inf_t \overline{P}_s^{*}(\rho(t), \rho(t + 2\delta)).
\end{align}
This is equivalent to a lower bound on the minimax error
\begin{align}
    1 - \overline\eta(\delta, \rho) \geq \sup_t \overline{P}_e^{*}(\rho(t), \rho(t + 2\delta)).
\end{align}
For the sake of simplicity, we will lower bound the optimal minimax error with the one obtained from a uniform prior, \ie
\begin{align}
    1 - \overline\eta(\delta, \rho) &\geq \sup_{|t-t'| > 2\delta} P_e^{*}\left(\frac{1}{2}\rho(t), \frac{1}{2}\rho(t')\right).
\end{align}
Now, applying Theorem~\ref{sthm:hyp_test_lower_bound} yields
\begin{align}
     1 - \overline\eta(\delta, \rho) &\geq \frac{1}{4} \sup_{|t-t'| > 2\delta} F(\rho(t), \rho(t'))^2 \\
     &= \frac{1}{4} \exp\left(-\inf_{|t-t'| > 2\delta} \tilde{D}_{1/2}(\rho(t) \,\|\, \rho(t')) \right)\\
     &\geq \frac{1}{4} \exp\left(-\inf_{t} \tilde{D}_{1/2}(\rho(t) \,\|\, \rho(t+2\delta)) \right).
\end{align}
\end{proof}
Corollary~\ref{corr:two_point_fidelity_bound} of the main text follows immediately.

\subsection{Upper bound on the success probability from asymmetric hypothesis testing}\label{ssec:upper_bound_succ_prob_asym_ht}

We can use similar reasoning as in the derivation of Theorem~\ref{thm:succ_prob_upper_bound_mht_delta_window} to obtain a lower bound that makes use of \emph{asymmetric} hypothesis testing. In asymmetric (binary) hypothesis testing, the goal is to determine a measurement $\{M,\mathbb{I}-M\}$ that distinguishes between two hypothesis $\rho$ and $\sigma$, such that the so-called type-II error $\Tr[M\sigma]$ is minimized while maintaining an upper bound of $\epsilon\in[0,1]$ on the type-I error probability $\Tr[(\mathbb{I}-M)\rho]$. In particular, the optimal type-II error probability is given by~\cite{khatri2020principles}
\begin{equation}\label{eqn:hyp_test_opt_typeII_error_prob}
    \beta_{\epsilon}(\rho\Vert\sigma)=\inf\{\Tr[M\sigma]:0\leq M\leq\mathbb{I},\,\Tr[M\rho]\geq 1-\epsilon\}.
\end{equation}
The \emph{hypothesis testing relative entropy} is then defined to be the optimal type-II error exponent, namely,
\begin{equation}
    D_{\mathrm{h}}^{\eta}(\rho\Vert\sigma)=-\log\beta_{1-\eta}(\rho\Vert\sigma),
\end{equation}
for $\eta\in[0,1]$.

We can establish the following theorem:
\begin{theorem}[Asymmetric hypothesis testing bound]\label{thm:succ_prob_upper_bound_mht_delta_window_asymmetric}
For a given tolerance $\delta$, fix any set of shifts $\calS = \{s\}$ such that $|s| > 2\delta$ and for all distinct $s, s' \in \calS$ we have that $|s-s'| > 2\delta$. Then, for a state set $\rho(t)$ with prior $\mu(t)$ we have the upper bound
\begin{align*}
    \eta(\delta, \rho, \mu, Q) \leq 1 - \int \diff \mu(t) \, \sum_{s \in \calS}  \beta^{H(t+s)}_{\mathrm{h}}(\rho(t+s) \fatpipe \rho(t)),
\end{align*}
where we defined the shorthand
\begin{align}
    H(t) \coloneqq \Tr[ (w_{\delta}*Q)(t)\rho(t)].
\end{align}
In the minimax case, we have that for all $\overline{\eta} \leq \overline{\eta}^{*}(\delta, \rho)$ that
\begin{align*}
    \overline{\eta}^{*}(\delta, \rho) \leq 1- \max_t \sum_{ s \in \calS} \beta_{\mathrm{h}}^{\overline{\eta}}(\rho(t+s) \fatpipe \rho(t)).
\end{align*}
\end{theorem}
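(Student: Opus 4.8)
The plan is to exploit the fact that, for a fixed ``center'' $t$, the effective measurement operator $(w_{\delta}\ast Q)(t)=\int_{|t-\tau|\le\delta}\diff\tau\,Q(\tau)$ integrates the POVM over the window $[t-\delta,t+\delta]$, and that each hypothesis $\rho(t+s)$ can be tested against $\rho(t)$ using precisely these window operators. First I would record the identity $H(t)=\Tr[(w_{\delta}\ast Q)(t)\rho(t)]$, so that $\eta(\delta,\mu,\rho,Q)=\int\diff\mu(t)\,H(t)$, and observe that for each shift $s\in\calS$ the operator $M_s\coloneqq(w_{\delta}\ast Q)(t+s)$ satisfies $0\le M_s\le\bbI$ and $\Tr[M_s\rho(t+s)]=H(t+s)$. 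Hence $M_s$ is feasible in the optimization defining $\beta_{\mathrm h}^{H(t+s)}(\rho(t+s)\fatpipe\rho(t))$ in Eq.~\eqref{eqn:def_asym_hyp_test_error}, which gives the key lower bound
\begin{align}
\Tr[(w_{\delta}\ast Q)(t+s)\,\rho(t)]\ \ge\ \beta_{\mathrm h}^{H(t+s)}\bigl(\rho(t+s)\fatpipe\rho(t)\bigr).
\end{align}

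The second ingredient is a disjointness argument analogous to the one used in Theorem~\ref{thm:succ_prob_upper_bound_mht_delta_window}. Because $|s|>2\delta$ for every $s\in\calS$ and $|s-s'|>2\delta$ for distinct $s,s'\in\calS$, the windows centered at $t$ and at the points $t+s$ are pairwise disjoint, so the indicator functions obey $w_{\delta}(t-\tau)+\sum_{s\in\calS}w_{\delta}(t+s-\tau)\le 1$ for all $\tau$. Convolving with $Q$ and using $\int\diff\tau\,Q(\tau)=\bbI$ yields the operator inequality
\begin{align}
(w_{\delta}\ast Q)(t)+\sum_{s\in\calS}(w_{\delta}\ast Q)(t+s)\ \le\ \bbI.
\end{align}
Taking the trace against $\rho(t)\ge0$ and substituting the feasibility bound above gives, for every $t$,
\begin{align}
H(t)\ \le\ 1-\sum_{s\in\calS}\Tr[(w_{\delta}\ast Q)(t+s)\rho(t)]\ \le\ 1-\sum_{s\in\calS}\beta_{\mathrm h}^{H(t+s)}\bigl(\rho(t+s)\fatpipe\rho(t)\bigr).
\end{align}
Integrating this against $\mu(t)$ produces the Bayesian statement directly.

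For the minimax bound I would take $Q$ to be an optimal (or near-optimal) measurement, so that $H(t')\ge\overline{\eta}^{*}(\delta,\rho)\ge\overline{\eta}$ for all $t'$, and then invoke the monotonicity of $\eta\mapsto\beta_{\mathrm h}^{\eta}(\rho\fatpipe\sigma)$: as the detection constraint $\Tr[M\rho]\ge\eta$ tightens the feasible set shrinks, so $\beta_{\mathrm h}^{\eta}$ is non-decreasing in $\eta$. This lets me replace $\beta_{\mathrm h}^{H(t+s)}$ by the smaller $\beta_{\mathrm h}^{\overline{\eta}}$ in the pointwise bound, so that $H(t)\le 1-\sum_{s}\beta_{\mathrm h}^{\overline{\eta}}(\rho(t+s)\fatpipe\rho(t))$ holds for all $t$; taking the infimum over $t$ of both sides and using $\inf_t H(t)=\overline{\eta}^{*}(\delta,\rho)$ turns the infimum of the right-hand side into $1-\max_t\sum_s\beta_{\mathrm h}^{\overline{\eta}}$. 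The step requiring the most care is this minimax passage: one must apply monotonicity to the \emph{shifted} arguments $H(t+s)$ rather than to $H(t)$ alone, which is exactly what the uniform lower bound $H(t')\ge\overline{\eta}$ afforded by near-optimality supplies, and one should confirm the optimum is attained (or pass to a limiting sequence) so that this uniform bound is genuinely available. The disjointness inequality and the identification of the feasible test $M_s$ are then routine once the window structure is made explicit.
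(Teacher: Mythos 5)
Your proof is correct and follows essentially the same route as the paper's: both identify $(w_{\delta}\ast Q)(t+s)$ as a feasible test operator achieving type-I success $H(t+s)$ (so its type-II error against $\rho(t)$ is lower-bounded by $\beta_{\mathrm h}^{H(t+s)}$), and both use the disjointness of the shifted windows — which you write as the operator inequality $(w_{\delta}\ast Q)(t)+\sum_{s}(w_{\delta}\ast Q)(t+s)\le\bbI$ and the paper writes as $1-w_{\delta}\ge\sum_{s}w_{\delta}(\cdot+s)$ — before integrating against $\mu$ (Bayesian) or invoking monotonicity of $\beta_{\mathrm h}^{\eta}$ in $\eta$ and optimizing over $Q$ (minimax). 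Your flagged caveat about attainment of the optimal measurement in the minimax passage is the same technical point the paper glosses over, and is handled exactly as you suggest.
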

An advantage of this bound is that the asymptotic behavior of the asymmetric hypothesis testing error is better understood than the asymptotics of the symmetric hypothesis testing error and that especially the second-order asymptotics are known.
\begin{proof}
Our strategy consists of using $(w*Q)(t)$ as a candidate POVM effect for the asymmetric hypothesis test. We first treat the minimax case, where the error probability can be expressed as
\begin{align}
    1-\overline{\eta}(\delta, \rho, Q) &= \max_t \Tr[ (\overline{w} * Q)(t) \rho(t)],
\end{align}
where $\overline{w}(t) = 1 - w(t)$ can be seen as the complement of the window function. We now make use of the fact that the set $\calS$ is defined such that we have $\overline{w}(t) \geq \sum_{s \in \calS} w(t+s)$ and hence
\begin{align}
    1-\overline{\eta}(\delta, \rho, Q) &\geq \max_t \sum_{s \in \calS} \Tr[ (w(\cdot + s) * Q)(t) \rho(t)]
\end{align}
As the POVM effect $(w * Q)(t +s)$ achieves $\Tr[ (w * Q)(t+s) \rho(t+s) ] \geq \overline{\eta}(\delta, \rho)$ by definition of the minimax success probability it is a candidate for a binary hypothesis testing and hence
\begin{align}
    1-\overline{\eta}(\delta, \rho, Q) &\geq \max_t \sum_{s \in \calS} \beta_{\mathrm{h}}^{\overline{\eta}(\delta, \rho)}(\rho(t+s) \fatpipe \rho(t)).
\end{align}
The claimed statement follows from the monotonicity of the asymmetric hypothesis testing error and by optimizing the left hand side over the POVM $Q$.
For the Bayesian case, we introduce the notation
\begin{align}
    \overline{\eta}(t) \coloneqq \Tr[ (w * Q)(t) \rho(t)]
\end{align}
such that $\overline{\eta} = \min_t \overline{\eta}(t)$ and $\eta = \int\diff\mu(t) \, \overline{\eta}(t)$. Now, we can use the exact similar reasoning as above and write
\begin{align}
    1 - \overline{\eta}(t) &= \Tr[ (\overline{w} * Q)(t) \rho(t)] \\
    &\geq \sum_{s \in \calS} \beta_{\mathrm{h}}^{\overline{\eta}(t+s)}(\rho(t+s) \fatpipe \rho(t)),
\end{align}
where the only difference is that the argument of the asymmetric hypothesis test is now a function of $t$. We therefore obtain
\begin{align}
    \eta \leq 1 - \int \diff \mu(t) \, \sum_{s \in \calS}  \beta^{\overline{\eta}(t+s)}_{\mathrm{h}}(\rho(t+s) \fatpipe \rho(t)).
\end{align}
\end{proof}

\subsection{Fano-type bounds for quantum multi-hypothesis testing}\label{ssec:bounds_for_quantum_mht}
In this section we derive some bounds for quantum multi-hypothesis testing.

We first establish an analogue of Fano's inequality for quantum multi-hypothesis testing. To this end, we need two lemmas. The first establishes the behavior of the relative entropy under a direct sum:
\begin{slemma}[Relative entropy and direct sum]\label{slem:relative_entropy_and_direct_sum}
We have the identity
\begin{align}
    D\left(\left. \bigoplus_{i=1}^M \mu_i \rho_i \, \right\| \, \bigoplus_{i=1}^M \nu_i \sigma_i \right) &=
    D(\mu \fatpipe \nu) + \sum_{i=1}^M \mu_i D(\rho_i \fatpipe \sigma_i). 
\end{align}
\end{slemma}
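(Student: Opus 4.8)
The plan is to reduce the identity to a direct computation driven by the block-diagonal structure, working from the definition of the (Umegaki) relative entropy $D(X\fatpipe Y) = \Tr[X(\log X - \log Y)]$ for positive semi-definite $X,Y$ with $\supp X \subseteq \supp Y$. I would set $X = \bigoplus_i \mu_i\rho_i$ and $Y = \bigoplus_i \nu_i\sigma_i$, regarded as operators on the orthogonal direct sum $\bigoplus_i \calH_i$ of the underlying Hilbert spaces, with $\mu_i,\nu_i\geq 0$ scalars and $\rho_i,\sigma_i$ density operators.

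First I would record the two elementary facts about the operator logarithm that do all the work. Because a block-diagonal positive operator has a block-diagonal spectral decomposition, functional calculus gives $\log\big(\bigoplus_i A_i\big) = \bigoplus_i \log A_i$. Moreover, within each block the scalar operator $\mu_i\Id_i$ commutes with $\rho_i$, so $\log(\mu_i\rho_i) = (\log\mu_i)\Id_i + \log\rho_i$, where $\Id_i$ is the identity on $\calH_i$. Combining these yields $\log X - \log Y = \bigoplus_i \big[(\log\mu_i - \log\nu_i)\Id_i + (\log\rho_i - \log\sigma_i)\big]$.

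The remaining step is to substitute this into the definition and use that the trace of a block-diagonal operator is the sum of the block traces, so that all cross-block contributions vanish automatically:
\begin{align}
D(X\fatpipe Y) &= \sum_i \Tr\Big[\mu_i\rho_i\big((\log\mu_i - \log\nu_i)\Id_i + \log\rho_i - \log\sigma_i\big)\Big] \nonumber\\
&= \sum_i \mu_i(\log\mu_i - \log\nu_i)\Tr[\rho_i] + \sum_i \mu_i \Tr[\rho_i(\log\rho_i - \log\sigma_i)]. \nonumber
\end{align}
Invoking $\Tr[\rho_i]=1$ collapses the first sum to the classical relative entropy $D(\mu\fatpipe\nu) = \sum_i \mu_i\log(\mu_i/\nu_i)$, while the second is precisely $\sum_i \mu_i D(\rho_i\fatpipe\sigma_i)$, which gives the claim.

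The only subtlety—and the step I would treat with care—is the support and finiteness bookkeeping needed to make the logarithm identity and the above manipulations well defined. When $\mu_i = 0$ the corresponding block drops out, consistent with the conventions $0\log 0 = 0$ and $0\log(0/\nu_i)=0$; if $\mu_i>0$ while $\nu_i=0$ (or more generally $\supp\rho_i\not\subseteq\supp\sigma_i$), then $\supp X\not\subseteq\supp Y$ and both sides are $+\infty$ consistently, so the finite-case computation captures all the nontrivial content. Beyond this accounting, there is no real obstacle: the statement is an immediate consequence of the two logarithm facts above.
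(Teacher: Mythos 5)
Your proof is correct and follows essentially the same route as the paper's: both rest on the block-diagonal splitting of the relative entropy together with the identity $\log(\mu_i\rho_i)=(\log\mu_i)\Id+\log\rho_i$, followed by collecting the scalar and operator contributions. Your additional remarks on support and finiteness conventions are a sensible refinement the paper leaves implicit.
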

\begin{proof}
The proof is a straightforward algebraic manipulation based off the additivity of the matrix logarithm,
\begin{align}
D\left(\left. \bigoplus_{i=1}^M \mu_i \rho_i \, \right\| \, \bigoplus_{i=1}^M \nu_i \sigma_i \right) &= \sum_{i=1}^M D( \mu_i \rho_i \fatpipe \nu_i \sigma_i) \\
&= \sum_{i=1}^M \Tr[ \mu_i \rho_i \{ \log( \mu_i \rho_i) - \log ( \nu_i \sigma_i ) \}] \\
&= \sum_{i=1}^M \Tr[ \mu_i \rho_i \{ \log( \rho_i) - \log ( \sigma_i ) + \log ( \mu_i ) - \log(\nu_i)] \\
&= \sum_{i=1}^M\mu_i \Tr[ \rho_i \{ \log( \rho_i) - \log ( \sigma_i )\}] +\sum_{i=1}^M\mu_i \{ \log ( \mu_i ) - \log(\sigma_i)\} \\
&= D(\mu \fatpipe \nu) + \sum_{i=1}^M \mu_i D(\rho_i \fatpipe \sigma_i).
\end{align}
\end{proof}
The second lemma we need concerns the optimization of the above expression over the prior probabilities $\mu_i$.
\begin{lemma}[Relative-entropy regularized expectation value]\label{slem:entropy_plus_expval_optimization}
Choosing $\mu_i \propto \nu_i e^{- x_i }$ and subsequently normalizing yields
\begin{align}
    \inf_{\mu} \left\{ D(\mu \fatpipe \nu) + \sum_{i=1}^M \mu_i x_i \right\} = -\log \sum_{i=1}^M \nu_i e^{-x_i}.
\end{align}
\end{lemma}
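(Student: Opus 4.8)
The plan is to verify directly that the stated optimizer $\mu_i \propto \nu_i e^{-x_i}$ is a global minimizer, by recognizing the objective as a relative entropy up to an additive constant and invoking its nonnegativity---this is the standard Gibbs variational principle. First I would introduce the normalization constant $Z \coloneqq \sum_{j=1}^M \nu_j e^{-x_j}$ and the candidate distribution $\mu^*_i \coloneqq \nu_i e^{-x_i}/Z$, which is a valid probability distribution since $\sum_i \mu^*_i = 1$ and all entries are nonnegative.

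The key algebraic step is to absorb the linear term $\sum_i \mu_i x_i$ into the logarithm of the relative entropy $D(\mu \fatpipe \nu) = \sum_i \mu_i \log(\mu_i/\nu_i)$. For each $i$ one has
\[
\log \frac{\mu_i}{\nu_i} + x_i = \log \frac{\mu_i}{\nu_i e^{-x_i}} = \log \frac{\mu_i}{\mu^*_i} - \log Z .
\]
Summing this identity against $\mu_i$ and using $\sum_i \mu_i = 1$ yields
\[
D(\mu \fatpipe \nu) + \sum_{i=1}^M \mu_i x_i = D(\mu \fatpipe \mu^*) - \log Z .
\]
Since the relative entropy is nonnegative and vanishes precisely when $\mu = \mu^*$, the right-hand side is minimized at $\mu = \mu^*$ with value $-\log Z = -\log \sum_{i=1}^M \nu_i e^{-x_i}$, which is exactly the claimed identity; this simultaneously shows that the infimum is attained at the stated optimizer $\mu^* \propto \nu_i e^{-x_i}$.

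The main---and essentially only---subtlety will be the degenerate cases: if some $\nu_i = 0$ or some $x_i = +\infty$, then $\mu^*_i = 0$ and those terms must be handled with the usual convention $0\log 0 = 0$, effectively restricting the support of $\mu$ to $\{\, i : \nu_i e^{-x_i} > 0 \,\}$. On that support the displayed argument applies verbatim, while any $\mu$ placing weight on the complement forces $D(\mu \fatpipe \nu) + \sum_i \mu_i x_i = +\infty$ and hence cannot lower the infimum. Everything else is routine, so I would keep the write-up to the two displays above together with a single sentence recording the support convention.
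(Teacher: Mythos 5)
Your proof is correct, but it takes a different route from the paper. The paper finds the optimizer by Lagrange multipliers: it writes the Lagrangian for the constraint $\sum_i \mu_i = 1$, solves the stationarity condition $x_i + 1 + \log(\mu_i/\nu_i) - \lambda = 0$ to get $\mu_i \propto \nu_i e^{-x_i}$, and then substitutes back to evaluate the objective. You instead complete the objective to a relative entropy: the identity $D(\mu \fatpipe \nu) + \sum_i \mu_i x_i = D(\mu \fatpipe \mu^*) - \log Z$ together with $D(\mu\fatpipe\mu^*)\ge 0$ (with equality iff $\mu=\mu^*$) is the Gibbs variational principle. Your version is arguably the stronger write-up: the paper's stationarity argument certifies only a critical point and implicitly relies on convexity of the problem to upgrade that to a global minimum, whereas your decomposition gives a direct global optimality certificate, identifies uniqueness of the minimizer, and cleanly disposes of the boundary cases $\nu_i = 0$ via the support convention. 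The Lagrangian route has the minor advantage of not requiring one to guess the form of $\mu^*$ in advance, but since the lemma statement already announces the optimizer, nothing is lost by verifying it directly as you do.
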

\begin{proof}
We first expand the expression as
\begin{align}
    D(\mu \fatpipe \nu) + \sum_{i=1}^M \mu_i x_i &=
    \sum_{i=1}^M \mu_i \left(x_i + \log \frac{\mu_i}{\nu_i}\right)
\end{align}
and perform Lagrange optimization under the restriction $\sum_{i=1}^M \mu_i = 1$. The Lagrange function is given by
\begin{align}
    L(\mu, \lambda) = \sum_{i=1}^M \mu_i \left(x_i + \log \frac{\mu_i}{\nu_i}\right) + \lambda \left( 1 - \sum_{i=1}^M \mu_i \right).
\end{align}
The Karush-Kuhn-Tucker conditions enforce that $\partial_i L(\mu, \lambda) = 0$ for all $i$, \ie
\begin{align}
    \partial_i L(\mu, \lambda) = x_i + 1 + \log \frac{\mu_i}{\nu_i} - \lambda = 0.
\end{align}
We thus set
\begin{align}
    \mu_i = \nu_i \exp(\lambda - 1 - x_i) \propto \nu_i e^{-x_i}.
\end{align}
The constant $\lambda$ is implicitly chosen such that the above is normalized, which yields
\begin{align}
    \mu_i = \left( \sum_{j=1}^M \nu_j e^{- x_j} \right)^{-1} \nu_i e^{-x_i}.
\end{align}
The value of the optimization problem is then
\begin{align}
    \sum_{i=1}^M \mu_i \left(x_i + \log \frac{\mu_i}{\nu_i}\right) &= \sum_{i=1}^M \mu_i \left(x_i + \log \left( \sum_{j=1}^M \nu_j e^{- x_j} \right)^{-1}  e^{-x_i}\right) \\
    \nonumber
    &=   \sum_{i=1}^M \mu_i \left(- \log  \sum_{j=1}^M \nu_j e^{- x_j} \right) \\
    \nonumber
    &=-  \log  \sum_{j=1}^M \nu_j e^{-x_j}  ,
    \nonumber
\end{align}
as claimed.
\end{proof}
We are now ready to prove the Fano-type bound:
\begin{stheorem}[Fano-type bound for quantum multi-hypothesis testing]\label{sthm:bayesian_fano_type_bound}
Let $\{ \mu_i \rho_i \}_{i=1}^M$ be a quantum multi-hypothesis testing problem. For any reference state $\sigma$, the error probability obeys
\begin{align}
    - \log P_e^{*}( \{ \mu_i \rho_i \}_{i=1}^M  ) \leq \frac{M}{M-1} \left(h(1/M) + D(\mu \fatpipe u) + \sum_{i=1}^M \mu_i D(\rho_i \fatpipe \sigma)  \right),
\end{align}
where $u$ is the uniform distribution over $M$ elements and $h$ is the binary entropy function. Optimizing over $\mu$ yields the following bound on the minimax error probability:
\begin{align}
    - \log \overline{P}_e^{*}( \{  \rho_i \}_{i=1}^M  ) \leq \frac{M}{M-1} \left(h(1/M) -\log \frac{1}{M}\sum_{i=1}^M e^{-D(\rho_i \fatpipe \sigma)} \right).
\end{align}
\end{stheorem}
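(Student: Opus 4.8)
The plan is to reduce everything to a single application of the data-processing inequality for the quantum relative entropy, with the two preceding lemmas serving as bookkeeping tools. First I would rewrite the right-hand side. Introduce the classical--quantum operator $A = \sum_{i=1}^M \mu_i \ketbra{i}{i} \otimes \rho_i$ and the product operator $B = \tfrac{1}{M}\sum_{i=1}^M \ketbra{i}{i}\otimes \sigma$, both of which are normalized states. Applying Lemma~\ref{slem:relative_entropy_and_direct_sum} with reference weights $\nu_i = 1/M$ and identical reference states $\sigma_i = \sigma$ identifies the bracketed quantity as a single relative entropy,
\begin{align}
D(\mu \fatpipe u) + \sum_{i=1}^M \mu_i D(\rho_i \fatpipe \sigma) = D(A \fatpipe B) =: \mathcal{D}.
\end{align}
It therefore suffices to establish the Bayesian bound in the reduced form $\tfrac{M-1}{M}\bigl(-\log P_e^{*}\bigr) \leq h(1/M) + \mathcal{D}$.

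Next I would bring in the optimal measurement. Let $\{Q_j\}_{j=1}^M$ be a POVM attaining $P_e^{*}(\{\mu_i\rho_i\})$, and consider the coarse-graining channel $\Lambda$ that measures the quantum register of $A$ (respectively $B$) with $\{Q_j\}$, compares the outcome $j$ with the classical label $i$, and outputs a single bit recording whether $i=j$. Under $\Lambda$ the operator $A$ is mapped to the binary distribution $(P_s,P_e)$ with $P_e = P_e^{*}$, while $B$ is mapped to $(1/M,(M-1)/M)$, since $\sum_i \Tr[\sigma Q_i] = 1$. Data processing for the relative entropy then gives
\begin{align}
\mathcal{D} = D(A \fatpipe B) \geq d\!\left(P_s \fatpipe \tfrac{1}{M}\right) = \log M - h(P_e) - P_e \log(M-1),
\end{align}
where $d$ is the binary relative entropy; this is the quantum Fano inequality for the problem.

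The remaining and most delicate step is the purely algebraic rearrangement of this Fano inequality into the advertised closed form, and I expect it to be the main obstacle. The idea is to add $h(1/M)$ to both sides and to use convexity of $-\log$ (equivalently, a tangent-line estimate on the binary entropy term) to trade $h(P_e)$ against $-\log P_e$, which is what should produce both the prefactor $M/(M-1)$ and the additive $h(1/M)$. One has to track the constants carefully and identify the regime---small optimal error, where the statement is informative---in which the estimate is applied, so that the inequality is not loosened below the stated form; this is precisely where the precise coefficients must be checked against the reduced goal.

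Finally, the minimax statement follows from the Bayesian one without re-entering the operator analysis. For every prior $\mu$ one has $\overline{P}_e^{*}(\{\rho_i\}) \geq P_e^{*}(\{\mu_i\rho_i\})$, because the worst-case error dominates any average error; hence $-\log \overline{P}_e^{*} \leq -\log P_e^{*}(\{\mu_i\rho_i\})$ and the Bayesian bound applies for each $\mu$. Since the left-hand side is prior-independent, I would minimize the right-hand side over $\mu$, which by Lemma~\ref{slem:entropy_plus_expval_optimization} (with $\nu = u$ and $x_i = D(\rho_i \fatpipe \sigma)$) evaluates to
\begin{align}
\inf_{\mu}\left\{ D(\mu \fatpipe u) + \sum_{i=1}^M \mu_i D(\rho_i \fatpipe \sigma) \right\} = -\log \frac{1}{M}\sum_{i=1}^M e^{-D(\rho_i \fatpipe \sigma)},
\end{align}
yielding exactly the claimed minimax bound.
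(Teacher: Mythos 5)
Your steps 1, 2 and 4 are sound: the direct-sum decomposition, the data-processing step yielding the quantum Fano inequality $\mathcal{D}\ge d(P_s\fatpipe 1/M)=\log M-h(P_e)-P_e\log(M-1)$, and the reduction of the minimax claim to the Bayesian one via Lemma~\ref{slem:entropy_plus_expval_optimization} all check out. The gap is step 3, and it is not a matter of ``tracking constants'': no rearrangement can turn your Fano inequality into the stated bound. The quantity $d(P_s\fatpipe 1/M)$ is bounded above by $\log M$ for every value of $P_s$, so the only information your data-processing step extracts from $\mathcal{D}$ is $\min(\mathcal{D},\log M)$; the term $-\log P_e$ enters $d(P_s\fatpipe 1/M)$ only through $P_e\log P_e$, which vanishes as $P_e\to 0$. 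Consequently the inequality $\mathcal{D}\ge\log M-h(P_e)-P_e\log(M-1)$ becomes vacuous as soon as $\mathcal{D}\ge\log M$, and it can never dominate $\tfrac{M-1}{M}(-\log P_e)-h(1/M)$, which diverges as $P_e\to0$. Concretely, for $M=2$ orthogonal pure states with uniform prior and $\sigma=\mathbb{I}/2$ one has $P_e^{*}=0$ and $\mathcal{D}=\log 2$: the left-hand side of your reduced goal is $+\infty$ while your Fano bound tops out at $\log 2$. Whatever closes the argument must therefore use strictly more than the relative entropy between the two post-measurement binary distributions. (This same example is worth flagging independently, since it also puts pressure on the statement and on the identity the paper invokes.)

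The paper's proof does not pass through the classical Fano inequality at all. It uses two different ingredients: (i) the operational identity of Vazquez-Vilar expressing $P_e^{*}(\{\mu_i\rho_i\})$ as $\max_\sigma\beta_{\mathrm h}^{1-1/M}\bigl(\bigoplus_i\mu_i\rho_i\fatpipe\tfrac{1}{M}\sigma^{\oplus M}\bigr)$, so that $-\log P_e^{*}$ \emph{is} a hypothesis-testing relative entropy $D_{\mathrm h}^{1-1/M}$ rather than something bounded by an ordinary relative entropy of measured outcomes; and (ii) the standard one-shot bound $D_{\mathrm h}^{\eta}(\rho\fatpipe\sigma)\le\tfrac{1}{\eta}\bigl(D(\rho\fatpipe\sigma)+h(1-\eta)\bigr)$ with $\eta=1-1/M$, which is exactly where the prefactor $M/(M-1)$ and the additive $h(1/M)$ come from. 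The direct-sum lemma then decomposes $D$ precisely as in your step 1, and the minimax part is handled as you propose. The missing idea in your attempt is ingredient (i): you need an identity, not a data-processing inequality, tying the symmetric error probability to an asymmetric type-II error before the relative-entropy upper bound on $D_{\mathrm h}$ can be brought to bear.
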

\begin{proof}
We will combine three ingredients to obtain the result. 
First, a result of Vazquez-Vilar that relates the multi-hypothesis testing error to a asymmetric hypothesis test~\cite{vazquez-vilar_multiple_2016}
\begin{align}
    P_e^{*}( \{ \mu_i \rho_i \}_{i=1}^M  ) &= \max_{\sigma} \beta^{1-1/M}_{\mathrm{h}} \left(\left. \bigoplus_{i=1}^M \mu_i \rho_i \, \right\| \, \frac{1}{M} \sigma^{\oplus M} \right) \\
    &= \exp\left( - \min_{\sigma} D^{1-1/M}_{\mathrm{h}} \left(\left. \bigoplus_{i=1}^M \mu_i \rho_i \, \right\| \, \frac{1}{M} \sigma^{\oplus M} \right)  \right),
    \nonumber
\end{align}
where $D_{\mathrm{h}}^{\eta}$ is the hypothesis testing relative entropy with success probability $\eta$. Next, we use the standard bound 
\begin{align}
     D^{\eta}_h(\rho \,\|\, \sigma) \leq \frac{1}{\eta}( D(\rho \, \| \, \sigma) + h(1-\eta))
\end{align}
that relates the hypothesis testing relative entropy to the regular relative entropy~\cite{khatri2020principles}.
The final ingredient is Lemma~\ref{slem:relative_entropy_and_direct_sum}. Putting everything together yields that for all $\sigma$,
\begin{align}
    - \log P_e^{*}( \{ \mu_i \rho_i \}_{i=1}^M ) 
    &\leq D_{\mathrm{h}}^{1-1/M} \left(\left. \bigoplus_{i=1}^M \mu_i \rho_i \, \right\| \, \frac{1}{M} \sigma^{\oplus M} \right)\\
    &\leq \frac{M}{M-1}\left[ D\left(\left. \bigoplus_{i=1}^M \mu_i \rho_i \, \right\| \, \frac{1}{M} \sigma^{\oplus M} \right) + h(1/M)\right]\nonumber\\
    &= \frac{M}{M-1}\left[ D(\mu\fatpipe u) + \sum_{i=1}^M \mu_i D\left(  \rho_i \fatpipe \sigma \right) + h(1/M)\right]\nonumber
\end{align}
as claimed. The result on the minimax success probability follows by minimizing the above bound over $\mu$, with the resulting bound obtained via Lemma~\ref{slem:entropy_plus_expval_optimization} using $x_i = D(\rho_i \fatpipe \sigma)$ and $\nu_i = 1/M$.
\end{proof}
Usually, in the literature for lower bounds in quantum information science (see, \textit{e.g.},   Ref.~\cite{haah2017sample-optimal}), Fano's inequality is used in a different form, namely the one that bounds the mutual information of a quantum channel. The mutual information in turn is bounded by the Holevo information of an ensemble of quantum states, usually taken to be a uniform mixture of the multi-hypothesis testing states. We recover a similar argument from the above bound by using a uniform distribution as the candidate for the minimax optimization and the expected state as the candidate state $\sigma = \sum_{i=1}^M \mu_i \rho_i$, in which case the average relative entropy is exactly the Holevo information of the ensemble $\{ \mu_i \rho_i \}_{i=1}^M$
\begin{align}
    \sum_{i=1}^M \mu_i D\left( \rho_i \, \left\| \, \sum_{j=1}^M \mu_j \rho_j \right.\right) &=
     \sum_{i=1}^M \mu_i \Tr\left[ \rho_i \left( \log \rho_i - \log \sum_{j=1}^M \mu_j \rho_j \right)\right]\\
     \nonumber
    &=   S\left( \sum_{i=1}^M \mu_i \rho_i\right)- \sum_{i=1}^M \mu_i S(\rho_i)  \\
    &= \chi(\{ \mu_i \rho_i \}_{i=1}^M).
    \nonumber
\end{align}
We note that the major improvement over this strategy is the logarithmic dependence on the hypothesis testing error, which, however, comes at the cost of a missing cross-dependence between the Holevo information and the number of samples. We can phrase this as the following corollary:
\begin{scorollary}[Fano-type bound with Holevo information]\label{scorollary:bayesian_fano_type_bound_holevo}
Let $\{ \rho_i \}_{i=1}^M$ be a quantum multi-hypothesis testing problem. We have the following bound on the minimax error probability:
\begin{align}
    - \log \overline{P}_e^{*}( \{  \rho_i \}_{i=1}^M  ) \leq \frac{M}{M-1} \left(h(1/M) + \chi\left( \left\{ \frac{1}{M} \rho_i \right\}_{i=1}^M \right) \right),
\end{align}
where $\chi$ is the Holevo information of an ensemble of quantum states.
\end{scorollary}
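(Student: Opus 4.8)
The plan is to obtain the corollary as a direct specialization of the first inequality in Theorem~\ref{sthm:bayesian_fano_type_bound}, fixing the prior and the reference state by hand rather than optimizing over them. Concretely, I would take $\mu$ to be the uniform distribution $u$ (so $\mu_i = 1/M$) and set the reference state to the ensemble average $\sigma = \tfrac{1}{M}\sum_{j=1}^M \rho_j$. The only genuinely new ingredient beyond the theorem is the passage from the Bayesian error probability to the minimax error probability, which here is needed only in its easy direction.

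First I would connect the two error probabilities. For any fixed POVM $\{Q_i\}$ and any prior $p$, the worst-case error dominates the average error, $\max_i \Tr[\rho_i(\bbI - Q_i)] \geq \sum_i p_i \Tr[\rho_i(\bbI - Q_i)]$; taking the infimum over POVMs on both sides gives $\overline{P}_e^{*}(\{\rho_i\}_{i=1}^M) \geq P_e^{*}(\{p_i \rho_i\}_{i=1}^M)$ for every prior $p$. This is the only comparison required, so the full minimax identity (min over priors of the Bayesian value) is not needed. Specializing to $p = u$ yields $-\log \overline{P}_e^{*}(\{\rho_i\}) \leq -\log P_e^{*}(\{u_i \rho_i\})$, which is exactly the inequality whose right-hand side I can now bound.

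Next I would apply the first inequality of Theorem~\ref{sthm:bayesian_fano_type_bound} to the uniform ensemble $\{u_i \rho_i\}$. Since $D(u \fatpipe u)=0$, only the averaged divergence term $\sum_i \tfrac1M D(\rho_i \fatpipe \sigma)$ survives, for any reference state $\sigma$. Choosing $\sigma = \tfrac1M\sum_{j}\rho_j$ and invoking the relative-entropy characterization of the Holevo information, $\sum_i \tfrac1M D\!\left(\rho_i \fatpipe \tfrac1M\sum_j\rho_j\right) = \chi\!\left(\{\tfrac1M\rho_i\}_{i=1}^M\right)$ — which follows from additivity of the matrix logarithm exactly as in the computation displayed just before the corollary — collapses the bound to $-\log \overline{P}_e^{*}(\{\rho_i\}) \leq \frac{M}{M-1}\big(h(1/M) + \chi(\{\tfrac1M\rho_i\})\big)$, as claimed.

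There is no serious obstacle: this is a corollary in the strict sense. The two points requiring care are the direction of the minimax/Bayesian comparison (we lower-bound the minimax error by a single Bayesian error, which is what makes the upper bound on $-\log\overline{P}_e^{*}$ valid) and the deliberate choice to fix $\mu$ and $\sigma$ rather than to optimize via Lemma~\ref{slem:entropy_plus_expval_optimization}. It is worth flagging the resulting trade-off relative to the second bound of Theorem~\ref{sthm:bayesian_fano_type_bound}: one recovers the familiar Holevo information at the cost of losing the sharper joint dependence on $M$ that an optimized reference state would furnish.
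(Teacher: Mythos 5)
Your proposal is correct and follows essentially the same route as the paper: it instantiates the Bayesian bound of Theorem~\ref{sthm:bayesian_fano_type_bound} with the uniform prior and the ensemble-average reference state $\sigma=\tfrac1M\sum_j\rho_j$, uses $\overline{P}_e^{*}(\{\rho_i\})\geq P_e^{*}(\{\tfrac1M\rho_i\})$ to pass to the minimax quantity, and identifies the averaged relative entropy with the Holevo information. The only (inconsequential) difference is that you spell out the minimax-versus-Bayesian comparison explicitly, whereas the paper leaves it implicit by treating the uniform prior as a candidate in the minimization over $\mu$.
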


We can also use a similar strategy as in the proof of Theorem~\ref{sthm:hyp_test_lower_bound} to obtain a multi-hypothesis testing lower bound involving the fidelity. Let us extend the fidelity to mixtures of states as follows:
\begin{align}
    F(\{ \mu_i \rho_i\}_{i=1}^M, \{\nu_i \sigma_i \}_{i=1}^M) \coloneqq F\left(\bigoplus_{i=1}^M \mu_i \rho_i, \bigoplus_{i=1}^M \nu_i \sigma_i \right) = \sum_{i=1}^M \sqrt{\mu_i \nu_i} F(\rho_i, \sigma_i). 
\end{align}
We can then establish the following lower bound:
\begin{stheorem}[Multi-hypothesis testing lower bound]\label{sthm:multi_hyp_test_lower_bound_continuous}
Let $\{ \rho_i \}_{i=1}^M$ and $\sigma$ be quantum states and $\{ \mu_i \}_{i=1}^M$ be prior probabilities. We define
\begin{align}
    F_{\mathrm{avg}} &\coloneqq F(\{ \mu_i \rho_i\}_{i=1}^M, \{\sigma/M \}_{i=1}^M)
\end{align}
and obtain the lower bound
\begin{align}
 P_e^{*}(\{ \mu_i \rho_i \}) &\geq \frac{M}{M-1} \left(F_{\mathrm{avg}} - \sqrt{\frac{1}{M}} \right)^2 \geq \left(F_{\mathrm{avg}} - \sqrt{\frac{1}{M}} \right)^2. 
\end{align}
The bound is valid in the regime where $F_{\mathrm{avg}} \geq \sqrt{1/M}$.
\end{stheorem}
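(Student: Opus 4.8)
The plan is to mirror the proof of Theorem~\ref{sthm:hyp_test_lower_bound}, replacing the two-state ``swap'' construction with a measure-and-compare channel adapted to the $M$-ary problem. First I would package the hypothesis-testing problem into block-diagonal (classical--quantum) operators on the doubled space $\mathbb{C}^M \otimes \mathcal{H}$, namely $P \coloneqq \bigoplus_{i=1}^M \mu_i \rho_i$ and $Q \coloneqq \bigoplus_{i=1}^M \frac{1}{M}\sigma$. By the additivity of the (root) fidelity across orthogonal blocks together with its scaling $F(cX,dY) = \sqrt{cd}\,F(X,Y)$ for nonnegative scalars, one immediately obtains $F(P,Q) = \sum_{i=1}^M \sqrt{\mu_i/M}\,F(\rho_i,\sigma) = F_{\mathrm{avg}}$, which is exactly the quantity appearing in the bound.

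Next I would let $\{\Pi_i\}_{i=1}^M$ denote the POVM achieving $P_e^{*}(\{\mu_i\rho_i\})$ and construct a CPTP map $\Lambda$ that, on an input $\bigoplus_i X_i$, first reads off the block index $i$ (a dephasing in the classical register), then measures $\{\Pi_j\}_j$ within the block, and finally outputs a single classical bit recording whether the outcome $j$ equals $i$ (``correct'') or not (``error''):
\begin{align}
    \Lambda\Bigl(\bigoplus_i X_i\Bigr) = \Bigl(\sum_i \Tr[X_i \Pi_i]\Bigr) \oplus \Bigl(\sum_i \sum_{j\neq i}\Tr[X_i \Pi_j]\Bigr).
\end{align}
This is manifestly a valid channel on the full input space, and evaluating it on the two block-diagonal operators gives $\Lambda(P) = P_s^{*} \oplus P_e^{*}$, the optimal success and error probabilities, while, using $\sum_j \Pi_j = \bbI$ and $\Tr[\sigma]=1$, a short computation yields $\Lambda(Q) = \frac{1}{M} \oplus \frac{M-1}{M}$.

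The key step is then the data-processing inequality for the fidelity, $F(P,Q) \leq F(\Lambda(P),\Lambda(Q))$. Since both outputs are classical two-outcome distributions, the right-hand side evaluates explicitly to the Bhattacharyya sum $\sqrt{P_s^{*}/M} + \sqrt{(M-1)P_e^{*}/M}$, so that $F_{\mathrm{avg}} \leq \sqrt{P_s^{*}/M} + \sqrt{(M-1)P_e^{*}/M}$. Bounding $P_s^{*} \leq 1$ and rearranging gives $F_{\mathrm{avg}} - \sqrt{1/M} \leq \sqrt{(M-1)P_e^{*}/M}$; in the stated regime $F_{\mathrm{avg}} \geq \sqrt{1/M}$ the left side is nonnegative, so squaring is legitimate and produces $P_e^{*} \geq \frac{M}{M-1}(F_{\mathrm{avg}} - \sqrt{1/M})^2$, from which the weaker bound follows since $M/(M-1) \geq 1$.

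The only points requiring care -- and the main (mild) obstacle -- are verifying that the measure-and-compare map is genuinely CPTP on the entire doubled space (so that data processing applies, noting that $P$ and $Q$ are already block-diagonal and hence unaffected by the initial dephasing) and keeping the root-fidelity scaling conventions straight throughout, in particular the $\sqrt{cd}$ factor for subnormalized operators. The nonnegativity needed to square is precisely the hypothesis $F_{\mathrm{avg}} \geq \sqrt{1/M}$, so no additional case analysis is needed.
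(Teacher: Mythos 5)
Your proposal is correct and follows essentially the same route as the paper's proof: the same block-diagonal embedding of the hypotheses against $\sigma^{\oplus M}/M$, the same measure-and-compare CPTP map (your $\sum_{j\neq i}\Pi_j = \mathbb{I}-\Pi_i$ reproduces the paper's two-outcome channel exactly), and the same data-processing-plus-rearrangement finish. No gaps.
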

\begin{proof}
We employ a strategy similar to the proof of Theorem~\ref{sthm:hyp_test_lower_bound}. To this end, we denote with $A_i = \mu_i \rho_i$ and $X = \frac{1}{M}\sigma$. We can write the optimal hypothesis testing success and error probabilities as
\begin{align}
    P_s^{*}(\{ A_i\}) &= \sum_{i=1}^M \Tr[ A_i \Pi_i ] \\
    P_e^{*}(\{ A_i\}) &= \sum_{i=1}^M \Tr[ A_i (\bbI - \Pi_i) ] 
\end{align}
where $\{ \Pi_i\}_{i=1}^M$ are the optimal POVM effects.
Then, we define a CPTP map
\begin{align}
    \Lambda\left( \bigoplus_{i=1}^M Y_i \right) = \Tr\left[ \left( \bigoplus_{i=1}^M Y_i\right) \left( \bigoplus_{i=1}^M \Pi_i \right) \right] \oplus \Tr\left[ \left( \bigoplus_{i=1}^M Y_i\right) \left( \bigoplus_{i=1}^M \bbI - \Pi_i \right) \right]
\end{align}
such that
\begin{align}
    \Lambda\left( \bigoplus_{i=1}^M A_i \right) &= P_s^{*}(\{ A_i\}) \oplus P_e^{*}(\{ A_i\}) \\
    \Lambda\left( X^{\oplus M} \right) &= \Tr[X] \oplus (M-1) \Tr[X] = \frac{1}{M} \oplus \frac{M-1}{M}.
\end{align}
We can now again use the data-processing property of the fidelity. To this end, we observe that
\begin{align}
    F\left( \bigoplus_{i=1}^M A_i , X^{\oplus M} \right) &= F_{\mathrm{avg}}.
\end{align}
Now, applying the channel $\Lambda$ yields
\begin{align}
    F\left( \bigoplus_{i=1}^M A_i , X^{\oplus M} \right) &\leq F\left( \Lambda\left[\bigoplus_{i=1}^M A_i \right] , \Lambda\left[ X^{\oplus M}\right] \right) \\
    &= \sqrt{\frac{1}{M} P_s^{*}(\{ \mu_i \rho_i \})} +  \sqrt{\frac{M-1}{M} P_e^{*}(\{ \mu_i \rho_i \})}
    \nonumber
\end{align}
We now apply the bound $P_s^{*}(\{ \mu_i \rho_i \}) \leq 1$, rearrange and square to obtain the first Theorem statement. The validity range arises from the fact that the left hand side must remain non-negative for this to be sensible.
\end{proof}

\subsection{Further bounds}\label{ssec:further_succ_prob_bounds}

Similarly to our argument for Theorem~\ref{sthm:multi_hyp_test_lower_bound_continuous} above, which applied to quantum multi-hypothesis testing, we can do the same natively for quantum metrology. It involves the extension of the fidelity to parametrized quantum states defined as
\begin{align}
    F(\mu(t)\rho(t), \nu(t) \sigma(t)) &\coloneqq \int \diff t \, \sqrt{\mu(t)\nu(t)} F(\rho(t), \sigma(t)).
\end{align}
We obtain the following result:
\begin{stheorem}[Metrology error lower bound]\label{sthm:multi_hyp_test_lower_bound}
Let $\rho(t)$ a set of states with prior $\mu(t)$ supported on a compact interval of size $T$. Let $\sigma$ be a reference state. We define the random guessing probability $\kappa \coloneqq 2\delta/T$ and a measure of average fidelity as
\begin{align}
    F_{\mathrm{avg}} &\coloneqq F(\mu(t) \rho(t), \sigma/T).
\end{align}
With these definitions, the optimal Bayesian error probability can be lower bounded as
\begin{align}
1-\eta^{*}(\delta, \rho, \mu) &\geq \frac{1}{1-\kappa} \left(F_{\mathrm{avg}} - \sqrt{\kappa}\right)^2 \geq \left(F_{\mathrm{avg}} - \sqrt{\kappa}\right)^2.
\end{align}
The bound is valid in the regime where $F_{\mathrm{avg}} \geq \sqrt{\kappa}$. 
\end{stheorem}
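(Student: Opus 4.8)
The plan is to mirror the data-processing argument used for Theorem~\ref{sthm:multi_hyp_test_lower_bound_continuous}, replacing the discrete direct sum by a direct integral over the parameter $t$ and the uniform weight $1/M$ by the random-guessing weight $\kappa = 2\delta/T$. First I would encode the metrological problem and the reference state into two classical--quantum operators on a time register $T$ and a system register $S$, namely $\Omega_{TS} = \int \diff t \, \ketbra{t}{t}_T \otimes \mu(t)\rho(t)_S$ and $\Xi_{TS} = \int \diff t \, \ketbra{t}{t}_T \otimes (\sigma/T)_S$, both of which are normalized. By the block structure of the fidelity and its homogeneity $F(a\rho, b\sigma) = \sqrt{ab}\,F(\rho,\sigma)$, one reads off $F(\Omega_{TS}, \Xi_{TS}) = \int \diff t \, \sqrt{\mu(t)/T}\,F(\rho(t),\sigma) = F_{\mathrm{avg}}$.

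Next, for an arbitrary POVM $Q(\tau)$ with smoothed success operators $(w_{\delta} * Q)(t)$, I would build the two-outcome measurement channel $\Lambda$ that coarse-grains $T\otimes S$ to a ``success/failure'' bit via the POVM $\{\Pi^{\mathrm{s}}, \Pi^{\mathrm{f}}\}$ with $\Pi^{\mathrm{s}} = \int \diff t \, \ketbra{t}{t} \otimes (w_{\delta} * Q)(t)$ and $\Pi^{\mathrm{f}} = \bbI - \Pi^{\mathrm{s}}$; this is a valid POVM because $0 \leq (w_{\delta} * Q)(t) \leq \bbI$. Applying $\Lambda$ gives $\Lambda[\Omega_{TS}] = (\eta, 1-\eta)$ with $\eta = \eta(\delta,\mu,\rho,Q)$, and $\Lambda[\Xi_{TS}] = (p, 1-p)$ with $p = \int \diff t \, \tfrac{1}{T}\Tr[\sigma (w_{\delta} * Q)(t)]$.

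The key computation is that $p = \kappa$: swapping integrals and using $\int \diff t \, w_{\delta}(t-\tau) = 2\delta$ together with $\int \diff \tau \, Q(\tau) = \bbI$ yields $p = (2\delta/T)\Tr[\sigma] = \kappa$, independently of $Q$. Since $\Lambda$ is CPTP, data processing for the fidelity gives $F_{\mathrm{avg}} \leq \sqrt{\eta\kappa} + \sqrt{(1-\eta)(1-\kappa)}$. Bounding $\eta \leq 1$, rearranging, and squaring (legitimate when $F_{\mathrm{avg}} \geq \sqrt{\kappa}$) produces $(F_{\mathrm{avg}} - \sqrt{\kappa})^2 \leq (1-\eta)(1-\kappa)$, that is, $1 - \eta \geq (1-\kappa)^{-1}(F_{\mathrm{avg}} - \sqrt{\kappa})^2$; the weaker second inequality follows from $(1-\kappa)^{-1} \geq 1$. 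As this holds for every $Q$, it holds for the optimal measurement, giving the claimed bound on $1 - \eta^*(\delta,\rho,\mu)$.

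I expect the main obstacle to be the boundary-effect subtlety in the identity $\int \diff t \, w_{\delta}(t-\tau) = 2\delta$: when $\tau$ lies within $\delta$ of the edge of the support of $\mu$, the window can be truncated so that the integral is strictly less than $2\delta$, which would give $p \leq \kappa$ and weaken the conclusion. I would address this either by imposing periodic boundary conditions on the interval (as in the covariant setting of Section~\ref{sec:covariant_setting}), under which the identity is exact, or by restricting $Q$ to output estimates inside the support so that the truncation does not occur; in both cases $p = \kappa$ and the chain above closes.
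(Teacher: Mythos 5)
Your proposal is correct and follows essentially the same route as the paper's own proof: both apply the data-processing inequality for the fidelity to the two-outcome ``success/failure'' coarse-graining induced by $(w_{\delta}*Q)(t)$, compute that the reference operator $\sigma/T$ maps to $(\kappa, 1-\kappa)$, and rearrange. The boundary-effect caveat you raise about $\int \diff t\, w_{\delta}(t-\tau) = 2\delta$ is a fair observation that the paper's proof passes over silently, but it does not change the fact that the two arguments are the same.
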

\begin{proof}
We employ a strategy similar to the proof of Theorem~\ref{sthm:hyp_test_lower_bound}. 
The Bayesian success probability is given by
\begin{align}
\eta(\delta, \rho, \mu, Q) = \int \diff t \, \Tr[ \mu(t) \rho(t) (w_{\delta}*Q)(t)].
\end{align}
We can now see a set of states with prior as an operator-valued measure $\mu(t) \rho(t)$ with the trace map given as $\int \diff t \, \Tr[ \cdot ]$. 
In this sense, we can define a CPTP map
\begin{align}
    \Lambda\left( O(t) \right) = \left\{\int \diff t\, \Tr\left[ O(t) (w_{\delta} * Q)(t) \right] \right\} \oplus \left\{ \int \diff t\, \Tr\left[ O(t) ([1-w_{\delta}] * Q)(t) \right] \right\}
\end{align}
such that
\begin{align}
    \Lambda\left( \mu(t) \rho(t) \right) &= \eta \oplus 1-\eta \\
    \Lambda\left( \frac{\sigma}{T} \right) &= \frac{2\delta}{T} \oplus \frac{T-2\delta}{T} = \kappa \oplus 1 - \kappa,
\end{align}
where we introduced the random guessing probability $\kappa \coloneqq 2\delta/T$.
Our results will be formulated in terms of sandwiched Rényi relative entropy of order $1/2$,  
\begin{align}
    \tilde{D}_{1/2}(\rho \fatpipe \sigma) 
    &= -\frac{1}{2} \log \tilde{Q}_{1/2}(\rho\fatpipe\sigma) \\
    &= -\frac{1}{2} \log \Tr[(\sigma^{1/2} \rho \sigma^{1/2})^{1/2}] \\
    &= - \log F(\rho, \sigma),
\end{align}
essentially the log-fidelity.
Using the data-processing relation of the fidelity, we obtain a relation to the Bayesian success probability
\begin{align}
    F\left(\mu(t) \rho(t), \frac{\sigma}{T}\right) &\leq F\left(\eta \oplus 1-\eta , \kappa \oplus 1 - \kappa\right) \\
    \nonumber
    &= \sqrt{\eta \kappa } + \sqrt{(1-\eta) (1-\kappa)} \\
    \nonumber
    &\leq \sqrt{\kappa}+ \sqrt{(1-\eta) (1-\kappa)}. \nonumber
\end{align}
Combining the two preceding results allows us to deduce
\begin{align}
    1-\eta  &\geq \frac{1}{1-\kappa} \left(F_{\mathrm{avg}} - \sqrt{\kappa}\right)^2.
\end{align}
as long as 
\begin{align}
    F_{\mathrm{avg}} \geq \sqrt{\kappa}.
\end{align}
\end{proof}

Another possibility to get a metrology bound is to use the Fano-type bounds we derived before. Going along this path gives the following theorem:
\begin{stheorem}\label{sthm:metrology_lower_bound_from_fano}
Fix a sub-interval of size $T = 4 k \delta$ for $k \in \bbN$. We set $M = 2k+1$ and have that
\begin{align}
    &- \log 1 - \overline{\eta}^{*}(\delta, \rho) \leq \\
    &\max_t \frac{M}{M-1} \left(  h\left(\frac{1}{M}\right) - \log \frac{1}{M} \sum_{l=-k}^k e^{ - D( \rho(t + 2 \delta l) \fatpipe \rho(t))} \right).
    \nonumber
\end{align}
\end{stheorem}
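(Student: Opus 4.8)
The plan is to combine the hypothesis-testing reduction of Theorem~\ref{thm:succ_prob_upper_bound_mht_delta_window} with the minimax Fano-type bound of Theorem~\ref{sthm:bayesian_fano_type_bound}. I would fix a center $t$ inside the chosen sub-interval and consider the $M = 2k+1$ equally spaced shifts $s_l = 2\delta l$ for $l \in \{-k, \dots, k\}$, so that the associated states $\rho(t + 2\delta l)$ all lie in the interval of width $T = 4k\delta$ centered at $t$. These shifts are meant to play the role of a set $\calS$ in the hypothesis-testing bound, provided their pairwise separations exceed $2\delta$; I address the borderline spacing at the end.

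First I would invoke Theorem~\ref{thm:succ_prob_upper_bound_mht_delta_window}, which for the fixed center $t$ gives $\overline{\eta}^{*}(\delta, \rho) \leq \overline{P}_s^{*}(\{\rho(t + 2\delta l)\}_{l=-k}^{k})$, equivalently
\begin{align}
1 - \overline{\eta}^{*}(\delta, \rho) \;\geq\; \overline{P}_e^{*}\bigl(\{\rho(t + 2\delta l)\}_{l=-k}^{k}\bigr).
\end{align}
This is the essential step: it converts the continuous metrology problem into a genuine $M$-ary minimax state-discrimination problem among states that are mutually at least $2\delta$ apart, with the center state $\rho(t)$ itself appearing as the $l=0$ hypothesis.

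Next I would apply the minimax form of the Fano-type bound of Theorem~\ref{sthm:bayesian_fano_type_bound} to this $M$-hypothesis problem, choosing the reference state to be the center state $\sigma = \rho(t)$. Since $s_0 = 0$ is among the shifts, the $l=0$ term contributes $e^{-D(\rho(t)\fatpipe\rho(t))} = 1$ and normalizes the sum, giving
\begin{align}
-\log \overline{P}_e^{*}\bigl(\{\rho(t + 2\delta l)\}_{l=-k}^{k}\bigr) \leq \frac{M}{M-1}\left(h\!\left(\tfrac{1}{M}\right) - \log \frac{1}{M}\sum_{l=-k}^{k} e^{-D(\rho(t+2\delta l)\fatpipe \rho(t))}\right).
\end{align}
Chaining this with the previous display and using that $-\log$ is decreasing yields the asserted bound for the fixed center $t$. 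Because the reasoning holds for every admissible $t$, one may then optimize over $t$ to obtain the displayed inequality (the sharpest version takes the infimum over $t$, which immediately implies the stated form).

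The main obstacle is the borderline separation of the shifts: the construction spaces consecutive states by exactly $2\delta$, whereas Theorem~\ref{thm:succ_prob_upper_bound_mht_delta_window} requires strict separation $> 2\delta$. I would close this gap by a limiting argument, either shrinking the tolerance to $\delta' = \delta - \varepsilon$ while retaining the $2\delta$ spacing (so that $2\delta > 2\delta'$) and invoking continuity of $\overline{\eta}^{*}$ in the window width (Proposition~\ref{sprop:cont_prop_minimax}), or equivalently inflating the spacing to $2\delta + \varepsilon$ inside a marginally larger interval and letting $\varepsilon \to 0$. Beyond this technical point the proof is purely a matter of assembling the two cited results; the key modeling choice is taking $\sigma = \rho(t)$ rather than an independent reference, which is precisely what produces the clean center-referenced relative-entropy sum with its normalizing $l=0$ term.
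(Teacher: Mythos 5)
Your proposal is correct and follows essentially the same route as the paper: reduce to the $M=2k+1$ hypotheses $\{\rho(t+2\delta l)\}_{l=-k}^{k}$ via Theorem~\ref{thm:succ_prob_upper_bound_mht_delta_window} and then apply the minimax Fano-type bound of Theorem~\ref{sthm:bayesian_fano_type_bound} with reference state $\sigma=\rho(t)$. The only differences are cosmetic — the paper first restricts to the subinterval (trivial in the minimax setting by monotonicity), and your explicit limiting argument for the borderline spacing $|s-s'|=2\delta$ versus the strict $>2\delta$ required by the hypothesis-testing theorem is a point the paper glosses over, so it is a welcome addition.
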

\begin{proof}
We first use the subdivision trick of Lemma~\ref{lem:subdivision_trick} to establish the following lower bound relating to a subinterval $I = [t-T/2, t+T/2]$ of size $T= 4 k \delta$ centered around $t$:
\begin{align}
    1 - \overline{\eta}^{*}(\delta, \rho) \geq \max_t \left\{ 1 - \overline{\eta}^{*}(\delta, \rho|_{I})\right\},
\end{align}
where $\rho|_{I}$ is the restriction of $\rho(t)$ to $t \in I$. Next, we apply Theorem~\ref{thm:succ_prob_upper_bound_mht_delta_window} to the above, choosing a set of states $\{ \rho_l = \rho(t + 2\delta l) \}_{l=-k}^k$ that fulfill the condition that the associated times are at least $2\delta$ apart. This gives
\begin{align}
    1 - \overline{\eta}^{*}(\delta, \rho) \geq \max_t \overline{P}_e^{*}(\{ \rho_l = \rho(t + 2\delta l) \}_{l=-k}^k).
\end{align}
We then apply the above Theorem~\ref{sthm:bayesian_fano_type_bound} with the reference state $\sigma = \rho(t)$ to obtain the statement of the Theorem.
\end{proof}

\subsection{Upper bound on asymptotic rate}\label{ssec:upper_bound_asymptotic_rate_mht}
Theorem~\ref{sthm:succ_prob_upper_bound_mht} allows us to get bounds on the asymptotic rate for quantum metrology from the corresponding upper bounds for binary hypothesis testing. These bounds rely on the fact that the asymptotic rate for multi-hypothesis testing is given by the smallest quantum Chernoff divergence~\cite{audenaert2007discriminating}
\begin{align}
    C(\rho, \sigma) \coloneqq - \inf_{0 \leq s \leq 1} \log \Tr [\rho^s \sigma^{1-s}]
\end{align}
among two states that are to be tested~\cite{li2016discriminating}.
To be able to establish such bounds, we first need the bound of Theorem~\ref{sthm:succ_prob_upper_bound_mht} not to be vacuous which corresponds to enforcing $K=1$, as the success probability asymptotically approaches 1, but $K$ is independent of the number of copies of the state that are used. 
As a further ingredient to establish the bounds on the asymptotic rate, we need the following Lemma that generalizes Laplace's method in a way relevant to our work.
\begin{slemma}[Laplace principle]\label{slem:laplace_principle}
Let $\mu(t)$ be a probability measure on $\bbR$ and $f(t)$ a bounded measurable function. The essential infimum with respect to $\mu(t)$ is defined as
\begin{align}
    \operatornamewithlimits{ess \, inf}_{\mu(t)} f(t) \coloneqq \sup \, \{ b \in \bbR \, | \, \mu(\{ f(t) < b \}) = 0 \}.
\end{align}
We have that
\begin{align}
    -\lim_{n\to \infty} \frac{1}{n} \log \int \diff \mu(t) \, e^{-n f(t)} = \operatornamewithlimits{ess \, inf}_{\mu(t)} f(t).
\end{align}
\end{slemma}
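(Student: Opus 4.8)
The plan is to establish the claim as a two-sided bound on the limit, treating the upper and lower bounds separately and then combining them; this is the quantum-metrology-adapted form of the classical Laplace/Varadhan principle. Write $m \coloneqq \operatornamewithlimits{ess \, inf}_{\mu(t)} f(t)$. Since $f$ is bounded and measurable and $\mu$ is a probability measure, $m$ is finite and the integral $\int \diff \mu(t) \, e^{-nf(t)}$ is a finite positive number for every $n$, so $\tfrac{1}{n}\log$ of it is well-defined.

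The first step is a preliminary measure-theoretic observation that I would record explicitly: the set $\{b : \mu(\{f<b\})=0\}$ is a down-set, and by the definition of $m$ as its supremum one has $\mu(\{f < m\}) = 0$. Indeed, for each $k$ there is some $b$ in the set with $b > m - 1/k$, whence $\mu(\{f < m-1/k\}) \le \mu(\{f<b\}) = 0$, and letting $k\to\infty$ with continuity from below gives $\mu(\{f<m\})=0$. In other words $f(t) \ge m$ for $\mu$-almost every $t$. This is the only place where one must be slightly careful with the $\sup$-definition of the essential infimum, and it is the step I would flag as the main (mild) obstacle — everything else is routine.

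For the upper bound, I would use $f \ge m$ $\mu$-a.e.\ to write $e^{-nf(t)} \le e^{-nm}$ almost everywhere, so that
\begin{align}
    \int \diff \mu(t) \, e^{-nf(t)} \le e^{-nm} \int \diff \mu(t) = e^{-nm},
\end{align}
using that $\mu$ is normalized; taking $\tfrac{1}{n}\log$ yields $\limsup_{n} \tfrac{1}{n}\log \int \diff\mu(t)\, e^{-nf(t)} \le -m$. For the lower bound, fix $\epsilon>0$; since $m+\epsilon$ exceeds the supremum $m$, it fails the defining condition, so $\delta_\epsilon \coloneqq \mu(\{f < m+\epsilon\}) > 0$. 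Restricting the integral to this set and bounding $e^{-nf(t)} \ge e^{-n(m+\epsilon)}$ there gives
\begin{align}
    \int \diff \mu(t) \, e^{-nf(t)} \ge e^{-n(m+\epsilon)} \, \delta_\epsilon,
\end{align}
so that $\tfrac{1}{n}\log \int \diff\mu(t)\, e^{-nf(t)} \ge -(m+\epsilon) + \tfrac{1}{n}\log \delta_\epsilon$. Since $\delta_\epsilon$ is a fixed positive constant, $\tfrac{1}{n}\log\delta_\epsilon \to 0$, whence $\liminf_{n} \tfrac{1}{n}\log \int \diff\mu(t)\, e^{-nf(t)} \ge -(m+\epsilon)$.

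Finally I would let $\epsilon \downarrow 0$ in the lower bound to obtain $\liminf_n \ge -m$, and combine with the upper bound $\limsup_n \le -m$ to conclude that the limit exists and equals $-m$. Negating both sides gives the stated identity $-\lim_{n\to\infty}\tfrac{1}{n}\log\int\diff\mu(t)\,e^{-nf(t)} = m = \operatornamewithlimits{ess \, inf}_{\mu(t)} f(t)$, completing the argument.
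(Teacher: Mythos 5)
Your proposal is correct and follows essentially the same route as the paper's proof: a two-sided bound in which the upper bound uses $f \ge m$ $\mu$-a.e.\ and the lower bound uses the positive mass $\mu(\{f < m+\epsilon\}) > 0$ together with $\tfrac{1}{n}\log\delta_\epsilon \to 0$. The only differences are cosmetic: the paper first normalizes to $m=0$ by replacing $f$ with $f-f^*$, whereas you carry $m$ through explicitly and also spell out the measure-theoretic step $\mu(\{f<m\})=0$ that the paper uses implicitly when asserting $e^{-nf(t)}\le 1$ almost everywhere.
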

\begin{proof}
    We assume without loss of generality that 
\begin{align}
    f^*=\operatornamewithlimits{ess \, inf}_{\mu(t)} f(t) = \sup \, \{ b \in \bbR \, | \, \mu(\{ f(t) < b \}) = 0 \}=0,
\end{align}
as we can otherwise consider the function $f-f^*$. First, note that
\begin{align}
 \frac{1}{n} \log \int \diff \mu(t) \, e^{-n f(t)} \leq 0,
\end{align}
as $e^{-nf(t)}\leq 1$ almost everywhere. Thus, 
\begin{align}\label{equ:limsup}
    \limsup\limits_{n\to\infty}\frac{1}{n} \log \int \diff \mu(t) \, e^{-n f(t)} \leq 0.
\end{align}
By the definition of the essential minimum, for every $\epsilon>0$ we have that there exists a $\delta>0$ such that
\begin{align}
\mu(\{f(t)<\epsilon\})\geq \delta.
\end{align}
From this, we obtain that for all $n$
\begin{align}
    \int \diff \mu(t) \, e^{-n f(t)}\geq e^{-n\epsilon}\delta.
\end{align}
Taking the $\log$ and dividing by $n$ we obtain:
\begin{align}
    \frac{1}{n}\log\left(\int \diff \mu(t) \, e^{-n f(t)}\right)\geq -\epsilon+\frac{\log(\delta)}{n}.
\end{align}
Taking the $\liminf$ of both sides we obtain that
\begin{align}
    \liminf\limits_{n\to\infty}\frac{1}{n}\log\left(\int \diff \mu(t) \, e^{-n f(t)}\right)\geq -\epsilon.
\end{align}
As $\epsilon>0$ was arbitrary, we conclude that 
\begin{align}\label{equ:liminf}
    \liminf\limits_{n\to\infty}\frac{1}{n}\log\left(\int \diff \mu(t) \, e^{-n f(t)}\right)\geq 0.
\end{align}
The claim then follows by combining 
Eq.~\eqref{equ:liminf} with Eq.~\eqref{equ:limsup}.
\end{proof}

We are now equipped to show that Theorem~\ref{sthm:succ_prob_upper_bound_mht} implies the following bound on the asymptotic rate.
\begin{stheorem}[Upper bound on asymptotic rate]\label{sthm:rate_upper_bound}
For a given window function $w$, state set $\rho(t)$ and possibly a prior $\mu(t)$ that has non-vanishing support on the parameter domain,  the Bayesian and the minimax rate obey the upper bounds
\begin{align}
    \overline{R}^{*}(w, \rho) \leq  R^{*}(w, \mu, \rho) \leq \inf \left\{\left. C(\rho(t), \rho(t')) \vphantom{\sup_x} \, \right| \, t, t' \colon \sup_x \{ w(t + x) + w(t' + x) \} = 1\right\}.
\end{align}
\end{stheorem}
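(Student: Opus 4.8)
The plan is to establish the right-hand inequality for each admissible pair of parameter values separately and then optimize. Fix any pair $(t_0,t_0')$ with $t_0,t_0'$ in the support of $\mu$ and satisfying $\sup_x\{w(t_0+x)+w(t_0'+x)\}=1$, and set $s_2\coloneqq t_0'-t_0$. First I would observe that this constraint is precisely the statement that the overlap constant $K=\sup_t\{w(t)+w(t+s_2)\}$ of Theorem~\ref{sthm:succ_prob_upper_bound_mht} equals one for the two-element shift set $\calS=\{(\tfrac12,0),(\tfrac12,s_2)\}$; indeed the substitution $y=t_0+x$ identifies the two suprema. Applying Theorem~\ref{sthm:succ_prob_upper_bound_mht} to the i.i.d.\ family $\rho^{\otimes n}(t)$ then yields $\eta^*(w,\mu,\rho^{\otimes n})\le\int\diff t\,P_s^*(\{\tfrac{\mu(t)}{2}\rho^{\otimes n}(t),\tfrac{\mu(t+s_2)}{2}\rho^{\otimes n}(t+s_2)\})$, together with the analogous single-term inequality in the minimax setting.

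Next I would pass from success to error probability. Using $\int\diff t\,[\tfrac{\mu(t)}{2}+\tfrac{\mu(t+s_2)}{2}]=1$ together with the Helstrom identity $P_e^*(A,B)=\tfrac12(\Tr A+\Tr B)-\tfrac12\|A-B\|_1$, the preceding bound becomes a lower bound on the metrological error, $1-\eta^*(w,\mu,\rho^{\otimes n})\ge\int\diff t\,P_e^*(\tfrac{\mu(t)}{2}\rho^{\otimes n}(t),\tfrac{\mu(t+s_2)}{2}\rho^{\otimes n}(t+s_2))$, while in the minimax case the $\inf_t$ turns into a $\sup_t$ of pairwise errors and one retains the clean single term $1-\overline\eta^*(w,\rho^{\otimes n})\ge\overline P_e^*(\rho^{\otimes n}(t_0),\rho^{\otimes n}(t_0+s_2))$. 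Since an upper bound on the rate corresponds to a lower bound on the error, I would then invoke the quantum Chernoff theorem \cite{audenaert2007discriminating,li2016discriminating}, which states that for fixed states the binary error exponent---independent of the nonzero priors, and identical for the Bayesian and minimax formulations---equals the Chernoff divergence $C(\rho(t_0),\rho(t_0'))$. For the minimax rate this already completes the per-pair estimate: $\overline R^*(w,\rho)\le C(\rho(t_0),\rho(t_0'))$.

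For the Bayesian rate the remaining step is to evaluate the exponential decay rate of the integral. Here I would combine the pointwise Chernoff exponent with Laplace's principle (Lemma~\ref{slem:laplace_principle}): the integrand decays like $e^{-nC(\rho(t),\rho(t+s_2))}$, so the rate of the integral is the essential infimum over $t$ of $C(\rho(t),\rho(t+s_2))$, which---restricting to a neighborhood of $t_0$ on which $\mu$ is bounded below and using continuity of the Chernoff divergence---is at most $C(\rho(t_0),\rho(t_0'))$; hence $R^*(w,\mu,\rho)\le C(\rho(t_0),\rho(t_0'))$. Taking the infimum over all admissible pairs $(t_0,t_0')$ gives the right-hand inequality, and the left-hand inequality $\overline R^*\le R^*$ follows from $\overline\eta^*\le\eta^*$ (the worst case never exceeds the $\mu$-average, using that $\mu$ has full support). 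The main obstacle is the rigorous interchange of the $t$-integral with the large-$n$ limit in the Bayesian case: Lemma~\ref{slem:laplace_principle} is stated for exactly exponential integrands, whereas $P_e^*(t)$ is only asymptotically exponential with rate $C(\rho(t),\rho(t+s_2))$, so one must control this exponent uniformly in $t$---equivalently, establish enough regularity of the Chernoff exponent along the integration variable---before Laplace's principle can legitimately be applied.
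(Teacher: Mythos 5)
Your proposal follows essentially the same route as the paper's proof: the two-shift set $\calS=\{(\tfrac12,0),(\tfrac12,s)\}$ with the admissibility condition recognized as $K=1$ in Theorem~\ref{sthm:succ_prob_upper_bound_mht}, the Helstrom conversion $1-\eta^*\ge\int\diff t\,P_e^*(\cdot,\cdot)$ using $\int\diff t\,[\tfrac{\mu(t)}{2}+\tfrac{\mu(t+s)}{2}]=1$, the quantum Chernoff theorem for the binary exponent, and Laplace's principle for the integral. Your minimax branch is in fact cleaner than the paper's (which obtains it only via $\overline R^*\le R^*$): the single-term bound $1-\overline\eta^*\ge\overline P_e^*(\rho^{\otimes n}(t_0),\rho^{\otimes n}(t_0'))$ plus the Chernoff theorem gives the per-pair bound with no measure-theoretic caveats. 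The obstacle you flag in the Bayesian branch is real, and it is worth seeing how the paper avoids it: rather than localizing to a neighborhood of a chosen $t_0$ (which requires essential upper semicontinuity of $t\mapsto C(\rho(t),\rho(t+s))$ at $t_0$ — not available, since no continuity of $t\mapsto\rho(t)$ is assumed in this theorem), the paper keeps the full integral for a \emph{fixed shift} $s$, pulls the $t$-dependent priors out of the Nussbaum--Szko\l{}a lower bound via $\min\{ab,cd\}\ge\min\{a,c\}\min\{b,d\}$ to form the measure $\nu(t)=\min\{\mu(t),\mu(t+s)\}$, absorbs the $o(1)$ in the exponent into an arbitrary $\epsilon>0$, and applies the essential-infimum form of Laplace's principle (Lemma~\ref{slem:laplace_principle}). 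The conclusion is then $R^*\le\operatorname{ess\,inf}_{\nu}C(\rho(t),\rho(t+s))$ for each admissible $s$, which is what the theorem's caveat about ``time values that have non-vanishing support'' is really asserting; the stronger per-pair statement $R^*\le C(\rho(t_0),\rho(t_0'))$ for an arbitrary individual pair is not derivable without additional regularity, so you should drop the localization and state the Bayesian bound at the level of shifts rather than pairs. (The uniformity in $t$ of the subexponential correction in the Chernoff lower bound, which you also raise, is glossed over by the paper as well and is not specific to your version.)
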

\begin{proof}
Let us first recall the definition of the asymptotic Bayesian rate
\begin{align}
    R^{*}(w, \mu, \rho) &= \lim_{n \to \infty}  -\frac{1}{n} \log \left( 1 -\eta^{*}(w, \mu, \rho^{\otimes n}) \right).
\end{align}
We will apply Theorem~\ref{sthm:succ_prob_upper_bound_mht} for a set $\calS = \{ (1/2, 0), (1/2, s)\}$ where $s$ is chosen such that the constant $K$ is equal to 1, \textit{i.e.}, such that the shifted windows do not overlap. In the case of binary discrimination, we can make sue of the Helstrom formula
\begin{align}
P^{*}_s(A, B) = \frac{1}{2}\Tr[A+B] + \frac{1}{2}\lVert A - B \rVert_1.
\end{align}
The Theorem then implies the following lower bound on the error:
\begin{align}
    1 -\eta^{*}(w, \mu, \rho^{\otimes n}) &\geq 1 - \int \diff t \, P^{*}_s\left( \frac{1}{2}\mu(t) \rho^{\otimes n}(t),  \frac{1}{2} \mu(t+s) \rho^{\otimes n}(t+s) \right)\\
    \nonumber
    &=1 - \frac{1}{2}\left(\int \diff t \, \Tr\left[ \frac{1}{2}\mu(t) \rho^{\otimes n}(t) + \frac{1}{2}\mu(t + s) \rho^{\otimes n}(t+s) \right] + \left\lVert \frac{1}{2}\mu(t) \rho^{\otimes n}(t) + \frac{1}{2}\mu(t + s) \rho^{\otimes n}(t+s) \right\rVert_1\right)\\
     \nonumber
    &= \frac{1}{2}\left(\int \diff t \, \Tr\left[ \frac{1}{2}\mu(t) \rho^{\otimes n}(t) + \frac{1}{2}\mu(t + s) \rho^{\otimes n}(t+s) \right] - \left\lVert \frac{1}{2}\mu(t) \rho^{\otimes n}(t) + \frac{1}{2}\mu(t + s) \rho^{\otimes n}(t+s) \right\rVert_1\right)\\
     \nonumber
    &= \int \diff t \, P^{*}_e\left( \frac{1}{2}\mu(t) \rho^{\otimes n}(t),  \frac{1}{2} \mu(t+s) \rho^{\otimes n}(t+s) \right),
\end{align}
where we have used the fact that
\begin{align}
    \int \diff t \, \Tr\left[ \frac{1}{2}\mu(t) \rho^{\otimes n}(t) + \frac{1}{2}\mu(t + s) \rho^{\otimes n}(t+s) \right] = 1.
\end{align}
We can now make use of the fact that the asymptotic scaling of the error probability for binary hypothesis tests is known~\cite{nussbaum2009chernoff}. We only have to take care of the additional measures $\mu(t)$ and $\mu(t+s)$ that appear in the expression we look at. 
For the sense of brevity, we do not reproduce the whole proof of Theorem 2.2 of Ref.~\cite{nussbaum2009chernoff} but we just note that using the relation $\min\{ a b , cd\} \geq \min\{ a,c\}\min\{b,d\}$ for non-negative real numbers $a,b,c,d$ after Eq.~(12) of Ref.~\cite{nussbaum2009chernoff} implies that
\begin{align}
    P_e^{*}\left( \frac{1}{2}\mu(t)\rho^{\otimes n}(t),  \frac{1}{2}\mu(t+s)\rho^{\otimes n}(t+s) \right) \geq \frac{1}{2}\min(\mu(t), \mu(t+s)) \exp\left(-n (C(\rho(t), \rho(t+s)) + o(1) )\right).
\end{align}
Therefore,
\begin{align}
    1 -\eta^{*}(w, \mu, \rho^{\otimes n}) &\geq \frac{1}{2}\int \diff t \, \min\{\mu(t), \mu(t+s)\}\exp\left(-n (C(\rho(t), \rho(t+s)) + o(1) )\right).
\end{align}
For the next step, we define $\nu(t) = \min\{\mu(t), \mu(t+s)\}$ and have that $\int \diff t \, \nu(t) = c > 0$ by the assumptions of the theorem statement. With this at hand, we have that $\nu(t) / c$ is a proper measure so that
\begin{align}
    \lim_{n\to\infty} -\frac{1}{n}\log  (1 -\eta^{*}(w, \mu, \rho^{\otimes n})) \leq \lim_{n\to\infty} -\frac{1}{n}\log  \frac{c}{2}\int \frac{\diff \nu(t)}{c} \, \exp\left(-n C(\rho(t), \rho(t+s)\right)\exp(-o(n)) .
\end{align}
We can asymptotically take care of the $o(n)$ term by adding $-n\epsilon$ with an arbitrary $\epsilon > 0$ to the exponent and then apply Lemma~\ref{slem:laplace_principle} to obtain
\begin{align}
    \lim_{n\to\infty} -\frac{1}{n}\log  (1 -\eta^{*}(w, \mu, \rho^{\otimes n})) 
    &\leq \lim_{\epsilon \to 0}\lim_{n\to\infty} -\frac{1}{n}\log \int \frac{\diff \nu(t)}{c} \, \exp\left(-n [C(\rho(t), \rho(t+s)+\epsilon]\right) \\
     \nonumber
    &=  \lim_{\epsilon \to 0} \operatornamewithlimits{ess \, \inf}_{\nu(t)} C(\rho(t), \rho(t+s) +\epsilon\\
     \nonumber
    &= \operatornamewithlimits{ess \, \inf}_{\nu(t)} C(\rho(t), \rho(t+s).
\end{align}
In the above statement, the prior enters only through defining the support of $\nu(t)$, which by virtue of the assumptions of the theorem is the full support of the domain of the metrological problem which we left implicit. This means that asymptotic rate bound applies both to the Bayesian and the minimax case. The theorem statement follows by writing $s = t-t'$ and the condition $\sup_x \{ w(t + s) + w(t'+s) \} = 1$ enforces $K=1$ as desired.
\end{proof}
The statement of Theorem~\ref{thm:rate_upper_bound_delta} in the main text follows straightforwardly by noting that the condition on $t$ and $t'$ reduces to $|t-t'| > 2\delta$ for the rectangular window with tolerance $\delta$.

\subsection{Relation to Wigner-Yanase-Dyson information}\label{ssec:wigner_yanase_dyson_information}
Under continuity assumptions, the most similar states -- and thus the states that are hardest to distinguish from each other -- are the ones that are close in time. In the case of a rectangular window with very small tolerance $\delta$ we thus expect the rate to be limited by the states that are $2\delta$-close. In this case, we can expand the Chernoff divergence as~\cite{jarzyna2020geometric}
\begin{align}
    C(\rho(t), \rho(t + 2\delta)) = \frac{1}{2}\delta^2 \calI(t) + O(\delta^3),
\end{align}
where $\calI(t)$ is the \emph{Wigner-Yanase-Dyson} information. We can alternatively define it through the \emph{affinity} of quantum states
\begin{align}
    A(\rho, \sigma) = \Tr [\rho^{1/2} \sigma^{1/2}],
\end{align}
such that 
\begin{align}
    A(\rho(t), \rho(t+2\delta)) = 1-\frac{1}{4} \delta^2 \calI(t)+ O(\delta^3).
\end{align}
Compare this to the Bures fidelity, which expands into the quantum Fisher information $\calF(t)$ as
\begin{align}
    F(\rho(t), \rho(t+2\delta)) = \lVert \rho(t)^{1/2} \rho(t+2\delta)^{1/2}\rVert_1 = 1 - \frac{1}{2} \delta^2 \calF(t) + O(\delta^3).
\end{align}
For small $\delta$, we thus expect the asymptotic rate to be bounded by
\begin{align}
    \overline{R}^{*}(w_{\delta}, \rho) \leq \frac{1}{2}\delta^2 \min_t \calI(t).
\end{align}
As the Wigner-Yanase-Dyson information is always smaller than the \emph{Bogoliubov-Kubo-Mori} information associated to the quantum relative entropy, this is tighter than the characterization given by Hayashi in Eq.~(71) of Ref.~\cite{hayashi2002two}.

\subsection{Lower bound on success probability}

In this section, we will establish a lower bound on the success probability. Such lower bounds are obtained by exhibiting a POVM $Q(\tau)$ for which we can guarantee a certain performance. In our case, we will assume a suitable measurement $\{ M(\lambda) \}$ has already been chosen and we then compute guarantees for the smoothed maximum a posteriori strategy introduced in Section~\ref{ssec:maximum_weight_estimation}. Our bounds allow us to reduce the problem to the hardest binary hypothesis testing problem for two states $\rho(t)$ and $\rho(t')$ that can never be in the same window. We use our bounds to establish a lower bound on the asymptotic rate for the $\delta$ window and show that it matches the upper bound of Theorem~\ref{sthm:rate_upper_bound} for commuting states. We conjecture that a suitable lower bound can be derived in the general case as well and sketch a possible direction to do so.

To establish a lower bound on the success probability -- or equivalently an upper bound on the error -- we will discretize the problem in question, establish the bound in the discrete case and then lift the result again to the continuous case by taking the appropriate limits. To this end, we will first establish the discrete version of the success probability:
\begin{align}
    \eta^{*}(W, \{ \rho_i \}, \{ p_i \}) \coloneqq \sup_{\{ Q_j \}} \sum_{i} \sum_j W_{i,j} p_i \Tr[ \rho_i Q_j ].
\end{align}
The first argument is a matrix $W$ that takes the role of the window function, whereas the other arguments are a collection of quantum states and a collection of prior probabilities. With this notation settled, we can establish the following lemma that establishes a lower bound through a specific discretization:
\begin{slemma}[Lower bound through discretization]\label{slem:discretization_lower_bound}
For a given window function $w$, state set $\rho(t)$ and prior $\mu(t)$, we define a discretization with respect to a set of mutually disjoint intervals $\{ T_i \}_{i=1}^N$ as
\begin{align}
    p_i \coloneqq \mu(T_i), \qquad
    \rho_i \coloneqq \frac{1}{p_i} \int_{T_i} \diff \mu(t) \, \rho(t), \qquad
    W_{i,j} \coloneqq \inf_{t_i \in T_i, t_j \in T_j} w(t-t').
\end{align}
Then, we have that
\begin{align}
    \eta^{*}(w, \rho, \mu) \geq \eta^{*}(W, \{ \rho_i \}_{i=1}^N, \{ p_i \}_{i=1}^N).
\end{align}
\end{slemma}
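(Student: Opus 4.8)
The plan is to prove this lower bound by \emph{constructing} an explicit continuous POVM from any feasible discrete POVM and verifying that it attains at least the discrete objective value. Since $\eta^{*}(w,\rho,\mu)$ is defined as a supremum over all continuous POVMs, exhibiting one admissible continuous POVM for each discrete POVM, with a controlled success probability, is enough.

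Concretely, given any discrete POVM $\{Q_j\}_{j=1}^N$ with $Q_j \geq 0$ and $\sum_j Q_j = \bbI$, I would spread each effect $Q_j$ uniformly (with respect to Lebesgue measure) over its associated interval $T_j$:
\begin{align}
    Q(\tau) \coloneqq \sum_{j=1}^N \frac{Q_j}{|T_j|}\,\chi[\tau \in T_j].
\end{align}
The first routine check is that this is a valid continuous POVM: positivity is immediate from $Q_j \geq 0$, and normalization holds because $\int \diff \tau \, Q(\tau) = \sum_j \frac{Q_j}{|T_j|}|T_j| = \sum_j Q_j = \bbI$.

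The main computation is to lower bound the continuous Bayesian success probability $\eta(w,\mu,\rho,Q)$ achieved by this POVM. I would split the outer $t$-integral over the intervals $T_i$, discarding any mass of $\mu$ outside $\bigcup_i T_i$ (harmless, since the integrand is non-negative once we assume $w \geq 0$), and then, for $t \in T_i$ and $\tau \in T_j$, replace $w(t-\tau)$ by its infimum $W_{i,j}$ over the rectangle $T_i \times T_j$. Because $\Tr[\rho(t) Q_j] \geq 0$ and we integrate against the positive measure $\diff \mu(t)\,\diff \tau$, this substitution only decreases the value, so the inequality points in the desired direction. The $\tau$-integral over $T_j$ cancels the $1/|T_j|$ factor, and the $t$-integral over $T_i$ converts $\int_{T_i} \diff \mu(t)\,\rho(t)$ into $p_i \rho_i$ by definition, giving exactly $\sum_{i,j} W_{i,j}\, p_i \Tr[\rho_i Q_j]$.

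Taking the supremum over all discrete POVMs $\{Q_j\}$ then yields $\eta^{*}(w,\rho,\mu) \geq \eta^{*}(W, \{\rho_i\}, \{p_i\})$, since the constructed POVM is feasible for every choice of $\{Q_j\}$. As this is a construction-and-verification argument, I do not anticipate a genuine obstacle; the only points demanding care are fixing the direction of the infimum inequality (which rests on non-negativity of both $w$ and the traces) and treating degenerate intervals — those with $p_i = 0$ contribute nothing and may be dropped, while each $T_j$ entering the construction must have nonzero length $|T_j|$ for the spreading to be well defined.
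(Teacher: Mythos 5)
Your proof is correct and follows essentially the same route as the paper's: embed the discrete POVM into the continuous problem, split the $t$-integral over the intervals $T_i$, and lower-bound $w(t-\tau)$ by $W_{i,j}$ using non-negativity of $w$ and of the traces. The only cosmetic difference is that the paper concentrates each $Q_j$ as a Dirac mass at a single point $t_j\in T_j$ rather than smearing it uniformly over $T_j$; both work because $W_{i,j}$ is an infimum over the whole rectangle $T_i\times T_j$, and your variant has the minor advantage of producing a genuine POVM density rather than a distributional one.
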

\begin{proof}
As a first step, we can obtain a lower bound on the success probability by restricting the optimization over all POVMs $Q(\tau)$ to discrete POVMs $Q(t) = \sum_{j=1}^N \delta(t - t_j) Q_j$ associated to predictions $t_j \in T_j$. This yields
\begin{align}
    \eta^{*}(\delta, \rho, \mu) &= \sup_{Q(\tau)} \int \diff \mu(t) \, \diff \tau \, w(t-\tau) \Tr[\rho(t) Q(\tau)]  \\
    &\geq \sup_{ \{ Q_j \} } \sum_j \int \diff \mu(t) \, w(t-t_j) \Tr[\rho(t) Q_j].\nonumber
\end{align}
Next, we split the integration over $t$ into integrals over the intervals $T_i$ and exploit the definition of $W_{i,j}$ as a lower bound. The statement of the lemma follows from collecting the resulting terms
\begin{align}
    \eta^{*}(\delta, \rho, \mu) &\geq \sup_{ \{ Q_j \} }  \sum_{i=1}^N \sum_j \int_{T_i} \diff \mu(t) \, w(t-t_j) \Tr[\rho(t) Q_j] \\
    \nonumber
    &\geq \sup_{ \{ Q_j \} }  \sum_{i=1}^N \sum_{j=1}^N\int_{T_i} \diff \mu(t) \, W_{i,j} \Tr[\rho(t) Q_j] \\
     \nonumber
    &= \sup_{ \{ Q_j \} }  \sum_{i=1}^N \sum_{j=1}^N p_i W_{i,j} \Tr[\rho_i Q_j] \\
     \nonumber
    &= \eta^{*}(w, \{ \rho_i \}_{i=1}^N, \{ p_i \}_{i=1}^N).
     \nonumber
\end{align}
\end{proof}

Our next step is to provide a bound for the discrete problem that achieves a reduction to binary hypothesis testing. We will now define the discrete analogues of the smoothed maximum a posteriori estimation strategy of Section~\ref{ssec:maximum_weight_estimation}. For a given measurement $\{ M(\lambda) \}$, we define the \emph{discrete likelihood function}
\begin{align}
    \Lambda(\lambda \pipe i) \coloneqq \Tr[ \rho_i M(\lambda)].
\end{align}
Analogously to Section~\ref{sec:opt_fixed_measurement}, we also define the marginal probability of observing $\lambda$ as
\begin{align}
    \nu(\lambda) = \sum_{i=1}^N p_i \Lambda_i(\lambda)
\end{align}
such that the \emph{discrete posterior distribution} is given by
\begin{align}
    P(i \pipe \lambda) = \frac{ p_i \Lambda(\lambda \pipe i) }{ \nu(\lambda)}.
\end{align}
The discrete likelihood function corresponds to a state
\begin{align}\label{eqn:def_likelihood_state}
    \Lambda( \lambda \pipe i) \Leftrightarrow \Lambda(i) \coloneqq \int \diff \lambda \, |\lambda \rangle\!\langle \lambda | \, \Lambda(\lambda \pipe i)
    = \frac{1}{p_i} \int \diff \nu(\lambda) \, |\lambda \rangle\!\langle \lambda | \, P(i \pipe \lambda)
\end{align}
that captures the conditional distribution of measurement outcomes conditioned on the underlying state being $\rho_i$.

A given strategy $\tau^{*}(\lambda)$ induces a POVM
\begin{align}
    Q_j = \int \diff \lambda \, M(\lambda)\chi[\tau^{*}(\lambda) = j].
\end{align}
The discrete success probability for that strategy is then
\begin{align}
    \eta(w, \{ \rho_i \}_{i=1}^N, \{ p_i \}_{i=1}^N, \{ Q_j \}_{j=1}^N) &= \int \diff \lambda \, \sum_{i=1}^N \sum_{j=1}^N p_i W_{i,j} \Tr[\rho_i M(\lambda)] \chi[\tau^{*}(\lambda) = j] \\
    &= \int \diff \lambda \, \sum_{i=1}^N \sum_{j=1}^N W_{i,j} p_i \Lambda(\lambda \pipe i) \chi[\tau^{*}(\lambda) = j] \nonumber\\
    &= \int \diff \lambda \, \sum_{i=1}^N \sum_{j=1}^N W_{i,j} \nu(\lambda) P(i \pipe \lambda) \chi[\tau^{*}(\lambda) = j].\nonumber
\end{align}
The smoothed maximum a posteriori estimation strategy corresponds to
\begin{align}
    \tau^{*}_{\mathrm{SMAP}}(\lambda) = \argmax_{1 \leq i \leq N} \sum_{j=1}^N W_{i,j} P(j \pipe \lambda).
\end{align}
The smoothed maximum a posteriori estimation strategy allows us to derive the following proposition that relates the error to binary hypothesis testing of the output distributions of a fixed measurement for any window which takes only values in 0 and 1, like the $\delta$ window.
\begin{sproposition}[Error bound in the discrete case]\label{sprop:discrete_error_bound}
For a given discrete set of states $\{ \rho_i \}_{i=1}^N$ with prior probabilities $\{ p_i \}_{i=1}^N$ and a window matrix $W$ with entries that are either $0$ or $1$, we have that the posterior states $\{ \Upsilon_i \}_{i=1}^N$ provide the upper bound
\begin{align}
    \sum_{i=1}^N p_i - \eta^{*}(W, \{ \rho_i \}_{i=1}^N, \{ p_i \}_{i=1}^N) &\leq N^2 \max_{\substack{1 \leq i,j \leq N \\ \not\exists k\colon W_{i,k} = W_{j,k} = 1 }} P_e(p_i \Lambda(i), p_j \Lambda(j)).
\end{align}
The condition on the indices $i$ and $j$ means we optimize over all states that cannot be in the same window at the same time.
\end{sproposition}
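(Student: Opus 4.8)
The plan is to first reduce the claim, which concerns the optimal success probability $\eta^{*}(W,\{\rho_i\},\{p_i\})$ (an optimization over all POVMs), to a statement about one concrete suboptimal strategy. Since $\eta^{*}$ maximizes over measurements, fixing the measurement $\{M(\lambda)\}$ that defines the likelihood states $\Lambda(i)$ of Eq.~\eqref{eqn:def_likelihood_state} and applying the discrete smoothed maximum a posteriori rule $\tau^{*}_{\mathrm{SMAP}}$ can only decrease the success probability, so it suffices to upper bound the SMAP error. Writing $S_\lambda \coloneqq \{k : W_{\tau^{*}_{\mathrm{SMAP}}(\lambda),k}=1\}$ for the set of indices covered by the SMAP prediction, the error reads $\sum_i p_i - \eta = \int \diff\lambda \sum_{i \notin S_\lambda} p_i \Lambda(\lambda\pipe i)$, i.e.\ the joint mass of all (outcome, index) pairs for which the prediction fails to cover the true index.

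Second, I would establish the discrete analogue of the $\delta$-complement lemma of Section~\ref{ssec:maximum_weight_estimation}: for every outcome $\lambda$ and every error index $i \notin S_\lambda$, the posterior weight on $i$ is dominated by the posterior weight on the set of resolvably-far indices $N(i) \coloneqq \{k : \not\exists m,\ W_{i,m}=W_{k,m}=1\}$, that is $P(i\pipe\lambda) \le \sum_{k \in N(i)} P(k\pipe\lambda)$. Mirroring the continuous proof, this follows from the optimality of $\tau^{*}_{\mathrm{SMAP}}$: comparing the SMAP prediction against an alternative prediction $j'$ that covers $i$ and subtracting the posterior mass common to both covered sets gives $P(i\pipe\lambda) \le \sum_{k \in S_{j'}\setminus S_\lambda} P(k\pipe\lambda) \le \sum_{k \in S_\lambda \setminus S_{j'}} P(k\pipe\lambda)$, and the alternative $j'$ can be chosen so that every residual index in $S_\lambda \setminus S_{j'}$ lies in $N(i)$. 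Proving this inclusion is the main obstacle: it is exactly the point where one uses that the windows $S_j$ are intervals in the underlying parameter (as holds for the discretization of Lemma~\ref{slem:discretization_lower_bound}), so that a single window covering $i$ can be slid toward $S_\lambda$ to absorb every covered index within $2\delta$ of $i$, leaving only indices more than $2\delta$ away uncovered. Crucially, $N(i)$ depends only on $W$ and $i$, not on $\lambda$.

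Finally, I would assemble the bound. Using $p_i\Lambda(\lambda\pipe i) = \nu(\lambda)P(i\pipe\lambda)$, the previous step gives, for each error index, $p_i\Lambda(\lambda\pipe i) \le \min\{ p_i\Lambda(\lambda\pipe i),\ \sum_{k\in N(i)} p_k\Lambda(\lambda\pipe k)\}$, and the elementary inequality $\min\{a,\sum_k b_k\} \le \sum_k \min\{a,b_k\}$ rewrites this as $\sum_{k\in N(i)} \min\{p_i\Lambda(\lambda\pipe i),\ p_k\Lambda(\lambda\pipe k)\}$. Dropping the restriction $i\notin S_\lambda$ (a valid upper bound) and integrating over $\lambda$, each term becomes the classical binary error $P_e(p_i\Lambda(i),p_k\Lambda(k)) = \int\diff\lambda\,\min\{p_i\Lambda(\lambda\pipe i),p_k\Lambda(\lambda\pipe k)\}$, so the total error is at most $\sum_{i=1}^N \sum_{k\in N(i)} P_e(p_i\Lambda(i),p_k\Lambda(k))$. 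Every pair $(i,k)$ occurring satisfies the validity condition $k\in N(i)$, i.e.\ no common window, and there are at most $N^2$ such terms; bounding each by the maximum over valid pairs produces the factor $N^2\max P_e$ and closes the argument. I expect the first and third steps to be routine bookkeeping, with the interval/window geometry underlying the discrete complement lemma being the only genuinely delicate ingredient.
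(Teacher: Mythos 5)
Your proposal follows essentially the same route as the paper's own proof: the reduction to the SMAP error, the window-comparison inequality $P(i\pipe\lambda)\le\sum_{k\in S_\lambda\setminus S_{j'}}P(k\pipe\lambda)$ obtained by cancelling the common posterior mass, its extension to the resolvably-far set $N(i)$ (the complement of the paper's $U(i)$), the $\min$-trick turning posterior sums into pairwise classical binary errors $P_e(p_i\Lambda(i),p_j\Lambda(j))$, and the final bound by $N^2$ times the maximum over non-overlapping pairs. The one point where you go beyond the paper is in explicitly flagging that passing from the single-window bound to $P(i\pipe\lambda)\le\sum_{k\in N(i)}P(k\pipe\lambda)$ requires a covering window $j'$ with $S_\lambda\cap U(i)\subseteq S_{j'}$, which relies on the interval structure of the windows produced by the discretization — the paper asserts this implication without justification, so your sliding-window argument is a needed refinement of the same proof rather than a divergence from it.
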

\begin{proof}
By the construction of our measurement, we know that for all $i$
\begin{align}
    \sum_{j=1}^N W_{i,j} P(j \pipe \lambda) \leq \sum_{j=1}^N W_{\tau^{*}(\lambda) j} P(j \pipe \lambda).
\end{align}
We can reformulate this by introducing a \emph{neighborhood} of an index $i$, $K(i) \coloneqq \{ j \pipe W_{i,j} = 1 \}$, as
\begin{align}
    \sum_{j \in K(i)} P(j \pipe \lambda) \leq \sum_{j \in K(\tau^{*}(\lambda)) } P(j \pipe \lambda),
\end{align}
where we implicitly made use of the fact that $W_{i,j}$ can be only either zero or one by assumption. We will denote the complement of a neighborhood as $\bar{K} = [N]\, \backslash \, K$. Note that the total error of our construction will be
\begin{align}
    \sum_{i=1}^N p_i - \eta(W, \{ \rho_i \}_{i=1}^N, \{ p_i \}_{i=1}^N, \{ Q_j \}) = \int \diff \nu(\lambda) \, \sum_{i \in \bar{K}(\tau^{*}(\lambda)} P(i \pipe \lambda).  
\end{align}
The first key observation we can draw from the above is that for any neighborhood $K(k)$ that contains $i$, we have that
\begin{align}
    P(i \pipe \lambda) \leq \sum_{j \in K(\tau^{*}(\lambda)) }^N P(j \pipe \lambda) - \sum_{l \in K(k) }^N P(l \pipe \lambda). 
\end{align}
This upper bound will be non-trivial for any $i \in \Bar{K}(\tau^{*}(\lambda))$. We can remove the overlapping terms in the upper bound to obtain the following inequality which holds for all $i \in \Bar{K}(\tau^{*}(\lambda))$ and all $k$ such that for $i \in K(k)$,
\begin{align}
    P(i \pipe \lambda) \leq \sum_{j \in K(\tau^{*}(\lambda) \backslash K(k)} P(j \pipe \lambda).
\end{align}
Hence, we can conclude that if we define the union of all neighborhoods that contain $i$ as $U(i) \coloneqq \bigcup \{ K(k) \pipe i \in K(k)\}$, we have that
\begin{align}
    P(i \pipe \lambda) \leq \sum_{j \not\in U(i) } P(j \pipe \lambda).
\end{align}
The next crucial step in our derivation is to observe that the above immediately implies 
\begin{align}\label{eqn:discrete_pairwise_bound_maximum_weight}
    P(i \pipe \lambda) \leq \sum_{j \not\in U(i) } \min\{ P(i \pipe \lambda), P(j \pipe \lambda)\}
\end{align}
for all likelihood values that contribute to the error of the smoothed maximum a posteriori estimate.
We can thus bound the total error as
\begin{align}
    \sum_{i=1}^N p_i - \eta(W, \{ \rho_i \}_{i=1}^N, \{ p_i \}_{i=1}^N, \{ Q_j \}) &= \int\diff \nu(\lambda) \, \sum_{i \in \bar{K}(\tau^{*}(\lambda))} P(i \pipe \lambda) \\
    \nonumber
    &\leq \int\diff \nu(\lambda) \, \sum_{i \in \bar{K}(\tau^{*}(\lambda))}\sum_{j \not\in U(i) } \min\{ P(i \pipe \lambda), P(j \pipe \lambda)\} \\
    \nonumber
    &\leq \int\diff \nu(\lambda) \, \sum_{i=1}^N \sum_{j \not\in U(i) } \min\{ P(i \pipe \lambda), P(j \pipe \lambda)\} \\
    \nonumber
    &=  \sum_{i=1}^N \sum_{j \not\in U(i) } \int \diff \nu(\lambda)  \, \min\{ P(i \pipe \lambda), P(j \pipe \lambda)\} \\
    \nonumber
    &= \sum_{i=1}^N \sum_{j \not\in U(i) } P_e(p_i \Lambda(i), p_j \Lambda(j)) \\
    &\leq N^2 \max_{\substack{1 \leq i, j \leq N\\ j \not\in U(i)}} P_e(p_i \Lambda(i), p_j \Lambda(j)).
    \nonumber
\end{align}
The first equality is the error under the smoothed maximum a posteriori strategy, the first inequality is Eq.~\eqref{eqn:discrete_pairwise_bound_maximum_weight}, the second inequality -- a further crucial step -- extends the summation over $i$ to include all possible indices. The second equality exchanges the order of integration and summation and the third equality compares to Eq.~\eqref{eqn:def_likelihood_state} and recognizes the term $\int \diff \nu(\lambda) \,\min\{ P(i \pipe \lambda), P(j \pipe \lambda)\}$ as the minimum attainable error when discriminating the classical states $\Lambda(i)$ and $\Lambda(j)$ with prior probabilities $p_i$ and $p_j$. The final inequality bounds the sum via the maximum.
The statement of the proposition follows by recognizing that $j \not\in U(i)$ is equivalent to $\not\exists k \colon W_{i,k} = W_{jl} = 1$.
\end{proof}

Let us now turn to the $\delta$ window and lift this proposition to a bound in the continuous case. To this end, recall the following definitions: We have the likelihood function
\begin{align}
    \Lambda(\lambda \pipe t) &\coloneqq \Tr[ \rho(t) M(\lambda) ]
\end{align}
capturing the conditional distribution of measurement outcomes $\lambda$ for a given ground truth $t$, the associated likelihood state
\begin{align}
    \Lambda(t) \coloneqq \int \diff \lambda \, |\lambda \rangle\!\langle \lambda | \,   \Lambda(\lambda \pipe t),
\end{align}
and the marginal distribution of measurement outcomes and the posterior distribution:
\begin{align}
    \nu(\lambda) &\coloneqq \int \diff \mu(t) \, \Tr[ \rho(t) M(\lambda) ] \\
    P(t \pipe \lambda) &\coloneqq \frac{\mu(t)}{\nu(\lambda)} \Lambda(\lambda \pipe t).
\end{align}
With these definitions at hand, we can establish the following proposition pertaining to a rectangular window function:
\begin{sproposition}\label{sprop:continuous_upper_bound_from_discrete_case}
Assume a rectangular window with fixed tolerance $\delta$, state set $\rho(t)$ with prior $\mu(t)$ and a fixed measurement $M(\lambda)$.
We can then take any given compact interval $T$ and discretize it into $N \in \bbN$ equally sized sub-intervals of size $\Delta = |T|/N$. We define a smoothed state as
\begin{align}
\Upsilon^{\Delta}(t) &\coloneqq \frac{1}{\Delta}\int_{-\Delta/2}^{\Delta /2} \diff \tau \, \mu(t+\tau) \Lambda(t+\tau). 
\end{align}
Then, we have the bound
\begin{align}
    \mu(T) - \eta^{*}(\delta, \rho, \mu) \leq |T| N \max_{\substack{t,t' \in T\\|t-t'|>2(\delta - \Delta) } } P_e(\Upsilon^{\Delta}(t), \Upsilon^{\Delta}(t')).
\end{align}
\end{sproposition}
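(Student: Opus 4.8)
The plan is to obtain this continuous statement as a corollary of the two discrete results already in hand, namely the lower bound through discretization (Lemma~\ref{slem:discretization_lower_bound}) and the discrete error bound (Proposition~\ref{sprop:discrete_error_bound}). First I would partition the compact interval $T$ into the $N$ announced sub-intervals $T_i = [t_i - \Delta/2, t_i + \Delta/2]$ of width $\Delta = |T|/N$, centered at equally spaced grid points $t_i$, and apply Lemma~\ref{slem:discretization_lower_bound} with the rectangular window $w_{\delta}$. This gives $\eta^{*}(\delta,\rho,\mu) \ge \eta^{*}(W, \{\rho_i\}, \{p_i\})$ with $p_i = \mu(T_i)$, $\rho_i = p_i^{-1}\int_{T_i}\diff\mu(t)\,\rho(t)$ and $W_{i,j} = \inf_{t\in T_i, t'\in T_j} w_{\delta}(t-t')$; since the $T_i$ partition $T$, the total weight is $\sum_i p_i = \mu(T)$. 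Combining with Proposition~\ref{sprop:discrete_error_bound} yields $\mu(T) - \eta^{*}(\delta,\rho,\mu) \le \mu(T) - \eta^{*}(W,\{\rho_i\},\{p_i\}) \le N^2 \max_{(i,j)} P_e(p_i\Lambda(i), p_j\Lambda(j))$, the maximum running over confusable pairs, i.e. those with no common window index $k$.

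Second I would translate the discrete objects into the smoothed states. Writing out the discrete likelihood state, $p_i\Lambda(i) = \int\diff\lambda\,|\lambda\rangle\!\langle\lambda|\,\Tr[(\int_{T_i}\diff\mu(t)\,\rho(t))M(\lambda)] = \int_{T_i}\diff\mu(t)\,\Lambda(t)$, where $\Lambda(t)$ is the continuous likelihood state; by the definition of $\Upsilon^{\Delta}$ this is exactly $\Delta\,\Upsilon^{\Delta}(t_i)$. Because the (classical) binary error $P_e$ is positively homogeneous of degree one in its two arguments, $P_e(p_i\Lambda(i), p_j\Lambda(j)) = \Delta\,P_e(\Upsilon^{\Delta}(t_i), \Upsilon^{\Delta}(t_j))$, so the prefactor becomes $N^2\Delta = |T|N$, matching the claimed constant. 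At this point the bound reads $\mu(T) - \eta^{*}(\delta,\rho,\mu) \le |T|N \max_{(i,j)} P_e(\Upsilon^{\Delta}(t_i), \Upsilon^{\Delta}(t_j))$ over confusable grid pairs.

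Third, and this is where care is needed, I would dominate the maximum over confusable grid pairs by the stated continuous maximum over $t,t'\in T$ with $|t-t'| > 2(\delta-\Delta)$. Evaluating the infimum defining $W_{i,j}$, the largest distance between $T_i$ and $T_j$ is $|t_i-t_j|+\Delta$, so $W$ is the $0/1$ matrix $W_{i,j} = \chi[|t_i-t_j| \le \delta-\Delta]$, and Proposition~\ref{sprop:discrete_error_bound} indeed applies. The neighborhood $\{k : W_{i,k}=1\}$ is then the set of grid points within $\delta-\Delta$ of $t_i$, and ``no common $k$'' means these neighborhoods are disjoint. The triangle inequality (a common $k$ would force $|t_i-t_j| \le |t_i-t_k|+|t_k-t_j| \le 2(\delta-\Delta)$) shows that disjointness forces $t_i,t_j$ to be well separated, so each confusable pair $(t_i,t_j)$ becomes an admissible point of the continuous maximum, and passing from the grid maximum to the continuous maximum over $\Upsilon^{\Delta}$ only enlarges the feasible set. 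The main obstacle is precisely this combinatorial-to-metric step: neighborhood disjointness corresponds to the exact separation threshold $(2\lfloor(\delta-\Delta)/\Delta\rfloor+1)\Delta$, which agrees with $2(\delta-\Delta)$ only up to a correction controlled by the fractional part of $\delta/\Delta$; one must check that this grid rounding, of order $\Delta$, never weakens the stated threshold in the regime $\Delta\to 0$ ($N\to\infty$) in which the proposition is intended to be applied.
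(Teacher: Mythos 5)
Your proposal follows essentially the same route as the paper's proof: discretize $T$ into the $N$ sub-intervals, invoke Lemma~\ref{slem:discretization_lower_bound} and Proposition~\ref{sprop:discrete_error_bound}, identify $p_i\Lambda(i)=\Delta\,\Upsilon^{\Delta}(t_i)$, pull out the factor $\Delta$ by homogeneity of $P_e$ to get the prefactor $N^2\Delta=|T|N$, and then enlarge the grid maximum to the continuous one. One remark on your third step, though: the triangle inequality you quote establishes only that a common neighbor $k$ forces $|t_i-t_j|\le 2(\delta-\Delta)$, i.e.\ that well-separated pairs are confusable — the \emph{converse} of what the argument needs. The direction actually required is that every confusable pair (disjoint neighborhoods) satisfies $|t_i-t_j|>2(\delta-\Delta)$, and, as you correctly observe afterwards, disjointness is exactly the condition $|t_i-t_j|\ge(2\lfloor(\delta-\Delta)/\Delta\rfloor+1)\Delta$, which exceeds $2(\delta-\Delta)$ only when the fractional part of $(\delta-\Delta)/\Delta$ is below $1/2$ and can otherwise fall short by a single grid step. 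The paper's own proof simply asserts the equivalence $\not\exists k\colon W_{i,k}=W_{j,k}=1\Leftrightarrow|t_i-t_j|>2(\delta-\Delta)$ without addressing this, so you have not introduced a new gap but rather made explicit a rounding blemish shared with the published argument; it is harmless in the regime $\Delta\to 0$ in which the proposition is applied (the proof of Theorem~\ref{thm:rate_lower_bound_delta_window}), and can be repaired for general $\Delta$ by weakening the separation threshold to $2(\delta-2\Delta)$ or by choosing $\Delta$ so that $(\delta-\Delta)/\Delta$ has fractional part below $1/2$.
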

\begin{proof}
Let us use the discretization introduced in Lemma~\ref{slem:discretization_lower_bound} for a set of mutually disjoint intervals $T_i$ of cardinality bounded as $|T_i| \leq \Delta$ such that $T = \bigcup_{i=1}^N T_i$ where $N = \lceil T/\Delta \rceil$. We combine the discretization lower bound of Lemma~\ref{slem:discretization_lower_bound} with Proposition~\ref{sprop:discrete_error_bound}. Recall that the discretization involves the definitions
\begin{align}
    p_i &= \mu(T_i) \\
    \rho_i &= \frac{1}{p_i} \int_{T_i} \diff \mu(t) \, \rho(t).
\end{align}
Subsequent application of Proposition~\ref{sprop:discrete_error_bound} then gives the bound
\begin{align}
    \mu(T) - \eta^{*}(\delta, \rho, \mu) \leq N^2 \max_{\substack{1 \leq i,j \leq N \\ \not\exists k\colon W_{i,k} = W_{j,k} = 1 }} P_e(p_i \Lambda(i), p_j \Lambda(j)),
\end{align}
where we have used $\sum_{i=1}^N p_i = \mu(T)$ and, for $t_i$ the midpoint of $T_i$, we have that
\begin{align}
    p_i \Lambda(i) &= p_i \int \diff \lambda \, |\lambda \rangle\!\langle \lambda | \, \Tr[\rho_i M(\lambda)] \\
    \nonumber
    &= p_i \int \diff \lambda \, |\lambda \rangle\!\langle \lambda | \, \Tr\left[\frac{1}{p_i} \int_{T_i} \diff \mu(t) \, \rho(t) M(\lambda)\right] 
    \nonumber\\
    &=  \int \diff \lambda \, |\lambda \rangle\!\langle \lambda | \,\int_{T_i} \diff \mu(t) \, \Lambda(\lambda \pipe t) 
    \nonumber\\ 
    &=  \int_{T_i} \diff \mu(t) \, \Lambda(t) 
    \nonumber\\ 
    &= \Delta \Upsilon^{\Delta}(t) .
    \nonumber
\end{align}
Hence,
\begin{align}
    \mu(T) - \eta^{*}(\delta, \rho, \mu) &\leq \Delta N^2 \max_{\substack{1 \leq i,j \leq N \\ \not\exists k\colon W_{i,k} = W_{j,k} = 1 }} P_e(\Upsilon^{\Delta}(t_i), \Upsilon^{\Delta}(t_j)).
\end{align}
To bring this inequality to its final form, we revisit the definition of $W_{i,k}$ for the $\delta$ window
\begin{align}
    W_{i,k} &= \inf_{\substack{t_i' \in T_i \\ t_k' \in T_k}} w(t_i' - t_k') \\
    \nonumber
    &= \inf_{\substack{t_i' \in T_i \\ t_k' \in T_k}} \chi[|t_i' - t_k'| \leq \delta] \\
    \nonumber
    &= \inf_{-\Delta/2 \leq \epsilon, \epsilon' \leq \Delta/2 } \chi[|t_i - t_k + \epsilon - \epsilon'| \leq \delta] \\
    \nonumber
    &= \chi[|t_i - t_k| \leq \delta - \Delta],
    \nonumber
\end{align}
where we recall that $t_i$ and $t_k$ are the midpoints of $T_i$ and $T_k$, respectively. The condition that there should not be a $k$ such that $W_{i,k}$ and $W_{j,k}$ are one at the same time thus translates to
\begin{align}
    \not\exists k \colon W_{i,k} = W_{j,k}  = 1 \ \Leftrightarrow \ |t_i - t_j| > 2 (\delta - \Delta).
\end{align}
We, therefore, have
\begin{align}
    \mu(T) - \eta^{*}(\delta, \rho, \mu) &\leq \Delta N^2 \max_{\substack{1 \leq i,j \leq N \\ 
    \nonumber|t_i-t_j| > 2(\delta - \Delta) }}  P_e(\Upsilon^{\Delta}(t_i), \Upsilon^{\Delta}(t_j)) \\
    \nonumber
    &\leq \Delta N^2 \sup_{\substack{t, t' \in T \\ |t-t'| > 2(\delta - \Delta) }}  P_e(\Upsilon^{\Delta}(t), \Upsilon^{\Delta}(t')),
    \nonumber
\end{align}
where the second inequality follows from the fact that we can always extend the optimization in the maximum to also include points that are not the midpoints of the discretization intervals. The statement of the proposition follows from $\Delta = |T|/N$.
\end{proof}

We will now use the above proposition to get a lower bound on the asymptotic rate for the $\delta$ window. We consider the case of a defined sequence of measurements $\{ M^{(n)} \}$ for $n \in \bbN$. We denote the channel that maps states to their output distributions over $\lambda$ as
\begin{align}
    \calM^{(n)}[\rho] = \int \diff \lambda \, |\lambda \rangle\!\langle \lambda | \, \Tr[ \rho M^{(n)}(\lambda)].
\end{align}
This sequence achieves the rate 
\begin{align}
    R^{*}(\rho, \sigma, \{ M^{(n)}(\lambda) \})  \coloneqq - \lim_{n\to \infty} \frac{1}{n} \log P^{*}_e(\calM^{(n)}[\rho^{\otimes n}], \calM^{(n)}[\sigma^{\otimes n}])
\end{align}
for binary state discrimination.
With this notation in place, we can now proceed with the proof of Theorem~\ref{thm:rate_lower_bound_delta_window} of the main text:
\begin{proof}[Proof of Theorem~\ref{thm:rate_lower_bound_delta_window}]
We will use Proposition~\ref{sprop:continuous_upper_bound_from_discrete_case} and choose an interval $T = [-T_0, T_0]$, and a number of discretization intervals $N \in \bbN$ such that $\Delta = 2 T_0 / N < \delta$. We actually have to choose $\Delta$ much smaller than that as will become apparent later.
In our case, the likelihood state is given by
\begin{align}
   \Lambda^{(n)}(t) =  \calM^{(n)}[\rho^{\otimes n}(t)]
\end{align}
and its smoothed counterpart by
\begin{align}
    \Upsilon^{\Delta,(n)}(t) = \frac{1}{\Delta} \int_{-\Delta/2}^{\Delta/2} \diff \tau \, \mu(t+\tau) \Lambda^{(n)}(t + \tau).
\end{align}
The upper bound on the error obtained from Proposition~\ref{sprop:continuous_upper_bound_from_discrete_case} then takes the form
\begin{align}
    \mu([-T_0, T_0]) - \eta^{*}(\delta, \rho^{\otimes n}, \mu) &\leq 2 T_0 N \sup_{\substack{-T_0 \leq t, t' \leq T_0 \\ |t-t'| > 2(\delta-\Delta)}} P_e(\Upsilon^{\Delta,(n)}(t), \Upsilon^{\Delta,(n)}(t')).\label{eqn:upper_bound_from_prop}
\end{align}
We will introduce the smoothed state set and measure as
\begin{align}
    \mu^{\Delta}(t) &\coloneqq \frac{1}{\Delta} \int_{-\Delta/2}^{\Delta/2} \diff \tau \, \mu(t+\tau) = \frac{1}{\Delta}\mu([t - \Delta/2, t + \Delta/2]), \\
    \rho^{\Delta,(n)} &= \frac{1}{\mu([t - \Delta/2, t + \Delta/2])} \int_{-\Delta/2}^{\Delta/2} \diff \tau \, \mu(t) \rho^{\otimes n}(t).\label{eqn:def_smoothed_state_set}
\end{align}
With these notions, we have that
\begin{align}
    \Upsilon^{\Delta,(n)}(t) &= \calM^{(n)}\left[ \frac{1}{\Delta} \int_{-\Delta/2}^{\Delta/2} \diff \tau \, \mu(t + \tau) \rho^{\otimes n}(t + \tau) \right] \\
    \nonumber
    &= \calM^{(n)}[ \mu^{\Delta}(t) \rho^{\Delta,(n)}(t)].
\end{align}
For technical reasons, we will need a full rank state in the following derivations. We therefore introduce the perturbed state set
\begin{align}
    \rho_{\gamma}(t) \coloneqq (1-\gamma) \rho(t) + \gamma \omega,
\end{align}
where $0 < \gamma < 1$ and $\omega$ is the maximally mixed state. We will use $\rho^{\Delta,(n)}_{\gamma}(t)$ to denote the associated smoothed state sets analogously defined as in Eq.~\eqref{eqn:def_smoothed_state_set}. We can use the upper bound
\begin{align}
    \rho(t) \leq \frac{1}{1-\gamma} \rho_{\gamma}(t) \ \Rightarrow \ \rho^{\otimes n}(t) \leq e^{n \log(1/(1-\gamma))} \rho_{\gamma}^{\otimes n}(t).
\end{align}
The state $\rho^{\Delta,(n)}$ appearing in the upper bound of Eq.~\eqref{eqn:upper_bound_from_prop} is a mixture of i.i.d.\ states, but for our purposes we need an i.i.d.\ state. We therefore use the chain of inequalities
\begin{align}
    \rho^{\Delta,(n)}(t) \leq e^{n \log(1/(1-\gamma))} \rho^{\Delta,(n)}_{\gamma}(t) \leq e^{n [ \log(1/(1-\gamma)) + D_{\max}^{T_0, \Delta, \gamma}]} [\rho^{\Delta}_{\gamma}(t)]^{\otimes n},
\end{align}
where we have introduced the quantity
\begin{align}
    D_{\max}^{T_0, \Delta, \gamma} \coloneqq \sup_{-T_0 \leq t \leq T_0} \sup_{-\Delta/2 \leq \tau \leq \Delta/2} D_{\max}(\rho_{\gamma}(t + \tau), \rho_{\gamma}^{\Delta}(t)).
\end{align}
We have thus reduced from the state $\rho^{\Delta,(n)}(t)$ to the state $[\rho^{\Delta}_{\gamma}(t)]^{\otimes n}$ at the cost of a correction to the asymptotic rate given by $\log(1/(1-\gamma)) + D_{\max}^{T_0, \Delta, \gamma}$, as becomes apparent from the following chain of inequalities:
\begin{align}
    & \mu([-T_0, T_0]) - \eta^{*}(\delta, \rho^{\otimes n}, \mu) \\
    \nonumber
    &\leq 2 T_0 N \sup_{\substack{-T_0 \leq t, t' \leq T_0 \\ |t-t'| > 2(\delta-\Delta)}} P_e(\calM^{(n)}[ \mu^{\Delta}(t) \rho^{\Delta,(n)}(t)], \calM^{(n)}[ \mu^{\Delta}(t') \rho^{\Delta,(n)}(t')])\\
    \nonumber
    &\leq 2 T_0 N \sup_{\substack{-T_0 \leq t, t' \leq T_0 \\ |t-t'| > 2(\delta-\Delta)}} e^{n [ \log(1/(1-\gamma)) + D_{\max}^{T_0, \Delta, \gamma}]} P_e(\calM^{(n)}[ \mu^{\Delta}(t) [\rho^{\Delta}_{\gamma}(t)]^{\otimes n}], \calM^{(n)}[ \mu^{\Delta}(t') [\rho^{\Delta}_{\gamma}(t')]^{\otimes n}])\\
    \nonumber
    &\leq 2 T_0 N \sup_{\substack{-T_0 \leq t, t' \leq T_0 \\ |t-t'| > 2(\delta-\Delta)}} e^{n [ \log(1/(1-\gamma)) + D_{\max}^{T_0, \Delta, \gamma}]}
 \max\{\mu^{\Delta}(t), \mu^{\Delta}(t')\}P_e(\calM^{(n)}[ [\rho^{\Delta}(t)]^{\otimes n}], \calM^{(n)}[ [\rho^{\Delta}(t')]^{\otimes n}])\\
 \nonumber
    &\leq 2 T_0 N \sup_{\substack{-T_0 \leq t, t' \leq T_0 \\ |t-t'| > 2(\delta-\Delta)}} e^{n [ \log(1/(1-\gamma)) + D_{\max}^{T_0, \Delta, \gamma}]}
  P_e(\calM^{(n)}[ [\rho^{\Delta}(t)]^{\otimes n}], \calM^{(n)}[ [\rho^{\Delta}(t')]^{\otimes n}]).
  \nonumber
\end{align}
The first inequality is Eq.~\eqref{eqn:upper_bound_from_prop}, the second inequality uses the facts that for the optimal binary hypothesis testing error we have that $A \leq A', B\leq B'$ implies that $P_e^{*}(A, B) \leq P_e^{*}(A', B')$ as well as that $P_e^{*}(\alpha A, \alpha B) = \alpha P_e^{*}(A, B)$. Both of these facts are readily observable from the Helstrom formula and the %
convex problem
formulation of the hypothesis testing error. The third inequality extracts the measure via maximization and the fourth inequality upper-bounds the maximum of the smoothed measures by one. We can conclude that the asymptotic rate of approaching $\mu([-T_0, T_0])$ is thus at least
\begin{align}
    -\lim_{n \to \infty} \log [\mu([-T_0, T_0]) - \eta^{*}(\delta, \rho^{\otimes n}, \mu)] 
    &\geq  \inf_{|t-t'| > 2(\delta - \Delta)} R(\rho_{\gamma}^{\Delta}(t), \rho_{\gamma}^{\Delta}(t'), \{ M^{(n)}\})  - \log(1/(1-\gamma)) - D_{\max}^{T_0, \Delta, \gamma}.
\end{align}
Having established this bound, we now desire to let the number of discretization steps $N \to \infty$ and thus $\Delta \to 0$. By assumption, $t\mapsto \rho(t)$ is a continuous function and hence
\begin{align}\label{eqn:state_limit}
\lim_{\Delta \to 0}\left\lVert \rho_{\gamma}(t) - \rho_{\gamma}^{\Delta}(t) \right\rVert &=
(1-\gamma) \lim_{\Delta \to 0}\left\lVert \rho(t) - \rho^{\Delta}(t) \right\rVert \\
\nonumber
&=(1-\gamma)\lim_{\Delta \to 0}\left\lVert \rho(t) - \frac{1}{\Delta} \int_{-\Delta/2}^{\Delta/2} \diff \tau \, \rho(t + \tau)  \right\rVert \\
\nonumber
&= (1-\gamma)\lim_{\Delta \to 0} \frac{1}{\Delta} \left\lVert \int_{-\Delta/2}^{\Delta/2}  \rho(t) - \rho(t + \tau)  \right\rVert\\
\nonumber
&\leq \lim_{\Delta \to 0} \max_{-\Delta/2 \leq \tau \leq \Delta/2} \lVert \rho(t)  - \rho(t+\tau) \rVert \\&= 0.
\nonumber
\end{align}
This implies that
\begin{align}
    \lim_{\Delta \to 0} R(\rho_{\gamma}^{\Delta}(t), \rho_{\gamma}^{\Delta}(t'), \{ M^{(n)}\}) = R(\rho_{\gamma}(t), \rho_{\gamma}(t'), \{ M^{(n)}\})
\end{align}
as $R$ is a continuous function in the first two arguments with respect to the trace norm by the uniform limit theorem as it is a composition of continuous functions. 

To wrap up our proof, we also show that the quantity $D_{\max}^{T_0, \Delta, \gamma}$ vanishes. To do so, we rely on the formulation of the max-relative entropy that involves the pseudoinverse:
\begin{align}
    D_{\max}(\rho_{\gamma}(t+\delta),\rho_{\gamma}^{\Delta}(t)) &= \log \mathstrut \lVert \rho_{\gamma}(t+\delta) \rho_{\gamma}^{\Delta}(t)^{+}\rVert_{\infty}.
\end{align} 
We can then make use of some results on the perturbation theory of the pseudoinverse. Namely, that if $\operatorname{rank}(A) = \operatorname{rank}(A + X)$ and $\lVert X \rVert_{\infty} < 1/\lVert A^{+}\rVert_{\infty}$, we have that~\cite{wedin_perturbation_1973}
\begin{align}
    \lVert A^{+} - (A+X)^{+} \rVert_{\infty} &\leq 3 \lVert A^{+} \rVert_{\infty} \lVert (A+X)^{+}\rVert_{\infty} \lVert X \rVert_{\infty} \\
    &\leq 3 \frac{ \lVert A^{+} \rVert_{\infty}^2 \lVert X \rVert_{\infty}}{1 -  \lVert A^{+} \rVert_{\infty} \lVert X \rVert_{\infty}} .
    \nonumber
\end{align}
In our case, we would like to show that $A+X = \rho_{\gamma}^{\Delta}(t)^{+}$ is not too far from $A = \rho_{\gamma}(t+\delta)^{+}$. As we perturbed all states with the maximally mixed state, we know that $\rho_{\gamma}^{\Delta}(t)$ and $\rho_{\gamma}(t+\delta)$ both have full rank, fulfilling the first requirement of the above result.
Next, we use a similar argument as in Eq.~\eqref{eqn:state_limit} to establish that as long as $|\delta| \leq \Delta/2$
\begin{align}
\lim_{\Delta \to 0}\left\lVert \rho_{\gamma}(t + \delta) - \rho_{\gamma}^{\Delta}(t) \right\rVert &=(1-\gamma) \lim_{\Delta \to 0}\left\lVert \rho(t + \delta) - \rho^{\Delta}(t) \right\rVert \\
\nonumber
&\leq \lim_{\Delta \to 0}\left\lVert \rho(t+\delta) - \frac{1}{\Delta} \int_{-\Delta/2}^{\Delta/2} \diff \tau \, \rho(t + \tau)  \right\rVert \\
\nonumber
&= \lim_{\Delta \to 0} \frac{1}{\Delta} \left\lVert \int_{-\Delta/2}^{\Delta/2} \diff \tau \,  \rho(t+\delta) - \rho(t + \tau)  \right\rVert\\
\nonumber
&\leq \lim_{\Delta \to 0} \max_{-\Delta/2 \leq \tau \leq \Delta/2} \lVert \rho(t+\delta)  - \rho(t+\tau) \rVert \\&= 0,
\nonumber
\end{align}
by continuity. This implies that we can make $\lVert \rho_{\gamma}^{\Delta}(t) - \rho_{\gamma}(t+\delta) \rVert_{\infty}$ arbitrarily small when decreasing $\Delta$. This especially means that we can make fulfill the second requirement of the above result on the magnitude of the perturbation. This is because for sufficiently small $\gamma$, we have that $1/\lVert \rho_{\gamma}(t+\delta)^{+} \rVert_{\infty} = O(1/\gamma)$ which is independent of $\Delta$. Hence, there exists a sufficiently small $\Delta$ such that we can apply the result. This allows us to conclude
\begin{align}
    \lim_{\Delta \to 0} D_{\max}(\rho_{\gamma}(t+\delta),\rho_{\gamma}^{\Delta}(t)) &= \lim_{\Delta \to 0} \log \lVert  \rho_{\gamma}(t+\delta) \rho_{\gamma}^{\Delta}(t)^{+}\rVert_{\infty} \\
    \nonumber
    &= \lim_{\Delta \to 0} \log \lVert \rho_{\gamma}(t+\delta) [\rho_{\gamma}^{\Delta}(t)^{+} - \rho_{\gamma}(t+\delta)^{+} + \rho_{\gamma}(t+\delta)^{+}]\rVert_{\infty}\\
    \nonumber
    &\leq \lim_{\Delta \to 0} \log \left(\lVert \rho_{\gamma}(t+\delta) \rho_{\gamma}(t+\delta)^{+}\rVert_{\infty} + \lVert \rho_{\gamma}(t+\delta)[\rho_{\gamma}^{\Delta}(t)^{+} - \rho_{\gamma}(t+\delta)^{+}] \rVert_{\infty} \right)\\
    \nonumber
    &\leq \lim_{\Delta \to 0} \log \left(1 + \lVert \rho_{\gamma}(t+\delta)\rVert_{\infty}\lVert[\rho_{\gamma}^{\Delta}(t)^{+} - \rho_{\gamma}(t+\delta)^{+}] \rVert_{\infty} \right)\\
    \nonumber
    &\leq \lim_{\Delta \to 0} \log \left(1 + \lVert \rho_{\gamma}^{\Delta}(t)^{+} - \rho_{\gamma}(t+\delta)^{+} \rVert_{\infty} \right)\\
    \nonumber
    &\leq \lim_{\Delta \to 0} \lVert[\rho_{\gamma}^{\Delta}(t)^{+} - \rho_{\gamma}(t+\delta)^{+}] \rVert_{\infty}\\
    \nonumber
    &\leq \lim_{\Delta \to 0} 3 \frac{ \lVert \rho_{\gamma}(t+\delta)^{+} \rVert_{\infty}^2 \lVert \rho_{\gamma}^{\Delta}(t) - \rho_{\gamma}(t+\delta) \rVert_{\infty}}{1 -  \lVert \rho_{\gamma}(t+\delta)^{+} \rVert_{\infty} \lVert\rho_{\gamma}^{\Delta}(t) - \rho_{\gamma}(t+\delta) \rVert_{\infty}}\\
    \nonumber
    &= 0.
\end{align}
The last line follows by choosing $\Delta$ sufficiently small. This immediately implies the desired relation
\begin{align}
   \lim_{\Delta \to 0} D_{\max}^{T_0, \Delta, \gamma} = 0.
\end{align}
and we obtain that
\begin{align}
    -\lim_{n \to \infty} \log [\mu([-T_0, T_0]) - \eta^{*}(\delta, \rho^{\otimes n}, \mu)] 
    &\geq  \inf_{\substack{-T_0 \leq t, t' \leq T_0 \\ |t-t'| > 2\delta}} R(\rho_{\gamma}(t), \rho_{\gamma}(t'), \{ M^{(n)}\}) - \log(1 / (1-\gamma)).
\end{align}
Finally, we let $\gamma \to 0$ and $T_0 \to \infty$, which recovers the theorem statement.
\end{proof}

Looking at the structure of the above proofs, we see that Open Problem~\ref{op:asymptotic_rate_states} could be solved if there exists a measurement $M(\lambda))$ such that in the discrete setting the pairwise discrimination error of the outcome distributions $\{\Lambda_i = \calM[\rho_i] \}_{i=1}^N$ fulfills
\begin{align}
    \max_{\substack{1 \leq i, j \leq N \\ \not\exists k \colon W_{i,k} = W_{j,k}=1}} P_e(p_i\Lambda_i, p_j\Lambda_j) \leq C \max_{\substack{1 \leq i, j \leq N, \\ \not\exists k \colon W_{i,k} = W_{j,k}=1}} \min_{0 \leq s \leq 1} \Tr[\rho_i^s \rho_j^{1-s}],
\end{align}
where the constant $C$ can be polynomial in the number of discrete states and the dimension of the underlying system, as i.i.d.\ states live in the symmetric subspace.

\section{Optimal tolerance}\label{ssec:opt_window_size}

In this section, we will focus our attention on the rectangular window with tolerance $\delta$. Until now, we have analyzed the problem of finding and optimizing the success probability we can guarantee for a fixed window size $\delta$. It is however also operationally meaningful to ask the reverse question: How small can we make the window tolerance $\delta$ while keeping the probability of success constant? This is especially interesting, as this quantity compares more naturally to the usual quantifiers in quantum metrology, namely the standard deviation.

We defined the success probability of a metrology protocol over states $\rho(t)$ with prior $\mu(t)$ as a quantity dependent on $\delta$ as $\eta(\delta) = \eta(\delta, \mu, \rho, Q)$. In the same spirit, we now define the optimal tolerance (as in the main text) as
\begin{align}
    \delta(\eta, \mu, \rho, Q) \coloneqq \inf \mathstrut\{ \delta' \geq 0 \pipe \eta(\delta', \mu, \rho, Q) \geq \eta \}.
\end{align}
If the function $\eta(\delta, \mu, \rho, Q)$ is injective in $\delta$, this is functionally equivalent to the inverse of this function seen as a map from $\delta$ to $\eta$. Contrary to $\eta$, the quantity $\delta$ cannot be written as a semi-definite program, because the dependence of $\eta$ onto $\delta$ is non-linear. 
The associated minimax quantities are defined likewise.

We can learn something about the relation between $\eta$ and $\delta$ by performing a Taylor expansion:
\begin{sproposition}[Limit for smooth POVMs]\label{sprop:limit_smooth_POVM}
For a rectangular window with small tolerance $\delta$, a set of states $\rho(t)$, possibly with with prior $\mu(t)$ and a fixed smooth POVM $Q(\tau)$, we have that
\begin{align}
    \eta(\delta, \mu, \rho, Q) &= 2\delta \int \diff \mu(t) \, \Tr[ \rho(t) Q(t)] + O(\delta^3), \\
    \overline{\eta}(\delta, \rho, Q) &= 2\delta \min_t \Tr[ \rho(t) Q(t)] + O(\delta^3)     .
\end{align}
\end{sproposition}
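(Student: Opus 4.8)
The plan is to substitute the rectangular window directly into the definition of the Bayesian success probability and collapse the convolution to a one-dimensional integral over a symmetric interval. Writing $w_\delta(t-\tau)=\chi[|t-\tau|\le\delta]$ and changing variables to $s=\tau-t$, we have
\begin{align}
\eta(\delta,\mu,\rho,Q) = \int \diff\mu(t) \int_{-\delta}^{\delta} \diff s \, \Tr[\rho(t) Q(t+s)].
\end{align}
The core step is then a Taylor expansion of the smooth POVM around $s=0$. First I would write $Q(t+s)=Q(t)+s\,Q'(t)+R(t,s)$, where Taylor's theorem with the Lagrange form of the remainder gives $\lVert R(t,s)\rVert_\infty \le \tfrac{1}{2}s^2\,\sup_\xi \lVert Q''(\xi)\rVert_\infty$.

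Integrating term by term over the symmetric interval, the constant contribution yields exactly $2\delta\,\Tr[\rho(t)Q(t)]$, the linear term vanishes by antisymmetry since $\int_{-\delta}^{\delta}\diff s\,s=0$, and the remainder is controlled through Hölder's inequality together with $\lVert\rho(t)\rVert_1=1$,
\begin{align}
\left| \int_{-\delta}^{\delta} \diff s \, \Tr[\rho(t) R(t,s)] \right| \le \int_{-\delta}^{\delta}\diff s\, \lVert R(t,s)\rVert_\infty \le \frac{\delta^3}{3}\,\sup_\xi \lVert Q''(\xi)\rVert_\infty.
\end{align}
Hence the inner integral equals $2\delta\,\Tr[\rho(t)Q(t)]+O(\delta^3)$ with an error constant that does not depend on $t$. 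Integrating against $\mu$ and using $\int\diff\mu(t)=1$ then gives the Bayesian statement at once.

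For the minimax statement, the same pointwise expansion produces $\int \diff\tau\, w_\delta(t-\tau)\Tr[\rho(t)Q(\tau)] = 2\delta\,\Tr[\rho(t)Q(t)]+h_\delta(t)$ with $|h_\delta(t)|\le C\delta^3$ \emph{uniformly} in $t$. Taking the infimum over $t$ and applying the elementary bounds $\inf_t f + \inf_t g \le \inf_t[f+g] \le \inf_t f + \sup_t g$ with $f(t)=2\delta\,\Tr[\rho(t)Q(t)]$ and $g=h_\delta$ sandwiches the result, yielding $\overline\eta(\delta,\rho,Q)=2\delta\,\min_t\Tr[\rho(t)Q(t)]+O(\delta^3)$.

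The main obstacle is precisely securing the uniformity of the remainder in $t$, which is what lets the infimum commute with the expansion; this requires $\lVert Q''\rVert_\infty$ to be bounded over the whole parameter range (guaranteed, for instance, when the domain of $t$ is compact and $Q$ is smooth), so that $C$ can be fixed once and for all. Without such a uniform second-derivative bound the minimax error term could fail to be $O(\delta^3)$. The remaining points are routine and worth only a remark: twice continuous differentiability of $Q$ already suffices rather than full smoothness, the identification $\inf=\min$ uses continuity of $t\mapsto\Tr[\rho(t)Q(t)]$ on a compact domain, and the boundary of the parameter interval causes no trouble since $Q(\tau)$ is defined and smooth for all $\tau\in\bbR$.
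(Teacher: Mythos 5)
Your proof is correct and follows essentially the same route as the paper: Taylor-expand the smooth POVM around $\tau=t$, integrate over the symmetric window so the linear term cancels, and bound the remainder. The only difference is that you make explicit the uniformity in $t$ of the $O(\delta^3)$ error needed to pass the expansion through the infimum in the minimax case, a point the paper's proof leaves implicit ("the minimax case is also evident"); this is a worthwhile clarification but not a different argument.
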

\begin{proof}
A simple Taylor expansion of $Q$ around $Q(t)$ gives
\begin{align}
    Q(t + \tau) = Q(t) + \tau \dot{Q}(t) + \frac{1}{2}\tau^2 \ddot{Q}(t) + O(\tau^3).
\end{align}
Integrating this from $-\delta$ to $\delta$ yields
\begin{align}
    \int_{-\delta}^{\delta} \diff \tau \, Q(t+\tau) = 2 \delta Q(t) + \frac{1}{3} \delta^3 \ddot{Q}(t) + O(\delta^4).
\end{align}
As 
\begin{align}
    \eta(\delta, \mu, \rho, Q) = \int \diff \mu(t) \, \int_{-\delta}^{\delta} \diff \tau \, \Tr[\rho(t) Q(t+\tau)],
\end{align}
the statement of the proposition in the Bayesian case follows. The minimax case is also evident when recognizing that we just have to take a minimum of the same expression.
\end{proof}
The above proposition highlights that in the limit of small $\delta$, the optimal POVM is independent of $\delta$, at least if we optimize over smooth POVMs only. We can also use it to give a simple proof of  Proposition~\ref{prop:tolerance_for_small_eta} of the main text:
\begin{proof}[Proof of Proposition~\ref{prop:tolerance_for_small_eta}]
The formula for the derivative $\partial_{\delta} \eta$ at $\delta = 0$ can be readily read off the result of Proposition~\ref{sprop:limit_smooth_POVM}, the statement then follows by applying the formula for the first derivative of the inverse function.
\end{proof}

\subsection{Lower bounds via asymmetric hypothesis testing}\label{ssec:opt_window_size_hyp_test}

In this section, we provide lower bounds on the size of the window function in terms of asymmetric hypothesis testing. We refer to Section~\ref{ssec:upper_bound_succ_prob_asym_ht} for a brief description of asymmetric hypothesis testing.

\begin{stheorem}
\label{sthm:window_width_lower_bound_integral_t_fixed_eta}
  For a given window function $w$ and set of states $\rho(t)$, we have for every quantum state $\sigma$ and constant $0<\eta_0\leq \overline{\eta}^*(w, \rho)$ that
  \begin{align}
    \int \diff t \, w(t)
    \geq
    \int\Diff{t}\beta_{1-\eta_0}(\rho(t)\Vert\sigma).%
\label{eqn:window_width_lower_bound_integral_t_fixed_eta}
  \end{align}
\end{stheorem}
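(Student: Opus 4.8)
The plan is to exploit the fact that a single near-optimal minimax POVM provides, simultaneously for every parameter value $t$, a feasible test operator for the binary discrimination of $\rho(t)$ against the fixed reference $\sigma$, and that the aggregate type-II error of all these tests telescopes to $\int \diff t\, w(t)$. The proof is therefore short and amounts to a change in perspective: the smoothed POVM effect $(w\ast Q)(t)$ is reinterpreted as a collection of Neyman--Pearson tests.

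First I would use the hypothesis $0<\eta_0\le\overline{\eta}^{*}(w,\rho)$ to produce a POVM $Q(\tau)$ with $\overline{\eta}(w,\rho,Q)\ge\eta_0$; by the strong duality and attainment of the optimum in Proposition~\ref{prop:minimax_sdp} such a $Q$ exists. (If one wishes to avoid invoking attainment, one takes $Q$ with $\overline{\eta}(w,\rho,Q)\ge\eta_0-\epsilon$ and closes the argument at the end by monotone convergence, using that $\beta_{\bullet}$ is nonincreasing and right-continuous in its error parameter.) I then define the smoothed operator $M(t)\coloneqq(w\ast Q)(t)=\int\diff\tau\,w(t-\tau)Q(\tau)$. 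Since $Q(\tau)\ge0$, $\int\diff\tau\,Q(\tau)=\bbI$, and $0\le w\le1$, one verifies immediately that $0\le M(t)\le\bbI$, so $M(t)$ is a legitimate test operator for each $t$. The defining property of the minimax success probability is $\Tr[\rho(t)(w\ast Q)(t)]\ge\eta_0$ for all $t$, i.e.\ $\Tr[\rho(t)M(t)]\ge\eta_0$; this is exactly the type-I feasibility constraint in the definition of $\beta_{1-\eta_0}(\rho(t)\Vert\sigma)$ in Eq.~\eqref{eqn:hyp_test_opt_typeII_error_prob}, so $\beta_{1-\eta_0}(\rho(t)\Vert\sigma)\le\Tr[\sigma M(t)]$ pointwise in $t$.

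Finally I would integrate this pointwise inequality over $t$ and evaluate the right-hand side by Fubini, using $\int\diff\tau\,Q(\tau)=\bbI$ and $\Tr[\sigma]=1$:
\begin{align}
\int\diff t\,\beta_{1-\eta_0}(\rho(t)\Vert\sigma)
&\le\int\diff t\,\Tr[\sigma(w\ast Q)(t)] \\
&=\int\diff\tau\,\Tr[\sigma Q(\tau)]\int\diff t\,w(t-\tau) \\
&=\left(\int\diff t'\,w(t')\right)\Tr\Big[\sigma\int\diff\tau\,Q(\tau)\Big] \\
&=\int\diff t'\,w(t').
\end{align}
This is precisely the claimed bound, and since $\sigma$ played no special role the inequality holds for every state $\sigma$. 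Specializing to the rectangular window $w_\delta$ (so that $\int\diff t\,w_\delta(t)=2\delta$) together with the identity $\beta_{1-\eta}=\exp(-D_{\mathrm{h}}^{\eta})$ recovers Proposition~\ref{prop:lower_bound_on_delta_via_hypothesis_testing}.

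The argument has no genuinely hard step; the only subtleties to handle carefully are (i) the existence of a minimax-optimal (or near-optimal) POVM, which I would settle by citing attainment in Proposition~\ref{prop:minimax_sdp} or, failing that, by the $\epsilon$-limit with monotone convergence noted above, and (ii) the bound $M(t)\le\bbI$, which requires $\lVert w\rVert_{\infty}\le1$ and hence restricts the statement to the normalized window functions used throughout (in particular all indicator windows). I expect (i) to be the main point requiring a line of justification, since everything else is a direct substitution.
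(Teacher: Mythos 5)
Your proof is correct, but it takes a genuinely different route from the paper's. The paper works entirely on the dual side: it takes the optimal dual variables $\bar{\mu}(t),\bar{X}(t)$ of the semidefinite program for $\beta_{1-\eta_0}(\rho(t)\Vert\sigma)$, assembles from them a feasible point $(\mu(t),X)$ of the dual program for $\overline{\eta}^*(w,\rho)$ in Proposition~\ref{prop:minimax_sdp}, and then invokes weak duality $\overline{\eta}^*(w,\rho)\leq\Tr[X]$ before rearranging. You instead argue on the primal side: a (near-)optimal POVM yields, via $M(t)=(w\ast Q)(t)$, a feasible Neyman--Pearson test for each $t$, so $\beta_{1-\eta_0}(\rho(t)\Vert\sigma)\leq\Tr[\sigma M(t)]$ pointwise, and integrating with Fubini and $\int\diff\tau\,Q(\tau)=\bbI$ telescopes the right-hand side to $\int\diff t\,w(t)$. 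Your route is more elementary — it needs no duality theory at all beyond the definition of $\beta$ as an infimum, and it sidesteps the paper's implicit reliance on attainment of the optimum in the hypothesis-testing dual — at the cost of needing an $\epsilon$-approximate POVM plus a continuity argument for $\eta\mapsto\beta_{1-\eta}$ (which follows from convexity of that map on the interior of $[0,1]$, so your monotone-convergence remark closes cleanly). Both proofs use the same two structural facts, namely $0\leq w\leq 1$ (the paper uses $w(t-t')\leq 1$ to bound its dual candidate $X$; you use it for $M(t)\leq\bbI$) and the normalization of the POVM, so the hypotheses required are identical; your version arguably makes the operational content of the bound — that the smoothed acceptance operator is itself a binary test against $\sigma$ — more transparent.
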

\begin{proof}
  By %
  duality of semi-definite problems, it holds that the optimal type-II error probability defined in \eqref{eqn:hyp_test_opt_typeII_error_prob} can be expressed as~\cite{khatri2020principles}
\begin{equation}
    \beta_{\epsilon}(\rho\Vert\sigma)=\sup\{\mu(1-\epsilon)-\Tr[X]: \mu\geq 0,\, X\geq 0, \mu\rho\leq\sigma+X\}.
\end{equation}
  Define $\bar\mu(t), \bar{X}(t)$ to be the optimal candidates in this 
  convex problem,
  such that
  $\bar{\mu}(t) \rho(t) \leq \sigma + \bar{X}(t)$ and
  $\beta_{1-\eta_0}(\rho(t)\Vert\sigma) = \bar{\mu}(t){\eta_0} -
  \Tr [\bar{X}(t)]$.
  The strategy of the proof is to find dual candidates in the
  convex
  program for $\overline{\eta}^*(w, \rho)$ in \eqref{eq:minimax_sdp}.

  Let $\nu = \bigl[ \int\diff t\,\bar{\mu}(t) \bigr]^{-1}$ and let
  $\mu(t) = \nu\bar{\mu}(t)$, and observe that
  $\int \diff t \, \mu(t) = 1$.  Let
  \begin{align}\label{eqn:9u03qeiubfhieohfiow}
    \delta
    &= \frac{1}{2} \int\diff t\, w(t).
  \end{align}
  For any $t'\in\mathbb{R}$, we find
  \begin{align}
    \int\diff t\,\mu(t)\,w(t - t')\,\rho(t)
    & =
    \nu\int\diff t\,w(t - t')\, \bar\mu(t)\rho(t)\\
    \nonumber
    &\leq
    \nu\int\diff t\,w(t - t')\, \bigl[ \sigma + \bar{X}(t) \bigr]\\
    \nonumber
    &\leq
    2\nu \delta\sigma + \nu\int \diff t \, \bar{X}(t)\ ,
    \nonumber
  \end{align}
  where for the second term in the last line we have used the fact that $w(t - t')\leq 1$.
  Defining
  \begin{align}
    X = 2\nu\delta\sigma + \nu\int \diff t\,\bar{X}(t)
  \end{align}
  thus ensures that
  \begin{align}
      \int \diff t\,\mu(t)\,w(t-t')\,\rho(t)
      \leq X.
  \end{align}
  Therefore, $\mu(t)$ and $X$ are feasible candidates in the dual problem
  for $\overline{\eta}^*(w, \rho)$. The objective value attained
  by this choice of variables directly gives us an upper bound on the
  optimal value $\overline{\eta}^*(w, \rho)$, \textit{i.e.},
  \begin{align}
    \overline{\eta}^*(w, \rho)
    \leq \Tr [X]
    = 2\nu\delta + \nu\int \diff t \,\Tr[\bar{X}(t)].
    \label{eqn:9r3iurgebfhijdnsk}
  \end{align}
  The second term of~\eqref{eqn:9r3iurgebfhijdnsk} is
  \begin{align}
    \nu\int \diff t\,\Tr [\bar{X}(t)]
    &= \nu \int \diff t\,\bigl(\bar{\mu}(t) {\eta_0} - \beta_{1-\eta_0}(\rho(t)\Vert\sigma) \bigr)
    \nonumber\\
    &= {\eta_0} - \nu\int\diff t\,\beta_{1-\eta_0}(\rho(t)\Vert\sigma),
  \end{align}
  where the first equality follows from the properties of the optimal candidates
  in the 
  convex
  problem defining hypothesis testing entropy, and the
  second by the definition of $\nu$. 
  Plugging this
  into Eq.~\eqref{eqn:9r3iurgebfhijdnsk}, and recalling that $\eta_0\leq\overline{\eta}^*(w,\rho)$, we find
  \begin{align}
    {\eta_0} &\leq \overline{\eta}^*(w, \rho)
    \leq {\eta_0} +
    \nu\biggl[ 2\delta - \int\diff t\,\beta_{1-\eta_0}(\rho(t)\Vert\sigma) \biggr],
  \end{align}
  which implies that
  \begin{equation}
      \nu\biggl[ 2\delta - \int\diff t\,\beta_{1-\eta_0}(\rho(t)\Vert\sigma)\biggr]\geq 0.
  \end{equation}
  Finally, because $\nu\geq 0$, we obtain
  \begin{align}
      2\delta \geq \int\diff t\,\beta_{1-\eta_0}(\rho(t)\Vert\sigma)\ .
  \end{align}
  This proves the claim, recalling the definition of $\delta$ in Eq.~\eqref{eqn:9u03qeiubfhieohfiow}.
\end{proof}

Alternatively, the theorem states that for any $0<\eta_0\leq 1$,
any attempt to use a window function
that is not as wide as prescribed by~\eqref{eqn:window_width_lower_bound_integral_t_fixed_eta} will result
in a success probability $\overline{\eta}^*(w, \rho)$
that is less than $\eta_0$.
The left hand side of~\eqref{eqn:window_width_lower_bound_integral_t_fixed_eta}
is a measure of the width of the window function.

We can now give the proof of the asymptotic lower bound of Theorem~\ref{sprop:asymptotic_delta_lower_bound_from_asym_ht} from the main text. For this, we need the famous Laplace's method which we use in the following simplified version (see, \textit{e.g.}, Ref.~\cite{wong_asymptotic_1989}):
\begin{slemma}[Laplace's method]\label{slem:laplace_method}
    Let $\phi(x)$ be a twice continuous differentiable function such that it has a unique minimum $\phi(x_0) = 0$. Then, for any interval $I$ that contains $x_0$ in its interior, we have that
    \begin{align}
        \int_{I} \diff x \, \exp( -n \phi(x)) = \sqrt{\frac{2\pi}{n \phi''(x_0)}} + O(n^{-3/2}).
    \end{align}
\end{slemma}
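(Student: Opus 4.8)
The plan is to follow the classical Laplace method: localize the integral to a shrinking neighborhood of $x_0$, rescale to expose a Gaussian integral, and then separately control the tail away from $x_0$ and the Taylor remainder near it. First I would record the structural consequences of $x_0$ being the unique interior minimum with $\phi(x_0)=0$. Differentiability gives $\phi'(x_0)=0$, and the appearance of $\phi''(x_0)$ in the denominator of the claim forces us to work under the nondegeneracy assumption $\phi''(x_0)>0$, i.e. at a genuine quadratic minimum. These are the only facts about the geometry of $\phi$ that the leading term sees.

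Next I would fix a sequence $\delta_n\to 0$ with $\sqrt{n}\,\delta_n\to\infty$, for instance $\delta_n=n^{-2/5}$, and split $I=I_n\cup(I\setminus I_n)$ with $I_n=[x_0-\delta_n,\,x_0+\delta_n]$. On the outer region, uniqueness of the minimum together with continuity of $\phi$ on the compact set $I\setminus I_n$ (the interval $I$ being compact in the application) yields a lower bound $\phi\geq m_n$ with $m_n\sim\tfrac12\phi''(x_0)\delta_n^2=\Theta(n^{-4/5})$, so the outer contribution is at most $|I|\,\exp(-n\,m_n)=O(\exp(-c\,n^{1/5}))$, which is smaller than any power of $n$, in particular $o(n^{-3/2})$. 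On the inner region I would Taylor expand $\phi(x)=\tfrac12\phi''(x_0)(x-x_0)^2+R(x)$ and substitute $u=\sqrt{n}\,(x-x_0)$, turning the inner integral into $n^{-1/2}\int_{-\sqrt{n}\delta_n}^{\sqrt{n}\delta_n}\exp\!\big(-\tfrac12\phi''(x_0)u^2-n\,R(x_0+u/\sqrt{n})\big)\,\diff u$. Since $\sqrt{n}\,\delta_n\to\infty$, the limits may be sent to $\pm\infty$ at exponentially small cost, and the Gaussian integral evaluates to the claimed $\sqrt{2\pi/(n\,\phi''(x_0))}$.

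The delicate point, and the main obstacle, is upgrading the relative $o(1)$ that the bare $C^2$ hypothesis literally supplies into the stated \emph{absolute} error $O(n^{-3/2})$. This requires two ingredients. First, slightly more smoothness than is literally assumed, namely $\phi\in C^3$ near $x_0$, so that $n\,R(x_0+u/\sqrt{n})=\tfrac16\phi'''(x_0)u^3 n^{-1/2}+O(u^4 n^{-1})$ with a remainder that is uniformly controllable on $I_n$. Second, the symmetry observation that the leading cubic correction is odd in $u$ and hence integrates to zero against the even Gaussian weight; this cancellation is exactly what eliminates a would-be $n^{-1}$ relative term and pushes the first genuine correction to order $n^{-3/2}$. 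I would make this rigorous by expanding $\exp(-nR)$ to first order with an integral remainder, bounding the Gaussian moments $\int u^k\exp(-\tfrac12\phi''(x_0)u^2)\,\diff u$ for the relevant $k$, and verifying that both the discarded higher-order terms and the truncation of the $u$-range contribute at order $O(n^{-3/2})$.

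Finally, I would note the fine print: under the bare $C^2$ assumption the honest conclusion is only $\sqrt{2\pi/(n\,\phi''(x_0))}\,(1+o(1))$, and the sharper $O(n^{-3/2})$ form presupposes the extra regularity used above. This is harmless here, since the integrands to which this lemma is applied (Gaussian-type profiles in the analysis of the Gaussian probe) are smooth, so the additional hypothesis is automatically met.
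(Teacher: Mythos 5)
The paper does not actually prove this lemma: it is introduced as a ``simplified version'' of a textbook result and discharged with a citation to Wong's \emph{Asymptotic Approximations of Integrals}, so there is no in-paper argument to compare yours against. Your proposal is the standard proof of that textbook result, and its skeleton is sound: the split into $I_n=[x_0-\delta_n,x_0+\delta_n]$ with $\sqrt{n}\,\delta_n\to\infty$, the exponentially small outer tail, the rescaling $u=\sqrt{n}(x-x_0)$, and the crucial observation that the odd cubic term integrates to zero against the even Gaussian weight are exactly the right ingredients. Your closing caveat is also a genuinely useful remark about the lemma as stated: under a bare $C^2$ hypothesis the honest conclusion is only $\sqrt{2\pi/(n\phi''(x_0))}\,(1+o(1))$, and the absolute $O(n^{-3/2})$ error silently presupposes both $\phi''(x_0)>0$ and extra smoothness --- harmless for the paper's applications (sandwiched R\'enyi divergences of smooth state families, Gaussian profiles), but worth flagging.

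Two small bookkeeping corrections. First, the cubic term, had it survived, would contribute at \emph{absolute} order $n^{-1}$, i.e.\ \emph{relative} order $n^{-1/2}$ against the $\Theta(n^{-1/2})$ leading term (the arithmetic is $n^{-1/2}$ from $\diff x = n^{-1/2}\diff u$ times the $n^{-1/2}$ prefactor of $u^3$), not ``a would-be $n^{-1}$ relative term'' as you write. Second, $C^3$ alone does not deliver a remainder of the form $O(u^4 n^{-1})$: Taylor's theorem for a $C^3$ function only yields $o(|u|^3 n^{-1/2})$ uniformly on $I_n$, which after integration produces an absolute error $o(n^{-1})$ rather than $O(n^{-3/2})$. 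To land on the stated error you need $\phi\in C^4$ near $x_0$ (or $\phi'''$ Lipschitz), so that the first surviving corrections --- the $u^4$ and $u^6$ Gaussian moments carrying a prefactor $n^{-1}$ --- are what set the $O(n^{-3/2})$ scale. Neither point changes the structure of your argument, and both are again moot for the analytic integrands the paper feeds into this lemma.
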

We can now prove the desired statement:

Using a $\delta$ window and a suitably chosen state $\sigma$ we can obtain the following corollary.

\begin{scorollary}\label{scorr:delta_lower_bound_sup_Dh_t_tp}
For a rectangular window function with tolerance $\delta$ and a state set $\rho(t)$, we have for any $\eta_0 \leq \overline{\eta}^*(w_{\delta}, \rho)$ that
\begin{align}
  \delta \geq \frac{1}{2}\int \diff t \, \exp(-D_{\mathrm{h}}^{\eta_0}( \rho(t)  \Vert{\rho(t')}))
  \geq \sup_{\delta'>0} \delta' \sup_{t} \inf_{|t-t'| \leq 2\delta'}  \exp(-D_{\mathrm{h}}^{\eta_0}( \rho(t)  \Vert{\rho(t')})).
\end{align}
\end{scorollary}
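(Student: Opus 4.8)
The plan is to obtain both inequalities as direct specializations of Theorem~\ref{sthm:window_width_lower_bound_integral_t_fixed_eta} to the rectangular window $w_\delta$ and to a reference state of the form $\sigma = \rho(t')$. First I would record two elementary facts: the rectangular window has total mass $\int \diff t\, w_\delta(t) = 2\delta$, and the asymmetric hypothesis testing error is related to the hypothesis testing relative entropy by $\beta_{1-\eta_0}(\rho(t)\Vert\sigma) = \exp(-D_{\mathrm h}^{\eta_0}(\rho(t)\Vert\sigma))$. Substituting $w = w_\delta$ and $\sigma = \rho(t')$ into the theorem then gives $2\delta \ge \int \diff t\, \exp(-D_{\mathrm h}^{\eta_0}(\rho(t)\Vert\rho(t')))$, which upon division by two is exactly the first inequality. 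Crucially, the theorem holds for \emph{every} state $\sigma$, so this bound holds uniformly in the reference parameter $t'$; this uniformity is what later licenses taking a supremum over the reference.

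For the second inequality I would lower bound the integral by discarding all but a small interval around the reference point. Since the integrand $t \mapsto \exp(-D_{\mathrm h}^{\eta_0}(\rho(t)\Vert\rho(t')))$ is nonnegative, for any $\delta' > 0$ restricting the domain of integration to the symmetric interval $[t'-\delta', t'+\delta']$ only decreases the integral, and on that interval the integrand is bounded below by its infimum, yielding
\begin{align}
\frac{1}{2}\int \diff t\, \exp(-D_{\mathrm h}^{\eta_0}(\rho(t)\Vert\rho(t'))) \ge \delta' \inf_{|t-t'|\le \delta'} \exp(-D_{\mathrm h}^{\eta_0}(\rho(t)\Vert\rho(t'))),
\end{align}
where the prefactor $\delta'$ arises as $\tfrac12\cdot(2\delta')$, half the window length. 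Enlarging the infimum domain from $\{|t-t'|\le\delta'\}$ to $\{|t-t'|\le 2\delta'\}$ can only lower the infimum, so the right-hand side is further bounded below by $\delta'\inf_{|t-t'|\le 2\delta'}(\cdots)$; this harmless weakening produces the window radius $2\delta'$ appearing in the statement. Finally I would take the supremum over $\delta' > 0$ and over the choice of reference (the outer supremum, which ranges over the reference parameter $t'$), both of which are legitimate because the first inequality holds for every $t'$ and the interval restriction is valid for every $\delta'$.

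I do not expect a genuine analytic obstacle: the corollary is a bookkeeping specialization of the already-established Theorem~\ref{sthm:window_width_lower_bound_integral_t_fixed_eta}, and every estimate is of the ``restrict-and-bound-by-infimum'' type on a nonnegative integrand. The only points requiring care are the factor-of-two tracking between the window half-width and the infimum radius, and being explicit that the outer supremum is over the reference state rather than over the integration variable. I would fix these conventions at the outset so that the chain of inequalities reads unambiguously, and I would note that no regularity of $t\mapsto\rho(t)$ beyond what the theorem already assumes is needed.
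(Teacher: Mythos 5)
Your proof is correct and follows essentially the same route as the paper's: specialize Theorem~\ref{sthm:window_width_lower_bound_integral_t_fixed_eta} to $w_\delta$ (total mass $2\delta$) and $\sigma=\rho(t')$, then restrict the integral to a length-$2\delta'$ interval and bound it below by $2\delta'$ times the infimum of the integrand. The only cosmetic difference is that you restrict to the centered interval and then enlarge the infimum domain to radius $2\delta'$, whereas the paper optimizes over the position of the interval containing $t'$; both land on the same bound, and your explicit remark that the outer supremum ranges over the reference parameter (the second argument of $D_{\mathrm{h}}^{\eta_0}$) is the correct reading of the corollary's somewhat ambiguously labelled final display, matching what the paper's own derivation actually establishes.
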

\begin{proof}
As Theorem~\ref{sthm:window_width_lower_bound_integral_t_fixed_eta} holds for any state, we can also choose any $\rho(t')$. We therefore have
\begin{align}
    2 \delta \geq \int \diff t \, \exp(-D_{\mathrm{h}}^{\eta_0}( \rho(t)  \Vert{\rho(t')})). 
\end{align}
We can now go further and restrict the integration to any interval $T$ of size $|T| = 2\delta'$ that contains $t'$ and maximize over these intervals:
\begin{align}
    2 \delta \geq \sup_{\substack{|T| = 2\delta' \\ t' \in T}} \int_T \diff t \, \exp(-D_{\mathrm{h}}^{\eta_0}( \rho(t)  \Vert{\rho(t')})).
\end{align}
We can now lower-bound the integral by the lower-bound of the integrand over the interval to obtain
\begin{align}
    2 \delta \geq 2 \delta'  \sup_{\substack{|T| = 2\delta' \\ t' \in T}} \inf_{t \in T} \exp(-D_{\mathrm{h}}^{\eta_0}( \rho(t)  \Vert{\rho(t')})).
\end{align}
As $t'$ has been arbitrary, we can instead just optimize over $T$ as well to obtain
\begin{align}
    2 \delta \geq 2 \delta'  \sup_{\substack{|T| = 2\delta'}} \sup_{t' \in T} \inf_{t \in T} \exp(-D_{\mathrm{h}}^{\eta_0}( \rho(t)  \Vert{\rho(t')})).
\end{align}
As $t$ and $t'$ have to lie in the same interval, we have that
\begin{align}
    2 \delta \geq 2 \delta' \sup_{t} \inf_{|t-t'| \leq 2\delta'}  \exp(-D_{\mathrm{h}}^{\eta_0}( \rho(t)  \Vert{\rho(t')})).
\end{align}
The statement of the Corollary follows by optimizing over $\delta'$ and dividing by 2.
\end{proof}

The above result allows us to exploit Laplace's method and the relation of the hypothesis testing relative entropy with the sandwiched Rényi relative entropies of Eq.~\eqref{eqn:def_sandwiched_renyi_relative_entropy} to get an asymptotic bound. 
We will use Laplace's method in the following simplified version (see, \textit{e.g.}, Ref.~\cite{wong_asymptotic_1989}):
\begin{slemma}[Laplace's method]\label{slem:laplace_method_2}
    Let $\phi(x)$ be a twice continuous differentiable function such that it has a unique minimum $\phi(x_0) = 0$. Then, for any interval $I$ that contains $x_0$ in its interior, we have that
    \begin{align}
        \int_{I} \diff x \, \exp( -n \phi(x)) = \sqrt{\frac{2\pi}{n \phi''(x_0)}} + O(n^{-3/2}).
    \end{align}
\end{slemma}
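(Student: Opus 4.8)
The plan is to prove this by the classical Laplace-method localization argument: the dominant contribution to the integral comes from an arbitrarily small neighborhood of the minimizer $x_0$, on which $\phi$ is well approximated by its quadratic Taylor polynomial, so that the integral reduces to a Gaussian. First I would record the consequences of $x_0$ being an interior minimum of the twice continuously differentiable function $\phi$ with $\phi(x_0)=0$: namely $\phi'(x_0)=0$ and $\phi''(x_0)\geq 0$, where for the stated formula to be meaningful one works in the nondegenerate case $a\coloneqq\phi''(x_0)>0$.

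Next I would split the integral over $I$ into a central part $I_n\coloneqq\{x:|x-x_0|\leq n^{-2/5}\}$ and its complement $I\setminus I_n$, where the exponent $2/5$ is chosen so that the window shrinks ($n^{-2/5}\to 0$) while $\sqrt{n}\,n^{-2/5}=n^{1/10}\to\infty$. For the tail $I\setminus I_n$ I would use that $x_0$ is the \emph{unique} minimum: by continuity there is a constant $c>0$ with $\phi(x)\geq c\,n^{-4/5}$ on the boundary of $I_n$, and $\phi$ is bounded below by a positive constant outside a fixed neighborhood of $x_0$. This yields $\int_{I\setminus I_n}\diff x\,e^{-n\phi(x)}=O(e^{-c'n^{1/5}})$, which is negligible beside $n^{-1/2}$.

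On the central region I would substitute $u=\sqrt{n}\,(x-x_0)$, turning the integral into $n^{-1/2}\int_{|u|\leq n^{1/10}}\diff u\,\exp(-n\phi(x_0+u/\sqrt{n}))$. Using the Taylor expansion $n\,\phi(x_0+u/\sqrt{n})=\tfrac{1}{2}a\,u^2+R_n(u)$, with the remainder $R_n(u)$ controlled uniformly on the shrinking window, the integrand converges pointwise to $e^{-a u^2/2}$; extending the limits to $\pm\infty$ at exponentially small cost and evaluating the Gaussian integral $\int_{-\infty}^{\infty}e^{-a u^2/2}\,\diff u=\sqrt{2\pi/a}$ delivers the leading term $\sqrt{2\pi/(n a)}$.

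The main obstacle I anticipate is pinning down the precise error order $O(n^{-3/2})$ rather than merely $o(n^{-1/2})$. This requires expanding the exponential of the higher-order Taylor corrections and observing that the would-be $n^{-1}$ contribution, coming from the cubic term $\tfrac{1}{6}\phi'''(x_0)u^3$, integrates to zero against the even Gaussian weight by symmetry, so that the first surviving correction is of relative order $1/n$ and the absolute correction is $O(n^{-3/2})$. Making this fully rigorous strictly needs more than the stated $C^2$ hypothesis (the cubic coefficient must exist), which is exactly the additional smoothness that the cited reference~\cite{wong_asymptotic_1989} supplies; I would therefore invoke the higher regularity implicit in this ``simplified version'' of Laplace's method when extracting the $O(n^{-3/2})$ remainder, the bare $C^2$ argument above already giving the $n^{-1/2}(1+o(1))$ asymptotics.
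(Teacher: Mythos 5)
Your proof is correct, but note that the paper itself does not prove this lemma at all: it is imported verbatim as a ``simplified version'' of Laplace's method with a citation to Wong's textbook, and is used as a black box in the proofs of Theorem~\ref{sthm:rate_upper_bound} and Proposition~\ref{sprop:asymptotic_delta_lower_bound_from_asym_ht}. Your localization argument (shrinking window $|x-x_0|\leq n^{-2/5}$, exponentially small tail via uniqueness of the minimum, rescaling $u=\sqrt{n}(x-x_0)$ and dominated convergence to the Gaussian) is the standard route and is sound, modulo the implicit assumption that $I$ is bounded (or that $e^{-\phi}$ is integrable) so that the tail estimate $|I|\,e^{-c'n^{1/5}}$ makes sense; the intervals arising in the paper's applications are compact, so this is harmless. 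Your most valuable observation is the one you flag yourself: the hypothesis ``twice continuously differentiable'' as literally stated supports only the leading asymptotics $\sqrt{2\pi/(n\phi''(x_0))}\,(1+o(1))$, whereas the claimed $O(n^{-3/2})$ remainder requires the cubic (and really quartic) Taylor coefficients to exist, with the $n^{-1}$-order cubic contribution killed by parity against the even Gaussian weight. This is a genuine (if minor) imprecision in the lemma as stated in the paper, inherited from quoting a simplified form of the reference; your proof correctly isolates exactly where the extra regularity is needed and gives the unconditional $C^2$ statement alongside it.
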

The asymptotic behavior of the hypothesis testing bound is dominated by the hypothesis testing relative entropy of states close in time. In this case, the information measure induced by the sandwiched Rényi relative entropy~\cite{meyer2021fisher} becomes relevant. We denote it by $\tilde{\calI}_{\alpha}(t)$ and it is implicitly defined via~\cite{takahashi_information_2017}
\begin{align}
    \tilde{D}_{\alpha}(\rho(t)\fatpipe \rho(t + \delta)) = \frac{\alpha}{2} \delta^2 \tilde{\calI}_{\alpha}(t) + O(\delta^3).
\end{align}
We obtain the following asymptotic result that reproduces the scaling of the standard quantum limit:
\begin{sproposition}[Asymptotic lower bound]\label{sprop:asymptotic_delta_lower_bound_from_asym_ht}
For all $\alpha > 1$, the optimal minimax tolerance for a given minimax success probability $\overline{\eta}$ obeys the inequality
\begin{align}
    \overline{\delta}^{*}(\overline{\eta}, \rho^{\otimes n}) &\geq \frac{1}{2} \overline{\eta}^{\frac{\alpha}{\alpha-1}} \sqrt{\frac{2\pi}{\alpha n \tilde\calI_{\alpha}}} + O\left(\frac{1}{n^{3/2}}\right)
\end{align}
where 
\begin{align}
    \tilde\calI_{\alpha} \coloneqq \min_t \tilde\calI_{\alpha}(\rho(t)).
\end{align}
\end{sproposition}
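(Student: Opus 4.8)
The plan is to feed the i.i.d.\ family $t\mapsto\rho^{\otimes n}(t)$ into Corollary~\ref{scorr:delta_lower_bound_sup_Dh_t_tp}, convert the hypothesis testing relative entropy into a sandwiched R\'enyi relative entropy that is additive under tensor powers, and then extract the $1/\sqrt{n}$ scaling with Laplace's method. First I would observe that if $\delta$ is any tolerance achieving $\overline{\eta}^{*}(w_{\delta},\rho^{\otimes n})\ge\overline{\eta}$, then the definition of the optimal tolerance lets us set $\eta_0=\overline{\eta}$ in Corollary~\ref{scorr:delta_lower_bound_sup_Dh_t_tp} applied to $\rho^{\otimes n}$, giving, for any reference value $t'$,
\begin{align}
    \delta \ge \frac{1}{2}\int\diff t\,\exp\!\left(-D_{\mathrm{h}}^{\overline{\eta}}\!\left(\rho^{\otimes n}(t)\fatpipe\rho^{\otimes n}(t')\right)\right).
\end{align}
Taking the infimum over all admissible $\delta$ at the end will turn the left-hand side into $\overline{\delta}^{*}(\overline{\eta},\rho^{\otimes n})$, since the right-hand side does not depend on $\delta$.

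Next I would establish the divergence inequality that powers the whole argument. Applying the data-processing inequality for the sandwiched R\'enyi relative entropy to the binary measurement channel $\omega\mapsto(\Tr[M\omega],\Tr[(\bbI-M)\omega])$, for any test $0\le M\le\bbI$ with $\Tr[M\rho]\ge\eta_0$, and discarding the non-negative second classical term, yields $\tilde{D}_{\alpha}(\rho\fatpipe\sigma)\ge\frac{\alpha}{\alpha-1}\log\Tr[M\rho]-\log\Tr[M\sigma]$ for $\alpha>1$. Rearranging, using $\Tr[M\rho]\ge\eta_0$ together with $\frac{\alpha}{\alpha-1}>0$, and optimizing over $M$ gives
\begin{align}
    \exp\!\left(-D_{\mathrm{h}}^{\eta_0}(\rho\fatpipe\sigma)\right)=\beta_{1-\eta_0}(\rho\fatpipe\sigma)\ge\eta_0^{\frac{\alpha}{\alpha-1}}\exp\!\left(-\tilde{D}_{\alpha}(\rho\fatpipe\sigma)\right).
\end{align}
Using additivity, $\tilde{D}_{\alpha}(\rho^{\otimes n}(t)\fatpipe\rho^{\otimes n}(t'))=n\,\tilde{D}_{\alpha}(\rho(t)\fatpipe\rho(t'))$, substitution into the first inequality with $\eta_0=\overline\eta$ gives
\begin{align}
    \delta\ge\frac{1}{2}\,\overline{\eta}^{\frac{\alpha}{\alpha-1}}\int\diff t\,\exp\!\left(-n\,\tilde{D}_{\alpha}(\rho(t)\fatpipe\rho(t'))\right).
\end{align}

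Finally I would apply Laplace's method (Lemma~\ref{slem:laplace_method_2}) to $\phi(t)\coloneqq\tilde{D}_{\alpha}(\rho(t)\fatpipe\rho(t'))$. The function $\phi$ is non-negative, vanishes at $t=t'$, and has $\phi''(t')=\alpha\,\tilde\calI_{\alpha}(t')$; restricting the integral to a small neighborhood of $t'$ (which only decreases the non-negative integrand, preserving the bound) makes $t'$ the unique interior minimizer, so Lemma~\ref{slem:laplace_method_2} gives $\int\diff t\,e^{-n\phi(t)}=\sqrt{2\pi/(\alpha n\,\tilde\calI_{\alpha}(t'))}+O(n^{-3/2})$. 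Choosing $t'$ to be a minimizer of $\tilde\calI_{\alpha}(t)$, so that $\tilde\calI_{\alpha}(t')=\tilde\calI_{\alpha}$, and passing to the infimum over $\delta$ yields the claimed bound.

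The main obstacle I anticipate is the identification $\phi''(t')=\alpha\,\tilde\calI_{\alpha}(t')$: the paper's information measure is defined through $\tilde{D}_{\alpha}(\rho(t)\fatpipe\rho(t+\tau))$, i.e.\ with the perturbation in the \emph{second} argument, whereas here the perturbed state sits in the \emph{first} argument. I would need to argue that the quadratic Taylor coefficient of a smooth divergence about coincident states is independent of which argument carries the perturbation, the asymmetry of $\tilde{D}_{\alpha}$ entering only at third and higher order, so that the second-order metric is the same and $\phi''(t')=\alpha\,\tilde\calI_{\alpha}(t')$. The remaining care is in verifying the hypotheses of Laplace's method, namely localizing to a neighborhood where $t'$ is the unique minimizer and $\tilde\calI_{\alpha}(t')>0$, and confirming that the constant prefactor $\tfrac{1}{2}\overline{\eta}^{\alpha/(\alpha-1)}$ leaves the remainder at order $O(n^{-3/2})$.
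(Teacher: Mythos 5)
Your proposal is correct and follows essentially the same route as the paper's proof: the asymmetric-hypothesis-testing lower bound of Corollary~\ref{scorr:delta_lower_bound_sup_Dh_t_tp}, the standard bound $D_{\mathrm{h}}^{\eta_0}(\rho\fatpipe\sigma)\leq\tilde{D}_{\alpha}(\rho\fatpipe\sigma)+\tfrac{\alpha}{\alpha-1}\log\tfrac{1}{\eta_0}$ (which you re-derive via data processing rather than cite), additivity, and Laplace's method with $\phi''(t')=\alpha\,\tilde{\calI}_{\alpha}(t')$. You are in fact slightly more careful than the paper on the point that the second-order Taylor coefficient is insensitive to which argument of $\tilde{D}_{\alpha}$ carries the perturbation, a step the paper's proof uses silently.
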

As Theorem ~\ref{thm:cramer_rao_like_bound}, the above theorem visibly reminds us of the Quantum Cramér-Rao in the scaling of $n$, but the dependence on $\overline{\eta}$ is inferior.
\begin{proof}
We start from Proposition~\ref{prop:tolerance_for_small_eta}. To obtain a lower bound, we employ the standard upper bound 
\begin{align}
    D_{\mathrm{h}}^{\eta_0}( \rho \fatpipe \sigma) \leq \tilde{D}_{\alpha}(\rho \fatpipe \sigma) + \frac{\alpha}{\alpha - 1} \log \frac{1}{\eta_0}
\end{align}
on the hypothesis testing relative entropy via the sandwiched Rényi relative entropies~\cite{khatri2020principles},
which holds for $1 < \alpha < \infty$. We now set $\sigma = \rho^{\otimes n}(t')$ for some $t'$ to be determined later. Exploiting the additivity of the sandwiched Rényi relative entropy, we obtain
\begin{align}
    \overline{\delta}(\overline{\eta},\rho^{\otimes n}) &\geq \frac{1}{2} \int \diff t \, \exp\left( - n \tilde{D}_{\alpha}(\rho(t) \fatpipe \rho(t')) - \frac{\alpha}{\alpha - 1} \log \frac{1}{\overline{\eta}}\right) \\
    &=\frac{1}{2} \overline{\eta}^{\frac{\alpha}{\alpha - 1}} \int \diff t \, \exp\left( - n \tilde{D}_{\alpha}(\rho(t) \fatpipe \rho(t'))\right).
    \nonumber
\end{align}
The integral on the right hand side can be evaluated by a simple application of Laplace's method as given in Lemma~\ref{slem:laplace_method_2}, recognizing that $\tilde{D}_{\alpha}(\rho(t) \lVert \rho(t'))$ achieves its minimum for $t = t'$, yielding
\begin{align}
    \overline{\delta}(\overline{\eta},\rho^{\otimes n}) &\geq\frac{1}{2} \overline{\eta}^{\frac{\alpha}{\alpha - 1}}
    \sqrt{\frac{2\pi}{n \tilde{D}_{\alpha}''(t')}} + O(n^{-3/2}),
\end{align}
where we have denoted
\begin{align}
    \tilde{D}_{\alpha}''(t') = \left.\frac{\partial^2}{\partial \Delta^2} \tilde{D}_{\alpha}(\rho(t' + \Delta)\lVert \rho(t')) \right|_{\Delta = 0}.
\end{align}
The second order expansion of the sandwiched Rényi relative entropy was studied in Ref.~\cite{takahashi_information_2017}. We have that 
\begin{align}
    \tilde{D}_{\alpha}''(t') = \alpha \Tilde{\calI}_{\alpha}(t'),
\end{align}
where $\Tilde{\calI}_{\alpha}$ interpolates between the Bogoliubov-Kubo-Mori (BKM) information in the limit $\alpha \to 1$ and other information measures. The associated Petz function is given by
\begin{align}
    f_{\alpha}(t) = (\alpha - 1) \frac{t^{1/\alpha}}{1 - t^{(1-\alpha)/\alpha}}.
\end{align}
Optimizing over $t'$ yields $\tilde{\calI}_{\alpha} \coloneqq \min_{t'} \Tilde{\calI}_{\alpha}(t')$. Putting this into the bound then gives
\begin{align}
    \overline{\delta}(\overline{\eta},\rho^{\otimes n}) &\geq\frac{1}{2} \overline{\eta}^{\frac{\alpha}{\alpha - 1}}
    \sqrt{\frac{2\pi}{n \alpha \tilde{\calI}_{\alpha}}} + O(n^{-3/2}).
\end{align}
\end{proof}

\subsection{Lower bound via symmetric hypothesis testing}\label{ssec:tolerance_lower_bound_symmetric_ht}
In this section, we will give the proof of the non-asymptotic Cramér-Rao like bound presented in Theorem~\ref{thm:cramer_rao_like_bound} of the main text.

\begin{proof}[Proof of Theorem~\ref{thm:cramer_rao_like_bound}]
Our derivation starts from Corollary~\ref{corr:two_point_fidelity_bound}, which states that
\begin{align}
    1 - \overline{\eta}^{*}(\delta, \rho) &\geq \frac{1}{4} \sup_{|t-t'| > 2\delta} F(\rho(t), \rho(t'))^2.
\end{align}
Using the fact that the sandwiched Rényi relative entropy of order $1/2$ is given by
\begin{align}
    \tilde{D}_{\frac{1}{2}}(\rho \fatpipe \sigma) = -\frac{1}{2} \log F(\rho, \sigma)
\end{align}
and choosing $t' = t + 2\delta$, we obtain
\begin{align}\label{eqn:corollary_10_as_entropy_inequality}
    \log\frac{1}{4(1-\overline{\eta})} \leq 4 \inf_t \tilde{D}_{\frac{1}{2}}(\rho(t) \fatpipe \rho(t+ 2\delta)),
\end{align}
as we can readily compare to Eq.~\eqref{eqn:corollary_10_as_log_fidelity} of the main text. Our desire is now to determine the scale of $\delta$ that we are allowed to choose. To this end, we perform a Taylor expansion of $\tilde{D}_{\frac{1}{2}}$:
\begin{align}\label{eqn:supplementary_taylor_expansion_D_12}
    \tilde{D}_{\frac{1}{2}}(\rho(t) \fatpipe \rho(t + \tau)) &= \sum_{k=2}^{\infty} \frac{f_k \tau^k}{k!} \\
    &= \frac{1}{2} f_2(t) \tau^2 + \frac{1}{6} f_3(t) \tau^3 + \frac{1}{24} f_4(t) \tau^4 + \dots
\end{align}
We assume that the Taylor expansion is valid in a radius of convergence $|\tau| < R(t)$.
The coefficients $f_k(t)$ are given by
\begin{align}
    f_k(t) \coloneqq \left.\frac{\partial^k}{\partial \tau^k} \tilde{D}_{\frac{1}{2}}(\rho(t) \fatpipe \rho(t + \tau))\right|_{\tau=0}.
\end{align}
As the quantum Fisher information can be defined via~\cite{meyer2021fisher}
\begin{align}
    \calF(t) \coloneqq - 2 \left.\frac{\partial^2}{\partial \tau^2} F(\rho(t), \rho(t+\tau))^2 \right|_{\tau=0},
\end{align}
we have that -- as, for example discussed in Appendix C of Ref.~\cite{meyer2021fisher} --
\begin{align}
    f_2(t) &= -\frac{1}{2} \left.\frac{\partial^2}{\partial \tau^2}  \log F(\rho(t), \rho(t+\tau))\right|_{\tau=0} \\
    &= -\frac{1}{4} \left.\frac{\partial^2}{\partial \tau^2}  \log F(\rho(t), \rho(t+\tau))^2 \right|_{\tau=0} \\
    &= -\frac{1}{4} \left(\left.\frac{\partial}{\partial F} \log(F)\right|_{F=1}\right) \left( \left.\frac{\partial^2}{\partial \tau^2} F(\rho(t), \rho(t+\tau))^2 \right|_{\tau=0}\right) \\
    &= \frac{1}{8} \calF(t).
\end{align}
We note that the higher derivatives of the sandwiched Rényi relative entropy do not coincide anymore with the higher derivatives of the fidelity up to a constant. Having established the relation between the leading terms of the expansions of fidelity and the sandwiched Rényi relative entropy, we can now turn to the actual scale of $\delta$. As was made intuitive in the main text, for i.i.d.\ copies we are able to asymptotically choose $\delta = O(1/\sqrt{f_2}) = O(1/\sqrt{\calF})$, the same scaling we expect from the quantum Cramér-Rao bound. We will thus make the ansatz 
\begin{align}
    \tau = 2 \delta = \frac{\gamma(t)}{\sqrt{f_2(t)}}
\end{align}
in Eq.~\eqref{eqn:supplementary_taylor_expansion_D_12}. In this case, we have
\begin{align}\label{eqn:expansion_with_gamma}
    \tilde{D}_{\frac{1}{2}}(\rho(t)\fatpipe \rho(t+ \tau)) = \frac{1}{2} \gamma^2(t) + \frac{1}{3!} \frac{f_3(t)}{f_2^{2/3}(t)} \gamma^3(t)+ \frac{1}{4!} \frac{f_4(t)}{f_2^{2}(t)} \gamma^4(t) + \dots.
\end{align}
To obtain a general upper bound for this expression, we define the constant
\begin{align}
    q(t) \coloneqq \sup_{3 \leq p \in \bbN} \left| \frac{f_p(t)}{f_2^{p/2}(t)} \right|^{\mathrlap{\frac{1}{p-2}}},
\end{align}
such that we can bound the fractions $f_p(t)/f_2^{p/2}(t)$ in Eq.~\eqref{eqn:expansion_with_gamma} as
\begin{align}
    \tilde{D}_{\frac{1}{2}}(\rho(t)\fatpipe \rho(t+ \tau)) &\leq \frac{1}{2} \gamma^2(t) + \frac{1}{3!} q(t) \gamma^3(t)+ \frac{1}{4!} q^2(t)\gamma^4(t) + \dots \\
    &= \frac{1}{q^2(t)} \left( e^{q(t) \gamma(t)} - 1 - q(t) \gamma(t) \right).
\end{align}
We now insert this bound into Eq.~\eqref{eqn:corollary_10_as_entropy_inequality} to obtain
\begin{align}
    \frac{1}{4}\log\frac{1}{4(1-\overline{\eta})} \leq \frac{1}{q^2(t)} \left( e^{q(t) \gamma(t)} - 1 - q(t) \gamma(t) \right).
\end{align}
We observe that the right hand side is always non-negative and that this only gives a nontrivial bound if $\overline\eta > 3/4$, similarly to Corollary~\ref{corr:two_point_fidelity_bound}. We will henceforth assume that this condition is met. 
Substituting the left hand side as $a = -\log(4(1-\overline{\eta}))/4$, we can now solve the above for equality using Mathematica
\begin{lstlisting}[language=Mathematica,basicstyle=\small\ttfamily]
Solve[
    {(Exp[q\[Gamma]]-1-q\[Gamma])/q^2 == a, q>0}, 
    \[Gamma]
]
\end{lstlisting}
to obtain
\begin{align}\label{eqn:def_gamma_of_q_and_eta}
\gamma_{=}(t) = -\frac{1}{q(t)}\left[ 1 + a q^2(t) + W_k(-e^{-1 - a q^2(t)})\right],
\end{align}
where $W_k(x)$ is the product logarithm function, \textit{i.e.}, the solution of $w e^w = x$. The integer $k$ identifies the corresponding branch, in our case $k = -1$ is relevant because the argument $-e^{-1 - a q^2(t)}$ lies between $-1/e$ and $0$. $\gamma_{=}(t)$ identifies the smallest admissible $\gamma$ we can take to still satisfy Corollary~\ref{corr:two_point_fidelity_bound} for fixed reference time $t$.
We can now continue to place bounds on $\gamma_{=}(t)$. To do so, we exploit the results of Ref.~\cite{chatzigeorgiou2013bounds} that show that
\begin{align}
    -1 - \sqrt{2u} - u < W_{-1}(-e^{-u-1}) < -1 - \sqrt{2u} - \frac{2}{3}u.
\end{align}
Identifying $u = a q^2(t)$ and inserting into Eq.~\eqref{eqn:def_gamma_of_q_and_eta} yields
\begin{align}
    \sqrt{2 a} - \frac{1}{3} a q(t) < \gamma_{=}(t) < \sqrt{2 a}.
\end{align}
Substituting $a = -\log(4(1-\overline\eta))/4$ then gives
\begin{align}
    \sqrt{\frac{1}{2} \log \frac{1}{4(1-\overline{\eta})}} - \frac{q(t)}{12} \log \frac{1}{4(1-\overline{\eta})} < \gamma_{=}(t) < \sqrt{\frac{1}{2} \log \frac{1}{4(1-\overline{\eta})}}.
\end{align}
We now use the fact that we chose $2\delta(t) = \gamma(t)/\sqrt{f_2(t)}$ and $f_2(t) = \calF(t)/8$ to deduce
\begin{align}\label{eqn:delta_lower_bound_gamma_equals}
    \overline{\delta}(\overline{\eta},\rho) &\geq \sup_t \frac{1}{2} \frac{\gamma_{=}(t)}{\sqrt{f_2(t)}} \\
    &= \sup_t \frac{\sqrt{2} \gamma_{=}(t)}{\sqrt{\calF(t) } }. 
\end{align}
To obtain the Theorem statement, we define
\begin{align}
    q &\coloneqq \sup_t q(t).
\end{align}
The coefficient $\Gamma(t) \coloneqq \sqrt{2} \gamma(t)$ now fulfills the inequality
\begin{align}
    \frac{q}{6\sqrt{2}} \log \frac{1}{4(1-\overline{\eta})} > \sqrt{\log \frac{1}{4(1-\overline{\eta})}} - \Gamma(t) > 0.
\end{align}
This holds especially for the $t$ achieving $\inf_t \calF(t)$ in Eq.~\eqref{eqn:delta_lower_bound_gamma_equals} and as such completes the Theorem statement.
\end{proof}

\section{Optimal sample complexity}\label{ssec:optimal_sample_complexity}
In this section, we provide supplementary material for Section~\ref{sec:sample_complexity} of the main text.

\begin{proof}[Proof of Corollary~\ref{corr:sample_complexity_scaling_bound}]
Theorem~\ref{thm:cramer_rao_like_bound} states that
\begin{align}
    \overline{\delta}(\overline\eta, \rho) \geq \frac{\Gamma}{\sqrt{\inf_t \calF(t)}},
\end{align}
where
\begin{align}
      \frac{q}{6\sqrt{2}} \log \frac{1}{4(1-\overline{\eta})} > \sqrt{\log \frac{1}{4(1-\overline{\eta})}} - \Gamma > 0.
\end{align}
We thus have the lower bound
\begin{align}
    \overline{\delta}(\overline\eta, \rho) \geq \frac{\sqrt{\log \frac{1}{4(1-\overline{\eta})}}}{\sqrt{\inf_t \calF(t)}} - \frac{\frac{q}{6\sqrt{2}} \log \frac{1}{4(1-\overline{\eta})}}{\sqrt{\inf_t \calF(t)}}.
\end{align}
Inserting the assumptions of the Corollary then establishes the scalings
\begin{align}
    \overline{\delta}(\overline\eta, \rho) \geq O(n^{-\frac{\alpha}{2}})\sqrt{\log \frac{1}{4(1-\overline{\eta})}} - o(1)O(n^{-\frac{\alpha}{2}}) \log \frac{1}{4(1-\overline{\eta})}.
\end{align}
After rearranging, we have
\begin{align}
    O(n^{\frac{\alpha}{2}}) \geq \frac{\sqrt{\log \frac{1}{4(1-\overline{\eta})}}}{\overline{\delta}(\overline\eta, \rho)} - \frac{o(1) \log \frac{1}{4(1-\overline{\eta})}}{\overline{\delta}(\overline\eta, \rho)},
\end{align}
which means that
\begin{align}
    n \geq O\left( \left[ \frac{\log\frac{1}{1-\overline\eta}}{\overline\delta^{2}} \right]^{\frac{1}{\alpha}}\right),
\end{align}
as desired.
\end{proof}

\section{Beyond univariate metrology}\label{ssec:beyond_univariate_metrology}

In the case of arbitrary parameter spaces, let us consider a parameters $x$ from a set $\calX$ over which the prior distribution $\mu$ is defined such that $x \mapsto \rho(x)$. We assume that $\calX$ is equipped with a distance measure $d(x, y)$ that is not necessarily symmetric.
In this case, the definition of the Bayesian success probability becomes
\begin{align}
    \eta(w, \mu, \rho) = \int_{\calX} \diff \mu(x) \, \int_{\calX} \diff y \, w(d(x,y)) \Tr[\rho(x) Q(y)].
\end{align}
Likewise, the minimax success probability is given by
\begin{align}
    \overline\eta(w, \mu, \rho) = \inf_{x\in\calX} \int_{\calX} \diff y \, w(d(x,y)) \Tr[\rho(x) Q(y)].
\end{align}
In this definition, we associate any parameter $x$ the accepting POVM effect
\begin{align}
    \tilde{Q}_{w}(x) \coloneqq \int_{\calX} \diff y \, w(d(x,y)) Q(y),
\end{align}
which takes the role of $(w * Q)(t)$ in the univariate case. To see how our results extend in this realm, we introduce a generalization of Theorem~\ref{sthm:succ_prob_upper_bound_mht}. For it, we only need the further notion of a \emph{space-preserving transformation}:
\begin{definition}[Space-preserving transformation]
    We call a transformation $\calT\colon \calX \to \calX$ space-preserving if it is invertible and $\calT[\calX] = \calX$. 
\end{definition}
We now give a generalization of Theorem~\ref{sthm:succ_prob_upper_bound_mht} which in turn generalizes Theorem~\ref{thm:succ_prob_upper_bound_mht_delta_window} of the main text:
\begin{stheorem}\label{sthm:succ_prob_upper_bound_mht_multivariate}
For a given parameter space $\calX$ with distance function $d$ and window function $w$, fix any set $\calS = \{(\lambda, \calT)\}$ of prior probabilities $\lambda \geq 0$ and space-preserving transformations $\calT$ such that $\sum_{\lambda \in \calS} \lambda = 1$. Then, for a state set $\rho(x)$, possibly with prior $\mu(x)$, we have the upper bounds
\begin{align}
    \eta^{*}(w, \mu, \rho) &\leq K \int_{\calX} \diff x \, P^{*}_s(\{ \lambda \, \mu(\calT[x]) \rho(\calT[x])\}_{(\lambda, \calT) \in \calS}), \\
    \overline{\eta}^{*}(w, \rho) &\leq K  \inf_{x\in\calX} \overline{P}^{*}_s(\{  \rho(\calT[x])\}_{\calT \in \calS}),
\end{align}
where we have introduced the constant
\begin{align}
    K \coloneqq \sup_{x, y \in \calX}  \left\{\sum_{\calT \in \calS}  w(d(\calT[x],y)) \right\},
\end{align}
which measures the overlap of the windows for the different transformations.
\end{stheorem}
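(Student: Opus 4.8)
The plan is to generalize the proof of Theorem~\ref{sthm:succ_prob_upper_bound_mht} essentially verbatim, with the time shifts $t\mapsto t+s$ replaced by the space-preserving transformations $\calT$, and the smeared effect $(w*Q)(t)$ replaced by the accepting effect $\tilde{Q}_w(x)=\int_{\calX}\diff y\, w(d(x,y))\,Q(y)$ introduced just above the theorem. The natural starting point is to write the Bayesian success probability compactly as
\begin{align}
\eta(w,\mu,\rho,Q) = \int_{\calX}\diff\mu(x')\,\Tr[\rho(x')\,\tilde{Q}_w(x')].
\end{align}
Because $\sum_{(\lambda,\calT)\in\calS}\lambda=1$, I can insert this sum for free, and then in each term perform the change of variables $x'=\calT[x]$. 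This is precisely where space-preservation enters: invertibility together with $\calT[\calX]=\calX$, and invariance of the reference integration measure (so that $\diff\mu(x')$ becomes $\mu(\calT[x])\,\diff x$ under the substitution), lets me re-express every term as an integral over a common variable $x$, giving
\begin{align}
\eta(w,\mu,\rho,Q) = \sum_{(\lambda,\calT)\in\calS}\lambda\int_{\calX}\diff x\,\mu(\calT[x])\,\Tr[\rho(\calT[x])\,\tilde{Q}_w(\calT[x])].
\end{align}

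Next I would define, for each $(\lambda,\calT)\in\calS$ and each fixed $x$, the candidate operators
\begin{align}
Q_\calT(x)\coloneqq\frac{1}{K}\,\tilde{Q}_w(\calT[x]) = \frac{1}{K}\int_{\calX}\diff y\, w(d(\calT[x],y))\,Q(y),
\end{align}
and verify that $\{Q_\calT(x)\}_{\calT\in\calS}$ is a valid sub-normalized POVM for every $x$: summing over $\calT$ and exchanging summation with the $y$-integration gives $\sum_{\calT}Q_\calT(x)=\frac{1}{K}\int_{\calX}\diff y\,[\sum_{\calT}w(d(\calT[x],y))]\,Q(y)\leq\frac{1}{K}\int_{\calX}\diff y\, K\,Q(y)=\bbI$, where the inequality is exactly the definition of $K$. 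Recognizing $\{\lambda\,\mu(\calT[x])\rho(\calT[x])\}_{(\lambda,\calT)\in\calS}$ together with $\{Q_\calT(x)\}$ as a feasible point of the multi-hypothesis testing problem at each $x$ then bounds the integrand by $P_s^{*}(\{\lambda\,\mu(\calT[x])\rho(\calT[x])\}_{(\lambda,\calT)\in\calS})$, and taking the supremum over $Q$ yields the Bayesian statement.

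For the minimax bound I would start from $\overline\eta(w,\rho,Q)=\inf_{x'\in\calX}\Tr[\rho(x')\tilde{Q}_w(x')]$. Since each $\calT$ is a bijection of $\calX$ onto itself, the substitution $x'=\calT[x]$ gives $\inf_{x'}=\inf_x$ (note this case needs only $\calT[\calX]=\calX$ and invertibility, \emph{not} measure invariance). Averaging over $\calS$ and pulling the infimum outside the finite sum gives $\overline\eta(w,\rho,Q)\leq K\inf_x\sum_{(\lambda,\calT)}\lambda\,\Tr[\rho(\calT[x])\,Q_\calT(x)]$ with the same candidate POVM. Because this holds for every prior vector and $K$ does not depend on the $\lambda$'s, I can take the infimum over $\lambda$ on the right-hand side; minimizing a weighted average over the probability simplex returns the smallest component, so $\inf_\lambda\sum_\calT\lambda_\calT\Tr[\rho(\calT[x])Q_\calT(x)]=\min_\calT\Tr[\rho(\calT[x])Q_\calT(x)]\leq\overline{P}_s^{*}(\{\rho(\calT[x])\}_{\calT\in\calS})$, the last inequality using that a sub-normalized POVM can be completed to a normalized one without decreasing any trace. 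Taking the supremum over $Q$ finishes the argument, and specializing to translations recovers Theorem~\ref{sthm:succ_prob_upper_bound_mht}.

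The main obstacle I anticipate is the bookkeeping of the change of variables in the Bayesian case: unlike a pure time shift, a general space-preserving $\calT$ must leave the reference integration measure on $\calX$ invariant for the substitution to produce the clean density $\mu(\calT[x])$ against $\diff x$. I would therefore either fold measure invariance into the working notion of \emph{space-preserving} or restrict to measure-preserving bijections, with the univariate shifts as the prototypical instance. Everything else — the sub-normalization estimate controlled by $K$, the reduction to $P_s^{*}$ and $\overline{P}_s^{*}$, and the $\lambda$-optimization — is a direct transcription of the univariate proof and introduces no genuinely new difficulty.
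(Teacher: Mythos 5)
Your proposal is correct and follows essentially the same route as the paper's proof: rewrite the success probability as a sum over $\calS$ of integrals over a common variable via the substitution $x'=\calT[x]$, define the rescaled accepting effects $\frac{1}{K}\tilde{Q}_w(\calT[x])$, check sub-normalization using the definition of $K$, and feed them into the multi-hypothesis testing problem (with the $\lambda$-optimization handling the minimax case). Your caveat about measure invariance is well taken — the paper's definition of a space-preserving transformation only demands invertibility and $\calT[\calX]=\calX$, and it implicitly assumes the change of variables leaves the reference measure intact, so folding measure-preservation into the hypothesis as you suggest is the right fix.
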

\begin{proof}
First, we recall the definition of the optimal multi-hypothesis testing success probability for a set of operators $\{ A_i\}$:
\begin{align}
    P^{*}_s(\{ A_i\}) &\coloneqq \sup_{\substack{0 \leq Q_i \leq \bbI \\ \sum_i Q_i = \bbI}} \sum_i \Tr[ A_i Q_i ].
\end{align}
We exploit that the transformations are space-preserving and we can thus transform the domain of integration that computes the success probability arbitrarily, to observe that
\begin{align}
    \eta(\delta, \mu, \rho, Q)  = \sum_{(\lambda,\calT)\in\calS} \int_{\calX} \diff x \, \lambda \Tr[ \mu(\calT[x]) \rho(\calT[x]) \Tilde{Q}_w(\calT[x])].
\end{align}
Using the definition of $K$ given in the theorem statement, we see that defining the operators
\begin{align}
    \overline{Q}_{\calT}(x) \coloneqq \frac{1}{K}\tilde{Q}_w(\calT[x]) 
\end{align}
yields a valid sub-normalized POVM for all $x$ as
\begin{align}
    \sum_{\calT \in \calS} \overline{Q}_{\calT}(x) &= \frac{1}{K}\sum_{\calT \in \calS} \tilde{Q}_w(\calT[x]) \\
    \nonumber
    &= \frac{1}{K}\sum_{\calT \in \calS} \int_{\calX} \diff y \, w(d(\calT[x],y)) Q(y)\\
     \nonumber
    &= \frac{1}{K}\int_{\calX} \diff y \, \left( \sum_{\calT \in \calS}  w(d(\calT[x],y)) \right)  Q(y)\\
     \nonumber
    &\leq \frac{1}{K}\int_{\calX} \diff y \, K  Q(y)\\
     \nonumber
    &\leq \bbI.
\end{align}
This means that the operators $\{ \overline{Q}_{\calT}(x) \}_{\calT \in \calS}$ can serve as a candidate POVM in the optimization that computes $P_s( \{ \lambda \, \mu(\calT[x]) \rho(\calT[x]) \}_{(\lambda,\calT)\in \calS})$,and hence
\begin{align}
    \eta(w, \mu, \rho, Q)\leq K \int \diff t \, P^{*}_s(\{ \lambda \, \mu(\calT[x]) \rho(\calT[x])\}_{(\lambda, \calT) \in \calS}\})
\end{align}
which implies the first statement of the theorem as the upper bound is independent of the chosen POVM $Q(y)$.

The minimax statement is derived in a similar fashion, observing that we can also apply the transformation trick to obtain
\begin{align}
    \overline\eta(w, \rho, Q)  &=  \sum_{(\lambda,\calT)\in\calS} \lambda \inf_{x \in \calX}  \Tr[ \rho(\calT[x]) \tilde{Q}_{w}(\calT[x])] \\
    &\leq \inf_{x\in\calX} \sum_{(\lambda,\calX)\in\calS} \lambda \Tr[ \rho(\calT[x])\tilde{Q}_{w}(\calT[x])].
    \nonumber
\end{align}
Here, we again make the argument that the $\{ \overline{Q}_{\calT}(x) \}_{\calT \in \calS}$ form a candidate POVM and then optimize over all possible $\lambda$ to obtain the theorem statement.
\end{proof}
Note that if $K$ in the above theorem is larger than the inverse success probability, then the bound becomes vacuous. This means, as the success probability asymptotically approaches 1, any bound that should work asymptotically must have $K = 1$. Let us now turn to the practically important task of the rectangular window with tolerance $\delta$:
We can define a distance ball around a points as
\begin{align}
    \calB_{\delta}(x) \coloneqq \{ y \in \calX \pipe d(x, y) \leq \delta \}.
\end{align}
In this case, the definition of the accepting POVM effect becomes
\begin{align}
    \tilde{Q}_{\delta}(x) \coloneqq \int_{\calB_{\delta}(x)} \diff y \, Q(y).
\end{align}
The transformation that ensure that different balls do not overlap and hence $K=1$ are the ones that make sure that for all $x, y \in \calX$ there is at most one among the $\{\calT[x]\}$ such that $d(\calT[x], y) \leq \delta$. In other words, if we define the union of all balls of size $\delta$ around $x$ as
\begin{align}
\calU_{\delta}(x) \coloneqq \bigcup \{ \calB_{\delta}(y) \pipe x \in \calB_{\delta}(y) \},
\end{align}
then 
\begin{align}
    K = 1 \ \Leftrightarrow \ \text{ for all } x \in \calX, \calT \in \calS \colon x \not\in \calU_{\delta}(\calT[x]).
\end{align}
This is analogous to the notion that metrology is as hard as distinguishing two points that are at least $2\delta$ apart in the univariate case. For very small $\delta$, one expects that $\calU_{\delta}(x) \approx \calB_{2\delta}(x)$.

We can cast the above reasoning into a corollary that can be seen as an analogue of Le Cam's two-point method:
\begin{scorollary}[Generalized two-point method]
For a given parameter space $\calX$ with distance function $d$ and given tolerance $\delta$, we have that
\begin{align}
    \overline{\eta}^{*}(\delta, \rho) \leq \inf_{\substack{x,y \in \calX\\ y \not\in \calU_{\delta}(x)}} \overline{P}_s^{*}(\rho(x), \rho(y)).
\end{align}
\end{scorollary}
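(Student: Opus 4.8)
The plan is to reduce the multivariate estimation problem to a single binary state-discrimination problem, exactly as in the univariate two-point method but using the machinery of Theorem~\ref{sthm:succ_prob_upper_bound_mht_multivariate}. The cleanest route is to exhibit, for any admissible pair of parameters, a two-outcome POVM built directly from the optimal metrology POVM, which avoids having to construct an explicit space-preserving transformation. First I would write the optimal minimax success probability in the form $\overline{\eta}^{*}(\delta, \rho) = \sup_{Q} \inf_{x \in \calX} \Tr[\rho(x)\, \tilde{Q}_{\delta}(x)]$, where $\tilde{Q}_{\delta}(x) = \int_{\calB_{\delta}(x)} \diff y\, Q(y)$ is the accepting effect associated with the rectangular window $w_{\delta}$, as introduced above.

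Next I would fix any pair $x, y \in \calX$ with $y \notin \calU_{\delta}(x)$ and show that the two accepting effects $\tilde{Q}_{\delta}(x)$ and $\tilde{Q}_{\delta}(y)$ can serve as effects of a binary POVM. The condition $y \notin \calU_{\delta}(x)$ is precisely the statement that the integration supports $\calB_{\delta}(x)$ and $\calB_{\delta}(y)$ are disjoint; this is the $K = 1$ condition of Theorem~\ref{sthm:succ_prob_upper_bound_mht_multivariate} specialised to the two-element family $\calS = \{(\tfrac{1}{2}, \mathrm{id}), (\tfrac{1}{2}, \calT)\}$ with $\calT[x] = y$, and it is symmetric in $x$ and $y$ because $\calU_{\delta}$ is. Disjointness gives $\tilde{Q}_{\delta}(x) + \tilde{Q}_{\delta}(y) = \int_{\calB_{\delta}(x)\cup\calB_{\delta}(y)} \diff z\, Q(z) \leq \int_{\calX}\diff z\, Q(z) = \bbI$, so setting $Q_1 = \tilde{Q}_{\delta}(x)$ and $Q_2 = \bbI - Q_1 \geq \tilde{Q}_{\delta}(y)$ yields a valid two-outcome POVM. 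Since this POVM is a feasible candidate in the definition of the binary minimax success probability, I would conclude $\min\{\Tr[\rho(x)\tilde{Q}_{\delta}(x)], \Tr[\rho(y)\tilde{Q}_{\delta}(y)]\} \leq \overline{P}_s^{*}(\rho(x), \rho(y))$.

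Finally I would chain the inequalities: for the fixed pair, $\inf_{x'\in\calX}\Tr[\rho(x')\tilde{Q}_{\delta}(x')] \leq \min\{\Tr[\rho(x)\tilde{Q}_{\delta}(x)], \Tr[\rho(y)\tilde{Q}_{\delta}(y)]\} \leq \overline{P}_s^{*}(\rho(x), \rho(y))$, where the right-hand side is independent of $Q$; taking the supremum over $Q$ gives $\overline{\eta}^{*}(\delta, \rho) \leq \overline{P}_s^{*}(\rho(x), \rho(y))$, and the infimum over all admissible pairs $y \notin \calU_{\delta}(x)$ gives the claim. I expect the main obstacle to be the bookkeeping that identifies $y \notin \calU_{\delta}(x)$ with disjointness of the accepting supports $\calB_{\delta}(x)$ and $\calB_{\delta}(y)$: for a symmetric distance $d$ this is immediate, since a common point of the two balls is a common centre witnessing membership in $\calU_{\delta}$, but for a genuinely asymmetric $d$ the two characterisations involve the two different orderings of $d$, and one must either restrict to symmetric $d$ (the relevant case for the trace distance or infidelity) or invoke Theorem~\ref{sthm:succ_prob_upper_bound_mht_multivariate} directly, whose remaining cost is then producing a space-preserving $\calT$ with $\calT[x] = y$.
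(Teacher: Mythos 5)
Your proof is correct, and it takes a more direct route than the paper, which presents this corollary as an immediate consequence of Theorem~\ref{sthm:succ_prob_upper_bound_mht_multivariate} together with the observation that $K=1$ is equivalent to the condition $x\not\in\calU_{\delta}(\calT[x])$ for all $x$. Going through the theorem literally would require you to (i) produce a space-preserving transformation $\calT$ with $\calT[x]=y$ for the chosen pair, which need not exist in an inhomogeneous parameter space, and (ii) verify the disjointness condition \emph{globally}, since the constant $K$ is defined via a supremum over all of $\calX$ and not just over the pair of interest. Your construction --- taking $Q_1=\tilde{Q}_{\delta}(x)$ and $Q_2=\bbI-Q_1\geq\tilde{Q}_{\delta}(y)$ as a feasible binary POVM, then chaining $\inf_{x'}\Tr[\rho(x')\tilde{Q}_{\delta}(x')]\leq\min\{\Tr[\rho(x)\tilde{Q}_{\delta}(x)],\Tr[\rho(y)\tilde{Q}_{\delta}(y)]\}\leq\overline{P}_s^{*}(\rho(x),\rho(y))$ before taking the supremum over $Q$ --- sidesteps both issues and only uses disjointness of the two balls actually involved, so it is strictly more robust. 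You are also right to flag the asymmetric-$d$ subtlety: $y\not\in\calU_{\delta}(x)$ asks for the nonexistence of a common \emph{centre} $z$ with $d(z,x)\leq\delta$ and $d(z,y)\leq\delta$, whereas the POVM argument needs the nonexistence of a common \emph{member} $z$ with $d(x,z)\leq\delta$ and $d(y,z)\leq\delta$; these coincide only for symmetric $d$, a point the paper's own equivalence claim for $K=1$ glosses over. For the intended applications (trace distance, infidelity) the distance is symmetric and nothing is lost.
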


\section{The covariant case: Pure Hamiltonian evolution}\label{ssec:pure_covariant_evolution}
In this section, we initiate the analytical study of the minimax success probability in a group-covariant setting. We consider a set of states $|\psi(t)\rangle$ generated by unitary evolution of a pure initial state $|\psi \rangle$ under a Hamiltonian $H$ reflecting closed system quantum mechanical evolution,
\ie
\begin{align}
    |\psi(t)\rangle &= e^{-i t H} |\psi\rangle = U(t)\ket{\psi}.
\end{align}
To ensure the group structure, $H$ must have eigenvalues such that all differences between eigenvalues are integer, in which case the recurrence time of the Hamiltonian is guaranteed to be $2\pi$. 

Let now $H$ decompose as $H = \sum_{\lambda} \lambda \Pi_{\lambda}$, where $\lambda$ are the different eigenvalues and $\Pi_{\lambda}$ are the projectors onto the possibly degenerate eigenspaces. Then, we can write
\begin{align}
    \ket{\psi} = \sum_{\lambda} \psi_{\lambda} \ket{\psi_{\lambda}},
\end{align}
where we have the normalized projections of $\ket{\psi}$ onto the eigenspaces of $H$ such that $\Pi_{\lambda}\ket{\psi} = \psi_{\lambda}\ket{\psi_{\lambda}}$. Then,
\begin{align}
    \ket{\psi(t)} &= \sum_{\lambda} \psi_{\lambda}e^{-i t \lambda} \ket{\psi_{\lambda}}
\end{align}
and
\begin{align}
    \psi(t) &= |\psi(t) \rangle\!\langle \psi(t)|\\
     \nonumber
    &=\sum_{\lambda, \lambda'}  \psi_{\lambda}\psi^{*}_{\lambda'}e^{-i t (\lambda-\lambda')} |{\psi_{\lambda}}\rangle\!\langle {\psi_{\lambda'}} | \\
     \nonumber
    &= \sum_{\omega} e^{-i \omega t} \sum_{\lambda} \psi_{\lambda}\psi^{*}_{\lambda-\omega} |{\psi_{\lambda}}\rangle\!\langle {\psi_{\lambda-\omega}} |  \label{eqn:hat_psi_omega_def}\\
     \nonumber
    &= \sum_{\omega} e^{-i \omega t} \hat{\psi}_{\omega},
\end{align}
where we have made use of $\psi(t)$ to denote the density matrix associated with 
the state $\ket{\psi(t)}$ and $\hat{\psi}_{\omega}$ are the coefficients of its Fourier transform given by
\begin{align}
\hat{\psi}_{\omega} = \frac{1}{2\pi}\int \diff t \, e^{-i \omega t} \psi(t).
\end{align}
In this setting, we can identify the optimal measurement strategy for any window function.
\begin{stheorem}[PGM is minimax optimal]\label{sthm:pgm_is_minimax_optimal}
For a state set $\rho(t)$ given by a pure initial state $\rho_0 = |\psi\rangle\!\langle \psi |$ evolving under a Hamiltonian with integer eigenvalue differences for time $t \in [0, 2\pi]$, the 
\emph{pretty good measurement} (PGM)
\begin{align}
    Q_{\mathrm{PGM}}(t) &:= R^{-1/2} \psi(t) R^{-1/2} \text{ where } R = \int \diff t \, \psi(t),
\end{align}
achieves the optimal minimax success probability for any window function $w(\tau)$. The minimax success probability is given by
\begin{align}\label{eqn:covariant_minimax_success_probability}
    \overline{\eta}^*(w, \psi) &= \sum_{\lambda, \lambda'} |\psi_{\lambda}| |\psi_{\lambda'}| \hatw_{\lambda - \lambda'},
\end{align}
where $\hat{w}_{\omega}$ is the Fourier transform of $w(\tau)$ at frequency $\omega$.
\end{stheorem}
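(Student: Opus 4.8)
The plan is to bracket the optimal minimax success probability $\overline{\eta}^*(w,\psi)$ between a lower bound, obtained by evaluating the pretty good measurement $Q_{\mathrm{PGM}}$, and a matching upper bound, obtained by exhibiting a feasible point of the dual program of Proposition~\ref{prop:minimax_sdp}. Strong duality then forces equality, simultaneously proving optimality of $Q_{\mathrm{PGM}}$ and establishing the closed form. The engine throughout is the Fourier decomposition $\psi(t)=\sum_\omega e^{-i\omega t}\hat{\psi}_\omega$ together with the covariance $\psi(t)=U(t)\psi U(t)^\dagger$, which lets me diagonalize every relevant object in the eigenbasis $\{\ket{\psi_\lambda}\}$ of $H$. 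I fix the convention $\hat{w}_\omega$ so that $\tfrac{1}{2\pi}(w*e^{-i\omega\,\cdot})(t)=\hat{w}_\omega e^{-i\omega t}$, matching $\hat{w}_\delta(\omega)=\sin(\delta\omega)/(\pi\omega)$.

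First I would record that $R=\int_0^{2\pi}\diff t\,\psi(t)=2\pi\hat{\psi}_0=2\pi\sum_\lambda|\psi_\lambda|^2\,\ketbra{\psi_\lambda}{\psi_\lambda}$ is diagonal in the eigenbasis, so that $R^{-1/2}=(2\pi)^{-1/2}\sum_\lambda|\psi_\lambda|^{-1}\ketbra{\psi_\lambda}{\psi_\lambda}$ on the support and $Q_{\mathrm{PGM}}(t)=U(t)\,R^{-1/2}\psi R^{-1/2}\,U(t)^\dagger$, using $[R,U(t)]=0$. Covariance makes $\Tr[\psi(t)(w*Q_{\mathrm{PGM}})(t)]$ independent of $t$, so the minimax value of $Q_{\mathrm{PGM}}$ equals its uniform-prior Bayesian value $\tfrac{1}{2\pi}\int_0^{2\pi}\diff t\,\Tr[\psi(t)(w*Q_{\mathrm{PGM}})(t)]$. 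Expanding $\psi(t)$ and $(w*\psi)(t)=\sum_\omega e^{-i\omega t}\,2\pi\hat{w}_\omega\hat{\psi}_\omega$ in Fourier modes and integrating, only the $\omega_1+\omega_2=0$ terms survive; substituting $\hat{\psi}_\omega=\sum_\lambda\psi_\lambda\psi_{\lambda-\omega}^*\ketbra{\psi_\lambda}{\psi_{\lambda-\omega}}$ and collapsing the traces against $R^{-1/2}$ gives $\sum_\lambda|\psi_\lambda||\psi_{\lambda-\omega}|$ per mode, i.e. $\overline{\eta}(w,\psi,Q_{\mathrm{PGM}})=\sum_{\lambda,\lambda'}|\psi_\lambda||\psi_{\lambda'}|\hat{w}_{\lambda-\lambda'}=\langle\ppsi,W\ppsi\rangle$. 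This is the claimed value and a lower bound on $\overline{\eta}^*$.

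Next I would reverse-engineer the dual certificate from complementary slackness. Taking the uniform prior $\mu(t)=1/2\pi$, the dual constraint of Eq.~\eqref{eq:minimax_sdp} reads $X\geq\tfrac{1}{2\pi}(w*\psi)(t)$ for all $t$, and slackness against the rank-one effects $Q_{\mathrm{PGM}}(t)\propto\ketbra{\tilde\psi(t)}{\tilde\psi(t)}$, with $\ket{\tilde\psi(t)}=R^{-1/2}\ket{\psi(t)}$, demands $\bigl(X-\tfrac{1}{2\pi}(w*\psi)(t)\bigr)\ket{\tilde\psi(t)}=0$ for every $t$. Matching the Fourier modes $e^{-i\mu t}$ in this identity pins $X$ down completely: it must be diagonal in the eigenbasis with $X\ket{\psi_\mu}=d_\mu\ket{\psi_\mu}$, where $d_\mu=|\psi_\mu|\sum_{\lambda'}\hat{w}_{\mu-\lambda'}|\psi_{\lambda'}|$. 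Then $\Tr X=\sum_\mu d_\mu=\sum_{\mu,\lambda'}|\psi_\mu||\psi_{\lambda'}|\hat{w}_{\mu-\lambda'}$ coincides with the primal value, so it remains only to verify that this $X$ is dual-feasible.

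The hard part is exactly this feasibility check. Writing $\psi_\lambda=|\psi_\lambda|e^{i\alpha_\lambda}$ and $\phi_\lambda(t)=\alpha_\lambda-\lambda t$, one computes $\tfrac{1}{2\pi}(w*\psi)(t)=D_\phi A D_\phi^\dagger$ with $D_\phi=\operatorname{diag}(e^{i\phi_\lambda(t)})$ and $A_{\lambda\lambda'}=\hat{w}_{\lambda-\lambda'}|\psi_\lambda||\psi_{\lambda'}|$, i.e. $A=\operatorname{diag}(\ppsi)\,W\,\operatorname{diag}(\ppsi)$, whose row sums are precisely the $d_\mu$. Since the diagonal $X$ commutes with the phase gauge $D_\phi$, the operator inequality becomes $t$-independent and collapses to the single matrix inequality $\operatorname{diag}(A\mathbf{1})\succeq A$, a Laplacian-type positivity. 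This is where the window enters: nonnegativity of $w$ makes each $\tfrac{1}{2\pi}(w*\psi)(t)\succeq 0$ and, via Bochner's theorem, gives $W\succeq 0$, and I would establish the Laplacian positivity from the congruence structure of $A$ together with $W\succeq 0$. I expect this positivity to be the main technical obstacle, and it is the step that genuinely constrains the window (it holds in particular for the rectangular window of interest); all the rest is bookkeeping in the eigenbasis. Once it is in hand, weak duality yields $\overline{\eta}^*\leq\Tr X=\langle\ppsi,W\ppsi\rangle$, matching the lower bound and completing the proof. Since nothing in the construction used the shape of $w$ beyond this positivity, the argument applies to any admissible window, as claimed.
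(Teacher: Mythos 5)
Your overall architecture is the same as the paper's: evaluate the PGM as a primal feasible point (using covariance to drop the minimum over $t$), then certify optimality by constructing, via complementary slackness with the uniform prior, the diagonal dual variable $X$ with $\langle\psi_\mu|X|\psi_\mu\rangle=|\psi_\mu|\sum_{\lambda'}\hat{w}_{\mu-\lambda'}|\psi_{\lambda'}|$, whose trace matches the primal value. Your packaging of the feasibility condition is in fact cleaner than the paper's: gauging away the phases with $D_{\phi(t)}$ reduces $X\geq\tfrac{1}{2\pi}(w*\psi)(t)$ for all $t$ to the single inequality $\operatorname{diag}(A\mathbf{1})\succeq A$ with $A_{\lambda\lambda'}=\hat{w}_{\lambda-\lambda'}|\psi_\lambda||\psi_{\lambda'}|$, which is precisely what the paper verifies by decomposing into $2\times 2$ blocks.

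However, the one step you flag as ``the main technical obstacle'' is left unproven, and the route you sketch for it does not work. Positive semidefiniteness of $W$ (via Bochner, from $w\geq 0$) does not imply $\operatorname{diag}(A\mathbf{1})\succeq A$: take $A=\left(\begin{smallmatrix}1&-1\\-1&1\end{smallmatrix}\right)\succeq 0$, whose row sums vanish, so that $\operatorname{diag}(A\mathbf{1})-A=-A\preceq 0$. What actually closes the gap is \emph{entrywise} nonnegativity of $A$, i.e.\ $\hat{w}_{\lambda-\lambda'}\geq 0$ for all occupied level pairs, which yields the graph-Laplacian identity $x^\dagger(\operatorname{diag}(A\mathbf{1})-A)x=\tfrac{1}{2}\sum_{\lambda,\lambda'}A_{\lambda\lambda'}|x_\lambda-x_{\lambda'}|^2\geq 0$; this edge decomposition is exactly the paper's sum of PSD $2\times 2$ blocks. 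Note that this is a sign condition on the Fourier coefficients of the window, not on $w\geq 0$ pointwise; for the rectangular window $\hat{w}_\delta(\omega)=\sin(\delta\omega)/(\pi\omega)$ it can fail once $\delta\omega>\pi$, so neither your certificate nor the paper's literally covers ``any window function'' without this assumption. In short: right reduction, but the decisive inequality is both missing and supported by a false implication.
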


\begin{proof}
The pretty good measurement is by definition a valid POVM and thus a feasible point for the primal problem given in Proposition~\ref{prop:minimax_sdp}. The associated success probability is now independent of $t$, and we can thus let go of the minimum, to get
\begin{align}
    \overline{\eta}_{\mathrm{PGM}} &= \min_t \Tr[ \psi(t) (w * Q_{\mathrm{PGM}})(t) ]\\
     \nonumber
    &= \min_t \int \diff \tau \, \Tr[ \psi(t) w(\tau) Q(t-\tau) ] \\
     \nonumber
    &= \int \diff \tau \, \Tr[ \psi(0) w(\tau) Q_{\mathrm{PGM}}(-\tau) ] \\
     \nonumber
    &= \int \diff \tau \, \Tr[ \psi(0) w(\tau) Q_{\mathrm{PGM}}(\tau) ] \\
     \nonumber
    &= \int \diff \tau \, w(\tau) f(\tau),
\end{align}
where we have used the symmetry of $\tau$ and the fact that we integrate over a cyclic interval. Now, if we look closely at the definition of the pretty good measurement and compare to Eq.~\eqref{eqn:hat_psi_omega_def}, we see that
\begin{align}
    R = 2\pi \hat{\psi}_0 =  2\pi \sum_{\lambda} |\psi_{\lambda}|^2 |{\psi_{\lambda}}\rangle\!\langle {\psi_{\lambda}} |,
\end{align}
which in turn implies that
\begin{align}
    R^{-1/2} = \frac{1}{\sqrt{2\pi}} \sum_{\lambda} \frac{1}{|\psi_{\lambda}|} |{\psi_{\lambda}}\rangle\!\langle {\psi_{\lambda}} |.
\end{align}
We can now use this to complete the POVM elements
\begin{align}
    Q(t) &= \frac{1}{2\pi} \left(\sum_{\lambda} \frac{1}{|\psi_{\lambda}|} |{\psi_{\lambda}}\rangle\!\langle {\psi_{\lambda}} |\right)\left( \sum_{\omega} e^{-i \omega t} \sum_{\lambda} \psi_{\lambda}\psi^{*}_{\lambda-\omega} |{\psi_{\lambda}}\rangle\!\langle {\psi_{\lambda-\omega}} |\right)
    \left(\sum_{\lambda} \frac{1}{|\psi_{\lambda}|} |{\psi_{\lambda}}\rangle\!\langle {\psi_{\lambda}} |\right) \\
 \nonumber
 &= \frac{1}{2\pi} \sum_{\omega} e^{-i \omega t} \sum_{\lambda} \frac{\psi_{\lambda}\psi^{*}_{\lambda-\omega}}{|\psi_{\lambda}||\psi_{\lambda-\omega}|}|{\psi_{\lambda}}\rangle\!\langle {\psi_{\lambda-\omega}} |
\end{align}
and evaluate $\overline{\eta}_{\mathrm{PGM}}$ via
\begin{align}
    f(\tau) &= \Tr [ \psi(0) Q_{\mathrm{PGM}}(\tau) ] \\
     \nonumber
    &= \frac{1}{2\pi} \sum_{\omega} e^{-i \omega \tau} \Tr \left\{  \left(\sum_{\lambda,\lambda'} \psi_{\lambda}\psi^{*}_{\lambda'}|{\psi_{\lambda}}\rangle\!\langle {\psi_{\lambda'}} |  \right) \left(\sum_{\lambda} \frac{\psi_{\lambda}\psi^{*}_{\lambda-\omega}}{|\psi_{\lambda}||\psi_{\lambda-\omega}|}|{\psi_{\lambda}}\rangle\!\langle {\psi_{\lambda-\omega}} |\right) \right\} \\
     \nonumber
    &=  \sum_{\omega} e^{-i \omega \tau} \sum_{\lambda} \frac{|\psi_{\lambda}||\psi_{\lambda-\omega}|}{2\pi} \\
    &= \sum_{\omega} e^{-i \omega \tau}\hat{f}_{\omega}.
     \nonumber
\end{align}
With this, we can Parseval's theorem to determine
\begin{align}\label{eqn:minimax_success_probability_formula}
    \overline{\eta}_{\mathrm{PGM}} &= \int \diff \tau \, w(\tau) f(\tau) \\
     \nonumber
    &= 2\pi \sum_{\omega} \hat{w}_{\omega} \hat{f}_{\omega} \\
     \nonumber
    &= \sum_{\omega}\hatw_{\omega} \sum_{\lambda} {|\psi_{\lambda}||\psi_{\lambda-\omega}|} \\
    &= \sum_{\lambda, \lambda'} |\psi_{\lambda}| |\psi_{\lambda'}| \hatw_{\lambda - \lambda'}
     \nonumber
\end{align}
as claimed in the theorem statement. Note that we can also see $f(\tau) = |p(\tau)|^2$ where
\begin{align}
    p(\tau) = \frac{1}{\sqrt{2\pi}} \sum_{\lambda} e^{-i \lambda \tau} |\psi_{\lambda}|. 
\end{align}
To show the optimality of the PGM, we
use complementary slackness and give a feasible point of the dual. First, we use $\mu(t) = 1/2\pi$ as the dual prior. The only non-trivial complementary slackness condition then is
\begin{align}
    X Q(t) = \frac{1}{2\pi} (w * \psi)(t) Q(t).
\end{align}
We first expand the right hand side in the frequency picture, using the duality of multiplication and convolution under the Fourier transform,
\begin{align}
    (w * \psi)(t) Q(t) = \sum_{\omega}e^{-i\omega t} \sum_{\delta}\hatw_{\delta} \hat{\psi}_{\delta} \hat{Q}_{\omega - \delta}.
\end{align}
The complementary slackness condition can therefore be written as
\begin{align}
    \sum_{\omega} e^{-i \omega t} \left(X\hat{Q}_{\omega} - \sum_{\delta}\hatw_{\delta} \hat{\psi}_{\delta} \hat{Q}_{\omega - \delta} \right) = 0.
\end{align}
As we want the above to hold for all $t$, we expect this to require to hold independently for all frequencies, especially $\omega = 0$. This gives
\begin{align}
    X\hat{Q}_{0} = \sum_{\delta}\hatw_{\delta} \hat{\psi}_{\delta} \hat{Q}_{- \delta}.
\end{align}
But, as we know that $Q$ is a valid POVM, we have that $\hat{Q}_0 = \bbI/2\pi$, and hence we get a formula for $X$ given by
\begin{align}
    X &= 2\pi \sum_{\delta}\hatw_{\delta} \hat{\psi}_{\delta} \hat{Q}_{- \delta} \\
     \nonumber
    &= \sum_{\delta}\hatw_{\delta} \hat{\psi}_{\delta} \hat{\psi}_0^{-\frac{1}{2}}\hat{\psi}_{- \delta}\hat{\psi}_0^{-\frac{1}{2}} \\
     \nonumber
    &= \sum_{\delta} \hatw_{\delta} 
    \left(\sum_{\lambda} \psi_{\lambda}\psi^{*}_{\lambda-\delta} |{\psi_{\lambda}}\rangle\!\langle {\psi_{\lambda-\delta}} | \right)
    \left(\sum_{\lambda} \frac{1}{|\psi_{\lambda}|} |{\psi_{\lambda}}\rangle\!\langle {\psi_{\lambda}} | \right)
    \left(\sum_{\lambda} \psi_{\lambda-\delta}\psi^{*}_{\lambda} |{\psi_{\lambda-\delta}}\rangle\!\langle {\psi_{\lambda}} | \right)
    \left(\sum_{\lambda} \frac{1}{|\psi_{\lambda}|} |{\psi_{\lambda}}\rangle\!\langle {\psi_{\lambda}} | \right)
    \\
     \nonumber
    &= \sum_{\delta} \hatw_{\delta} \sum_{\lambda} {|\psi_{\lambda}||\psi_{\lambda-\delta}|} |\psi_{\lambda}\rangle\!\langle \psi_{\lambda}|.
\end{align}
We can immediately see that the dual value is correct, as
\begin{align}
    \Tr [X] 
    &= \sum_{\delta} \hatw_{\delta}  \sum_{\lambda} {|\psi_{\lambda}||\psi_{\lambda-\delta}|}.
\end{align}
It is left to be shown that $X$ and $\mu(t) = 1/2\pi$ constitute a feasible point of the dual, \textit{i.e.}, that
\begin{align}
    X \geq \frac{1}{2\pi} (w * \psi)(t).
\end{align}
Note that by construction, $X$ commutes with the Hamiltonian and is therefore invariant under time evolution, which means it is sufficient to check the above condition at $t=0$, where it evaluates to
\begin{align}
    X \geq \frac{1}{2\pi} \int \diff \tau \, w(\tau) \psi(\tau) = \sum_{\omega} \hatw_{\omega} \hat{\psi}_{\omega},
\end{align}
where we again used Parseval's theorem. We can then collect the hermitian conjugate terms to obtain
\begin{align}
    \sum_{\omega \geq 0} \hatw_{\omega} \sum_{\lambda}(|\psi_{\lambda}||\psi_{\lambda-\omega}| + |\psi_{\lambda}||\psi_{\lambda+\omega}|) |\psi_{\lambda}\rangle\!\langle \psi_{\lambda}|
    \geq \sum_{\omega \geq 0}\hatw_{\omega} \sum_{\lambda} \psi_{\lambda}\psi^{*}_{\lambda-\omega} |{\psi_{\lambda}}\rangle\!\langle {\psi_{\lambda-\omega}} | + \psi_{\lambda}\psi^{*}_{\lambda+\omega} |{\psi_{\lambda}}\rangle\!\langle {\psi_{\lambda+\omega}} | 
    ,
\end{align}
where we have used that $\hatw_{\omega} = \hatw_{-\omega}$ because it is a symmetric function. As $w$ is still arbitrary, this should hold for all $\omega$ independently. We can exploit that the sum runs over all $\lambda$, which allows us to replace $(\lambda, \lambda + \omega) \to (\lambda - \omega, \lambda)$ and obtain
\begin{align}
    \sum_{\lambda}  |\psi_{\lambda}| |\psi_{\lambda-\omega}|  |\psi_{\lambda} \rangle \!\langle \psi_{\lambda} | + |\psi_{\lambda-\omega}| |\psi_{\lambda}|  |\psi_{\lambda-\omega} \rangle \!\langle \psi_{\lambda-\omega} |\geq \\
     \nonumber
    \sum_{\lambda} \psi_{\lambda}\psi_{\lambda-\omega}^{*} | \psi_{\lambda} \rangle\!\langle \psi_{\lambda-\omega} | +\psi_{\lambda-\omega}\psi_{\lambda}^{*} | \psi_{\lambda-\omega} \rangle\!\langle \psi_{\lambda} |.
\end{align}
We can write this as a sum over matrix-inequalities on the subspaces spanned by $\ket{\psi_{\lambda}}$ and $\ket{\psi_{\lambda - \omega}}$ to arrive at the final
\begin{align}
    \sum_{\lambda} \begin{bmatrix}
    |\psi_{\lambda}||\psi_{\lambda-\omega}| & \psi_{\lambda}^{*} \psi_{\lambda-\omega} \\
    \psi_{\lambda-\omega}^{*} \psi_{\lambda} & |\psi_{\lambda}| |\psi_{\lambda-\omega}| 
    \end{bmatrix}_{\lambda, \lambda-\omega} \geq 0.
\end{align}
Denoting now for brevity $a = \psi_{\lambda}^{*} \psi_{\lambda-\omega}$ we can chat the inequality by computing the characteristic polynomial
\begin{align}
    \operatorname{det} \begin{bmatrix} |a|-\alpha & a \\ a^{*} & |a|-\alpha \end{bmatrix} = \alpha^2 - 2 |a| \lambda = \alpha ( \alpha - 2 |a|)
\end{align}
which has the two non-negative roots $0$ and $2 |a|$, which, together with the fact that a sum of positive semi-definite matrices is positive-semi-definite, concludes the proof that $X$ is indeed a feasible point of the dual and hence the pretty good measurement is optimal.
\end{proof}

We can exploit the optimality of the pretty good measurement together with the formula for the minimax success probability to optimize the probe state. If we define the vector $\ppsi = (|\psi_{\lambda_1}|, |\psi_{\lambda_2}|, \dots, |\psi_{\lambda_d}|)$ and the matrix $W_{\lambda, \lambda'} = \hatw_{\lambda - \lambda'}$, we can write the minimax success probability of Eq.~\eqref{eqn:covariant_minimax_success_probability} as a quadratic form $\overline{\eta} = \langle \ppsi, W \ppsi \rangle$. 
This implies that the optimal probe state is the solution of a quadratic program over the positive orthant
\begin{align}
    \overline{\eta}^*(w, U(\phi))= \begin{array}[t]{rc}
      \text{maximize} & \langle \ppsi, W \ppsi \rangle  \\ [1ex] 
      \text{over} & \psi_i \geq 0 \\ [1ex] 
      \text{such that} & \sum_{i=1}^d \psi_i^2 = 1.
    \end{array}
\end{align}
Programs of this form are in general NP-hard to solve
in worst case complexity, but as the optimization is over $n \times n$ matrices where $n$ is the number of channel repetitions, we can still use numerical methods to solve this problem for large numbers of qubits, as direct simulation is not required.
Note that the normalization condition for $\ppsi$ implies that
\begin{align}
    \overline{\eta}^*(w, U(\phi)) \leq \lVert W \rVert_{\infty}.
\end{align}
By the Perron-Frobenius theorem, this bound is tight if the matrix $W$ has only positive entries, which is only guaranteed if the Fourier transform of the window function is a non-negative function, which will most of the time not be the case.

Every candidate for the optimal probe state will allow us to determine a lower bound on $\overline{\eta}^*(w, U(\phi))$. One such candidate is a uniform superposition of energy eigenstates $\ppsi = (1/\sqrt{d},1/\sqrt{d}, \dots, 1/\sqrt{d})$. This will give a near optimal success probability in the limit where $\frac{1}{n}$ is large compared to the spectral width of the window. This can be seen by noting that in this limit the first order expansion of the Fourier transform is approximately constant
\begin{align}
    \hatw_{\lambda} \approx \hatw_0
\end{align}
In this case, the success probability is dominated by the one-norm of the vector $\lVert \ppsi \rVert_1$:
\begin{align}
    \overline{\eta}^{*}(w, \psi) &\approx \sum_{\lambda, \lambda'} |\psi_{\lambda}||\psi_{\lambda'}| \hatw_0 = \hatw_0 \lVert \ppsi\rVert_1^2.
\end{align}
This is only meant as a intuitive argument. 

Let us make this more concise in the case of a $\delta$ window in the limit $\delta \to 0$. We can show the following proposition:
\begin{sproposition}
In the phase sensing example, for a rectangular window $w_{\delta}$ and a probe state $\psi(0)$, we have that
\begin{align}
    \overline{\eta}^*(w, U(\phi)) \leq \frac{\delta}{\pi}\left[ \Tr[ \hat{\psi}_0^{\frac{1}{2}} ]^2 -  \frac{\delta^2}{2}\frac{2}{3} \operatorname{Var}_{\psi(0)}(H) + O(\delta^4) \right].
\end{align}
\end{sproposition}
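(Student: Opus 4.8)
The plan is to start from the closed-form expression for the optimal minimax success probability established in Theorem~\ref{thm:pgm_is_minimax_optimal}, namely $\overline{\eta}^{*}(\delta,\psi) = \sum_{\lambda,\lambda'}|\psi_\lambda||\psi_{\lambda'}|\,\hat{w}_\delta(\lambda-\lambda')$, and to Taylor-expand it in the tolerance $\delta$. Substituting the Fourier transform of the rectangular window, $\hat{w}_\delta(\omega) = \frac{\delta}{\pi}\operatorname{sinc}(\delta\omega)$, together with $\operatorname{sinc}(x) = 1 - x^2/6 + O(x^4)$, yields $\hat{w}_\delta(\lambda-\lambda') = \frac{\delta}{\pi}\bigl[1 - \tfrac{\delta^2}{6}(\lambda-\lambda')^2 + O(\delta^4)\bigr]$. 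Because the Hamiltonian acts on a finite-dimensional system its spectrum is finite, so the remainder is uniform over the finitely many pairs $(\lambda,\lambda')$ and the collected error is genuinely $O(\delta^4)$.

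First I would extract the zeroth-order term. It is $\frac{\delta}{\pi}\bigl(\sum_\lambda |\psi_\lambda|\bigr)^2 = \frac{\delta}{\pi}\lVert\ppsi\rVert_1^2$, and I would identify $\lVert\ppsi\rVert_1 = \Tr[\hat{\psi}_0^{1/2}]$ using the expression $\hat{\psi}_0 = \sum_\lambda |\psi_\lambda|^2\,\ketbra{\psi_\lambda}{\psi_\lambda}$ from the proof of Theorem~\ref{thm:pgm_is_minimax_optimal}, whose square root is $\sum_\lambda|\psi_\lambda|\,\ketbra{\psi_\lambda}{\psi_\lambda}$. The second-order term is $-\frac{\delta^3}{6\pi}\sum_{\lambda,\lambda'}|\psi_\lambda||\psi_{\lambda'}|(\lambda-\lambda')^2$, which, writing $m_k \coloneqq \sum_\lambda |\psi_\lambda|\lambda^k$ and using the elementary identity $\sum_{\lambda,\lambda'}a_\lambda a_{\lambda'}(\lambda-\lambda')^2 = 2\bigl[(\sum_\lambda a_\lambda)(\sum_\lambda a_\lambda\lambda^2) - (\sum_\lambda a_\lambda\lambda)^2\bigr]$, equals $-\frac{\delta^3}{3\pi}(m_0 m_2 - m_1^2)$. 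This already reproduces the claimed coefficient $\frac{\delta}{\pi}\cdot\frac{\delta^2}{2}\cdot\frac{2}{3} = \frac{\delta^3}{3\pi}$.

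The main obstacle -- and the only place where the equality becomes an inequality -- is that this second-order coefficient is the \emph{variance} of $H$ computed with the weights $|\psi_\lambda|$, whereas the proposition asks for the physical variance $\operatorname{Var}_{\psi(0)}(H)$, computed with the Born-rule weights $|\psi_\lambda|^2$. To close the gap I would write both quantities as symmetric double sums, $m_0 m_2 - m_1^2 = \tfrac12\sum_{\lambda,\lambda'}|\psi_\lambda||\psi_{\lambda'}|(\lambda-\lambda')^2$ and $\operatorname{Var}_{\psi(0)}(H) = \tfrac12\sum_{\lambda,\lambda'}|\psi_\lambda|^2|\psi_{\lambda'}|^2(\lambda-\lambda')^2$ (the latter using $\sum_\lambda|\psi_\lambda|^2=1$), and compare them term by term. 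Normalization of the probe, $\sum_\lambda|\psi_\lambda|^2=1$, forces $|\psi_\lambda|\le 1$ and hence $|\psi_\lambda||\psi_{\lambda'}|\le 1$, so that $|\psi_\lambda||\psi_{\lambda'}| \ge (|\psi_\lambda||\psi_{\lambda'}|)^2 = |\psi_\lambda|^2|\psi_{\lambda'}|^2$. Multiplying by $(\lambda-\lambda')^2\ge 0$ and summing gives $m_0 m_2 - m_1^2 \ge \operatorname{Var}_{\psi(0)}(H)$. Since this coefficient enters with a negative sign, replacing it by the smaller quantity $\operatorname{Var}_{\psi(0)}(H)$ only increases the expansion, delivering the stated upper bound; the positive $O(\delta^4)$ tail of the sinc contributes at order $\delta^5$ overall and is absorbed into the $O(\delta^4)$ inside the bracket.
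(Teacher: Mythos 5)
Your proposal is correct, and the overall skeleton — expanding $\hat{w}_{\delta}(\omega)=\tfrac{\delta}{\pi}\operatorname{sinc}(\delta\omega)$ to second order inside the closed-form success probability of Theorem~\ref{thm:pgm_is_minimax_optimal}, reading off $\tfrac{\delta}{\pi}\lVert\ppsi\rVert_1^2$ at zeroth order and the $|\psi_\lambda|$-weighted variance $m_0m_2-m_1^2$ at second order — is exactly the paper's. Where you genuinely diverge is the final inequality step, replacing the $|\psi_\lambda|$-weighted variance by the Born-rule variance $\operatorname{Var}_{\psi(0)}(H)$. The paper does this by showing that the map $V(X)=\Tr[X]\Tr[H^2X]-\Tr[HX]^2$ is monotone under the operator order (via positivity of $H^2\otimes\mathbb{I}+\mathbb{I}\otimes H^2-2H\otimes H$) and then invoking the operator inequality $\hat{\psi}_0^{1/2}\geq\psi(0)$, so that $V(\hat{\psi}_0^{1/2})\geq V(\psi(0))=\operatorname{Var}_{\psi(0)}(H)$. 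You instead write both variances as symmetric double sums and compare them term by term using $|\psi_\lambda||\psi_{\lambda'}|\geq|\psi_\lambda|^2|\psi_{\lambda'}|^2$, which follows from normalization. Your route is more elementary and, notably, more robust: the paper's intermediate operator inequality $\hat{\psi}_0^{1/2}\geq\psi(0)$ is actually false in general (for $\ket{\psi}=(\ket{0}+\ket{1})/\sqrt{2}$ one has $\hat{\psi}_0^{1/2}=\mathbb{I}/\sqrt{2}$, whose expectation in $\ket{+}$ is $1/\sqrt{2}<1=\bra{+}\psi(0)\ket{+}$), even though the scalar conclusion $m_0m_2-m_1^2\geq\operatorname{Var}_{\psi(0)}(H)$ that both arguments target is true — and your term-by-term bound is a clean, self-contained proof of it. Your remark that the finite spectrum makes the $O(\delta^4)$ remainder uniform over the pairs $(\lambda,\lambda')$ is also a detail the paper leaves implicit.
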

\begin{proof}
We use the second order expansion
\begin{align}
    \hatw_{\omega} &= \frac{\sin \omega \delta}{\pi \omega} = \frac{\delta}{\pi} \left[1 - \frac{1}{2}\frac{\delta^2}{3} \omega^2 + O(\delta^4) \right].
\end{align}
If we use this together with the formula for the success probability, where $\omega = \lambda - \lambda'$, we obtain 
\begin{align}
    \overline{\eta}^{*}(w, \psi) &= \sum_{\lambda, \lambda'} |\psi_{\lambda}| |\psi_{\lambda'}| \hatw_{\lambda - \lambda'} \\
    \nonumber
    &= \frac{\delta}{\pi} \sum_{\lambda, \lambda'} |\psi_{\lambda}| |\psi_{\lambda'}| \left[1 - \frac{1}{2}\frac{\delta^2}{3} \omega^2 + O(\delta^4) \right] \\
    \nonumber
    &= \frac{\delta}{\pi} \left[ \lVert \ppsi \rVert_1^2 -  \frac{1}{2}\frac{\delta^2}{3} \sum_{\lambda, \lambda'} |\psi_{\lambda}| |\psi_{\lambda'}| (\lambda - \lambda')^2 + O(\delta^4)\right].
    \nonumber
\end{align}
The quadratic term can be recast into
\begin{align}
    \sum_{\lambda, \lambda'} |\psi_{\lambda}| |\psi_{\lambda'}| (\lambda - \lambda')^2  &=
    \sum_{\lambda, \lambda'} |\psi_{\lambda}| |\psi_{\lambda'}| (\lambda^2 + \lambda'^2 - 2 \lambda \lambda')\\
    \nonumber
    &= 2 \lVert \ppsi \rVert_1 \sum_{\lambda} |\psi_{\lambda}| \lambda^2 -2  \left(\sum_{\lambda} |\psi_{\lambda}| \lambda\right)^2 \\
    \nonumber
    &= 2 \left( \Tr [\hat{\psi}_0^{\frac{1}{2}}] \Tr[ H^2 \hat{\psi}_0^{\frac{1}{2}} ] - \Tr[ H \hat{\psi}_0^{\frac{1}{2}} ]^2 \right).
    \nonumber
\end{align}
Now, we show that the map
\begin{align}
    V( X ) = \Tr [ X ] \Tr [ H^2 X ] - \Tr [ H X ]^2
\end{align}
is an operator monotone. We can rewrite it as
\begin{align}
    V(X) = \frac{1}{2}\Tr [ (H^2 \otimes \bbI + \bbI \otimes H^2 - 2 H\otimes H) (X \otimes X) ]
\end{align}
and can explicitly check the positivity of the operator by checking its action on an products of eigenstates of $H \ket{\lambda} = \lambda \ket{\lambda}$ given by
\begin{align}
    (H^2 \otimes \bbI + \bbI \otimes H^2 - 2 H\otimes H) \ket{\lambda \lambda'}
    &= (\lambda^2 + \lambda'^2 - 2 \lambda \lambda')\ket{\lambda \lambda'} \\
    &= (\lambda - \lambda')^2 \ket{\lambda \lambda'}. 
    \nonumber
\end{align}
We can then also use the facts that
\begin{align}
    A \geq B\geq 0 \ &\Rightarrow \ A \otimes A \geq B \otimes B, \\
    A \geq B \ &\Rightarrow \ \Tr [ M A ] \geq  \Tr [ M B ]  \text{ for all } M \geq 0,
\end{align}
to conclude that
\begin{align}
    A \geq B \ &\Rightarrow \ V(A) \geq V(B).
\end{align}
Considering  
\begin{align}
    \bbI \geq \psi(0) \geq 0 \ \Rightarrow \ \hat{\psi}_0^{\frac{1}{2}} \geq \psi(0)
\end{align}
together with the operator monotonicity of $V(X)$ allows us to conclude that
\begin{align}
    V(\hat{\psi}_0^{\frac{1}{2}}) \geq V(\psi(0)) = \operatorname{Var} (H).
\end{align}
Putting this back into the success probability we obtain
\begin{align}
    \overline{\eta}^{*}(w, \psi) &= \frac{\delta}{\pi} \left[ \lVert \ppsi \rVert_1^2 -  \frac{\delta^2}{2}\frac{2}{3} V(\hat{\psi}_0^{\frac{1}{2}}) + O(\delta^4) \right] \\
    &\leq \frac{\delta}{\pi} \left[ \Tr[ \hat{\psi}_0^{\frac{1}{2}} ]^2 -  \frac{\delta^2}{2}\frac{2}{3} \operatorname{Var}(H) + O(\delta^4) \right].\nonumber
\end{align}
\end{proof}
The above proposition implies that as long $\delta n \ll 1$, the uniform superposition probe state will work well.

\section{Minimax analysis of phase estimation}\label{ssec:minimax_analysis_of_phase_estimation}

We now treat the practically important case of phase estimation. We consider without loss of generality a phase $t \in [-\pi, \pi]$ to be encoded via $U(t) = e^{-i t H}$ for the Hamiltonian $H = \operatorname{diag}(0, 1)$. If we perform $n$ repetitions of the experiment, we encode with
\begin{align}
    U^{\otimes n}(t) = e^{-i t \sum_{i=1}^n H_i} \qquad H_i = \bbI^{\otimes i-1} \otimes H \otimes \bbI^{\otimes n-i}.
\end{align}
The effective accessible spectrum is then
\begin{align}
    \Lambda_n = \{ 0, 1, \dots, n\}.
\end{align}
In the following discussions, we always use the rectangular window function
\begin{align}
    w_{\delta}(t) = \begin{cases} 1 & \text{ if } |t| \leq \delta \\ 0 & \text{ else}.\end{cases}
\end{align}

We first give a proof of Theorem~\ref{thm:opt_minimax_rate_entangled_phase_est} of the main text, which establishes the asymptotic rate of the optimal probe for phase estimation on a spin chain.

\begin{proof}[Proof of Theorem~\ref{thm:opt_minimax_rate_entangled_phase_est}]
 Slepian gives an asymptotic rate with which the largest eigenvalue of $W$ approaches $1$ as
 \begin{align}
 \overline{R}^{*}_{\mathrm{par}} &= \log \left(1 + \frac{2 \sqrt{1 - \cos \delta}}{\sqrt{2} - \sqrt{1 - \cos \delta}}\right) \\
 \nonumber
 &= \log \left( \frac{\sqrt{2} + \sqrt{1 - \cos \delta}}{\sqrt{2} - \sqrt{1 - \cos \delta}}\right)\\
 &= \log \left( \frac{1 + \sin \frac{\delta}{2}}{1 - \sin \frac{\delta}{2}} \right).
 \nonumber
\end{align}
A quick Taylor expansion yields the theorem statement. We do, however, still need to establish positivity of the associated eigenvector, the DPSS of zeroth order. To do so, we rely on another result of Slepian, namely that the DPSS of zeroth order also corresponds to the largest eigenvector of the tridiagonal 
matrix $\tilde{W}$ with entries
\begin{align}
    \tilde{W}_{\lambda, \lambda'} = \begin{cases}
        \frac{\lambda (n + 1 - \lambda)}{2} & \text{for } \lambda' = \lambda - 1 \\
        \left(\frac{n}{2}-\lambda\right)^2 \cos \delta & \text{for } \lambda' = \lambda \\
        \frac{(\lambda+1) (n - \lambda)}{2} & \text{for } \lambda' = \lambda + 1 \\
        0 & \text{else.}
    \end{cases}
\end{align}
As $0 \leq \lambda \leq n$, the entries of this matrix are positive as long as $\delta \leq \pi/2$ as required by the Theorem as well as being the only non-trivial parameter range. Because of the positive entries, we can apply the Perron-Frobenius theorem to conclude that the eigenvector associated to the largest eigenvalue -- the DPSS of zeroth order -- must be non-negative.
\end{proof}

Next, we give a proof of Theorem~\ref{thm:opt_minimax_rate_gaussian_probe_phase_est} of the main text that establishes the minimax rate for a Gaussian probe. In the course of the proof, we will optimize the width of the Gaussian to have an optimal trade off between $\delta$ and $n$. Note that in the below proof, $\sigma$ refers to the standard deviation of the distribution of estimates, and is thus the \emph{inverse} of the standard deviation of the probe state itself.
\begin{proof}[Proof of Theorem~\ref{thm:opt_minimax_rate_gaussian_probe_phase_est}]
Before we start the proof, we recall the Fourier transform
\begin{align}
    f(t) &= \sum_{\omega = -\infty}^{\infty} e^{-i \omega t} \hat{f}(\omega),\\
    \hat{f}(\omega) &= \frac{1}{2\pi} \int_{-\pi}^{\pi} \diff t \, f(t) e^{i \omega t}.
\end{align}
We prove the statement by constructing a probe state that achieves the given asymptotic rate. Our target will be the construction of a probe state whose associated PGM fidelity function is a wrapped normal distribution 
\begin{align}\label{eqn:target_function_gaussian}
    f_{\sigma}(t) = \frac{1}{\sqrt{2\pi} \sigma}\sum_{k=-\infty}^{\infty} \exp\left({-\frac{1}{2}\left(\frac{t + 2\pi k}{\sigma}\right)^2}\right)
\end{align}
with standard deviation $\sigma$. In the following, we use $\Tilde{f}_{\sigma}(t)$ to denote an \emph{unwrapped} normal distribution.
Our aim is to let $\sigma$ vanish asymptotically in the number of repetitions $n$. Because the available frequencies of the Hamiltonian are $\Lambda_n = \{ 0 , 1  , \dots, n\}$, the Fourier transform of the actual PGM fidelity function must be supported on the frequencies frequencies $\{-n, \dots, n\}$, which are all possible differences of frequencies in $\Lambda_n$. 
Our strategy will be to assume that we choose Fourier coefficients of the PGM fidelity function such that they are equal to our target in Eq.~\eqref{eqn:target_function_gaussian} on the frequencies $\{-n, \dots, n\}$ and zero outside. With this approximation, we will make an error, which we will see below. Then, we need to judiciously choose $\sigma$ so that we keep both the approximation error in check as well as making $\sigma$ as small as possible to fit most of the Gaussian into the window.

Let us first analyze the error we make when we approximate $f_{\sigma}(t)$ on frequencies from $-n$ to $n$. We denote this approximation to $f_{\sigma}$ as $f_{\sigma}^{\leq n}$ so that $f_{\sigma}(t) = f_{\sigma}^{\leq n} + f_{\sigma}^{> n}$. Note that $f_{\sigma}^{\leq n}$ is still a properly normalized function as only $\hat{f}(0)$ conributes to the integral. 
The expected error probability for this probe is then
\begin{align}
    1- \overline{\eta}_{\sigma} &= 1- \int_{-\pi}^{\pi} \diff t \, w(t) f_{\sigma}^{\leq n}(t) \\
     \nonumber
    &=1 - \int_{-\pi}^{\pi} \diff t \, w(t) f_{\sigma}(t) + \int_{-\pi}^{\pi} \diff t \, w(t) f^{>n}_{\sigma}(t) \\
     \nonumber
     \nonumber
    &=\int_{-\pi}^{\pi} \diff t \, f_{\sigma}(t) - \int_{-\delta}^{\delta} \diff t \, f_{\sigma}(t) + \int_{-\pi}^{\pi} \diff t \, w(t) f^{>n}_{\sigma}(t) \\
     \nonumber
    &=  2 \int_{\delta}^{\pi} \diff t \, f_{\sigma}(t) + \int_{-\pi}^{\pi} \diff t \, w(t) f^{>n}_{\sigma}(t) \\
     \nonumber
    &=  2 \int_{\delta}^{\pi} \diff t \, f_{\sigma}(t) + 2\pi \sum_{\omega = -\infty}^{\infty}\hatw(\omega) \hatf_{\sigma}^{>n}(\omega) 
    \\
     \nonumber
    &=  2 \int_{\delta}^{\pi} \diff t \, f_{\sigma}(t) + 4 \pi \sum_{\omega = n+1}^{\infty}\hatw(\omega) \hatf_{\sigma}(\omega),
\end{align}
where we have used the definition of $w_{\delta}$ and Parseval's theorem, which for real-valued functions and our Fourier transform conventions reads 
\begin{align}
    \int_{-\pi}^{\pi} \diff t \, f(t)g(t) = 2\pi \sum_{\omega = -\infty}^{\infty} \hat{f}(\omega)\hat{g}(\omega).
\end{align}
We just expanded the error in terms of two tails of $f_{\sigma}$, one in real space and one in frequency space. The next step is to choose $\sigma$ judiciously to balance the two tails to achieve a minimal error. The first tail is easy to treat,
\begin{align}
    2 \int_{\delta}^{\pi} \diff t \, f_{\sigma}(t) 
    &= 2 \frac{1}{\sqrt{2\pi}\sigma} \sum_{k=-\infty}^{\infty} \int_{\delta}^{\pi} \exp\left( -\frac{1}{2}\left(\frac{t + 2\pi k}{\sigma}\right)^2\right)\\
     \nonumber
    &=2 \frac{1}{\sqrt{2\pi}\sigma} \sum_{k=-\infty}^{\infty} \int_{\delta-2\pi k}^{\pi - 2\pi k} \exp\left( -\frac{1}{2}\left(\frac{t}{\sigma}\right)^2\right) \\
     \nonumber
    &\leq 2 \frac{1}{\sqrt{2\pi}\sigma} \int_{\delta}^{\infty} \exp\left( -\frac{1}{2}\left(\frac{t}{\sigma}\right)^2\right) \\
     \nonumber
    &\leq 2 \int_{\delta}^{\infty} \diff t \, \tilde{f}_{\sigma}(t) \\
     \nonumber
    &= \operatorname{Erfc}\left(\frac{\delta}{\sqrt{2} \sigma}\right) \\
     \nonumber
    &\leq \exp\left(-\frac{1}{2}\left(\frac{\delta}{\sigma}\right)^2\right) \sqrt{\frac{2}{\pi}}\frac{\sigma}{\delta},
\end{align}
where we have used the definition of the 
cumulative error function $x\mapsto \operatorname{Erfc}(x)$ and the standard tail bound. 
Note that the first inequality follows from the positivity of the normal distribution.

The second term will be treated in a comparable manner, however we have to be careful when relating the discrete summation to an integral. We will need the Fourier transforms of the involved functions, given by
\begin{align}
    \hatw_{\delta}(\omega) &= \frac{\sin \delta \omega}{\pi \omega} \leq \frac{1}{\pi \omega} ,\\
    \hatf_{\sigma}(\omega) &= \frac{1}{2\pi} \exp\left(-\frac{1}{2}(\sigma \omega)^2\right).
\end{align}
We will apply the bound
\begin{align}
    4\pi \hatw_{\delta}(\omega) \hatf_{\sigma}(\omega) \leq \frac{2}{\pi \omega}\exp\left(-\frac{1}{2}(\sigma \omega)^2\right) \leq \frac{2}{\pi}\exp\left(-\frac{1}{2}(\sigma \omega)^2\right),
\end{align}
where the last inequality is valid for $\omega \geq 1$. In this approximation, we can use the integral approximation formula for positive and monotonically decreasing functions
\begin{align}
    4\pi \sum_{\omega = n+1}^{\infty}\hatw(\omega) \hatf_{\sigma}(\omega) &\leq \sum_{\omega = n+1}^{\infty} \frac{2}{\pi}\exp\left(-\frac{1}{2}
    (\sigma \omega)^2\right) \\
     \nonumber
    &\leq \frac{2}{\pi }\exp\left(-\frac{1}{2}(\sigma (n+1) )^2\right)  + \frac{2}{\pi}\int_{n+1}^{\infty} \diff \omega \, \exp\left(-\frac{1}{2}(\sigma \omega)^2\right)\\
     \nonumber
    &\leq \frac{2}{\pi}\exp\left(-\frac{1}{2}(\sigma (n+1) )^2\right)  + \frac{1}{\pi}\operatorname{Erfc}\left(\frac{(n+1) \sigma}{\sqrt{2}}\right)\\
     \nonumber
    &\leq \frac{2}{\pi}\exp\left(-\frac{1}{2}(\sigma (n+1) )^2\right)  +\frac{1}{\pi} \exp\left(-\frac{1}{2}(\sigma(n+1))^2\right) \sqrt{\frac{2}{\pi}}\frac{1}{\sigma(n+1)}\\
     \nonumber
    &= \frac{2}{\pi}\exp\left(-\frac{1}{2}(\sigma (n+1) )^2\right)\left[1  + \frac{1}{\sqrt{2\pi} \sigma}\right].
     \nonumber
\end{align}
Putting both upper bounds together, we see that
\begin{align}
    1 - \overline{\eta}_{\sigma} &\leq  \exp\left(-\frac{1}{2}\left(\frac{\delta}{\sigma}\right)^2\right) \sqrt{\frac{2}{\pi}}\frac{\sigma}{\delta} + \frac{2}{\pi}\exp\left(-\frac{1}{2}(\sigma (n+1) )^2\right)\left[1  + \frac{1}{\sqrt{2\pi} \sigma}\right].
\end{align}
In practice, the error will be dominated by the larger exponent, so we can optimize it by making both exponents equal, \emph{i.e.}, choosing $\sigma$ such that
\begin{align}
    \frac{\delta}{\sigma} = \sigma (n+1) \ \Leftrightarrow \ \sigma = \sqrt{\frac{\delta}{n+1}},
\end{align}
in which case we have that
\begin{align}\label{eqn:error_rate_upper_bound_gauss}
    1 - \overline{\eta}_{\sigma} \leq \exp\left(-\frac{\delta(n+1)}{2} \right)\left\{ \sqrt{\frac{2}{\pi}}\sqrt{\frac{1}{\delta(n+1)}} + \frac{2}{\pi}\left[1  + \frac{1}{\sqrt{2\pi}}\sqrt{\frac{n+1}{\delta}}\right]\right\}.
\end{align}
The asymptotic rate of the above quantity is $\delta/2$ as claimed in the theorem statement.

To round off the proof of the theorem, we need to show that the desired fidelity function can actually be realized. Note that the PGM fidelity function fulfills
\begin{align}
f(t) = |p(t)|^2
\end{align}
where the Fourier transform of $t\mapsto p(t)$ has coefficients
\begin{align}
    \hat{p}(\omega) = \frac{1}{\sqrt{2\pi}} |\psi_{\omega}|.
\end{align}
The Fourier transform is such that
\begin{align}
    \hatf(\omega) = (\hat{p} * \hat{p}^{*})(\omega).
\end{align}
Here $\hat{p}^{*}(\omega) = \hat{p}(-\omega)$ is a time reversed version of $\hat{p}$ which coincides with the complex conjugate when $p(t)$ is real-valued as we will choose below.

As the weights $|\psi_{\omega}|$ need to correspond to a valid quantum state and $\omega \in \{0, 1, \dots, n\}$, we require that
\begin{align}
    \hat{p}(\omega) \geq 0 \text{ and } \sum_{\omega = 0}^n \hat{p}(\omega)^2 = \frac{1}{2\pi}.
\end{align}
We also note that shifting the support of the Fourier transform of $p(\tau)$ only introduces a phase. We will therefore consider the Fourier transform to be supported on the interval $\{-n/2, \dots, n/2\}$.

As we let $\sigma \to 0$, we can work with the unwrapped Gaussian instead of the wrapped Gaussian which will significantly simplify our calculations. Formally, this is 
\begin{align}
\frac{f_{\sigma}(t)}{\tilde{f}_{\sigma}(t)} \leq C
\end{align}
for some $C$ that can be chosen arbitrarily close to $1$ for appropriately large $n$. As we want to construct $f_{\sigma}(t)$, we can exploit the fact that
\begin{align}
    \tilde{f}_{\sigma}(t) = (\tilde{f}_{\sigma\sqrt{2}}(t))^2
\end{align}
and choose 
\begin{align}
p(t) = \tilde{f}^{\leq n/2}_{\sigma\sqrt{2}}(t).
\end{align}
By approximating this expression, we make an error that we need to correct for through re-normalization, the factor of which is given by
\begin{align}
    \frac{\calN}{2\pi} &= \frac{1}{4\pi^2}\sum_{\omega = -n/2}^{n/2} \exp(-\frac{1}{2}(\sigma\omega)^2) \\
     \nonumber
    &= \frac{1}{2\pi} - \frac{1}{2\pi^2} \sum_{\omega = n/2}^{\infty} \exp(-\frac{1}{2}(\sigma\omega)^2).
\end{align}
We see that the right term is again the tail of the Gaussian in frequency space which becomes arbitrarily small for $n \to \infty$. We therefore know that there exists an $n$ such that $1/2 \leq \calN \leq 1$. This means the actual probe we will use is given by
\begin{align}
    \hat{p}(\omega) &= \calN^{-1}\hatf^{\leq n}_{\sigma\sqrt{2}}(\omega) , \\
    \psi_{\omega} &= \sqrt{2\pi}\calN^{-1}\hatf^{\leq n}_{\sigma\sqrt{2}}(\omega).
\end{align}
The correction factor, as well as the error of approximating the wrapped Gaussian with the unwrapped Gaussian is bounded and will thus not contribute to the asymptotic rate. This concludes the proof.
\end{proof}

Next, we will discuss the constant factor of the Heisenberg scaling of the Gaussian probe and give a proof of Theorem~\ref{obs:asymtptotic_tolerance_gauss} of the main text.
\begin{proof}[Argument for Observation~\ref{obs:asymtptotic_tolerance_gauss}]
    We use the same setting as in the above proof of Theorem~\ref{thm:opt_minimax_rate_gaussian_probe_phase_est}.
    We start from Eq.~\eqref{eqn:error_rate_upper_bound_gauss} and set 
    \begin{align}
        \delta = \frac{\alpha}{n+1}
    \end{align}
    to obtain
    \begin{align}
        1 - \overline{\eta} &\leq \frac{2}{\pi} e^{-\frac{\alpha}{2}} \left( 1 + \sqrt{\frac{1}{\alpha}}\left[ \sqrt{\frac{\pi}{2}} + \sqrt{\frac{1}{2\pi} } (n+1)\right]\right) \\
        &= \frac{2}{\pi} \exp\left( - \frac{\alpha}{2} + O(\log \alpha) + O(\log n) \right).
    \end{align}
    Rearranging then yields
    \begin{align}
        2 \log \frac{2}{\pi(1-\overline{\eta})} \geq  \alpha + O(\log \alpha) + O(\log n)
    \end{align}
    which implies the Observation.
\end{proof}

\subsection{Chernoff bound and entanglement advantage}

In the previous section, we have shown that the optimal rate for the phase sensing problem is lower-bounded by $\delta/2$ in the case of an entangled probe. In this section, we also provide an upper bound on the best rate that can be achieved with tensor power inputs. 
We can exploit the upper bound on the asymptotic rate for the tensor power case derived in Theorem~\ref{sthm:rate_upper_bound} to establish the following theorem.

\begin{stheorem}\label{sthm:upper_bound_iid_rate_phase_estimation}
The optimal minimax rate for tensor power inputs for the phase sensing problem is upper-bounded by
\begin{align}
    \overline{R}_{\mathrm{iid}}^{*}(w_{\delta}, U(t)) \leq - \log \cos^2(\delta) = \delta^2 + O(\delta^4).
\end{align}
We can, therefore, guarantee a quadratic advantage over the i.i.d.\ case through the use of entanglement.
\end{stheorem}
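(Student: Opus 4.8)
The plan is to combine the i.i.d.\ rate upper bound of Theorem~\ref{thm:rate_upper_bound_delta} with an explicit evaluation of the Chernoff divergence between two pure qubit states. Fix any single-spin probe $\ket{\psi} = \psi_0\ket{0} + \psi_1\ket{1}$, and let $\rho(t) = U(t)\dyad{\psi}U(t)\dagg$ with $U(t)=e^{-itH}$, $H=\operatorname{diag}(0,1)$. Since the tensor-power family is precisely $\rho(t)^{\otimes n}$, Theorem~\ref{thm:rate_upper_bound_delta} gives, for this fixed probe,
\begin{align}
\overline{R}^{*}(w_\delta, \rho) \leq \inf_{|t-t'|>2\delta} C(\rho(t), \rho(t')).
\end{align}
Taking the supremum over all single-spin probes on both sides bounds the optimal i.i.d.\ rate $\overline{R}^{*}_{\mathrm{iid}}$ by $\sup_{\ket{\psi}} \inf_{|t-t'|>2\delta} C(\rho(t),\rho(t'))$; this step is legitimate because the displayed inequality holds probe-by-probe.

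Next I would evaluate the Chernoff divergence explicitly. For two pure states the operator powers act as the identity on the support, $(\dyad{\phi})^{s}=\dyad{\phi}$ for $s\in(0,1)$, so $\Tr[\rho(t)^s \rho(t')^{1-s}] = |\braket{\psi(t)}{\psi(t')}|^2$ independently of $s$, whence $C(\rho(t),\rho(t')) = -\log|\braket{\psi(t)}{\psi(t')}|^2$. Writing $p = |\psi_0|^2$ and using $\ket{\psi(t)} = \psi_0\ket{0}+\psi_1 e^{-it}\ket{1}$, a short computation gives
\begin{align}
|\braket{\psi(t)}{\psi(t')}|^2 = 1 - 2p(1-p)\bigl(1-\cos(t-t')\bigr).
\end{align}
Because $\tau\mapsto 1-\cos\tau$ is increasing on $[0,\pi]$ and symmetric under the circular topology of the phase, the infimum over separations $|t-t'|>2\delta$ (with $\delta<\pi/2$) is attained at $|t-t'|=2\delta$. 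Maximizing the resulting expression over the probe amounts to maximizing $2p(1-p)$, which is optimal at $p=1/2$, i.e.\ the $\ket{+}$ state. Using $1-\cos(2\delta)=2\sin^2\delta$ then yields
\begin{align}
\sup_{\ket{\psi}}\inf_{|t-t'|>2\delta} C(\rho(t),\rho(t')) = -\log\bigl(1-\sin^2\delta\bigr) = -\log\cos^2\delta,
\end{align}
and the expansion $-\log\cos^2\delta = -2\log(1-\tfrac{\delta^2}{2}+O(\delta^4)) = \delta^2 + O(\delta^4)$ gives the claimed scaling. The quadratic advantage follows by comparing with $\overline{R}^{*}_{\mathrm{par}}(\delta) = \delta + O(\delta^3)$ from Theorem~\ref{thm:opt_minimax_rate_entangled_phase_est}.

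The computations are all elementary; the only points requiring care are conceptual rather than technical. First, one must verify that optimizing the probe commutes correctly with the rate bound—this is immediate since the bound is a statement about each fixed $\rho$, but it is worth stating explicitly so that the upper bound on the \emph{optimized} rate is not conflated with a bound for a single suboptimal probe. Second, the location of the minimizing pair of states must be pinned down: the monotonicity of the Chernoff divergence in the separation, together with the circular phase topology, is what forces the worst case to sit exactly at separation $2\delta$ rather than at some intermediate value. I expect this monotonicity-plus-topology step to be the main (mild) obstacle, as it is where the assumption $\delta<\pi/2$ enters and where one must rule out that a larger separation could produce a smaller divergence.
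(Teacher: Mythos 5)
Your proposal is correct and follows essentially the same route as the paper's proof: both invoke the i.i.d.\ rate upper bound, reduce the Chernoff divergence of pure states to the negative log-overlap, compute $|\braket{\psi(t)}{\psi(t')}|^2 = 1 - 2|\psi_0|^2|\psi_1|^2(1-\cos(t-t'))$, and optimize the probe at the balanced superposition to obtain $\cos^2\delta$. Your added care about the sup--inf ordering and the monotonicity argument pinning the worst case at separation exactly $2\delta$ is a welcome tightening of steps the paper leaves implicit, but it does not change the argument.
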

\begin{proof}
    We know from Theorem~\ref{sthm:rate_upper_bound} that we need to bound the Chernoff coefficient, which for pure quantum states relates to the fidelity according to
    \begin{align}
        C(\psi(t), \psi(t+2\delta)) &= -\log \min_{0\leq s \leq 1} \Tr [\psi(t)^s \psi(t+2\delta)^{1-s}] \\
        &= - \log \Tr [\psi(t) \psi(t+2\delta)]\nonumber \\
        &= - \log |\langle \psi(t) | \psi(t+2\delta)\rangle|^2.\nonumber
    \end{align}
    Note that due to the covariance property it is sufficient to study the fidelity for $t=0$. As for our setting, we have used a single qubit as our building block we can expand
    \begin{align}
        |\langle \psi(0) | \psi(2\delta)\rangle|^2
        &= | |\psi_0|^2 + e^{-i 2\delta} |\psi_1|^2 |^2 \\
        &= |\psi_0|^4 + |\psi_1|^4 + 2  |\psi_0|^2|\psi_1|^2 \cos\delta\nonumber\\
        &= (|\psi_0|^2 + |\psi_1|^2)^2 - 2 ( 1 - \cos 2\delta) |\psi_0|^2|\psi_1|^2\nonumber \\
        &= 1 - 2 ( 1 - \cos 2\delta) |\psi_0|^2|\psi_1|^2.\nonumber
    \end{align}
    For the best protocol, this quantity has to be as small as possible, hence the optimal probe has $|\psi_0|^2 = |\psi_1|^2 = 1/2$. Therefore, for the optimal probe state 
    \begin{align}\label{eqn:single_qubit_fidelity}
        |\langle \psi(0) | \psi(2\delta)\rangle|^2 &=
        1 - \frac{1}{2} ( 1 - \cos 2\delta) =
        \cos^2(\delta).
    \end{align}
    We conclude by the Taylor expansion of $-\log \cos^2(\delta)$.
\end{proof}

\subsection{Additional numerics}\label{ssec:additional_numerics}
In this section, we present additional numerical results for the minimax analysis of phase estimation performed in Section~\ref{sec:minimax_analysis_phase_estimation} of the main text.

\begin{figure}
    \centering    \includegraphics{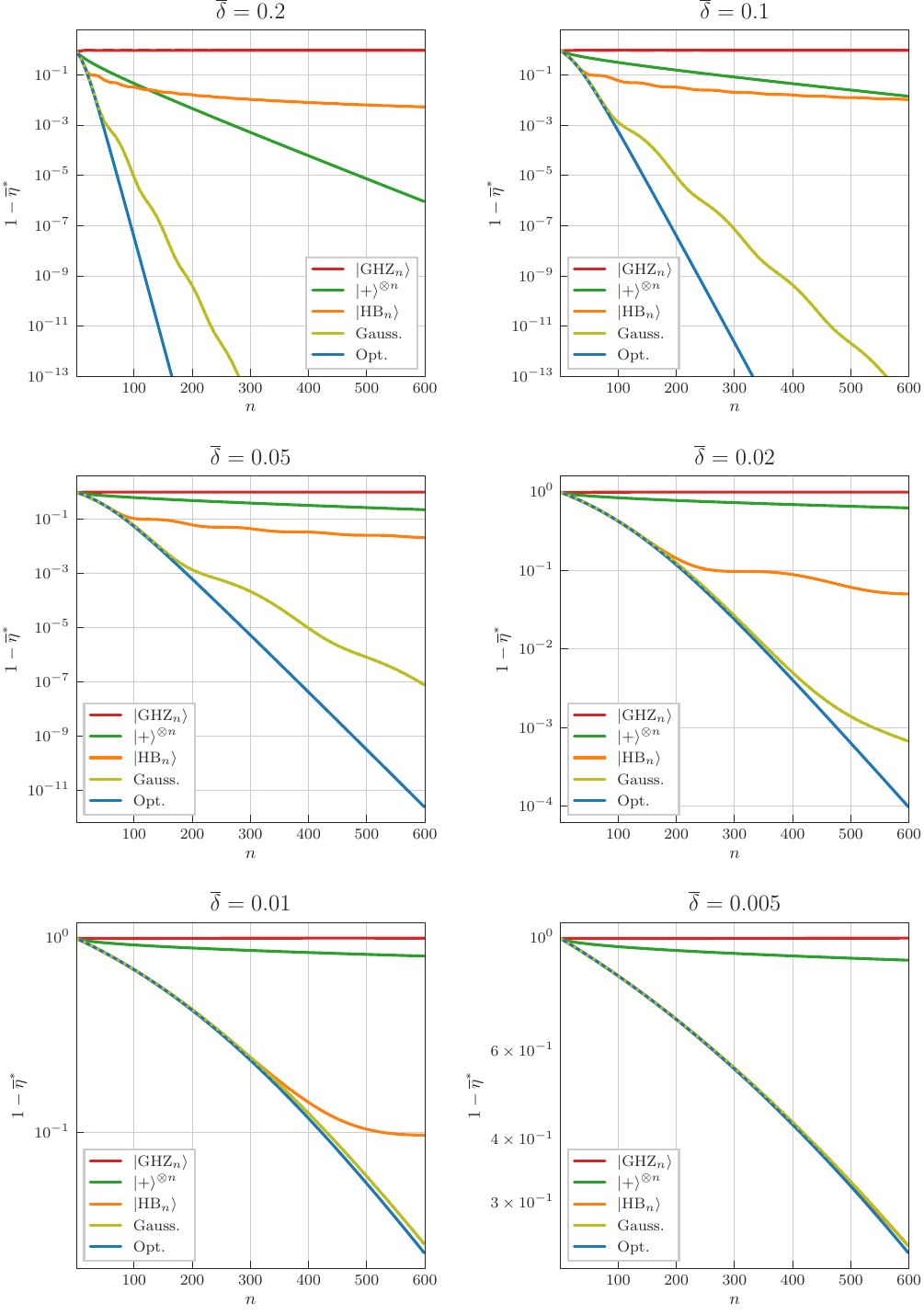}
    \caption{Optimal error probability of quantum metrology that can be guaranteed for any prior distribution for different probes in the phase estimation scenario for different values of the target tolerance $\overline{\delta}$. We compare a generalized GHZ state (Eq.~\eqref{eqn:def_ghz_n}, red), an tensor power of plus states (green), the Holland-Burnett state (Eq.~\eqref{eqn:def_HB_n}, orange), the Gaussian state (Eq.~\eqref{eqn:def_gauss_probe}, yellow) and the optimal state (Eq.~\eqref{eqn:opt_probe_state}, blue).}    \label{sfig:success_probability_phase_estimation_publication_overview.pdf}
\end{figure}

\begin{figure}
    \centering    \includegraphics{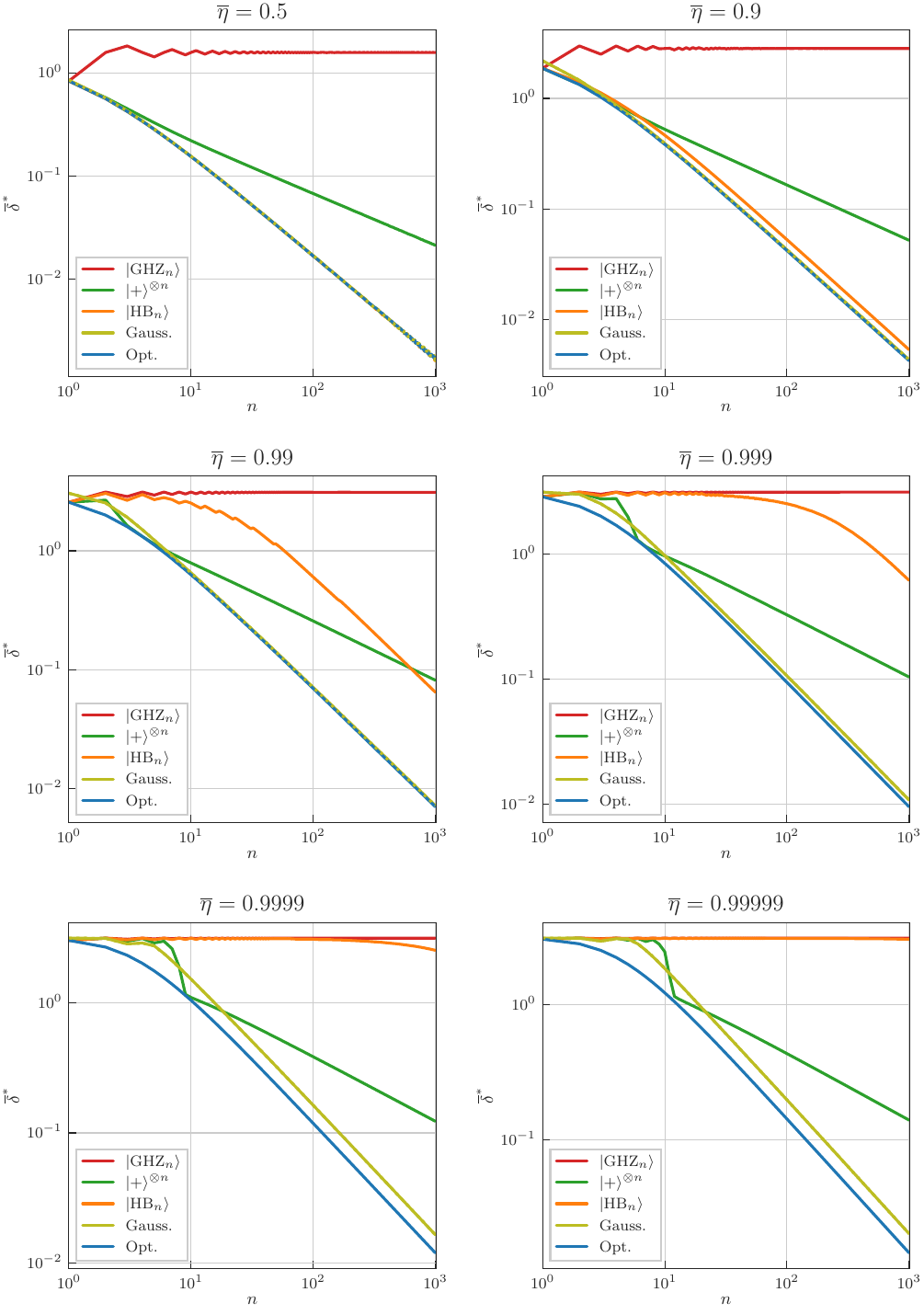}
    \caption{Optimal tolerance of quantum metrology that can be guaranteed for any prior distribution for different probes in the phase estimation scenario for different values of the target success probability $\overline\eta$. We compare a generalized GHZ state (Eq.~\eqref{eqn:def_ghz_n}, red), an tensor power of plus states (green), the Holland-Burnett state (Eq.~\eqref{eqn:def_HB_n}, orange), the Gaussian state (Eq.~\eqref{eqn:def_gauss_probe}, yellow) and the optimal state (Eq.~\eqref{eqn:opt_probe_state}, blue).}    \label{sfig:tolerance_phase_estimation_publication_loglog_overview}
\end{figure}

\end{document}